%
%
%

\documentclass[graybox,envcountchap,envcountsame,sectrefs]{svmono}


\usepackage{mathptmx}
\usepackage{helvet}
\usepackage{courier}
\usepackage{type1cm}         

\usepackage{makeidx}                
\usepackage{graphicx}                
\usepackage{multicol}                 
\usepackage[bottom]{footmisc}   



\makeindex                          
                                             
                                             


\newif\ifidxskippage 
\newcommand*\skipindex[1]{%
\index{#1@\protect\idxskippagetrue~|skippageentry}%
}
\newcommand*\skippageentry[1]{\idxskippagefalse}




\usepackage{amsfonts,amssymb,amsmath,amscd}
\usepackage{mathtools}
\usepackage{centernot}
\usepackage[TS1,T1]{fontenc}
\usepackage{chapterbib}

\usepackage[pdfpagelabels,colorlinks]{hyperref}
\hypersetup{
    linkcolor= {blue},
    citecolor= {blue}}

\usepackage{comment}

\usepackage{enumitem}

\usepackage{tikz}
\usetikzlibrary{shapes.misc,arrows,decorations.markings}


\newcounter{mnotecount}[section]
\renewcommand{\themnotecount}{\thesection.\arabic{mnotecount}}
\newcommand{\mnote}[1]
{\protect{\stepcounter{mnotecount}}$^{\mbox{\footnotesize
$
\bullet$\themnotecount}}$ \marginpar{
\raggedright\tiny\em
$\!\!\!\!\!\!\,\bullet$\themnotecount: #1} }

\usepackage{relsize}



\newtheorem{hypothesis}{Hypothesis}[chapter]


\DeclareMathOperator{\Diff}{Diff}               
\DeclareMathOperator{\Dom}{Dom}                 
\DeclareMathOperator{\End}{End}                 
\DeclareMathOperator{\Erf}{Erf}            
\DeclareMathOperator{\id}{id}                   
\DeclareMathOperator{\Ima}{Im}            
\DeclareMathOperator{\Ker}{Ker}           
\DeclareMathOperator{\Lspan}{span}                      
\DeclareMathOperator{\OP}{OP}           
\DeclareMathOperator{\op}{op}           
\DeclareMathOperator{\ord}{ord}            
\DeclareMathOperator{\Res}{Res}         
\DeclareMathOperator{\Sd}{Sd}                   
\DeclareMathOperator{\sign}{sign}         
\DeclareMathOperator{\spec}{spec}           
\DeclareMathOperator{\supp}{supp}      
\DeclareMathOperator{\tr}{tr}             
\DeclareMathOperator{\Tr}{Tr}             
\DeclareMathOperator{\Vol}{Vol}            
\DeclareMathOperator{\WRes}{WRes}         

\newcommand{\<}{\left\langle}         
\renewcommand{\>}{\right\rangle}                
\newcommand{\A}{\mathcal{A}}              
\newcommand{\abD}{\abs{\DD}}			  
\newcommand{\aD}{\abD}			  
\newcommand{\abs}[1]{\left\lvert#1\right\rvert}			
\newcommand{\awD}{\abs{D}}				
\newcommand{\awDA}{\abs{D_{\Ag}}}				
\newcommand{\Ag}{\mathbb{A}}                    
\newcommand{\ahd}{(\mathcal{A,H,D})}            
\newcommand{\ahda}{(\A,\H,\DA)}         
\newcommand{\Aop}{\A^{\text{op}}}		
\newcommand{\aphi}{\vert\phi\vert}		
\newcommand{\B}{\mathcal{B}}              
\renewcommand{\bar}[1]{\overline{#1}}		
\renewcommand{\bbbone}{{\text{\usefont{U}{dsss}{m}{n}\char49}}} 		
\newcommand{\bD}{\bar{D}}			
\newcommand{\bra}[1]{\langle #1|}                  

\newcommand{\C}{\mathcal{C}}              
\newcommand{\capB}[1]{\underset{#1}{\mathlarger{\mathlarger{\cap}}}} 	
\newcommand{\CC}{\mathbb{C}}              
\newcommand{\Cl}{{\text{\usefont{OMS}{cmsy}{m}{n}C}}}    
\newcommand{\CM}{{\text{\usefont{OMS}{cmsy}{m}{n}CM}}}              
\newcommand{\Coo}{C^\infty}             
\newcommand{\cv}{= \vcentcolon}             

\newcommand{\DA}{\DD_{\Ag}}                     
\newcommand{\Dcl}{{\text{\usefont{OMS}{cmsy}{m}{n}D}}} 
\newcommand{\Dclp}{\Dcl\,'} 				
\newcommand{\Dclpp}{\Dclp_+}			 
\newcommand{\DD}{\mathcal{D}}           
\newcommand{\del}{\delta}			
\newcommand{\Dq}{\DD_q}				
\newcommand{\DqS}{\Dq^S}				
\newcommand{\Dslash}{{\DD \mkern-11.5mu/\,}} 
\newcommand{\dt}{\partial}                    
\renewcommand{\E}{\mathcal{E}}			
\newcommand{\eps}{\varepsilon}          
\newcommand{\eq}{\, \vc} 			 
\newcommand{\F}{\mathfrak{F}}			
\newcommand{\floor}[1]{\lfloor#1\rfloor} 
\newcommand{\ga}{\gamma}                 
\renewcommand{\H}{\mathcal{H}}          
\newcommand{\half}{{\mathchoice{\thalf}{\thalf}{\shalf}{\shalf}}}               
\newcommand{\thalf}{\tfrac{1}{2}} 
\newcommand{\shalf}{{\scriptstyle\frac{1}{2}}} 
\newcommand{\hD}{\hP}               
\newcommand{\hH}{\Tr \, e^{-t \,H}}               
\newcommand{\hKH}{\Tr \, K e^{-t \,H}}               
\newcommand{\hP}{\Tr \,e^{-t\, \abD}}               
\newcommand{\hTD}{\Tr \, T e^{-t \, \awD}}               
\newcommand{\K}{{\text{\usefont{OMS}{cmsy}{m}{n}K}}}             
\newcommand{\KKK}{\mathcal{K}}             
\newcommand{\ket}[1]{| #1 \rangle}                  

\newcommand{\ketQS}{\ket{l,m}_{\pm}}
\let\LL\L
\renewcommand{\L}{\mathcal{L}}          
\newcommand{\la}{\lambda}                   
\newcommand{\Lc}{\mathfrak{L}}			
\newcommand{\Linf}{\; \underset{\Lambda \to +\infty}{\sim} \;}  
\newcommand{\lmo}{l \! - \! 1}
\newcommand{\lpo}{l \! + \! 1}
\newcommand{\M}{\Mellin}
\newcommand{\Mellin}{\mathfrak{M}}		
\newcommand{\mmo}{m \! - \! 1}
\newcommand{\mpo}{m \! + \! 1}
\newcommand{\N}{\mathbb{N}}            

\newbox\ncintdbox \newbox\ncinttbox
\setbox0=\hbox{$-$} \setbox2=\hbox{$\displaystyle\int$}
\setbox\ncintdbox=\hbox{\rlap{\hbox
    to \wd2{\kern-.1em\box2\relax\hfil}}\box0\kern.1em}
\setbox0=\hbox{$\vcenter{\hrule width 4pt}$}
\setbox2=\hbox{$\textstyle\int$}
\setbox\ncinttbox=\hbox{\rlap{\hbox
    to \wd2{\kern-.14em\box2\relax\hfil}}\box0\kern.1em}
\newcommand{\ncint}{\mathop{\mathchoice{\copy\ncintdbox}
    {\copy\ncinttbox}{\copy\ncinttbox}
    {\copy\ncinttbox}}\nolimits}

\newcommand{\ncintd}[1]{\ncint^{[#1]}}          
\newcommand{\ncintA}{\ncint_{\hspace{-0.15cm}\Ag}}

\newcommand{\norm}[1]{\left\lVert#1\right\rVert}    
\newcommand{\OA}{\Omega_{\DD}^1(\A)}			
\newcommand{\oh}{{\tfrac{1}{2}}}          
\newcommand{\Oinf}{\OO_{\infty}}             
\newcommand{\oinf}{\oo_{\infty}}             
\def\oq{\frac{1}{q}}
\newcommand{\OO}{{\text{\usefont{OMS}{cmsy}{m}{n}O}}}   
\newcommand{\oo}{{\displaystyle\mathsmaller{\mathsmaller{\text{\usefont{OMS}{cmsy}{m}{n}O}}}}} 
\newcommand{\ox}{\otimes}                  
\newcommand{\Oz}{\OO_0}             
\newcommand{\oz}{\oo_0}             
\newcommand{\Pc}{\mathcal{P}}                    
\newcommand{\PDO}{\Psi(\A)}                             
\newcommand{\PDOC}{\Psi^{\CC}(\A)}                             
\newcommand{\PDOCA}{\Psi^{\CC}_{\Ag}(\A)}                             
\newcommand{\PDOk}[1]{\Psi^{#1}(\A)}            
\newcommand{\PDOz}{\PDOk{0}}            
\newcommand{\PP}{\mathfrak{P}}
\newcommand{\QQ}{\mathcal{Q}}
\newcommand{\R}{\mathcal{R}}             
\newcommand{\Rez}[1]{\underset{#1}{\Res} \,}		
\newcommand{\Rp}{\RR^+}                                 
\newcommand{\RR}{\mathbb{R}}             
\newcommand{\Scl}{{\text{\usefont{OMS}{cmsy}{m}{n}S}}}   
\newcommand{\Sclp}{\Scl\,'}   
\newcommand{\Sclpp}{\Sclp_+}   
\newcommand{\SA}{S(\DD,f,\Lambda)}		
\newcommand{\set}[1]{\{\,#1\,\}}              
\renewcommand{\SS}{\mathcal{S}}        
\newcommand{\stirling}{\genfrac{[}{]}{0pt}{}}

\newcommand{\suq}{\mathrm{SU}_q(2)}		
\newcommand{\Th}{\Theta}
\newcommand{\Tp}{\mathcal{T}^p}                
\newcommand{\Tpp}[1]{\mathcal{T}^{#1}}                
\newcommand{\Trdix}{\Tr_{\mathrm{Dix}}}    
\newcommand{\tstirling}[2]{\mathlarger{\left[ \begin{smallmatrix} #1 \\ #2 \end{smallmatrix} \right]}} 
\newcommand{\TT}{\mathbb{T}}                
\newcommand{\tzero}{\; \underset{t \downarrow 0}{\sim} \;}      
\newcommand{\UDqS}{\ZZ_{\DqS}}			
\newcommand{\UKH}{\ZZ_{K,H}}				

\newcommand{\UqSU}{\mathcal{U}_q(\mathfrak{su}(2))}     
\newcommand{\vc}{\vcentcolon =}             
\newcommand{\wD}{D}					
\newcommand{\wh}{\widehat}                  
\newcommand{\wt}{\widetilde}                 
\newcommand{\x}{\times}                      
\newcommand{\Z}{\mathbb{Z}}                 
\newcommand{\zD}{\zeta_{\wD}}                  
\newcommand{\zDA}{\zeta_{\wD_{\Ag}}}                  
\newcommand{\zH}{\zeta_{H}}                  
\newcommand{\zKH}{\zeta_{K,H}}                  
\newcommand{\zPD}{\zeta_{T,D}}                  
\newcommand{\zTD}{\zeta_{T,D}}                  
\newcommand{\zqS}{\zeta_{\DqS}}                 
\newcommand{\ZZ}{{\text{\usefont{OMS}{cmsy}{m}{n}Z}}}                
\newcommand{\EEE}{{\text{\small\usefont{OMS}{cmsy}{m}{n}E}}}          

\makeatletter
\DeclareRobustCommand\widecheck[1]{{\mathpalette\@widecheck{#1}}}
\def\@widecheck#1#2{%
    \setbox\z@\hbox{\m@th$#1#2$}%
    \setbox\tw@\hbox{\m@th$#1%
       \widehat{%
          \vrule\@width\z@\@height\ht\z@
          \vrule\@height\z@\@width\wd\z@}$}%
    \dp\tw@-\ht\z@
    \@tempdima\ht\z@ \advance\@tempdima2\ht\tw@ \divide\@tempdima\thr@@
    \setbox\tw@\hbox{%
       \raise\@tempdima\hbox{\scalebox{1}[-1]{\lower\@tempdima\box
\tw@}}}%
    {\ooalign{\box\tw@ \cr \box\z@}}}
\makeatother

\newcommand{\ePDO}{\wt{\Psi}(\A)}              
\newcommand{\ePDOk}[1]{\wt{\Psi}^{#1}(\A)}             
\newcommand{\ePDOC}{\wt{\Psi}^{\CC}(\A)}              
\newcommand{\PDOqk}[1]{\wt{\Psi}^{#1}(\A_q)}           




\begin{document} 

\author{Micha{\l} Eckstein, Bruno Iochum}
\title{Spectral Action \\ \text{ } in \\ Noncommutative Geometry}
\pagenumbering{Alph}
\maketitle

\frontmatter

\setcounter{page}{1}
%
%

\preface
\addcontentsline{toc}{chapter}{Preface}

%
%
%

\vspace{-5cm}

\begin{quotation}
\begin{flushright}
\large\emph {This book is dedicated to Alain Connes,\\ whose work has always been a fantastic \\ source of inspiration for us.}
\end{flushright}
\end{quotation}

\vspace{1.3cm}

The Least Action Principle is among the most profound laws of physics. The action --- a functional of the fields relevant to a given physical system --- encodes the entire dynamics. Its strength stems from its universality: The principle applies equally well in every domain of modern physics including classical mechanics, general relativity and quantum field theory. Most notably, the action is a primary tool in model-building in particle physics and cosmology.

The discovery of the Least Action Principle impelled a paradigm shift in the methodology of physics. The postulates of a theory are now formulated at the level of the action, rather than the equations of motion themselves. Whereas the success of the `New Method' cannot be overestimated, it raises a big question at a higher level: ``Where does the action come from?'' A quick look at the current theoretical efforts in cosmology and particle physics reveals an overwhelming multitude of models determined by the actions, which are postulated basing on different assumptions, beliefs, intuitions and prejudices. Clearly, it is the empirical evidence that should ultimately select the correct theory, but one cannot help the impression that our current models are only effective and an overarching principle remains concealed.

A proposal for such an encompassing postulate was formulated by Ali Chamseddine and Alain Connes in 1996 \cite{ConnesSA}. It reads \cite[(1.8)]{ConnesSA}:

\begin{center}
\textit{The physical action should only depend upon the spectrum of $\DD$,}
\end{center}

\noindent where $\DD$ is a certain unbounded operator of geometrical origin. The incarnation of the \emph{Spectral Action Principle} is very simple indeed:
\begin{align*}
\SA = \Tr f( \abs{\DD}/\Lambda),
\end{align*}
with a given energy scale $\Lambda$ and a positive cut-off function $f$. Such a formulation provides a link with the current effective actions employed in field theoretic models and allows for a confrontation against the experimental data. The striking upshot of the spectral action is that, with a suitable choice of the operator $\DD$, it allows one to retrieve the full Standard Model of particle physics in curved (Euclidean) spacetime \cite{ConnesPRL1996,Almost1,WalterBook}. This result attracted considerable interest in both physical and mathematical communities and triggered a far-reaching outflow of theoretical research. The most recent applications include Grand Unified Theories \cite{ConnesGUT}, modified Einstein gravity \cite{MarcolliCosmoBook} and quantum gravity \cite{ConnesGeomQuant}, to name only a few. \pagebreak


The formulation of the Least Action Principle dates back to the 18$^{\text{th}}$ century and the seminal works of Pierre de Maupertuis, Gottfried Leibniz and Leonhard Euler. 
The quest for its rigorous verbalisation sparked the development of the calculus of variations along with the Lagrangian and Hamiltonian formalisms. The modern formulation is expressed in the language of differential geometry.

The Spectral Action Principle is embedded in an even more advanced domain of modern mathematics -- \emph{noncommutative geometry}, pioneered and strongly pushed forward by Alain Connes \cite{Connes80,ConnesNCG}. The idea that spaces may be quantised was first pondered by Werner Heisenberg in the 1930s (see \cite{QFT_Moyal_review} for a historical review) and the first concrete model of a `quantum spacetime' was constructed by Hartland Snyder in 1949, extended by Chen-Ning Yang shortly afterwards. 
However, it took almost half a century for the concepts to mature and acquire the shape of a concrete mathematical structure. By now noncommutative geometry is a well-established part of mathematics. 

Noncommutative geometry \`a la Connes sinks its roots not only in the Riemannian geometry, but also in the abstract framework of operator algebras. Its conceptual content is strongly motivated by two fundamental pillars of physics: general relativity and quantum mechanics, explaining why it has attracted both mathematicians and theoretical physicists. It offers a splendid opportunity to conceive `quantum spacetimes' turning the old Heisenberg's dream into a full-bodied concept.

In this paradigm, geometry is described by a triplet $\ahd$, where $\A$ is a not necessarily commutative algebra, $\DD$ is an operator (mimicking the Dirac operator on a spin manifold) both acting on a common scene -- a Hilbert space $\H$. Thus, by essence, this geometry is spectral. The data of a spectral triple $\ahd$ covers a huge variety of different geometries. The classical (i.e. commutative) case includes primarily the  Riemannian manifolds, possibly tainted by boundaries or singularities, but also discrete spaces, fractals and non-Hausdorff spaces and when $\A$ is noncommutative, the resulting `pointless' geometries, with the examples furnished by the duals of discrete groups, dynamical systems or quantum groups to mention but a few.

At this point one should admit that the simple form of the spectral action is deceiving --- an explicit computation would require the knowledge of the full spectrum of the operator $\DD$, which is hardly ever the case. Nevertheless, one can extract a great deal of physically relevant information by studying the asymptotics of $\SA$ when $\Lambda$ tends to infinity. The key tool to that end is the renowned heat kernel method fruitfully employed in classical and quantum field theory, adapted here to the noncommutative setting. Beyond the (almost) commutative case the latter is still a vastly uncharted water. It is our primary intent to provide a faithful map of the mathematical aspects behind the spectral action. Whereas the physical motivation will be present in the backstage, we leave the potential applications to the Reader's invention. To facilitate the latter, we recommend to have a glimpse into the textbooks \cite{MarcolliCosmoBook,WalterBook} and references therein. 
\pagebreak

The plan of our guided tour presents itself as follows:

In the first chapter the basics of noncommutative geometry \`a la Connes are laid out. Chapter \ref{chap:tools} is designed to serve as a toolkit with several indispensable notions related to spectral functions and their functional transforms. Therein, the delicate notion of an asymptotic expansion is carefully detailed, both in the context of functions and distributions. With Chapter \ref{chap:asymptotic} we enter into the hard part of this book, which unveils the subtle links between the existence of asymptotic expansions of traces of heat operators and meromorphic extensions of the associated spectral zeta functions. While trying to stay as general as possible, we illustrate the concepts with friendly examples. Therein, the large-energies asymptotic expansion of the spectral action is presented in full glory. Chapter \ref{chap:perturbations} is dedicated to the important concept of a fluctuation of the operator $\DD$ by a `gauge potential' and its impact on the action. In terms of physics, this means a passage from `pure gravity' to a full theory vested with the all admissible gauge fields. In terms of mathematics, it involves rather advanced manipulations within the setting of abstract pseudodifferential operators, which we unravel step by step. We conclude in Chapter \ref{chap:open} with a list of open problems, which --- in our personal opinion --- constitute the main stumbling blocks in the quest of understanding the mathematics and physics of the Spectral Action Principle. We hope that these would inspire the Reader to have his own take on the subject. The bulk of the book is complemented with a two-part Appendix. Section \ref{classical tools} contains further auxiliary tools from the theory of pseudodifferential operators, including a detailed derivation of the celebrated heat kernel expansion. In Section \ref{chap:appendix} we present examples of spectral geometries of increasing complexity: spheres, tori, noncommutative tori and a quantum sphere.

\smallskip 

This book is devoted to the spectral action, which is only a small offspring in the vast domain of noncommutative geometry. Therefore, when introducing the rudiments of Connes' theory, we are bound to be brief and focus on the specific aspects related to the spectral action. We refer the Reader to the textbooks for a complete introduction on noncommutative geometry \cite{ConnesNCG,Elements,BasicNCG,Landi,Varilly}.

Let us also warn the Reader that, although we have designed the book to be as self-contained as possible, some mathematical prerequisites are indispensable to grasp the presented advanced concepts. The Reader should be acquainted with the basics of functional analysis, including, in particular, the spectral theory of unbounded operators on Hilbert spaces (e.g. \cite{Spectral,SimonReed,Rudin}) and the rudiments of operator algebras (e.g. \cite{Bratteli2012,KadisonBook}). Some intuitions from global differential geometry (e.g. \cite{Nakahara}) and the theory of pseudodifferential operators (e.g. \cite{Shubin}) may also prove useful.

Our ultimate purpose is not only to provide a rigid first course in the spectral action, but also to charm  the Reader with the marvellous interaction between mathematics and physics encapsulated in this apparently simple notion of spectral action. Let \textit{res ipsa loquitur}\ldots
 
\smallskip 

During the years spent in the realm of noncommutative geometry we have collaborated with a number of our close colleagues: Driss Essouabri, Nicolas Franco, Victor Gayral, Jos\'e Gracia-Bond\'ia, Michael Heller, Cyril Levy, Thierry Masson, Tomasz Miller, Andrzej Sitarz, Jo Varilly, Dmitri Vassilevich, Raimar Wulkenhaar, Artur Zaj\k{a}c. We also took benefits from discussions with Alain Connes. Moreover, Tomasz was a scrupulous proof reader and  Thierry was a great help with the \LaTeX\, typesetting. It is our pleasure to cordially thank all of them, as without their kind support this book could not come into being.

Finally, we are greatly indebted to our families for their constant support.

We acknowledge the financial support of the Copernicus Center for Interdisciplinary Studies in Krak\'ow Poland through the research grant ``Conceptual Problems in Unification Theories'' (No. 60671) awarded by the John Templeton Foundation, and the COST Action MP1405 ``Quantum Spacetime''.

\medskip

\begin{flushright}\noindent
Krak\'ow and Marseille,\hfill {\it Micha{\l} Eckstein and Bruno Iochum}\\
May 2018\hfill \text{ }\\
\end{flushright}

%
%
 \bibliographystyle{spmpsci}
 \bibliography{SA_BIB}
\tableofcontents

\mainmatter
%
%
%
\chapter{The Dwelling of the Spectral Action}
\label{chap:ST}

\abstract{
The natural habitat of the spectral action is Connes' noncommutative geometry. Therefore, it is indispensable to lay out its rudiments encoded in the notion of a spectral triple. We will, however, exclusively focus on the aspects of the structure, which are relevant for the spectral action computations. These include i.a. the abstract pseudodifferential calculus, the dimension spectrum and noncommutative integrals, based on both the Wodzicki residue and the Dixmier trace.
}

\section{Spectral Triples}
\label{sec:axioms}

The basic objects of noncommutative geometry \`a la Connes are spectral triples. As the name itself suggests, they consist of three elements: an algebra $\A$, a Hilbert space $\H$ and an operator $\DD$ acting on $\H$. These three constituents are tied together with a set of conditions, which could be promoted to the axioms of a new --- not necessarily commutative --- geometry. 

In the following, we shall denote successively by $\L(\H)$\index{_zL0h@$\L(\H)$},  $\B(\H)$\index{_zB@$\B(\H)$}, $\KKK(\H)$\index{_zKh@$\KKK(\H)$}, $\L^1(\H)$\index{_zL1h@$\L^1(\H)$} the sets of linear, bounded, compact and trace-class operators on $\H$. As for the latter, $\Tr$ will always stand for 
$\Tr_\H$, unless stated explicitly.
\begin{definition}
\label{def:triple}
A  \emph{spectral triple} \index{spectral triple} $\ahd$ \index{_zAz@$\ahd$} consists of a unital involutive algebra $\A$, \index{_z1@$\A$}  with a faithful representation $\pi: \A \to \B(\H)$ \index{_a8p@$\pi(a)$} on a separable Hilbert space $\H$,  \index{_zH@$\H$} and $\DD\in \L(\H)$ \index{_zD0@$\DD$} such that:
\begin{itemize}
\item $\DD$ is a (possibly unbounded) selfadjoint operator on $\H$,
\item $[\DD,\pi(a)]$ extends to a bounded operator on $\H$ for all $a \in \A$,
\item $\DD$ has a compact resolvent -- i.e. $(\DD-\la)^{-1} \in \KKK(\H)$ for $\la \notin \spec(\DD)$.
\end{itemize}
\end{definition}
Remark that the second assumption requires that $\pi(a) \Dom \DD \subset \Dom \DD$. \\
It is standard to omit the symbol $\pi$ of the representation when it is given once and for all.

This flexible definition is tailored to encompass the largest possible spectrum of different geometries. However, in order to have workable examples one often has to take into account the topology of $\A$. To this end, one can, for instance, demand that $\A$ is a dense $^*$-subalgebra of a $C^*$-algebra or, more restrictively, a pre-$C^*$-algebra --- if one desires to have a holomorphic functional calculus.

The set of axioms adopted in Definition \ref{def:triple} is at the core of Connes' noncommutative geometry. Shortly, we will discuss two additional properties of a spectral triple: $p$-summability and regularity, which are crucial for the sake of explicit spectral action computations. Before we do so, let us illustrate Definition \ref{def:triple} with the canonical example.
\begin{example}
\label{ex:commutative}
Let $M$ be a compact Riemannian manifold without boundary and let $P$ be any elliptic selfadjoint pseudodifferential operator (pdo) of order one on a vector bundle $E$ over $M$ endowed with a hermitian structure. \\Then, $(C^\infty(M),L^2(M,E),P)$ is a spectral triple. In fact, $P$ has a purely discrete spectrum and its singular values grow to infinity \cite[Lemma 1.6.3]{Gilkey1}, so its resolvent is compact. Moreover, since $a\in C^\infty(M)$ can be seen as a pdo of order zero, $[P,a]$ is bounded as a pdo of order 0.
\\
The archetype of such situation is when $M$ is spin: Let $\SS$\index{_zS@$\SS$}  be a spinor bundle over $M$. Let moreover $\A = \Coo(M)$, $\H = L^2(M,\SS)$ and let $\DD = \Dslash$ with $$\Dslash \vc -i \gamma^{\mu} \nabla_{\mu}^{\SS}$$ \index{_zD1@$\Dslash$}-- the standard Dirac operator on $(M,\SS)$ (cf. \cite{Friedrich}). Then, $(\A,\H,\DD)$ is a spectral triple.
\hfill$\blacksquare$
\end{example}
As the algebra $\A$ in the above example is commutative, the associated spectral triple is also called \emph{commutative}\index{spectral triple!commutative}. A natural question arises: Given a spectral triple with a commutative algebra $\A$, can one recover the underlying manifold? The positive answer is the content of the famous Connes' Reconstruction Theorem \cite{ConnesRecon}. It requires several additional assumptions on the spectral triple (see \cite[Chapter 3]{Varilly} for a pedagogical explanation of these). However, in the noncommutative realm, there are known examples of perfectly workable noncommutative geometries for which some of these additional assumptions are not satisfied \cite{AllPodles,EquatorialPodles,DiracSUq2,PodlesSA}.

If $M$ is a locally compact Riemannian spin manifold, then the natural associated $C^*$-algebra $\A=C_0(M)$ of functions on $M$ vanishing at infinity does not have a unit. Moreover, the Dirac operator on $M$ does not have a compact resolvent. \\
Thus, when $\A$ is not unital \index{spectral triple!nonunital}, the last item of Definition \ref{def:triple} needs to be replaced by 
\begin{itemize}
\item $a(\DD-\lambda)^{-1}$ is compact  for all $a\in\A$ and $\lambda \notin \RR$. 
\end{itemize}
Equivalently, one can require $a(\DD^2+\epsilon^2)^{-1/2}$ to be compact for any $\epsilon>0,\,a\in\A$.

Again, in practice one needs to consider the topological issues. This can be done (see, for instance, \cite{IochumMoyal}) by demanding that $\A$ and a preferred unitisation  $\wt{\A}$  \index{_zAA@$\protect\widetilde{\A}$} of $\A$ be pre-$C^*$-algebras, which are faithfully represented on $\H$ in terms of bounded operators and $[\DD,a]$ extends to a bounded operator for every $a \in \wt{\A}$. See also \cite{RennieSmooth}.

To simplify, we stick to the original Definition \ref{def:triple} and hence assume from now on that the algebra $\A$ is unital, unless explicitly written.\pagebreak

\begin{example}
\label{ex:noncommutative}
A basic noncommutative spectral triple is defined by
\begin{gather*}
\A_F={\cal M}_n(\CC) \,(\text{complex } n\times n\text{-matrices})\index{_zMn1@${\cal M}_{n}(\CC)$}, \quad \H_F=\CC^n, \quad \DD_F = \DD_F^{*} \in \A_F.
\tag*{$\blacksquare$}
\end{gather*}
\end{example}
Since the Hilbert space $\H_F$ is finite dimensional, finite direct sum of $(\A_F,\H_F,\DD_F)$'s as in Example \ref{ex:noncommutative} are called \emph{finite} spectral triples\index{spectral triple!finite} (see \cite{Krajewski} for a classification).

Taking a tensor product of a commutative spectral triple with a finite one results in an \emph{almost commutative} geometry \index{almost commutative geometry}\cite{AlmostName}. There exists an analogue of the Reconstruction Theorem for almost commutative spectral triples \cite{CacicReconstruction} allowing one to retrieve a smooth manifold $M$ together with a vector bundle and a connection. \\
Almost commutative geometries are extensively employed in building physical models of fundamental interactions \cite{SUSYNCG,
Almost1,MarcolliCosmoBook,MairiReview,WalterBook}.

The purpose of this book is, however, to study the spectral action in full generality of noncommutative geometry, beyond the almost commutative realm. We refer the Reader to the textbook \cite{WalterBook} for a friendly introduction to almost commutative geometries and their physical applications.
\begin{example}
Other illustrative examples are given by the noncommutative tori and the Podle\'s spheres -- see Sections \ref{sec:torus} and \ref{sec:Podles} of Appendix \ref{chap:appendix}.
\hfill$\blacksquare$
\end{example}
\begin{remark}\label{rem:pert}
Given a spectral triple $\ahd$ one can always obtain a new one $(\A,\H,\DD_V)$, with $\DD_V = \DD + V$, $V = V^* \in \B(\H)$ and $\Dom \DD_V \vc \Dom \DD$. Indeed, $(\A,\H,\DD_V)$ is a spectral triple as $[\DD + V,a]$ extends to a bounded operator for any $a \in \A$ whenever $[\DD,a]$ does so. Furthermore, if $z$ is in the resolvent of $\DD_V$ and $z'$ is in the resolvent of $\DD$ then, 
$$(\DD + V - z)^{-1} = (\DD - z')^{-1}\left[ \bbbone -(V+z'-z)(\DD + V - z)^{-1}\right]$$
is compact since the first term is compact and the second one is bounded.

However, the geometry of $(\A,\H,\DD_V)$ need not a priori be related to the one of $(\A,\H,\DD)$. If we want to obtain a geometry which is in a suitable sense equivalent, the perturbation $V$ has to acquire a precise form (see Section \ref{sec:fluc}).
\hfill$\blacksquare$
\end{remark}

On the technical side, we need to take into account the fact that $\DD$ may be non-invertible. We adopt the following convention 
\begin{align}
\label{absD}
D \vc \DD + P_0,  \index{_zD2@$D$}
\end{align}
where $P_0$\index{_zP0@$P_0$} is the projection on $\Ker \DD \subset \H$. The operator $P_0$ is a finite-rank (i.e. $\dim \Ima P_0 < \infty$) selfadjoint operator on $\H$ and $D$ is an invertible operator with a compact resolvent. Thus, by the previous remark, $(\A,\H,D)$ is also a spectral triple. Moreover, notice that $\abs{D} = \abs{\DD} + P_0$ and $\abs{D}^{-1}$ is compact.

Another possibility is to define (see for e.g. \cite{ConnesModular}) an invertible selfadjoint operator $\bD$\index{_zD2b@$\bD$} as the restriction of $\DD$ to the Hilbert subspace $(\bbbone-P_0) \H$ . Yet another option, chosen in \cite{CPRS}, is to work with the invertible operator $(\bbbone+\DD^2)^{1/2}$. \\
The selection of a prescription to cook up an invertible operator for $\DD$ is only a matter of convention (cf. \cite[Section 6]{CPRS}, \cite[Remark 3.2]{TorusSA}). However, one has to stay vigilant, as different choices may affect the associated spectral functions (see, for instance, Formula \eqref{heat_ker}).

\smallskip

The first vital property of a spectral triple we shall need is a `finite dimensionality' condition:
\begin{definition}
\label{def:p_summability}
A spectral triple $\ahd$ is {\it finitely summable} \index{spectral triple!finitely summable} or, more precisely, $p$-\emph{summable} \index{spectral triple!p-summable@$p$-summable} for some $p\geq 0$ if $\Tr \vert D\vert^{-p}<\infty$.  \\
Note that $p$-summability yields $(p+\epsilon)$-summability but not $(p-\epsilon)$-summability for $\epsilon>0$, so we define: 

The triple is said of \emph{dimension} $p$  (or $p$-\emph{dimensional}) \index{spectral triple!p-dimensional@$p$-dimensional} when 
\begin{align*}
p~\vc~\inf \, \{q \geq 0 \; \vert \; \Tr \vert D\vert^{-q} < \infty\} < \infty.
\end{align*}
\end{definition}
This notion of dimension will be refined in Definition \ref{def:dimspec}.

Remark that finite spectral triples (cf. Example \ref{ex:noncommutative}) are always 0-dimensional. But spectral triples with $\dim \H = \infty$ can also have dimension 0 --- like, for instance, the standard Podle\'s sphere (see Appendix \ref{sec:Podles}). 
\begin{example}
\label{ex:dim_manifold}
Let $\ahd$ be a commutative spectral triple based on a Riemannian manifold of dimension $d$, then $\ahd$ is $d$-dimensional \cite[p. 489]{Elements}.
\hfill{$\blacksquare$}
\end{example}
A slightly weaker notion is the one of $\theta$-summability:\index{spectral triple!T-summable@$\theta$-summable}\label{theta_summ} $\Tr e^{-t\DD^2}<\infty$ for $t>t_0 \geq 0$ \cite[Chapter 4, Section 8]{ConnesNCG}. Every finitely summable spectral triple is $\theta$-summable with $t_0 = 0$, but the converse is not true (cf. \cite{connes1989} and \cite[Chapter 4]{ConnesNCG}). \\
The geometry of spectral triples which are not finitely summable is, however, too poor to accommodate various analytical notions (cf. Section \ref{sec:dimsp}), which are indispensable for the spectral action computations. On the other hand, note that we do not require $p$ to be an integer. For instance, fractal spaces are fruitfully described via $p$-summable spectral triples with $p$ being the (irrational) Hausdorff dimension of the fractal \cite{MarcolliBall,Skalski,Christensen3,Christensen2,Christensen1,Cipriani,ConnesGarden,Guido1,Guido2,Guido3,Kellendonk,Lapidus,LapSar}.

\smallskip

The next key property encodes the notion of smoothness:
\begin{definition}
\label{def:regularity}
A spectral triple $\ahd$ is \emph{regular} \index{spectral triple!regular} if
\begin{align}\label{regularity}
\forall a \in \A \quad a, [\DD,a] \in \capB{n \in \N} \,\,\Dom \delta'^{n}, \,\, \text{ where } \delta' \vc [\abD,\,\cdot]. \index{_a2d0@$\delta'(T)$}
\end{align}
\end{definition}
The map $\delta'$ is an unbounded derivation of the algebra $\B(\H)$. Recall that the notation $T \in \Dom \delta'$ means that $T$ preserves $\Dom\,\abD$ and $\delta'(t) = [\abD,T]$ extends (uniquely) to a bounded operator on $\H$. We note that some authors refer to assumption \eqref{regularity} as smoothness \cite[Definition 11]{RennieSmooth} or $QC^{\infty}$ ($Q$ for ``quantum'') \cite[Definition 2.2]{CPRS}.

\label{topology_reg}
The regularity assumption allows one to equip $\A$ with a topology generated by the seminorms $a \mapsto \norm{\del'^k(a)}$ and $a \mapsto \norm{\del'^k([\DD,a])}$. The completion of $\A$ in this topology yields a Fr\'echet pre-$C^*$-algebra $\A_{\delta'}$ \index{_zAAA@$\A_{\delta'}$}and $(\A_{\delta'},\H,\DD)$ is again a regular spectral triple \cite{RennieSmooth} (see also \cite[p. 469]{Elements}, \cite[Section 3.4]{Varilly} and \cite{CPRS}).

A commutative spectral triple $\ahd$ based on a Riemannian manifold is regular. Condition \eqref{regularity} assures that the functions constituting $\A$ are indeed smooth. In this case, $ \A_{\delta'} \cong C^{\infty}(M)$ also as topological spaces \cite[Proposition 20]{RennieSmooth}.

\smallskip

In some of the approaches (in particular, in almost commutative geometries) it is desirable to encode in the axioms of a spectral triple the fact that the classical Dirac operator is a first order differential operator \cite[Section 3.3]{Varilly}. Such a demand was originally used to restrict admissible Dirac operators for almost commutative geometries \cite{Almost1,Almost2} and restrain the free parameters of the underlying physical models. On the other hand, in recent studies, it is argued that one does not actually need the first-order condition defined below and there exist examples of spectral triples for which it is not satisfied (see \cite{ConnesFirst} and references therein). Consequently, we will not assume the first-order condition to hold throughout this book. Nevertheless, we shall state it and explain its origin. To this end we need the following notions:

\begin{definition}
\label{def:even}
A spectral triple $\ahd$ is \emph{even} \index{spectral triple!even/odd} if there exists a selfadjoint unitary operator $\gamma$ \index{_a1gamma@$\gamma$} on $\H$ such that $\gamma^2 = 1$, $\gamma \DD = - \DD \gamma$ and $\gamma a = a \gamma$ for all $a \in \A$. Otherwise, the spectral triple is \emph{odd}.
\end{definition}

\begin{definition}
\label{def:real}
A spectral triple $\ahd$ is \emph{real} \index{spectral triple!real} of \emph{KO-dimension} \index{ko-dimension@$KO$-dimension} $d \in \Z/8$ if there is an antilinear isometry $J:\H \to \H$ called the \emph{reality operator} such that
$$
J\DD=\epsilon \,\DD J,\quad J^2=\epsilon',\quad \text{and }\, J \gamma=\epsilon'' \,\gamma J \,\,\text{ when the triple is even},
$$
with the following table for the signs $\epsilon, \epsilon',\epsilon''$
\begin{align}
\label{commu}
\begin{tabular}{|c| cccccccc|}
\hline
d & 0 & 1 & 2 & 3 & 4 & 5 & 6 & 7 \\
\hline
$\epsilon$ & 1&-1&1&1&1&-1&1&1\\
$\epsilon'$ &1&1&-1&-1&-1&-1&1&1\\
$\epsilon''$ &1& &-1 & &1 & &-1&\\
\hline
\end{tabular}
\end{align}
and the following commutation rule (see \cite{ConnesMarcolli} or \cite[Section 9.5]{Elements} for the details)
\begin{align}
\label{oppcommut}
[a,\ J b^*J^{-1}]=0,\quad\forall a,b\in \A.
\end{align}
\end{definition}

For a spectral triple based on a Riemannian manifold of dimension $d$ the $KO$-dimension is just $d$ mod 8. It encodes the fact that the Dirac operator is a square root of the Laplacian, what generates a sign problem corresponding to the choice of a spin structure (and orientation). In this context the operator $J$ plays the role of a charge conjugation for spinors (see \cite[Section 5.3]{Elements} or \cite{VarillyDirac}) and it encodes the nuance between spin and spin$^\CC$ structures. On the other hand, given a spectral triple $\ahd$ with a noncommutative algebra one can usually find different reality operators leading to real spectral triples with different $KO$-dimensions (see \cite{Krajewski}).

The operator $J$ takes its origin in the modular theory of von Neumann algebras (see \cite[Chapter 1, Section 3 \& Chapter 5]{ConnesNCG}) and has interesting applications in the algebraic quantum field theory \cite[Chapter V]{Haag}. \\
In the context of spectral triples, with the help of $J$ one can define a representation on $\H$ of the opposite algebra $\Aop$ \index{_zAAAA@$\Aop$}, which is isomorphic to $\A$ as a vector space, but the multiplication in $\Aop$ is inverted, i.e. $a \bullet_{\Aop} b \vc b \bullet_{\A} a$. Given a representation $\pi: \A \to \B(\H)$, define the representation $\pi^o : \Aop \to \B(\H)$ by $$\pi^o(a) \vc J a^* J^{-1}.$$ The condition $[\pi(a),J \pi(b^*) J^{-1}] = 0$ for $a, b \in \A$ means that the two representations commute $[\pi(\A), \pi^o(\Aop)] = 0$, hence $\pi^o(\Aop)$ is in the commutant of $\pi(\A)$ in $\B(\H)$. If $\A$ is commutative, $\A = \Aop$ and this requirement becomes trivial. See, for instance, \cite[Chapter 3]{Varilly} for more details.

We are now ready to formulate the announced first-order condition.
\begin{definition}
\label{def:first_order}
A real spectral triple $\ahd$ meets the \emph{first-order condition}\index{first-order condition} if
\begin{align}\label{first_order}
[[\DD,a],J b^* J^{-1}] = 0, \quad \forall \; a, b \in \A.
\end{align}
\end{definition}
As mentioned above, it encodes the fact that for a commutative spectral triple $\DD = \Dslash$ is a first order differential operator. The reason for the appearance of the representation of $\Aop$ in \eqref{first_order} is that $[\DD,\cdot]$ is not a derivation from $\A$ to itself, but rather to the commutant of $\Aop$ \cite{ConnesFirst}. The first-order condition plays an important role in the study of the spectral action fluctuations --- see Section \ref{sec:fluc}.

Finally, we remark that each of the properties of being even, real, regular or $p$-summable can be extended to the non-unital framework  \cite{IochumMoyal,RennieSmooth,RennieSummable}.

\section{\texorpdfstring{Some Spaces Associated with $\DD$}{Some spaces associated with D}}

In this section we discuss the pseudodifferential calculus suitable for noncommutative geometries. We essentially base on the classical papers \cite{ConnesSPV,ConnesMoscovici}, however some of the definitions are taken from more recent works \cite{CPRS,PodlesSA,IochumNotes,Tadpole}.

Let us first define the following scale of spaces for a parameter $s \in \RR$ 
\begin{align}
\label{Hs}
&\H^s\vc \Dom\,\vert D\vert^s.\index{_zH1@$\H^s$}
\end{align}
If $s\leq 0$, $\H^s=\H^0=\H$ and for $s\geq 0$, $\H^s$ are Hilbert spaces for the Sobolev norm 
\begin{align*}
\norm{\xi}_s^2 \vc \norm{\xi}^2 + \Vert \vert D\vert^{s} \xi \Vert^2\index{_An0@$\norm{\cdot}_{s}$}
\end{align*}
and $$\H^{s+\epsilon}\subset \H^s \subset \H^0 \text{ for }s,\epsilon \geq 0$$ since the injection $\H^{s+\epsilon} \hookrightarrow\H^s$ is continuous. 

Let us note, that we do not need the invertibility of $\DD$ since $\H^s = \Dom \,\abs{\DD}^s$ for $s \geq 0$ as $\DD P_0 = P_0 \DD = 0$ and $P_0 \in \B(\H)$. Actually, for $s\geq0$ we also have $ \H^s = \Dom \,(\bbbone+\DD^2)^{s/2}$ as shown in \cite[Section 6]{CPRS}. On the other hand, the Sobolev norms will \textit{not} be the same if we swap $\vert D\vert$ for $\abs{\DD}$ or $(\bbbone+\DD^2)^{1/2}$. In the second case the norms will not coincide even if $\Ker \DD = \{0\}$. However, the precise form of the Sobolev norms for $\H^s$ is not relevant and, as already stressed, the choice of $\vert D\vert$ instead of $(\bbbone+\DD^2)^{1/2}$ is only a matter of convention.

We also define the \emph{domain of smoothness of} \index{domain of smoothness} $\DD$ as 
\begin{align}
\label{Hinfty}
\H^{\infty} \vc \underset{s\geq 0}{\mathlarger{\mathlarger{\cap}}}\, \, \H^s = \underset{k\in\N}{\mathlarger{\mathlarger{\cap}}}\,\,\H^k.
\end{align}
$\H^{\infty}$ \index{_zH11@$\H^{\infty}$} is dense in $\H$ and is in fact a core (see \cite[p. 256]{SimonReed} for a precise definition) for $\DD$ \cite[Theorem 18]{RennieSmooth}. Actually, it is sufficient to consider $\H^k$ with $k \in \N$ \cite[p. 467]{Elements} (see also \cite[Definition 6.11]{VarillyDirac}). $\H^{\infty}$ can be equipped with a topology induced by the seminorms $\norm{\cdot}_k$, which makes it a Fr\'echet space.

With the spaces $\H^s$ at hand, we define the following classes of unbounded operators on $\H$ for any $r \in \RR$ (compare \cite[Section 1]{ConnesSPV}, \cite[Definition 6.1]{CPRS}): 
\begin{multline}
\label{op}
\op^r \vc \big\{ T \in \L(\H) \; \big\vert \; \Dom \,T \supset \H^\infty \text{ and for any }  s \geq r \\
T \text{ maps }(\H^\infty, \norm{\cdot}_s) \text{ continuously into } (\H^\infty, \norm{\cdot}_{s-r}) \big\}. \index{_zOp@$\op^r$}
\end{multline}

For instance $\forall r \in \RR,\,\abs{D}^r \in \op^r$. Any operator in $\op^r$ extends to a bounded map from $\H^s$ to $\H^{s-r} \,\forall s\geq r$ and since $\H^0=\H$, we have $$\op^0 \subset \B(\H).$$ Moreover, 
\begin{align}
\label{opProp}
 \op^r \subset \op^s \text{ if } r \leq s ,  \quad \op^r\cdot \op^s \subset \op^{r+s} \text{ for all } r,\,s \in \RR.
\end{align}

In the previous section we met a map $\delta'$ on $\B(\H)$, the domain of which played a pivotal role in the axiom of regularity. For further convenience we also define:
\begin{align*} 
& \del (\cdot)\vc [\vert D \vert,\, \cdot], && \nabla(\cdot) \vc [D^2,\,\cdot], &&
\sigma(\cdot)\vc\vert D\vert \cdot \vert D\vert^{-1}\, \\
& \EEE(\cdot)\vc\del(\cdot)\vert D\vert^{-1} && \E (\cdot)\vc \nabla(\cdot)\,\vert D\vert^{-2}.\index{_a2d1@$\delta(T)$}\index{_zE@$\protect\EEE$}\index{_AAnabla@$\nabla$}\index{_zEE@$\E$} &&
\end{align*}
Let $\Xi \in \{\delta,\nabla,\sigma,\EEE, \E\}$. Recall that if $T \in \Dom \Xi \subset \B(\H)$, then $\Xi(T) \in \B(\H)$.  Actually, one can extend any $\Xi$ to a map on $\{T \in \L(\H) \, \vert \, \Dom T \supset \H^{\infty}\}$. By the habitual abuse of notation we denote the extensions of $\Xi$'s with the same symbols. However, one has to stay vigilant as in the following $\Dom \Xi$ will always mean a subset of $\B(\H)$.


Furthermore, for any $T \in \L(\H)$ with $\Dom T \supset \H^{\infty}$ we define the following one-parameter group of (unbounded) operators:
\begin{align}
\label{OP_z}
\sigma_{z}(T) \vc \vert D\vert^{z}\, T\, \vert D\vert^{-z},\,\text{ for } z \in \CC,
\end{align}
\index{_a9sigmaz@$\sigma_z$}using the Cauchy integral along a curve $\C \subset \CC$ (see \eqref{heat as integral}) to define
\begin{align}
\label{def: D^z as integral}
\vert D\vert^{-z}=\tfrac{1}{i2\pi} \int_{\lambda \in \C} \lambda^{-z/2}\,(\lambda-D^2)^{-1}\, d\lambda.
\end{align}
Note that for $T \in \Dom \sigma$ we have $\sigma^n(T) = \sigma_n(T)$ for all $n \in \N$.

\begin{lemma}
[\cite{ConnesMoscovici,CPRS}]
\label{intersection of domains}
We have 
$\capB{n \in \N}\, \Dom\,\del'^n =\capB{n \in \N}\, \Dom\,\del^n\subset \op^0$.
\end{lemma}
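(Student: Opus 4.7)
My plan proceeds in two stages: first prove the equality of the two intersections, then prove the inclusion into $\op^0$.

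For the equality, the key remark is that $\abs{D} = \abD + P_0$ with $P_0$ bounded, finite-rank, and commuting with both $\abD$ and $\abs{D}$; moreover $\Dom \abs{D} = \Dom \abD$. Hence on their common natural domain I have
\begin{align*}
\delta(T) = \delta'(T) + \mu(T), \qquad \mu(T) \vc [P_0, T],
\end{align*}
where $\mu$ is a bounded derivation of $\B(\H)$ taking values in the finite-rank operators. The Jacobi identity together with $[\abD, P_0] = 0$ shows that $\delta'$ and $\mu$ commute, so an induction yields the binomial expansion
\begin{align*}
\delta^n(T) = \sum_{k=0}^n \binom{n}{k}\, \delta'^{n-k}\, \mu^k(T),
\end{align*}
together with an analogous formula obtained by interchanging $\delta$ and $\delta'$. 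Since $P_0 \in \capB{n \in \N} \Dom\, \delta'^n$ trivially (as $\delta'(P_0) = 0$) and each $\Dom\, \delta'^m$ is an algebra (the derivation property of $\delta'$), the operator $\mu^k(T) = \sum_{j} \binom{k}{j} (-1)^j P_0^{k-j}\, T\, P_0^j$ stays in $\Dom\, \delta'^{n-k}$ whenever $T \in \Dom\, \delta'^n$. A simultaneous induction on $n$ then transfers membership between the two chains of domains in both directions, yielding the first equality of the lemma.

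For the inclusion $\capB{n \in \N} \Dom\, \delta^n \subset \op^0$, I rely on the Leibniz-type identity
\begin{align*}
\abs{D}^k\, T = \sum_{j=0}^k \binom{k}{j}\, \delta^j(T)\, \abs{D}^{k-j}, \qquad k \in \N,
\end{align*}
proved by induction, with the inductive step consisting of a single commutator swap $\abs{D}\, \delta^j(T) = \delta^j(T)\, \abs{D} + \delta^{j+1}(T)$. For $T \in \capB{n \in \N} \Dom\, \delta^n$ every $\delta^j(T)$ is bounded, so the identity gives $T(\H^k) \subset \H^k$ together with the continuity estimate $\norm{T\xi}_k \leq C_k\, \norm{\xi}_k$. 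Continuity on $\H^s$ for non-integer $s \geq 0$ then follows by complex interpolation between the bounds for $\lfloor s \rfloor$ and $\lceil s \rceil$, placing $T$ in $\op^0$.

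The step I expect to be most delicate is not the combinatorics but the verification that the identities above are genuinely legitimate manipulations of unbounded operators. Specifically, one must confirm by a parallel induction that $T \in \Dom\, \delta^n$ forces $T$ to preserve $\Dom \abs{D}^m$ for every $m \leq n$, so that both sides of the Leibniz expansion are defined on the common dense core $\H^\infty$. Once this domain bookkeeping is set up carefully, as in the smoothness machinery of \cite{CPRS}, the algebraic identities above do all the remaining work.
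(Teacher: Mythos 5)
Your proof is correct and follows essentially the same route as the paper's: the equality comes from the bounded perturbation $P_0$ (the paper disposes of it in one line, you spell out the commuting derivation $\mu=[P_0,\cdot]$ and the binomial expansion, which is just a more explicit version of the same idea), and the inclusion into $\op^0$ uses exactly the paper's identity $\abs{D}^n T=\sum_{k}\binom{n}{k}\delta^k(T)\abs{D}^{n-k}$ (written there as $\sigma^n(T)=\sum_k\binom{n}{k}\delta^k(T)\awD^{-k}$) followed by interpolation for non-integer Sobolev exponents. Your closing remark about the domain bookkeeping is the right delicate point and is handled implicitly in the paper in the same way, via $\delta^k(T)$ preserving $\Dom\abs{D}$.
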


\begin{proof}
Since $\awD=\aD+P_0$, $\Dom \del'=\Dom \del$ because $P_0$ is bounded and we get the equality of the lemma.

Let now $T\in \capB{n \in \N}\, \Dom\,\del^n$. One checks that $\sigma=\id+\EEE$ and, for every $n\in \N$,
\vspace*{-0.2cm}
\begin{align}
\label{eq:epsilon et sigma}
\EEE\,^n(T)=\del^n(T)\,\awD^{-n}\, \text{ and }\,\sigma^n(T)=(\id+\EEE)^n(T)=\sum_{k=0}^n \left ( \begin{smallmatrix} n \\ k \end{smallmatrix} \right ) \,\del^k(T)\,\awD^{-k}.
\end{align}
Hence, $\EEE\,^n(T)$ and $\sigma^n(T)$ are bounded. Similarly, 
\begin{align}
\label{eq:sigma moins}
\sigma^{-n}(T)=\sum_{k=0}^n (-1)^k\tbinom{n}{k} \, \awD^{-k}\,\del^k(T)
\end{align}
is bounded too. Thus, for $n\in \Z$ and $\xi \in \H^{\infty}$,
\begin{align*}
\norm{T\xi}_n^2 &= \norm{T\xi}^2+\norm{\vert D\vert^n T\xi}^2 = \norm{T\xi}^2+\norm{\sigma^n(T) \awD^n \xi}^2  \\
& \leq c(\norm{\xi}^2+\norm{\awD^n\xi}^2) =c \norm{\xi}^2_{n}.
\end{align*}
The case of an arbitrary $n\in \RR$ follows by the interpolation theory of Banach spaces (cf. \cite[Formula (10.65)]{Elements} and \cite[Chapter 4, Appendix B]{ConnesNCG}).
\hfill $\Box$
\end{proof}

We now introduce yet another class of operators on $\H$ (cf. \cite{ConnesMoscovici}, \cite[Def. 6.6]{CPRS}):
\begin{align}
\label{OP}
\OP^0 \vc \capB{n\in\N}\, \Dom\,\del^n, \quad \OP^r  \vc \{T\in \L(\H) \; \vert \;\, \vert D\vert^{-r} T \in \OP^0\}\, \,\text{ for } r \in \RR. \index{_zOpr@$\OP^r$}
\end{align}
\vspace*{-0.1cm}
The definition of $\OP^r$ is symmetric: When $r\in \N$, Equations \eqref{eq:epsilon et sigma} yields
\begin{align}
\label{eq:symmetry}
\OP^r=\awD^r \OP^0=\awD^r \OP^0 \awD^{-r} \awD^r \subset \OP^0 \awD^r.
\end{align}
When $T\in \OP^r$, we say that {\it the order of $T$ is (at most) $r$} and write $\ord T \vc r$.\index{order of an operator}

Since $\H^{\infty}$ is dense in $\H$, the operators in $\op^r$ (and a fortiori in $\OP^r$) are densely defined and we can define $$(\OP^r)^* \vc \{ T^* \; | \; T \in \OP^r\}.$$
We have
$$(\OP^r)^*  =\OP^r,$$
what follows from the observation that $(\del^n(T))^* = (-1)^n \del^n(T^*)$, so $(\OP^0)^*=\OP^0$ for $n\in \N$, and the symmetry \eqref{eq:symmetry} of $\OP^r$.

Note also that 
\begin{align}
\label{eq:P0 smoothing}
P_0 \in \OP^{-k} \text{ for all } k \in \RR,
\end{align} 
because $P_0$ is trivially in $\OP^0$ and $P_0 \vert D\vert^{k} = P_0 \in \B(\H)$ for any $k\geq 1$.

Obviously, we have $\awD \in \OP^1$ (but not in $\OP^0$, because $\awD \notin \B(\H)$!). Also, $D \in \OP^1$ since with $F = D \awD^{-1} \in \B(\H)$ we have $\delta(F) = 0$ as $D$ is selfadjoint. Consequently, $\DD, \abD \in \OP^1$, but again $\DD, \abD \notin \OP^0$ for $\DD, \abD \notin \B(\H)$.

Note that the regularity condition \eqref{regularity} is equivalent to requiring that
\begin{align}
\label{regularityOP}
\A \subset \OP^0, \quad [\DD,\A] \subset \OP^0.
\end{align}
In that case, for instance, $a\abD[\DD,b]\wD^{-4} \in \OP^{-3}$ with $a,b\in \A$. 

In \cite{ConnesMoscovici} it was proved that operators of order $\leq 0$ admit another characterisation:
\begin{align*}
\OP^0=\{T\,\vert\, t \mapsto F_t(T) \in C^\infty(\RR,\B(\H))\},
\end{align*}
where $F_t(T) \vc e^{it\,\vert D\vert}Te^{-it\vert D \vert}$ for $t\in \RR$, which is reminiscent of the geodesic flow \cite{ConnesSPV,ConnesMarcolli}, \cite[Chapter 8]{Cordes}. Recall that $$\delta(T)=[\abs{D},T]=\text{strong-}\lim_{t\to 0} \tfrac{F_t(T)-F_0(T)}{i\,t}.$$

\begin{proposition}
\label{prop:OP}
Let $r,\,s\in \RR$ and $z\in \CC$. Then

i) $\OP^r \subset \op^r$.

ii) $\sigma_z(\OP^r)\subset \OP^r$. 

iii) $\OP^r\OP^s \subset \OP^{r+s}$.

iv) $\OP^r \subset \OP^s$ when $r\leq s$.

v) $\del(\OP^r) \subset \OP^{r}$.

vi) $\nabla(\OP^{r}) \subset \OP^{r+1}$ and $\E(\OP^r)\subset \OP^{r-1}$.

vii) $[\awD^r,\OP^s] \subset \OP^{r+s-1}$.

viii) If $T\in \OP^r$, $r\geq 0$ has an inverse with $T^{-1}\awD^r\in \B(\H)$, then $T^{-1}\in \OP^{-r}$.

ix) $\OP^0 \subset \B(\H)$ and $\OP^{-1}\subset \KKK(\H)$.
\end{proposition}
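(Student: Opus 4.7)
I would establish the nine items in order of logical dependence rather than as listed, working in three layers: the immediate items; the algebraic/interpolation core; and the remaining consequences. A key auxiliary observation used throughout is that $\OP^0$ is a subalgebra of $\B(\H)$ --- this is immediate from the Leibniz rule $\delta^n(ST)=\sum_{k=0}^n\binom{n}{k}\delta^k(S)\delta^{n-k}(T)$, each summand being bounded when $S,T\in\OP^0$.

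The direct items are (i), (iv), (v), and (ix). For (i), Lemma \ref{intersection of domains} gives $\OP^0\subset\op^0$, and factoring $T=\abs{D}^r(\abs{D}^{-r}T)$ with $\abs{D}^r\in\op^r$ yields $\OP^r\subset\op^r$ via \eqref{opProp}. For (iv), when $r\le s$ the operator $\abs{D}^{r-s}$ is bounded (since $\spec \abs{D}$ is bounded away from $0$) and commutes with $\abs{D}$, hence lies in $\OP^0$, so $\abs{D}^{-s}T=\abs{D}^{r-s}(\abs{D}^{-r}T)\in\OP^0\cdot\OP^0\subset\OP^0$. For (v), factoring $T=\abs{D}^r S$ with $S:=\abs{D}^{-r}T\in\OP^0$ and using $\delta(\abs{D}^r)=0$ gives $\delta(T)=\abs{D}^r\delta(S)$, where $\delta(S)\in\OP^0$ by definition. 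For (ix), the first half is $\OP^0\subset\Dom\delta\subset\B(\H)$; the second uses that $T\in\OP^{-1}$ means $\abs{D}T\in\OP^0\subset\B(\H)$, so $T=\abs{D}^{-1}(\abs{D}T)$ is compact as the product of a compact and a bounded operator.

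The core of the proof is the statement $\sigma_z(\OP^0)\subset\OP^0$ for every $z\in\CC$, which will drive both (ii) and (iii). For $z=n\in\Z$ the expansions \eqref{eq:epsilon et sigma}--\eqref{eq:sigma moins} display $\sigma^{\pm n}(S)$ as finite combinations $\delta^k(S)\abs{D}^{\mp k}\in\OP^0$. The genuine obstacle is the extension to $z\in\CC$: the natural route is complex interpolation applied to the holomorphic family $z\mapsto\sigma_z(S)$ defined via the contour formula \eqref{def: D^z as integral}, using the commutation $\delta^n\sigma_z=\sigma_z\delta^n$ to secure uniform boundedness on integer vertical lines and then invoking a Stein-type interpolation on the Sobolev scale $\{\H^t\}$. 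Once this is in hand, item (ii) in full generality is immediate from the factorisation $\sigma_z(\abs{D}^rS)=\abs{D}^r\sigma_z(S)$, and item (iii) follows at once from
\begin{align*}
\abs{D}^{-(r+s)}TU=\sigma_{-s}\bigl(\abs{D}^{-r}T\bigr)\cdot\bigl(\abs{D}^{-s}U\bigr)\in\OP^0\cdot\OP^0\subset\OP^0.
\end{align*}

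The remaining items are consequences of what precedes. For (vi), the identity $\nabla(T)=\abs{D}\delta(T)+\delta(T)\abs{D}$ combined with (v) and (iii) gives $\nabla(T)\in\OP^{r+1}$, whence $\E(T)=\nabla(T)\abs{D}^{-2}\in\OP^{r-1}$. For (vii), writing $T=\abs{D}^s S$ produces $[\abs{D}^r,T]=\abs{D}^{r+s}(S-\sigma_{-r}(S))$, and for integer $r$ the formula \eqref{eq:sigma moins} immediately places $S-\sigma_{-r}(S)$ in $\OP^{-1}$; the non-integer case rides on the same interpolation. Finally, for (viii), set $R:=T^{-1}\abs{D}^r\in\B(\H)$, so that $\abs{D}^rT^{-1}=\sigma_r(R)$ and by (ii) it suffices to show $R\in\OP^0$; the identity $\delta(T^{-1})=-T^{-1}\delta(T)T^{-1}$ together with the rewriting $T^{-1}\delta(T)=R\cdot(\abs{D}^{-r}\delta(T))$, bounded by (v), supports a routine induction that controls each $\delta^n(R)$. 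In summary, the only real technical difficulty lies in securing the complex-$z$ extension of $\sigma_z(\OP^0)\subset\OP^0$; once that is in place, everything else is algebraic bookkeeping.
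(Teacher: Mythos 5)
Your proposal is correct and takes essentially the same route as the paper: the integer-power formulas \eqref{eq:epsilon et sigma}--\eqref{eq:sigma moins} combined with complex interpolation to get $\sigma_z(\OP^0)\subset\OP^0$, followed by the same conjugation identities for $iii)$--$vii)$ and the same compactness argument for $ix)$. The only (harmless) deviation is in $viii)$, where you run a direct induction on $R=T^{-1}\awD^r$ and then invoke $ii)$, whereas the paper first treats $r=0$ and then factors $T^{-1}=S^{-1}\awD^{-r}$ with $S^{-1}=T^{-1}\awD^r$; both arguments are equally elementary.
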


\begin{proof}
$i)$ This follows from $\OP^0\subset \op^0$ (Lemma \ref{intersection of domains}), $\awD^r \in \op^r$ and \eqref{opProp}.

$ii)$ When $T\in \OP^0$, since $\awD^{-k}$ and $\del^k(T)$ are also in $\OP^0$ for any $k\in \N$, Formulae \eqref{eq:epsilon et sigma} and \eqref{eq:sigma moins} tell us that for any $n\in \Z$, $\sigma_n(T)=\sigma^n(T) \in \OP^0$. \\
Let us fix $m\in \N$ and let $F_m: \,z \in \CC \mapsto  \delta^m(\sigma_z(T))\in \L(\H)$. Since $F_m(n)$ is bounded when $n\in \Z$, a complex interpolation shows that $F_m(z)$ is bounded for $z \in \CC$, so that $\sigma_z(T) \in \OP^0$.\\
When $T\in \OP^r$, $T=\abs{D}^r T'$ with $T'\in \OP^0$, thus $\sigma_z(T)=\abs{D}^r\sigma_z(T')\in \OP^r$.

$iii)$ Let $T\in \OP^r$ and $T'\in \OP^s$. Then $\awD^{-r}T$ and $\awD^{-s}T'$ are in $\OP^0$ and by $ii)$ $\awD^{-s}(\awD^{-r}T)\awD^{s} \in \OP^0$. So $\awD^{-(r+s)}TT'= (\awD^{-(r+s)}T\awD^{s})(\awD^{-s}T')\in \OP^0$.

$iv)$ When $s\geq r$,  $\awD^{r-s}$ is a bounded operator and is in $\OP^0$. Thus if $T \in \OP^r$, $\awD^{-s}T=\awD^{r-s}( \awD^{-r} T)\in \OP^0$ by $iii)$.

$v)$ If $T\in \OP^r$, then $T=\awD^r S$ with $S \in \OP^0$ and $\delta(T)=\delta(\awD^r)S+\awD^r\delta(S)$, so the result follows from $\del(\OP^0)\subset \OP^0$ and $iii)$.

$vi)$ Let $T\in \OP^r$. Since $\nabla(T)=\del(T)\awD +\awD\del(T)$, we get $\nabla(T)\in \OP^{r+1}$ from properties  $iii)$ and $v)$. \\
Moreover, $\E(T)= \nabla(T)\awD^{-2}\in \OP^{r+1}\OP^{-2} \subset \OP^{r-1}$ using $v$) and $iii)$.

$vii)$ By $iii)$ it is sufficient to prove that $[\awD^r,\OP^0]\in \OP^{r-1}$. This is true for $r\in \N$ using $v)$ and for $r\in -\N$ using $[A^{-1},B]=-A^{-1}[A,B]A^{-1}$. Finally, this also holds true for an arbitrary $r\in \RR$ by interpolation --- as in the proof of $ii)$.

$viii)$ Assume $r=0$. Remark that $\delta(T^{-1})=-T^{-1}\delta(T)T^{-1}$ is bounded. Thus 
 $\delta^2(T^{-1})=-T^{-1}\delta^2(T)T^{-1}-\delta(T^{-1})\delta(T)T^{-1}-T^{-1}\delta(T)\delta(T^{-1})$ is bounded and, by induction, $\delta^n(T^{-1})$ is bounded for any $n\in\N^*$. Thus $T^{-1}\in \OP^0$.\\
Now for $r>0$, $T=\awD^r S$ where $S\in \OP^0$ is such that $S^{-1}=T^{-1}\awD^r$ is bounded by hypothesis, thus in $\OP^0$ by previous argument and the claim follows from $iii)$.

$ix)$ By $i)$, we have $\OP^0 \subset \op^0 \subset \B(\H)$. Since $\abs{D}^{-1}$ is a compact operator, the last assertion  follows.
\hfill $\Box$
\end{proof}
Let us stress that the $\op$ classes are, in general, strictly larger than $\OP$'s. E.g. on the standard Podle\'s sphere (cf. Appendix \ref{sec:Podles}), which is \emph{not} regular, we have $a \in \op^0$, but $a \notin \OP^0$ for any $a \in \A_q$, $a \neq c \bbbone$.

\smallskip

The following, innocent-looking lemma, will prove very handy in practice:
\begin{lemma}\label{lm:polynomial rest}
Let $T \in \L(\H)$. Assume that there exist  $r \in \RR$ and $0\leq \epsilon < 1$ such that for any $N \in \N$, 
$T = \sum_{n=0}^N P_n + R_N$,  with $P_n \in \OP^{r-n}$ and $R_N \in \OP^{r-N-1+\epsilon}$.\\
Then, actually, $R_N \in \OP^{r-N-1}$.
\end{lemma}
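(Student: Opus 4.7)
My plan is to exploit the fact that the hypothesis holds for \emph{every} $N$, so it holds in particular for the next index $N+1$, and to read off the extra regularity of $R_N$ from the $(N+1)$-th decomposition.

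Concretely, apply the assumption with $N$ replaced by $N+1$:
\begin{align*}
T = \sum_{n=0}^{N+1} P_n + R_{N+1}, \quad P_{N+1} \in \OP^{r-N-1}, \quad R_{N+1} \in \OP^{r-N-2+\epsilon}.
\end{align*}
Subtracting the decomposition for $N$, the pieces $T$ and $\sum_{n=0}^{N} P_n$ cancel, leaving the telescoping identity
\begin{align*}
R_N = P_{N+1} + R_{N+1}.
\end{align*}

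Now I bound each summand separately. The term $P_{N+1}$ is in $\OP^{r-N-1}$ by construction. For the term $R_{N+1} \in \OP^{r-N-2+\epsilon}$, the condition $\epsilon < 1$ gives $r-N-2+\epsilon < r-N-1$, so item $iv)$ of Proposition \ref{prop:OP} (monotonicity of the $\OP$ scale) yields $R_{N+1} \in \OP^{r-N-1}$. Item $iii)$ (or the trivial fact that $\OP^{r-N-1}$ is a vector space) then gives $R_N = P_{N+1} + R_{N+1} \in \OP^{r-N-1}$, as claimed.

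There is essentially no obstacle here: the lemma is a purely bookkeeping statement, and the whole point is that the strict inequality $\epsilon<1$ lets the remainder of the \emph{next} order improve the remainder of the current order by the missing $1-\epsilon$. The only thing to notice is that the hypothesis must be used at level $N+1$ (not just at $N$) to extract one further term from the asymptotic expansion.
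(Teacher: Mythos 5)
Your proof is correct and is essentially identical to the paper's: both use the telescoping identity $R_N = P_{N+1} + R_{N+1}$ obtained from the decomposition at level $N+1$, together with the inclusion $\OP^{r-N-2+\epsilon} \subset \OP^{r-N-1}$ valid since $\epsilon < 1$. Nothing is missing.
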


\begin{proof}
We have $R_{N} = P_{N+1} + R_{N+1}$. Because $R_{N+1} \in \OP^{r-N-2+\epsilon} \subset \OP^{r-N-1}$ and $P_{N+1} \in \OP^{r-N-1}$  we conclude that $R_N$ is actually in $\OP^{r-N-1}$.
\hfill $\Box$
\end{proof}

We will often need to integrate operators (like in Formula \eqref{def: D^z as integral}) from a given $\OP$ class and we will need a control of the order of the integral. A typical example is the following: For $i\in \{1,\dotsc,n\}$, let us be given $A_i(\lambda)\in \OP^{-a_i}$ with $a_i\in \RR$, commuting with $\abs{D}$, and $B_i\in \OP^{b_i}$, $C(\la) \in \OP^{-c}$ with $b_i,c\in \RR$, which do not necessarily commute with $\abs{D}$. For some $I \subset \RR$, we will need to show that $\int_I R(\lambda)\,d\lambda\in\OP^r$ for some $r \in \RR$, when $R(\la)=A_1(\lambda)B_1 \, \dotsm \, A_n(\lambda)B_n\,C(\la)$.

Let us define $$a\vc a_1+\dotsc+a_n\text{ and }b\vc b_1+\dotsc+ b_n.$$ Assuming $b\leq a+c$, we first remark that the integrand $R(\la)$ is in $\OP^{-a-c+b}\subset \OP^0$, so it is bounded. 
We now decompose $\abs{D}^{a+c-b}R(\la)$ in the following way: \\For $\alpha_k=\sum_{j=0}^{k}a_j,\, \beta_k=\sum_{j=0}^{k}b_j$, with $a_0=b_0=0$, we have
\begin{align*}
&\hspace{2cm} \abs{D}^{a+c-b}R(\la)  =E_1(\la)F_1 \, \dotsm \, E_n(\la)F_n \,G(\la),\\
& \hspace{-0.3cm}\text{where}\\
& \quad E_k(\la)  \vc \abs{D}^{(a+c-\alpha_{k-1})-(b-\beta_{k-1})} \,A_k\,(\la) \abs{D}^{-(a+c-\alpha_k)+(b-\beta_{k-1})}=\abs{D}^{a_k} A_k(\lambda),\\
& \quad F_k  \vc \abs{D}^{(a+c-\alpha_k)-(b-\beta_{k-1})} \,B_k\, \abs{D}^{-(a+c-\alpha_k)+(b-\beta_{k})}, \\
& \quad G(\la)\vc \vert D\vert^c C(\la).
\end{align*}
Remark that all operators $E_k(\la),\,F_k, G(\la)$ are in $\OP^0$, and hence bounded.

Typically, the operators $A_k(\lambda)$ would be of the form $(D^2 + \lambda)^{-1}$, which gives $A_k(\lambda) \in \OP^{-2}$ and $A_k(\lambda) = \Oinf(\lambda^{-1})$. So, to assure the integrability of $R(\lambda)$ for $I = \RR^+ \vc [0,\infty)$\index{_zRplus@$\RR^+ = [0,\infty)$} we usually have to sacrifice a few orders of $\OP$. \\ For example, with $R(\lambda) = (D^2 + \lambda)^{-2}\in \OP^{-4}$, we get $\norm{ \abs{D}^{r} \int_{\RR^+} R(\lambda)\,d\la} < \infty$ only for $r < 2$. 

\begin{theorem}
\label{thm:integral of OP}
Let $I \subset \RR$. Assume that there exist $r \in \RR$ and $n \in \N^*$ such that for any $\lambda \in I$ a given operator $R(\lambda) \in \L(\H)$ can be decomposed as
\begin{align}
\abs{D}^r\, R(\la) = E_1(\la)F_1 \, \dotsm \, E_n(\la)F_n \,G(\la), \label{hyp:int of OP}
\end{align}
with $E_k(\la), F_k, G(\la) \in \OP^0$ and $[\abs{D},E_k(\lambda)] = 0$.\\
If $$\displaystyle \int_I \Vert E_1(\la)\Vert \dotsm \Vert E_n(\la) \Vert  \cdot  \Vert \delta^m(G(\la)) \Vert\,d\la < \infty \text{  for }m \in \N,$$
then $\displaystyle \int_I R(\la) d\la \in \OP^{-r}$.
\end{theorem}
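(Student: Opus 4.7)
Set $T \vc \int_I R(\la)\,d\la$. By the definition \eqref{OP} of $\OP^{-r}$, it suffices to show $\abs{D}^r T \in \OP^0$, that is, $\delta^m(\abs{D}^r T)$ extends to a bounded operator on $\H$ for every $m \in \N$. The base case $m=0$ is essentially immediate from the hypothesis. Indeed, on any $\xi \in \H^{\infty}$ the factorisation \eqref{hyp:int of OP} gives
\[
\norm{\abs{D}^r R(\la)\xi} \leq \norm{E_1(\la)} \cdots \norm{E_n(\la)}\,\norm{F_1}\cdots\norm{F_n}\,\norm{G(\la)}\,\norm{\xi},
\]
and the $m=0$ case of the integrability hypothesis (noting $\delta^0 = \id$) renders the right-hand side $\la$-integrable. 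Since $\abs{D}^r$ is closed, this means $T\xi \in \Dom \abs{D}^r$ with $\abs{D}^r T\xi = \int_I \abs{D}^r R(\la)\xi\,d\la$, and hence $\abs{D}^r T$ extends to a bounded operator on $\H$.

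\medskip

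For general $m \geq 1$, the key observation is that $\delta$ is a derivation on $\OP^0$ annihilating every $E_k(\la)$ (because $[\abs{D}, E_k(\la)] = 0$). Applying Leibniz to $\abs{D}^r R(\la) = E_1(\la)F_1\cdots E_n(\la)F_n G(\la)$ therefore yields
\[
\delta^m\bigl(\abs{D}^r R(\la)\bigr) = \sum_{\substack{j_1,\ldots,j_{n+1} \geq 0 \\ j_1 + \cdots + j_{n+1} = m}} c_{\vec{j}}\; E_1(\la)\,\delta^{j_1}(F_1)\,E_2(\la)\cdots E_n(\la)\,\delta^{j_n}(F_n)\,\delta^{j_{n+1}}(G(\la)),
\]
with multinomial coefficients $c_{\vec{j}}$. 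Each $\delta^{j_k}(F_k)$ is bounded by a constant independent of $\la$ (because $F_k \in \OP^0$), so the norm of every summand is dominated by a constant multiple of $\norm{E_1(\la)}\cdots\norm{E_n(\la)}\,\norm{\delta^{j_{n+1}}(G(\la))}$, and the hypothesis applied with exponent $j_{n+1} \leq m$ makes each term $\la$-integrable. Summing over the finitely many indices $\vec{j}$ produces a bounded operator.

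\medskip

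The main technical obstacle is to justify the commutation of $\delta^m$ with $\int_I$, since $\delta$ is implemented by the unbounded operator $\abs{D}$. I would handle this by working on matrix elements against smooth vectors: for $\xi,\eta \in \H^{\infty}$ the binomial identity
\[
\delta^m(A) = \sum_{j=0}^{m}(-1)^{m-j}\tbinom{m}{j}\abs{D}^j A\,\abs{D}^{m-j}
\]
reduces $\langle \eta, \delta^m(\abs{D}^r T)\xi\rangle$ to a finite sum of integrals of the form $\int_I \langle \abs{D}^j \eta,\, \abs{D}^r R(\la)\,\abs{D}^{m-j}\xi\rangle\,d\la$, each absolutely convergent by applying the $m=0$ argument to the vectors $\abs{D}^j\eta,\abs{D}^{m-j}\xi \in \H^{\infty}$. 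Fubini then legitimises the exchange of sum, integral and derivative, and the Leibniz expansion above yields $\abs{\langle \eta, \delta^m(\abs{D}^r T)\xi\rangle} \leq C_m\norm{\xi}\norm{\eta}$ with a finite constant $C_m$. Since $\H^{\infty}$ is dense in $\H$, this bounded sesquilinear form extends to a bounded operator on $\H$, whence $\abs{D}^r T \in \OP^0$ and therefore $T \in \OP^{-r}$, as required.
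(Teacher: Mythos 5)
Your proof is correct and follows essentially the same route as the paper's: the $m=0$ bound from the factorisation \eqref{hyp:int of OP}, then the Leibniz rule together with $\delta(E_k(\la))=0$ so that the derivatives only hit the $F_k$'s and $G(\la)$, each resulting term being $\la$-integrable by the hypothesis. The only difference is that you spell out the interchange of $\delta^m$ with $\int_I$ via the binomial identity and bounded sesquilinear forms on $\H^{\infty}$ — a technical point the paper's proof leaves implicit.
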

\begin{proof}
We need to show that $$\abs{D}^r\,\R \vc \abs{D}^r \int_I R(\la) d\la \in \OP^0=\cap_{m=0}^\infty \Dom \delta^m.$$
For $m=0$, we have $$\Vert \abs{D}^r\,\R\Vert \leq (\prod_{k=1}^n \norm{F_k}) \,\int_I \prod_{k=1}^n \Vert E_k(\la)\Vert\cdot\Vert G(\la)\Vert\,d\la<\infty$$ by hypothesis. Hence, $\abs{D}^r\,\R$ is bounded.\\
An application of $\delta$ to $\abs{D}^r\,\R$ generates from the integrand $\abs{D}^r\,R(\la)$ a finite linear combination of terms like those in \eqref{hyp:int of OP} with either one of the $F_k$'s replaced by $\delta(F_k)$ or $G(\la)$ replaced by $\delta(G(\la))$, as $\delta(E_k)=0$. Since $F_k \in \OP^0$, the norm of $\delta^m(F_k)$ is finite for any $m \in \N$ and the norm of $\delta^m(\abs{D}^r\,\R)$ can be estimated along the same lines as for the case $m=0$.
\hfill{$\Box$}
\end{proof}

As the first application of Theorem \ref{thm:integral of OP} we derive an operator expansion \eqref{OPexpansion}, which will play a pivotal role in the construction of an abstract pseudodifferential calculus. We will need the following technical lemma (cf. \cite[Lemma 6.9]{CPRS}):

\begin{lemma}
\label{lm:OP_exp}
Let $T \in \OP^{\,r}$ for an $r \in \RR$ and let $\lambda \notin \spec D^2$. Then, $\forall\, n \in \N^*,\, N \in \N$,
\begin{align}
\label{OP_exp_int}
&\!\!\!\!(D^2 - \lambda)^{-n} T \! = \!\sum_{j=0}^N (-1)^j \tbinom{n+j-1}{j} \nabla^j(T) (D^2 - \lambda)^{-(j+n)} + (-1)^{N+1}R_N(\lambda,n), \\
&\!R_N(\lambda,n)  \vc  \sum_{k=1}^n \tbinom{k+N-1}{N} (D^2-\lambda)^{k-n-1} \nabla^{N+1}(T) (D^2 - \lambda)^{-(k+N)} \in \OP^{\,r-2n-N-1}.\notag
\end{align}
\end{lemma}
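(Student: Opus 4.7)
The plan is to prove the identity \eqref{OP_exp_int} by induction on $N$, using only the basic resolvent/commutator identity coming from the definition of $\nabla$, and then to read off the $\OP$-class of $R_N(\lambda,n)$ from the order calculus in Proposition \ref{prop:OP}. Write $A \vc D^2 - \lambda$ so that $[A,S] = \nabla(S)$ for any $S$ with $\Dom S \supset \H^\infty$. The starting point is the identity
\begin{align*}
A^{-1}S = S A^{-1} - A^{-1}\nabla(S) A^{-1},
\end{align*}
which follows from $(D^2-\lambda)S - S(D^2-\lambda) = \nabla(S)$ by left- and right-multiplication with $A^{-1}$. Iterating this $n$ times, with $S=T$ each time, yields the commutator formula
\begin{align*}
[A^{-n},T] = -\sum_{k=1}^n A^{\,k-n-1}\,\nabla(T)\,A^{-k},
\end{align*}
which is the $N=0$ case of \eqref{OP_exp_int}.

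For the inductive step, assume the formula holds for $N-1$. It suffices to further expand the remainder $R_{N-1}(\lambda,n)$. Each of its summands has the form $A^{k-n-1}\nabla^N(T)\,A^{-(k+N-1)}$ with $k-n-1 = -(n+1-k)<0$, so I apply the base-case identity to the left factor to write
\begin{align*}
A^{-(n+1-k)}\nabla^N(T) = \nabla^N(T)\,A^{-(n+1-k)} - \sum_{j=1}^{n+1-k} A^{\,j-n+k-2}\,\nabla^{N+1}(T)\,A^{-j}.
\end{align*}
Plugging this in, the $\nabla^N(T)$ contribution collects a coefficient $\sum_{k=1}^n \binom{k+N-2}{N-1}$, which equals $\binom{n+N-1}{N}$ by the hockey-stick identity, exactly matching the $j=N$ term of the sum in \eqref{OP_exp_int}. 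The leftover double sum is then reindexed via $k' \vc j+k-1$, and a second application of the hockey-stick identity $\sum_{k=1}^{k'}\binom{k+N-2}{N-1} = \binom{k'+N-1}{N}$ shows that it coincides with $R_N(\lambda,n)$. The bookkeeping in this reindexing is the step I expect to be the main obstacle, though it is purely combinatorial.

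It remains to verify $R_N(\lambda,n)\in \OP^{r-2n-N-1}$. Since $D$ is selfadjoint, $D^2 = \abs{D}^2$, so $(D^2-\lambda)\in \OP^2$ is invertible with $(D^2-\lambda)^{-1}\abs{D}^2 = \bbbone + \lambda(D^2-\lambda)^{-1} \in \B(\H)$; Proposition \ref{prop:OP}$(viii)$ then gives $A^{-1}\in\OP^{-2}$, hence $A^{m}\in\OP^{2m}$ for every $m\in\Z$. Combined with $\nabla^{N+1}(T)\in \OP^{r+N+1}$ from Proposition \ref{prop:OP}$(vi)$ and the multiplicativity of orders from Proposition \ref{prop:OP}$(iii)$, each summand of $R_N(\lambda,n)$ lies in
\begin{align*}
\OP^{2(k-n-1)}\cdot\OP^{r+N+1}\cdot\OP^{-2(k+N)} = \OP^{\,r-2n-N-1},
\end{align*}
independently of $k$, so the sum is in $\OP^{r-2n-N-1}$, as claimed.
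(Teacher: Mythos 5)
Your proof is correct and follows essentially the same route as the paper's: the $N=0$ case by induction on $n$ via the resolvent identity, then induction on $N$ by expanding the remainder with that identity, reindexing $j+k-1 \rightsquigarrow k$ and applying the hockey-stick identity, with the $\OP$-order of $R_N(\lambda,n)$ read off from Proposition \ref{prop:OP} \textit{iii)}, \textit{vi)}. Your explicit check that $(D^2-\lambda)^{-1}\in\OP^{-2}$ via Proposition \ref{prop:OP} \textit{viii)} is a small detail the paper leaves implicit, but otherwise the two arguments coincide.
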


\begin{proof}
The proof follows by pure combinatorics with an induction on $n$ and $N$.

\textit{1)} Let us first take $n=1$ and $N=0$. We have
\begin{align}\label{resolvent_trick}
(D^2 - \lambda)^{-1} T & = T (D^2 - \lambda)^{-1} + [(D^2 - \lambda)^{-1},T]  \notag \\
& = T (D^2 - \lambda)^{-1} - (D^2 - \lambda)^{-1} \nabla(T) (D^2 - \lambda)^{-1},
\end{align}
which is precisely the Formula \eqref{OP_exp_int} for $n=1$, $N = 0$.

\textit{2)} Let us now assume that Formula  \eqref{OP_exp_int} holds for $N=0$ and a fixed $n \in \N^*$, i.e.
\begin{align}\label{OP_n}
(D^2 - \lambda)^{-n} T & = T (D^2 - \lambda)^{-n} - \sum_{k=1}^n (D^2-\lambda)^{k-n-1} \nabla(T) (D^2 - \lambda)^{-k}.
\end{align}
We show that Equation \eqref{OP_n} holds also for $n+1$: Applying Equation \eqref{resolvent_trick} again
\begin{align*}
(D^2 - \lambda)^{-n-1} T & = (D^2 - \lambda)^{-1} \big[ T (D^2 - \lambda)^{-n} - \sum_{k=1}^n (D^2-\lambda)^{k-n-1} \nabla(T) (D^2 - \lambda)^{-k} \big] \\ 
& =  T (D^2 - \lambda)^{-n-1} - (D^2 - \lambda)^{-1} \nabla(T) (D^2 - \lambda)^{-n-1}  \\
& \hspace*{2.63cm}- \sum_{k=1}^n (D^2-\lambda)^{k-n-2} \nabla(T) (D^2 - \lambda)^{-k}.
\end{align*}

\textit{3)} If \eqref{OP_exp_int} holds for any $n \in \N^*$ and a fixed $N \in \N$, it is sufficient to show that
\begin{align*}
R_N(\lambda,n) = \tbinom{n+N}{N+1} \nabla^{N+1}(T) (D^2 - \lambda)^{-(N+n+1)} - R_{N+1}(\lambda,n).
\end{align*}
Indeed, by assumption we have
\begin{align*}
R_N(\lambda,n) &= \sum_{k=1}^n \tbinom{k+N-1}{N} (D^2-\lambda)^{k-n-1} \nabla^{N+1}(T) (D^2 - \lambda)^{-(k+N)} \\
& =  \sum_{k=1}^n \tbinom{k+N-1}{N} \nabla^{N+1}(T) (D^2 - \lambda)^{-(n+N+1)}  \\
& \qquad - \sum_{k=1}^n \tbinom{k+N-1}{N}  \sum_{j=1}^{n+1-k} (D^2-\lambda)^{j+k-n-2} \nabla^{N+2}(T) (D^2 - \lambda)^{-(j+k+N)} \\
& =  \tbinom{n+N}{N+1} \nabla^{N+1}(T) (D^2 - \lambda)^{-(N+n+1)} \\
& \qquad - \sum_{k=1}^n \sum_{j=1}^{k} \tbinom{j+N-1}{N}  (D^2-\lambda)^{k-n-1} \nabla^{N+2}(T) (D^2 - \lambda)^{-(k+N+1)},
\end{align*}
which is the claimed equation, since $\sum_{j=1}^{\ell} \tbinom{j+N-1}{N} = \tbinom{\ell+N}{N+1}$ and in the third equality we commuted the sums and changed the summation index $j+k-1 \rightsquigarrow k$.

Lastly, the use  of Proposition \ref{prop:OP} \textit{iii)}, \textit{vi)} shows that $R_N(\lambda,n) \in \OP^{\,r-2n-N-1}$.
\hfill $\Box$
\end{proof}

\begin{theorem}
\label{thm:expansion of sigma2z}
Let $T \in \OP^{\,r}$ for some $r \in \RR$. Then, for any $z \in \CC$ and any $N \in \N$,
\begin{align}
\label{OPexpansion}
\sigma_{2z}(T) = \sum_{\ell = 0}^{N} \tbinom{z}{\ell}\, \nabla^{\ell}(T)\, \vert D\vert^{-2\ell} + R_N(z), \quad R_N(z) \in \OP^{\,r-(N+1)}.
\end{align}
\end{theorem}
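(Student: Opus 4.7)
The plan is to combine the Cauchy-type contour representation of $|D|^{2z}$ with the commutator expansion of Lemma~\ref{lm:OP_exp}. The motivating heuristic is that $\sigma_{2z}(T) \sim \sum_{\ell \geq 0} \binom{z}{\ell} \nabla^\ell(T) |D|^{-2\ell}$, which collapses to an exact finite sum when $z = n \in \N$ (by iterating $\sigma_2(T) = T + \nabla(T)|D|^{-2}$ and invoking Pascal's rule, since $\sigma_{2n} = \sigma_2^n$); the theorem truncates this series at order $N$ and controls the tail for arbitrary complex $z$.

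I would first restrict to the half-plane $\text{Re}(z) < 0$, where functional calculus supplies the Cauchy representation $|D|^{2z} = \tfrac{1}{2\pi i}\int_{\C} \lambda^z (\lambda - D^2)^{-1}\, d\lambda$ on a Hankel contour $\C$ around $\spec D^2 \subset (0,\infty)$. Writing $|D|^{2z} T = \tfrac{1}{2\pi i}\int_\C \lambda^z (\lambda - D^2)^{-1} T\, d\lambda$ and applying Lemma \ref{lm:OP_exp} with $n = 1$ (after converting signs via $(D^2 - \lambda)^{-m} = (-1)^m(\lambda - D^2)^{-m}$) gives
\begin{align*}
(\lambda - D^2)^{-1} T = \sum_{j=0}^N \nabla^j(T) (\lambda - D^2)^{-(j+1)} + (-1)^N (\lambda - D^2)^{-1} \nabla^{N+1}(T) (\lambda - D^2)^{-(N+1)}.
\end{align*}
Using $(\lambda - D^2)^{-(j+1)} = \tfrac{(-1)^j}{j!}\tfrac{d^j}{d\lambda^j}(\lambda - D^2)^{-1}$ and integrating by parts $j$ times on the Hankel contour (boundary terms vanish thanks to $\text{Re}(z) < 0$), the $j$-th integral becomes $\binom{z}{j}|D|^{2(z-j)}$. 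Right-multiplying everything by $|D|^{-2z}$ then yields \eqref{OPexpansion} with $R_N(z)$ equal to the contour integral of the single remainder term.

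To prove $R_N(z) \in \OP^{r-(N+1)}$, factor $\nabla^{N+1}(T) = |D|^{r+N+1} B$ with $B \in \OP^0$ (possible since $\nabla^{N+1}(T) \in \OP^{r+N+1}$ by Proposition \ref{prop:OP}\,\textit{vi)} iterated), commute the spectral factors $|D|^{r+N+1}$ and $|D|^{-2z}$ past the $(\lambda - D^2)^{-k}$'s, and apply Theorem \ref{thm:integral of OP}. The integrand fits into the required product form $E_1(\lambda)F_1 E_2(\lambda)F_2 G(\lambda)$, with the $E_k$'s built from bounded spectral combinations of $(\lambda - D^2)^{-k}$ and $|D|$ (relying on $\|(\lambda - D^2)^{-k}|D|^{2k}\| = O(1)$ on $\C$) and $F_1 = B$; the required $\lambda$-integrability is secured by the $|\lambda|^{\text{Re}(z)-(N+2)}$ scalar decay of the integrand on $\C$. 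Analytic continuation in $z$ (either directly, since every term in \eqref{OPexpansion} depends holomorphically on $z$, or iteratively via $\sigma_{2(z+1)} = \sigma_2 \circ \sigma_{2z}$ combined with the Pascal identity $\binom{z+1}{j} = \binom{z}{j} + \binom{z}{j-1}$) extends both the identity and the order bound from $\text{Re}(z) < 0$ to all $z \in \CC$.

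The main obstacle is the precise bookkeeping in the application of Theorem \ref{thm:integral of OP}: each $E_k(\lambda)$ must lie in $\OP^0$ and commute with $|D|$, and for large negative $\text{Re}(z)$ the naive absorption $(\lambda - D^2)^{-k}|D|^{2k} = O(1)$ alone does not suffice, so the extra $\lambda^z$-decay has to be redistributed among the factors, which is where almost all the technical work lies.
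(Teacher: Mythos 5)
Your derivation of the main terms is correct and genuinely different from the paper's route: you produce the coefficients $\binom{z}{\ell}$ by integrating by parts against the first-order Cauchy kernel on $\C$, whereas the paper splits $z=n+s$ with $0<\Re(s)<1$, treats the integer part exactly via Lemma \ref{lm:OP_exp} at $\lambda=0$, and resums the double sum with Euler's reflection formula and the Chu--Vandermonde identity. Your upward step $\sigma_{2(z+1)}=\sigma_{2}\circ\sigma_{2z}$ with Pascal's rule does propagate both the identity and the bound $R_N\in\OP^{\,r-(N+1)}$ from $z$ to $z+1$. The genuine gap is in the remainder estimate, which is the analytic heart of the theorem. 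After multiplying by $\vert D\vert^{N+1-r}$, your factorisation $\nabla^{N+1}(T)=\vert D\vert^{r+N+1}B$ hands Theorem \ref{thm:integral of OP} the left block $(\lambda-D^2)^{-1}\vert D\vert^{2N+2}$, which is unbounded for $N\geq1$; since powers of $\vert D\vert$ cannot be moved across $B$ (only conjugated, $B\rightsquigarrow\sigma_{2\beta}(B)$), you must refactor as $\vert D\vert^{r+N+1-2\beta}B'\vert D\vert^{2\beta}$. Boundedness of the two spectral blocks then forces $\beta\geq N$ (the single resolvent on the left absorbs at most $\vert D\vert^{2}$) and $\beta\leq N+1+\Re(z)$ (the $N+1$ resolvents on the right must also absorb $\vert D\vert^{-2z}$, of order $-2\Re(z)>0$). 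This window is empty as soon as $\Re(z)<-1$; the obstruction is a supremum over $\spec D$ at fixed $\lambda$, so no redistribution of the scalar decay $\vert\lambda\vert^{\Re(z)}$ can repair it, contrary to your closing remark. Hence your base region is at best the strip $-1<\Re(z)<0$, not the half-plane $\Re(z)<0$; the Pascal recursion only moves to the right, and plain analytic continuation transfers the identity but not membership in $\OP^{\,r-(N+1)}$, which is a family of norm bounds. All $z$ with $\Re(z)\leq-1$ are therefore left unproved.

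Even inside the strip the bookkeeping is borderline: for every admissible $\beta$ the exponents $\Re(z)$, $N-\beta$ and $\beta-\Re(z)-(N+1)$ sum to $-1$, so the integrand's norm decays exactly like $\vert\lambda\vert^{-1}$ along the rays of $\C$ and the integral of norms required by Theorem \ref{thm:integral of OP} diverges logarithmically at the exact target order. You must give up an $\epsilon$, proving $R_N(z)\in\OP^{\,r-(N+1)+\epsilon}$ for all small $\epsilon>0$, and then remove it with Lemma \ref{lm:polynomial rest}, exactly as the paper does; this step is absent from your sketch. To cover $\Re(z)\leq-1$ you need an additional device: either the paper's integer-plus-fractional split, which guarantees that only a width-one power of $\vert D\vert$ is ever traded against resolvent decay, or a higher-order kernel $\tfrac{1}{i2\pi}\int_{\C}\lambda^{z+k}(\lambda-D^2)^{-(k+1)}\,d\lambda=\binom{z+k}{k}\vert D\vert^{2z}$ combined with Lemma \ref{lm:OP_exp} at depth $n=k+1$, so that the remainder carries $k+1$ resolvents to the left of $\nabla^{N+1}(T)$; a downward recursion $z\mapsto z-1$ is also possible, but it requires commuting $\vert D\vert^{-2}$ through the expansion and is not the Pascal identity you invoke.
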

Here $\binom{z}{n}\vc  z(z-1)\cdots(z-n+1)(n!)^{-1}$ with the convention $\binom{z}{0}=1$,\index{_Azbinom@$\binom{z}{n}$}
\begin{proof} 
For $z \in \CC$ we decompose $z = n + s$, with $n = \lfloor \Re(z) \rfloor \in \Z$ and $\Re(s) <1$ yielding $\vert D \vert^{-2z} \,T =\vert D \vert^{-2s} \,\vert D \vert^{-2n}\,T$.

Let us first assume that $\Re(z) > 0$ and $\Re(s) > 0$. Hence, $n \in \N$. For any $N \in \N$ Formula \eqref{OP_exp_int} with $\lambda = 0$ implies 
\begin{align*}
\vert D \vert^{-2z}\, T = \sum_{j=0}^N (-1)^j \tbinom{n+j-1}{j} \vert D \vert^{-2s} \nabla^j(T) \vert D \vert^{-2(j+n)} +(-1)^{N+1} \vert D \vert^{-2s} R_N(0,n).
\end{align*}
Now, for $0< \Re(s)<1$, we invoke the operator identity
$$\vert D \vert^{-2s} = \tfrac{\sin( \pi s)}{\pi} \int_0^{\infty} (D^2+ \lambda)^{-1}\, \lambda^{-s}\, d\lambda.$$
For any $j \in \N$, Formula \eqref{OP_exp_int} with $-\lambda \in -\RR^+ \not\subset \spec D^2$, $n = -1, N-j \in \N$ yields
\begin{align*}
\vert D \vert^{-2s}\, \nabla^j(T)  & = \tfrac{\sin( \pi s)}{\pi} \int_0^{\infty} (D^2+ \lambda)^{-1} \nabla^j(T)\, \lambda^{-s} \,d\lambda \\
 & = \tfrac{\sin( \pi s)}{\pi} \sum_{k = 0}^{N-j} (-1)^k \nabla^{j+k}(T)  \int_0^{\infty} (D^2+ \lambda)^{-k-1} \lambda^{-s} \,d\lambda + R'_{N,j}(s)\\
 & = \tfrac{\sin( \pi s)}{\pi} \sum_{k = 0}^{N-j} (-1)^k \, \tfrac{\Gamma(1-s) \Gamma(s+k)}{\Gamma(k+1)} \, \nabla^{j+k}(T) \, \vert D \vert^{-2(s+k)} + R'_{N,j}(s),
\end{align*}
with $$R'_{N,j}(s) = (-1)^{N-j+1} \tfrac{\sin(\pi s)}{\pi} \int_0^{\infty} (D^2+\lambda)^{-1} \nabla^{N+1}(T) (D^2 +\lambda)^{-N+j-1} \lambda^{-s} d\lambda.$$

Combining the formulae from above we obtain
\begin{align*}
\!\vert D \vert^{-2z} T  & \\
& \hspace*{-0.6cm} = \sum_{j=0}^N \,\sum_{k=0}^{N-j} (-1)^{j+k} \tbinom{n+j-1}{j} \tfrac{\sin( \pi s)}{\pi} \, \tfrac{\Gamma(1-s) \Gamma(s+k)}{\Gamma(k+1)} \nabla^{j+k}(T) \, \vert D \vert^{-2(s+k+j+n)} + \R(z) \\
& \hspace*{-0.6cm} = \sum_{j=0}^N \,\sum_{k=0}^{N-j} (-1)^{j+k} \tbinom{n+j-1}{j} \tfrac{\Gamma(s+k)}{\Gamma(k+1)\Gamma(s)} \nabla^{j+k}(T)  \vert D \vert^{-2(k+j+z)} + \R(z) \\
& \hspace*{-0.6cm} = \sum_{\ell=0}^N \,\sum_{k=0}^{\ell} (-1)^{\ell} \tbinom{n+\ell-k-1}{\ell-k} \tbinom{k+s-1}{k} \nabla^{\ell}(T)  \vert D \vert^{-2(\ell+z)} + \R(z) \\
& \hspace*{-0.6cm} = \sum_{\ell=0}^N (-1)^{\ell} \tbinom{z+\ell-1}{\ell} \nabla^{\ell}(T)  \vert D \vert^{-2(\ell+z)} + \R(z)  = \!\sum_{\ell=0}^N \tbinom{-z}{\ell} \nabla^{\ell}(T)  \vert D \vert^{-2(\ell+z)} + \R(z),
\end{align*}
where along the road we employed Euler's reflection formula \cite[(6.1.17)]{Abram} and the Chu--Vandermonde identity \cite[(24.1.1)]{Abram}, and the global remainder reads
\begin{align}\label{Rz}
\R(z)\vc \sum_{j=0}^N (-1)^j \tbinom{n+j-1}{j} R'_{N,j}(s)  \vert D \vert^{-2(j+n)}+(-1)^{N+1} \vert D \vert^{-2s} R_{N}(0,n).
\end{align}

We have shown that $\sigma_{-2z}(T) = \sum_{\ell = 0}^{N} \binom{-z}{\ell}\, \nabla^\ell(T) \vert D\vert^{-2\ell} + \R(z) \awD^{2z}$,  under the assumption that $\Re(z) > 0$ and $\Re(s) > 0$, but the result can be extended to any $z\in \CC$ on the strength of the uniqueness of the holomorphic continuation. Thus, a  swap $z \rightsquigarrow -z$ would complete the proof, provided we can control the remainder $\R(z)$.

The second term of \eqref{Rz} is easy to handle: From Lemma \ref{lm:OP_exp} and Proposition~\ref{prop:OP} \textit{iii)} we deduce $\vert D \vert^{-2s} R_{N}(0,n) \in \OP^{\,r-2\Re(n+s)-N-1}$.

The first term of $\R(z)$,
\begin{align*}
&\R_1(z) \vc(-1)^{N+1} \tfrac{\sin( \pi s)}{\pi} \sum_{j=0}^N \tbinom{n+j-1}{j} \R_{1,j}(z)\\
&\text{with }\,\,\,\,\R_{1,j}(z) \vc \int_0^{\infty} (D^2+\lambda)^{-1} \nabla^{N+1}(T) (D^2+\lambda)^{-N-1+j} \vert D \vert^{-2(n+j)} \lambda^{-s} d\lambda,
\end{align*}
is more delicate. We will handle it with the aid of Theorem \ref{thm:integral of OP}.

For any fixed $j \in \{0,1,\dotsc,N\}$ and with $x=\Re(s)$, we show that the operator  $\abs{D}^{2x-2\epsilon} \R_{1,j}(z) \abs{D}^{N+1-r+2n}$ is bounded for some $\epsilon>0$: 
\begin{align*}
& \big\Vert \awD^{2x-2\epsilon} \int_0^{\infty} (D^2+\lambda)^{-1} \nabla^{N+1}(T) (D^2+\lambda)^{-N-1+j} \awD^{N+1-r-2j} \lambda^{-s} d\lambda \big\Vert \\
& \hspace*{1cm} \leq c_1 + \int_1^{\infty} \Vert (D^2+\lambda)^{-1} \Vert^{1-x+\epsilon} \, \big\Vert (D^2+ \lambda)^{-1} D^2 \big\Vert^{x - \epsilon} \x \\
& \hspace*{3cm} \x \, \Vert \nabla^{N+1}(T) \awD^{-N-1-r} \Vert \, \Vert (D^2+ \lambda)^{-1} D^2 \Vert^{N+1-j} \,\abs{\lambda^{-s}} d\lambda \\
& \hspace*{1cm} \leq c_1 + c_2 \int_1^{\infty} \lambda^{-1+x-\epsilon}\, \lambda^{-x} d\lambda < \infty.
\end{align*}

This computation is sound if $1-x+\epsilon>0$ and $x - \epsilon>0$. The former is always true, since $x = \Re(s) < 1$, whereas in the second one $x > 0$ and we can always tune the $\epsilon$ to be small enough. Note also that $\nabla^{N+1}(T) \awD^{-N-1-r} \in \OP^0$ for any $N$ and the convergence of the integral of norms at $0$ is automatic since $0<(D^2+\la)^{-1}D^2 \leq \bbbone$.

As we kept $j$ free and $\R_1(z)$ involves only a finite sum over $j$, we have actually proven that $\big\Vert \awD^{2x-2\epsilon} \R_1(z) \awD^{N+1-r+2n} \big\Vert < \infty$ for an arbitrarily small $\epsilon >0$. \\ To show that $\R_1(z) \in \OP^{\,r-2 \Re(z)-(N+1)+2\epsilon}$, so that $R_N(-z)\in \OP^{r-(N+1) +\epsilon}$, we invoke Theorem \ref{thm:integral of OP} with
\begin{align*}
&E_1(\lambda)  = \awD^{2x-2\epsilon} (D^2 + \lambda)^{-1},\\ 
&E_2(\lambda)  = (D^2 + \lambda)^{-N-1+j} D^{2(N+1-j)},  \\
& F_1  = \nabla^{N+1}(T) \awD^{-N-1-r}, \\
& G(\lambda)  = \lambda^{-s}. 
\end{align*}

Finally, to get rid of the $\epsilon$ we invoke the handy Lemma \ref{lm:polynomial rest}.
 \hfill $\Box$
\end{proof}
This theorem was first proven in \cite{ConnesSPV,ConnesMoscovici} for $T$ being a pseudodifferential operator (see Definition \ref{def:PDO}). But it only uses the fact that $T \in \OP^{\,r}$ for $r \in \RR$. This formulation was first given by Nigel Higson \cite{Higson} with another proof (see also \cite{CPRS}).

\section{Abstract Pseudodifferential Calculus}
\label{sec:pdo}
In the previous section we were only concerned with an abstract unbounded operator $\DD$ on a Hilbert space $\H$. We now let the algebra $\A$ enter into the game and describe the abstract pseudodifferential calculus associated with a spectral triple $\ahd$.

We shall follow the conventions adopted in \cite{Tadpole} (see also \cite{PodlesSA,IochumNotes}). Define first
\begin{center}
$\Pc(\A)\vc$ \index{_zPz1@$\Pc(\A)$} polynomial algebra generated by $\A,\,\DD,\,|\DD|$ and $J \A J^{-1}$ when $J$ exists.
\end{center}
\begin{definition}
\label{def:PDO}
Given a regular spectral triple $\ahd$, one defines the set of \emph{pseudodifferential operators} (pdos) \index{pseudodifferential operator (pdo)} as
\begin{multline}
\label{PDO}
\PDO \vc \big\{T \in \L(\H) \; \big\vert \; \forall \, N \in \N \;\; \exists \, P \in \Pc(\A),\, \,R \in \OP^{-N}, \,p \in \N, \\
\text{such that } T = P \vert D\vert^{-p} + R \big\}.\index{_apDO@$\PDO$}
\end{multline}
\end{definition}
In particular $\abs{D}^r\in \Psi(\A)$ for any $r\in \RR$.

An operator $T$ is \emph{smoothing} \index{pseudodifferential operator (pdo)!smoothing} if $T \in \OP^{-N}$ for all $N \in \N$. Remark that any smoothing operator is automatically in $\Psi(\A)$ with $P=0$ for each $N$ in Formula \eqref{PDO}.

\begin{example}
\label{example:pdo}
Here are few other examples of smoothing abstract pdos:

\textit{i)} The operator $P_0$ defined in \eqref{absD} is smoothing thanks to property \eqref{eq:P0 smoothing}. For a generalisation see \eqref{eq:PA smoothing}.

\textit{ii)} If $f$ is a Schwartz function, then $f(\abs{D})$ is a smoothing pdo: We claim that for any $N\in \N$, $\abs{D}^N\,f(\abs{D})\in \cap_{n\in\N}\, \Dom\,\del^n$, i.e. $f(\abs{D})\in \OP^{-N}$ for all $N \in\N$. Indeed, the function $x\in \RR^+ \to x^N \,f(x)$ is bounded, whereas $\delta^n(f(\abs{D})) = 0$ for $n \geq 1$.

\textit{iii)} Let $P$ be a positive invertible operator in $\OP^r$  with $r>0$ and such that $P^{-1} \awD^r \in \B(\H)$. Then, the operator $A=e^{-t\,P}$ is a smoothing pdo for all $t>0$.

To prove this we use Formula \eqref{heat as integral} to write $$A=\tfrac{i}{2\pi}\int_\C e^{-t\la}\,(P-\la)^{-1}\,d\la.$$
We first observe that $\Vert \abs{D}^N A\Vert\leq\Vert \abs{D}^N P^{-N/r}\Vert \,\Vert P^{N/r} A\Vert<\infty$ for any $N\in \N$ since $ \abs{D}^N P^{-N/r} \in \OP^0$ by Proposition \ref{prop:OP} \textit{iii)}, \textit{viii)} and the function $x\in \RR^+\to x^s\,e^{-tx}$ is bounded for any $s\geq 0,\,t>0$. \\
Secondly, we need to show that $A$ is in $\Dom \delta^n$ for any $n \in \N$. To this end, we observe that $$\delta(A)=\tfrac{-i}{2\pi}\int_\C e^{-t\la}\,(P-\la)^{-1}\delta(P)(P-\la)^{-1}\,d\la$$ converges in norm because $$\Vert\del(P)(P-\la)^{-1}\Vert\leq \Vert\del(P)P^{-1} \Vert \cdot \Vert P(P-\la)^{-1} \Vert$$ and moreover $\Vert P(P-\la)^{-1} \Vert \leq \abs{\la}\Im (\la)^{-1}$ for $\Re(\la)\geq 0$ (or $\leq 1$ if $\Re(\la)<0$) cf. \eqref{eq:estimate resolvent for positive}, so that we can choose the curve $\C$ such that $\Vert P(P-\la)^{-1} \Vert$ is uniformly bounded for $\la\in \C$. Thus the norm of $\delta(A)$ is estimated in the same way as the norm of $A$. The argument extends to any $\delta^n(A)$ showing that $e^{-t P}\in \OP^0$ for any $t>0$.

To see that $A$ is smoothing amounts to showing that the operator $\abs{D}^N A$ is in $\OP^0$ for any $N\in \N$ but we have already checked that it is bounded.

 It is in the domain of $\delta$ because $$\del(\abs{D}^N A)=\abs{D}^N \del(A)=[\abs{D}^N P^{-N/r}] \,[P^{N/r} \del(A)]$$ and both terms are bounded: In particular for any $s\geq 0$
\begin{align*}
\norm{P^s \del(A)} \leq \Vert P^s \abs{D}P^{-s-1/r}\Vert \Vert P^{s+1/r}A\Vert +\Vert P^sAP^{1/r}\Vert \Vert P^{-1/r}\abs{D}\Vert  <\infty.
\end{align*}
Similar arguments show that $\abs{D}^N A\in \underset{n\in\N}{\mathlarger{\mathlarger{\cap}}}\Dom \del^n$.
\hfill$\blacksquare$
\end{example}

For $T,T'\in\Psi(\A)$, we define the equivalence $T \sim T' \,\text{ if }\, T-T' \,\text{ is smoothing}.\index{_ASim@$\sim$}$

\begin{lemma}
The set $\PDO$ is a $\Z$-graded involutive algebra with 
\begin{align}
\label{PDO_grading1}
\PDOk{k} \vc \PDO \cap\, \OP^k, \, \text{ for } k \in \Z. \index{_apDOk@$\PDOk{k}$} \index{pseudodifferential operator (pdo)!of order $k$}
\end{align}
Moreover, $\delta(\Psi^k(\A))\subset \Psi^k(\A)$ and $\Pc(\A) \subset \Psi(\A) \subset \cup_{k\in \Z}\,\,\OP^k$.
\end{lemma}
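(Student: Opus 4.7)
My plan is to verify the five claims in order of increasing difficulty, relying throughout on Theorem \ref{thm:expansion of sigma2z} as the main analytical tool for commuting $\abs{D}^{-p}$ past elements of $\Pc(\A)$.

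The inclusion $\Pc(\A) \subset \Psi(\A)$ is immediate from Definition \ref{def:PDO} by taking $P = T$, $p = 0$, $R = 0$. For $\Psi(\A) \subset \bigcup_{k \in \Z} \OP^k$ and the order assignment, I first place the generators of $\Pc(\A)$ in the $\OP$ scale: regularity \eqref{regularityOP} gives $\A \subset \OP^0$, clearly $\DD,\abs{\DD} \in \OP^1$, and when $J$ is present, $J\abs{D} = \abs{D}J$ (derived from $J\DD = \epsilon\DD J$, which also implies $JP_0 = P_0 J$) together with the regularity of $\A$ yields $J\A J^{-1} \subset \OP^0$ through $[\abs{D}, J a J^{-1}] = J[\abs{D},a]J^{-1}$. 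Proposition \ref{prop:OP}~\textit{iii)} then endows every $P \in \Pc(\A)$ with a finite order $d$, so $P\abs{D}^{-p} \in \OP^{d-p}$; choosing $N \geq p-d$ in the decomposition $T = P\abs{D}^{-p} + R$ forces $T \in \OP^{d-p}$.

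The technical core is multiplicative closure $\Psi^k(\A)\Psi^l(\A) \subset \Psi^{k+l}(\A)$. Theorem \ref{thm:expansion of sigma2z} applied with $z = -p/2$ furnishes, for any $S \in \OP^r$ and $M \in \N$,
\begin{align*}
\abs{D}^{-p}\,S = \sum_{\ell=0}^M \tbinom{-p/2}{\ell}\, \nabla^\ell(S)\, \abs{D}^{-p-2\ell} + \widetilde R_M, \qquad \widetilde R_M \in \OP^{r-p-(M+1)}.
\end{align*}
A side induction on $\ell$ shows that $\nabla^\ell(P) \in \Pc(\A) + \text{smoothing}$ whenever $P \in \Pc(\A)$: writing $D^2 = \DD^2 + P_0$ one has $\nabla = [\DD^2,\cdot] + [P_0,\cdot]$; the first bracket preserves $\Pc(\A)$ (an algebra containing $\DD^2$), the second lies in the smoothing ideal because $P_0$ is smoothing by \eqref{eq:P0 smoothing} and $\abs{D}P_0 = P_0$, and $\nabla$ preserves the smoothing ideal since $\OP^2 \cdot \OP^{-N} \subset \OP^{2-N}$. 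Given $T = P\abs{D}^{-p} + R \in \Psi^k(\A)$ and $T' = P'\abs{D}^{-q} + R' \in \Psi^l(\A)$, applying the expansion with $S = P'$ and collecting the cross-terms $P\widetilde R_M\abs{D}^{-q}$, $P\abs{D}^{-p}R'$, $RP'\abs{D}^{-q}$, $RR'$, I can --- by choosing $M$ and the $N$'s in the decompositions of $R, R'$ large enough --- force every error into $\OP^{-N'}$ for any prescribed $N'$. The surviving main sum $\sum_\ell \tbinom{-p/2}{\ell} P\nabla^\ell(P')\abs{D}^{-(p+q+2\ell)}$ is consolidated into a single form $Q\abs{D}^{-p'}$ modulo smoothing by sliding factors $\abs{D}^{2(M-\ell)}$ into the coefficients using $\abs{D}^{2k} = \abs{\DD}^{2k} + \text{smoothing}$. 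Hence $TT' \in \Psi(\A)$, and Proposition \ref{prop:OP}~\textit{iii)} places it in $\OP^{k+l}$.

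For the involutive structure, $\Pc(\A)$ is $*$-closed since $\DD,\abs{\DD}$ are self-adjoint, $\A$ is involutive, and $(Ja^*J^{-1})^* = JaJ^{-1}$; then $(P\abs{D}^{-p})^* = \abs{D}^{-p}P^*$ is brought back into the form of Definition \ref{def:PDO} by the same expansion. For $\delta$-invariance of $\Psi^k(\A)$, since $[\abs{D},\abs{D}^{-p}] = 0$ one has $\delta(T) = \delta(P)\abs{D}^{-p} + \delta(R)$ with $\delta(R) \in \OP^{-N}$ by Proposition \ref{prop:OP}~\textit{v)}. From $\sigma_1(P) - P = \delta(P)\abs{D}^{-1}$ and Theorem \ref{thm:expansion of sigma2z} at $z = 1/2$,
\begin{align*}
\delta(P) = \sum_{\ell=1}^{M} \tbinom{1/2}{\ell}\, \nabla^\ell(P)\, \abs{D}^{1-2\ell} + \widehat R_M, \qquad \widehat R_M \in \OP^{\ord(P)-M},
\end{align*}
where each summand lies in $\Psi(\A)$ (since $\nabla^\ell(P) \in \Pc(\A)$ modulo smoothing and $1-2\ell \leq -1$ for $\ell \geq 1$) and $\widehat R_M$ is smoothing for $M$ large. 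By the algebra closure already proved, $\delta(T) \in \Psi(\A)$; combined with $\delta(T) \in \OP^k$, this yields $\delta(T) \in \Psi^k(\A)$. The main obstacle I anticipate is the bookkeeping in the product step: concurrently tracking the four families of cross-terms, the remainder $\widetilde R_M$, and reassembling the main-term sum into a single $Q\abs{D}^{-p'}$ form. The side induction $\nabla^\ell(P) \in \Pc(\A) + \text{smoothing}$ is conceptually simple but rests crucially on $P_0$ being both smoothing and commuting with $\abs{D}$.
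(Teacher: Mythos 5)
Your proposal is correct and follows essentially the same route as the paper: the paper also reduces everything to stability under products and handles it by applying the expansion \eqref{OPexpansion} (Theorem \ref{thm:expansion of sigma2z}) to commute $\abs{D}^{-p}$ past $P'$, consolidating the main sum into a single $P''\abs{D}^{-p''}$ with $P''\in\Pc(\A)$ by inserting factors $\abs{D}^{2M-2n}$, and choosing $N,N',M$ large enough to push the cross-terms and remainder into $\OP^{-N''}$. The remaining claims (inclusions, involution, $\delta$-invariance) are treated by the paper as immediate; your only deviation is the slightly roundabout treatment of $\delta(P)$ via $\sigma_1$, where the direct observation $\delta(P)\in\Pc(\A)+\text{smoothing}$ (since $\abs{D}=\abs{\DD}+P_0$) suffices.
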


\begin{proof}
Only the stability under the product deserves a bit of attention. Let us take the pdos $T\in \Psi^k(\A)$ and $T' \in \Psi^{k'}(\A)$ with $k,k'\in \Z$. For any $N,N'\in \N$ we can write $T=P\abs{D}^{-p} +R$ and $T'=P'\abs{D}^{-p'}+R'$, where $p,p'\in \N$, $P,P'\in \Pc(\A)$ and $P\abs{D}^{-p}\in \OP^k,\, P'\abs{D}^{-p'}\in \OP^{k'},\, R\in \OP^{-N}, R'\in \OP^{-N'}$.\\ We claim that, for any $N''\in \N$, there exist $p''\in \N,P''\in \Pc(\A)$ and $R''\in \OP^{-N''}$ such that we have $TT'=P'' \abs{D}^{-p''} +R''$. \\Using \eqref{OPexpansion} we get, for any $M\in \N$,
\begin{align*}
TT'\!&=P\sum_{n=0}^M \! \!\tbinom{-p/2}{n} \nabla^n(P')\abs{D}^{-p-p'-2n} \!+PR_M\!\abs{D}^{-p-p'}\!\!+RP'\!\abs{D}^{-p'}\!+P\!\abs{D}^{-p}\!R' \!+RR'\\
&=\big[P\sum_{n=0}^M \! \tbinom{-p/2}{n} \nabla^n(P')\abs{D}^{2M-2n}\big] \abs{D}^{-p-p'-2M}+R'' = P'' \abs{D}^{-p-p'-2M}+R''.
\end{align*}
We have $P'' \in \Pc(\A)$ and we set $p''=p+p'+2M\in \N$. Let us focus on the remainder $$R''=PR_M\abs{D}^{-p-p'}\!+RP'\abs{D}^{-p'}\!+P\abs{D}^{-p}R' \!+RR',$$ with $R_M\in \OP^{p'-M-1}$. 
Since $N,N',M$ are arbitrary, we can choose $N=N''+\abs{k'}$, $N'=N''+\abs{k}$ and $M=\max\{k+N''-1,0\}$. Thus $PR_M\abs{D}^{-p'} \in \OP^{-N''}$, because we have $k-M-1\leq-N''$. \\
Similarly, the orders of the other terms in $R''$ are successively $-(N'' +\abs{k'}-k')$, $-(N'' +\abs{k}-k), -(2N'' +\abs{k}+\abs{k'})$, thus less than $-N''$ and hence $R''\in \OP^{-N''}$ proving the claim. \\
It is immediate that $TT'\in \Psi^{k+k'}(\A)$.
\hfill $\Box$
\end{proof}

Under the assumption of regularity of $\ahd$, the algebra $\PDO$ can be seen as the set of all operators with asymptotics of the form $\sum{ }_{n\in\N}\,\, P_n\,\vert D\vert^{d-n}$ with $P_n \in \Pc(\A)$, i.e. \\
\centerline{if $T\in \PDO$ we have  $T - \sum{ }_{n=0}^N\,\, P_n\,\vert D\vert^{d-n} \in \OP^{-N}$ for all $N \in \N$.}

If $\Pc'(\A)$\index{_zPz1p@$\Pc'(\A)$} is the polynomial algebra generated by $\A,\,\DD$ and $J \A J^{-1}$, when $J$ exists, then $\Pc'(\A) \subset \Pc(\A)$ and one can construct a subalgebra $\Psi'(\A)$\index{_apDOp@$\Psi'(\A)$} of $\Psi(\A)$ by taking $P\in \Pc'(\A)$ in \eqref{PDO}. In particular, $\vert\DD \vert^k$ is not necessarily in $\Pc'(\A)$ for $k$ odd. On the other hand, since $\abs{\DD}=\abs{D}-P_0 \in \Pc(\A)$, we could have defined $\Pc(\A)$ with $\abs{D}$ in place of $\abD$ and we would arrive at the same definition of $\PDO$.

Let us remark that a key point about pdos on a manifold is that they are integral operators (see Appendix \ref{classical tools}). In the abstract framework this notion is missing even if it is reminiscent in a few specific cases, like the noncommutative torus \cite{ConnesTretkoff}.

\begin{example}
\label{example:inverse}
To show that the inverse of a classical pseudodifferential operator on a manifold is also a pdo is not immediate. Let us have a look at an abstract example: Let $D,\,X \in \Psi^1(\A)$, so $D^2+X \in \Psi(\A)$ and assume that $D$ and $D^2+X$ are invertible. 
We claim that $(D^2+X)^{-1} \in \Psi(\A)$: For any $N \in \N$, we have the expansion
\begin{align}
\label{eq:expansion of D2+tXV}
(D^2+X)^{-1}=\sum_{k=0}^N (-\vert D\vert^{-2}X)^k D^{-2}+ (-\vert D\vert^{-2}X)^{N+1}(D^2+X)^{-1}.
\end{align}
The first term is in $\Psi(\A)$, because $\vert D\vert^{-2}, X \in \Psi(\A)$ and the remainder is in $\OP^{-N-3}\subset \OP^{-N}$, provided $(D^2+X)^{-1} \in \OP^{-2}$. But since $D^2+X\in \OP^2$ and 
$(D^2+X)^{-1} D^2=(\bbbone+\awD^{-2}X)^{-1}$ is bounded, because $\awD^{-2}X\in \OP^{-1}$ is compact by Proposition \ref{prop:OP} $ix)$, the part $viii)$ of the latter yields $(D^2+X)^{-1} \in \OP^{-2}$.
\hfill$\blacksquare$
\end{example}

It is desirable to have an extension of the pseudodifferential calculus, which takes into account the complex powers of $\abs{D}$ (cf. \cite[p. 239]{ConnesMoscovici}):
\begin{align}
\label{def:psic}
\Psi^\CC(\A) \vc \{ T\abs{D}^z\,\vert\, T \in \Psi(\A),\,z\in \CC\}.\index{_apDOz@$\Psi^\CC(\A)$}
\end{align}
By construction, $T \vert D\vert^z \in \OP^{k+\Re(z)}$ when $T\in \Psi^k(\A)$. Since Formula \eqref{OPexpansion} is true for any $z \in \CC$, the set $\Psi^\CC(\A)$ is in fact an $\RR$-graded algebra.

The link between the algebraic and geometric definitions of pseudodifferential operators is provided by the following fact:

\begin{proposition}
\label{prop:PDO}
For a commutative spectral triple $\big( \Coo(M), L^2(M,\SS\ox E),\Dslash_E \big)$, with $\Dslash_E = -i\gamma^{\mu} \nabla_{\mu}^{\SS \ox E}$ \index{_zD1e@$\Dslash_E$} (cf. Example \ref{ex:commutative}), we have a natural inclusion
\begin{align*}
\Psi(\Coo(M)) \subset \Psi^{\Z}(M,\SS\ox E),
\end{align*}
where $\Psi^{\Z}(M,\SS\ox E)$ is the space of classical pseudodifferential operators \index{pseudodifferential operator (pdo)!classical} of integer order defined over a vector bundle $\SS \ox E$.
\end{proposition}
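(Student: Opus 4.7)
My plan is to decompose the inclusion into two steps: first, verify that the generating operators of $\Pc(\Coo(M))$ lie in the classical calculus $\Psi^{\Z}(M, \SS \ox E)$; second, exploit the asymptotic expansion built into the definition of $\Psi(\A)$ to reduce a general element to a classical pdo modulo a genuine smoothing operator.

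For the generators, multiplication by $a \in \Coo(M)$ is a differential operator of order $0$ on $\SS \ox E$, and $\Dslash_E$ is differential of order $1$; both are classical pdos of integer order. The non-trivial case is $|D| = (D^2)^{1/2}$. Since $D^2 = \Dslash_E^2 + P_0$ is a positive, invertible, elliptic classical pdo of order $2$ (the finite-rank smoothing perturbation $P_0$ does not alter its principal symbol), Seeley's theorem on complex powers yields $(D^2)^{z/2} \in \Psi^{\Re(z)}(M, \SS \ox E)$ for every $z \in \CC$, hence $|D| \in \Psi^1(M, \SS \ox E)$ and $|D|^{-p} \in \Psi^{-p}(M, \SS \ox E)$ for any $p \in \N$. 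Since $\Psi^{\Z}(M, \SS \ox E)$ is a graded algebra under composition, every $P \in \Pc(\Coo(M))$ is a classical pdo of some integer order, and so is each $P|D|^{-p}$.

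For a general $T \in \Psi(\Coo(M))$ of abstract order $d \in \Z$, iterating the defining decomposition produces an asymptotic expansion $T \sim \sum_{n \geq 0} P_n |D|^{d-n}$, meaning $T - \sum_{n=0}^{N} P_n|D|^{d-n} \in \OP^{d-N-1}$ for every $N \in \N$, with $P_n \in \Pc(\Coo(M))$. Each term $P_n |D|^{d-n}$ is a classical pdo of integer order $d-n$ by the previous step. Classical Borel summation of polyhomogeneous symbols on $M$ produces $T' \in \Psi^d(M, \SS \ox E)$ whose local symbol realises this asymptotic series. Using that $\Psi^k(M, \SS \ox E) \subset \OP^k$ (since the Sobolev spaces $H^s(M, \SS \ox E)$ coincide with $\H^s = \Dom |D|^s$ by ellipticity of $D$), one obtains $T - T' \in \OP^{d-N-1}$ for every $N$, i.e.\ $T - T' \in \bigcap_{N \in \N} \OP^{-N}$. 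Combined with the Sobolev embedding $\H^\infty \subset \Coo(M, \SS \ox E)$ and the stability of $\OP^{-N}$ under adjoints, this forces the Schwartz kernel of $T - T'$ to be smooth, so $T - T' \in \Psi^{-\infty}(M, \SS \ox E)$. Adding both contributions gives $T \in \Psi^{\Z}(M, \SS \ox E)$.

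The main technical obstacle is Seeley's theorem, underpinning the classical pseudodifferential nature of $|D|$; without it the whole chain of inclusions collapses. A secondary subtlety is the symbol-level Borel summation, which requires passing from the abstract asymptotic expansion in terms of the powers $|D|^{d-n}$ to a genuine polyhomogeneous symbol on $T^*M$ via local coordinates and a parametrix for $|D|$.
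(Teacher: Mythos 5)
Your proof is correct and follows essentially the same route as the paper, whose entire argument is the observation that $a$, $\Dslash_E$ and $\abs{\Dslash_E}$ are classical pdos of integer order (the latter via Seeley's complex powers), so that $\Pc(\Coo(M))$ and hence each $P\vert D\vert^{-p}$ lands in $\Psi^{\Z}(M,\SS\ox E)$. The additional steps you spell out --- Borel summation of the polyhomogeneous expansion $\sum_n P_n\vert D\vert^{d-n}$ and the identification of the $\bigcap_N \OP^{-N}$ remainder with a smoothing operator via the Sobolev scale --- are exactly the details the paper leaves implicit, not a different method.
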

This is a consequence of the observation that $\Dslash_E, \abs{\Dslash_E} \in \Psi^1(M,\SS\ox E)$ and also $a \in \Psi^0(M,\SS\ox E)$ for any $a \in \Coo(M)$. \\
For the full story on classical pseudodifferential operators see the references in Appendix \ref{sec:Def of pdo} and for a noncommutative vision see \cite{Elements} and \cite[Section 5.1]{VarillyDirac}.

\section{Dimension Spectrum}
\label{sec:dimsp}

In Section \ref{sec:axioms} we discussed the $p$-summability of a spectral triple, which encodes the dimension of the underlying manifold when $\ahd$ is commutative. This notion does not, however, capture the whole richness of noncommutative geometry. A more adequate description turns out to be provided by a, possibly infinite, discrete subset of complex numbers.

\begin{definition}
\label{def:dimspec}
A regular spectral triple $\ahd$ of dimension $p$ has a \emph{dimension spectrum} \index{dimension spectrum} $\Sd$\index{_zSzd@$\Sd$}  if $\Sd \subset \CC$ is discrete and for any $T \in \PDOk{0}$, the function
\begin{align}
\label{zetaB}
\zPD(s) \vc \Tr\, T \vert D\vert^{-s}  \index{_a3zetaTD@$\zPD(s)$}, \quad \text{ defined for } \, \Re(s) > p,
\end{align}
extends meromorphically to $\CC$ with poles located in $\Sd$. \footnote{One could in principle also allow for essential singularities of $\zPD$, as long as they are isolated.}

We say that the dimension spectrum is of \emph{order $k \in \N^*$} \index{dimension spectrum!of order $k$} if all of the poles of functions $\zPD$ are of the order at most $k$ and \emph{simple} \index{dimension spectrum!simple} when $k = 1$.
\end{definition}

Note that all functions $\zTD$ are well defined for $\Re(s) > p$, since for such $s$ the operator $\abs{D}^{-s}$ is trace-class and, as $T \in \PDOk{0} \subset \B(\H)$, so is the operator $T \abs{D}^{-s}$. By a standard abuse of notation we shall denote the (maximal) meromorphic extension of $\zeta_{T,D}$ with the same symbol. 

Observe also that if $\ahd$ has a dimension spectrum, then $\zTD$ is actually meromorphic on $\CC$ for any $T \in \PDOC$. Indeed, if $T \in \PDOC$ then there exist ${S\in\Psi^0(\A)}$ and $z\in \CC$ such that $T=S\abs{D}^z$. Thus, for $\Re(s)> p+\Re(z)$, the function $\zeta_{T,D}$ is defined by $$\zeta_{T,D}(s)\vc \Tr T\abs{D}^{-s}=\Tr S \abs{D}^{-s+z}=\zeta_{S,D}(s-z).$$
By the uniqueness of the meromorphic extension $\zeta_{T,D}$ is meromorphic on $\CC$.

If, however, $\ahd$ is not finitely summable, then the zeta function $\zeta_{\bbbone,D}$ is nowhere defined and the notion of a dimension spectrum does not make sense.

Let us stress that the algebra does play an important role in Definition \ref{def:dimspec}. Generally, given the meromorphic extension of the function $\zeta_{\bbbone,D}$, it is not clear how to get one for $\zTD$, even when $T\in \A$. Although in the case of a commutative spectral triple the poles of $\zeta_{T,D}$ do coincide with the poles of $\zeta_{\bbbone,D}$ for $T\in \A$, it is no longer to be expected for general noncommutative geometries. 

\begin{example}\label{ex:dim_sp}
Let us consider the spectral triple $(\A_q,\H_q,\DD^S_q)$ of the standard Podle\'s sphere (cf. Appendix \ref{sec:Podles}). The function $\zeta_{\bbbone,D^S_q} = 4 \tfrac{(1-q^2)^s}{\abs{w}^s (1-q^s)^{2}}$ (where $D^S_q=\DD^S_q$ since the kernel is trivial) is regular at $s = -2$. On the other hand, using the explicit formulae for the representation (\ref{rep}--\ref{Defpi}), one can check that, for instance,
\begin{align*}
\Rez{s=-2} \, (s+2) \zeta_{A,D^S_q}(s) = \tfrac{2 q (1+q^2) \abs{w}^2}{(\log q)^2}.
\tag*{$\blacksquare$}
\end{align*}
\end{example}

The definition presented above is the one of \cite{Tadpole} (see also \cite{IochumNotes}) and it differs from the original one presented in \cite{ConnesSPV,ConnesMoscovici}. There, $\Sd$ was defined \cite[Definition II.1]{ConnesMoscovici} as the set of poles of functions $\zPD$, but with $T \in \B_0$ -- the algebra generated by $\delta'^n(a)$ with $a \in \A$ and $n \in \N$. This was tailored to prove the local index theorem in noncommutative geometry \cite{ConnesMoscovici}. On the other hand, in the context of spectral action \cite{ConnesMarcolli}, the operator $T$ from the definition was taken from a bigger algebra $\B_1$ generated by $\delta'^n(a)$ and $\delta'^n([\DD,a])$ \cite[Definition 1.133]{ConnesMarcolli}.

The difference between the definition of the dimension spectrum adopted here and the one of \cite{ConnesMarcolli} (see also \cite{CPRS,RennieSummable}) is that $\vert D\vert^{-1} \in \PDO \cap \OP^0$, but $\vert D\vert^{-1} \notin \B_1$. \\This fact led to eventual variations concerning the dimension spectrum of a commutative spectral triple (compare for instance \cite[Example 13.8]{ConnesGarden} with \cite[p. 22]{ConnesCIME}). 

All these definitions of $\Sd$ satisfy $\Sd$(disjoint sum of spaces) = $\cup$ $\Sd$(spaces).

There is a natural way to define the tensor product of two even spectral triples (see \cite{ConnesReality,FarnsworthProduct}) using $$\A\vc \A_1 \otimes \A_2,\,\,\H\vc \H_1 \otimes \H_2 \,\text{ and }\, \DD\vc \DD_1\otimes \bbbone + \gamma_1\otimes \DD_2.$$ It would be tempting to conclude that $\Sd\ahd = \Sd_1 + \Sd_2$, but it is not obvious how to obtain a meromorphic extension of $\zeta_{T_1 \ox T_2,D}$ given $\zeta_{T_1,D_1}$ and $\zeta_{T_2,D_2}$. 

With Definition \ref{def:dimspec} the following result holds \cite[Proposition A.2]{Tadpole}:
\begin{example}
\label{ex:dimsp_manifold}
Let $\ahd$ be the commutative spectral triple associated with a $d$-dimensional Riemannian manifold $M$ and $\DD$ is a first order differential operator, then
$\Sd\ahd = d - \N$ and it is simple.
\hfill$\blacksquare$
\end{example}

An interesting result of Jean-Marie Lescure \cite{Lescure} shows that for spectral triples describing manifolds with conical singularities there appear poles of second order in the dimension spectrum. On the other hand, the dimension spectra of fractal spaces studied via noncommutative geometry \cite{Christensen2,Christensen1,Cipriani,Guido1,Guido2,Guido3} encode the Hausdorff and spectral dimensions of the fractal along with its self-similarity structure. The latter is signalled by the appearance of complex numbers outside the real axis in $\Sd$ \cite{Kellendonk}.\label{p:fractals_dimsp}

We see that the dimension spectrum carries much more information about the underlying geometry than a single number, for instance $p$-summability. However, there is no systematic procedure to compute the dimension spectrum for general spectral triples. Beyond the almost commutative geometry it is not even clear under what conditions $\ahd$ has a dimension spectrum. In concrete examples, one has to dutifully prove the existence of the meromorphic extensions of the whole family of spectral functions and identify the poles. This was done only for few specific spectral triples like the noncommutative torus \cite{TorusSA}, quantum group  $SU_q(2)$ \cite{ConnesSU2,DiracSUq2} and quantum Podle\'s spheres \cite{EquatorialDimSp,AllPodles,PodlesSA,PalSundar}. \label{dim_sp examples} See also Problem \ref{prob:dim_sp} in Chapter~\ref{chap:open}.

Let us note that in the definition of dimension spectrum, both the original one of \cite{ConnesMoscovici} and the more recent one of \cite{Tadpole}, there is an (usually unspoken) assumption about the regularity of the spectral triple. Indeed, if one does not have control on the order of pseudodifferential operators, the definition of the dimension spectrum might turn out to be inconsistent with the $p$-summability (c.f. \cite[Section 3.2]{PodlesSA}). Clearly, if a regular $p$-summable spectral triple has a dimension spectrum $\Sd$, then $p \in \Sd$. However, the $p$-summability together with regularity is not sufficient to conclude that there is a pole of $\zeta_{\bbbone,D}$ at $p \in \CC$ --- there might be an essential singularity, a branch cut or the boundary of analyticity, in which case $\ahd$ does not have a dimension spectrum at all. All these pathologies certainly deserve further studies.

\section{Noncommutative Integral}\label{sec:ncint}

In a seminal paper \cite{Wodzicki} Mariusz Wodzicki has shown that the algebra of classical pseudodifferential operators of integer order $\Psi^{\Z}(M,E)$ admits a single (up to normalisation) trace when $M$ is connected and $\dim M>1$. It has been called the Wodzicki residue, since it can be expressed as a residue of a certain spectral function. Alain Connes established in \cite{ConnesAction} a link between the Wodzicki residue of a compact pseudodifferential operator and its Dixmier trace (see Definition \ref{def:Dixmier}), which is an exotic trace on a certain ideal in the algebra $\B(\H)$. \\
In this section we recollect some basic facts about the Wodzicki residue and its use for defining an integral suitable for noncommutative spaces. 
We start with the following definition:
\begin{definition}
\label{def:ncint}
Let $\ahd$ be a regular $p$-dimensional spectral triple with a dimension spectrum. For any $T \in \Psi^\CC(\A)$ and any $k \in \Z$ define  \footnote{Using the notation of \cite{ConnesMoscovici} we have $\ncintd{k} = 2^{k-1} \tau_{k-1}$ for $k \geq 1$.}
\begin{align}
\label{ncint}
\ncintd{k} T \vc \Rez{s=0} \, s^{k-1} \zPD(s),  \index{noncommutative integral} \index{_Ancint1@$\displaystyle\ncint^{[k]} $} &&
\ncint T \vc \ncintd{1} T= \Rez{s=0} \, \zPD(s) \index{_Ancint@$\displaystyle\ncint $}.
\end{align}
\end{definition}

If the dimension spectrum of $\ahd$ is of order $d$, then for $s$ in an open neighbourhood of any $z \in \CC$ we have the Laurent expansion
\begin{align*}
\zeta_{T,D}(s) = \sum_{k=-d \,}^{\infty} \,\ncintd{-k} T \abs{D}^{-z} (s-z)^{k}.
\end{align*}

The adopted notation is useful in the following result (cf. \cite[Proposition II.1]{ConnesMoscovici}):

\begin{theorem}
\label{thm:ncint_trace}
Let $\ahd$ be a regular $p$-dimensional spectral triple with a dimension spectrum. Then, for any $T_1, T_2 \in \PDOC$ and any $k \in \Z$ we have
\begin{align*}
\ncintd{k} T_1 T_2 = \ncintd{k} T_2 T_1 + \sum_{n=1}^{N} \tfrac{1}{n!} \sum_{j=0}^n \tstirling{n}{j} \tfrac{(-1)^j}{2^j} \ncintd{k+j} T_2 \nabla^n(T_1) \awD^{-2n},
\end{align*}
with $N= p + \ord T_1 + \ord T_2$ and $\tstirling{n}{j}$ the unsigned Stirling numbers of the first kind\index{_Azbstirling@$\stirling{n}{j}$}.\footnote{For any $s \in \CC$, $n \in \N$, $\tbinom{s}{n} = \sum_{j=0}^{n} \tstirling{n}{j} \tfrac{s^j}{n!}$ with the convention $\tstirling{n}{0} = \delta_{n,0}$. See \cite[Sec. 24.1.3]{Abram}.} 
\\
In particular, if the dimension spectrum has order $d$, then $\ncintd{d}$ is a trace on $\PDOC$.
\end{theorem}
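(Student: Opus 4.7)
The plan is to exploit cyclicity of the operator trace combined with the operator expansion of Theorem \ref{thm:expansion of sigma2z} to shift $\abs{D}^{-s}$ past $T_1$, converting the non-commutativity of $T_1 T_2$ and $T_2 T_1$ into a Laurent-computable correction.

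\medskip

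\noindent\textbf{Step 1 (cyclicity in the region of convergence).} Fix $s \in \CC$ with $\Re(s)$ large enough that $T_1 T_2 \abs{D}^{-s}$, $T_2 T_1 \abs{D}^{-s}$ and $T_2\abs{D}^{-s} T_1$ are all trace class (this is possible since $T_1,T_2 \in \PDOC$ have finite order, and $\abs{D}^{-s}$ becomes arbitrarily smoothing). In that half-plane the trace is cyclic, so
\begin{align*}
\zeta_{T_1 T_2,D}(s) \;=\; \Tr\bigl(T_1 T_2 \abs{D}^{-s}\bigr) \;=\; \Tr\bigl(T_2 \abs{D}^{-s} T_1\bigr).
\end{align*}

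\noindent\textbf{Step 2 (apply the $\sigma_{2z}$ expansion).} Setting $z = -s/2$ in Theorem \ref{thm:expansion of sigma2z} applied to $T_1$, and multiplying on the right by $\abs{D}^{-s}$, I get, for any $N \in \N$,
\begin{align*}
\abs{D}^{-s} T_1 \;=\; \sum_{n=0}^{N} \tbinom{-s/2}{n}\, \nabla^n(T_1)\, \abs{D}^{-2n-s} \;+\; R_N(-s/2)\,\abs{D}^{-s},
\end{align*}
with $R_N(-s/2) \in \OP^{\ord T_1 - (N+1)}$. Inserting this into the cyclic identity and using linearity of $\Tr$,
\begin{align*}
\zeta_{T_1 T_2,D}(s) \;=\; \sum_{n=0}^{N} \tbinom{-s/2}{n}\, \zeta_{T_2 \nabla^n(T_1)\abs{D}^{-2n},\,D}(s) \;+\; \Tr\!\bigl(T_2 R_N(-s/2)\abs{D}^{-s}\bigr).
\end{align*}
The $n=0$ term is exactly $\zeta_{T_2 T_1,D}(s)$; each $\zeta_{T_2 \nabla^n(T_1)\abs{D}^{-2n},D}$ is meromorphic on $\CC$ by the dimension spectrum hypothesis, since $T_2 \nabla^n(T_1)\abs{D}^{-2n} \in \PDOC$.

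\medskip

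\noindent\textbf{Step 3 (controlling the remainder).} Choosing $N = p + \ord T_1 + \ord T_2$, one has $T_2 R_N(-s/2) \abs{D}^{-s} \in \op^{\,-p-1-\Re(s)}$. For $\Re(s)$ in a neighbourhood of $0$ (and in fact for $\Re(s) > -1$), this operator is trace class, and a standard argument (interchanging $\Tr$ with the contour integral defining $R_N(-s/2)$ via Theorem \ref{thm:integral of OP}) shows the remainder extends to a holomorphic function of $s$ on that neighbourhood. Consequently $s^{k-1} \cdot \Tr(T_2 R_N(-s/2)\abs{D}^{-s})$ contributes $0$ to the residue at $s=0$.

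\medskip

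\noindent\textbf{Step 4 (expand the binomial, take residue).} Using $\binom{-s/2}{n} = \sum_{j=0}^{n} \tstirling{n}{j} \frac{(-s/2)^j}{n!} = \sum_{j=0}^{n} \tstirling{n}{j} \frac{(-1)^j}{2^j \, n!}\, s^j$ and subtracting $\zeta_{T_2 T_1,D}(s)$, I multiply by $s^{k-1}$ and apply $\Rez{s=0}$. Since $s^{k+j-1}\, \zeta_{T_2 \nabla^n(T_1)\abs{D}^{-2n},D}(s)$ has residue $\ncintd{k+j}\! T_2 \nabla^n(T_1)\abs{D}^{-2n}$ by definition \eqref{ncint}, I obtain
\begin{align*}
\ncintd{k} T_1 T_2 - \ncintd{k} T_2 T_1 \;=\; \sum_{n=1}^{N} \tfrac{1}{n!} \sum_{j=0}^{n} \tstirling{n}{j} \tfrac{(-1)^j}{2^j}\, \ncintd{k+j} T_2 \nabla^n(T_1) \awD^{-2n},
\end{align*}
as announced. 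For the last assertion, if the dimension spectrum has order $d$, then $\ncintd{k+j}$ vanishes identically whenever $k+j > d$; taking $k=d$ kills every term with $j\geq 1$, while $\tstirling{n}{0}=0$ for $n\geq 1$ kills the $j=0$ contribution, proving tracialiy of $\ncintd{d}$.

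\medskip

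\noindent\textbf{Main obstacle.} The delicate point is Step 3: justifying that for $N$ as prescribed, the remainder $\Tr(T_2 R_N(-s/2)\abs{D}^{-s})$ is actually holomorphic at $s=0$, not merely trace class there. One must verify that the operator-valued integral representation of $R_N(-s/2)$ (inherited from the proof of Theorem \ref{thm:expansion of sigma2z}) depends holomorphically on $s$ in a neighbourhood of $0$ and commutes with $\Tr$; both points follow from the uniform bounds supplied by Theorem \ref{thm:integral of OP}, but require care since several contour integrals and a trace are being interchanged.
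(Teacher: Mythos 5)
Your proposal follows essentially the same route as the paper's proof: cyclicity of the trace in a half-plane of convergence, the expansion of Theorem \ref{thm:expansion of sigma2z} with $z=-s/2$ and $N=p+\ord T_1+\ord T_2$ so that the remainder's trace is holomorphic for $\Re(s)>-1$, then the Stirling-number expansion of $\tbinom{-s/2}{n}$ under the residue at $s=0$, and the same vanishing argument ($\ncintd{d+j}=0$ for $j\geq 1$ and $\tstirling{n}{0}=0$ for $n\geq 1$) giving traciality of $\ncintd{d}$. The only place where the paper is slightly more careful is your Step 1: since $T_1$ may be unbounded, it justifies the swap by writing $\Tr(T_1T_2\awD^{-s})=\Tr\bigl((T_1\awD^{-r_1})(\awD^{r_1}T_2\awD^{-s})\bigr)$ with the first factor bounded and the second trace-class, which is exactly the factorisation your blanket appeal to ``cyclicity in that half-plane'' needs.
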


\begin{proof}
Since the triple is regular \eqref{OPexpansion} yields, for any $T\in \PDOC, s\in \CC, M\in \N$,
\begin{equation}
\label{OP_T}
\!\!\!\!\vert D\vert^{-s}T =\sum_{n = 0}^{M}\tbinom{-s/2}{n} \nabla^n(T) \vert D\vert^{-2n-s} + R_N(s) \abs{D}^{-s}, \quad R_M(s) \in \OP^{\ord T-M-1}\!.
\end{equation}
Now, let $r_1 = \vert \ord T_1 \vert, r_2 = \vert \ord T_2 \vert$. For $\Re(s) > p + 2 r_1 + r_2$ we can use the cyclicity property of the trace,
\begin{align*}
\zeta_{T_1 T_2,D}(s) & = \Tr\, T_1 T_2 \abs{D}^{-s} = \Tr \,( T_1 \abs{D}^{-r_1} )( \abs{D}^{r_1} T_2 \abs{D}^{-s} )  \\
& = \Tr \,( \abs{D}^{r_1} T_2 \abs{D}^{-s} T_1 ) \abs{D}^{-r_1} = \Tr\, T_2 \abs{D}^{-s} T_1,
\end{align*}
since the operators $\abs{D}^{r_1} T_2 \abs{D}^{-s}, \awD^{r_1} T_2 \abs{D}^{-s} T_1$ are trace-class, whereas $T_1 \abs{D}^{-r_1}$ and $\abs{D}^{-r_1}$ are bounded. Then, with the help of Formula \eqref{OP_T}, we obtain
\begin{align}
\zeta_{T_1 T_2,D}(s) & = \Tr \,T_2 T_1 \abs{D}^{-s} + \sum_{n = 1}^{N} \tbinom{-s/2}{n}\, \Tr \,T_2 \nabla^n(T_1) \vert D\vert^{-2n-s} + \Tr T_2 R_N(s) \abs{D}^{-s} \notag \\
& = \zeta_{T_2 T_1,D}(s) + \sum_{n = 1}^{N} \tbinom{-s/2}{n}\, \zeta_{T_2 \nabla^n(T_1),D}(2n+s) + h(s), \label{zeta_commut}
\end{align}
where $h$ is holomorphic for $\Re(s) > p + \ord T_1 + \ord T_2 - N - 1 = -1$.\\ For any $n \in \N$, $T_2 \nabla^n(T_1) \in \PDOC$ and, since the spectral triple has a dimension spectrum, the functions $\zeta_{T_2 \nabla^n(T_1),D}$ admit meromorphic extensions to the whole complex plane. Consequently, equality \eqref{zeta_commut} holds true for $\Re(s) > -1$ and, for any $k \in \Z$, we have
\begin{align}
\label{eq:commutation under ncintd}
\ncintd{k} T_1 T_2 & = \ncintd{k} T_2 T_1 + \sum_{n=1}^{N} \Rez{s=0}\,  s^{k-1}  \tbinom{-s/2}{n}\, \zeta_{T_2 \nabla^n(T_1),D} (s+2n) \\
& = \ncintd{k} T_2 T_1 + \sum_{n=1}^{N} \tfrac{1}{n!} \sum_{j=0}^n \tstirling{n}{j} \tfrac{(-1)^j}{2^j} \ncintd{k+j} T_2 \nabla^n(T_1) \awD^{-2n}.
\end{align}

If the dimension spectrum of $\ahd$ has order $d$, for $k=d$ we obtain $$\ncintd{d} T_1 T_2 = \ncintd{d} T_2 T_1,$$ because $\ncintd{d+j} T = 0$ for any $T \in \PDOC$ and $j \in \N^*$, whereas for $j=0$, we have $\tstirling{n}{j} = 0$.
\hfill $\Box$
\end{proof}

In the case of a commutative spectral triple (as in Proposition \ref{prop:PDO}), $\ncint$ defined by \eqref{ncint} is related to the \emph{Wodzicki residue} \index{Wodzicki residue} \cite{Wodzicki,Wodzicki2,Guillemin}. The latter is defined in the following context: Let $D \in \Psi(M,E)$ be an elliptic pseudodifferential operator of order 1. For $P \in \Psi^{\Z}(M,E)$, define
\begin{align*}
\WRes P \vc \Rez{s=0} \zeta_{P,D}(s). \index{_zwres@$\WRes$}
\end{align*}
Mariusz Wodzicki has shown that 
\begin{align}\label{WRes}
\WRes P =  \tfrac{1}{(2\pi)^d} \int_{\mathbb{S}^*M} \tr \sigma^P_{-d}(x,\xi) \; d\mu(x,\xi),
\end{align}
where $\mathbb{S}^*M \vc \{(x,\xi) \in T^*M \; \vert \; g_x^{-1}(\xi,\xi) = 1\}$ is the cosphere bundle over $M$, $\tr$ is the matrix trace of the symbol $\sigma^P_{-d}(x,\xi)$ and 
\begin{align*}
& d\mu(x,\xi) \vc \abs{\sigma(\xi)} \abs{dx}, \\
& dx \vc dx^1\wedge\cdots\wedge dx^d,\\
& \sigma(\xi) \vc \sum_{j=1}^n\, (-1)^{j-1} \,\xi_j\,d\xi_1\wedge\cdots \wedge \widehat{d\xi_j}\wedge\cdots\wedge d\xi_d\,.
\end{align*}
is a suitable Riemannian measure on $\mathbb{S}^*M$ (the integrand of \eqref{WRes} is a $1$-density on $M$, so it is coordinate-independent, see \cite[Section 7.3]{Elements} or \cite[Section 5.3]{Varilly}). Moreover, $\WRes$ is the unique (up to normalisation) trace on the algebra $\Psi^{\Z}(M,E)$. 

For a commutative spectral triple as in Proposition \ref{prop:PDO} with $D = \Dslash_E+P_0$, we simply have $\ncint P = \WRes P$ for any $P \in \Psi^{\Z}(M,\SS\ox E)$.

\begin{example}
\label{ex:ncint}
For $a \in \A = \Coo(M)$ of Example \ref{ex:commutative}, Formula \eqref{WRes} yields
\begin{align*}
\ncint a \abs{D}^{-d} = \tfrac{2^{\lfloor d/2 \rfloor} \Vol(S^{d-1})}{(2 \pi)^d} \int_M a(x) \,d\mu(x),
\end{align*}
where $d\mu(x) = \sqrt{g} \, dx$ is the standard Riemannian measure on $M$, since we have $\sigma^{a \abs{D}^{-d}}_{-d} \!(x,\xi) =  a(x) \norm{\xi}^{-d}$.\\
 It is worthy to store the following: In a commutative spectral triple, since $\zeta_{\bbbone,D}$ is regular at 0 (see \cite[page 108, eq. (1.12.16)]{Gilkey1}), we get
\begin{align}
\label{int 1=0}
&\hspace{4cm}  \ncint \bbbone= 0.\quad \hspace{4cm} \blacksquare
\end{align}
\end{example}

The noncommutative integral defined in \eqref{ncint} turns out, moreover, to be related to the Dixmier trace, the construction of which we briefly describe below.

Given a selfadjoint operator $T$ with purely discrete spectrum, we denote by $\lambda_n(T)$\index{_a5lambda_n(T)@$\lambda_n(T)$}, with $n\in \N$, its eigenvalues ordered increasingly. 
If $T \in \KKK(\H)$, then for $n \in \N$ we define $\mu_n(T)$ \index{_a7mu@$\mu_n(T)$} as the $(n+1)$-th singular value of $T$, i.e. $(n+1)$-th eigenvalue of $\abs{T}$ sorted in decreasing order, with the corresponding multiplicity $M_n(T)$.\index{_zMe@$M_n(T)$} Note that since $T$ is a compact operator, we have $\lim_{n \to \infty} \mu_n(T) = 0$.

Now, let us denote the partial trace $$\Tr_N(T) \vc \sum_{n=0}^{N} M_n(T) \,\mu_n(T).$$
If $T \in \L^1(\H)$, $\lim_{N \to \infty} \Tr_N(T) < \infty$, but for a general compact operator the sequence $\Tr_N$ is unbounded. The role of a Dixmier trace is to capture the coefficient of the logarithmic divergence of $\Tr_N(T)$.

There exists a very convenient formula \cite[Proposition 7.34]{Elements} for $\Tr_N(T)$ which uses a decomposition of $T$ into a trace-class and a compact part:
\begin{align}
\label{partial_trace_N}
\Tr_{N}(T) = \inf \big\{ \Tr \abs{R} + N \norm{S} \; \big\vert \; R \in \L^{1}(\H), \, S \in \KKK(\H) , \; T = R + S \big\}.  \index{_zTr_T@$\Tr_N(T)$}
\end{align}
The partial trace, $\Tr_N(T)$ may be seen as a trace of $\abs{T}$ cut-off at a scale $N$. But this scale does not need to be a natural number and indeed, Formula \eqref{partial_trace_N} still makes sense for any positive $N$, hence we define for any $\lambda \in \RR^+$:
\begin{align}
\label{partial_trace_l}
\Tr_{\lambda}(T) \vc \inf \big\{ \Tr \abs{R} + \lambda \norm{S} \; \big\vert \; R \in \L^{1}(\H), \, S \in \KKK(\H) , \; T = R + S \big\}.
\end{align}

With the help of Formula \eqref{partial_trace_l} we define the subspace of compact operators, the partial trace of which diverges logarithmically 
\begin{align}
\label{def:Mideal}
\L^{1,+}(\H) \eq \{ T \in \KKK(\H) \; \vert \; \norm{T}_{1,+} \eq \sup_{\lambda \geq e} \tfrac{\Tr_{\lambda}(T)}{\log \lambda} < \infty\}. \index{_zL1h1@$\L^{1,+}(\H)$} \index{_An@$\norm{\cdot}_{1,+}$}
\end{align}
One has $\norm{ATB}_{1,+} \leq \norm{A}\,\norm{T}_{1,+}\,\norm{B}$  for every $A,B \in \B(\H)$ and the space $\L^{1,+}(\H)$ (sometimes denoted by $\L^{1,\infty}(\H)$, see \cite{DixZeta4} for historical notes) is a $C^*$-ideal of $\B(\H)$ \cite[Lemma 2.5]{IochumNotes}.

Now, for $T \in \KKK(\H)$ and $\lambda \geq e$, define
\begin{align}\label{cesaro}
\tau_{\lambda}(T) \vc \tfrac{1}{\log \lambda} \int_{e}^{\lambda} \tfrac{\Tr_{u}(T)}{\log u} \, \tfrac{d u}{u}.
\end{align}

The functional $\tau_{\lambda}$ is not additive on $\L^{1,+}$, but the defect is controllable \cite[Lemma 7.14]{Elements}, i.e. for positive operators $T_1, T_2 \in \L^{1,+}$
\begin{align*}
\tau_{\lambda}(T_{1} + T_{2}) - \tau_{\lambda}(T_{1}) - \tau_{\lambda}(T_{2}) = \OO_{\la \to \infty} ( \tfrac{\,\log \log \lambda}{\log \lambda}).
\end{align*}

To obtain a useful additive functional on $\L^{1,+}$, two more steps are needed. First note that the map $\lambda \mapsto \tau_{\lambda}(T)$ is in $C_b([e,\infty))$ for $T\in \L^{1,+}$. Define also a quotient $C^*$-algebra $$\QQ \vc C_b([e,\infty)) / C_0([e,\infty))$$\index{_zQ@$\QQ$}and let $[\tau(T)] \in \QQ$ be the class of $\lambda \mapsto \tau_{\lambda}(T)$.  Then, $T \mapsto [\tau(T)]$ extends to a linear map from $\L^{1,+}$ to $\QQ$, which is a trace, i.e. $[\tau(T_1T_2)] -[\tau(T_2T_1)]= [0]$. To get a true linear functional on $\L^{1,+}$, we need to apply to $[\tau(\cdot)]$ a state $\omega$ (i.e. $\omega$ is a positive linear functional of norm 1) on $\QQ$.

\begin{definition}
\label{def:Dixmier}
A \emph{Dixmier trace} \index{Dixmier trace} \cite{Dixmier} associated with a state $\omega$ on $\QQ$ is defined as $\Tr_{\omega}(\cdot) \vc \omega \circ [\tau(\cdot)]$.
\end{definition}

In fact, a Dixmier trace is not only a trace, but a \emph{hypertrace} \index{hypertrace} on $\L^{1,+}(\H)$, i.e. $\Tr_{\omega}(TS) = \Tr_{\omega}(ST)$ for any $T \in \L^{1,+}$ and any $S \in \B(\H)$ \cite[p. 45]{Varilly} (compare also \cite[Theorem 10.20 and Corollary 10.21]{Elements}).

The definition of a Dixmier trace, although powerful, is not completely satisfactory as it involves an arbitrary state on the commutative  algebra $\QQ$, which is not separable (and $\L^{1,+}(\H)$ is also not separable). Thus, in practice, one cannot construct an explicit suitable state and there exist myriads of `singular traces', different than the one of Dixmier, exploiting various notions of generalised limits \cite{DixZeta4}.
 
Moreover, if one insists on defining the noncommutative integral via the Dixmier trace, one faces a notorious problem related to the existence of measurable (and non measurable!) sets in Lebesgue's theory. In particular, there exists a class of operators in $\L^{1,+}(\H)$, for which a Dixmier trace does not depend on $\omega$.

\begin{definition}
An operator $T \in \L^{1,+}(\H)$ is \emph{measurable} \index{measurable operator} if $\Tr_{\omega}(T)$ does not depend on $\omega$. Then one speaks about \emph{the} Dixmier trace of $T$ and denote it by $\Trdix(T)$.
\end{definition}

The following proposition gives a convenient characterisation of measurable operators, see \cite{LSS}, \cite[Theorem 9.7.5]{DixZeta4} (and \cite{SUZ2017} for a link with the $\zeta$-function)
\begin{proposition}
An operator $T \in \L^{1,+}$ is measurable if and only if $\,\lim_{N \to \infty} \tfrac{\Tr_N(T)}{\log N}$ exists, in which case it is equal to $\Trdix(T)$.
\end{proposition}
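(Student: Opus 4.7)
The plan is to first reformulate measurability analytically, then handle the two directions separately. The crucial observation is that the states on the commutative $C^{*}$-algebra $\QQ = C_b([e,\infty))/C_0([e,\infty))$ separate its points, so $T \in \L^{1,+}(\H)$ is measurable iff the class $[\tau(T)] \in \QQ$ equals $L\,[\mathbf{1}]$ for some $L \in \RR$, which in turn is equivalent to $\tau_\lambda(T) \to L$ as $\lambda \to \infty$. This reduces the proposition to the equivalence ``$\tau_\lambda(T) \to L$ iff $\Tr_N(T)/\log N \to L$'', with the common limit being $\Trdix(T)$.

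For the easy direction, assume $\Tr_N(T)/\log N \to L$. Monotonicity of $\lambda \mapsto \Tr_\lambda(T)$ (immediate from \eqref{partial_trace_l}) sandwiches $\Tr_\lambda(T)$ between $\Tr_N(T)$ and $\Tr_{N+1}(T)$ for $\lambda \in [N,N+1]$, so dividing by $\log \lambda$ upgrades the hypothesis to $\Tr_\lambda(T)/\log \lambda \to L$. Given $\epsilon>0$, I would then pick $U$ large enough so that $|\Tr_u(T)/\log u - L|<\epsilon$ for $u \geq U$ and split the integral \eqref{cesaro} as $\int_e^U + \int_U^\lambda$: the first piece is $O(1/\log\lambda)=o(1)$ and the second is bounded in absolute value by $\epsilon \log(\lambda/U)/\log\lambda\leq \epsilon$. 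Hence $\tau_\lambda(T) \to L$, so by the reformulation $T$ is measurable with $\Trdix(T)=L$.

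For the hard direction, measurability gives $\tau_\lambda(T) \to L$ and I need to recover $\Tr_N(T)/\log N \to L$; this is the Tauberian core of the proof. The essential input is the structural rigidity of $\lambda \mapsto \Tr_\lambda(T)$: as an infimum of the affine family $\lambda \mapsto \Tr\abs{R} + \lambda \norm{S}$ in \eqref{partial_trace_l}, it is concave, increasing, and vanishes at $\lambda = 0$, so $\Tr_\lambda(T)/\lambda$ is decreasing and the right-derivative of $\Tr_\lambda(T)$ is bounded by $\Tr_\lambda(T)/\lambda$. Setting $g(\lambda) \vc \Tr_\lambda(T)/\log\lambda$ (which is bounded by the very definition of $\L^{1,+}$), a short computation yields $|g'(\lambda)| = O(1/\lambda)$ almost everywhere, i.e.\ $g(\lambda a) - g(\lambda)$ remains uniformly bounded as $\lambda \to \infty$ for $a$ in any compact subset of $(0,\infty)$. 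This slow-oscillation condition is precisely what allows one to invert the Cesàro-logarithmic average \eqref{cesaro} via a Hardy--Karamata-type Tauberian theorem and conclude $g(\lambda) \to L$, whence $\Tr_N(T)/\log N \to L$ by restriction to integers.

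The Tauberian step is, in my view, the main obstacle. Cesàro averaging is a genuine regularisation that can manufacture a limit from a divergent bounded sequence, so some structural hypothesis on $\Tr_\lambda(T)$ is indispensable; concavity is the non-negotiable ingredient provided by \eqref{partial_trace_l}. The technical work consists in sharpening the slow-oscillation estimate above and combining it with the convergence of $\tau_\lambda(T)$ to pin down the limit of $g$; the detailed implementation is carried out in \cite{LSS} and \cite[Theorem 9.7.5]{DixZeta4}.
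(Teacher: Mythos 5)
The paper itself offers no proof of this proposition: it is quoted with pointers to \cite{LSS} and \cite[Theorem 9.7.5]{DixZeta4}, so the only meaningful comparison is with the standard argument in those references --- and your sketch does follow its skeleton. The reduction of measurability to convergence of $\tau_\lambda(T)$ (states separate the points of the commutative algebra $\QQ$, so $\omega([\tau(T)])$ is $\omega$-independent iff $[\tau(T)]$ is a scalar multiple of the unit, i.e.\ iff $\tau_\lambda(T)$ converges) is correct, and your Abelian direction, using monotonicity of $\lambda\mapsto\Tr_\lambda(T)$ to pass from integer $N$ to real $\lambda$ and then splitting the Ces\`aro integral, is fine.

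The weak point is the Tauberian step, and not only because you delegate it: the condition you extract is not the right one. Writing $h(t)\vc\Tr_{e^t}(T)/t=g(e^t)$, the average \eqref{cesaro} is the ordinary $(C,1)$ mean of $h$ in the variable $t=\log\lambda$, so the relevant dilations are \emph{powers} $\lambda\mapsto\lambda^{1+\delta}$, not multiples $\lambda\mapsto a\lambda$. Your estimate $\vert g'(\lambda)\vert=\OO(1/\lambda)$ is true, but it only says that $h$ is Lipschitz in $t$, and a bounded derivative is not a Tauberian condition for $(C,1)$ summability (the function $h(t)=\sin t$ is Lipschitz, Ces\`aro summable to $0$, and divergent); a fortiori ``$g(a\lambda)-g(\lambda)$ uniformly bounded for $a$ in compacta'' cannot ``precisely'' yield the inversion. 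What concavity of $\lambda\mapsto\Tr_\lambda(T)$ (infimum of the affine functions in \eqref{partial_trace_l}), together with monotonicity, $\Tr_0(T)=0$ and $\Tr_\lambda(T)\le\norm{T}_{1,+}\log\lambda$, actually delivers is the \emph{one-sided} bound $g'(\lambda)\ge -c/(\lambda\log\lambda)$, i.e.\ $h'(t)\ge -c/t$, hence $g(\lambda^{1+\delta})-g(\lambda)\ge -c\log(1+\delta)$: this is the ``slowly decreasing'' hypothesis of the Hardy--Littlewood/Schmidt Tauberian theorem on the correct scale, and with it the hard direction goes through. So your architecture is the standard one behind the paper's citations, but as written the justification of the Tauberian inversion rests on an insufficient condition; replacing it by the one-sided estimate above (or simply proving the elementary one-sided Tauberian lemma) is needed to make the sketch sound, and even then the write-up remains, like the paper's, a sketch deferring the details to \cite{LSS} and \cite{DixZeta4}.
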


The link between Definition \ref{def:ncint} of $\ncint$ and $\Trdix$ was provided by Connes via his famous Trace Theorem \cite[Theorem 1]{ConnesAction}.

\begin{theorem}
\label{thm:Connes_trace}
Let $M$ be a compact Riemannian manifold of dimension $d$, $E$ a vector bundle over $M$ and $T \in \Psi^{-d}(M,E)$. Then, $T \in \L^{1,+}$, $T$ is measurable and
\begin{align}
\label{eq:equality of traces}
\Trdix(T) = \tfrac{1}{d} \WRes(T) = \tfrac{1}{d} \ncint T.
\end{align}
\end{theorem}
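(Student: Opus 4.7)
By linearity of $\Trdix$, $\WRes$ and $\ncint$, I may first decompose $T = \tfrac12(T+T^*) + \tfrac{i}{2i}(T-T^*)$ and then split each self-adjoint order-$(-d)$ pseudodifferential operator into positive and negative parts via the functional calculus (the splitting introduces only smoothing errors modulo $\Psi^{-\infty}$, which contribute nothing to any of the three functionals). Thus it suffices to treat the case $T \geq 0$.

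\textbf{Step 1 (membership in $\L^{1,+}$).} Fix an auxiliary positive selfadjoint elliptic first-order pdo $P$ acting on sections of $E$, for instance $P = (\bbbone + \Delta_E)^{1/2}$. By Weyl's law for $P^d$, which is an elliptic positive pdo of order $d$ on a $d$-dimensional manifold, its eigenvalues satisfy $\lambda_n(P^d) \sim c\, n$, so $\mu_n(P^{-d}) = \OO(1/n)$ and therefore $P^{-d} \in \L^{1,+}(\H)$. For a general $T \in \Psi^{-d}(M,E)$, the composition $TP^{d} \in \Psi^0(M,E)$ is bounded on $\H = L^2(M,E)$, hence
\begin{align*}
T = (TP^d)\, P^{-d} \in \B(\H) \cdot \L^{1,+}(\H) \subset \L^{1,+}(\H),
\end{align*}
since $\L^{1,+}(\H)$ is a two-sided ideal in $\B(\H)$.

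\textbf{Step 2 (meromorphic structure of the zeta function).} For $T \geq 0$, $T \in \Psi^{-d}(M,E)$, set $\zeta_T(s) \vc \Tr T^s$, which is holomorphic for $\Re(s) > 1$ by Step 1. Using the Mellin transform
\begin{align*}
T^s = \frac{1}{\Gamma(s)} \int_0^{\infty} t^{s-1} e^{-t T^{-1}}\,dt \quad \text{(on the range of } T\text{)},
\end{align*}
together with the classical heat-kernel small-$t$ asymptotic expansion for $e^{-tP^2}$ (see Appendix \ref{classical tools}) transplanted to $T$ via a symbolic calculation, one shows that $\zeta_T$ admits a meromorphic continuation to $\CC$ with at worst simple poles, the pole at $s=1$ being controlled by the order-$(-d)$ term of the full symbol of $T$:
\begin{align*}
\Rez{s=1}\, \zeta_T(s) = \tfrac{1}{d(2\pi)^d} \int_{\mathbb{S}^*M} \tr \sigma^T_{-d}(x,\xi)\, d\mu(x,\xi) = \tfrac{1}{d}\WRes(T).
\end{align*}
This residue computation, which extracts the homogeneous piece of degree $-d$ of the symbol via the standard partition-of-unity argument on a finite atlas of $M$, is the main analytic obstacle: one must check that lower-order symbols contribute only terms which are regular at $s=1$, and that changes of chart affect the integrand only by an exact form on $\mathbb{S}^*M$.

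\textbf{Step 3 (Tauberian passage to the Dixmier trace).} The final step is to pass from the pole behaviour of $\zeta_T(s)$ as $s \to 1^+$ to the asymptotics of the partial trace $\Tr_N(T)$. For a positive compact operator whose singular values satisfy $\mu_n(T) = \OO(1/n)$, a Hardy--Littlewood--Karamata Tauberian theorem applied to the sequence $(\mu_n(T))$ gives
\begin{align*}
\Tr T^s \underset{s \to 1^+}{\sim} \tfrac{L}{s-1} \quad \Longleftrightarrow \quad \Tr_N(T) \underset{N\to\infty}{\sim} L\,\log N,
\end{align*}
the latter being exactly the criterion for measurability recalled before the statement, with $\Trdix(T) = L$. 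Combining with Step 2 gives $\Trdix(T) = \tfrac{1}{d}\WRes(T)$. The remaining equality $\WRes(T) = \ncint T$ is immediate from Definition \ref{def:ncint} applied to the commutative spectral triple with Dirac-type operator $D$ (here $\abs{D}^{-s} = P^{-s}$ up to a smoothing perturbation, which drops out of the residue), so $\ncint T = \Rez{s=0}\zeta_{T,D}(s) = \WRes(T)$. Taken together these yield the chain of equalities in \eqref{eq:equality of traces}.
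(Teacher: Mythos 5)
The paper itself offers no proof of Theorem \ref{thm:Connes_trace}: it is quoted from \cite{ConnesAction}, so your attempt has to be judged on its own. Your Steps 1 and 3 are sound and follow the standard route: the factorisation $T=(TP^d)\,P^{-d}$ together with Weyl's law and the ideal property of $\L^{1,+}$ gives $T\in\L^{1,+}$, and a Karamata-type Tauberian theorem correctly converts a simple pole of $\Tr T^s$ at $s=1$ with residue $L$ into $\Tr_N(T)\sim L\log N$, hence measurability and $\Trdix(T)=L$ for positive $T$; the identification $\WRes=\ncint$ is essentially the paper's own convention. The genuine gap is in the reduction that is supposed to make Step 2 applicable. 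First, splitting the self-adjoint parts of $T$ into positive and negative pieces by the functional calculus does not stay inside the pseudodifferential calculus: when the principal symbol of a self-adjoint $T\in\Psi^{-d}(M,E)$ vanishes or changes sign, $\abs{T}$, and hence $T_\pm$, is in general \emph{not} a classical pdo modulo $\Psi^{-\infty}$, so $\WRes(T_\pm)$ is not even defined and the symbol computation of Step 2 cannot be run on the pieces; the claim that the splitting costs only smoothing errors is false. Second, even granted $T\geq 0$, Step 2 tacitly needs $T$ elliptic and invertible: otherwise $T^{-1}$ does not exist or is not a pdo of order $d$, $e^{-tT^{-1}}$ has no classical heat-trace expansion, the complex powers $T^s$ are not pdos, and the asserted residue formula at $s=1$ is unjustified. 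Positivity of an order $-d$ operator gives neither ellipticity nor invertibility, and you never arrange for them.

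Both defects are repaired by a standard device you almost have in hand: with $P=(\bbbone+\Delta_E)^{1/2}$ and $c$ larger than both $\norm{P^{d/2}TP^{d/2}}$ and $\sup_{S^*M}\norm{\sigma_{-d}(T)}$ (for $T$ self-adjoint of order $-d$), write $T=(T+cP^{-d})-cP^{-d}$. Both summands are positive pdos of order $-d$ with positive-definite principal symbols, hence elliptic, and (after discarding a finite-dimensional kernel by a smoothing correction, invisible to all three functionals) invertible; your Steps 2 and 3 then apply to each summand, and the statement for general $T$ follows because each Dixmier trace $\Tr_\omega$, as well as $\WRes$ and $\ncint$, is linear — linearity of every $\Tr_\omega$ separately is also what transfers measurability (in the sense of $\omega$-independence) to the difference. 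As written, however, the reduction is a genuine gap rather than an omission of routine detail.
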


\begin{remark}
\label{rem:Dix_vs_ncint}
Despite the nice equality \eqref{eq:equality of traces}, we cannot hope for proportionality between the Dixmier trace and the noncommutative integral for an arbitrary $P \in \PDO$, even in the commutative case: Consider for instance $M=S^2$ -- the two-dimensional sphere, endowed with the standard Dirac operator $\Dslash$, which has singular values $\mu_n = n+1$ with multiplicities $4(n+1)$ (see Appendix \ref{sec:spheres}). Now take $T = \bbbone \in \Psi^0(C^{\infty}(S^2))$. We have, $\zeta_{\bbbone,\Dslash}(s) = \Tr \abs{\Dslash}^{-s} = 4 \zeta(s-1)$, which is regular at $s=0$, hence $\ncint \bbbone = 0$. On the other hand, $\bbbone$ is not a compact operator, hence its Dixmier trace does not make sense at all. 

We have privileged the noncommutative integral $\ncint$, which --- as we have just explained --- is more suitable in the noncommutative-geometric framework. However, on the technical side there are possible variants: Instead of $\zPD$ , we could have chosen the more symmetrised $\Tr \,(T^{1/2}\vert D\vert T^{1/2})^{-s}$ when $T$ is positive. The not obvious links between these functions and others are investigated in \cite{CGVRInt}.
\hfill$\blacksquare$
\end{remark}

\section{Fluctuations of Geometry}
\label{sec:fluc}

Given a spectral triple $\ahd$ it is natural to ask whether there exist other spectral triples describing a noncommutative geometry which is equivalent in some sense to the one determined by $\ahd$. To discuss this issue we firstly need the definition of noncommutative \emph{one-forms} \index{one-form}:
\begin{align}
\label{def:one-forms}
\OA \vc \Lspan \{ a \, \mathrm{d}b \; \vert \; a,b \in \A \}, \quad \text{ with } \mathrm{d}b \vc [\DD,b].  \index{_azzz@$\OA$}\skipindex{_azzz}
\end{align}
One can define accordingly the $n$-forms (modulo the so-called junk forms \cite[Section 8.1]{Elements}), which are building blocks of the Hochschild and cyclic homologies (see, for instance, \cite[Chapter 3]{BasicNCG} for more details). On the physical side, elements of $\OA$ are to be seen as gauge potentials of the theory.

A notion of equivalence suitable for spectral triples is that of Morita equivalence \cite[Chapter 1, Section 10.8]{ConnesMarcolli} (see also \cite[Section 2]{ConnesInner}, \cite[Section 4.5]{Elements} or \cite[Section 2.3]{BasicNCG}), which we now briefly describe. \\
Recall first that a (right) \emph{module} \index{module} $E$ over a unital algebra $\A$ is an Abelian group $(E,+)$ along with the (right) action of $\A$ on $E$, i.e. a map $E \x \A \to E$ such that for all $e,f \in E$ and any $a,b \in \A$
\begin{align*}
(e+f)a = ea + fa, \quad e(a+b) = ea + eb, \quad e(ab) = (ea)b, \quad e \bbbone_{\A} = e.
\end{align*}
A finitely generated module $E$ over $\A$ is \emph{free} \index{module!free} if there exists $N \in \N$ such that $E$ is isomorphic to $\A^{\ox N}$. \\
A module $E$ is called \emph{projective} \index{module!projective} \label{projective}if there exists another (right) module $F$ over $\A$ such that $E \oplus F$ is a free module. 

The Morita equivalence of algebras can be characterised as follows (cf. \cite[Proposition 6.12]{WalterBook}).

\begin{definition}
Let $\A, \A'$ be two unital associative algebras. We say that $\A'$ is \emph{Morita equivalent} \index{Morita equivalence} to $\A$ if there exists a finitely generated projective (right) module $E$ over $\A$, such that $\A' \simeq \End_{\A} E$.
\end{definition}

Having fixed a representation $\pi$ of $\A$ on $\H$, $\A'$ acts on $\H'\vc E \otimes_\A \H$. The space $\H'$ is endowed with the scalar product: 
$$\langle r \otimes \eta , s \otimes \xi \rangle \vc \langle \eta, \pi(r\vert s)\xi \rangle,$$
where $(\cdot \vert \cdot)$ is a 
pairing (or an $\A$-valued inner product) $E \times E \to \A$, which is $\A$-linear in the second variable and satisfies $$(r \vert s)=(s\vert r)^*,\quad (r \vert sa)=(r\vert s)a \,\,\text{ and }\,(s\vert s)>0 \,\text{ for }r\in E,0\neq s \in E.$$
Thus, each representation of $\A$ on $\H$ gives a representation of $\A'$ on $\H'$.
\\
For a given projective (right) module $E$ over $\A$, one can choose a \emph{Hermitian connection}, \index{Hermitian connection} i.e. a linear map $\nabla: E \to E \ox_{\A} \OA$, which satisfies the Leibniz rule $\nabla (r a) = (\nabla r)a + r \ox da$, for all $r \in E, \; a \in \A$. \\
With the help of a Hermitian connection one can define a selfadjoint operator $\DD' \in \L(\H')$ by
\begin{align*}
\DD'(r \ox \xi) \vc r \ox \DD \xi + (\nabla r)\xi, \quad \text{ with } \quad r\in E,\, \xi\in\Dom(\DD).
\end{align*}
Then, $(\A',\H',\DD')$ is a spectral triple (for the compatibility with a real structure see e.g. \cite[Theorem 6.16]{WalterBook}).

We say, by definition, that the spectral triple $(\A',\H',\DD')$ {\it is equivalent to} the original one $\ahd$. This is motivated by the following observation:
\\
The algebra $\A$ is obviously Morita equivalent to itself (with $E = \A$), in which case $\H' = \H$ and  Ad$_\DD:\,a \to [\DD,a]\in \Omega^1_\DD(\A)$ is a natural hermitian connection for $E$. Thus, any $\DD_A = \DD + A$ with $A = 
A^* \in \OA$ would provide a spectral triple $(\A,\H,\DD_A)$ equivalent to $\ahd$. \\ The operator $\DD_A$ is called a(n) (inner) \emph{fluctuation}\index{fluctuation} of $\DD$ and $A$ (following physicists conventions) -- a \emph{gauge potential}\index{gauge potential}. If $\DD_A = \DD + A$ with $A = A^* \in \OA$, then for any $B = B^* \in \Omega_{\DD_A}^1(\A)$ we have $\DD_A + B = \DD + A'$ with $A' = A + B \in \OA$ \cite[Proposition 1.142]{ConnesMarcolli}. In other words: ``Inner fluctuations of inner fluctuations are inner fluctuations''.

As we want to consider real spectral triples, we should require $(\A,\H,\DD_A)$ to have the same $KO$-dimension (see p. \pageref{def:real}) as $\ahd$. If $\DD J = \epsilon J \DD$, then we should require $\DD_A J = \epsilon J \DD_A$. Therefore, we define
\begin{align}
\label{def:D_A}
\DA \vc \DD + \Ag, \,\,\text{ with }\,\, \Ag \vc A + \epsilon J A J^{-1}, \quad \text{for} \quad A=A^* \in \OA. \index{_zD0A@$\DA$}
\end{align}
\vspace{-0.5cm}
\begin{example}[{cf. \cite[p. 734]{ConnesSA} and \cite[p. 22]{Almost2}}]\label{ex:no fluctuations on M}
Let $M$ be an even dimensional spin manifold as in Example \ref{ex:commutative}. For $a,b \in \A$ we have $a[\Dslash,b] = - i  \gamma^{\mu} a \dt_{\mu} b$. \\Since $A$ is Hermitian, $A_{\mu} \vc -i a \dt_{\mu} b \in C^{\infty}(M,\RR)$ and $[J,A_{\mu}] = 0$. On the other hand, $J\gamma^{\mu} = - \gamma^{\mu} J$ because $J$ commutes with $\Dslash$ and anticommutes with $i$. Finally,
\begin{align*}
\Dslash_{\Ag} = \Dslash + \Ag = \Dslash + A + J A J^{-1} = \Dslash + A - A J J^{-1} = \Dslash.
\end{align*}
Hence, there are no inner fluctuations in commutative geometry.
\hfill $\blacksquare$
\end{example}

The above considerations rely on the first-order condition \eqref{first_order}. The latter can be relaxed, what induces in a more general form of fluctuations \cite{ConnesFirst}. Whereas the exact shape of $\Ag$ does affect the physical content of the theory, the mathematics detailed in Chapter \ref{chap:perturbations} is not afflicted. Therefore, we shall only assume that $\Ag = \Ag^* \in \PDOk{0}$, which holds also for the more general form of perturbations, cf. \cite[Eq. (10)]{ConnesFirst}. As we will see, for such an $\Ag$, $(\A,\H,\DA)$ is still a spectral triple, which inherits the regularity, $p$-summability and dimension spectrum properties of $\ahd$. \label{A without first}

\section{\textit{Intermezzo}: Quasi-regular Spectral Triples}\label{sec:quasi_reg}

Before we proceed, let us come back for a moment to the assumption of regularity of a spectral triple --- recall Definition \ref{def:regularity}. As we have witnessed, it played an essential role in the construction of abstract pdos along with the dimension spectrum and the noncommutative integral. It turns out that at least some of these properties survive under a weaker assumption of \emph{quasi-regularity}\index{spectral triple!quasi-regular} (cf. \cite[Chapter 4]{PhD}):
\begin{align*}
\A, [\DD,\A] \subset \op^0.
\end{align*}
Clearly, every regular spectral triple is quasi-regular. An example of non-regular, though quasi-regular spectral triple, is provided by the standard Podle\'s sphere (cf. Appendix \ref{sec:Podles}).

Observe that if $T \in \op^0$, then $\sigma_z(T) \in \op^0$ for any $z \in \CC$ by properties \eqref{opProp}. On the other hand, the expansion \eqref{OPexpansion} does not hold in general. Typically, it is substituted by a `twisted' version (see \cite[Lemma 4.3]{PodlesSA} for an illustration).\\
Similarly to the regular case, given a quasi-regular $\ahd$ one can furnish the algebra $\A$ with a locally convex topology determined by the family of seminorms $a \mapsto \norm{\sigma_n(a)}$, $a \mapsto \norm{\sigma_n([\DD,a])}$. The completion $\A_\sigma$ is a Fr\'echet pre-$C^*$-algebra and $(\A_\sigma,\H,\DD)$ is again a quasi-regular spectral triple. The proof of this fact closely follows \cite[Lemma 16]{RennieSmooth} and the details can be found in \cite[Chapter 4]{PhD}.

In such a context one can define the algebra of pdos $\ePDO$\index{_apDOzz@$\ePDO$} as the polynomial algebra generated by $a \in \A$, $\DD$, $\aD$ and $\awD^{-1}$, complemented by $J\A J^{-1}$ if the triple is real and $P_0$ if $\ker \DD \neq \{0\}$. \\In the same vein, one can consider the algebra $\ePDOC$ as being generated by the elements of $\ePDO$ sandwiched with $\awD^{z_i}$ for $z_i \in \CC$. The grading is now provided by the $\op$ classes, $\ePDOk{r} \vc \ePDO \cap \op^r$.

The dimension spectrum of a quasi-regular spectral triple can be considered naturally with $T \in \ePDOk{0}$ in place of $\PDOz$ in Definition \ref{def:dimspec}. Being fairly involved, this notion is nevertheless workable and gives reasonable results for the Podle\'s sphere --- cf. Theorem \ref{thm:Podles_dimsp}.

Given a quasi-regular spectral triple with a dimension spectrum of order $d$, the noncommutative integral \eqref{ncint} still makes sense, however, the pleasant feature of $\ncintd{d}$ being a trace on the algebra of pdos is, in general, lost. Also, the study of fluctuations in the quasi-regular framework --- although possible in principle --- is rather obnoxious, as the entire Chapter \ref{chap:perturbations} of this book bases on Formula \eqref{OPexpansion}, hence on the regularity.

\section{The Spectral Action Principle}
\label{sec:sa}

Having prepared the ground we are now ready to present the concept of the spectral action. In a seminal paper  \cite{ConnesSA} Chamseddine and Connes put forward the following postulate: 
\begin{center}
,,\textit{The physical action only depends upon the spectrum of $\DD$.}''
\end{center}
\vspace*{0.2cm}

\noindent A mathematical implementation given in \cite{ConnesSA} led to the following definition.

\begin{definition}
\label{def_SA}
Let $\ahd$ be a spectral triple and let $\Lambda > 0$. The (bosonic) \emph{spectral action} \index{spectral action} associated with $\DD$ reads:
\begin{align}
\label{SA}
S(\DD,f,\Lambda) \vc \Tr \,f \left( \abs{\DD} / \Lambda \right), \index{_zS_aa@$S(\DD,f,\Lambda) $}
\end{align}
where $f: \RR^+\to \RR^+$ is a positive function such that $f \left( \abs{\DD} / \Lambda \right)$ exists and is a trace-class operator.
\end{definition}

Let us emphasise (cf. \cite[(1.23)]{ConnesSA}) that the operator $\DD$ in Formula \eqref{SA} should be the one dressed with gauge potentials, i.e.
\begin{align*}
\DD = \DD_0 + \Ag.
\end{align*}
This is because $(\A,\H,\DD_0)$ and $(\A,\H,\DD)$ cook up equivalent geometries and we must take into account all of the available degrees of freedom when constructing the action. See also Problem \ref{The space of Dirac operators} in Chapter \ref{chap:open}.

Before we pass on to the hard mathematics hidden behind Definition \ref{def_SA}, to which this book is primarily devoted, we provide a glimpse into its physical content. 

\paragraph{The physics of the spectral action}

As pointed out in \cite{ConnesUncanny}, the spectral action has several conceptual advantages:

-- Simplicity: If $f$ is the characteristic function $\chi_{[0,1]}$, then Formula \eqref{SA} simply counts the singular values of $\DD$ smaller or equal than $\Lambda$.

-- Positivity: When $f$ is positive, the action is manifestly positive --- as required for a sound physical interpretation.

-- Invariance: The invariance group of the spectral action is the unitary group of the Hilbert space $\H$, which is vast.

The role of the parameter $\Lambda$ is to provide a characteristic cut-off scale, at which \eqref{SA} is a bare action and the theory is assumed to take a geometrical form. It should have a physical dimension of length$^{-1}$, as the operator $\DD$ does. Within the almost commutative models the value of $\Lambda$ is typically taken to be in the range $10^{13}$--$10^{17} \, \mathrm{GeV}\cdot(\hbar c)^{-1}$.

Although $f = \chi_{[0,1]}$ is a privileged cut-off function providing the announced simplicity, it seems that for physical applications one should stay more flexible and allow it to depart from the sharp characteristic function. This is desirable in the almost commutative context, as the moments of $f$ provide free parameters of the theory, which can be tuned to fit the empirical data --- cf. \cite[Chapter 12]{WalterBook}. The actual physical role of $f$ in full generality of noncommutative geometry is more obscure --- see Problem \ref{The role of f} in Chapter \ref{chap:open}.

Let $\mathcal{U}(\A) \vc \{u \in \A \: \vert \: uu^*=u^*u=\bbbone\}$ \index{_zUa@$\mathcal{U}(\A)$} and $U \vc u JuJ^{-1}$, with $u \in \mathcal{U}(\A)$. Now, if the first-order condition \eqref{first_order} is satisfied, then the operator $\DA$ transforms under the action of $U$ as follows
\begin{align*}
U \DA U^* = U (\DD + A + \epsilon J A J^{-1}) U^* = \DD + A^u + \epsilon J A^u J^{-1},
\end{align*}
with $A^u \vc uAu^* + u[\DD,u^*]$. \index{_z0A_u@$A^u$}
This ensures that  $\mathcal{U}(\A)$ is a subgroup of the symmetry group of the spectral action. \\
In the almost commutative realm, the full symmetry group of \eqref{SA} is the  semi-direct product, $\mathfrak{G} \rtimes \Diff(M)$\index{_zGgauge@$\mathfrak{G}$}, of the group $\mathfrak{G} \vc \{u JuJ^{-1} \: \vert \: u \in \mathcal{U}(\A)\}$ of local gauge transformations and the group of diffeomorphisms of the manifold $M$ --- see \cite{WalterBook} for the details.\\
If one abandons the first-order condition \eqref{first_order}, then the inner fluctuations of the spectral action form a semi-group, which extends $\mathcal{U}(\A)$ \cite{ConnesFirst}.

To conclude this paragraph we exhibit the powerfulness of the spectral action: \\
Let $M$ be a compact Riemannian 4-manifold with a spin structure, as in Example~\ref{ex:commutative} and let $\A_F = \CC \oplus \mathbb{H} \oplus \mathcal{M}_3(\CC)$, with $\mathbb{H}$\index{_zHz@$\mathbb{H}$} denoting the algebra of quaternions. Then,
\begin{align}\label{SA_SM}
\!\!\!\!\SA = \int_M \sqrt{g}  \,d^4 x \Big( & \gamma_0 +\tfrac{1}{2 \kappa_0^2} R +  \alpha_0 \, C_{\mu \nu \rho \sigma} C^{\mu \nu \rho \sigma} + \tau_0 \, R^{\star} R^{\star} + \delta_0 \, \Delta R  \notag\\
& + \tfrac{1}{4} G^i_{\mu\nu} G^{\mu\nu i} + \tfrac{1}{4} F_{\mu\nu}^{a} F^{\mu\nu a} + \tfrac{1}{4} B^{\mu\nu}B_{\mu\nu} \\
& + \tfrac{1}{2} \vert D_{\mu} H \vert^2 - \mu_0^2 \vert H \vert^2 + \lambda_0 \vert H \vert^4 - \tfrac{1}{12} R \vert H \vert^2 \color{black} \Big) + \Oinf(\Lambda^{-1}).\notag
\end{align}
One recognises respectively: the cosmological constant, the Einstein--Hilbert term, the modified gravity terms, the dynamical terms of gauge bosons and the Higgs sector. The coefficients $\gamma_0$, $\kappa_0^{-2}$, $\alpha_0$ etc. depend on the powers of the energy scale $\Lambda^4, \Lambda^2, \Lambda^0$ and also of the moments of the cut-off function $f$: 
$$f_4 = \int_0^{\infty} x^3 f(x) dx,\, \,\,\,f_2 = \int_0^{\infty} x f(x) dx \,\,\,\text{ and }\, f(0).$$ 
Furthermore, they depend upon the fermionic content of the model, which is fixed by the choice of the Hilbert space $\H_F$ and the matrix $\DD_F$, which encodes the Yukawa couplings of elementary particles (cf. \cite{Almost1,WalterBook}).

To relish the full panorama of spectral physics, one supplements Formula \eqref{SA} with the \emph{fermionic action}\index{fermionic action} $S_F \vc \<J \psi, \DD \psi \>$, for $\psi \in \H^+ = \tfrac{1}{2}(1+\gamma) \H$ (cf. \cite[Definition 7.3]{WalterBook} and \cite{Barrett,Almost1}). Although the physical content of $S_F$ is exciting, its mathematics is rather mundane. Therefore, shamefully, we shall ignore it in the remainder of the book.

\paragraph{Many faces of the asymptotic expansions}

How on earth the simple Formula \eqref{SA} can yield the knotty dynamics of the full Standard Model and gravity? Actually, there is no mystery --- just asymptotics. Roughly (see Section \ref{sec:exp_SA} for the full story): $S(\DD,f,\Lambda)= \int_0^\infty \Tr\,(e^{-s \vert\DD\vert/\Lambda })\, d\phi(s)$, when $f$ is the Laplace transform of a measure $\phi$. But, $\Tr\,e^{-t P}$ is the celebrated heat trace associated with a (pseudo)differential operator $P$. The latter is known to enjoy a small-$t$ asymptotic expansion \cite{Gilkey2,VassilevichReport}: For instance, when $P = \Dslash^2$ and $d = \dim M$
\begin{align}
\label{eq:exp for D2}
\Tr e^{-t\,\Dslash^2} \tzero\,\sum_{k=0}^\infty \,a_{k}(\Dslash^2) \,t^{(d-k)/2}.
\end{align}
The alchemy is concealed in the coefficients $a_k$ known under the nickname of Seeley--deWitt coefficients. They are expressible as integrals over $M$ of local quantities polynomial in the curvature of $M$ and, if we happen to work with $\Dslash_E$ (cf. Proposition \ref{prop:PDO}), in the curvature of $E$. 

The Seeley--deWitt coefficients of a differential operator are \emph{local}\index{locality}. The principle of locality lies at the core of the concept of a field, which asserts that every point of spacetime is equipped with some dynamical variables \cite{Haag}. Concretely, a local quantity in quantum field theory is precisely an integral over the spacetime manifold $M$ of some frame-independent smooth function on $M$, which is polynomial in the field and its derivatives.

When $P$ is a pseudodifferential operator, an asymptotic expansion similar to \eqref{eq:exp for D2} is still available (cf. Appendix \ref{sec: the heat asymptotics of exp(-tP)}). On the other hand, the coefficients $a_k(P)$ are, in general, \emph{nonlocal} --- see \cite{GilkeyGrubb} for an explicit example. This means that pdos over classical manifolds belong already to the noncommutative world, which is prevailingly nonlocal.

In the full generality of noncommutative geometry the existence of an asymptotic expansion \eqref{eq:exp for D2} is no longer guaranteed. Nevertheless, one may hope to deduce it from a meromorphic extension of the associated zeta function $\zeta_D$. The bulk of Chapter \ref{chap:asymptotic} is devoted to this enjoyable interplay.\\ 
If $\ahd$ is a regular spectral triple with a simple dimension spectrum one can hope to obtain the following formula (see Corollary \ref{cor:SA_simple} and Section \ref{sec:consequences}):
\begin{align}\label{SA_simple}
\SA \!=\! \!\!\!\!\sum_{\alpha \in \Sd^+}\!\! \!\!\Lambda^{\alpha} \!\!\int_0^{\infty}\!\! \!\!x^{\alpha - 1} f(x) dx  \,\ncint \vert D \vert^{-\alpha} + f(0) \zeta_{D}(0) + \oinf(1).
\end{align}
Astonishingly, the abstract Formula \eqref{SA_simple} does yield the Standard Model action \eqref{SA_SM}, once the suitable almost commutative spectral triple has been surmised.

\smallskip

\label{asymptotics vs exact}
A careful reader have spotted the $\oinf(1)$ term in \eqref{SA_simple} and (s)he might wonder what does this symbol hide. Actually, if one has at one's disposal the \emph{full} asymptotic expansion (see Definition \ref{def:asym}) of the form \eqref{eq:exp for D2} one can expand Formula \eqref{SA_simple} to the order $\Lambda^{-N}$ for arbitrarily large $N$ (vide Theorem \ref{thm:f_expansion}). On the other hand, one must be aware of the fact that the explicit computation of the coefficients $a_k(P)$ for $k > 4$ is arduous, even if $P$ is a friendly differential operator. For a Laplace type operator general formulae are available up to $a_{10}$ --- see \cite{Avramidi} and \cite{VassilevichReport}. From the perturbative standpoint one might argue that the terms in the action, which vanish at large energies can safely be neglected. Then, however, one risks overlooking some aspects of physics. For example, it has been argued \cite{SitarzNeutrino} that a contribution of the order $\Lambda^{-2}$ might affect neutrino physics and the study of cosmic topology requires the knowledge of all $a_k$'s. To the latter end, several authors \cite{MarcolliBall,ConnesFLRW,ConnesUncanny,Lai-Teh,MarcolliCosmoBook,Marcolli1,Marcolli2,Piotrek1,TehPHD,Teh2013} employed the Poisson summation formula, which we discuss in Section~\ref{sec:Poisson}.

The summation formulae (sometimes dubbed not quite correctly ``nonperturbative methods'') give the spectral action modulo a reminder $\OO_\infty(\Lambda^{-\infty})$, which is usually disrespected. However, the devil often sits in the details: There exists an extensive catalogue of physical phenomena, which are `exponentially small'. An enjoyable account on this issue was produced by John P. Boyd \cite{Boyd}.

\paragraph{Highlights on the research trends}

The literature on the spectral action is abundant and growing fast. On top of the references already quoted, we list below some highlights on current research. The list is admittedly subjective and far from being complete --- we would be pretentious to claim to possess full knowledge on the topic.

To consider gravity as a low-energy effect of quantised matter fields, rather than a fundamental force is a long-standing idea: Yakov Zeldovich considered the cosmological constant as an effect of quantum matter fluctuations and Andrei Sakharov suggested that the structure of the quantum vacuum encodes the Einstein--Hilbert action (see \cite{NovozhilovV} for a short review).

The heat kernel methods were successfully applied in 1960's by Bryce S. deWitt in order to derive a series expansion for the Feynman propagator of quantum fields on a curved background. A more recent summary of the spectral approach to quantum field theory can be found in the textbooks \cite{AnalyticQFT} and \cite{ZetaQFT}.

But the upshot of the spectral action is that it provides an exact --- i.e. truly non-perturbative --- formula for the bare action at the unification scale. Expression \eqref{SA} as it stands is nonlocal, which means that it encompasses both local and global (topology, in particular) aspects of the physical world. For an approach balanced between the local and global aspects of the spectral action, which also goes beyond the weak field approximation, see \cite{ILVGlobal,ILVWeak}.

The impact of a boundary of a manifold on the spectral action was studied in \cite{CC2007,CC2011,ILVTorsion}. The main difficulty is related to the choice of the boundary conditions for the operator $\DD$, which would guarantee a selfadjoint extension, and then to define a compatible algebra $\A$, see \cite{Tadpole1,Tadpole}.

Some aspects of spectral geometry, including the heat trace expansion, on manifolds with conical singularities were studied in \cite{Lescure} (cf. also \cite{LeschFuchs}).

The role of the torsion in the spectral context has also been explored \cite{ILVTorsion,PS1,PS2,SitZaj}. Surprisingly enough, the spectral action for a manifold with torsion turns out to embody the Holst action, well known in the Loop Quantum Gravity approach \cite{PS3}.

On the physical side, a programme on the inflationary scenarios compatible with the spectral action has been launched \cite{KLSW,MarcolliCosmSurv,MarcolliSACosmo,Marcolli1,Marcolli2,NS,MairiReview,Sakellariadou}.

Also, the spectral action was approached via quantum anomalies and Higgs--dilaton interactions \cite{AKL,AL}.

\paragraph{Variations on the definition}

Usually, one encounters the operator $\DD^2$ instead of $\abs{\DD}$ in Formula \eqref{SA} for the spectral action. The reason for which $\DD^2$ is favoured in the literature is that for a commutative spectral triple $\Dslash^2$ is a differential operator (of Laplace type), whereas $\abs{\DD}$ is a priori only a pseudodifferential one. In the full generality of a noncommutative geometry, however, we are bound anyway to work with abstract operators, which are not even classical pdos and working with $\abs{\DD}$ allows for more flexibility. Clearly, we can restore the presence of $\DD^2$ by taking $f(x) = g(x^2)$ (cf. Remark \ref{rem:D_vs_Dsq}), but caution is needed, as the cut-off functions $f$ and $g$ belong to different classes (see Section \ref{subsec:Laplace}).

One could also consider the action of the form $\Tr f(\DD/\Lambda)$. If $f$ is even, as assumed in \cite{ConnesSA}, this is equivalent to \eqref{SA}. But one can take into account the asymmetry of the spectrum of $\DD$: When $f$ is odd, then $f(\DD/\Lambda) = \DD \abs{D}^{-1} g (\abs{\DD} / \Lambda)$ for an even function $g$, see for instance \cite{Piotrek1}. Technically, the parity of $f$ is not innocent and does play a role in the Poisson formula, cf. Section \ref{sec:Poisson}.

In \cite{MoyalSA} a formulation of the spectral action for nonunital spectral triples has been proposed, see also \cite{CCscale,Wulkenhaar}. Alternatively, one can simply consider a compactified spacetime manifold --- see, for instance \cite{ConnesFLRW,ConnesUncanny}, for the casus of Friedman--Lema\^{i}tre--Robertson--Walker universe. 

It should also be recognised that the spectral action (and its entire dwelling) works under the assumption of a positive-definite metric. Hence, the action \eqref{SA} is an Euclidean one and its physical applications require a Wick rotation \cite{LizziWick}. A truly Lorentzian approach is a challenging programme --- cf. Problem \ref{sec:Lorentz} in Chapter \ref{chap:open}.

\smallskip

We conclude with a derived notion of action: When the triple is even with a grading $\gamma$, we define the {\it topological spectral action} by 
\begin{align}
\label{toplogical action}
S_{\text{top}}(\DD,f,\Lambda)\vc \Tr \,\gamma \, f(\vert \DD \vert/\Lambda).
\end{align}
\index{spectral action!topological}We us now show as in \cite[Section 10.2.3]{WalterBook} that $S_{\text{top}}(\DD,f,\Lambda)=f(0)\text{ index} (\DD)$:
\\
The McKean--Singer formula \index{McKean@McKean--Singer formula} $$\text{index} (\DD) =\Tr \,\gamma \,e^{-t \,\DD^2}$$ holds true for any $t>0$: Indeed, let $\H_n$ be the eigenspace associated to the eigenvalue $\la_n$ of $\DD$ and $P_n$ be the eigenprojection on $\H_n$. Then,
\begin{align*}
\Tr \,\gamma \,e^{-t\, \DD^2}=\Tr(\gamma \,P_0)+\sum_{\lambda_n >0} \,e^{-t\lambda_n^2}\,\Tr(P_n-P_{-n})=\Tr(\gamma \,\,P_0)=\text{index }(\DD).
\end{align*}
Thus, if the function $x\in \RR^+ \mapsto f(x^{1/2})$ is a Laplace transform of a finite measure $\phi$, so that $f(x)=\int_0^\infty e^{-t \,x^2} \,d\phi(t)$, then the topological spectral action is simply 
$$
S_{\text{top}}(\DD,f,\Lambda)=\int_0^\infty \Tr \gamma \,e^{-t\,\DD^2/\Lambda}\,d\phi(t)=\text{index} (\DD) \int_0^\infty d\phi(t)=f(0)\text{ index} (\DD).
$$
Similarly its fluctuation is $S_{\text{top}}(\DD_\Ag,f,\Lambda)=f(0) \text{ index} (\DD_\Ag)$.

%
%
%

\chapter{The Toolkit for Computations}
\label{chap:tools}

\abstract{In this chapter we introduce a number of mathematical tools, which will prove useful in the spectral action computations. Firstly, we consider the basic properties of some spectral functions via the functional calculus and general Dirichlet series. Next, we study the interplay between these provided by the functional transforms of Mellin and Laplace. The remainder of the chapter is devoted to various notions from the theory of asymptotic behaviour of functions and distributions.}
\medskip

Before we start off let us recall the big-$\OO$ and small-$\oo$ notation: 

Let $X$ be a topological space and let $x_0$ be a non-isolated point of $X$. Let $U$ be a neighbourhood of $x_0$ and $V = U \setminus \{x_0\}$ -- a \emph{punctured neighbourhood}\index{punctured neighbourhood} of $x_0$. For two functions $f,g : V \to \CC$ we write
\begin{align*}
f(x) & = \OO_{x\to x_0} \left( g(x) \right) \; \text{ if }\,\limsup_{x \to x_0} \abs{f(x)/g(x)} < \infty,
\index{_zOoz@$\OO_{x_0}$} \\
f(x) & = \oo_{x\to x_0} \left( g(x) \right) \;\, \text{ if }\,\lim_{x \to x_0} \abs{f(x)/g(x)} = 0.
\index{_zOo@$\protect\oo_{x_0}$}
\end{align*}
We use $\OO_{x_0} \left( g(x) \right)$ and $\oo_{x_0} \left( g(x) \right)$ when no mistake concerning the variable can arise.

We will mostly be concerned with the cases $X = \RR^+ \cup \{\infty\}$ or $X = \CC \cup \{\infty\}$ and $x_0 = 0$ or $x_0 = \infty$. \\
The notations $\OO_0(x^\infty)$ and $\OO_{\infty}(x^{-\infty})$ \index{_zOozz@$\OO_{\infty}(x^{-\infty})$} will stand, respectively, for $\OO_{x\to 0}(x^k)$ and $\OO_{x\to +\infty}(x^{-k})$, for all $k>0$, and similarly for $\oo$.

\begin{example}
We have $\sin x = \Oinf(1)$, but $\sin x \neq \oinf(1)$ and $1 \neq \Oinf(\sin x)$.
\\
For any $n > 0$, $\log^n x = \oz(x^{-\epsilon}) = \oinf(x^{\epsilon})$ for all $\epsilon > 0$.
\hfill$\blacksquare$
\end{example}
For further examples and properties of $\OO$ and $\oo$ symbols see, e.g. \cite[Section 1.2]{Estradabook}.

\section{Spectral Functions}
\label{sec:spec}

The spectral action \eqref{SA} is par excellence a \emph{spectral function}\index{spectral function}, i.e. a (possibly complex valued) function on the spectrum of some operator.

We now take a closer look at various spectral functions from the perspective of general Dirichlet series \cite{Hardy}. As this part only involves the properties of $\DD$ and not the full force a spectral triple, we shall work with a general operator $H$ acting on an infinite dimensional separable Hilbert space $\H$. We shall need the following classes of positive densely defined unbounded operators, for $p \in \RR^+$,
\begin{align*}
\Tp \vc \big\{ H \in \L(\H) \; \big\vert \; H > 0\, \text{ and }\, \forall \epsilon > 0 \; \Tr\, H^{-p-\epsilon} < \infty,\text{ but } \Tr\, H^{-p+\epsilon} = \infty \big\}.\index{_zTp@$\Tp$}
\end{align*}

If $H \in \Tp$ for some $p$, then it is invertible and $H^{-1} \in \KKK(\H)$. Moreover, with $\L^r(\H) \vc \{T \in \L(\H) \, \vert \, \Tr \abs{T}^r < \infty \}$\index{_zLp@$\L^r(\H)$} (the so-called $r$-th Schatten ideal\index{Schatten ideal}), we have
\begin{align*}
p = \inf \{ r \in \RR^+ \,\, \vert\, \, H^{-1} \in \L^r(\H)\}.
\end{align*}

If $\ahd$ is a not finite (i.e. with $\dim \H = \infty$ --- cf. Example \ref{ex:noncommutative}) $p$-dimensional spectral triple then $\awD \in \Tp$ and $D^2 \in \Tpp{p/2}$ (cf. \eqref{absD}). If $\ahd$ is a regular spectral triple, then also $\awDA \in \Tp$ and $D_{\Ag}^2 \in \Tpp{p/2}$ with a suitable fluctuation $\Ag$ --- see Section \ref{sec:fluc} and Chapter \ref{chap:perturbations}. Note, however, that the requirement $\DD \in \Tp$ for some $p$ rules out the finite spectral triples, since $\Tp \cap \L^1(\H) = \emptyset$. The latter situation is trivial from an analytic point of view as $\Tr f(\awD/\Lambda)$ is finite and explicitly computable for any bounded measurable function $f$ and any $\Lambda > 0$. \label{finite triple case}

As the primary example of a spectral function let us consider
\begin{align*}
N_H(\Lambda) \vc \sum_{n: \lambda_n(H) \leq \Lambda} M_n(H), \text{ for } \Lambda > 0,\index{_zNH@$N_H(\Lambda)$}
\end{align*}
titled the \emph{spectral growth function}. We get  $$N_{\abs{D}}(\Lambda) = \Tr \chi_{[0,\Lambda]}(\abs{D}) = S(D, \chi_{[0,1]}, \Lambda),$$ which is the archetype of the spectral action \cite{ConnesAction}.

Via the unbounded functional calculus (see, for instance \cite[Chapter 13]{Rudin}) we define an operator 
$$f(H) \vc \int_{\lambda \in \spec(H)} f(\lambda)\,  d P_\lambda(H) = \sum_{n=0}^{\infty} f(\lambda_n(H)) \,  P_n(H),$$\index{_zPH@$P_n(H)$}with the spectral projections $P_n(H) \vc P_{\lambda_n}(H)$, for any bounded Borel (possibly complex) function $f$ on $\RR^+$. The operator $f(H)$ is trace-class if and only if
\begin{align}\label{SA_H}
\Tr f(H) = \sum_{n=0}^{\infty} M_n(H) \, f(\lambda_n(H)) < \infty.
\end{align}
More generally, if $H\in \Tp$, $K$ is any operator in $\B(\H)$ and $\Tr f(H) < \infty$, then 
$\Tr K f(H) = \sum_{n=0}^{\infty} \Tr ( P_n(H) K ) \,f( \lambda_n(H)) < \infty$. In particular, for $f(x) = x^{-s}$ with $\Re(s) > p$, we obtain the \emph{spectral zeta function}\index{spectral zeta function}
\begin{align}
\label{zetaKH}
\zKH(s) \vc \Tr \, K H^{-s} = \sum_{n=0}^{\infty} \Tr ( P_n(H) K ) \,\lambda_n(H)^{-s}, \text{ for } \Re(s) > p.\index{_a3zetaTDH1@$\zKH(s)$}
\end{align}
If $H$ is not positive, but nethertheless $\vert H \vert \in \Tp$ for some real number $p$, we define $\zKH(s) \vc \Tr \,K\vert h\vert ^{-s}$ in accordance with the notation adopted in \eqref{zetaB}. \\Moreover if $K = \bbbone$, we shall simply write $$\zH \vc \zeta_{\bbbone,H}.$$\index{_a3zetaTH@$\zH(s)$}
The function $\zKH$ will often admit a meromorphic extension to a larger region of the complex plane, in which case it is customary to denote the extension with the same symbol. Note that $H\in \Tp$ guarantees that $\zKH$, if meromorphic, has at least one pole located at $s = p$. \label{zeta_one_pole}

Yet another propitious spectral function is the \emph{heat trace}\index{heat trace}, which results from $f(x) = e^{-t x}$ with $t>0$,
\begin{align}
\label{heat_KH}
\hKH \vc \sum_{n=0}^{\infty} \Tr ( P_n(H) K ) \,e^{-t \lambda_n(H)}.
\end{align}
The operator $e^{-t \,H}$ is called the \emph{heat operator}\index{heat operator}. This nickname comes from physics: When $-H=\Delta$ is the standard Laplacian on a Riemannian manifold then $e^{-t\,H} \psi$ solves the heat equation, $\partial_t \phi +H\phi=0$, with the initial condition $\phi(0)=\psi$.

On the side, we note that for any positive (unbounded) operator $P$ the function $t\in \RR^+\mapsto e^{-t\,P}\in \B(\H)$ admits a holomorphic extension to the right half-plane via $
e^{-t\,P}=\tfrac{1}{i2\pi}\int_{\lambda\in\C} e^{-t\,\lambda}(P-\lambda)^{-1}\,d\lambda$,
where $\C$ is a contour around $\RR^+$. This can be proved by differentiating under the integral --- see Appendix \ref{About the exponential of an unbounded operator}. 
It implies, in particular, that if $e^{-t \,H}$ is trace-class for any $t > 0$, then the function $t \mapsto \hH$ is smooth on $(0,\infty)$ and so is $t \mapsto \hKH$ for any $K \in \B(\H)$. We will give an alternative proof of this fact in Proposition \ref{prop:KH}.

When $H\in \Tp$, both heat traces and spectral zeta functions (for $\Re(s) > p$) are instances of \emph{general Dirichlet series} \index{Dirichlet series} \cite{Hardy}. The latter are defined as
\begin{align}\label{gen_Dirichlet_series}
\sum_{n=0}^\infty a_n \,e^{-s\,b_n},
\end{align}
for $a_n\in \CC$, $b_n \in \RR$ with $\lim_{n \to \infty} b_n = +\infty$ and some $s\in\CC$. The region of their convergence constitutes a half-plane $\Re(s) > L$ for some $L \in \RR$, the latter being called the \emph{abscissa of convergence}
\index{abscissa of convergence}. In contradistinction to Taylor series, the domain of absolute convergence of \eqref{gen_Dirichlet_series}, which is also a half-plane $\Re(s) > L'$, can be strictly smaller, i.e. $L' \geq L$ (cf.  \cite[Section 11.6]{Apostol}). The regions of convergence of general Dirichlet series can be determined from the following theorem:

\begin{theorem}[{\cite[Theorem 7 with the footnote]{Hardy}}] \label{thm:abscissa}
If the series $\sum_{n=0}^\infty a_n$ is not convergent, then the series \eqref{gen_Dirichlet_series} converges for $\Re (s)>L$ and diverges for $\Re (s)<L$, with
\begin{align}\label{L}
L=\limsup_{n\to\infty}\, b_n^{-1}\,\log\abs{a_0+\ldots+a_n} \geq 0.
\end{align}
\end{theorem}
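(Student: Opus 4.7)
The plan is to proceed by Abel summation. Setting $A_{-1} \vc 0$ and $A_n \vc a_0 + \ldots + a_n$, the key identity
\begin{align*}
\sum_{n=0}^N a_n\, e^{-s\,b_n} \,=\, A_N\, e^{-s\,b_N} \,+\, \sum_{n=0}^{N-1} A_n\,(e^{-s\,b_n} - e^{-s\,b_{n+1}}),
\end{align*}
together with the elementary estimate $\abs{e^{-s\,b_n}-e^{-s\,b_{n+1}}} \leq \abs{s}\int_{b_n}^{b_{n+1}} e^{-\sigma u}\,du$ (with $\sigma\vc\Re(s)$), will be the workhorse for both directions. As a preliminary step I would first establish $L \geq 0$: were $L < 0$, the convergence half-plane (proved below) would contain $s = 0$, yielding convergence of $\sum_n a_n$ and contradicting the hypothesis. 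Consequently, for any $\sigma > L$ one can always pick $L' \in (\max(L,0),\sigma)$.

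For the convergence side $\sigma > L$, the $\limsup$ definition gives $\abs{A_n}\leq e^{L'\,b_n}$ for $n$ sufficiently large. Since $L'\geq 0$, for $u\geq b_n$ we have $e^{L'\,b_n}\,e^{-\sigma u}\leq e^{(L'-\sigma)u}$, so
\begin{align*}
\abs{A_n}\cdot \abs{e^{-s\,b_n}-e^{-s\,b_{n+1}}} \,\leq\, \abs{s}\int_{b_n}^{b_{n+1}} e^{(L'-\sigma)u}\,du.
\end{align*}
Telescoping these integrals over $n$ yields an absolutely convergent series bounded by $\abs{s}\,(\sigma-L')^{-1}$, while the boundary term $A_N\,e^{-s\,b_N} = \OO_\infty\!\big(e^{(L'-\sigma)b_N}\big)$ vanishes as $N\to\infty$. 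Hence the partial sums form a Cauchy sequence in $\CC$, securing convergence.

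For the divergence side $\sigma < L$, suppose for contradiction that the series converges at some $s_0 = \sigma_0 + it_0$ with $\sigma_0 < L$, and denote its uniformly bounded partial sums by $B_N(s_0)$. Inverting the Abel summation with $c_n \vc a_n\,e^{-s_0\,b_n}$ and $d_n \vc e^{s_0\,b_n}$ produces
\begin{align*}
A_N \,=\, B_N(s_0)\,e^{s_0\,b_N} \,-\, \sum_{n=0}^{N-1} B_n(s_0)\,(e^{s_0\,b_{n+1}} - e^{s_0\,b_n}).
\end{align*}
Three subcases according to the sign of $\sigma_0$: if $\sigma_0 > 0$, the integral bound yields $\abs{A_N}\leq C\,e^{\sigma_0\,b_N}$ whence $L\leq\sigma_0$; if $\sigma_0 = 0$, one gets $\abs{A_N} = \OO_\infty(b_N)$ and consequently $L\leq 0 \leq \sigma_0$; if $\sigma_0 < 0$, the displayed sum converges absolutely (since $\int^\infty e^{\sigma_0 u}\,du<\infty$) while $B_N(s_0)\,e^{s_0\,b_N}\to 0$, forcing $A_N$ itself to converge --- contradicting the divergence of $\sum_n a_n$. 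In every case one reaches $\sigma_0 \geq L$, ruling out convergence when $\sigma < L$.

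The main technical obstacle is the absolute convergence on the $\sigma>L$ side when the increments $b_{n+1}-b_n$ may be unbounded: a naive term-by-term bound $e^{(L'-\sigma)b_n}(b_{n+1}-b_n)$ need not be summable. The fix is to absorb the factor $e^{L'\,b_n}$ into the integral via the monotonicity $e^{L'\,b_n}\leq e^{L'\,u}$ for $u\geq b_n$, which is legitimate only when $L'\geq 0$ --- precisely why the preliminary reduction $L\geq 0$ must be settled first.
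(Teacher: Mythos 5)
Your proposal supplies a proof where the paper gives none: the statement is quoted from Hardy's book, and the only argument the paper itself offers is the one-line remark that $L\geq 0$ because the series diverges at $s=0$. Your Abel (partial) summation argument is precisely the classical route to this theorem, and both main directions are sound: for $\Re(s)>L$ the absorption $e^{L'\,b_n}\leq e^{L'\,u}$ for $u\geq b_n$ (valid since $L'\geq 0$) makes the telescoped integrals converge and kills the boundary term $A_N\,e^{-s\,b_N}$; for $\Re(s)<L$ your inverse summation with the three subcases $\sigma_0>0$, $\sigma_0=0$, $\sigma_0<0$ correctly yields $L\leq\sigma_0$ in the first two cases and convergence of $\sum_n a_n$ (a contradiction) in the third.

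The one step that does not stand as written is the preliminary reduction $L\geq 0$. You justify it by saying that if $L<0$ the ``convergence half-plane proved below'' would contain $s=0$; but the convergence argument you then give takes $L'\in(\max(L,0),\sigma)$, i.e. it only establishes convergence for $\sigma>\max(L,0)$, a region which never contains $s=0$. So the reduction, as phrased, appeals to a case of the convergence statement that your own proof does not cover --- a circularity. The repair is immediate and needs no half-plane at all: if $L<0$, pick $L'\in(L,0)$; the limsup definition gives $\abs{A_n}\leq e^{L'\,b_n}$ for all large $n$, and since $b_n\to+\infty$ and $L'<0$ this forces $A_n\to 0$, so $\sum_n a_n$ converges, contradicting the hypothesis. (This is in substance the paper's remark, which is legitimate there because the paper derives $L\geq0$ from the already-quoted theorem; inside a proof of the theorem you need this direct argument instead.) With that one-line substitution your proof is complete.
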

The non-negativity of $L$ follows from the fact that for $s=0$ the expression \eqref{gen_Dirichlet_series} equals to $\sum_{n=0}^\infty a_n$, which is not convergent. To compute the abscissa of absolute convergence one simply needs to take $\abs{a_i}$ in Formula \eqref{L}.\\
For $K, H > 0$, we always have $L' = L$ in both Dirichlet series \eqref{heat_KH} and \eqref{zetaKH}. 

\medskip

For given operators $K$ and $H$ the associated heat trace and zeta functions are closely interrelated, which will be explored in great detail in Chapter \ref{chap:asymptotic}. We now offer the first glimpse into their intimate relation.

Theorem \ref{thm:abscissa} implies that the operator $K e^{-t H}$ is trace-class for all $t>0$ iff
\begin{align}\label{heat_abscissa}
\log \big\vert \sum_{n=0}^{N} \Tr ( P_n(H) K ) \big\vert = \oo_{N \to \infty}(\lambda_N(H)).
\end{align}
On the other hand, $\zKH$ is defined (i.e. \eqref{zetaKH} converges for $\Re(s) > p$ for some $p\geq 0$) if and only if (cf. \cite[Proposition 2]{HeatEZ})
\begin{align}
\label{zeta_abscissa}
\sum_{n=0}^{N} \Tr ( P_n(H) K ) = \OO_{N \to \infty}(\lambda_N(H)^r), \quad \text{ for all } \; r > p.
\end{align}

Formula \eqref{zeta_abscissa} implies \eqref{heat_abscissa}, but the converse is not true. As a counterexample consider $K = \bbbone$ and $\lambda_n(H) = \log^2 n$, $M_n(H) = 1$. 
In particular, if a spectral triple $\ahd$ is $\theta$-summable with $t_0 = 0$, but not finitely summable (recall p.~\pageref{theta_summ}), the heat traces $\Tr T e^{-t \awD}$ exist, while the spectral zeta functions $\zTD$ do not.


Given the abscissa of convergence of the zeta function $\zKH$ we can easily deduce the behaviour of $\hKH$ as $t \downarrow 0$.

\begin{proposition}
\label{prop:KH}
Let $H\in \Tp$ and $K \in \B(\H)$. The function $t \mapsto \hKH$ is smooth on $(0,\infty)$ and $\hKH=\Oz(t^{-r})$, for all $r > p$.
\end{proposition}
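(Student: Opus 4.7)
The plan is to exploit the spectral decomposition of $H$ together with two elementary analytic facts: that $x^r e^{-tx}$ is uniformly bounded on $\RR^+$, and that $H \in \Tp$ gives $\Tr H^{-r} < \infty$ for every $r > p$.

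First, for the $\OO_0(t^{-r})$ estimate, I would start from the pointwise inequality
\begin{align*}
e^{-tx} \leq r^r e^{-r} (tx)^{-r}, \qquad \text{for all } x > 0, \; t > 0, \; r > 0,
\end{align*}
which follows by maximising $y \mapsto y^r e^{-y}$ at $y = r$ and setting $y = tx$. Applying this with $x = \lambda_n(H) > 0$ and using \eqref{heat_KH} gives
\begin{align*}
\abs{\hKH} \leq \norm{K} \sum_{n=0}^\infty M_n(H)\, e^{-t\lambda_n(H)} \leq \norm{K}\, r^r e^{-r}\, t^{-r}\, \Tr H^{-r},
\end{align*}
where the right-hand side is finite for every $r > p$ since $H \in \Tp$. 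This immediately yields $\hKH = \OO_0(t^{-r})$ for all $r > p$ and, as a byproduct, the finiteness of the trace for every $t > 0$.

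For the smoothness on $(0,\infty)$, I would argue that $t \mapsto K e^{-tH}$ is $C^\infty$ in the trace-norm topology, hence its trace is $C^\infty$. Concretely, the $n$-th formal derivative of $K e^{-tH}$ is $(-1)^n K H^n e^{-tH}$. For any $t_0 > 0$, split
\begin{align*}
K H^n e^{-tH} = \bigl( K H^n e^{-t_0 H/2} \bigr) \cdot e^{-(t-t_0/2)H}, \qquad t > t_0,
\end{align*}
where the first factor is bounded (since $x^n e^{-t_0 x/2}$ is bounded on $\RR^+$ and $K \in \B(\H)$), and the second factor is trace-class for $t > t_0/2$ by the estimate of the previous paragraph applied with $K = \bbbone$ and any $r > p$. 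A standard dominated convergence argument (bounding the incremental ratios uniformly in $t$ over compact subsets of $(t_0,\infty)$ using $\norm{H^{n+1} e^{-\tau H/2}} < \infty$ for $\tau$ in the compact set) allows one to interchange differentiation with the trace, yielding
\begin{align*}
\partial_t^n\, \hKH = (-1)^n\, \Tr K H^n e^{-tH}, \qquad n \in \N,
\end{align*}
which is finite on $(0,\infty)$. Since $t_0 > 0$ was arbitrary, $t \mapsto \hKH$ is smooth on $(0,\infty)$.

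The only mild obstacle is the justification of differentiating under the trace; everything else is book-keeping with spectral projections. This obstacle is harmless because, once the trace-class factorisation above is in hand, the uniform estimate on compact subintervals of $(0,\infty)$ is automatic.
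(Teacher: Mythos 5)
Your proof is correct and follows essentially the same route as the paper's: the bound $\sup_{y>0} y^r e^{-y} = r^r e^{-r}$ combined with $\Tr H^{-r}<\infty$ for $r>p$ gives the $\OO_0(t^{-r})$ estimate and trace-class property, and smoothness is obtained by differentiating under the trace after splitting off a bounded factor $K H^n e^{-\epsilon H}$ against a trace-class heat factor. If anything, your dominated-convergence justification of the interchange of $\partial_t^n$ with $\Tr$ is spelled out slightly more explicitly than in the paper, but the mechanism is identical.
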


\begin{proof}
Let us consider the function $x \mapsto x^{\alpha} e^{-x}$, which is bounded on $\RR^+$ for any $\alpha \geq 0$. For any $x >0$, we thus have $x^{\alpha} e^{-x} \leq c(\alpha)$, 
with some positive constant $c(\alpha)$. By invoking this inequality with $x/2$ and multiplying it by $e^{-x/2}$, we obtain $x^{\alpha} e^{-x} \leq 2^{\alpha}c(\alpha)  \,e^{-x/2}$.\\
Let us fix any $r > p \geq 0$ and use $x^{\alpha} e^{-x} \leq c(\alpha)$ again. For any $r > p$ we have
\vspace*{-0.1cm}
\begin{align*}
0 \leq t^{r} \hH & = \sum_{n=0}^{\infty} M_n(H) \, t^{r} e^{-t \lambda_n(H)}   \leq c(r) \sum_{n=0}^{\infty} M_n(H) \lambda_n(H)^{-r} = c(r)\, \zH(r) < \infty.
\vspace*{-0.1cm}
\end{align*}
This shows in particular that $e^{-t\,H}$ is trace-class for any $t>0$. \\Moreover, the function $t \mapsto t^{r} \,\hH$ is bounded for $t \in [0,\infty)$ for every $r > p$ and the limit superior of $t^{r} \hH$ as $t \downarrow 0$ exists and is finite, giving $\hH=\Oz(t^{-r})$ for all  $r > p$.

Furthermore, for any $t>0$ and any $k \in \N$ we have
\begin{align*}
\vert\tfrac{d^k}{d \, t^k} \Tr K\,e^{-t\,H}\vert &= \vert\Tr \, K\,H^k e^{-t H}\vert \leq \norm{K}\Tr \,H^k \,e^{-t H}  \leq \norm{K}2^k c(k)  \, \Tr \, e^{-t\, H/2} < \infty,
\end{align*}
which shows that the function $t\to \Tr K e^{-t\,H}$ is indeed smooth on $(0,\infty)$. \\
Finally, $0 \leq t^{r} \vert \hKH \vert \leq \norm{K} t^{r} \hH$ and one concludes as above.
\hfill $\Box$
\end{proof}

\begin{remark}
\label{rem:bounded_from_below}
The heat trace \eqref{heat_KH} can also be defined when $H$ is a selfadjoint operator bounded from below with a compact resolvent. In this case one writes, for any $t>0$,
\begin{align*}
\hKH = \sum_{n=0}^{N} \Tr ( P_n(H) K ) \,e^{-t \,\lambda_n(H)} + \sum_{n=N+1}^{\infty} \Tr ( P_n(H) K ) \, e^{-t\, \lambda_n(H)},
\end{align*}
for an $N \in \N$ such that $\lambda_{N+1}(H) > 0$. Since the first finite sum is manifestly a smooth bounded function on $\RR^+$, Proposition \ref{prop:KH} holds.
\hfill$\blacksquare$
\end{remark}
In particular, in the context of spectral triples we have (recall p. \pageref{absD})
\begin{align}
\hD = \Tr e^{-t \,\vert D \vert} + (1- e^{-t}) \dim \ker \DD = \Tr_{(\bbbone-P_0)\H}\, e^{-t\,\vert\bD\vert} + \dim \ker \DD. \label{heat_ker} 
\end{align}
Also, for $\Re(s) > p$, we have $$\zeta_{\bD} (s)= \Tr_{(\bbbone-P_0)\H}\,\bD^{\,-s} = \zeta_D(s) - \dim \ker \DD. \index{_a3zetaTHDb@$\zeta_{\bD}$}$$

\begin{example}
\label{ex:heat_S1}
Let $\Dslash$ be the Dirac operator associated with the trivial spin structure on $S^1$ equipped with the round metric (cf. Appendix \ref{sec:spheres}). Then,
\begin{align*}
\zeta_D(s) = \sum_{n=0}^{\infty} M_n(\abs{D}) \, \lambda_n(\vert D\vert)^{-s} = 1^{-s} + 2 \sum_{n=0}^{\infty} n^{-s} = 1 + 2 \zeta(s),
\end{align*}
where $\zeta$\index{_a3zeta_@$\zeta(s)$} is the familiar Riemann zeta function. The latter has a single simple pole at $s=1$, so the triple is $1$-dimensional. The corresponding heat trace reads, for $t > 0$, $$\hD = 1+2 \sum_{n=1}^\infty e^{-n\,t} =1+ \tfrac{2e^{-t}}{1-e^{-t}} = \coth\tfrac t2$$ and, indeed, $\hD = \Oz(t^{-1})$ in accord with Proposition \ref{prop:KH}.
\hfill$\blacksquare$
\end{example}

\begin{remark}
Let us note that Proposition \ref{prop:KH} extends in a straightforward way to $\Tr e^{-t \,H^{\alpha}} = \Oz(t^{-r})$, for all $\; r > p/\alpha$, 
for any $\alpha>0$, as $\zeta_{H^{\alpha}}(s) = \zH(\alpha s)$. \\
When applied to Example \ref{ex:heat_S1} it implies a non-trivial fact: 
\begin{align*}
\sum_{n=1}^\infty e^{-n^{r} t} = \Oz(t^{-\alpha}) \text{ for all }\alpha > 1/r. \tag*{$\blacksquare$}
\end{align*}
\end{remark}

We conclude this section with a sufficient condition for the well-definiteness of the spectral function \eqref{SA_H}, which is of our primary interest.

\begin{lemma}
\label{traceclass}
Let $H\in \Tp$ and let $f$ be a positive function defined on $\RR^+$ such that  $f(x)=\Oinf(x^{-p -\epsilon})$ for some $\epsilon >0$. Then, $f(H)$ is trace-class.
\end{lemma}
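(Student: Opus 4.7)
The plan is to reduce the trace-class property of $f(H)$ to the already-available summability $\Tr H^{-p-\epsilon} < \infty$ via the spectral decomposition of $H$. Because $H \in \Tp$ is positive with $H^{-1}$ compact, its spectrum consists of positive eigenvalues $\lambda_0(H) \leq \lambda_1(H) \leq \cdots$ (counted with multiplicities $M_n(H)$) accumulating only at infinity, and $\lambda_0(H) > 0$. The functional calculus then gives the diagonal expression $f(H) = \sum_n f(\lambda_n(H))\, P_n(H)$, positive since $f \geq 0$, so the trace-class condition is equivalent to $\Tr f(H) = \sum_n M_n(H)\, f(\lambda_n(H)) < \infty$.

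The hypothesis $f(x) = \Oinf(x^{-p-\epsilon})$ furnishes constants $M, C > 0$ with $f(x) \leq C\, x^{-p-\epsilon}$ for all $x \geq M$. Since $\lambda_n(H) \to \infty$, there exists $N \in \N$ such that $\lambda_n(H) \geq M$ for every $n \geq N$. I would then split
\begin{align*}
\Tr f(H) = \sum_{n=0}^{N-1} M_n(H)\, f(\lambda_n(H)) + \sum_{n \geq N} M_n(H)\, f(\lambda_n(H)).
\end{align*}
The first sum is a finite sum of finite terms (each $\lambda_n(H) \in [\lambda_0(H), M]$, with $f$ finite on this compact set of positive reals), hence finite. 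For the tail I would estimate
\begin{align*}
\sum_{n \geq N} M_n(H)\, f(\lambda_n(H)) \leq C \sum_{n \geq N} M_n(H)\, \lambda_n(H)^{-p-\epsilon} \leq C\, \Tr H^{-p-\epsilon},
\end{align*}
which is finite by the very definition of $\Tp$. Combining both estimates gives $\Tr f(H) < \infty$, i.e., $f(H) \in \L^1(\H)$.

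I do not expect any serious obstacle: the only subtlety is ensuring $f$ is finite on the compact set $[\lambda_0(H), M]$ before the asymptotic regime kicks in. This is automatic if $f$ is tacitly taken to be locally bounded (e.g.\ continuous, as in all the applications of interest where $f$ will serve as a cut-off); otherwise one simply notes that a positive Borel function obeying the stated $\Oinf$ bound is automatically bounded on $[\lambda_0(H), M]$ up to enlarging $N$, so the finite head of the sum can always be absorbed.
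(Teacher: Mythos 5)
Your proof is correct and follows essentially the same route as the paper: diagonalise $f(H)$ via the spectral decomposition, split the eigenvalue sum at the point where the bound $f(x)\leq C x^{-p-\epsilon}$ kicks in, and dominate the tail by $C\,\Tr H^{-p-\epsilon}<\infty$, with the finite head handled by (local) boundedness of $f$. The paper phrases the splitting as a pointwise bound $\abs{f(x)}\leq c_1 + c_2\,\chi_{[N,\infty)}(x)\,x^{-p-\epsilon}$ rather than splitting the sum by index, but this is only a cosmetic difference.
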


\begin{proof}
With $\{\psi_n\}_{n \in \N}$ being the orthonormal basis on $\H$ composed of the eigenvectors of $H$ we obtain $\Tr f(H) = \sum_{n=0}^{\infty} \< \psi_n, f(H) \psi_n \> = \sum_{n=0}^{\infty} M_n(H) \, f(\lambda_n(H))$. By the hypothesis on the decay of $f$, there exist $c_2,N>0$, such that for any $x\geq 0$,  we can estimate $\abs{f(x)}\leq c_1+c_2 \,\chi_{[N,\infty)}(x) x^{-p -\epsilon}$, with $c_1=\sup_{\abs{x}<N}\abs{f(x)}$. Then,
\begin{align*}
\Tr f(H) \leq c_1 \!\!\sum_{n<N} M_n(H) +c_2 \!\! \sum_{n\geq N} M_n(H) [\lambda_n(H)]^{-p-\epsilon} \leq c_3+c_2\Tr \,H^{-p-\epsilon} < \infty.
\tag*{$\Box$}
\end{align*}
\end{proof}

\section{Functional Transforms}
\label{sec:func}

In this section we recall some basic notions and facts on two functional transforms, which will allow us to express some spectral functions in terms of others. These tools will help us to eventually establish an asymptotic expansion of the spectral action in Chapter \ref{chap:asymptotic}.

\subsection{Mellin Transform}
\label{subsec:Mellin}

We recall the following definition (see e.g. \cite[Section 3.1.1]{Paris} or \cite[Definition 1]{Flajolet}):

\begin{definition}
\label{def:Mellin}
The \emph{Mellin transform} \index{Mellin transform} of a locally (Lebesgue) integrable function $f: (0,\infty) \to (0,\infty)$ is a complex function $\Mellin[f]$ defined by
\begin{align}
\label{Mellin}
\Mellin[f](s)=\int_0^\infty f(t)\,t^{s-1}\,dt, \index{_zMf@$\Mellin[f]$}
\end{align}
for all $s \in \CC$ for which the integral converges.
The \emph{inverse Mellin transform} $\Mellin^{-1}$ of a meromorphic function $g$ is defined by
\begin{align}\label{Mellin_inv}
\Mellin^{-1}[g](t)=\tfrac1{2\pi i}\int_{c-i\infty}^{c+i\infty}g(s)\,t^{-s}\,ds,
\end{align}
for some $c \in \RR$ such that the integral exists for all $t>0$.
\end{definition}
The domain of definition of a Mellin transform turns out to be a strip, called the fundamental strip (see \cite[Definition 1]{Flajolet}). If $f(t)=\Oz(t^\alpha)$ and $f(t)=\Oinf(t^\beta)$ for some $\alpha > \beta$, then $\Mellin[f](s)$ exists at least in the strip $-\alpha<\Re (s)<-\beta$ \cite[Lemma~1]{Flajolet}. The invertibility of Mellin transform is addressed by the following \cite[Theorem 2]{Flajolet}:

\begin{theorem}
\label{thm:Mellin_inverse}
Let $f$ be a continuous function. If $c \in (0,\infty)$ belongs to the fundamental strip of $\Mellin[f]$ and the function $\RR \ni y \mapsto\Mellin[f](c+iy)$ is integrable, then, for any $t > 0$
\begin{align*}
f(t)=\Mellin^{-1}\big[\Mellin[f]\big](t)=\tfrac1{2\pi i}\int_{c-i\infty}^{c+i\infty}\Mellin[f](s)\,t^{-s}\,ds.
\end{align*}
\end{theorem}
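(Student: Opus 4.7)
The strategy is to reduce Mellin inversion to the classical Fourier inversion theorem via the logarithmic change of variable $t = e^{-x}$. This is standard but let me lay out the precise steps.

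First, I would introduce the auxiliary function $g: \RR \to \CC$ defined by $g(x) \vc f(e^{-x})\,e^{-cx}$. A direct substitution $t = e^{-x}$, $dt = -e^{-x}\,dx$, shows that for every $y \in \RR$,
\begin{align*}
\hat{g}(y) \vc \int_{-\infty}^{\infty} g(x)\, e^{-ixy}\, dx = \int_0^\infty f(t)\, t^{(c+iy)-1}\, dt = \Mellin[f](c+iy).
\end{align*}
The hypothesis that $c$ lies in the fundamental strip of $\Mellin[f]$ translates precisely into $g \in L^1(\RR)$ --- this uses the positivity of $f$ assumed in Definition \ref{def:Mellin}, which turns convergence at $s=c$ into absolute convergence. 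Moreover, $g$ is continuous because $f$ is.

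The hypothesis that $y \mapsto \Mellin[f](c+iy)$ is integrable on $\RR$ is then exactly $\hat{g} \in L^1(\RR)$. Having both $g$ and $\hat{g}$ in $L^1(\RR)$, with $g$ continuous, the classical Fourier inversion theorem applies pointwise, yielding
\begin{align*}
g(x) = \tfrac{1}{2\pi} \int_{-\infty}^{\infty} \hat{g}(y)\, e^{ixy}\, dy, \quad \forall x \in \RR.
\end{align*}
Substituting back the definitions of $g$ and $\hat{g}$, multiplying by $e^{cx}$, and setting $t = e^{-x}$ so that $e^{(c+iy)x} = t^{-(c+iy)}$, I obtain
\begin{align*}
f(t) = \tfrac{1}{2\pi}\int_{-\infty}^{\infty} \Mellin[f](c+iy)\, t^{-(c+iy)}\, dy.
\end{align*}
A final change of contour variable $s = c + iy$, $ds = i\,dy$, rewrites the integral as $\tfrac{1}{2\pi i}\int_{c-i\infty}^{c+i\infty} \Mellin[f](s)\, t^{-s}\, ds$, which is the desired identity.

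The only delicate point is the invocation of pointwise Fourier inversion: one has to justify that equality holds at \emph{every} $t > 0$ (not merely almost every). This is handled by observing that the right-hand side of the Fourier inversion formula is automatically a continuous function of $x$ (by dominated convergence, since $\hat g \in L^1$), and $g$ is continuous by construction; two continuous functions agreeing a.e.\ agree everywhere. No step requires anything beyond the assumptions of the theorem, so I expect no further obstacle.
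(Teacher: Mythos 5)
Your proof is correct: the reduction to Fourier inversion via $t=e^{-x}$, with the positivity of $f$ from Definition \ref{def:Mellin} upgrading convergence on the line $\Re(s)=c$ to $g\in L^1(\RR)$, and the continuity argument promoting almost-everywhere to pointwise equality, is exactly the standard argument. Note that the paper itself gives no proof of this statement --- it quotes it as Theorem 2 of the cited Flajolet et al.\ reference, whose proof proceeds by the same logarithmic change of variables --- so your route coincides with the one behind the quoted result.
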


In the realm of spectral functions the usefulness of the Mellin transform is attested by the following result:

\begin{proposition}
\label{prop:hk_Mellin}
Let $H\in \Tp$ and let $K \in \B(\H)$, then
\begin{align}\label{MellinFund}
\Mellin[t \mapsto \hKH](s)=\Gamma(s)\,\zKH(s), \quad \text{for } \; \Re(s)>p.
\end{align}
\end{proposition}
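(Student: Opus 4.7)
The plan is to proceed directly from the definitions, expanding the heat trace in the eigenbasis of $H$ and interchanging the order of summation and integration, then invoking the classical integral representation of $\Gamma$.

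First, I would write, using the spectral decomposition of $H$,
\begin{align*}
\Mellin[t \mapsto \hKH](s) = \int_0^\infty t^{s-1} \sum_{n=0}^\infty \Tr(P_n(H) K)\, e^{-t \lambda_n(H)}\, dt,
\end{align*}
provided $s$ lies in the fundamental strip of $t \mapsto \hKH$. To identify this strip, I note that since $H \in \Tp$, the spectrum of $H$ is discrete, bounded below by $\lambda_0(H) > 0$ (as $H > 0$ and $H^{-1} \in \KKK(\H)$), and accumulates only at $+\infty$. At $t \to \infty$ this yields $\hKH = \Oinf(e^{-t\lambda_0(H)/2})$, while at $t \downarrow 0$ Proposition \ref{prop:KH} provides $\hKH = \Oz(t^{-r})$ for every $r > p$. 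Hence the Mellin integral converges absolutely for $\Re(s) > p$.

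Next, I would justify the swap of sum and integral by Fubini--Tonelli applied to the non-negative integrand in the case $K = \bbbone$: for $\Re(s) > p$,
\begin{align*}
\sum_{n=0}^\infty M_n(H) \int_0^\infty t^{\Re(s)-1}\, e^{-t\lambda_n(H)}\, dt = \Gamma(\Re(s)) \sum_{n=0}^\infty M_n(H)\, \lambda_n(H)^{-\Re(s)} < \infty,
\end{align*}
where the finiteness follows from $H \in \Tp$. This absolute convergence dominates the general case $K \in \B(\H)$, since $|\Tr(P_n(H)K)| \leq \|K\|\, M_n(H)$, so Fubini's theorem applies.

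Once the interchange is justified, each term yields, via the classical Euler integral,
\begin{align*}
\int_0^\infty t^{s-1}\, e^{-t\lambda_n(H)}\, dt = \Gamma(s)\, \lambda_n(H)^{-s}, \qquad \Re(s) > 0,
\end{align*}
so that
\begin{align*}
\Mellin[t \mapsto \hKH](s) = \Gamma(s) \sum_{n=0}^\infty \Tr(P_n(H) K)\, \lambda_n(H)^{-s} = \Gamma(s)\, \zKH(s),
\end{align*}
valid for $\Re(s) > p$ by the definition \eqref{zetaKH}. The only mildly delicate step is verifying the Fubini hypothesis uniformly in $K$; this is handled neatly by bounding $|\Tr(P_n(H)K)|$ by $\|K\| M_n(H)$ and reducing to the positive scalar case controlled by $\zH(\Re(s))$.
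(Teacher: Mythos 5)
Your proof is correct and follows essentially the same route as the paper: convergence of the Mellin integral at $0$ via Proposition \ref{prop:KH} and at $\infty$ via $\lambda_0(H)>0$, then Fubini to swap sum and integral, then Euler's integral for $\Gamma$. Your explicit Tonelli verification with the bound $\vert\Tr(P_n(H)K)\vert\leq\norm{K}M_n(H)$ is just a slightly more detailed justification of the $L^1((0,\infty)\times\N)$ claim the paper makes.
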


\begin{proof}
Let us pick any $r$ such that $\Re(s)>r>p$. Proposition \ref{prop:KH} guarantees that $\hKH(t)=\Oz(t^{-r})$ and the integral $\Mellin[t \mapsto \hKH](s)=\int_0^\infty \hKH \,t^{s-1}\,dt$ converges (absolutely) at 0.
It also converges absolutely at $\infty$ since
\begin{align*}
\abs{\hKH} & = e^{-t\lambda_0(H)} \,\big{\vert}  \Tr ( P_0(H) K ) + \sum_{n=1}^\infty \Tr ( P_n(H) K ) \, e^{-t (\lambda_n(H) - \lambda_0(H))}  \big{\vert}  \\
& = \Oinf(e^{-t\lambda_0(H)})
\end{align*}
and the first eigenvalue $\lambda_0(H)$ is positive.

Since the map $(t,n) \mapsto \Tr ( P_n(H) K ) \, e^{-t \lambda_n(H)}$ is in $L^1((0,\infty) \x \N)$ we can use the Fubini theorem to swap the integral and the sum in the following:
\begin{align*}
\Mellin&[t \mapsto \hKH](s)\\
 &=\int_0^\infty \sum_{n=0}^{\infty} \Tr ( P_n(H) K ) \, e^{-t \lambda_n(H)} \,t^{s-1}\,dt  \notag  =\sum_{n=0}^{\infty} \Tr ( P_n(H) K ) \int_0^\infty e^{-t \lambda_n(H)} \,t^{s-1}\,dt\notag\\
& =\sum_{n=0}^{\infty} \Tr ( P_n(H) K ) \,\lambda_n(H)^{-s}\int_0^\infty e^{-y}\,y^{s-1}\,dy =\zKH(s)\,\Gamma(s).
\tag*{$\Box$}
\end{align*}
\end{proof}

The domain of definition is an important ingredient of the Mellin transform: Even if both functions $f,g$ do have Mellin transforms, $\Mellin[f+g]$ might not exist if $\Dom \Mellin[f] \cap \Dom \Mellin[g] = \emptyset$. Let us illustrate this feature by inspecting the Mellin transform of Formula \eqref{heat_ker}.

\begin{example}
Let $\ahd$ be a $p$-summable spectral triple. If $\DD$ has a non-trivial kernel, then the Mellin transform of the heat trace $\hD$ does not exist. Indeed, when $\mu_0(\DD) = 0$, $\lim_{t \to \infty} \hD = \dim \ker \DD$ and the integral \eqref{Mellin_inv} converges at $\infty$ only if $\Re(s)<0$, whereas the convergence at 0 requires $\Re(s) > p \geq 0$. On the other hand, $\hD = \Tr e^{-t \,\vert D \vert} - (e^{-t}-1) \dim \ker \DD$ using \eqref{heat_ker}. Notice that both functions of $t$ on the RHS of \eqref{heat_ker} do have Mellin transforms, but with disjoint domains. In particular, we have $$\Mellin[t \mapsto (e^{-t} - 1)](s) = \Gamma(s)\,\text{ for }\,\Re(s) \in (-1,0)$$ (see \cite[p. 13]{Flajolet}).
\hfill$\blacksquare$
\end{example}

The relation between the spectral functions unveiled by Proposition \ref{prop:hk_Mellin} can be inverted with the help of Theorem \ref{thm:Mellin_inverse}.\pagebreak

\begin{corollary}
\label{cor:hk_Mellin_inv}
Let $H\in \Tp$ and let $K \in \B(\H)$. 
Then, for any $c > p$ and any $t > 0$, $$\hKH = \tfrac1{i2\pi}\int_{c-i\infty}^{c+i\infty}\,\Gamma(s) \,\zKH(s)\,t^{-s}\,ds.$$
\end{corollary}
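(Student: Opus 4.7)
The claim is a direct application of the Mellin inversion Theorem \ref{thm:Mellin_inverse} to the identity \eqref{MellinFund} established in Proposition \ref{prop:hk_Mellin}. The plan is therefore to verify the two hypotheses of Theorem \ref{thm:Mellin_inverse} for the function $f:t\mapsto\hKH$: continuity, and integrability of $y \mapsto \Mellin[f](c+iy)$ along a vertical line with $c$ in the fundamental strip.

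First, I would record that by Proposition \ref{prop:KH} the map $t\mapsto\hKH$ is smooth (in particular continuous) on $(0,\infty)$. To identify the fundamental strip, note that the same proposition yields $\hKH=\Oz(t^{-r})$ for every $r>p$, while the estimate $|\hKH|=\Oinf(e^{-t\lambda_0(H)})$ from the proof of Proposition \ref{prop:hk_Mellin} (where $\lambda_0(H)>0$ because $H\in\Tp$) gives super-polynomial decay at infinity. By \cite[Lemma 1]{Flajolet}, the fundamental strip of $\Mellin[f]$ therefore contains the half-plane $\Re(s)>p$, so any chosen $c>p$ lies inside it, and \eqref{MellinFund} identifies $\Mellin[f](s)=\Gamma(s)\zKH(s)$ on the line $\Re(s)=c$.

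The main (and only nontrivial) point is the integrability of $y\mapsto\Gamma(c+iy)\zKH(c+iy)$ on $\RR$. For the zeta factor, since $H>0$ and $c>p$, the operator $|H^{-c-iy}|=H^{-c}$ has the same trace-norm for every $y\in\RR$, so
\begin{align*}
|\zKH(c+iy)| \;=\; |\Tr K H^{-c-iy}| \;\leq\; \|K\|\,\Tr H^{-c} \;<\;\infty,
\end{align*}
uniformly in $y$. For the Gamma factor, Stirling's asymptotics along vertical lines gives $|\Gamma(c+iy)|\sim\sqrt{2\pi}\,|y|^{c-1/2}\,e^{-\pi|y|/2}$ as $|y|\to\infty$. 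Multiplying the two bounds, the integrand decays exponentially in $|y|$, hence is in $L^1(\RR)$.

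With continuity and integrability in place, Theorem \ref{thm:Mellin_inverse} applied at the point $t>0$ yields
\begin{align*}
\hKH \;=\; \Mellin^{-1}\bigl[\Mellin[f]\bigr](t) \;=\; \tfrac1{i2\pi}\int_{c-i\infty}^{c+i\infty} \Gamma(s)\,\zKH(s)\,t^{-s}\,ds,
\end{align*}
which is the desired identity. The only step requiring real care is the Stirling-based bound on $\Gamma$; everything else is a direct appeal to results already proven in the excerpt.
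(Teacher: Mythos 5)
Your proposal is correct and follows essentially the same route as the paper: smoothness from Proposition \ref{prop:KH}, the uniform bound $\abs{\zKH(c+iy)}\leq\norm{K}\,\Tr H^{-c}$, the exponential vertical decay of $\Gamma$ (the paper cites Lemma \ref{lm:Gamma_vert}, you invoke Stirling, which gives the same estimate), and then Theorem \ref{thm:Mellin_inverse}. Your explicit check that $c$ lies in the fundamental strip is a harmless elaboration of what the paper leaves implicit via Proposition \ref{prop:hk_Mellin}.
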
 

\begin{proof}
Recall that the function $t \mapsto \hKH$ is smooth (cf. Proposition \ref{prop:KH}). In order to apply Theorem \ref{thm:Mellin_inverse} we need the function $y \mapsto \Gamma(c + iy) \,\zKH(c + iy)$ to be Lebesgue integrable for any $c > p$. This is indeed the case as if $c > p \geq 0$, we have $\Gamma(c+ i y) = \Oinf \big(\abs{y}^{c-1/2} e^{-\pi \abs{y}/2} \big)$ (cf. Lemma \ref{lm:Gamma_vert}), and moreover we estimate
\begin{align*}
\abs{\zKH(c+iy)} = \abs{ \Tr K H^{-c-iy} } \leq \norm{K} \Tr \abs{H^{-c-iy}} \leq \norm{K} \Tr H^{-c}.
\tag*{$\Box$}
\end{align*}
\end{proof}

\subsection{Laplace Transform}
\label{subsec:Laplace}

The Mellin transform allows us to move back and forth between a spectral zeta function and the associated heat trace. In order to build a bridge from these to the spectral action, we will resort to the Laplace transform. 

\begin{definition}
\label{def:Laplace}
The \emph{Laplace transform} \index{Laplace transform} of a function $f \in L^1(\RR^+,\CC)$ is an analytic function $\Lc[f] : \RR^+ \to \CC$ defined by
\begin{align}
\label{Laplace}
\Lc[f](x) \vc \int_0^\infty f(s)\,e^{-s x}\,ds. \index{_zLx@$\Lc[f] \text{ and } \Lc[\phi]$}
\end{align}
\end{definition}
\noindent For our purposes it is sufficient to consider $x \in \RR^+$, although \eqref{Laplace} actually defines an analytic function in the whole half-plane $\Re(x)\geq 0$. Also, one can consider the Laplace transform of functions in $L^1_{\text{loc}}(\RR^+)$\index{_zLzloc@$L^1_{\text{loc}}(\RR^+)$} (locally integrable on $\RR^+$), as long as there exists $M \in \RR$ such that the integral converges (properly) for $\Re(x) > M$.

More generally, one can define the Laplace transform of a complex Borel measure $\phi$ on $\RR^+$\index{Laplace transform!of a measure} via the Lebesgue integral
\begin{align}
\label{Laplace_measure}
\Lc[\phi](x) \vc \int_0^{\infty} e^{-s x}\, d \phi(s), \quad \text{ for } x>0. 
\end{align}
Since we are interested in real functions $\Lc[\phi]$ we shall restrict ourselves to the case of Borel \emph{signed measures}\index{measure!signed} (see for instance \cite{Cohn2013}) on $\RR^+$, i.e. $\sigma$-additive maps $\phi: \B(\RR^+) \to \RR \cup \{\pm \infty \}$, on the $\sigma$-algebra of all Borel subsets of $\RR^+$. A signed measure may assume one of the infinite values $\pm \infty$, but not both. 
\\
The Hahn--Jordan decomposition allows us to uniquely write any signed measure as $\phi = \phi^+ - \phi^-$ for two non-negative measures, at least one of which is finite. It is customary to denote $\vert\phi\vert = \phi^+ + \phi^-$, which is a non-negative measure on $\RR^+$ called the \emph{variation} of $\phi$\index{measure!variation of}. \\
The support of a signed measure is $\supp \phi \vc \supp \aphi = \supp \phi^+ \cup \supp \phi^-$.

Let us recall that a function $f$ is \emph{completely monotonic (c.m.)} if \\\centerline{$f\in C^\infty((0,\infty),\RR)$ and $(-1)^n \,f^{(n)}(x) \geq 0$ for any $n\in \N$, and any $x>0$.} \\
The set of such functions, denoted by $\CM$\index{_zCM@$\CM$}, is well adapted as the range of the Laplace transform:

\begin{theorem}[Bernstein, see e.g. {\cite[p. 160]{Widder} or \cite[Theorem 1.4]{SRV}}]
\noindent Given a function $f \in \CM$ we have $f(x)=\Lc[\phi](x)$ for all $x>0$ for a unique non-negative measure $\phi$ on $\RR^+$. \\
Conversely, 
whenever $\Lc[\phi](x) < \infty, \,\forall x>0$, then $\Lc[\phi]\in \CM$.
\end{theorem}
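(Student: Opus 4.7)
The plan is to handle the two implications separately.

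\textbf{Converse direction.} Assume $\phi$ is a non-negative Borel measure on $\RR^+$ with $\Lc[\phi](x)$ finite for all $x > 0$. Fixing $x_0 > 0$ arbitrarily, the elementary bound $s^n e^{-sx} \leq (2n/x_0)^n e^{-n}\, e^{-sx_0/2}$ valid for all $x \geq x_0$ and $s \geq 0$ (obtained from $\max_{s\geq 0} s^n e^{-sx_0/2} = (2n/x_0)^n e^{-n}$) has a $\phi$-integrable majorant by the hypothesis on $\Lc[\phi](x_0/2)$. Iterated application of Lebesgue's dominated convergence theorem thus legitimates $n$-fold differentiation under the integral sign in \eqref{Laplace_measure}, giving
$$
\Lc[\phi]^{(n)}(x) = (-1)^n \int_0^\infty s^n e^{-sx}\, d\phi(s),
$$
and the sign factor $(-1)^n$ is precisely what the definition of $\CM$ demands, since the integral is non-negative.

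\textbf{Forward direction (Bernstein).} Let $f \in \CM$. The strategy is to construct $\phi$ as a vague limit of a sequence of non-negative approximating measures built from the derivatives of $f$, in the spirit of the Post--Widder inversion formula. Concretely, I would define for each $n \in \N^*$ the non-negative Borel measure on $(0,\infty)$ with density
$$
\tfrac{d\phi_n}{ds}(s) \vc \tfrac{(-1)^n}{n!}\Big(\tfrac{n}{s}\Big)^{n+1} f^{(n)}\!\Big(\tfrac{n}{s}\Big),
$$
whose positivity follows directly from the complete monotonicity of $f$. After performing the preliminary truncation $f \rightsquigarrow f(\cdot + \epsilon)$ to ensure boundedness of $f$ at the origin (and later recovering the general case by letting $\epsilon \downarrow 0$ with monotone convergence and uniqueness), one establishes a uniform tightness estimate $\sup_n \phi_n(\RR^+) < \infty$ by testing the density against constants and integrating by parts to express the total mass of $\phi_n$ in terms of bounded values of $f$ itself. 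Helly's selection theorem then extracts a vaguely convergent subsequence $\phi_{n_k} \to \phi$, with $\phi$ a non-negative Borel measure on $\RR^+$.

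The core analytic step is to verify $\Lc[\phi] = f$ on $(0,\infty)$, and the classical route (due to Widder) is to show $\Lc[\phi_n](x) \to f(x)$ pointwise for each $x > 0$ by a Laplace-method argument: after the substitution $u = n/s$ the integration kernel becomes a Gamma-type approximate identity concentrating sharply at $u = x$ (by Stirling's approximation), picking out the correct value of $f$ in the limit. Combining this pointwise convergence with the vague convergence along the subsequence identifies $\Lc[\phi] = f_\epsilon$, and passing $\epsilon \downarrow 0$ concludes. Uniqueness is then immediate from Stone--Weierstrass: the family $\{s \mapsto e^{-sx}\}_{x>0}$ generates a dense subalgebra of $C_0([0,\infty))$, so equality of Laplace transforms forces equality of the underlying measures.

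The main obstacle I expect is the tightness estimate together with the prevention of mass escaping to infinity in the vague limit: without adequate control one only recovers $\Lc[\phi] \leq f$. The preliminary shift $f \rightsquigarrow f(\cdot + \epsilon)$ is instrumental, as it ensures finiteness of $f$ at $0^+$ and hence the finite total-mass bound required for tightness. The delicate balance between the $(n/s)^{n+1}$ singularity near $s = 0$ and the decay provided by the $n$-fold derivative of $f$ must be quantitatively controlled, and making this fully rigorous is where the bulk of the technical work lies.
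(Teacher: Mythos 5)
The paper does not actually prove this statement: it is quoted as the classical Bernstein--Widder theorem with references to Widder and to Schilling--Song--Vondra\v{c}ek, so there is no in-text argument for your proposal to be compared against. On its own merits, your converse direction is complete and correct (the domination $s^n e^{-sx}\leq (2n/x_0)^n e^{-n} e^{-sx_0/2}$ for $x\geq x_0$, integrable against $\phi$ thanks to $\Lc[\phi](x_0/2)<\infty$, does legitimate differentiation under the integral), and your forward direction is a faithful outline of the standard Post--Widder/Helly route that one finds in the cited sources: positivity of the densities $\tfrac{(-1)^n}{n!}(n/s)^{n+1}f^{(n)}(n/s)$ from complete monotonicity, a vague limit via Helly, identification of the limit by the Gamma-kernel concentration argument, and uniqueness from density of the span of $\{s\mapsto e^{-sx}\}_{x>0}$ in $C_0(\RR^+)$. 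You should be aware, however, that as written this remains an outline rather than a proof: the tightness bound, the Laplace-method limit $\Lc[\phi_n](x)\to f(x)$, and the removal of the shift $\epsilon\downarrow 0$ are precisely the technical core and are only announced; moreover, when $f(0^+)=\infty$ the representing measure $\phi$ is infinite (e.g. $f(x)=1/x$), so both the "finite total mass" tightness step and the Stone--Weierstrass uniqueness argument must first be reduced to finite measures, for instance by passing to $e^{-sx_0}\,d\phi(s)$, a reduction your sketch leaves implicit. These are standard repairs, so the proposal is sound in outline, but the bulk of the classical proof still has to be written out.
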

The set $\CM$ is a convex cone which is stable under products, derivatives of even order and pointwise convergence. Moreover, the closure under pointwise convergence of Laplace transforms of finite measures on $\RR^+$ is exactly $\CM$ \cite[Corollary 1.6]{SRV}.

The Bernstein theorem naturally extends to the context of signed measures:

\begin{corollary}
If $f = f^+ - f^-$ with c.m. functions $f^\pm$, then $f(x)=\Lc[\phi](x)$ for all $x>0\,$ for a unique signed measure $\phi$ on $\RR^+$. Conversely, whenever $\Lc[\phi](x) < \infty$ for all $x>0$, then $\Lc[\phi]$ can  be uniquely written as a difference of two c.m. functions.
\end{corollary}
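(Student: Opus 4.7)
The plan is to derive both implications from Bernstein's theorem by invoking the Hahn--Jordan decomposition as the canonical bridge between signed measures and pairs of mutually singular non-negative measures.

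I would begin with the converse direction, which is the cleaner of the two. Given a signed measure $\phi$ with $\Lc[\phi](x)<\infty$ for every $x>0$, write $\phi = \phi^+ - \phi^-$ via Hahn--Jordan, with at least one of the $\phi^\pm$ finite. Say $\phi^+$ is finite; then $\Lc[\phi^+](x)\le\phi^+(\RR^+)<\infty$ for all $x>0$, and consequently $\Lc[\phi^-] = \Lc[\phi^+] - \Lc[\phi]$ is also finite on $\RR^+$. Bernstein's theorem (the converse implication applied to $\phi^\pm$ separately) then gives $\Lc[\phi^\pm]\in\CM$, so $\Lc[\phi] = \Lc[\phi^+] - \Lc[\phi^-]$ is the desired decomposition.

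For the forward direction, starting from $f = f^+ - f^-$ with $f^\pm\in\CM$, Bernstein's theorem produces unique non-negative Borel measures $\phi_0^\pm$ on $\RR^+$ with $f^\pm = \Lc[\phi_0^\pm]$. If both were infinite as measures, their common part $\phi_0^+\wedge\phi_0^-$ (defined via Lebesgue decomposition of each with respect to the other) contributes equally to $f^+$ and $f^-$, so subtracting it yields mutually singular residual measures $\phi^\pm := \phi_0^\pm - (\phi_0^+\wedge\phi_0^-)$, of which at least one is finite. Setting $\phi := \phi^+ - \phi^-$ defines a legitimate signed measure on $\RR^+$ whose Laplace transform, by linearity, equals $f^+ - f^- = f$.

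It remains to prove the uniqueness clause. Suppose $\Lc[\phi_1] = \Lc[\phi_2]$ for two signed measures; then $\psi := \phi_1 - \phi_2$ is a signed measure with $\Lc[\psi]\equiv 0$. Decomposing $\psi = \psi^+ - \psi^-$ via Hahn--Jordan, the previous paragraph shows both $\Lc[\psi^\pm]$ are finite c.m.\ functions satisfying $\Lc[\psi^+] = \Lc[\psi^-]$. The uniqueness clause of Bernstein's theorem then forces $\psi^+ = \psi^-$ as non-negative measures; combined with their mutual singularity, this gives $\psi^+ = \psi^- = 0$, hence $\phi_1 = \phi_2$. The main obstacle I anticipate is the forward direction's need to pass from the possibly overlapping Bernstein measures $\phi_0^\pm$ to a bona fide signed measure --- a subtlety invisible on the level of functions, because the decomposition $f = f^+ - f^-$ in $\CM$ is itself highly non-unique (one may add any $g\in\CM$ to both sides). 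Extracting the mutually singular canonical pair is precisely what makes the resulting signed measure $\phi$ unique.
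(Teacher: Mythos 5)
Your route is the same as the paper's: the paper's entire proof of this corollary is a one-line appeal to Bernstein's theorem together with the Hahn--Jordan decomposition, and you have correctly isolated the only point where anything needs to be checked, namely the passage from the two Bernstein measures $\phi_0^\pm$ to a single signed measure. Unfortunately, the step you use to close that gap is false: removing the common part $\phi_0^+\wedge\phi_0^-$ does make the residuals mutually singular, but mutual singularity does not force one of them to be finite. Take $\phi_0^+=\sum_{n\geq1}\delta_{2n}$ and $\phi_0^-=\sum_{n\geq0}\delta_{2n+1}$: then $f^+(x)=\sum_{n\geq1}e^{-2nx}$ and $f^-(x)=\sum_{n\geq0}e^{-(2n+1)x}$ are completely monotone (finite Laplace transforms of non-negative measures), the two measures are already mutually singular, and both residuals are infinite, so your construction does not yield a signed measure in the sense adopted in the paper (at least one Jordan component finite). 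Worse, the same Bernstein-uniqueness trick you invoke later shows that no repair is possible for this $f$: any signed measure $\phi=\phi^+-\phi^-$ with $\Lc[\phi]=f^+-f^-$ would satisfy $\phi^++\phi_0^-=\phi^-+\phi_0^+$, and evaluating on the even and odd integers forces both $\phi^+$ and $\phi^-$ to be infinite. So the forward implication, as literally stated, needs in addition to $f^\pm\in\CM$ the information that at least one of the mutually singular parts of the Bernstein measures is finite; this is automatic in the classes $\Cl_0$, $\Cl_c$, $\Cl_0^p$ where the corollary is actually used later (finite moments or compact support), but it does not follow from complete monotonicity alone --- a genuine subtlety which you rightly flagged as the main obstacle, which your argument does not overcome, and which the paper's one-line proof glosses over as well.

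Two smaller remarks. In your uniqueness argument, $\psi=\phi_1-\phi_2$ is not a priori a signed measure either (its Jordan parts can both be infinite, e.g.\ when $\phi_1\geq0$ and $\phi_2\geq0$ are infinite and mutually singular); the clean route is to apply Bernstein's uniqueness directly to the non-negative measures $\phi_1^++\phi_2^-$ and $\phi_2^++\phi_1^-$, whose Laplace transforms coincide and are finite, and then cancel on bounded Borel sets (all measures involved are finite on compacts precisely because their Laplace transforms are finite) to get $\phi_1=\phi_2$. Finally, you leave the ``uniquely written as a difference of two c.m.\ functions'' clause of the converse untouched; as you yourself observe, that decomposition is non-unique as stated, and the intended meaning is presumably the canonical mutually singular (Hahn--Jordan) one --- a looseness of the statement rather than of your argument, and one the paper's proof does not address either.
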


\begin{proof}
This is a straightforward consequence of the Bernstein theorem and the Hahn--Jordan decomposition.
\hfill$\Box$
\end{proof}

The Laplace transform of a measure $\phi$ is well defined on the open set $\RR^+\setminus\{0\}$. If the measure $\phi$ is sufficiently nice, then $f =\Lc[\phi]$ along with its derivatives can be extended to the whole $\RR^+$:

\begin{proposition}
\label{prop:moments}
Let $\phi$ be a signed measure on $\RR^+$ and let $f(x) = \Lc[\phi](x)$ for $x>0$. If $\phi$ has a finite $n$-th moment\index{measure!moment of}, i.e. $\int_0^{\infty} s^n d \phi(s) < \infty$ for an $n \in \N$, then $\lim_{x \to 0^+}  f^{(n)}(x)$ exists and is equal to $(-1)^n \int_0^{\infty} s^n d \phi(s)$.
\end{proposition}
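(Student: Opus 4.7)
The plan is to first establish that $f$ is $C^n$ on $(0,\infty)$ with
\[ f^{(n)}(x) = (-1)^n \int_0^\infty s^n e^{-sx}\, d\phi(s), \qquad x>0, \]
and then to pass to the limit $x \to 0^+$ inside this integral by dominated convergence.

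For the first step I would apply differentiation under the integral sign $n$ times. Fix $x_0 > 0$; then for $0 \leq k \leq n$ and every $x \geq x_0$ the integrand $s^k e^{-sx}$ has $x$-derivative bounded in absolute value by $s^{k+1}e^{-sx_0}$. To see this dominating function is $\abs{\phi}$-integrable, factor $s^{k+1}e^{-sx_0} = (s^{k+1}e^{-sx_0/2})\,e^{-sx_0/2}$: the first factor is bounded on $\RR^+$ (its maximum is attained at $s = 2(k+1)/x_0$), and $e^{-sx_0/2}$ is $\abs{\phi}$-integrable because $\Lc[\phi](x_0/2)$ is well-defined, which forces $\Lc[\abs{\phi}](x_0/2) < \infty$. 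The standard dominated-convergence argument applied to difference quotients then justifies bringing each derivative inside the integral, and iterating $n$ times yields the displayed formula (and in particular the continuity of $f^{(n)}$ on $(0,\infty)$).

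For the second step I would send $x \to 0^+$ inside the representation above. The integrand $s^n e^{-sx}$ converges pointwise to $s^n$ and is dominated in absolute value by $s^n$, which is $\abs{\phi}$-integrable precisely by the finite $n$-th moment hypothesis. Splitting $\phi = \phi^+ - \phi^-$ and applying the usual dominated convergence theorem to each of the two non-negative integrals $\int s^n e^{-sx}\, d\phi^{\pm}(s)$ (both majorised by $\int s^n d\abs{\phi}<\infty$) gives
\[ \lim_{x\to 0^+} f^{(n)}(x) = (-1)^n \int_0^\infty s^n\, d\phi(s), \]
as claimed.

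The only delicate point, which is essentially bookkeeping rather than a genuine obstacle, is the interpretation of ``finite $n$-th moment'' for a signed measure: one must read it as $\int_0^\infty s^n\, d\abs{\phi}(s)<\infty$, for otherwise even the right-hand side of the claimed identity need not make sense as an absolutely convergent integral. Once this is fixed, both steps reduce to routine applications of dominated convergence.
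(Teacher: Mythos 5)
Your proof is correct and follows essentially the same route as the paper's: the paper simply cites the identity $\Lc[s \mapsto s^n\phi(s)](x) = (-1)^n f^{(n)}(x)$ (differentiation under the integral) and then applies dominated convergence with the finite moment as dominating bound, exactly the two steps you carry out, with the differentiation step proved by hand rather than quoted. Your closing remark on reading the finite $n$-th moment as $\int_0^\infty s^n\, d\abs{\phi}(s)<\infty$ is the same implicit convention the paper uses.
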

\begin{proof}
This is a consequence of the Lebesgue dominated convergence theorem: For $s,x \geq 0$ we have $\abs{e^{-sx}} \leq 1$, thus $$\int_0^{\infty} \, d\phi(s) = \lim_{x \to 0^+} \, \int_0^{\infty} e^{-s x} \, d\phi(s) = \lim_{x \to 0^+} f(x).$$  
The statement for $n > 0$ follows from a general property of the Laplace transform (cf. \cite[Eq. (12)]{Zemanian1}): 
\begin{align*}
\Lc[s \mapsto s^n \phi(s)](x) = (-1)^n f^{(n)}(x),  \text{ for any } n \in \N.
\tag*{$\Box$}
\end{align*}
\end{proof}

\begin{remark}
\label{rem:moments}
The converse of Proposition \ref{prop:moments} is not in general true for signed measures. A counterexample is provided by the function $f(x) = 1- e^{-1/x}$, which is the Laplace transform of the function $\phi(s) = J_1(2 s^{1/2}) s^{-1/2}$, with $J_{\alpha}$\index{_zJalpha@$J_{\alpha}$} being the Bessel function of the first kind. Although $\lim_{x \to 0^+} f^{(n)}(x) = 0$ for any $n \in \N^*$, the function $s^n \phi(s)$ is not integrable on $\RR^+$ for $n \geq 1$.

If the measure $\phi$ is non-negative then $\lim_{x \to 0^+} f^{(n)}(x) < \infty$ does imply that $\phi$ has a finite $n$-th moment. This is because if $\phi \geq0$ one can use the monotone convergence theorem, which implies that $\lim_{x \to 0^+} e^{-s x}$ is integrable with respect to the measure $\phi$ (cf. \cite[Proposition 1.2]{SRV}). But for signed measures the monotone convergence fails --- see \cite[p. 177]{BogachevBook}.
\hfill$\blacksquare$
\end{remark}

Even more generally, on can consider the Laplace transform of distributions in a suitable class\index{Laplace transform!of a distribution} (cf. \cite{DijkBook,Estradabook,ZemanianBook}). Let $\Dcl$\index{_zDe@$\Dcl$} be the space of smooth compactly supported functions on $\RR$ endowed with the standard topology of uniform convergence with all derivatives (cf. \cite[Definition 2.1]{DijkBook}), with $\Dclp$ \index{_zDep@$\Dclp$} denoting its dual. Furthermore, let $\Scl$ \index{_zS_s@$\Scl$} denote the space of rapidly decreasing (Schwartz) functions on $\RR$ and its dual $\Sclp$ \index{_zS_sp@$\Sclp$} -- the space of tempered distributions. We have $\Dcl \subset \Scl \subset \Sclp \subset \Dclp$.\\
Recall that a distribution $T$ is said to be null on some open set $U \subset \RR$ if $\<T,\varphi\> = 0$ for all $\varphi$ with $\supp \varphi \subset U$. \\ The \index{distribution!support of} \emph{support of a distribution} $T$ is the complement of its null set (i.e. the union of all $U$, on which $T$ is null) (cf. \cite[Definition 2.5.2]{Estradabook}). \\We shall denote by $\Dclpp$\index{_zDepp@$\Dclpp$} the set of all \emph{right-sided} distributions, i.e. distributions in $\Dclp$\index{distribution!support of}, the support of which is contained in $\RR^+$. Consequently, $\Sclpp = \Dclpp \cap \Sclp$.\index{_zS_spp@$\Sclpp$}

Consider now a smooth function $\varphi$ on $\RR$, which has bounded support on the left (i.e. $\varphi(x) = 0$ for all $x < m$ for some $m \in \RR$) and equals 1 in a neighbourhood of $[0,\infty)$. Then, for any $T \in \Sclpp$, we define
\begin{align}\label{Laplace_distribution}
\Lc[T](x) \vc \< T, s \mapsto \varphi(s) e^{-s x} \>, \quad \text{ for } x>0.\index{_zLx1@$\Lc[T]$}
\end{align}
The definition is sound as for any $x > 0$, the function $s \mapsto \varphi(s) e^{-s x}$ is in $\Scl$ and since $T$ is supported on $\RR^+$ the choice of $\varphi$ does not play any role.\\
 It is common to use the --- somewhat sloppy --- notation $\Lc[T](x) = \<T_s, e^{-s x} \>$.

The range of the Laplace transform of distributions in $\Sclpp$ is uncovered by the following result:

\begin{theorem}[{\cite[Theorem 9]{Zemanian1}}]
\label{thm:Laplace_inv}
A function $f$ is a Laplace transform of a distribution in $\Sclpp$ if and only if

i) $f$ is analytic in the half-plane $\Re(s) > 0$;

ii) There exists a polynomial $p$ with $\abs{f(s)} \leq p(\abs{s})$ for any $s\in \CC$ with $\Re(s) > 0$.
\end{theorem}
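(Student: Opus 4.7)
The plan is to treat the two implications separately, with necessity being mostly bookkeeping and sufficiency being the substantial direction.

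For necessity, suppose $f = \Lc[T]$ with $T \in \Sclpp$. First I would check analyticity by differentiating under the distributional pairing: for $\Re(s) > 0$ the map $s \mapsto \psi_{s}(\tau) \vc \varphi(\tau) e^{-\tau s}$ is a holomorphic $\Scl$-valued function (the difference quotients converge in the Schwartz topology, using that $\varphi$ vanishes to the left of some $m \in \RR$ and $e^{-\tau s}$ decays rapidly for $\tau \geq m$ when $\Re(s) > 0$), so $f(s) = \<T,\psi_{s}\>$ is holomorphic. Next, for polynomial growth, I would use the fact that as a tempered distribution $T$ satisfies a seminorm estimate $\vert \< T,\psi\>\vert \leq C \max_{\vert \alpha\vert,\vert \beta\vert \leq N} \sup_{\tau} (1+\vert \tau\vert)^{\alpha} \vert \psi^{(\beta)}(\tau)\vert$ for some $C, N$. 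Applied to $\psi_s$, each $\tau$-derivative produces a factor of $s$ (from differentiating $e^{-\tau s}$) or of $\varphi^{(k)}$ (compactly supported), and the exponential weight $(1+\vert \tau\vert)^{\alpha} e^{-\tau \Re(s)}$ is uniformly bounded on $[m,\infty)$ for $\Re(s) > 0$. This yields $\vert f(s) \vert \leq C' (1+\vert s\vert)^{N}$.

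For sufficiency, suppose $f$ is holomorphic in $\Re(s) > 0$ with $\vert f(s)\vert \leq (1+\vert s\vert)^{N}$. I would set $F(s) \vc f(s)/(1+s)^{N+2}$, which is holomorphic in $\Re(s) > 0$ and satisfies $\vert F(s) \vert \leq C (1+\vert s\vert)^{-2}$, so that $y \mapsto F(c+iy)$ is integrable for every $c > 0$. Define
\begin{align*}
G(\tau) \vc \tfrac{1}{2\pi i} \int_{c-i\infty}^{c+i\infty} F(s)\, e^{s\tau}\, ds.
\end{align*}
The $c$-independence will follow from a standard contour-shift argument using the decay of $F$ on horizontal segments. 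The crucial point is to show $\supp G \subset \RR^+$: for $\tau < 0$ one has $\vert e^{s\tau}\vert = e^{c\tau} \to 0$ as $c \to +\infty$, while $\int \vert F(c+iy)\vert dy$ stays bounded, so sending $c \to +\infty$ forces $G(\tau) = 0$. A similar dominated-convergence argument shows $G$ is continuous and bounded, hence $G \in \Sclpp$ as a regular tempered distribution.

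Finally I would verify $\Lc[G] = F$ on $\RR^+$ by the Mellin/Fourier inversion (which is exactly the content of Corollary~\ref{cor:hk_Mellin_inv}-type inversion in this setting) and then define $T \vc (1 + d/d\tau)^{N+2} G$ in $\Dclp$. Since differentiation preserves support in $\RR^+$ and tempered character, $T \in \Sclpp$. Using that for right-sided tempered distributions $\Lc[T'](s) = s\,\Lc[T](s)$ (the boundary terms from integration by parts against $\varphi(\tau)e^{-s\tau}$ vanish because of the support condition), one obtains $\Lc[T](s) = (1+s)^{N+2} F(s) = f(s)$, completing the proof. The main obstacle I anticipate is the rigorous justification of the contour shift establishing both $c$-independence of $G$ and the vanishing of $G$ on $(-\infty,0)$; this requires careful control of $F$ on the horizontal connecting segments, and is where the quantitative decay $(1+\vert s\vert)^{-2}$ (secured by choosing $N+2$ rather than $N$ in the denominator) does the real work.
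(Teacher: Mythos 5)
First, on comparison: the paper does not prove this statement at all --- it is quoted from Zemanian --- so there is no in-text argument to measure yours against. Your sufficiency direction is essentially the standard (and correct) argument: after dividing by $(1+s)^{N+2}$ one really does get quadratic decay, since $\vert 1+s\vert\geq(1+\vert s\vert)/\sqrt{2}$ on $\Re(s)>0$; the shift $c\to+\infty$ kills $G$ on $(-\infty,0)$; $\Lc[G]=F$ follows by Fubini and closing the contour to the right; and $T=(1+d/d\tau)^{N+2}G$ works because $\supp\varphi'$ is disjoint from $\RR^+$, so no boundary terms appear. Modulo writing out these routine verifications, that half is fine.

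The necessity direction, however, has a genuine gap, and it sits exactly where the statement is delicate. Your estimate rests on the claim that $(1+\vert\tau\vert)^{\alpha}e^{-\tau\Re(s)}$ is uniformly bounded in $s$ on $[m,\infty)$ for $\Re(s)>0$. This fails twice. For $\tau\in[m,0)$ (and necessarily $m<0$, since $\varphi\equiv 1$ on a neighbourhood of $[0,\infty)$) the factor grows like $e^{\vert m\vert\Re(s)}$, i.e.\ exponentially in $\Re(s)$; this part can be patched by letting $\varphi$ depend on $s$, with support in $[-1/(1+\Re(s)),\infty)$, at the price of polynomially large derivatives of $\varphi$. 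But for $\tau\geq 0$ the supremum of $\tau^{\alpha}e^{-\tau\Re(s)}$ is of order $(\Re(s))^{-\alpha}$, which blows up as $\Re(s)\to 0^{+}$, and this defect cannot be removed: the Heaviside distribution $H\in\Sclpp$ has $\Lc[H](s)=1/s$, which is not dominated by any polynomial in $\vert s\vert$ near $s=0$. So the bound you claim to derive, $\vert f(s)\vert\leq C'(1+\vert s\vert)^{N}$ uniformly on the open half-plane, does not follow from $T\in\Sclpp$; what the seminorm estimate actually yields (and what Zemanian's estimate asserts) is a bound polynomial in $\vert s\vert$ \emph{and} in $1/\Re(s)$ --- equivalently, a polynomial bound in $\vert s\vert$ on each half-plane $\Re(s)\geq\sigma_0>0$ with constants depending on $\sigma_0$. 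Any correct proof of the ``only if'' half must produce a bound of that shape, and correspondingly the condition \textit{ii)} has to be read with that uniformity caveat; as you have written it, the step ``hence $\vert f(s)\vert\leq C'(1+\vert s\vert)^N$'' is false.
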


\begin{example}\label{ex:Laplace_inv_exp}
For any $a \in \RR^+$ we have $\Lc[\delta_a](x) = e^{-ax}$. In particular, $\Lc[\delta](x) = 1$. Since $\delta_a$ is a legitimate measure on $\RR^+$, $\Lc[\delta_a]$ is a c.m. function.\\
Let now $p(x) = \sum_{j=0}^d c_j\,x^j$ for some $d \in \N$ and let $f(x) = p(x)\, e^{-ax}$ with $a \in \RR^+$. Then, $f$ is not c.m. unless $d = 0$. Nevertheless, its inverse Laplace transform exists and equals $\Lc^{-1}[f] = \sum_{j=0}^{d} c_j\, \delta_a^{(j)}$.
\hfill$\blacksquare$
\end{example}

\begin{remark}
\label{rem:Gauss_Laplace}
It would be highly desirable to utilise the Laplace transform to deduce the behaviour of $\Tr e^{-tH^2}$ knowing that of $\Tr e^{-tH}$. Unfortunately, the Gaussian function $f(x) = e^{-x^2}$ does not satisfy the bound demanded in Theorem \ref{thm:Laplace_inv} as $\abs{f(x+iy)} = e^{-x^2+y^2}$. Hence $\Lc^{-1}(f)$ does not exist, even as a distribution.

We also note that if the function $f$ has compact support in $\RR^+$ --- in particular if it is a smooth cut-off function as depicted in \cite[Fig. 1]{ConnesUncanny} --- then it is not complex analytic in the right half-plane, and hence cannot be a Laplace transform.
\hfill$\blacksquare$
\end{remark}

\smallskip

The pertinence of the Laplace transform in the context of the spectral action stands from the fact that if $f=\Lc[\phi]$ is the Laplace transform of a measure $\phi$, then for any positive selfadjoint (possibly unbounded) operator $H$, 
\begin{align}
\label{f(H/Lambda)}
f(H/\Lambda)=\int_0^\infty e^{-s\,H/\Lambda} \, d\phi(s).
\end{align}
The RHS of the above formula is well defined in the strong operator sense (cf. \cite[p. 237]{SimonReed2}). Moreover, since the trace is normal (i.e. if $H_{\alpha} \to H \in \B(\H)$ strongly with $H_{\alpha'} > H_{\alpha}$ for $\alpha'>\alpha$, then $\Tr \,H = \sup_{\alpha} \, \Tr \, H_{\alpha}$\,), and, if  $f(H/\Lambda)$ is trace-class, then
\begin{align}
\label{Tr f(H/Lambda)}
\Tr f(H/\Lambda)=\int_0^\infty \Tr e^{-sH/\Lambda} \,d\phi(s),
\end{align}

Formula \eqref{Tr f(H/Lambda)} will allow us to deduce the properties of the spectral action given the corresponding heat trace. Most notably, it will enable us to establish a large energy expansion of $\SA$ in Section \ref{sec:energy}. For our plan to succeed, we need to identify the suitable classes of cut-off functions. We define successively:
\begin{align*}
& \Cl \,\,\, \vc \,\, \{ f = f^+ - f^-, \, f^{\pm} \in \CM \;\; \vert \;\; f(x) \geq 0, \text{ for } x > 0 \}, \\
& \Cl_0\vc \,\, \{ f = \Lc[\phi] \in \Cl \;\; \vert \;\; \forall \, n \in \N ,\;\; \int_{0}^{\infty} s^n d\abs{\phi}(s)  < \infty \}, \\
& \Cl_c \vc \,\, \{ f = \Lc[\phi] \in \Cl \;\; \vert \;\; \supp\, \phi \text{ is compact} \}, \\
& \Cl^p \vc \,\, \{ f \in \Cl \;\; \vert \;\; f^{\pm}(x) = \Oinf(x^{-p}) \}, \quad \text{for } p >0, \\
& \Cl_0^p \vc \,\, \Cl_0 \cap \Cl^p,   \quad \text{for } p >0, \\
& \Cl_c^p \vc \,\, \Cl_c \cap \Cl^p,  \quad \text{for } p >0.
\end{align*} 
\index{_zCl@$\Cl$}\index{_zCl0@$\Cl_0$}\index{_zClc@$\Cl_c$}\index{_zClp&@$\Cl^p$}
\indent \label{f(0)}Observe that, if the operator $H$ is in $\Tp$ then Lemma \ref{traceclass} implies that for $f \in \Cl_0^r$ with $r>p$ we have $\Tr f(H) < \infty$. The property $f(0)<\infty$ is not necessary for $f(H)$ to be trace-class when $H$ is invertible, but $f(0)$ might (and usually does) pop-up as the coefficient in front of $t^0$ in the expansion of $\Tr f(tH)$.

Clearly, $$\CM \subset \Cl,\quad \Cl_c \subset \Cl_0 \quad\text{and} \quad \Cl^r \subset \Cl^p\,\text{ for }\,r > p.$$ Furthermore:

\begin{proposition}
\label{prop:cl}
For any $p,r>0$,  $$\Cl_0^p \cdot \Cl_0^r \subset \Cl_0^{p+r}\quad\text{and}\quad\Cl_c^p \cdot \Cl_c^r \subset \Cl_c^{p+r}.$$
Moreover,
\begin{align}
\label{moments_Lap}
f =\Lc[\phi] \in \Cl_0^p \,\,\,\, \Longrightarrow \,\, \text{for all } m > -p, \;\; \int_0^{\infty} s^{m} \, d\vert\phi\vert(s) < \infty.
\end{align}
\end{proposition}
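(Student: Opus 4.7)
\emph{Strategy.} The proof rests on one key observation: the Laplace transform intertwines pointwise multiplication with convolution of (signed) measures on $\RR^+$. Given this, every claim reduces either to standard facts about convolutions $\phi * \phi'$ or to the Gamma-function identity $s^{-\alpha} = \Gamma(\alpha)^{-1} \int_0^\infty x^{\alpha-1} e^{-sx}\, dx$ for $\alpha,s > 0$.

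\emph{Stability under products.} Writing $f = \Lc[\phi] \in \Cl_0^p$ and $g = \Lc[\phi'] \in \Cl_0^r$, the variations $|\phi|,|\phi'|$ are finite (this is the $n=0$ case of the $\Cl_0$ condition), so the convolution $\phi * \phi'$ is a well-defined finite signed measure with $fg = \Lc[\phi * \phi']$. For membership of $fg$ in $\Cl^{p+r}$, I would use
\[
fg \;=\; (f^+ g^+ + f^- g^-) \;-\; (f^+ g^- + f^- g^+),
\]
invoking the closure of $\CM$ under sums and products (recalled after Bernstein's theorem) together with the fact that each of the four summands is $\Oinf(x^{-(p+r)})$. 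For the moment condition I would use the standard bound $|\phi * \phi'| \le |\phi| * |\phi'|$ and the binomial expansion of $(s+t)^n$, which writes $\int s^n\, d|\phi * \phi'|$ as a finite sum of products of integer moments of $|\phi|$ and $|\phi'|$ -- all finite by hypothesis. The compact-support case follows identically once one notes $\supp(\phi * \phi') \subset \supp \phi + \supp \phi'$ and that a Minkowski sum of compacts is compact.

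\emph{The moment bound $\int s^m\, d|\phi| < \infty$ for $m > -p$.} The range $m \ge 0$ is handled at once: for integer $m$ this is the defining property of $\Cl_0$, and for non-integer $m \ge 0$ the elementary bound $s^m \le 1 + s^{\lceil m \rceil}$ on $\RR^+$ reduces to the integer case. The non-trivial range is $m = -\alpha$ with $0 < \alpha < p$. The plan here has three steps: (a) pass to the Hahn--Jordan decomposition $\phi = \phi^+ - \phi^-$ and show that the decay $f^\pm = \Oinf(x^{-p})$ transfers to $\Lc[\phi^\pm]$ -- this follows because any other representation $\phi = \mu^+ - \mu^-$ into positive measures satisfies $\mu^\pm \ge \phi^\pm$, hence $\Lc[\phi^\pm] \le \Lc[\mu^\pm] = f^\pm$; (b) observe that $\Lc[\phi^\pm](x) \to 0$ at infinity rules out an atom of $\phi^\pm$ at $0$, since otherwise $\Lc[\phi^\pm](x) \ge \phi^\pm(\{0\}) > 0$ uniformly in $x$; (c) insert the Gamma representation of $s^{-\alpha}$ and apply Fubini (positive integrand) to obtain
\[
\int_0^\infty s^{-\alpha}\, d\phi^\pm(s) \;=\; \frac{1}{\Gamma(\alpha)} \int_0^\infty x^{\alpha - 1}\, \Lc[\phi^\pm](x)\, dx,
\]
whose right-hand side converges because the integrand is $\Oinf(x^{\alpha - 1 - p})$ with $\alpha - 1 - p < -1$ and is bounded by $x^{\alpha - 1}\,\phi^\pm(\RR^+)$ near $0$ with $\alpha > 0$.

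\emph{Main obstacle.} The genuinely delicate step is the negative-moment range. The subtlety is twofold: the decay hypothesis is formulated on some $\CM$-decomposition of $f$, which need not coincide with the canonical Hahn--Jordan pieces of $\phi$, and the Gamma-function trick fails if $\phi$ has a Dirac mass at $0$. Both difficulties are overcome by the domination argument in (a) and the trivial lower bound in (b); everything else is Fubini plus routine integrability estimates at $0$ and $\infty$.
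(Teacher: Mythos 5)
Your proof is correct and follows essentially the same route as the paper: convolution of the (finite) signed measures plus the binomial expansion for the product statements, and the Gamma-integral representation of $s^{m}$ with Fubini--Tonelli against the decay $f^{\pm}(x)=\Oinf(x^{-p})$ for the negative moments. Your extra steps (dominating the Hahn--Jordan pieces by the given c.m.\ decomposition and excluding an atom at $0$) make explicit points the paper passes over silently, but they do not change the argument.
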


\begin{proof}
Firstly, we note that since the set $\CM$ is closed under multiplication, so is $\Cl$, and hence $\Cl^p \cdot \Cl^r \subset \Cl^{p+r}$. Secondly, recall that when two signed measures $\phi,\chi$ on $\RR^+$ are finite then their convolution\index{measure!convolution of} $\phi \ast \chi$\index{_Aoast@$\phi \ast \chi$} is defined as \cite[Definition 5.4.2]{Athreya2006}
\begin{align*}
\int_{0}^{\infty} f(s) \,d(\phi \ast \chi)(s) \vc \int_{0}^{\infty}  \int_{0}^{\infty} f(s+t)\, d\phi(s) d\chi(t), \quad \text{ for } \; f \in C_b(\RR^+).
\end{align*}
If $\Lc[\phi], \Lc[\chi] \in \Cl_0$, then $\phi$ and $\chi$ are finite and, moreover, for any $n \in \N$,
\begin{align*}
\int_{0}^{\infty} s^n d\vert\phi \ast \chi\vert(s) & = \int_{0}^{\infty}\int_{0}^{\infty}   (s+t)^n d\aphi(s) d\vert\chi\vert(t)  \\
&= \sum_{k=0}^{n} \tbinom{n}{k} \, \int_{0}^{\infty}  s^k\,d\aphi(s) \, \int_{0}^{\infty} t^{n-k}\, d\vert\chi\vert(t) < \infty.
\end{align*}
From the general properties of Laplace transform \cite[Theorem 10]{Zemanian1} we have \linebreak$\Lc[\phi \ast \chi](x) = \Lc[\phi](x) \Lc[\chi](x)$, hence we conclude that $$\Cl_0 \cdot \Cl_0 \subset \Cl_0 \,\,\text{ and }\,\,\Cl_c \cdot \Cl_c \subset \Cl_c.$$

Formula \eqref{moments_Lap} for $m \in (-p,0)$ follows from Fubini's theorem:
\begin{align*}
 \int_0^{\infty} s^{m} d\vert\phi\vert(s) & =\Gamma(-m) ^{-1} \int_0^{\infty} x^{-m-1} \int_0^{\infty} e^{-s x} \, d\vert\phi\vert(s) \, dx \\
 & =\Gamma(-m) ^{-1}\int_0^{\infty} x^{-m-1} \left(f^+(x) + f^-(x)\right) dx <\infty.
\end{align*}
On the other hand, for $m \geq 0$ the statement follows from the very definition of $\Cl_0$:
\begin{align*}
\int_0^{\infty} s^m d\aphi(s) \leq \int_0^{1} s^{\lfloor m \rfloor} d\aphi(s) + \int_1^{\infty} s^{\lceil m \rceil} d\aphi(s) \leq \int_0^{\infty} ( s^{\lfloor m \rfloor} + s^{\lceil m \rceil} ) d\aphi(s).
\tag*{$\Box$}
\end{align*}
\end{proof}

A pleasant consequence of Proposition \ref{prop:cl} is that if we have at our disposal a cut-off function $f \in \Cl_0^r$ for some $r >0$, then we immediately obtain a cut-off function in $\Cl_0^p$ for any $p \geq r$ by taking $f^n$ for $n\in\N,\, n \geq p/r$.

\begin{example}
\label{ex:ep-ax}
For any $a >0$ the c.m. function $f(x) = e^{-ax}=\Lc(\delta_{a})(x)$ is in $\Cl_c^p$ for all $p>0$. Thus, when $b>a>0$, the function $f(x) = e^{-ax} - e^{-bx}$ is in $\Cl_c^p$ for any $p>0$, while it is not in $\CM$.
\\If $b > a>0$ then also $f(x) = x^{-1} ( e^{-ax} - e^{-bx} )=\Lc( \chi_{[a,b]})(x)$ is in $\Cl_c^p$ for $p>0$.
\hfill$\blacksquare$
\end{example}

\begin{example}
\label{ex:(ax+b)-r}
Let $f(x) = (a x + b)^{-r}$ for $a,\,b,\,r>0$. Then, one computes that $\Lc^{-1}[f] (s)= \Gamma (r)^{-1} \, a^{-r} \,s^{r-1} e^{-b s/a}$ and hence, $f \in \Cl_0^r$, but $f \notin \Cl_c$.
\hfill$\blacksquare$
\end{example}

\begin{example}
Let $f(x) = e^{x^2} [ 1 - \Erf(x) ]$, where the error function $\Erf$ is given by $\Erf(x) = 2\pi^{-1/2} \int_0^x e^{-y^2} dy$. Then, $f\in \CM$ with $\Lc^{-1}[f](s) =\pi^{-1/2} e^{-s^2/4}$. Moreover $f^+(x) = f(x) = \Oinf(x^{-1})$, hence $f \in \Cl_0^1$.
\hfill$\blacksquare$
\end{example}

\begin{example}
\label{ex:exp_sqrt}
Let $f(x) = e^{-\sqrt{x}}$. Then, $f \in \CM$ since $\Lc^{-1}[f](s) = \frac{1}{2 \sqrt{\pi }} \,s^{-3/2}e^{-1/4 s}$. On the other hand, $\lim_{x \to 0^+} f'(x) = \infty$, thus $f \notin \Cl_0$.\\
Similarly, let $f(x) = \log( a + b / x)$ with $a \geq 1,\, b >0$. Then we have $f\in \CM$ with $\Lc^{-1}[f](s) = s^{-1}(1-e^{-b\, s/a})+\log (a) \,\delta (s)$, but $f \notin \Cl_0$ as $\lim_{x \to 0^+} f(x) = \infty$.
\hfill$\blacksquare$
\end{example}

Every signed measure on $\RR^+$ (including $\sum_i \delta_{i}$) can be seen as a Laplace transformable distribution, but the space $\Lc[\Sclpp]$ is strictly larger, as illustrated in Example \ref{ex:Laplace_inv_exp}. A careful reader might thus ask whether one could extend the class $\Cl$ and allow $f$ to be the Laplace transform of a distribution in $\Sclpp$. At the operatorial level it is possible to cook up an analogue of Formula \eqref{Tr f(H/Lambda)}. However, the topology of $\Sclp$ obliges us to control the derivatives of the test functions and the asymptotic expansions, which are our ultimate objective, do not behave well under differentiation. We come back to this point in Remark \ref{rem:Laplace_dist}.

\section{Weyl's Law and the Spectral Growth Asymptotics}\label{sec:Weyl}

Let $\Delta$ be the Dirichlet Laplacian $\Delta$ on a bounded domain $X \subset \RR^d$. In 1911 Hermann Weyl proved that $$N_\Delta(\Lambda) =\tfrac{\Vol (S^{d-1}) \,\Vol (X) }{(2\pi)^d}\,\Lambda^{d/2}\big(1+\oo(1)\big)\,\text{ as }\,\Lambda\to \infty.$$
This formula, dubbed \emph{Weyl's law}, extends to the case of $X$ being a compact connected Riemannian manifold and $\Delta$ the Laplace--Beltrami operator --- see \cite{Ivrii} for a nice overview. Beyond the commutative world one can deduce the asymptotic behaviour of the spectral growth function of an arbitrary positive operator $H$, via the Wiener--Ikehara Tauberian theorem, from the properties of the associated spectral zeta function $\zeta_H$ (cf. \cite{Aramaki} and references therein, and also \cite{Elizaldebook}). On the other hand, the control of $N_H(\Lambda)$ as $\Lambda$ tends to infinity yields the leading small-$t$ behaviour of the heat trace $\Tr e^{-t H}$ (cf. \cite[Corollary 3]{HeatEZ} \footnote{The cited result provides a better control than Proposition \ref{prop:KH}, but requires some non-trivial assumptions --- see \cite[Section 5]{HeatEZ} for a detailed discussion and (counter)examples.} or \cite{SUZ2017} and references therein).

The control of the sub-leading terms in the asymptotic behaviour of $N_H(\Lambda)$ is a formidable task, even in the context of classical elliptic pdos. In the latter case, the remainder is of the order $\Oinf(\Lambda^{(d-1)/2})$, but in full generality of noncommutative geometry one can only expect $\oinf(\Lambda^{d/2})$. The lower order terms, which are of great interest from the viewpoint of the spectral action, are accessible (and will be accessed in Chapter \ref{chap:asymptotic}) provided we trade the sharp cut-off $\chi_{[0,1]}$ for a gentler one.

Let now $\ahd$ be a $d$-dimensional spectral triple with a simple dimension spectrum $\Sd = d -\N$ and $\zeta_D$ regular at the origin. Then, one defines \cite[Eq. (25)]{ConnesUncanny}
\begin{align}
\label{Navr}
\langle N_\DD (\Lambda)\rangle \vc \sum_{k= 1}^{d} \tfrac{1}{k}\,\Lambda^{k}  \ncint \abs{D}^{-k} + \zeta_{D}(0 ) +\dim\ker \DD.
\end{align}
This could be justified via Formula \eqref{SA_simple}~\footnote{Strictly speaking, Formula \eqref{SA_simple} requires $f$ to be the Laplace transform of a signed measure, which is not the case for the counting function. Nevertheless, naively $\int_0^1 u^{k-1}\,du = 1/k$.} using $N_{\DD}(\Lambda) = \langle N_\DD (\Lambda)\rangle + \Oinf(\Lambda^{-1})$.

The computation of \eqref{Navr} is illustrated as follows \cite[Proposition~1]{ConnesUncanny}:

\begin{proposition}
\label{sa average}
Given $\DD$, assume that $\spec D= \Z^*$ and that the total multiplicity of eigenvalues $\{\pm n\}$ is $P(n)$ with a polynomial  $P(u)=\sum_{j=0}^{d-1} c_j\,u^j$. \\Then, $$\langle N_\DD (\Lambda)\rangle =\int_0^\Lambda P(u)\,du + \sum_{j=0}^{d-1} c_j\,\zeta(-j) +\dim\ker \DD.$$
\end{proposition}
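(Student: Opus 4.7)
The plan is to reduce everything to the Riemann zeta function and then read off each ingredient of \eqref{Navr} by direct computation, since all the work is essentially combinatorial once $\zeta_{D}$ is expressed in closed form.

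First I would observe that, because $\spec D = \Z^{*}$ and the total multiplicity of $\{\pm n\}$ equals $P(n)$, the operator $\abs{D}$ has eigenvalues $n \in \N^{*}$ with multiplicity $P(n) = \sum_{j=0}^{d-1} c_{j}\, n^{j}$. Hence for $\Re(s) > d$ one can sum term by term:
\begin{align*}
\zeta_{D}(s) \;=\; \sum_{n=1}^{\infty} P(n)\, n^{-s} \;=\; \sum_{j=0}^{d-1} c_{j}\, \zeta(s-j).
\end{align*}
This already gives the meromorphic extension of $\zeta_{D}$ to $\CC$, with simple poles located at $s = j+1$ for $j = 0,\ldots, d-1$, i.e. at $s = 1, \ldots, d$, with residues $c_{j}$ at $s = j+1$. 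In particular the dimension spectrum is simple and equal to $\{1,\ldots,d\}$.

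Next I would compute each noncommutative integral entering \eqref{Navr}. By the very definition \eqref{ncint}, for $k \in \{1,\ldots,d\}$,
\begin{align*}
\ncint \abs{D}^{-k} \;=\; \Rez{s=0}\, \zeta_{D}(s+k) \;=\; \Rez{s=k}\, \zeta_{D}(s) \;=\; c_{k-1},
\end{align*}
using that only the summand $j = k-1$ contributes a pole at $s = k$ and the Riemann zeta has residue $1$ at $s = 1$. Consequently,
\begin{align*}
\sum_{k=1}^{d} \tfrac{1}{k}\, \Lambda^{k} \ncint \abs{D}^{-k} \;=\; \sum_{k=1}^{d} \tfrac{c_{k-1}}{k}\, \Lambda^{k} \;=\; \sum_{j=0}^{d-1} \tfrac{c_{j}}{j+1}\, \Lambda^{j+1} \;=\; \int_{0}^{\Lambda} P(u)\, du,
\end{align*}
by straightforward integration of the monomial expansion of $P$.

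It only remains to evaluate the constant term, which I would obtain by noting that $\zeta(s-j)$ is regular at $s=0$ for every $j \geq 0$, so
\begin{align*}
\zeta_{D}(0) \;=\; \sum_{j=0}^{d-1} c_{j}\, \zeta(-j).
\end{align*}
Substituting these three ingredients into \eqref{Navr} yields the claimed formula. The whole argument is essentially an explicit unpacking of the definitions; the one point that deserves care is the validity of the term-by-term residue computation, i.e. the fact that $\zeta_{D}$ admits exactly the meromorphic structure described above and that the residues at $s=k$ combine correctly with the prescription \eqref{Navr} -- but this is immediate once the closed form $\zeta_{D}(s) = \sum_{j} c_{j}\zeta(s-j)$ is in hand. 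No serious obstacle is expected.
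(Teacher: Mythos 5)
Your proposal is correct and follows essentially the same route as the paper: write $\zeta_D(s)=\sum_{n\geq 1}P(n)\,n^{-s}=\sum_{j=0}^{d-1}c_j\,\zeta(s-j)$, read off $\ncint \abs{D}^{-k}=\Res_{s=k}\zeta_D(s)=c_{k-1}$ and $\zeta_D(0)=\sum_j c_j\,\zeta(-j)$, and identify $\sum_{k=1}^d \tfrac{c_{k-1}}{k}\Lambda^k$ with $\int_0^\Lambda P(u)\,du$ before inserting everything into \eqref{Navr}. The only difference is that you spell out the domain of convergence and the residue bookkeeping a bit more explicitly, which the paper leaves implicit.
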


\begin{proof}
The zeta function reads $\zeta_D(s)=\tfrac{1}{2}\sum_{n\in \Z^*} P(n) \,n^{-s}=\sum_{n=1}^\infty P(n) \,n^{-s}$, thus $$\zeta_D(s)=\sum_{j=0}^{d-1} c_j\, \zeta(s-j)\, \text{ and }\,\ncint \abs{D}^{-k} = \Res_{s=k}\, \zeta_D(s) = c_{k-1}.$$ Moreover we have $\zeta_D(0)=\sum_{j=0}^{d-1} c_j\, \zeta(-j)$ completing the proof with 
\begin{align*}
\int_0^\Lambda P(u)\,du  = \sum_{j=0}^{d-1} \tfrac{c_j}{j+1}\Lambda^{j + 1}.
\tag*{$\Box$}
\end{align*}
\end{proof}

This result will be extended in Theorem \ref{res-int} in Appendix \ref{app: residues of series} and applied in the computation of the dimension spectrum of noncommutative tori in Section~\ref{subsec:NC torus dim sp}.

\section{Poisson Summation Formula}
\label{sec:Poisson}

Given the Fourier transform\index{Fourier transform}\index{_zFz@$\F[g]$} of a function $g \in L^1(\RR^m)$, 
\begin{align*}
y\in \RR^m \mapsto \F[g](y)\vc \int_{\RR^m}g(x)\,e^{-i\,2\pi \,x.y}\,dx,
\end{align*}
the usual Poisson formula \index{Poisson formula}
\begin{align}
\label{Poisson}
\sum_{k\in \Z^m} g(t\, k+\ell)=t^{-m}\,\sum_{k \in \Z^m} e^{i\,2\pi t^{-1} k.\ell}\,\,\F[g](t^{-1}\,k), \quad \forall\, t>0,\,\forall\, \ell \in \RR^m, 
\end{align}
is valid under mild assumptions for the decay of $g$ and $\F[g]$ \cite[VII, Corollary 2.6]{SteinWeiss}. 

Formula \eqref{Poisson} is deeply rooted in complex analysis. We have (if, e.g., $g \in \Scl$)
\begin{align}
\label{eq:zetagamma}
\zeta(s)\,\M[g](s)=\int_0^\infty \,\sum_{k=1}^\infty g(t\,k)\,t^{s-1}\,dt=\M\big[\sum_{k=1}^\infty g(\cdot\,k)\big](s),\quad \text{ for } \; \Re(s)>1,
\end{align}
what can be deduced from $\int_0^\infty g(t\,k)\, t^{s-1}\,dt=k^{-s}\int_0^\infty g(t) \,t^{s-1} \,dt$. In particular, a proof of the Poisson formula (at least for $m=1$) can be given by applying the inverse Mellin transform to \eqref{eq:zetagamma} and invoking the Riemann functional equation~\cite{Feauveau}.

The Poisson summation formula is particularly useful for the spectral action computations if one knows explicitly the singular values $\mu_n(\DD)$ along with the multiplicities. In favourable cases (which include i.a. the classical spheres and tori) when $\mu_n(\DD)$ are indexed by $\Z^m$ for some $m\in\N$ one can rewrite the spectral action as 
$\Tr f(\abs{\DD}/\Lambda)=\sum_n M_n(\abs{\DD})\,f(\mu_n/\Lambda) = \sum_{k\in \Z^m} g_\Lambda(k)$, 
for some function $g_\Lambda$. Then, the next lemma shows that 
\begin{align}
\label{eq:asymptotics ghat(0)}
\Tr f(\abs{\DD}/\Lambda)= \F[g_\Lambda](0) +\OO_\infty(\Lambda^{-\infty}).
\end{align}
In other words: Given a Schwartz function $g$, the difference between $\int_{\RR^m} g(tx)$ and $\sum_{\Z^m} g(tx)$ is negligible as $t \downarrow 0$.

\begin{proposition}[Connes]
\label{prop:seriesintegral}
Given a function $g \in \Scl(\RR^m)$, define
\begin{align*}
&S(g)\vc\sum_{k\in \Z^m} g(k), \quad I(g) \vc\int_{\RR^m} g(x)\,dx, \quad g_t(x)\vc g(tx) \text{ for any }t>0.
\end{align*}
Then, $S(g_t)=I(g_t)+\OO_0(t^\infty)$.
\end{proposition}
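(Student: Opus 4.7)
The plan is to invoke the Poisson summation formula \eqref{Poisson} with $\ell = 0$, so that the series $S(g_t)$ is turned into another series involving the Fourier transform $\F[g]$, and then extract the $k=0$ term as the leading contribution $I(g_t)$, showing that everything else decays faster than any power of $t$.

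More concretely, since $g \in \Scl(\RR^m)$, so is $g_t$ for every $t > 0$, and the decay hypotheses behind \eqref{Poisson} are satisfied. A direct change of variables yields $\F[g_t](y) = t^{-m}\,\F[g](y/t)$, and in particular
\begin{align*}
\F[g_t](0) = t^{-m}\int_{\RR^m} g(y)\,dy = t^{-m}\,I(g) = I(g_t)\cdot 1,
\end{align*}
so that applying \eqref{Poisson} at $\ell = 0$ with parameter $1$ (or equivalently splitting the $k=0$ term) gives
\begin{align*}
S(g_t) = \sum_{k\in \Z^m}\F[g_t](k) = I(g_t) + t^{-m}\sum_{k \in \Z^m\setminus\{0\}} \F[g](k/t).
\end{align*}

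The remaining step is to bound the tail. Since $\F[g] \in \Scl(\RR^m)$, for any integer $N > m$ there exists a constant $C_N > 0$ such that $\lvert\F[g](y)\rvert \leq C_N (1+\lvert y\rvert)^{-N}$ for all $y \in \RR^m$. Hence, for $0 < t \leq 1$,
\begin{align*}
\Big\lvert t^{-m}\!\!\sum_{k\in \Z^m\setminus\{0\}}\!\!\F[g](k/t)\Big\rvert \leq C_N\, t^{-m}\!\!\sum_{k\in \Z^m\setminus\{0\}}\!\!\Big(1+\tfrac{\lvert k\rvert}{t}\Big)^{-N} \leq C_N\,t^{N-m}\!\!\sum_{k\in \Z^m\setminus\{0\}}\!\!\lvert k\rvert^{-N},
\end{align*}
the last sum being finite because $N > m$. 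As $N$ can be taken arbitrarily large, the right-hand side is $\OO_0(t^{N-m})$ for every $N$, which is precisely $\OO_0(t^\infty)$.

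I do not foresee any serious obstacle here: the only subtle point is verifying the decay of the tail uniformly in $t$, which is why I fix $0 < t \leq 1$ before estimating $(1+\lvert k\rvert/t)^{-N}$ by $t^N\lvert k\rvert^{-N}$. The behaviour of $S(g_t)$ as $t \to \infty$ is irrelevant to the claim, which concerns only the limit $t \downarrow 0$. Putting the two pieces together yields $S(g_t) - I(g_t) = \OO_0(t^\infty)$, as required.
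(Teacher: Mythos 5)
Your proof is correct and follows essentially the same route as the paper: apply the Poisson summation formula at $\ell=0$, identify the $k=0$ term with $I(g_t)$, and bound the remaining series by the Schwartz decay of $\F[g]$. If anything, you are slightly more careful than the paper's own write-up, which silently drops the harmless factor $t^{-m}$ in front of the tail sum; your explicit bound $t^{N-m}\sum_{k\neq 0}\lvert k\rvert^{-N}$ with $N$ arbitrary handles this cleanly.
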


\begin{proof} 
Since $t^{-m}\, \F[g](0)=I(g_t)$, the Poisson formula \eqref{Poisson} implies that it suffices to show that for any large $n\in \N^*$, $$S(g_t)-I(g_t)=\sum_{k\in \Z^m \backslash \{0\}} \F[g](t^{-1}\,k) =\Oz(t^n).$$ 
But $\F[g]\,$ is also a Schwartz function, $\vert \F[g](x) \vert \leq c \vert x\vert ^{-n}$ for $x\neq 0$ and every $n\in \N$, so for $n$ large enough, 
\begin{align*}
\big{\vert} \sum_{k\in \Z^m \backslash \{0\}}  \F[g](t^{-1}\,k) \big{\vert} \leq c \sum_{k\in \Z^m \backslash \{0\}} \vert t^{-1}\,k \vert^{-n} =c' t^n.
\tag*{$\Box$}
\end{align*}
\end{proof}

As an example, this shows that 
\begin{align}
\label{eq: gaussian series}
\sum_{k\in \Z^m}  e^{-t \norm{k}^2} = \int_{\RR^m} e^{-t\,\norm{x}^2}\,dx +\OO_0(t^\infty)= \pi^{m/2}\, t^{-m/2}+\OO_0(t^\infty).
\end{align}
The smoothness of $g$ is indispensable in Proposition \ref{prop:seriesintegral}: For instance, the function $g(x) = e^{-t \,\abs{x}}$ has a Fourier transform $\F[g](k) =  \frac{2\,t}{4\pi^2 k^2+t^2}$ and $\int_{-\infty}^{\infty} e^{-t \,\abs{x}} \,dx = \tfrac{2}{t}$, while $$\sum_{k=-\infty}^{\infty} e^{-t \,\abs{k}} = \tfrac{e^t+1}{e^t-1} =\tfrac{2}{t}+2\sum_{n=1}^\infty \tfrac{B_{2n}}{ (2n)!}\,t^{2n-1}$$ using Bernoulli numbers (cf. p. \pageref{eq:Euler Maclaurin}).

Proposition \ref{prop:seriesintegral} does not give a way to compute the asymptotics of $I(g_t)$ or $S(g_t)$. But it is hiddenly used in the computation of the heat trace asymptotics of a Laplace type operator $P$ on a compact manifold as in Appendix \ref{classical tools}: The asymptotics of $\Tr e^{-t\,P}$ is not computed directly via $\Tr e^{-t\,P}=\sum_{n=0}^\infty e^{-t\,\mu_n(P)}$ but, as reminded in Appendix \ref{sec: The operator exp(-iP) as a pdo}, using the integral kernel of $e^{-t\,P}$ and an approximation of the resolvent $(P-\lambda)^{-1}$ by a pseudodifferential operator $R(\lambda)$, the parametric symbols of which $r_n(x,\xi,\lambda)$ authorise the use of integrals over $x$ and $\xi$ (see \cite[Section 1.8]{Gilkey1}) without any reference to the discrete summation $\sum_n$.

\medskip

Let us now illustrate the usefulness of the Poisson summation formula for the spectral action computations:\pagebreak

\begin{example}\label{ex:Poisson_sphere}
If $\DD$ is the Dirac operator on the sphere $S^d$, then by Formula \eqref{eigenvalues_Sd}, 
$$\Tr f(\abs{\DD}/\Lambda)=\sum_{n\in \N} 2^{\floor{d/2}+1} \tbinom{n+d-1}{d-1}\,f((n+\tfrac{d}{2})/\Lambda).$$
Let us set $d=3$, as in \cite{ConnesUncanny}. Assuming that $f$ is an even Schwartz function and checking that $\pm1/2$ are not in the spectrum, we get, via Equation \eqref{Poisson},
\begin{align*}
\Tr f(\abs{\DD}/\Lambda) &= 2\sum_{n\in \N} (n+1)(n+2) f((n+3/2)/\Lambda)=\sum_{k\in \Z} k(k+1) f((k+1/2)/\Lambda)\\
& =\sum_{k\in \Z} g_\Lambda(k+1/2)  = \sum_{k\in \Z} (-1)^k \,\F[g_\Lambda](k),
\end{align*}
where $g_\Lambda(x)\vc(x-1/2)(x+1/2)f(x/\Lambda)$. As already seen in Formula \eqref{eq:asymptotics ghat(0)}, we only need to compute $\F[g_\Lambda](0)$ and the asymptotics of the action is
\begin{align}
\label{SA:S3}
\Tr f(\abs{\DD}/\Lambda) = \Lambda^3\int_\RR x^2 f(x)\,dx -\tfrac14 \Lambda \int_\RR f(x)\,dx+\OO_\infty(\Lambda^{-\infty}).
\quad \hspace{0.5cm} \blacksquare
\end{align}
\end{example}
\begin{example}
\label{ex:Poisson_tori}
Consider the spectral triple $(C^\infty(\TT^d),L^2(\TT^d,\SS),\Dslash)$ and assume again that $d=3$. By \eqref{eq:eigenvalues tori}, the spectrum of $\Dslash$ is the set of all values $\pm2\pi\,\Vert k+\ell\Vert$ where $k$ varies in $\Z^3$ while $\ell\in \RR^3$ is fixed and given by the chosen spin structure $(s_1,\dotsc,s_d)$. There is no degeneracy, apart from zero which is a double eigenvalue. Again, to simplify we assume that $f\in \Scl(\RR)$ and, as computed in \cite{Marcolli1},
\begin{align*}
\Tr f(\abs{\DD}/\Lambda)&=2\sum_{k\in \Z^3} g_\Lambda(k) = 2\sum_{k\in \Z^3} \F[g_\Lambda](k) =2\F(g)(0)+\OO_\infty(\Lambda^{-\infty}),
\end{align*}
with $g_\Lambda(x)\vc f(2\pi\norm{x+\ell}/\Lambda)$. Since $\F[g_\Lambda](0)=\int_{\RR^3} f(2\pi\norm{x+\ell}/\Lambda)\,dx$, we get
\begin{align}
\label{action 3-torus}
\Tr f(\abs{\DD}/\Lambda)=\tfrac{1}{4\pi^3}\Lambda^3\int_{\RR^3} f(\norm{x})\,dx +\OO_\infty(\Lambda^{-\infty}). \hspace{2.5cm}\blacksquare
\end{align}
\end{example}
In Example \ref{ex:Poisson_tori}, one can see that the spectrum of $\DD$ does depend on the choice of the spin structure and so does the spectral action. But this choice disappears in \eqref{action 3-torus}, so that the dependence on the spin structure is asymptotically negligible. Is this a general fact? See the associated Problem \ref{Beyond the asymptotic expansion} in Chapter \ref{chap:open}.

\medskip

The Poisson formula has been widely employed in the computation of the spectral action, as explained in Section \ref{sec:sa}. It requires, however, a complete knowledge of the spectrum of $\DD$. The latter is available for a few Dirac operators on Riemannian manifold, like the spheres, tori, Bieberbach manifolds, quotients of Lie groups, like $SU(n)$, etc. The reader should notice that the calculations can be tricky, especially when swapping between $f$ and $g$ in \eqref{eq:asymptotics ghat(0)}.

Although the Poisson summation formula is exact, we focused on its application for the study of asymptotics. Of course, since every summation formula is associated with an asymptotic formula \cite{Estradabook}, one can investigate the Taylor--Maclaurin or Euler--Maclaurin asymptotic formulae. For instance, the latter reads \cite{ConnesFLRW}, for any $2 \geq m \in \N$
\begin{align}
&\sum_{k=0}^N g(k) =\!\int_0^N \!\!g(x)\,dx +\tfrac{[g(0)+g(N)]}{2} +\!\sum_{j=2}^m \tfrac{B_j}{j!} \,[g^{(j-1)}(N)-g^{(j-1)}(0)]  + R_m , \label{eq:Euler Maclaurin} \\
&\hspace{-0.35cm}\text{with }\, R_m \vc\tfrac{(-1)^{m+1}}{m!}\int_0^N g^{(m)}(x)\, B_m(x-\floor{x})\,dx, \nonumber
\end{align}
where $B_j=-2j\zeta(1-2j)$ are the Bernoulli numbers, with $B_{2j+1}=0$ for $j\in \N^*$ so that  $B_2=-\tfrac16,\,B_4=-\tfrac{1}{30},B_6=\tfrac{1}{42}, \ldots$, and $B_j(x)$ are the Bernoulli polynomials defined by induction (cf. \cite[Chap.~23]{Abram}):  $$B_0(x)=1,\quad B'_j(x)=j\,B_{j-1}(x),\quad\int_0^1 B_j(x)\,dx=0.$$
When $N\to \infty$ and $7\leq m\in \N$,
$$\abs{R_m}\leq \tfrac{2}{(2\pi)^m}\,\zeta(m)\int_0^\infty \vert g^{(m)}(x)\vert \,dx \leq \int_0^\infty \vert g^{(m)}(x)\vert \,dx.$$

\begin{example}
\label{ex:Euler_sphere}
Consider the Dirac operator on the sphere $S^4$, as in \cite{ConnesFLRW}. According to Appendix \ref{sec:spheres}, we have $\{\mu_n(\abs{\Dslash}\,\vert\, n\in \N\}=\{k\in \N\}$ and the eigenvalue $k$ has multiplicity $\tfrac{4}{3} (k^3-k)$ (thus $0,\,1$ are not eigenvalues). The Euler--Maclaurin formula \eqref{eq:Euler Maclaurin} can be used for $\Tr f(\abs{\DD}/\Lambda)=\tfrac43 \sum_{k=0}^\infty g(k)$, with $g(x)=(x^3-x)f(x/\Lambda)$. Assuming again $f\in \Scl$, we get $R_m=\OO_\infty(\Lambda^{4-m})$ because $g^{(m)}(\Lambda u)=\Lambda^{3-m}P_m(u,\Lambda^{-1},f^{(k)}(u))$, where $P_m(u,\Lambda ^{-1},f^{(k)}(u))$ is a polynomial in $u,\,\Lambda^{-1}$ and a finite number of $f^{(k)}(u)$. 
Since $g^{(2j)}(0)=0$ for all $j\in \N$ and 
\begin{align*}
&g'(0)=-f(0),\,g^{(3)}(0)=6f(0)-6\Lambda^{-1}f'(0),\\
&g^{(5)}(0)=120\Lambda^{-1}f'(0)-60\Lambda^{-2}f''(0)
\end{align*}
etc., we obtain
\begin{align*}
\tfrac34 &\Tr f(\abs{\DD}/\Lambda)\\
&=\int_0^\infty (x^3-x)\,f(x/\Lambda)\,dx + (\tfrac{1}{12} + \tfrac{1}{120})f(0) 
+ (-\tfrac{1}{120} \Lambda^{-2}-\tfrac{1}{252}\Lambda^{-2})f'(0) +  \dotsc ,
\end{align*}
which means
\begin{align}
\Tr f(\abs{\DD}/\Lambda)=  \tfrac43\Lambda^4\,\int_0^\infty u^3\,f(u)\,du&- \tfrac43\Lambda^{2}\int_0^\infty u \,f(u)\,du + \tfrac{11}{90}f(0) \nonumber\\
&+\sum_{m=1}^\infty c_m\,\Lambda^{-2m}\,f^{(m)}(0) +\OO_\infty(\Lambda^{-\infty}). \label{action for S4}
\end{align}

The coefficients $c_m$ can be computed explicitly: $c_1=31/1890$, $c_2=41/7560$, $c_3=-31/11880$, etc.
\hfill $\blacksquare$
\end{example}

Similarly, the Euler--Maclaurin formula can be utilised in the same way when the singular values of $\DD$ are polynomials in $k\in \N$ with polynomial multiplicity in $k$. 

Some authors use the name ``non-perturbative spectral action'' for expressions like \eqref{action for S4} after erasing the remainder $\OO_\infty(\Lambda^{-\infty})$. We warn the Reader that the `exponentially small' term $\OO_\infty(\Lambda^{-\infty})$ can hide the devils, as argued in Section \ref{sec:sa}.

\section{Asymptotic Expansions}\label{sec:asympt}

The bulk of the physical information encoded in the spectral action can be read out from its asymptotic behaviour at large energies. The leading term established in Section \ref{sec:Weyl} provides rather scarce information. 
We would like to recognise the leading, subleading, etc. terms of the action and to control the remainder in a sensible way. It would be most desirable to have a Laurent expansion
\begin{align}\label{SA_Taylor}
\SA = \sum_{k = -N}^{\infty} a_{k}(f,\DD) \Lambda^{-k},
\end{align}
the convergence of which for $\Lambda$ larger than some $\Lambda_{\text{min}}$ would guarantee that considering higher and higher order terms provides a better approximation of the true, non-perturbative, action $\SA$. Unfortunately, we are granted such a level of precision only for rather specific geometries --- see Section \ref{sec:conv_exp}. Typically, the best what one could hope for is a control of the order of the remainder when the series \eqref{SA_Taylor} is truncated to the first, say, $M$ terms. This desire is formalised with the help of the notion of an \emph{asymptotic series}\index{asymptotic series}. It firstly requires the following definition:

\begin{definition}\label{def:asym_scale} 
Let $(\varphi_k)_k$ be a (finite or infinite) sequence of functions on a punctured neighbourhood of $x_0$. We call it an \emph{asymptotic scale}\index{asymptotic scale} at $x_0$ if, for any $k$, $\varphi_k(x) \neq 0$ for $x \neq x_0$ and $\varphi_{k+1}(x)=\oo_{x_0}(\varphi_k(x))$.
\end{definition}

\begin{example}
For any $x_0 \in \RR$ the sequence $\big((x-x_0)^n\big)_{n \in \N}$, known from the Taylor series, is an asymptotic scale at $x_0$. More generally, one could take $(x-x_0)^{r_n}$ for any complex $r_n$'s with $\Re (r_n) \nearrow \infty$.

The logarithmic terms can also be taken into account: For instance, the sequence  $\{x^{-1} \log^2 x, \,x^{-1} \log x,\, x^{-1}, \log^2 x, \,\log x,\, 1, \dotsc \}$ defines an asymptotic scale at $x_0 = 0$.

On the other hand, the sequence $\{\tfrac{\pi}{2}, \,x^{-1} \cos x, \,x^{-2} \sin x, \,x^{-3} \cos x, \dotsc\}$ is \emph{not} an asymptotic scale at $x_0 = \infty$, as, for instance, the limit of $\varphi_3(x) / \varphi_2(x)$ for $x \to \infty$ does not exist.
\hfill$\blacksquare$
\end{example}

\begin{definition}\label{def:asym}
Let $(\varphi_k)_{k \in \N}$ be an asymptotic scale at $x_0$ and $g$ a complex function on some punctured neighbourhood of $x_0$. \\ We say that $g$ has an \emph{asymptotic expansion with respect to $(\varphi_k)_k$}\index{asymptotic expansion} when there exists a sequence of functions $(\rho_k)_{k\in \N}$ such that 
\begin{align*}
&\rho_k(x) = \OO_{x_0}(\varphi_k(x)),\qquad g(x)-\sum_{k=0}^N \,\rho_k(x)=\oo_{x_0}(\varphi_{N}(x)), \quad \text{ for any } N \in \N.
\end{align*}
In this case, we write
\begin{align}
\label{eq:exp}
g(x) \; \underset{x \to x_0}{\sim} \; \sum_{k=0}^\infty \,\rho_k(x)
\index{_ASim1@$\protect\underset{x \to x_0}{\sim} $}
\end{align}
and the symbol $\sum_{k=0}^\infty \rho_k(x)$ is called an \emph{asymptotic series} of $g$ at $x_0$.
\end{definition}

\begin{remark}\label{rem:asymp_exp}
Let us warn the reader that the adopted Definition \ref{def:asym} is sometimes called in the literature an \emph{extended} asymptotic expansion (cf. \cite[Definition 1.3.3]{Estradabook}). In the standard approach (see for instance \cite{Erdelyi}) one assumes that $\rho_k(x) = c_k \varphi_k(x)$ with some complex constants $c_k$. An example of a function which does not have an asymptotic expansion at infinity in this standard sense, but does have one in the extended sense is $S(x) = \int_0^{x} t^{-1} \sin t \, dt$ --- see \cite[Example 8]{Estradabook}.

Observe also that Definition \ref{def:asym} uses a flexible, though somewhat slippery, notation -- two \emph{different} pairs of scale functions $(\varphi_k)_k$ and coefficients $(\rho_k)_k$ can give \emph{the same} expansion \eqref{eq:exp} around $x_0$ for a given function $g$ (cf. Example \ref{ex:Gilkey_pseudodiff}).

In our venture through the meanders of noncommutative geometry we will encounter situations in which the spectral action $\SA$ exhibits oscillations at large energies --- see Example \ref{ex:Podles_SA_asym}. To study its asymptotics, we will need the full force of Definition \ref{def:asym}.
\hfill$\blacksquare$
\end{remark}

Let $g: \RR \to \RR$ be a smooth function around $x=0$. Then $g$ always admits a Taylor expansion $g(x) \sim_{x\to 0}\,\sum_{k=0}^{\infty} \tfrac{g^{(k)}(0)}{k!}\,x^k$. However, if $g$ is not analytic, its Taylor series is only an asymptotic one. As an example consider $g(x) = \int_0^{\infty} e^{-s}  (1 + s^2 x^2)^{-1} ds$. It is smooth at $x=0$ and $g(x) \sim_{x\to 0}\,\sum_{k = 0}^{\infty} (-1)^k(2k)!\,x^{2k}$, but this expansion is clearly not a convergent series.

Let us also note that whereas the functions $\varphi_k$ of the asymptotic scale must be non-zero in some punctured neighbourhood $V$ of $x_0$, the coefficients of the expansion --- i.e. the functions $\rho_k$ --- can actually vanish in the whole $V$. A standard example is provided by the function $h(x)\vc e^{-1/x}$ for $x>0$, $h(x)\vc 0$ for $x\leq0$. The function $h$, being smooth at $x=0$, admits a Taylor expansion, i.e. an asymptotic expansion with respect to the asymptotic scale $(x^k)_k$. But, as $h^{(k)}(0) = 0$ for every $k \in \N$, we have $h(x) \sim_{x\to 0} 0$, i.e. $h(x) = \Oz(x^{\infty})$. Note, however, that we could take $\varphi_k(x) = e^{-1/x^k}$ as the asymptotic scale and write a trivial expansion $h(x) \sim_{x \to 0} e^{-1/x} + \oz(\varphi_N(x))$ for any $N \in \N^*$.

The casus of the latter function $h$ shows that an asymptotic expansion can never determine a function uniquely, even if it converges. Indeed, for any function $g$ smooth at zero, the functions $g$ and $g+h$ will have the same asymptotic expansions with respect to the scale $(x^k)_k$.

\begin{remark}
\label{rem:asympt_diff}
Asymptotic expansions can be combined by linearity and multiplication. They can also be integrated, both with respect to the variable $x$, as well as to some external parameters. On the other hand, asymptotic expansions (even the standard --- ``non-extended'' --- ones) do not behave well under differentiation. This is because the only information available about the remainder is its $\Oinf$ behaviour, which is scarce: Although $f(x) = \OO_{x_0}(g(x))$ but, in general, $f'(x) \neq \OO_{x_0}(g'(x))$, as exemplified by $$f(x) = x \sin(x^{-1}) = \Oz(x),\quad f'(x) = \Oz(x^{-1}) \neq \Oz(1).$$
For more details on the properties of asymptotic series see \cite[Section 1.4]{Estradabook} and references therein.
\hfill$\blacksquare$
\end{remark}

\begin{example}
\label{ex:Gilkey_diff}
As in Example \ref{ex:commutative} let us consider an elliptic selfadjoint {\it differential} operator $H$ of order $m$ acting on a vector bundle $E$ over a closed (i.e. compact without boundary) $d$-dimensional Riemannian manifold $M$ and let $K \in C^{\infty} (\End(E))$ be an auxiliary smooth endomorphism. If $H$ is positive, then \cite[Theorem 1.3.5]{Gilkey2}
\begin{align}
\label{eq:tr e-tp, P diff}
\Tr K e^{-t\,H} \tzero\,\sum_{k=0}^\infty \,a_k(K,H) \,t^{(k-d)/m}.
\end{align}
The natural asymptotic scale is $\varphi_k(t) = t^{(k-d)/m}$, whereas $\rho_k(t) = a_k(K,H) \,t^{(k-d)/m}$ in accordance with \cite[Definition (1.8.10)]{Gilkey1}.
\hfill$\blacksquare$
\end{example}

\begin{example}
\label{ex:Gilkey_pseudodiff}
More generally, if $P \in \Psi^m(M,E)$ is positive, elliptic with $m>0$, then
\begin{align*}
\Tr e^{-t\,P} \tzero \sum_{k=0}^\infty a_k(P) \,t^{(k-d)/m} + \sum_{\ell=0}^{\infty} b_\ell(P)  \,t^{\ell} \, \log t.
\end{align*}
See Corollary \ref{cor:heat for elliptic pdo} in Appendix \ref{sec: the heat asymptotics of exp(-tP)} for the full story and a detailed proof.\\
Note that this expansion can be read as the one with respect to the asymptotic scale $\{t^{-d/m} \log t, \,t^{-d/m}, \,t^{(-d+1)/m} \log t, \,t^{(-d+1)/m}, \,\ldots \}$, with constant coefficients $c_{2j+1} = a_j(P)$ and $c_{2j} = 0$ for $j<d$, $c_{2j} = b_j(P)$ for $j \geq d$. \\Alternatively, one can set the asymptotic scale $\varphi_j(t) = t^{(j-d)/m - \epsilon}$ for some small $\epsilon >0$ and take
\begin{align*}
\rho_j(t) =  \left\{\begin{array}{ll}
(a_j + b_{(j-d)/m} \, \log t)\, t^{(j-d)/m}&\text{ for }j-d \in m\N,\\
a_j \, t^{(j-d)/m} &\text{ for } j-d \notin m\N, 
\end{array}\right.
\end{align*}
to satisfy $\rho_j(t) = \Oz(\varphi_j(t))$. 

This can be further generalised: If $Q$ is a log-polyhomogeneous pdo of order $q$, 
\begin{align*}
\Tr \,Q \,e^{-t \,P} \tzero \sum_{j=0}^{\infty} \big[ \,\sum_{k=0}^{r+1} a_{j,k}(Q,P)\, \log ^k t\,\big] \,t^{(j-q-d)/m},
\end{align*}
where $r$ is the degree of log-polyhomogeneity of $Q$ (cf. \cite[Section 3]{Lesch}).
\\
In fact, as suggested on p. 168 of \cite{Lesch} (see also \cite[p. 488]{GrubbSeeley}), this result should survive even if $P$ is also log-polyhomogeneous rather than a classical one. 
\hfill$\blacksquare$
\end{example}

\begin{remark}
\label{rem:different operators with same asymp}
Let $P$ be an elliptic positive pseudodifferential and let $Q=P+R$, where $R$ is a smoothing pdo. \\
Then, $(P-\lambda)^{-1}$ and $(Q-\lambda)^{-1}=(P-\lambda)^{-1}[\bbbone-R(Q-\lambda)^{-1}]$ have the same parametrix. Moreover, if $P$ satisfies Hypothesis \ref{hyp: principal symbol with parameter}, so does $Q$. Thus the difference between the two pdo's $e^{-t\,P}$ and $e^{-t\,Q}$ (see Appendix \ref{sec: The operator exp(-iP) as a pdo}) is $\OO_0(t^{\infty})$. This means that $\Tr e^{-t\,P}$ and $\Tr e^{-t\,Q}$ have the same asymptotics. This coincidence also follows from the Duhamel formula \eqref{Duhamel}.
\hfill$\blacksquare$
\end{remark}

Finally, let us present an example of an asymptotic expansion of the spectral action on the standard Podle\'s sphere, which justifies the need of the full force of Definition \ref{def:asym}.

\begin{example}\label{ex:Podles_SA_asym}
Let $\DD_q$ be the Dirac operator on the standard Podle\'s sphere (cf. Appendix \ref{sec:Podles}). For a suitable cut-off function $f$ the spectral action admits a large-energy asymptotic expansion
\begin{align}\label{SA_Podles_asym}
S(\DD_q,f,\Lambda) & \Linf \sum_{k = 0}^{\infty} \,\, \sum_{n = 0}^{2} \,\, \sum_{j \in \Z} \, a_{-2k+ \tfrac{2 \pi i}{\log q}\, j,n} \, \x \\
& \qquad\quad\; \x \sum_{m = 0}^{n} (-1)^{n-m}  \tbinom{n}{m}  \,f_{-2k+ \tfrac{2 \pi i}{\log q}\, j,m} \,(\log \Lambda)^{n-m} \,\Lambda^{-2k+ \tfrac{2 \pi i}{\log q}\, j}. \notag
\end{align}
The numbers $f_{z}$ are the generalised moments of the cut-off function and the $a_{z,n}$'s can be computed explicitly in terms of the residues of $\zeta_{\DD_q}$ --- see \cite{PodlesSA} for the details.

As in Example \ref{ex:Gilkey_pseudodiff} the expansion \eqref{SA_Podles_asym} can be consider either with respect to the asymptotic scale $\{\log^2 \Lambda,\, \log \Lambda, \,1, \,\Lambda^{-2} \log^2 \Lambda, \,\Lambda^{-2} \log \Lambda, \,\Lambda^{-2},\, \ldots \}$ or with respect to $(\Lambda^{\epsilon - 2 k})_{k \in \N}$ for any $\epsilon >0$. In both cases, the coefficients $\rho_k(\Lambda)$ of the asymptotic expansion are rather involved functions of $\Lambda$ given in terms of absolutely convergent Fourier series in the variable $\log \Lambda$. 
\hfill$\blacksquare$
\end{example}

\section{Convergent Expansions}\label{sec:conv_exp}

Surprisingly enough, it turns out that the asymptotic series in Formula \eqref{SA_Podles_asym} is actually convergent for all $\Lambda > 0$ and, moreover, one could write a genuine equality in place of the asymptotic expansion symbol \cite[Theorem 5.2]{PodlesSA}. The following definition provides a general framework for such situations.

\begin{definition}
\label{def:conv}
Let $\sum_{k=0}^\infty \rho_k(x)$ be an asymptotic series of a function $g$ at $x_0$ with respect to an asymptotic scale $(\varphi_k)_k$. \\ We say that the asymptotic expansion of $g$ at $x_0$ is \emph{(absolutely\,/\,uniformly) convergent} \index{asymptotic expansion!absolutely convergent} \index{asymptotic expansion!uniformly convergent} if the series $\sum_{k=0}^\infty \rho_k(x)$ converges (absolutely\,/\,uniformly) in some punctured neighbourhood $V$ of $x_0$. In this case, for $x \in V$,
\begin{equation*}
g(x) = \sum_{k=0}^\infty \rho_k(x) + R_{\infty}(x), \quad \text{ with } \,\, R_{\infty}(x) = \oo_{x_0}(\varphi_k(x)) \,\, \text{ for all } \,\, k \in \N.
\end{equation*}
If, moreover, $R_{\infty} = 0$ then the expansion is called \emph{exact}.\index{asymptotic expansion!exact}
\end{definition}

Let us note that since $\sum_{k=0}^\infty \rho_k(x)$ is not, in general, a Taylor series, the domain of its absolute convergence can be strictly smaller than the domain of conditional convergence --- cf. page \pageref{gen_Dirichlet_series}.

In this light, the expansion presented in Example \ref{ex:Podles_SA_asym} is exact for all $\Lambda >0$. Let us illustrate Definition \ref{def:conv} with two further examples of geometric origin:

\begin{example}
\label{ex:conv_non}
Let $\Dslash$ be the standard Dirac operator on $S^1$ associated with the trivial spin structure (see Appendix \ref{sec:spheres}). Then, $$\Tr e^{-t \Dslash^2} = \sum_{n=-\infty}^{\infty} e^{-t \,n^2} = \vartheta_3(0; e^{-t})\text{  for }t>0$$ where 
$$
\vartheta_3(z; q = e^{i \pi \tau}) = \vartheta_3(z \, \vert \, \tau) \vc \sum_{n=-\infty}^{\infty} q^{n^2} e^{2 n i z}
$$
\index{_a4theta3@$\vartheta_3(z; q)$}for \index{_a4theta3v@$\vartheta_3(z\,\vert\,\tau)$} $z \in \CC$, $\abs{q} < 1$ is a Jacobi theta function (see \cite[p. 464]{WhittakerWatson}) which enjoys the Jacobi identity \cite[p. 475]{WhittakerWatson}
\begin{align}
\label{Jacobi_id}
\vartheta_3(z \, \vert \, \tau) = (- i \tau)^{-1/2} \,  e^{z^2/\pi i \tau} \, \vartheta_3 \left(z\tau^{-1} \, \vert  -\tau^{-1} \right).
\end{align}
Equation \eqref{Jacobi_id} implies that, for any $t>0$,
\begin{align*}
\Tr e^{-t\, \Dslash^2} = (\tfrac{\pi}{t})^{1/2} \, \vartheta_3(0; e^{-\pi^2/t}) =(\tfrac{\pi}{t})^{1/2} + 2 (\tfrac{\pi}{t})^{1/2}  \sum_{n=1}^{\infty} e^{-\pi^2 n^2/t} = (\tfrac{\pi}{t})^{1/2}  + \Oz(t^{\infty}).
\end{align*}
The above asymptotic expansion is finite (thus absolutely and uniformly convergent $\forall t>0$), but it is not exact. In fact, the asymptotic expansion of the heat trace associated with $\Dslash^2$ -- the square of the standard Dirac operator on any \emph{odd} dimensional sphere $S^d$ is finite, but not exact --- see Example \ref{ex:odd_sphere_2} and \cite[Corollary 2]{HeatEZ}.
\hfill$\blacksquare$
\end{example}

\begin{example}
\label{ex:exact}
Let again $\Dslash$ be the standard Dirac operator on $S^1$ associated with the trivial spin structure (see Appendix \ref{sec:spheres}). Then, for $t>0$,
\begin{align}\label{heat_S1}
\Tr e^{-t \,\abs{\Dslash}} = 1 + 2 \sum_{n=1}^{\infty} e^{-t \,n} = \coth \left( \tfrac{t}{2} \right).
\end{align}
The function $\coth$ is complex analytic in the annulus $\abs{s} \in (0,\pi)$ and its Laurent expansion at $s=0$ reads \cite[(4.5.67)]{Abram}: 
\begin{align}
\label{coth}
\coth s = s^{-1} + \sum_{k=1}^{\infty} \tfrac{2^{2k} \,B_{2k}}{(2k)!} \, s^{2k-1},
\end{align}
where $B_k$\index{_zBernoulli@$B_k$} are the Bernoulli numbers (cf. page \pageref{eq:Euler Maclaurin}). Hence, previous equality provides an asymptotic expansion for $\Tr e^{-t \,\abs{\Dslash}}$ as $t \downarrow 0$, which is convergent (absolutely and uniformly on compacts) for $t \in (0,2\pi)$ and, moreover, exact.

On the side we remark that \eqref{heat_S1} actually provides a meromorphic continuation of $\Tr e^{-t \,\abs{\Dslash}}$ to the whole complex plane.
\hfill$\blacksquare$
\end{example}

\section{On the Asymptotics of Distributions}
\label{sec: On the asymptotics of distributions}

As explained in Section \ref{subsec:Laplace}, the Laplace transform technique applied to the spectral action computations restricts the admissible cut-off function. In this section we briefly sketch an alternative approach, via the asymptotics of distributions following \cite{Estradabook} (see also \cite{Distributions,EstradaFulling1999} and \cite[Setion 7.4]{Elements}), which works for $f \in \Scl(\RR)$.

For $m \in \N$, let 
\begin{align*}
\K_m \vc \{f\in C^\infty(\RR^n, \CC)\,\vert\,\,\vert \partial^\alpha f(x)\vert\leq c_\alpha(1+\norm{x})^{m-\vert\alpha\vert}, \,\forall \text{ mult-ind. }\alpha\in \N^n\}
\end{align*}
be equipped with the topology defined by the family of seminorms $$\norm{f}_{k,m} \vc \sup_{x\in \RR^n} \{\max(1,\norm{x}^{k-m}) \,\vert \partial^\alpha f(x)\,\vert \text{ with } \vert \alpha \vert=k\}.
$$
In particular, $f\in \K_m$ implies $\partial^\alpha f \in \K_{m-\vert\alpha\vert}$.\\ Now, let $\K$ \index{_zKh1@$\K$}  be the inductive limit of the spaces $\K_m$ as $m$ tends to infinity. \\
Remark that all polynomials are in $\K$ and the Schwartz space $\Scl$ is dense in $\K$ so that we can consider the dual $\K'$ as a subset of tempered distributions.\\
We remark that Proposition \ref{prop:seriesintegral} still holds true for $f\in \K(\RR^n)$ and such that $\int_{\RR^n} f(x) dx$ is defined --- see \cite[Lemma 2.10]{Distributions}. More importantly, a distribution $T \in\Dclp(\RR)$ is in $\K'(\RR)$ if and only if (see \cite[Theorem 6.7.1]{Estradabook})
\begin{align}
\label{eq:asymptoticdistribution}
T(\lambda x) \,\underset{\lambda \to \infty}{\sim} \,\sum_{n=0}^\infty \tfrac{(-1)^n\,\mu_n}{n!\,\lambda^{n+1}}\,\delta^{(n)} (x),\, \text{with } \mu_n \vc \langle T,\,x^n \rangle \text{ -- the moments of } T
\end{align}
in weak sense: $$\text{For any }f\in \K, \langle T(\lambda x),\,f(x) \rangle = \sum_{n=0}^N \tfrac{\mu_n}{n!\,\lambda^{n+1}}f^{(n)}(0) + \OO_\infty(\lambda^{-N-2}) \text{ for any }N \in \N^*.$$

Notice that such a \emph{moment asymptotic expansion} of a distribution harmonises with the notion of the asymptotic expansion pondered in Section \ref{sec:asympt}, as for $f\in \K$, we have $\langle T(t^{-1} x), f(x) \rangle \tzero \sum_{n=0}^\infty \tfrac{\mu_n}{n!}\, f^{(n)}(0)\, t^{n+1}$.

Formula \eqref{eq:asymptoticdistribution} is directly related to the Ces\`aro behaviour of distributions (cf. \cite[Section 6.3]{Estradabook}): Given a distribution $T\in \Dclp(\RR)$ and $\beta \in \RR\setminus( -\N^*)$ one writes
$$
T(x)\vc\OO_\infty(x^\beta) \quad (C)
$$
\index{_zOozzz@$\OO_\infty(x^\beta) \quad (C)$}if there exist $N\in \N$, a primitive $T_N$ of $T$ of order $N$  and a polynomial $p$ of degree at most $N-1$, such that $T_N$ is locally integrable for large $x$ and we have, in the usual sense, $T_N(x)=p(x) + \Oinf(x^{N+\beta})$. Similarly, one defines $T(x)=\oinf(x^\beta)\quad (C)$ and, consequently, $T(x) \vc \oinf(x^{-\infty})\quad (C)$ means that $T(x) \vc \oinf(x^{-\beta})\quad (C)$ for all $\beta$. \\
It turns out that $T \in \K'$ if and only if $T(x) = \oo_{\pm \infty}(\vert x\vert^{-\infty})\quad (C)$ (cf. \cite[Theorem 6.7.1]{Estradabook}), which shows that the asymptotic behaviour \eqref{eq:asymptoticdistribution} at infinity is in fact equivalent to the Ces\`aro behaviour.

A good illustration of these notions is the following (see \cite[Example 164]{Estradabook}). 

\begin{example}
A distribution $S\in \Sclp(\RR)$ is said to be distributionally smooth at $0$ when $S^{(n)}(0)$ exists (in the sense of Stanis\l aw {\LL}ojasiewicz) for $n\in \N$ --- a property equivalent to the fact that its Fourier transform is distributionally small: $\F(S)\in \K'$.\\
Let now $\xi\in \CC$ with $\abs{\xi} = 1$ and assume first that $\xi\neq 1$. Let $T\in \K'$ be defined as $T(x)=\sum_{n\in \Z} \xi^n\,\delta_n(x)$. We have $\mu_k=\sum_{n\in \Z} \xi^n\,n^k=0\quad (C)$. Thus, if $S\in \Sclp$ is distributionally smooth at $0$, then $\sum_{n\in \Z} \xi^n S(nx)=\oo_0(x^\infty)$ in $\Dclp$. \\ But when $\xi=1$, the Dirac comb $\sum_{n\in \Z}\delta_n$ is not in $\K'$. Nevertheless, $\sum_{n\in \Z}\delta_n-1 \in \K'$ and, still in $\Dclp$: 
$$\sum_{n\in \Z} S(n\,x)=x^{-1} \int_\RR S(u)\,du+\oo_0(x^\infty),$$
provided that $\int_\RR S(u)\,du$ is defined.
\hfill $\blacksquare$
\end{example}

To justify this incursion within distributions, let us consider, as in \cite[Section~6.16]{Estradabook}, a (possibly unbounded) selfadjoint operator on $\H$. Its spectral decomposition reads: $H = \int_{-\infty}^{\infty} \lambda \,d P_{\lambda}(H)$ and determines the \emph{spectral density}\index{spectral density} $$d_H(\lambda) \vc \tfrac{d P_{\lambda}(H)}{d \lambda}$$ understood as a distribution from $\Dclp$ valued in $\B(\Dom H,\H)$. \\ Given any function $f \in \Dcl$ one gets $f(H)= \langle d_H(\lambda),\,f(\lambda)\rangle_\lambda$ and it is natural to use the following notation $$d_H(\lambda)\cv\delta(\lambda-H).$$ 
Observe now that, with $X\vc \cap_{n=1}^\infty \Dom \,H^n$, we have $\langle d_H(\lambda),\,\lambda^n\rangle = H^n$, which means that actually $\delta(\lambda-H) \in \K'(\RR,\B(X,\H))$. Hence, Formula \eqref{eq:asymptoticdistribution} yields $$\delta(\lambda \sigma-H) \,\underset{\sigma \to \infty}{\sim} \,\sum_{n=0}^\infty \tfrac{(-1)^n}{n!\,\sigma^{n+1}}\,H^n\,\delta^{(n)}(\lambda)$$ and, moreover, $\delta(\lambda-H)$ vanishes to infinite order at $\pm \infty$ in the Ces\`aro sense, namely $$\delta(\lambda-H) = \oo_{\pm \infty}(\vert \lambda \vert^{-\infty}) \quad (C).$$

We now take a detour through \cite{Distributions}: Let $T\in \Sclpp(\RR)$ and {\it assume} the following Ces\`aro asymptotic expansion:
\begin{align}
\label{hyp: asymptotic ex}
T(\lambda)  \,\underset{\la \to \infty}{\sim} \, \sum_{n=1}^\infty c_n \la^{\alpha_n}+\sum_{j=1}^\infty b_j \la^{-j} \quad (C),
\end{align}
where $\alpha_n \in \RR\backslash -\N$ constitute a decreasing sequence. Then, by \cite[Theorem 32]{Estradabook},
\begin{align*}
T(\la\Lambda)  \,\underset{\Lambda \to \infty}{\sim} \,\sum_{n=1}^\infty c_n\,(\la_+ \Lambda)^{\alpha_n } +\sum_{j=1}^\infty b_j \la^{-j} 
\text{Pf}[(\la\Lambda)^{-j}\,H(\la)]+\sum_{n=0}^\infty \tfrac{(-1)^n\,\mu_n\,\delta^{(n)}(\la)}{n!\,\Lambda^{n+1}},
\end{align*}
where $\mu_n= \langle T(x) -\sum_{n=1}^\infty c_n\,x_+^{\alpha_n } -\sum_{j=1}^\infty b_j\, \text{Pf }[x^{-j}\,H(x)],\,x^n\rangle$ are the ``generalised moments''. Here,  $\mathrm{Pf}$ is the Hadamard finite part: $$\langle \text{Pf }[h(x)],\,f\rangle \vc \text{F. p. }\int_0^\infty h(x)\,f(x)\,dx$$ and the index $+$ means that the integral defining the duality is restricted to $\RR^+$. Thus,
\begin{align*}
\langle \text{Pf }[x^{-j}\,H(x)],\, f(x)\rangle &=\text{F. p.}\int_0^\infty f(x)\, x^{-j}\,dx\\
&  \hspace{-0.7cm}=\int_1^\infty f(x)\,x^{-j} \,dx + \int_0^1[f(x)-\sum_{n=0}^{j-1} \tfrac{f^{(n)}(0)}{n!}\,x^n]\,x^{-j}\,dx-\sum_{n=0}^{j-2}\tfrac{f^{(n)}(0)}{k!\,(j-n-1)}\,
\end{align*}
and, consequently, $\text{Pf }[x^{-j}\,H(x)]$ is homogeneous of degree $-j$ up to a logarithmic term: 
$$\text{Pf }[(\la \Lambda)^{-j}\,H(\la \Lambda)]= \Lambda^{-j}\,\text{Pf }[\la^{-j}\,H(\la)]+ \tfrac{(-1)^j\delta^{(j-1)}(\la)}{(j-1)!\,\Lambda^j}\,\log \Lambda.$$
 Finally, 
\begin{align}\label{eq:development for heat}
 \langle T(\la), f(t\la)\rangle_\la \tzero & \; \sum_{k=1}^\infty \,c_k\,t^{-\alpha_k-1}\,\text{F. p. } \int_0^\infty \la^{\alpha_k}\,f(\la)\,d\la  \\
& + \sum_{j=1}^\infty b_j\,t^j\,\big[ \text{F. p. }\int_0^\infty \tfrac{f(\la)}{\la^j}\,d\la -\tfrac{f^{(j-1)}(0)}{(j-1)!}\,\log t \big] + \sum_{n=0}^\infty \tfrac{\mu_n\,f^{(n)}(0)}{n!}\,t^n. \nonumber 
\end{align}

Formula \eqref{eq:development for heat} can be utilised to obtain an asymptotic expansion of the spectral action computation, at least in the commutative case:\\
Let $P$ be a positive elliptic pdo of order $k>0$ on a $d$-dimensional compact manifold $M$. Then, the Schwartz kernel $d_P(x,y;\lambda)$ of the operator $\delta(\lambda-H)$ enjoys the following Ces\`aro expansion on the diagonal (cf. \cite[Formula (4.5)]{Distributions}):
\begin{align}
\label{eq:diagonal exp}
d_P(x,x;\lambda) \,\underset{\lambda \to \infty}{\sim} \,\sum_{n=0}^\infty a_n(x)\, \lambda^{(d-k-n)/k} \quad (C),
\end{align}
where $$\int_M a_0(x) \,d\mu(x)=\tfrac{1}{k} \,\WRes \,P^{-d/k},$$ see \eqref{WRes}. The existence of such an expansion is based on the following Ansatz: With $\sigma$ denoting the total symbol,
\begin{align*}
\sigma[\delta(\la-P)] \,\underset{\lambda \to \infty}{\sim} \,\sum_{n=0}^\infty c_n\,\delta^{(n)}(\la-\sigma(P)).
\end{align*}

It is important to stress that the asymptotics \eqref{eq:diagonal exp} cannot, in general, be integrated term by term in $\lambda$ (see \cite{Distributions}). But we are free to integrate over $x \in M$: Given an $f \in \Scl(\RR)$, we write
\begin{align*}
S(P,f,\Lambda) = \int_M \langle d_{P}(x,x;\la), f(\la \Lambda^{-1})\rangle_\la \,dx
\end{align*}
and plug in Formula \eqref{eq:development for heat} with \eqref{eq:diagonal exp} to obtain an asymptotic expansion of $S(P,f,\Lambda) $ for large $\Lambda$. 

In particular, with $f(x) = e^{-x}$ (thus $f\in \K$), one can recover as in \cite[Corollary 6.1]{Distributions} the celebrated heat kernel expansion for an elliptic pdo, which we derive from the scratch in Corollary \ref{cor:heat for elliptic pdo} (with a generalisation to pdos with non-scalar symbols).

\begin{example}
Let $(\A,\H,\Dslash)$ be as in Example \ref{ex:commutative} with $\dim M = 4$. For the generalised Laplacian $P=\Dslash^2$ we get
$$
d_{\Dslash^2}(x,x;\lambda)\, \underset{\lambda \to \infty}{\sim} \, c_0 \,\lambda + c_1(x) \quad (C),
$$
which gives 
\begin{align*}
\Tr f(\Dslash^2/\Lambda^2)\, \underset{\Lambda \to \infty}{\sim} \,c_0 \,\Lambda^4 \int_0^\infty d\lambda\, \lambda\,f(\lambda) &+c_2 \,\Lambda^2 \int_0^\infty d\lambda\, f(\lambda) \\
&+ \sum_{n=0}^\infty (-1)^n \,f^{(n)}(0)\,c_{2n+4} \,\Lambda^{-2n}
\end{align*}
(cf. \cite[p. 247]{Distributions}). Such an expansion works for $f\in \Scl$. If we would attempt to use $f=\chi_{[0,1]}$ --- the counting function --- we would discover that the expansion is not valid after the first term. Nevertheless, the full expansion, beyond the first term is valid in the Ces\`aro sense; see also \cite{EstradaFulling1999} for more details.
\hfill $\blacksquare$
\end{example}

It would be highly desirable to implement this approach beyond the commutative world --- see Problem \ref{expansion and distribution} in Chapter \ref{chap:open}.

%
%
%

\chapter{Analytic Properties of Spectral Functions}
\label{chap:asymptotic}

\abstract{In the previous chapter we have witnessed the interplay between the spectral zeta functions and the associated heat traces (cf. Proposition \ref{prop:hk_Mellin}). We have also learned, in Section \ref{subsec:Laplace}, how to exploit the Laplace transform to compute the spectral action from a given heat trace. In this chapter we further explore the connections between the spectral functions unravelling the intimate relationship between the meromorphic continuation of a zeta function and the asymptotic expansion of the corresponding heat trace. We utilise the latter to establish the sought asymptotic expansion of the spectral action at large energies. Finally, we ponder the possibility of obtaining convergent, rather than only asymptotic, formulae for this action.}
\medskip

Let us first fix some notations. Let $f$ be a meromorphic function on some domain $W \subset \CC$. For any subset $V \subset W$ we shall denote by $\PP(f,V)$\index{_zPPV@$\PP(f,V)$} the set of poles of $f$ contained in $V$. If $f$ is meromorphic on $\CC$ we abbreviate $\PP(f) \vc \PP(f,\CC)$\index{_zPPC@$\PP(f)$}.

As in the previous chapter, let $H\in \Tp$ and let $K \in \B(\H)$. For further convenience we adopt the following notation:
\begin{align*}
\UKH(s) \vc \Gamma(s)\, \zKH(s), \quad \text{for } \Re(s) > p.\index{_zzKH@$\UKH(s)$}
\end{align*}
In accordance with the conventions adopter after Formula \eqref{zetaKH}, we keep the simplified notation $\UKH$ for $H$ non-positive but enjoying $\vert H \vert \in \Tp$, and set $$\ZZ_H \vc \ZZ_{\bbbone,H}.$$

If $\zKH$ admits a meromorphic extension to a larger region of the complex plane, then so does $\UKH$ and we keep denoting the meromorphic extension of $\UKH$ by the same symbol. \\
Recall also that $\Gamma$ is holomorphic on $\CC\setminus(-\N)$ and has first order poles at $-n \in \N$, with $\Res_{s=-n} \Gamma(s) = (-1)^n(n!)^{-1}$. Hence, $\PP(\UKH) = \PP(\zKH) \cup (-\N)$.

\section{From Heat Traces to Zeta Functions}\label{sec:hk2zeta}

As illustrated by Examples \ref{ex:Gilkey_diff}, \ref{ex:Gilkey_pseudodiff} in the realm of pseudodifferential operators in classical differential geometry we are vested with an asymptotic expansion of the heat trace $\hKH$ for small $t$. It turns out that such an expansion can be utilised to establish a meromorphic extension of the corresponding zeta function $\zKH$. As an example, we quote the following result:

\begin{theorem}[{\cite[Theorem 1.12.2]{Gilkey1}}]
Let $K$ and $H$ be differential operators as in Example \ref{ex:Gilkey_diff}. Then, the function $\UKH$ admits a meromorphic extension to $\CC$ with (possibly removable) simple poles at $s = (d-k)/m$ for $k \in \N$ and
\begin{align*}
\Rez{s=(d-k)/m}\, \UKH(s) = a_{k}(K,H).
\end{align*}
\end{theorem}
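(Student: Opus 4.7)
The plan is to recover the meromorphic extension of $\UKH$ by exploiting the Mellin-transform identity of Proposition \ref{prop:hk_Mellin}, combined with the small-$t$ asymptotic expansion from Example \ref{ex:Gilkey_diff}. Concretely, for $\Re(s)>p$ we have
\begin{align*}
\UKH(s) \;=\; \int_0^\infty \hKH\,t^{s-1}\,dt \;=\; \int_0^1 \hKH\,t^{s-1}\,dt \,+\, \int_1^\infty \hKH\,t^{s-1}\,dt.
\end{align*}
The second piece will be the easy one: since $H$ is a positive elliptic differential operator on a closed manifold, its least eigenvalue satisfies $\lambda_0(H)>0$, so Proposition \ref{prop:KH} (and the argument inside its proof) yields $|\hKH|\leq \|K\|\,\Tr e^{-tH}=\Oinf(e^{-t\lambda_0(H)/2})$. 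Hence $\int_1^\infty \hKH\,t^{s-1}\,dt$ converges absolutely and uniformly on compact subsets of $\CC$, defining an entire function of $s$.

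The first piece is where the asymptotics of Example \ref{ex:Gilkey_diff} enter. Using \eqref{eq:tr e-tp, P diff}, for every $N\in\N$ we write
\begin{align*}
\hKH \;=\; \sum_{k=0}^{N} a_{k}(K,H)\,t^{(k-d)/m} \,+\, R_N(t), \qquad R_N(t)=\oz\!\left(t^{(N-d)/m}\right).
\end{align*}
Substituting and integrating each monomial explicitly gives
\begin{align*}
\int_0^1 \hKH\,t^{s-1}\,dt \;=\; \sum_{k=0}^{N}\frac{a_{k}(K,H)}{s+(k-d)/m} \,+\, \int_0^1 R_N(t)\,t^{s-1}\,dt,
\end{align*}
initially valid for $\Re(s)>p$. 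Each term in the finite sum is manifestly meromorphic on $\CC$ with a single simple pole at $s=(d-k)/m$ of residue $a_k(K,H)$. For the remainder, the $\oz$-control shows that $R_N(t)\,t^{s-1}$ is absolutely integrable at $0$ whenever $\Re(s)>(d-N)/m$, and the resulting integral is holomorphic in that half-plane (standard differentiation under the integral using the bound $|t^{s-1}|=t^{\Re(s)-1}$ on $(0,1]$).

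Putting the three pieces together, we obtain a meromorphic extension of $\UKH$ to the half-plane $\Re(s)>(d-N)/m$ whose only singularities are simple poles at $s=(d-k)/m$, $0\leq k\leq N$, with residues $a_k(K,H)$. By the uniqueness of meromorphic continuation the extensions obtained for different $N$ agree on overlapping domains, and since $N$ is arbitrary and $(d-N)/m\to-\infty$, we conclude that $\UKH$ extends meromorphically to all of $\CC$ with (at most) simple poles on $\{(d-k)/m : k\in\N\}$, each carrying the claimed residue $a_{k}(K,H)$. The word \emph{removable} in the statement covers the case when some coefficient $a_k(K,H)$ happens to vanish, in which case the corresponding candidate pole simply disappears. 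The only subtlety worth flagging is the case where $(d-k)/m$ coincides with a non-positive integer: this is harmless because the meromorphic structure of $\UKH=\Gamma\,\zKH$ accommodates poles of $\Gamma$ automatically, and the argument above shows directly that the pole of $\UKH$ at such a point is still simple with residue $a_k(K,H)$ (so that $\zKH$ itself has, at those points, either a removable singularity or a regular value proportional to $a_k(K,H)$).
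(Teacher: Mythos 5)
Your proof is correct and follows essentially the same route as the paper: the paper quotes this statement from Gilkey without proof, but its own general Theorem \ref{thm:hk2zeta} (of which this is the special case of a pure power-law scale with no logarithmic terms) is proved by exactly your argument --- split the Mellin integral of the heat trace at $t=1$, note that the $t\geq 1$ part is entire thanks to the exponential decay $\Oinf(e^{-\lambda_0(H)t})$, integrate the expansion termwise to produce the simple poles and residues, and control the remainder on expanding half-planes before invoking uniqueness of meromorphic continuation. Your closing remark about the points $(d-k)/m\in-\N$ is also consistent with the paper's conventions, since the residue statement concerns $\UKH=\Gamma\,\zKH$ rather than $\zKH$ itself.
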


This agreeable interplay, which essentially relies on the Mellin transform as captured by Proposition \ref{prop:hk_Mellin}, extends to the setting of abstract operators on a separable Hilbert space. In this context the asymptotic expansion of $\hKH$ would, in general, be more involved including terms in $\log^k t$ and/or proportional to $t^{i}$ --- exhibiting log-periodic oscillations as $t$ tends to 0.

\begin{theorem}
\label{thm:hk2zeta}
Let $H \in \Tp$ and let $K \in \B(\H)$. Assume that there exist $d \in \N$, a sequence $(r_k)_{k \in \N} \subset \RR$ strictly increasing to $+ \infty$ and a discrete set $X \subset \CC$ without accumulation points, such that
\begin{align}
\label{hk_general_exp}
\hKH \tzero \sum_{k=0}^{\infty} \rho_k(t),\quad \text{ with } \quad \rho_k(t) \vc \sum_{z \in X_k} \,\big[\sum_{n=0}^{d} a_{z,n}(K,H) \log^n t\, \big]\,t^{-z},
\end{align}
where $X_k \vc \{z \in X \, \vert \, - r_{k+1} < \Re(z) < - r_{k} \}$, $X = \cup_{k} X_k$ and the series defining $\rho_k(t)$ is absolutely convergent for any $t>0$ and any $k \in \N$. \\
Then, the function $\zKH$ admits a meromorphic extension to the whole complex plane with the poles of order at most $d+1$ and $\PP(\UKH) \subset X$.\\
Moreover, for any $z \in X$ and $n \in \{0,1,\ldots,d\}$,
\begin{align}\label{zKH_res}
\Rez{s=z}\, (s-z)^{n} \,\UKH(s) = (-1)^n n!\, \, a_{z,n}(K,H).
\end{align}
\end{theorem}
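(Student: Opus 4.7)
The plan is to apply Proposition \ref{prop:hk_Mellin}, which identifies $\UKH(s) = \Gamma(s)\zKH(s)$ with the Mellin transform of the heat trace on $\Re(s)>p$, and then to extract the meromorphic continuation of the resulting integral via the classical Mellin-analysis technique: split the Mellin integral at $t=1$ and compute each $t^{-z}\log^n t$ contribution term by term. First I would write
$$\UKH(s) = \int_0^1 \hKH\, t^{s-1}\, dt + \int_1^\infty \hKH\, t^{s-1}\, dt.$$
The tail integral is entire in $s$: since $H \in \Tp$ is strictly positive with compact resolvent, $\lambda_0(H) > 0$ and the obvious spectral estimate yields $\abs{\hKH} \leq \norm{K}\, \Tr e^{-tH} = \Oinf(e^{-t \lambda_0(H)})$, so $\int_1^\infty$ converges uniformly on every compact $s$-set.

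Next, for any $N \in \N$, I would invoke the asymptotic expansion \eqref{hk_general_exp} to write
$$\hKH = \sum_{k=0}^{N-1} \rho_k(t) + R_N(t), \qquad R_N(t) = \oz(t^{r_N - \epsilon}) \text{ for any } \epsilon > 0.$$
Hence $I_N(s) \vc \int_0^1 R_N(t)\, t^{s-1}\, dt$ is holomorphic on $\{\Re(s) > -r_N\}$. For the contribution of a single $\rho_k$, I would exchange the sum over $X_k$ with the Mellin integral, and apply the elementary identity
$$\int_0^1 t^{s-z-1}\, \log^n t\, dt = \frac{(-1)^{n+1} n!}{(z-s)^{n+1}}, \qquad \Re(s) > \Re(z),$$
to obtain
$$\int_0^1 \rho_k(t)\, t^{s-1}\, dt = \sum_{z \in X_k}\, \sum_{n=0}^{d} a_{z,n}(K,H)\, \frac{(-1)^n n!}{(s-z)^{n+1}},$$
which is meromorphic on $\CC$ with poles of order at most $d+1$ located exactly at points of $X_k$ (no cancellation between distinct $z$'s since the $X_k$ are disjoint subsets of the discrete set $X$).

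Assembling these pieces, for every $N$,
$$\UKH(s) = \sum_{k=0}^{N-1} \sum_{z \in X_k}\, \sum_{n=0}^d a_{z,n}(K,H)\, \frac{(-1)^n n!}{(s-z)^{n+1}} + h_N(s),$$
where $h_N$ is holomorphic on $\{\Re(s) > -r_N\}$. Since $r_N \nearrow \infty$, the uniqueness of meromorphic continuation glues these into a meromorphic extension of $\UKH$ to the whole of $\CC$ with $\PP(\UKH) \subset X$. Reading off the Laurent coefficients of each principal part at $s = z$ delivers \eqref{zKH_res}.

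The main obstacle is the justification of the interchange of the (possibly infinite) sum $\sum_{z \in X_k}$ with $\int_0^1 \cdot\, t^{s-1}\, dt$, since the hypothesis only guarantees pointwise absolute convergence of $\rho_k(t)$. To handle this cleanly I would, for fixed $k$, split $X_k$ into the finite set of $z$'s with $\abs{\Im z} \leq M$ (for which term-by-term integration is trivial) and the tail $\abs{\Im z} > M$. The absolute convergence of $\rho_k(t)$ on $(0,1]$, combined with the discreteness of $X$ and a uniform control of the $a_{z,n}$ in $\abs{\Im z}$ forced by this convergence, yields a dominant that is integrable against $t^{\Re(s)-1}$ on $(0,1]$ for $\Re(s)$ sufficiently large, after which Fubini applies and the term-by-term meromorphic continuation of each summand is automatic. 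A subtler but parallel point is that the resulting infinite sum of simple meromorphic functions must itself be shown to define a meromorphic function on $\CC$; this follows from the local finiteness of $X$ (each compact subset of $\CC$ meets $X$ in finitely many points), which makes the sum locally finite away from $X$.
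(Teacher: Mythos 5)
Your proposal follows essentially the same route as the paper's own proof: Mellin transform via Proposition \ref{prop:hk_Mellin}, splitting the integral at $t=1$, an entire tail coming from $\hKH=\Oinf(e^{-t\lambda_0(H)})$, term-by-term integration of each $\rho_k$ against $t^{s-1}$ on $(0,1)$, a remainder holomorphic on $\Re(s)>-r_N$, and gluing over $N$; the assembled principal-part formula and the residue identity \eqref{zKH_res} are exactly those of the paper. Two local points need repair. First, a sign slip: the elementary identity should read $\int_0^1 t^{s-z-1}\log^n t\,dt=(-1)^n n!\,(s-z)^{-n-1}$ for $\Re(s)>\Re(z)$ (your stated version is off by $(-1)^n$ for odd $n$, although the display you then write for $\int_0^1\rho_k(t)t^{s-1}dt$ is the correct one, so nothing downstream is affected). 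Second, and more substantively, your justification that the infinite sum $\sum_{z\in X_k}\sum_n a_{z,n}(K,H)(s-z)^{-n-1}$ is meromorphic "because the local finiteness of $X$ makes the sum locally finite away from $X$" does not stand as written: the series has infinitely many nonzero terms at every point $s$, so local finiteness of the pole set alone proves nothing. What is needed (and what the paper uses) is the coefficient summability $\sum_{z\in X_k}\abs{a_{z,n}(K,H)}<\infty$ for each $n$ --- which you can extract from the hypothesis of absolute convergence of the series defining $\rho_k(t)$, e.g. by evaluating at $d+1$ distinct values of $t$ and inverting the Vandermonde system in $\log t$, using that $\Re(z)$ is confined to $(-r_{k+1},-r_k)$ so $t^{-\Re(z)}$ is bounded above and below on $X_k$ --- combined with $\abs{s-z}\geq\delta>0$ for all $z\in X_k$ whenever $s\notin X_k$ (no accumulation points). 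This gives uniform convergence on compact subsets of $\CC\setminus X_k$, hence holomorphy there, and the same summability is what furnishes the integrable dominant for your Fubini step, so the ingredient is already present in your argument; it just needs to be invoked at this point instead of "local finiteness".
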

Before we prove the theorem, three comments concerning the notation are in order:

Firstly, the set $X_k = \{z_i\}_{i \in \Z}$ comes endowed with a lexicographical order: 
$$
z_i \preceq z_j\text{ iff }\Im(z_i) < \Im(z_j)\text{ or }\Im(z_i) = \Im(z_j)\text{ and }\Re(z_i) \leq \Re(z_j),
$$
what makes it isomorphic --- as a poset --- to $\Z$. Thus, the symbol $\sum_{z \in X_k} f(z)$ should be read as $\lim_{J \to \infty} \sum_{j=-J}^J f(z_j)$. Since the assumed absolute convergence of the series implies that the limit does not depend on the grouping or permutation of terms, the notation $\sum_{z \in X_k}$ is unambiguous.

Secondly, the asymptotic expansion \eqref{hk_general_exp} should be understood with the respect to the asymptotic scale $(t^{r_k})_k$, i.e. we should have $\rho_k(t) = \Oz(t^{r_k})$ (recall Definition~\ref{def:asym} and Remark \ref{rem:asymp_exp}). The latter is not an additional assumption and it follows from the assumed form of $\rho_k$, as we shall see in the course of the proof.  

Thirdly, the sequence $(r_k)_k$ is needed only to fix an asymptotic scale and can be adjusted at will, provided the absolute convergence of the resulting $\rho_k$'s remains unharmed. In particular, since $\max\{ \Re(z) \, \vert \, z \in X\} = p$ we can choose $r_0$ arbitrarily close to $-p$ (see Figure \ref{fig:ht2zeta1}). What matters in the final result \eqref{zKH_res} are the coefficients $a_{z,n}(K,H)$ and not how we distribute them into $\rho_k$'s (cf. Remark \ref{rem:asymp_exp} and Exam\-ple~\ref{ex:Gilkey_pseudodiff}) and the poles of $\UKH$ depend only on $X$ and not on its partition by $X_k$'s.

\begin{figure}[ht]
\begin{center}
\begin{tikzpicture}
%
%
\draw[-latex,thick] (-4,0) -- (4,0);
\draw[-latex,thick] (0,-4) -- (0,4);
\node[] at (4,-0.3) {$\Re(z)$};
\node[] at (0,4.2) {$\Im(z)$};
%
%
%
\filldraw (1.25945, -3.13148) circle (1pt); 
\filldraw (2.0582, 2.14066) circle (1pt);
\filldraw (1.67479, 2.80959) circle (1pt);  
\filldraw (2.60021, -2.17227) circle (1pt); 
\filldraw (2.54572, 1.46555) circle (1pt);
\filldraw (2.55604, 2.70598) circle (1pt); 
\filldraw (1.31823, 1.97422) circle (1pt);
\filldraw (2.62783, -2.37889) circle (1pt); 
\filldraw (1.84381, -0.924021) circle (1pt); 
\filldraw (1.57486, 1.59567) circle (1pt);
\filldraw (1.76111, -3.57055) circle (1pt);
\filldraw (2.69955, 1.33859) circle (1pt);
\filldraw (1.50166, 3.56093) circle (1pt);
%
%
\filldraw (0.734873, -2.222) circle (1pt); 
\filldraw (0.725679, -1.88359) circle (1pt);
\filldraw (0.37699, -1.94107) circle (1pt);
\filldraw (0.186981, 0.312932) circle (1pt);
\filldraw (-0.306402, -2.65595) circle (1pt);
\filldraw (-1.93375, -0.878248) circle (1pt); 
\filldraw (-0.158759, 3.49133) circle (1pt); 
\filldraw (0.892321, 1.08486) circle (1pt); 
\filldraw (0.211636, -1.10176) circle (1pt); 
\filldraw (-0.42055, 0.0254455) circle (1pt);
\filldraw (-1.87595, -1.35347) circle (1pt);
\filldraw (-0.687024, -3.74293) circle (1pt);
\filldraw (-1.3, 0) circle (1pt);
\filldraw (0, 2) circle (1pt);
%
%
\filldraw (-2.37965, 0.977474) circle (1pt); 
\filldraw (-2.81933, 0.34137) circle (1pt);
\filldraw (-2.6512, 1.29105) circle (1pt);  
\filldraw (-2.21576, 2.47371) circle (1pt); 
\filldraw (-2.81974, -2.34804) circle (1pt);
\filldraw (-2.88071, 3.71907) circle (1pt); 
\filldraw (-2.82043, 1.80348) circle (1pt);
\filldraw (-2.7722, -3.28512) circle (1pt); 
\filldraw (-2.29309, -2.80882) circle (1pt); 
\filldraw (-2.04921, -2.14728) circle (1pt);
\filldraw (-2.94202, 2.03689) circle (1pt);
\filldraw (-2.12347, 0.593084) circle (1pt);
\filldraw (-2.56238, -2.94058) circle (1pt);
\filldraw (-2.30414, -3.74829) circle (1pt);
\filldraw (-2.91118, -0.231629) circle (1pt);
%
%
\filldraw (-3.97142,-2.17682) circle (1pt); 
\filldraw (-3.10642, -3.08173) circle (1pt);
\filldraw (-3.72845, 3.68643) circle (1pt);  
\filldraw (-3.60357, 1.86766) circle (1pt); 
\filldraw (-3.83147, -0.360911) circle (1pt);
\filldraw (-3.33506, 2.75464) circle (1pt); 
\filldraw (-3.9993, -0.151525) circle (1pt);
\filldraw (-3.1085, 3.00419) circle (1pt); 
\filldraw (-3.52734, 0.612302) circle (1pt); 
\filldraw (-3.72299, -2.86216) circle (1pt);
\filldraw (-3.7,3) circle (1pt);
%
%
\draw[dashed,thick] (3.5,-4) -- (3.5,4);
\node[] at (3.5,-4.2) {$-r_0$};
\draw[dashed,thick] (1,-4) -- (1,4);
\node[] at (1,-4.2) {$-r_1$};
\draw[dashed,thick] (-2,-4) -- (-2,4);
\node[] at (-2,-4.2) {$-r_2$};
\draw[dashed,thick] (-3,-4) -- (-3,4);
\node[] at (-3,-4.2) {$-r_3$};

\end{tikzpicture}
\end{center}
\caption{\label{fig:ht2zeta1}Illustration explaining the notations in Theorem \ref{thm:hk2zeta}: The discrete set $X \subset \CC$ is partitioned into a disjoint sum, $X = \sqcup \, X_k$, with the help of the vertical lines $\Re(z) = -r_k$.}
\end{figure}
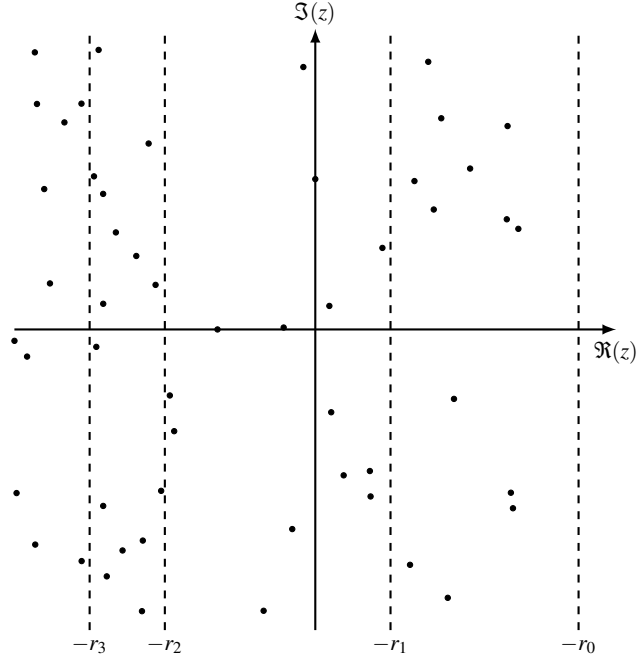

\begin{proof}
For $\Re(s) > p$ we have 
$\zKH(s) = \tfrac{1}{\Gamma(s)} \int_{0}^{\infty} t^{s-1} \, \hKH \, dt$ on the strength of Proposition \ref{prop:hk_Mellin}. For such $s \in \CC$ we can split $\zKH(s) = F_0(s) + F_1(s)$, with
\begin{align*}
F_0(s) \vc \tfrac{1}{\Gamma(s)} \int_{0}^{1} t^{s-1} \, \hKH \, dt, \qquad F_1(s) \vc \tfrac{1}{\Gamma(s)} \int_{1}^{\infty} t^{s-1} \, \hKH \, dt.
\end{align*}
Since $\hKH$ is smooth on $(0,\infty)$ and $\hKH = \Oinf( e^{-\lambda_0(H)\, t})$, the function $F_1$ is actually holomorphic on $\CC$ since $\int_0^\infty t^{s-1}\,e^{-\lambda_0\,t}\,dt=\Gamma(s)\, \lambda_0^{-s}$.

Note first that Proposition \ref{prop:KH} implies $\rho_0(t) = \Oz(t^{-p-\delta})$ for every $\delta >0$, hence $\rho_0(t) = \oz(t^{r_0})$, and we now show that actually $\rho_k(t) = \oz(t^{r_k})$ for every $k \in \N$.\\
To see this let us fix the index $k \in \N$ and let us pick a $t_0 > 0$. As the series over $X_k$ is absolutely convergent for $t>0$, we have, when $t \leq t_0$ and  $n \in \{0,\ldots,d\}$,
\begin{align*}
t^{-r_{k}} \big{\vert} \sum_{z\in X_k} a_{z,n}(K,H) \,t^{-z} \big{\vert} \leq \sum_{z\in X_k} \abs{a_{z,n}(K,H)}\, t^{c_k} \leq \sum_{z\in X_k} \abs{a_{z,n}(K,H)} t_0^{c_k},
\end{align*}
where $c_k \vc - \sup_{z \in X_k} ( r_{k} + \Re (z) ) > 0$.\\Hence, for any $\varepsilon > 0$ there exists $t_0 = [ \varepsilon/\sum \vert a_{z,n} \vert ]^{1/c_k} > 0$ such that for every $t \leq t_0$, $t^{-r_{k}} \big{\vert} \sum_{z \in X_k} a_{z,n}(K,H)\, t^{-z} \big{\vert}  \leq \varepsilon$, 
so that $$\lim_{t \to 0} \; t^{-r_{k}} \vert \sum_{z \in X_k} a_{z,n}(K,H)\, t^{-z} \vert = 0.$$
Since $r_{k} < - \Re(z)$ for $z \in X_k$, we can find $\epsilon' > 0$ such that $r_{k} + \epsilon' < - \Re(z)$ and $\vert \sum_{z \in X_k} a_{z,n}(K,H)\, t^{-z} \vert = \oz(t^{r_{k} + \epsilon'})$. \\
Hence, as $\log^n t = \Oz(t^{-\epsilon'})$ for any $\epsilon'> 0$, $\abs{\log t}^n \, \vert \sum_{z \in X_k} a_{z,n}(K,H)\, t^{-z} \vert = \oz(t^{r_{k}})$ for $n \in \N$. Thus indeed, $\rho_k(t) = \oz(t^{r_{k}})$.

Let us now fix $N \in \N$ and invoke the asymptotic expansion \eqref{hk_general_exp} to provide the meromorphic continuation of $F_0$ to the half-plane $\Re(s) > - r_{N}$. For $\Re(s) > p$,
\begin{align*}
F_0(s) = \sum_{k=0}^{N} \, \int_0^{1} t^{s-1}\, \rho_k(t) \, dt + E_N(s),
\end{align*}
where 
\begin{align*}
E_N(s) \vc \int_{0}^{1} t^{s-1} \,R_N(t) \,dt \,\,\text{ and }\,\, R_N(t) \vc \hKH - \sum_{k=0}^{N} \rho_k(t).
\end{align*}
But since $R_N(t) = \oz(t^{r_{N}})$,  $E_N(s)$ is actually holomorphic for $\Re(s) > - r_{N}$. On the other hand, for $\Re(s) > -r_0 > p$, $\int_0^{1} t^{s-1} \, \rho_k(t) \,dt$ is absolutely convergent for any $k \in \N$ (and so is the series defining $\rho_k$), so that we can swap the integral with the series
\begin{align}
\int_0^{1}t^{s-1}\, \rho_k(t) \, dt &= \sum_{z \in X_k}\,\, \sum_{n=0}^{d} a_{z,n}(K,H) \int_0^{1} t^{s-z-1}\,\log^n t\,  \,\,dt \notag\\
& = \sum_{z \in X_k}\, \,\sum_{n=0}^{d} a_{z,n}(K,H)\, \tfrac{(-1)^n n!}{(s-z)^{n+1}}. \label{zk_mero}
\end{align}

Now, we claim that, the function $s \mapsto \sum_{z \in X_k} \, a_{z,n}(K,H)\, (s-z)^{-n-1}$ 
is holomorphic on $\CC \setminus X_k$ for any $k \in \N, n \in \{0,1,\ldots,d\}$, and hence Formula \eqref{zk_mero} provides a meromorphic extension of the complex function $s \mapsto \int_0^{1}t^{s-1}\, \rho_k(t) \, dt$ to the whole complex plane. Indeed, for any $s\notin X_k$ there exists $\delta>0$ such that $\abs{s-z} \geq \delta$ for all $z\in X_k$, since $X_k$ does not have accumulation points. For any such $s$ we have $$\big{\vert} \sum_{z \in X_k} a_{z,n}(K,H)\, (s-z)^{-n-1} \big{\vert} \leq \delta^{-n-1}\sum_{z \in X_k} \abs{a_{z,n}(K,H)} <\infty.$$

Summarising, for any fixed $N \in \N$ we have
\begin{align*}
\zKH(s) = \tfrac{1}{\Gamma(s)} \sum_{k=0}^{N} \,\sum_{z \in X_k} \,\sum_{n=0}^{d} a_{z,n}(K,H)\, \tfrac{(-1)^n n!}{(s-z)^{n+1}} + E_N(s) + F_1(s),
\end{align*}
which is a meromorphic function for $\Re(s) > -r_{N}$. As $N$ can be taken arbitrarily large and $1/\Gamma$ is meromorphic on $\CC$, we conclude that $\zKH$ is also meromorphic on $\CC$. Moreover, the above reasoning shows that the function $s \mapsto \Gamma(s)\zKH(s)$ can only have poles in $X$ of order at most $d+1$ and \eqref{zKH_res} follows immediately.
\hfill$\Box$
\end{proof}

As illustrated in Figure \ref{fig:ht2zeta1} (see also, for instance, \cite[Figure 2.10]{Lapidus}) the set of poles of a spectral zeta function $\zKH$ does not need, in general, to exhibit any symmetries. However, the geometric origin of the operators $K$ and $H$ can force a highly regular shape of the set $X$.

\begin{example}
\label{ex:PDO_ht2zeta}
In particular, as follows from Example \ref{ex:Gilkey_pseudodiff}, if $K$ and $H$ are classical pseudodifferential operators on a Riemannian manifold $M$ then $\zKH$ admits a meromorphic extension to the whole complex plane with isolated simple poles located in $(\dim M + \ord K - \N)/\ord H \subset \RR$ (cf. \cite{GrubbSeeley}). 

If $K$ is allowed to be a polyhomogeneous pdo, then the poles of $\zKH$ need no longer be simple, but are still contained in the highly regular discrete subset of the real line $(\dim M + \ord K - \N)/\ord H$ --- see \cite{Lesch}.
\hfill$\blacksquare$
\end{example}

\begin{example}
\label{ex:NCtorus_zeta}
Let us illustrate an application of Theorem \ref{thm:hk2zeta} in the context of the noncommutative torus (see Appendix \ref{sec:torus}):

\begin{proposition}
\label{prop:NC_torus}
Let $(\A_{\Theta},\H,\DD)$ be the spectral triple of the noncommutative $d$-torus. \\
Then, for any $a \in \A_{\Theta}$, the spectral zeta function $\zeta_{a,D}$ admits a meromorphic extension to the whole complex plane with a single simple pole located at $s=d/2$. Moreover, $\Res_{s=d/2} \, \zeta_{a,D} = \tau(a) 2^{\lfloor d/2 \rfloor} \pi^{d/2} \Gamma(d/2)^{-1}$ and $\zeta_{a,D}(0) = 0$.
\end{proposition}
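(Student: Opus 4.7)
The plan is to reduce the problem to a classical lattice-sum computation via the explicit spectrum of $D$ on the noncommutative $d$-torus, and then to invoke Theorem \ref{thm:hk2zeta}. The first key step is that, in the natural Fourier decomposition of $\H$ (indexed by $\Z^d$ tensored with a $2^{\lfloor d/2 \rfloor}$-dimensional spinor space, cf. Appendix \ref{sec:torus}), the operator $|D|$ acts diagonally while the Weyl generators $U^n$ of $\A_{\Theta}$ merely shift the lattice label $k \mapsto k+n$. Therefore only the ``constant Fourier mode'' of $a = \sum_n a_n U^n$ -- which equals $\tau(a)\,\bbbone$ -- contributes to any trace of the form $\Tr a\,F(|D|)$, and one obtains the factorisation
\begin{align*}
\zeta_{a,D}(s) = \tau(a)\,\zeta_{D}(s), \qquad \Tr a\,e^{-t|D|} = \tau(a)\,\Tr e^{-t|D|},
\end{align*}
valid for any $t>0$ and any $\Re(s)$ sufficiently large.

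With this reduction, the problem becomes purely numerical. Using the prescription \eqref{absD} together with the explicit spectrum of $\DD$ recalled in Appendix \ref{sec:torus}, the heat trace $\Tr e^{-t|D|}$ becomes a theta-type series over $\Z^d\setminus\{0\}$, to which is added the contribution of the finite-dimensional kernel of $\DD$ (entering through a term of the form $e^{-t}$). The Poisson summation formula of Proposition \ref{prop:seriesintegral} then produces a very clean asymptotic expansion of the shape
\begin{align*}
\Tr e^{-t|D|} \;\underset{t \to 0}{\sim}\; c_{-d/2}\, t^{-d/2} + \sum_{n=1}^{\infty} c_{n}\, t^{n} + \OO_0(t^\infty),
\end{align*}
in which the $\OO_0(t^\infty)$ term absorbs the Schwartz-class remainder from Poisson and the polynomial corrections $\sum c_n t^n$ come from the expansion of $e^{-t}-1$ for the kernel. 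Crucially, an inspection of the two sources of a $t^0$ term -- the ``$-1$'' stemming from dropping the $k=0$ mode in Poisson summation and the constant ``$+1$'' coming from the kernel contribution -- shows that they cancel exactly, so that the coefficient of $t^{0}$ vanishes; this is a direct consequence of the equality $\dim \ker \DD = 2^{\lfloor d/2\rfloor}$. An explicit computation of the leading term then identifies $c_{-d/2} = 2^{\lfloor d/2\rfloor}\pi^{d/2}$.

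Finally, I would feed this asymptotic expansion into Theorem \ref{thm:hk2zeta} applied with $H = |D|$. The $t^{-d/2}$ term singles out a unique simple pole of $\zeta_{a,D}$ at $s=d/2$ with residue $\tau(a)\,c_{-d/2}/\Gamma(d/2) = \tau(a)\,2^{\lfloor d/2\rfloor}\pi^{d/2}/\Gamma(d/2)$, exactly as announced. The contributions $c_n t^n$ for $n \geq 1$ produce simple poles of $\Gamma(s)\zeta_{a,D}(s)$ at the non-positive integers $s = -n$, but these are cancelled by the simple poles of the gamma factor, leaving $\zeta_{a,D}$ holomorphic at every non-positive integer. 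The identity $\zeta_{a,D}(0) = 0$ is then the precise reflection of the vanishing of the $t^{0}$ coefficient in the heat expansion, via the pole cancellation with $\Gamma(s)$ at $s=0$.

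The main obstacle, as I see it, is not analytical but rather combinatorial: one must verify carefully that the kernel of $\DD$ on the noncommutative torus has dimension exactly $2^{\lfloor d/2\rfloor}$ and that its contribution through the prescription \eqref{absD} conspires with the constant term in the Jacobi-type identity of Poisson summation so as to kill the $t^0$ coefficient. Once this is checked, the rest of the argument is a fairly mechanical application of Proposition \ref{prop:seriesintegral} and Theorem \ref{thm:hk2zeta}, and the striking fact that the dimension spectrum of $(\A_\Theta, \H, \DD)$ reduces to the singleton $\{d/2\}$ --- with $\zeta_{a,D}$ vanishing at every non-positive integer --- emerges as a direct consequence of the very rich symmetry of the flat lattice $\Z^d$.
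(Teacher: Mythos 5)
Your overall strategy is the same as the paper's: reduce $\Tr\, a\, e^{-tH}$ to $\tau(a)\,\Tr e^{-tH}$ using the fact that the Weyl generators only shift the lattice label (the paper phrases this as $\tau(U_k^*aU_k)=\tau(a)$), apply the Poisson summation of Proposition \ref{prop:seriesintegral}, account for $\ker\DD$ via Formula \eqref{heat_ker}, and then read the pole and residue off Theorem \ref{thm:hk2zeta}; your ``$-1$ versus $+1$'' cancellation of the $t^0$ term is exactly the paper's observation that the kernel contribution $e^{-t}-1$ has no constant term, whence $\zeta_{a,D}(0)=0$.

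There is, however, a genuine error in the analytic core: you run the argument for $H=\vert D\vert$, i.e.\ for $\Tr e^{-t\vert D\vert}$, which (up to the kernel shift) is $2^{\lfloor d/2\rfloor}\sum_{k}e^{-t\norm{k}}$ --- and this is \emph{not} a theta-type series. The function $x\mapsto e^{-t\norm{x}}$ is not smooth at the origin, so Proposition \ref{prop:seriesintegral} does not apply; the paper gives precisely this as a counterexample immediately after that proposition (for $d=1$, $\sum_k e^{-t\vert k\vert}=\coth(t/2)$ has infinitely many nonvanishing power corrections, not an $\OO_0(t^\infty)$ remainder beyond the leading term). Moreover, the leading small-$t$ behaviour of $\Tr e^{-t\vert D\vert}$ is of order $t^{-d}$ (since $\int_{\RR^d}e^{-t\norm{x}}dx=\Vol(S^{d-1})\,\Gamma(d)\,t^{-d}$), not $t^{-d/2}$, and with $H=\vert D\vert$ Theorem \ref{thm:hk2zeta} would locate the pole of $\Tr a\vert D\vert^{-s}$ at $s=d$, not $s=d/2$. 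The constants you produce ($\pi^{d/2}$, the power $t^{-d/2}$, the division by $\Gamma(d/2)$) are those of the Gaussian heat trace $\Tr\, a\, e^{-tD^2}$ --- which is the operator the paper actually works with, and which is also what the statement refers to, as the location $s=d/2$ and the factor $\Gamma(d/2)^{-1}$ indicate; the Gaussian is Schwartz, so the Poisson step is then legitimate. Replacing $e^{-t\vert D\vert}$ by $e^{-tD^2}$ throughout repairs the argument and makes it coincide with the paper's proof. Two smaller points: $\dim\ker\DD=2^{\lfloor d/2\rfloor}$ is not a delicate issue (it is computed explicitly in Appendix \ref{sec:torus}); and your closing claim that the dimension spectrum of $(\A_{\Theta},\H,\DD)$ reduces to the singleton $\{d/2\}$ does not follow --- the proposition only treats $T=a\in\A_\Theta$, not all $T\in\PDOz$, and the actual dimension spectrum is $d-\N$ (Theorem \ref{zeta(0)}).
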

\begin{proof}
In order to apply Theorem \ref{thm:hk2zeta} we will show that the corresponding heat trace $\Tr \, a \, e^{-t \,\DD^2}$ admits an asymptotic expansion as $t \downarrow 0$. We have
\begin{align*}
\Tr \, a \, e^{-t \,\DD^2} &=\sum_{k\in \Z^d}\sum_{j=0}^{2^{\lfloor d/2 \rfloor}} \langle U_k\otimes e_j,\,a e^{-t\,\DD^2} \,U_k\otimes e_j \rangle \\&= \sum_{k\in \Z^d}\sum_{j=0}^{2^{\lfloor d/2 \rfloor}}\, e^{-t\norm{k}^2} \langle U_k, a\,U_k\rangle\, \langle e_j,\, e_j \rangle =2^{\lfloor d/2 \rfloor}  \sum_{k\in \Z^d} \tau(U_k^*aU_k)\,e^{-t\norm{k}^2}  \\
&=2^{\lfloor d/2 \rfloor} \,\tau(a) \sum_{k\in \Z^d} e^{-t\norm{k}^2} =\tau(a) \, \Tr \, e^{-t \,\DD^2}.
\end{align*}
Invoking Proposition \ref{prop:seriesintegral} with $g(x)=e^{-\norm{x}^2}$ we obtain
\begin{align*}
2^{-\lfloor d/2 \rfloor} \Tr \,a \,  e^{-t \,\DD^2} & =\tau(a) \, \sum_{k\in \Z^d} \,e^{-\Vert t^{1/2}k \Vert^2}
=\tau(a) \, \sum_k g(t^{1/2} k)\\
& = \tau(a) \, \int_{x\in \RR^d} e^{-\Vert t^{1/2} x\Vert^2}\,dx + \Oz(t^{\infty}) =  \tau(a) \,t^{-d/2} + \Oz(t^{\infty}).
\end{align*}
Since $\dim \Ker \DD = 2^{\lfloor d/2 \rfloor}$ and $\Tr \, P_0 \, a = \tau(a) 2^{\lfloor d/2 \rfloor}$ thanks to \eqref{eq:Pzero}, Formula \eqref{heat_ker} gives:
 $\Tr \,a \,  e^{-t \,D^2} = \tau(a) \, [ \Tr \, e^{-t \,\DD^2} + (e^{-t} - 1) 2^{\lfloor d/2 \rfloor}]$ 
and hence
\begin{align*}
\Tr \,a \,  e^{-t \,D^2} & =\, \tau(a) 2^{\lfloor d/2 \rfloor} \big( \pi^{d/2} \,t^{-d/2} + (e^{-t} - 1) +\OO_0(t^\infty)\big) \\
& \!\!\!\tzero\!\! \tau(a) 2^{\lfloor d/2 \rfloor} \pi^{d/2} \,t^{-d/2}+\tau(a)(e^{-t}-1)\dim\Ker \DD.
\end{align*}
Since the above expansion does not contain any $\log t$ terms, Theorem \ref{thm:hk2zeta} yields a meromorphic extension of $\zeta_{a,D}$ to the whole complex plane with simple poles only. To locate the poles we read out the non-zero coefficients from Formula \eqref{hk_general_exp}
\begin{align*}
a_{d/2,0}(a,D^2) =  2^{\lfloor d/2 \rfloor} \pi^{d/2}\,\tau(a), && a_{-k,0}(a,D^2) = 2^{\lfloor d/2 \rfloor}\, \tfrac{(-1)^k}{k!}\, \tau(a), \, \text{ for } k \in \N^*.
\end{align*}
Thanks to Formula \eqref{zKH_res} we have $$\Res_{s=d/2} \, \zeta_{a,D}(s) =  2^{\lfloor d/2 \rfloor} \pi^{d/2} \Gamma(d/2)^{-1}\,\tau(a).$$
Finally, since $a_{0,0}(a,D^2) = 0$ and $\Res_{s=0} \Gamma(s) = 1$, we conclude from \eqref{zKH_res} that $\zeta_{a,D}(0) = 0$.
\hfill$\Box$
\end{proof}
Remark that no Diophantine restriction on the matrix $\Theta$ (see Definition \ref{ba}) was needed in the previous result --- in contrast with the more general Theorem \ref{zeta(0)}.
\hfill$\blacksquare$
\end{example}

As mentioned on p. \pageref{p:fractals_dimsp}, the complex poles of the function $\zKH$ can appear, for instance, when fractal geometry is involved. We shall see an explicit example in Section \ref{sec:conv}.

Let us now turn to the converse of Theorem \ref{thm:hk2zeta}.

\section{From Zeta Functions to Heat Traces}\label{sec:zeta2hk}

In the context of pseudodifferential operators in classical Riemannian geometry one has at one's disposal the powerful existence theorems about the asymptotic expansion of heat traces. From these one can construct the meromorphic continuations of the corresponding spectral zeta functions via Theorem \ref{thm:hk2zeta}. The original proof for classical pdos (see Appendix \ref{classical tools}) heavily relies on the symbolic calculus and integral kernels of pdos  \cite{Gilkey1}. Unfortunately, these tools are not available in the noncommutative realm. On the other hand, one can hope that given a meromorphic extension of a spectral zeta function $\zKH$, guaranteed, for instance, by the dimension spectrum property, one could deduce the existence (and the form) of an asymptotic expansion of $\hKH$. This is indeed possible, however, one needs to control not only the local structure of $\zKH(s)$ around the poles, but also its asymptotic behaviour on the verticals -- as $\abs{\Im(s)} \to \infty$.

\subsection{Finite Number of Poles in Vertical Strips}

Let us first handle the case when $\zKH$ has a finite number of poles in vertical strips, i.e. for any $U_{a,b} = \{ z \in \CC \, \vert \, a < \Re(z) < b \}$ the set $\PP(\zKH,U_{a,b})$ is finite. This is always the case when $K$ and $H$ are classical pdos, but it often occurs also beyond the classical geometry --- for instance in the case of noncommutative torus (cf. Appendix~\ref{sec:torus}) or the $\suq$ quantum group \cite{ConnesSU2,ILS}.

\begin{theorem}
\label{thm:zeta2hk_finite}
Let $H\in \Tp$ and let $K \in \B(\H)$. Assume that:

\begin{enumerate}[label=\textit{\roman*)},leftmargin=1cm]
\item \label{zeta2hk_finite:assum1} The function $\zKH$ admits a meromorphic extension to $\CC$ with a finite number of poles in vertical strips and of order at most $d$.

\item \label{zeta2hk_finite:assum3} For any fixed $x \leq r_0$ there exists an $\epsilon(x) > 0$, such that 
\begin{align}
\label{ukh_vert}
\UKH(x+iy) = \Oinf(\abs{y}^{-1-\epsilon(x)}).
\end{align}
\end{enumerate}
Fix a sequence $(r_k)_{k \in \N} \subset  \RR$, increasing to $+ \infty$ with $p < -r_0$, giving a partition of $\PP(\UKH)$ as $\PP(\UKH) = \sqcup_k X_k$ with  $$X_k \vc \{ z \in \PP(\UKH) \, \vert \, - r_{k+1} < \Re(z) < - r_{k} \}.$$

Then, there exists an asymptotic expansion with respect to the scale $(t^{r_k})_k$,
\begin{align}
\label{heat_exp_finite}
&\hspace{-0.3cm}\hKH \tzero \sum_{k=0}^{\infty} \, \rho_k(t) , \\
&  \text{ with } \hspace{0.6cm} \rho_k(t) \vc  \sum_{z \in X_k} \,\big[\sum_{n=0}^{d} a_{z,n}(K,H) \log^n t\, \big]\,t^{-z} \label{rho_k2}\\
& \hspace{1.4cm} a_{z,n}(K,H) \vc \tfrac{(-1)^n}{n!} \, \,\Rez{s=z}\, (s-z)^{n} \,\UKH(s). \label{azn}
\end{align}
\end{theorem}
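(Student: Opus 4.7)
The plan is to invert the Mellin relation of Proposition \ref{prop:hk_Mellin} (already made precise in Corollary \ref{cor:hk_Mellin_inv}) and then slide the contour of integration leftward across the poles of $\UKH$, collecting residues. Concretely, I would start from
\begin{align*}
\hKH = \tfrac{1}{2\pi i} \int_{c-i\infty}^{c+i\infty} \UKH(s)\, t^{-s}\, ds, \qquad c > \max(p, -r_0),
\end{align*}
and, for a fixed $N \in \N$, deform the contour to the vertical line $\Re(s) = -r_{N+1}$ (which by construction carries no pole of $\UKH$). Assumption \ref{zeta2hk_finite:assum1} guarantees that only finitely many poles, namely those in $X_0 \cup \dotsb \cup X_N$, are crossed, so the residue theorem yields
\begin{align*}
\hKH = \sum_{k=0}^{N} \sum_{z \in X_k} \Rez{s=z} \big[ \UKH(s)\, t^{-s} \big] + R_N(t), \quad R_N(t) \vc \tfrac{1}{2\pi i} \int_{-r_{N+1}-i\infty}^{-r_{N+1}+i\infty} \!\UKH(s)\, t^{-s}\, ds.
\end{align*}

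The residue at $z \in X_k$ is extracted by pairing the Laurent expansion $\UKH(s) = \sum_{n \geq -m} c_n (s-z)^n$ at a pole of order $m \leq d+1$ (recall that $\UKH$ may carry one extra order from $\Gamma$) with the Taylor expansion $t^{-s} = t^{-z} \sum_{\ell \geq 0} \tfrac{(-\log t)^\ell}{\ell!}(s-z)^\ell$ and selecting the $(s-z)^{-1}$ coefficient. Since $c_{-n-1} = \Rez{s=z}(s-z)^n \UKH(s)$ vanishes for $n \geq m$, this gives
\begin{align*}
\Rez{s=z} \big[ \UKH(s)\, t^{-s} \big] = t^{-z} \sum_{n=0}^{d} \tfrac{(-1)^n \log^n t}{n!}\, \Rez{s=z} (s-z)^n \UKH(s) = t^{-z} \sum_{n=0}^{d} a_{z,n}(K,H)\, \log^n t,
\end{align*}
so the sum over $z \in X_k$ reproduces exactly the coefficient $\rho_k(t)$ of \eqref{rho_k2}. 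The remainder is then controlled by \ref{zeta2hk_finite:assum3}: on the vertical $\Re(s) = -r_{N+1}$ we have $|t^{-s}| = t^{r_{N+1}}$, and the decay $\UKH(-r_{N+1}+iy) = \Oinf(|y|^{-1-\epsilon})$ makes the vertical integral absolutely convergent, yielding $R_N(t) = \Oz(t^{r_{N+1}}) = \oz(t^{r_N})$. The membership $\rho_k(t) = \Oz(t^{r_k})$ is immediate since $\rho_k$ is a finite sum of terms $\log^n t \cdot t^{-z}$ with $\Re(z) < -r_k$, so $|t^{-z}||\log t|^n = t^{-\Re(z)} |\log t|^n$ beats $t^{r_k}$ by a definite margin as $t \downarrow 0$.

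The main obstacle I anticipate lies not in the algebra of residues, but in justifying the contour deformation rigorously. Assumption \ref{zeta2hk_finite:assum3} furnishes pointwise decay on each vertical line, yet the residue theorem requires showing that the horizontal segments $[-r_{N+1}, c] \times \{\pm T\}$ contribute nothing as $T \to \infty$. I would address this by the standard device of choosing a sequence $T_n \to \infty$ along which $\UKH$ enjoys good uniform bounds on the horizontal segments --- combining \ref{zeta2hk_finite:assum3} at the two endpoints with a Phragmén--Lindelöf convexity argument in the strip $-r_{N+1} \leq \Re(s) \leq c$ (which is pole-free away from a compact set by \ref{zeta2hk_finite:assum1}) to transfer the vertical decay into uniform decay on horizontal cross-sections. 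Once this technical point is settled, all the pieces above fit together to yield the desired expansion with respect to the asymptotic scale $(t^{r_k})_k$.
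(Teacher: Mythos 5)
Your overall strategy is the paper's own: start from the Mellin inversion formula of Corollary \ref{cor:hk_Mellin_inv}, push the contour of $\tfrac{1}{2\pi i}\int \UKH(s)\,t^{-s}\,ds$ to the left, collect the residues of $\UKH(s)\,t^{-s}$ (pairing the Laurent series of $\UKH$ --- whose poles have order at most $d+1$ after multiplication by $\Gamma$ --- with the Taylor series of $t^{-s}$, which reproduces exactly \eqref{rho_k2}--\eqref{azn}), and bound the remainder by the absolute convergence of the vertical integral at $\Re(s)=-r_{N+1}$, giving $R_N(t)=\Oz(t^{r_{N+1}})=\oz(t^{r_N})$. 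The only cosmetic difference is that the paper shifts strip by strip over rectangles $D^k$ and records the recursion $R_{k-1}(t)-R_k(t)=\rho_k(t)$, whereas you shift across all of $X_0\cup\dotsb\cup X_N$ in one sweep; these are equivalent.

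The one point where you depart is the treatment of the horizontal segments, and the tool you propose there does not work. A Phragm\'en--Lindel\"of interpolation in the strip $-r_{N+1}\leq\Re(s)\leq c$ requires an a priori control on the growth of $\UKH$ inside the strip (finite order, or a bound on some family of verticals); no such control is among the hypotheses, and pole-freeness of the strip away from a compact set gives none. The paper makes exactly this point in Remark \ref{rem:Gk}, where it explains that Phragm\'en--Lindel\"of cannot be invoked to kill the extra contour terms $g_k$ appearing in Theorem \ref{thm:zeta2hk}. Fortunately, in the present theorem no interpolation is needed: hypothesis \ref{zeta2hk_finite:assum3} is assumed for \emph{every} fixed $x\leq r_0$, hence on every vertical of the strip, and hypothesis \ref{zeta2hk_finite:assum1} lets you choose heights $Y$ lying above all the finitely many poles of the strip. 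Then the horizontal contribution is bounded by $\sup_{x\in[-r_{N+1},\,c]}\vert\UKH(x\pm iY)\vert\,\int_{-r_{N+1}}^{c}t^{-x}\,dx$, which tends to $0$ as $Y\to\infty$ by \eqref{ukh_vert} (read locally uniformly in $x$), which is precisely how the paper's proof disposes of $I_H^{\pm}$. With this replacement of the Phragm\'en--Lindel\"of step, your argument coincides with the paper's.
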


\begin{proof}
We slice the complex plane with the help of the chosen sequence $(r_k)_k$ as shown on Figure \ref{fig:ht2zeta1}. Since $\zKH$ (and hence $\UKH$) has a finite number of poles in vertical strips, for any $k \in \N$ there exists $Y_k > 0$ such that $\UKH$ is holomorphic for $\abs{\Im(s)} \geq Y_k$. For each $k$ let us denote
\begin{align*}
D^k\vc \text{ rectangle }\{s \in \CC \,\vert \,-r_{k+1}\leq \Re(s)\leq -r_{k}, \; -Y_k \leq \Im(s) \leq Y_k \},
\end{align*}
so that $X_k = \PP(\UKH,\,D^k)$.

Let us fix the index $k$. By construction, the function $\UKH$ is regular at the boundary of any $D^k$. Hence, the residue theorem yields
\begin{equation}
\label{finite_contour_k}
\tfrac1{i2\pi}\int_{\partial D^k}\UKH(s)\,t^{-s}\,ds
=\sum_{z\in X_k} \Rez{s=z} \, \UKH(s) \,t^{-s},
\end{equation}

\vspace{-0.2cm}\noindent where the contour $\partial D^k$ is oriented counter-clockwise. Let us decompose the boundary of the rectangle using
\begin{align*}
I_{V}^+(k) & \vc \int_{-r_{k}- i \,Y_k}^{-r_{k}+ i \,Y_k}\UKH(s)\,t^{-s}\,ds, \\
I_{V}^-(k)  &\vc \int_{-r_{k+1}- i \,Y_k}^{-r_{k+1}+ i \,Y_k}\UKH(s)\,t^{-s}\,ds, \\
I_H^\pm(k) & \vc \int_{-r_{k+1} \pm i \,Y_k}^{-r_{k} \pm i \,Y_k}\UKH(s)\,t^{-s}\,ds,
\end{align*}
so that $$\int_{\partial D^k}\UKH(s)\,t^{-s}\,ds = I_{V}^+(k)-I_H^+(k)-I_{V}^-(k)+I_H^-(k).$$
We are now concerned with the limit of Equation \eqref{finite_contour_k} as $Y_k \to \infty$.

Firstly, by assumption \ref{zeta2hk_finite:assum3}, we have,
\begin{align*}
R_k(t) \vc \tfrac1{i2\pi}\,\lim_{Y_k \to \infty} \,I_{V}^-(k) = \tfrac1{2\pi}\int_{- \infty}^{\infty}\UKH(-r_{k+1}+ i y)\,t^{r_{k+1} - i y}\,dy < \infty,
\end{align*}
for any $k \in \N$ and any $t>0$. We also have
\begin{align*}
R_{-1}(t) \vc \tfrac1{i2\pi}\,\lim_{Y_0 \to \infty}\, I_{V}^+(0) = \tfrac1{2\pi}\int_{-\infty}^{\infty}\UKH(-r_0+ i y)\,t^{r_0 -iy}\,dy = \hKH
\end{align*}
on the strength of Corollary \ref{cor:hk_Mellin_inv}.\\ Moreover, for any $k \in \N$, $R_k(t) = \oz(t^{r_{k+1}})$: Indeed, with $\F$ denoting the Fourier transform, we have 
\begin{align}\label{Rk_order}
R_k(t) \, t^{-r_{k+1}}
&=\tfrac1{2\pi }\int_{-\infty}^\infty \UKH(-r_{k+1}+iy)\,t^{-iy}\,dy \notag \\
&=\tfrac1{2\pi }\F[y\mapsto \UKH(-r_{k+1}+iy)]\, (\tfrac{\log t}{2\pi} ) \underset{t\downarrow 0}{\to}\, 0.
\end{align}
The latter statement is a consequence of the Riemann--Lebesgue Lemma, since the function $y \mapsto \UKH(-r_k+iy)$ is in $L^1(\RR,dy)$ for each $r_k$ by hypothesis \eqref{ukh_vert}.

Secondly, hypothesis \eqref{ukh_vert} guarantees that for any $k \in \N$ and any $t>0$
\begin{align*}
\abs{I_{H}^{\pm}(k)} & \leq \int_{-r_{k+1}}^{-r_k} \vert \UKH(x \pm i \, Y_k) \vert \, t^{-x} dx \\
& \leq \sup_{x \in [-r_{k+1},-r_k]} \vert \UKH(x \pm i \, Y_k) \vert \, \int_{-r_{k+1}}^{-r_k} t^{-x} dx \xrightarrow[Y_k \to \infty]{} 0.
\end{align*}

Finally, let us rewrite the residues more explicitly. By assumption \ref{zeta2hk_finite:assum1} and the fact that the function $\Gamma$ has only simple poles, the function $\UKH$ admits a Laurent expansion $\UKH(s) = \sum_{n=-\infty}^{d+1} (-1)^n n ! \, a_{z,n}(K,H) \,(s-z)^{-n}$ in some open punctured disc with the center at any $z \in \PP(\UKH)$, in accordance with Formula \eqref{azn}. On the other hand,
\begin{align*}
t^{-s}=e^{-z\log t}e^{-(s-z)\log t}=t^{-z}\sum_{n=0}^\infty \tfrac{(-1)^n \log^n t}{n!}(s-z)^n, \text{ for any } \; s,z \in \CC, \; t >0.
\end{align*}
Therefore, $$\Rez{s=z} \, \UKH(s)\, t^{-s} = t^{-z} \,\sum_{n=0}^{d} a_{z,n}(K,H) \log^n t.$$
Hence, for any $k \in \N$ and any $t>0$ Equation \eqref{finite_contour_k} yields
\begin{align}
\label{F_k relation_k}
R_{k-1}(t)- R_k(t) = \rho_k(t).
\end{align}

Since $R_k(t) = \oz(t^{r_{k+1}})$, it follows that $\rho_k(t) = \oz(t^{r_{k}})$ for any $k \in \N$. The latter can also be easily deduced from the form of $\rho_k$ given by Formula \eqref{rho_k2}.

Starting with Formula \eqref{F_k relation_k} for $k=0$ and iterating it $N$ times we obtain the announced asymptotic expansion 
$\hKH = \sum_{k = 0}^N  \, \rho_k(t) + R_N(t)$.
\hfill$\Box$
\end{proof}

\begin{remark}
\label{rem:Gk}
Suppose that, for a chosen sequence $(r_k)_k$, we would require only the Lebesgue integrability of $\UKH$ on the verticals $\Re(z) = -r_k$ in place of the stronger constraint \eqref{ukh_vert} --- as we will do in Theorem \ref{thm:zeta2hk}. Then, the theorem would still hold, but the asymptotic expansion \eqref{heat_exp_finite} might get an additional contribution from the horizontal contour integrals, namely $$\hKH \tzero \sum_{k=0}^{\infty} \, \big( \rho_k(t) + g_k(t) \big)$$ with
\begin{align}
\label{Gk}
g_k(t) = \tfrac{1}{i 2\pi} \lim_{y \to \infty} \int_{-r_{k+1}}^{-r_{k}} \big{[}  \UKH(x+ i y)\,t^{- i y} - \UKH(x- i y)\,t^{ i y} \big{]}\, t^{-x} dx,
\end{align}
which is smooth for $t>0$ and $\Oz(t^{r_k})$ as follows from Formula \eqref{finite_contour_k}.

Let us note that even imposing a suitable decay rate of $\UKH$ on $\Re(z) = -r_k$ is not sufficient to get rid of the $g_k$'s, as the Phragm\'en--Lindel\"{o}f interpolation argument (cf. \cite[Section 12.7]{RudinComplex}) firstly requires that $\UKH$ does not grow too fast on any vertical in the segments $\Re(z) \in [-r_{k+1},-r_k]$. This is not a priori guaranteed.
\hfill $\blacksquare$
\end{remark}

In order to estimate the behaviour of $\UKH$ on the verticals, it is useful to recall the vertical decay rates of the $\Gamma$-function.\pagebreak

\begin{lemma}\label{lm:Gamma_vert}
With $x, y \in \RR$ we have
\begin{align*}
\Gamma(x+i y) = \begin{cases} \Oinf (\abs{y}^{x-1/2} e^{-\pi \abs{y}/2} ), & \text{for } x > 1/2, \\
\Oinf ( e^{-\pi \abs{y}/2} ), & \text{for } x \leq 1/2. \end{cases}
\end{align*}
\end{lemma}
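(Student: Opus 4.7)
The plan is to derive both cases as a direct consequence of Stirling's asymptotic formula. Recall that
\[
\log \Gamma(z) = (z-\tfrac12)\log z - z + \tfrac12\log(2\pi) + \Oinf(\abs{z}^{-1})
\]
holds uniformly for $z$ in any closed subsector of $\{z \in \CC : \abs{\arg z} \leq \pi - \delta\}$, with $\delta > 0$. For a fixed $x \in \RR$ not a non-positive integer, as $\abs{y} \to \infty$ the point $z = x + iy$ lies eventually in such a sector, so Stirling applies.

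The key computation is to extract the modulus. Writing $z = x + iy = r e^{i\theta}$ with $r = (x^2 + y^2)^{1/2}$ and $\theta = \arg z$, a direct calculation gives
\[
\Re\!\big[(z - \tfrac12)\log z\big] = (x - \tfrac12)\log r - y\,\theta.
\]
As $\abs{y} \to \infty$ with $x$ fixed, one has $r \sim \abs{y}$ and $\theta \to \operatorname{sign}(y)\,\tfrac{\pi}{2}$, hence $-y\theta \to -\tfrac{\pi}{2}\abs{y}$. Exponentiating (and using $\abs{e^{-z}} = e^{-x}$) yields the sharp estimate
\[
\abs{\Gamma(x+iy)} = \sqrt{2\pi}\, e^{-x}\,\abs{y}^{x-1/2}\, e^{-\pi \abs{y}/2}\,\big(1 + \oinf(1)\big),
\]
which in particular gives $\Gamma(x+iy) = \Oinf\!\big(\abs{y}^{x-1/2} e^{-\pi\abs{y}/2}\big)$ for every fixed $x$. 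This settles the first case $x > 1/2$ directly.

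For the second case $x \leq 1/2$, the exponent $x - 1/2 \leq 0$, so $\abs{y}^{x - 1/2} \leq 1$ for $\abs{y} \geq 1$. The bound just established therefore collapses to $\Gamma(x+iy) = \Oinf(e^{-\pi \abs{y}/2})$, as claimed. The only mild subtlety is that for $x$ a non-positive integer the function $\Gamma$ has a pole on the real axis, but since we estimate only as $\abs{y} \to \infty$ (and Stirling still applies in a sector avoiding $y = 0$), this does not affect the asymptotic statement. No step is a real obstacle: the content of the lemma is essentially a bookkeeping of Stirling's formula along verticals, and the two cases differ only in whether the polynomial factor $\abs{y}^{x-1/2}$ is kept explicitly or absorbed into the $\Oinf$.
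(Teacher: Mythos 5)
Your argument is correct, but it follows a different route from the paper. You derive both cases at once from Stirling's formula in a sector: since for fixed $x$ the point $z=x+iy$ stays in $\abs{\arg z}\leq \pi-\delta$ as $\abs{y}\to\infty$, the uniform asymptotic for $\log\Gamma$ gives $\abs{\Gamma(x+iy)}=\OO(\abs{y}^{x-1/2}e^{-\pi\abs{y}/2})$ for \emph{every} fixed $x$, and the case $x\leq 1/2$ then follows by discarding the (there decaying) factor $\abs{y}^{x-1/2}$. The paper instead cites the standard vertical estimate only for $x\geq 1/2$ and handles $x\leq 1/2$ by the reflection formula $\Gamma(z)\Gamma(1-z)=\pi\sin^{-1}(\pi z)$ combined with the lower bound $\abs{\Gamma(x+iy)}\geq \cosh(\pi y)^{-1/2}\Gamma(x)$ for $x\geq 1/2$, which yields $\abs{\Gamma(x+iy)}\leq c(x)\abs{\sinh(\pi y)}^{-1/2}$. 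Your approach is more unified and in fact gives a sharper bound in the second case (the extra factor $\abs{y}^{x-1/2}$), while the paper's route avoids invoking uniform Stirling asymptotics and uses only the reflection formula plus an elementary inequality; both prove the lemma as stated. One small slip worth noting: in your ``sharp estimate'' the constant $\sqrt{2\pi}\,e^{-x}$ is not quite right — writing $\theta=\pm\tfrac{\pi}{2}\mp x/\abs{y}+\OO(\abs{y}^{-3})$ shows that $-y\theta=-\tfrac{\pi}{2}\abs{y}+x+\OO(\abs{y}^{-2})$, so the $x$-terms cancel and the correct constant is $\sqrt{2\pi}$. Since $x$ is fixed this is an error only in the constant and does not affect the $\OO$-statements, but the asserted asymptotic equality as written is off by $e^{-x}$; similarly, the phrase ``$-y\theta\to-\tfrac{\pi}{2}\abs{y}$'' should be read as an equality up to a bounded term rather than a limit.
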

\begin{proof}
The case $x \geq 1/2$ is standard \cite[(2.1.19)]{Paris}. To prove the other part we invoke the $\Gamma$ reflection formula
\begin{align}
\label{gamma_reflex}
\Gamma(z)\, \Gamma(1-z) = \pi\sin^{-1}(\pi z), \quad \text{for } z \in \CC \setminus \Z,
\end{align}
along with a lower bound \cite[p. 51]{Carlson}
\begin{align}\label{gamma_lower_bds}
\abs{\Gamma(x+i y)} \geq \cosh(\pi y)^{-1/2}\, \Gamma(x), \quad \text{for } x \geq 1/2, \, y \neq 0.
\end{align}
For $x\leq 1/2,\, x \notin \Z/2$ and $y \neq 0$ we estimate:
\begin{align*}
\abs{\Gamma(x+iy)} & = \pi \,\abs{\sin( \pi (1-x) - i \pi y)}^{-1}\, \abs{\Gamma(1-x-iy)}^{-1} \\
& =\pi\, [\sin^2( \pi x) \cosh^2(\pi y) + \cos^2(\pi x) \sinh^2(\pi y)]^{1/2} \, \abs{\Gamma(1-x-iy)}^{-1} \\
& \leq \tfrac{\pi}{\sqrt{2}} \,[\abs{\sin( \pi x)\,\cos(\pi x)\,\sinh(\pi y)} \, \cosh(\pi y)]^{-1/2} \cosh(\pi y)^{1/2} \Gamma(1-x)^{-1} \\
& \leq  \tfrac{\pi}{\sqrt{2}} \,\Gamma(1-x)^{-1} \, \abs{\sin( \pi x) \cos(\pi x)}^{-1/2} \abs{\sinh(\pi y)}^{-1/2}.
\end{align*}
As $\sinh(y) = \Oinf(e^{\abs{y}})$, it follows that $\Gamma(x+i y) = \Oinf \big( e^{-\pi \abs{y}/2} \big)$ for $x\leq 1/2, \,x \notin \Z/2$ and the constraint $x \notin \Z/2$ can be dropped by continuity.
\hfill$\Box$
\end{proof}
An explicit control on the growth rate of the zeta function $\zKH$ yields the following important corollary of Theorem \ref{thm:zeta2hk_finite}:

\begin{corollary}
\label{cor:heat_q}
Let $H\in \Tp$, $K \in \B(\H)$ and let $\zKH$ meet the assumptions of Theorem \ref{thm:zeta2hk_finite}. Moreover, assume that $\zKH$ is of at most polynomial growth on the verticals\index{polynomial growth on verticals}, i.e. for any fixed $x \leq r_0$ there exists $c(x) < \infty$, such that 
\begin{align}
\label{zkh_vert}
\zKH(x+iy) = \Oinf(\abs{y}^{c(x)}).
\end{align}
Then, for any $q>0$, there exists an asymptotic expansion of $\,\Tr K e^{-t H^q}$ as $t \downarrow 0$ of the form \eqref{heat_exp_finite}.
\end{corollary}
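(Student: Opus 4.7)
The plan is to deduce the result by applying Theorem \ref{thm:zeta2hk_finite} directly to the operator $H^q$ in place of $H$. First I would observe that $H \in \Tp$ implies $H^q \in \mathcal{T}^{p/q}$, and that the spectral zeta functions are linked by the elementary rescaling identity
$$
\zeta_{K,H^q}(s) = \Tr K\, H^{-qs} = \zKH(qs), \quad \Re(s) > p/q.
$$
By the uniqueness of meromorphic continuation, this identity extends to all of $\CC$ and shows that assumption \ref{zeta2hk_finite:assum1} of Theorem \ref{thm:zeta2hk_finite} transfers from $\zKH$ to $\zeta_{K,H^q}$: the latter is meromorphic on $\CC$, its poles lie at $\{z/q \, \vert \, z \in \PP(\zKH)\}$, of the same order (hence at most $d$), and since $s \mapsto qs$ sends vertical strips to vertical strips, only finitely many poles occur in each such strip.

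The main point is to verify the vertical decay hypothesis \ref{zeta2hk_finite:assum3} for $\ZZ_{K,H^q}$. Factorising
$$
\ZZ_{K,H^q}(x+iy) = \Gamma(x+iy)\, \zKH\bigl(q(x+iy)\bigr),
$$
I would combine Lemma \ref{lm:Gamma_vert}, which yields $\vert\Gamma(x+iy)\vert = \Oinf\bigl(\vert y\vert^{\max(x-1/2,\,0)}\, e^{-\pi\vert y\vert/2}\bigr)$ on each vertical, with the polynomial growth assumption \eqref{zkh_vert}, which delivers $\vert\zKH(qx + iqy)\vert = \Oinf\bigl(\vert y\vert^{c(qx)}\bigr)$. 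Their product therefore satisfies
$$
\ZZ_{K,H^q}(x+iy) = \Oinf\Bigl(\vert y\vert^{c(qx)+\max(x-1/2,\,0)}\, e^{-\pi\vert y\vert/2}\Bigr),
$$
whose exponential decay easily dominates any polynomial bound, so \eqref{ukh_vert} holds with any $\epsilon(x) > 0$. Choosing the sequence $(r_k')$ controlling the contours for $H^q$ amounts to rescaling the sequence $(r_k)$ for $H$ by $q^{-1}$, so the polynomial growth of $\zKH$ on the relevant verticals feeds straight into the estimate above.

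A direct application of Theorem \ref{thm:zeta2hk_finite} to the pair $(K, H^q)$ then produces the announced asymptotic expansion of $\Tr K\, e^{-t H^q}$ of the form \eqref{heat_exp_finite}, with coefficients computed as in \eqref{azn} applied to $\ZZ_{K,H^q}(s) = \Gamma(s)\zKH(qs)$. The only real obstacle is the bookkeeping associated with the location and orders of the poles of $\Gamma(s)\zKH(qs)$: the set $\{z/q \, \vert \, z \in \PP(\zKH)\}$ may overlap $-\N$ at exceptional values of $q$, in which case orders add and the scale in \eqref{rho_k2} inherits extra logarithmic factors. This is handled precisely as in the derivation of \eqref{azn}, and requires no analytic input beyond what has already been established.
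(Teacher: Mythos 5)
Your proposal is correct and follows essentially the same route as the paper: the paper's proof likewise rests on the rescaling identity $\zeta_{K,H^q}(s)=\zKH(qs)$ and then invokes the polynomial growth on verticals together with Lemma \ref{lm:Gamma_vert} to verify hypothesis \eqref{ukh_vert} and apply Theorem \ref{thm:zeta2hk_finite} to $H^q$. Your extra remarks on pole bookkeeping (orders at most $d$, possible overlap with $-\N$ absorbed into the $\Gamma$-factor residues) are consistent with what the theorem already accommodates and require no further argument.
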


\begin{proof}
Observe that, for $\Re(s) > p/q$, $\zeta_{K,H^q}(s) = \Tr K H^{-qs} = \zeta_{K,H}(qs)$. \\ Then, the assumption of the polynomial growth on the verticals of $\zKH$, together with Lemma~\ref{lm:Gamma_vert}, is sufficient to conclude.
\hfill$\Box$
\end{proof}

There is an important lesson for noncommutative geometers coming from the above result: 

\begin{remark}
\label{rem:zeta_poly}
Let $\ahd$ be a regular spectral triple of dimension $p$. If, for a given $T \in \PDOk{0}$, the zeta function $\zeta_{T,D}$ has a meromorphic extension to $\CC$ with a polynomial growth on the verticals, then the small-$t$ asymptotic expansions of both $\Tr T e^{-t \abs{\DD}}$ and $\Tr T e^{-t \DD^2}$ exist. Nevertheless, the two expansions can be radically different --- compare Examples \ref{ex:even_sphere_2} and \ref{ex:odd_sphere_2} with Example \ref{ex:S1_exact}.
\hfill$\blacksquare$
\end{remark}

We indicated in Section \ref{sec:spec} that the spectral zeta functions are actually complex functions defined by general Dirichlet series. Estimating the growth rate of the meromorphic continuation of these is, in general, a formidable task, which sometimes relates to profound conjectures in number theory \cite[p. 10]{Flajolet96}. For example, it is fairly easy to show \cite[Chapter V]{Titchmarsh} that
\begin{align}
\label{zeta_vert}
\zeta(x+i y) = \begin{cases} \Oinf ( \abs{y}^0 ), & \text{for } 1<x, \\
\Oinf( \abs{y}^{(1-x)/2} ), & \text{for } 0 \leq x \leq 1, \\
\Oinf ( \abs{y}^{1/2-x} ), & \text{for } x < 0. \end{cases}
\end{align}
It implies, in particular, $\zeta(1/2+i y) = \Oinf ( \abs{y}^{1/4} )$.\\ However, the Lindel\"of Hypothesis states that actually $\zeta(1/2+i y) = \Oinf ( \abs{y}^{\epsilon} )$ for any $\epsilon > 0$. The current best estimate of the critical exponent is 13/84 \cite{Bourgain2017}.
\medskip

Although if $H$ and $K$ are classical pseudodifferential operators the small-$t$ asymptotic expansion of $\hKH$ is guaranteed by Theorem \ref{thm:Gilkey}, it is instructive to apply Theorem \ref{thm:zeta2hk_finite} in the context of the classical geometry of 2-sphere.

\begin{example}
\label{ex:even_sphere_2}
Let $\Dslash$ be the standard Dirac operator on $S^2$ (see Appendix \ref{sec:spheres}). From Formula \eqref{eigenvalues_Sd}  we deduce $$\zeta_{\Dslash^2}(s) = 4 \sum_{n=0}^{\infty} (n+1)^{-2s+1} = 4 \zeta(2s-1),$$ which is meromorphic on $\CC$ with a single simple pole at $s=1$. We thus have
\begin{align*}
\Rez{s=1} \ZZ_{\Dslash^2}(s) = 2, \,\,\text{ and }\,\, \Rez{s= -k} \ZZ_{\Dslash^2}(s) = 4 \, \tfrac{(-1)^k}{k!} \, \zeta(-2k-1) = - 4 \, \tfrac{(-1)^k}{k!} \, \tfrac{B_{2k+2}}{2k+2}.
\end{align*}
As the partitioning sequence we can choose, for instance, $r_k = -3/2 + k$, $k \in \N$.\\ Formula \eqref{zeta_vert} guarantees that the assumption \eqref{zkh_vert} is met and Corollary \ref{cor:heat_q} yields $$\Tr e^{-t \,\Dslash^2} \tzero 2t^{-1} - 4 \sum_{k=0}^\infty \tfrac{(-1)^k}{k!}\tfrac{B_{2k+2}}{2k+2}\,t^k.$$ This series is \emph{divergent} for any $t>0$ since $(-1)^{k+1} B_{2k} > 2 (2k)!/(2\pi)^{k}$ for $k \in \N^*$ \cite[(23.1.15)]{Abram}. In fact, the asymptotic expansion of the heat trace associated with the square of the standard Dirac operator on any \emph{even} dimensional sphere is divergent \cite[Proposition 11]{HeatEZ}.
\hfill$\blacksquare$
\end{example}

\subsection{Infinite Number of Poles in Vertical Strips}

For general noncommutative geometries the spectral zeta functions might have an infinite number of poles in the vertical strips. This feature is always present when fractal geometry is in play --- see, for instance, \cite{Guido3} --- and was detected also on the standard Podle\'s sphere (cf. Appendix \ref{sec:Podles} and \cite{PodlesSA}). Theorem \ref{thm:zeta2hk_finite} carries over to this context, but the proof needs to be refined as now each term of the asymptotic expansion \eqref{rho_k2} can itself be an infinite series, the convergence of which is a subtle issue.

\begin{theorem}
\label{thm:zeta2hk}
Let $H\in \Tp$ and let $K \in \B(\H)$. Assume that:

\begin{enumerate}[label=\textit{\roman*)},leftmargin=1cm]

\item \label{zeta2hk:assum1} The function $\zKH$ admits a meromorphic extension to the whole complex plane with poles of order at most $d$.

\item \label{zeta2hk:assum2} There exists a sequence $(r_k)_{k \in \N} \subset \RR$ strictly increasing to $+ \infty$ with $p < -r_0$, such that the function $\UKH$ is regular and Lebesgue integrable on the verticals $\Re(s)=-r_k$.

\item \label{zeta2hk:assum3} For any $k \in \N, \,t>0$ and $n \in \{0,1,\ldots,d\}$ the series
\begin{align}\label{ser_res}
\sum_{z \in X_k} \,\Rez{s=z}\, (s-z)^{n} \,\UKH(s)\,t^{-z},
\end{align} 
with $X_k \vc \{ z \in \PP(\UKH) \, \vert \, - r_{k+1} < \Re(z) < - r_{k} \}$, is absolutely convergent.

\end{enumerate}
Then, there exists an asymptotic expansion, with respect to the scale $(t^{r_k})_k\,$,
\begin{align}
\label{heat_exp_G}
&\hspace{-0.3cm}\hKH \tzero \sum_{k=0}^{\infty} \big( \rho_k(t) + g_k(t)\big), \\
&  \text{ with } \hspace{0.6cm} \rho_k(t) \vc  \sum_{z \in X_k} \,\big[\sum_{n=0}^{d} a_{z,n}(K,H) \log^n t\, \big]\,t^{-z} \label{rho_k3}\\
& \hspace{1.4cm} a_{z,n}(K,H) \vc \tfrac{(-1)^n}{n!} \, \Rez{s=z}\, (s-z)^{n} \,\UKH(s) \label{azn1}
\end{align}
and the functions $g_k$ can be expressed as 
\begin{align}
\label{Gk1}
\!\!\! g_k(t) = \tfrac{1}{i 2\pi} \lim_{m \to \infty} \int_{-r_{k+1}}^{-r_{k}}\hspace{-0.1cm} \big{[}  \UKH(x+ i y_{m}^{(k)})\,t^{- i y_{m}^{(k)}} - \UKH(x- i y_{m}^{(k)})\,t^{ i y_{m}^{(k)}} \big{]}\, t^{-x} dx,
\end{align}

\vspace{-0.2cm}\noindent 
via an (always existing) sequence $(y^{(k)}_m)_{m\in\N}$ strictly increasing to $+\infty$ with $y^{(k)}_0=0$ and such that $\UKH$ is regular on the horizontals $x \pm iy^{(k)}_m$ with $x \in [-r_{k+1},-r_{k}]$.
\end{theorem}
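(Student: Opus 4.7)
The plan is to mimic the strategy of Theorem \ref{thm:zeta2hk_finite}, replacing the rectangles $D^k$ of infinite height by truncated rectangles whose horizontal edges avoid the (now infinitely many) poles in the vertical strip, and then control the limit carefully using the three assumptions.

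First I would start, as in the finite case, from Corollary \ref{cor:hk_Mellin_inv}, which represents $\hKH$ as a vertical Mellin integral at any abscissa $c>p$. For each $k \in \N$, using assumption \ref{zeta2hk:assum2} I define
\begin{align*}
R_{-1}(t) \vc \hKH, \qquad R_k(t) \vc \tfrac{1}{2\pi}\int_{-\infty}^{\infty}\UKH(-r_{k+1}+iy)\,t^{r_{k+1}-iy}\,dy,
\end{align*}
which makes sense by Lebesgue integrability of $y\mapsto \UKH(-r_{k+1}+iy)$. A verbatim repetition of Formula \eqref{Rk_order}, i.e. recognising the integrand as a Fourier transform and applying the Riemann--Lebesgue lemma, then yields $R_k(t)=\oz(t^{r_{k+1}})$ for all $k \geq -1$.

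Next I would set up the contour integration. Fix $k \in \N$. Since $\PP(\UKH)$ is discrete, for any $k$ there exists a strictly increasing sequence $(y_m^{(k)})_{m \in \N}$ with $y_0^{(k)}=0$ and $y_m^{(k)} \to \infty$, such that $\UKH$ is regular on the horizontal segments $\{x \pm i y_m^{(k)} \,:\, x\in[-r_{k+1},-r_k]\}$. Let $D_m^k$ be the rectangle with corners $-r_{k+1}\pm i y_m^{(k)}$ and $-r_k \pm i y_m^{(k)}$, oriented counter-clockwise. By assumption \ref{zeta2hk:assum2} the function $\UKH$ is regular on the vertical sides, and by construction on the horizontal sides, so all poles of $\UKH$ inside $D_m^k$ lie in
\begin{align*}
X_k^m \vc \{z \in X_k \,:\, \abs{\Im(z)} < y_m^{(k)}\}.
\end{align*}
The residue theorem then gives
\begin{align*}
\tfrac{1}{i2\pi}\oint_{\partial D_m^k}\UKH(s)\,t^{-s}\,ds = \sum_{z \in X_k^m}\Rez{s=z}\,\UKH(s)\,t^{-s}.
\end{align*}
Expanding the residue as in the proof of Theorem \ref{thm:zeta2hk_finite}, the right-hand side equals $\sum_{z \in X_k^m} \sum_{n=0}^{d}a_{z,n}(K,H)\log^n t\,t^{-z}$, which by the absolute convergence assumption \ref{zeta2hk:assum3} converges to $\rho_k(t)$ as $m \to \infty$.

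The main obstacle, and the heart of the argument, is to take the limit $m \to \infty$ of the contour. The vertical sides at $\Re(s)=-r_k$ and $\Re(s)=-r_{k+1}$ (truncated at $\pm i y_m^{(k)}$) converge, by the dominated convergence theorem and assumption \ref{zeta2hk:assum2}, respectively to $i2\pi R_{k-1}(t)$ and $i2\pi R_k(t)$ (with the sign coming from orientation). The horizontal sides jointly give precisely
\begin{align*}
-\tfrac{1}{i2\pi}\int_{-r_{k+1}}^{-r_k}\!\!\big[\UKH(x+iy_m^{(k)})\,t^{-iy_m^{(k)}} - \UKH(x-iy_m^{(k)})\,t^{iy_m^{(k)}}\big]t^{-x}dx,
\end{align*}
whose limit is $-g_k(t)$ by the very definition \eqref{Gk1}. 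Existence of this limit for our chosen sequence is automatic, since all other terms in the residue equation converge. Combining the four contributions yields the key recursion
\begin{align*}
R_{k-1}(t)-R_k(t) = \rho_k(t)+g_k(t).
\end{align*}

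Finally, iterating this identity from $k=0$ to $N$ gives $\hKH = \sum_{k=0}^{N}[\rho_k(t)+g_k(t)] + R_N(t)$, and the estimate $R_N(t)=\oz(t^{r_{N+1}})=\oz(t^{r_N})$ established above shows that the right-hand side is an asymptotic expansion with respect to the scale $(t^{r_k})_k$. The fact that each summand $\rho_k(t)+g_k(t)$ is $\Oz(t^{r_k})$ follows from $R_{k-1}(t)=\oz(t^{r_k})$ and $R_k(t)=\oz(t^{r_{k+1}})\subset \oz(t^{r_k})$, completing the argument.
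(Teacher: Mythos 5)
Your proposal is correct and follows essentially the same route as the paper's proof: truncated rectangles with horizontal edges chosen to avoid poles, the residue theorem, limits of the vertical sides controlled by assumption \textit{ii)} (with the Riemann--Lebesgue lemma as in \eqref{Rk_order}) and of the residue sums by assumption \textit{iii)}, the horizontal sides producing $g_k$ whose limit exists because all other terms in the identity converge, and iteration of the recursion $R_{k-1}-R_k=\rho_k+g_k$. The only cosmetic difference is that you obtain the $\Oz(t^{r_k})$ bound for the combined coefficient $\rho_k+g_k$ from the recursion, whereas the paper bounds $\rho_k$ directly (as in Theorem \ref{thm:hk2zeta}) and then deduces the bound for $g_k$; both suffice.
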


The introduction of the sequences $(y_m^{(k)})_m$, illustrated in Figure \ref{fig:ht2zeta}, is coerced by the need to sum over the residues of an infinite number of poles in each strip. The final result does not depend on the particular choice of the $(y_m^{(k)})_m$'s and, actually, one could allow for curved lines instead of simple straight horizontals.

Observe also that, similarly as in Theorem \ref{thm:hk2zeta}, the only role of the sequence $(r_k)_k$ is to fix an asymptotic scale: If we choose another sequence $(r_k')_k$, for which the assumptions \ref{zeta2hk:assum2} and \ref{zeta2hk:assum3} are met, we will obtain \emph{the same} asymptotic expansion \eqref{heat_exp_G} --- recall Remark \ref{rem:asymp_exp}.

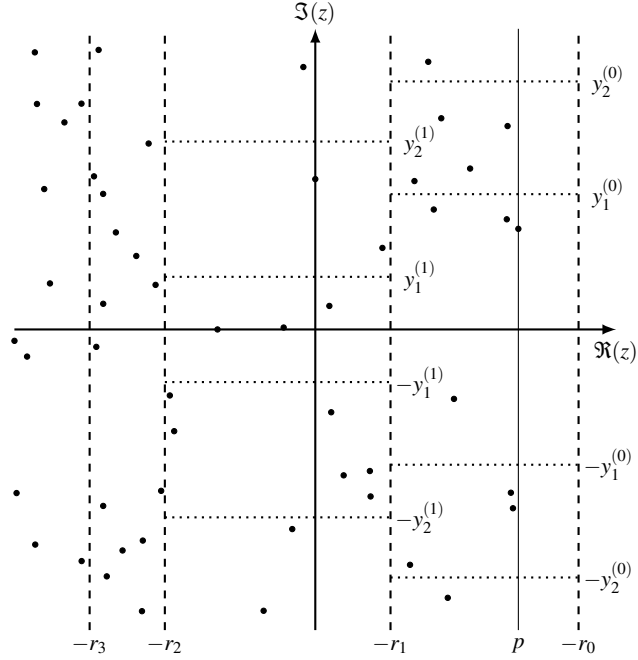
\begin{figure}[ht]
\begin{center}
\begin{tikzpicture}
%
%
\draw[-latex,thick] (-4,0) -- (4,0);
\draw[-latex,thick] (0,-4) -- (0,4);
\node[] at (4,-0.3) {$\Re(z)$};
\node[] at (0,4.2) {$\Im(z)$};
%
%
%
\filldraw (1.25945, -3.13148) circle (1pt); 
\filldraw (2.0582, 2.14066) circle (1pt);
\filldraw (1.67479, 2.80959) circle (1pt);  
\filldraw (2.60021, -2.17227) circle (1pt); 
\filldraw (2.54572, 1.46555) circle (1pt);
\filldraw (2.55604, 2.70598) circle (1pt); 
\filldraw (1.31823, 1.97422) circle (1pt);
\filldraw (2.62783, -2.37889) circle (1pt); 
\filldraw (1.84381, -0.924021) circle (1pt); 
\filldraw (1.57486, 1.59567) circle (1pt);
\filldraw (1.76111, -3.57055) circle (1pt);
\filldraw (2.69955, 1.33859) circle (1pt);
\filldraw (1.50166, 3.56093) circle (1pt);
%
%
\filldraw (0.734873, -2.222) circle (1pt); 
\filldraw (0.725679, -1.88359) circle (1pt);
\filldraw (0.37699, -1.94107) circle (1pt);
\filldraw (0.186981, 0.312932) circle (1pt);
\filldraw (-0.306402, -2.65595) circle (1pt);
\filldraw (-1.93375, -0.878248) circle (1pt); 
\filldraw (-0.158759, 3.49133) circle (1pt); 
\filldraw (0.892321, 1.08486) circle (1pt); 
\filldraw (0.211636, -1.10176) circle (1pt); 
\filldraw (-0.42055, 0.0254455) circle (1pt);
\filldraw (-1.87595, -1.35347) circle (1pt);
\filldraw (-0.687024, -3.74293) circle (1pt);
\filldraw (-1.3, 0) circle (1pt);
\filldraw (0, 2) circle (1pt);
%
%
\filldraw (-2.37965, 0.977474) circle (1pt); 
\filldraw (-2.81933, 0.34137) circle (1pt);
\filldraw (-2.6512, 1.29105) circle (1pt);  
\filldraw (-2.21576, 2.47371) circle (1pt); 
\filldraw (-2.81974, -2.34804) circle (1pt);
\filldraw (-2.88071, 3.71907) circle (1pt); 
\filldraw (-2.82043, 1.80348) circle (1pt);
\filldraw (-2.7722, -3.28512) circle (1pt); 
\filldraw (-2.29309, -2.80882) circle (1pt); 
\filldraw (-2.04921, -2.14728) circle (1pt);
\filldraw (-2.94202, 2.03689) circle (1pt);
\filldraw (-2.12347, 0.593084) circle (1pt);
\filldraw (-2.56238, -2.94058) circle (1pt);
\filldraw (-2.30414, -3.74829) circle (1pt);
\filldraw (-2.91118, -0.231629) circle (1pt);
%
%
\filldraw (-3.97142,-2.17682) circle (1pt); 
\filldraw (-3.10642, -3.08173) circle (1pt);
\filldraw (-3.72845, 3.68643) circle (1pt);  
\filldraw (-3.60357, 1.86766) circle (1pt); 
\filldraw (-3.83147, -0.360911) circle (1pt);
\filldraw (-3.33506, 2.75464) circle (1pt); 
\filldraw (-3.9993, -0.151525) circle (1pt);
\filldraw (-3.1085, 3.00419) circle (1pt); 
\filldraw (-3.52734, 0.612302) circle (1pt); 
\filldraw (-3.72299, -2.86216) circle (1pt);
\filldraw (-3.7,3) circle (1pt);
%
%
\draw[dashed,thick] (3.5,-4) -- (3.5,4);
\node[] at (3.5,-4.2) {$-r_0$};
\draw[dashed,thick] (1,-4) -- (1,4);
\node[] at (1,-4.2) {$-r_1$};
\draw[dashed,thick] (-2,-4) -- (-2,4);
\node[] at (-2,-4.2) {$-r_2$};
\draw[dashed,thick] (-3,-4) -- (-3,4);
\node[] at (-3,-4.2) {$-r_3$};
\draw[] (2.69955,-4) -- (2.69955,4);
\node[] at (2.69955,-4.2) {$p$};
%
%
\draw[dotted,thick] (1,1.8) -- (3.5,1.8);
\node[] at (3.9,1.8) {$y^{(0)}_1$};
\draw[dotted,thick] (1,-1.8) -- (3.5,-1.8);
\node[] at (3.9,-1.8) {$-y^{(0)}_{1}$};
\draw[dotted,thick] (1,3.3) -- (3.5,3.3);
\node[] at (3.9,3.3) {$y^{(0)}_2$};
\draw[dotted,thick] (1,-3.3) -- (3.5,-3.3);
\node[] at (3.9,-3.3) {$-y^{(0)}_{2}$};
\draw[dotted,thick] (-2,0.7) -- (1,0.7);
\node[] at (1.4,0.7) {$y^{(1)}_1$};
\draw[dotted,thick] (-2,-0.7) -- (1,-0.7);
\node[] at (1.4,-0.7) {$-y^{(1)}_{1}$};
\draw[dotted,thick] (-2,2.5) -- (1,2.5);
\node[] at (1.4,2.5) {$y^{(1)}_2$};
\draw[dotted,thick] (-2,-2.5) -- (1,-2.5);
\node[] at (1.4,-2.5) {$-y^{(1)}_{2}$};

\end{tikzpicture}
\end{center}
\caption{\label{fig:ht2zeta}An illustration of the meromorphic structure of a function $\UKH$ together with notations adopted in the proof of Theorem \ref{thm:zeta2hk}.}
\end{figure}

\begin{proof}
For each $k \in \N$, let us choose a sequence $(y^{(k)}_m)_{m\in\N}$ with the properties required by the theorem. Such a sequence can always be found since $\zKH$ is meromorphic (and so is $\UKH$) and hence the sets $X_k$ do not have accumulation points. \\
For each $k$ let us denote $D_0^k \vc \emptyset$ and 
\begin{align*}
& D_m^k\vc \text{ rectangle }\{s \in \CC \,\vert \,-r_{k+1}\leq \Re(s)\leq -r_{k}, \; -y_{m}^{(k)}\leq \Im(s) \leq y_m^{(k)} \},\,\text{ for }m\in \N^*.
\end{align*}

Now, let us fix the indices $k$ and $m$.
\\
By construction of the sequences $(r_k)_k$ and $(y^{(k)}_m)_{m}$, the function $\UKH$ is regular at the boundary of any $D_m^k$. Hence, the residue theorem yields
\begin{equation}
\label{finite_contour_N}
\tfrac1{i2\pi}\int_{\partial D_m^k}\UKH(s)\,t^{-s}\,ds =\sum_{z\in \PP(\UKH,\,D_m^k)} \,\Rez{s=z} \, \UKH(s) \,t^{-s},
\end{equation}
where the contour $\partial D_m^k$ is oriented counter-clockwise. In the above sum only a finite number of residues is taken into account, as the region $D_m^k$ is bounded.

Let us decompose the boundary of the rectangle using
\begin{align*}
&I_{V}^+(k,m)  \vc \int_{-r_{k}- i y_{m}^{(k)}}^{-r_{k}+ i y_{m}^{(k)}}\UKH(s)\,t^{-s}\,ds, \\
&I_{V}^-(k,m)  \vc \int_{-r_{k+1}- i y_{m}^{(k)}}^{-r_{k+1}+ i y_{m}^{(k)}}\UKH(s)\,t^{-s}\,ds, \\
&I_H^\pm(k,m)  \vc \int_{-r_{k+1} \pm i y_{m}^{(k)}}^{-r_{k} \pm i y_{m}^{(k)}}\UKH(s)\,t^{-s}\,ds,
\end{align*}
so that $\int_{\partial D_m^k}\UKH(s)\,t^{-s}\,ds = I_{V}^+(k,m)-I_H^+(k,m)-I_{V}^-(k,m)+I_H^-(k,m)$.\\
We are now concerned with the limit of Equation \eqref{finite_contour_N} as $m \to \infty$.

Firstly, as in the proof of Theorem \ref{thm:zeta2hk_finite}, we deduce from assumption \ref{zeta2hk:assum2} that 
\begin{align*}
& R_k(t)  \vc \tfrac1{i2\pi}\lim_{m \to \infty} \,I_{V}^-(k,m) = \tfrac1{2\pi}\int_{-\infty}^{\infty}\UKH(-r_{k+1}+iy )\,t^{r_{k+1} - iy}\,dy = \oz(t^{r_{k+1}}),\\
& R_{-1}(t) \vc \tfrac1{i2\pi}\lim_{m \to \infty}\, I_{V}^+(0,m) = \hKH.
\end{align*}

Secondly, for any $k \in \N$ and any $t>0$, we are allowed to write
\begin{align*}
\lim_{m \to \infty} \, \sum_{z\in \PP(\UKH,\,D_m^k)}\, \Rez{s=z} \, \UKH(s) \,t^{-s} = \sum_{z\in X_k} \,\Rez{s=z} \, \UKH(s)\, t^{-s} = \rho_k(t),
\end{align*}
as the sum over the (possibly infinite) sets $X_k$ is absolutely convergent by assumption \ref{zeta2hk:assum3}. In particular, the limit $m \to \infty$ of the sum of residues does not depend on the choice of the sequence $(y^{(k)}_m)_{m}$.

Finally, for any $k \in \N$ and any $t>0$ Equation \eqref{finite_contour_N} yields
\begin{align}\label{F_k relation_N}
R_{k-1}(t)- R_k(t) - \rho_k(t) = \tfrac{1}{i2 \pi} \, \lim_{m \to \infty}  \,[ I_H^+(k,m) - I_H^-(k,m) ].
\end{align}

Since the LHS of \eqref{F_k relation_N} is a well-defined function of $k \in \N$ and $t \in (0,\infty)$, the RHS must be so, hence, in particular, the limit $m \to \infty$ exists and is finite for any fixed value of $k$ and $t$. We denote it by $$g_k(t) \vc \tfrac{1}{i2 \pi} \,  \lim_{m \to \infty}  \,[I_H^+(k,m) - I_H^-(k,m) ],$$ which accords with Formula \eqref{Gk1}.

Following the same arguments as in the proof of Theorem \ref{thm:hk2zeta} one can deduce that $\rho_k(t) = \oz(t^{r_{k}})$ for any $k \in \N$. Since $R_k(t) = \oz(t^{r_{k+1}})$, we get $g_k(t) = \oz(t^{r_{k}})$ for any $k \in \N$.

Starting with Formula \eqref{F_k relation_N} for $k=0$ and iterating it $N$ times we obtain the announced asymptotic expansion: 
$\hKH = \sum_{k = 0}^N [\rho_k(t) + g_k(t)] + R_N(t)$.
\hfill$\Box$
\end{proof}

The assumptions in Theorem \ref{thm:zeta2hk} are actually weaker than the ones adopted in Theorem \ref{thm:zeta2hk_finite} as we have only imposed Lebesgue integrability on the verticals $\Re(s) = - r_k$, rather than demanding concrete decay rates in whole segments as in Formula \eqref{ukh_vert}. The price for that are the possible additional terms $g_k$ in the asymptotic expansion \eqref{heat_exp_G} coming from the contour integrals \eqref{Gk1} and not from the poles of $\UKH$. This should not be a surprise in view of Remark \ref{rem:Gk}. It is also clear that if $\abs{\UKH}$ decays on the verticals, i.e. 
\begin{align}
\label{ukh_vert_m}
\lim_{m \to \infty} \, \vert\UKH(x+iy^{(k)}_m)\vert = 0, \quad \text{ for all } \, k \in \N,
\end{align}
then $g_k = 0$ for every $k$.

When \eqref{ukh_vert_m} is assumed, the limit $m \to \infty$ of Formula \eqref{finite_contour_N} reads 
\begin{align}\label{m_conv}
R_{k-1}(t)- R_k(t) = \lim_{m \to \infty}  \sum_{z\in \PP(\UKH,\,D_m^k)}\, \Rez{s=z} \, \UKH(s) \,t^{-s}.
\end{align}
It implies that the sum of residues converges for any $k \in \N$, $t>0$ and, moreover, that the limit $m \to \infty$ does not depend on the chosen sequence $(y^{(k)}_m)_m$. Hence, one might be tempted to conclude that the assumption \ref{zeta2hk:assum3} is actually redundant when the constraint \eqref{ukh_vert_m} is met. However, Formula \eqref{m_conv} guarantees the conditional convergence only and we are not allowed to write the RHS of \eqref{m_conv} simply as $\rho_k(t)$ given by \eqref{ser_res}. More precisely, \eqref{m_conv} says that the series of residues converges if we group the terms into the sets $\PP(\UKH,\,D_{m+1}^k \setminus D_{m}^k)$ and add them subsequently with the increasing counting index $m$. Adopting a more stringent constraint
\begin{align*}
\sup_{x\in[-r_{k+1},-r_k]} \, \vert\UKH(x+iy^{(k)}_m)\vert = \Oinf(\abs{m}^{-1-\epsilon_k}), \quad \text{ for some } \, \epsilon_k > 0,
\end{align*}
which looks natural when compared with assumption \ref{zeta2hk_finite:assum3} of Theorem \ref{thm:zeta2hk_finite}, does imply that
\begin{align*}
\sum_{m = 0}^{\infty} \, \big\vert \,  \sum_{z\in \PP(\UKH,\,D_{m+1}^k \setminus D_{m}^k)}\, \Rez{s=z} \, \UKH(s) \,t^{-s} \, \big\vert < \infty, \text{ for any }k \in \N, \, t > 0
\end{align*}
(cf. \cite[Proposition 4]{HeatEZ}). 
But the grouping of terms into $\PP(\UKH,\,D_{m+1}^k \setminus D_{m}^k)$ might turn out indispensable. Such a situation might produce itself for instance if the meromorphic structure of the function $\UKH$ is as on Figure \ref{fig:ht2zeta_pat}.

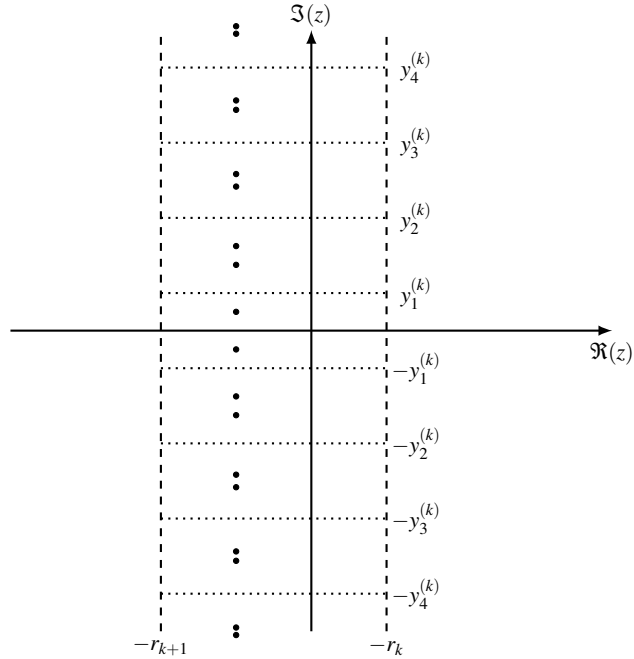
\begin{figure}[ht]
\begin{center}
\begin{tikzpicture}
%
%
\draw[-latex,thick] (-4,0) -- (4,0);
\draw[-latex,thick] (0,-4) -- (0,4);
\node[] at (4,-0.3) {$\Re(z)$};
\node[] at (0,4.2) {$\Im(z)$};
%
%
%
\filldraw (-1,1/4) circle (1pt);
\filldraw (-1,-1/4) circle (1pt);
\filldraw (-1,1+1/8) circle (1pt);
\filldraw (-1,1-1/8) circle (1pt);
\filldraw (-1,-1+1/8) circle (1pt);
\filldraw (-1,-1-1/8) circle (1pt);
\filldraw (-1,2+1/12) circle (1pt);
\filldraw (-1,2-1/12) circle (1pt);
\filldraw (-1,-2+1/12) circle (1pt);
\filldraw (-1,-2-1/12) circle (1pt);
\filldraw (-1,3+1/16) circle (1pt);
\filldraw (-1,3-1/16) circle (1pt);
\filldraw (-1,-3+1/16) circle (1pt);
\filldraw (-1,-3-1/16) circle (1pt);
\filldraw (-1,4+1/20) circle (1pt);
\filldraw (-1,4-1/20) circle (1pt);
\filldraw (-1,-4+1/20) circle (1pt);
\filldraw (-1,-4-1/20) circle (1pt);
%
%
\draw[dashed,thick] (1,-4) -- (1,4);
\node[] at (1,-4.2) {$-r_k$};
\draw[dashed,thick] (-2,-4) -- (-2,4);
\node[] at (-2,-4.2) {$-r_{k+1}$};
%
%
\draw[dotted,thick] (-2,1/2) -- (1,1/2);
\node[] at (1.4,1/2) {$y^{(k)}_1$};
\draw[dotted,thick] (-2,-1/2) -- (1,-1/2);
\node[] at (1.4,-1/2) {$-y^{(k)}_{1}$};
\draw[dotted,thick] (-2,3/2) -- (1,3/2);
\node[] at (1.4,3/2) {$y^{(k)}_2$};
\draw[dotted,thick] (-2,-3/2) -- (1,-3/2);
\node[] at (1.4,-3/2) {$-y^{(k)}_{2}$};
\draw[dotted,thick] (-2,5/2) -- (1,5/2);
\node[] at (1.4,5/2) {$y^{(k)}_3$};
\draw[dotted,thick] (-2,-5/2) -- (1,-5/2);
\node[] at (1.4,-5/2) {$-y^{(k)}_{3}$};
\draw[dotted,thick] (-2,7/2) -- (1,7/2);
\node[] at (1.4,7/2) {$y^{(k)}_4$};
\draw[dotted,thick] (-2,-7/2) -- (1,-7/2);
\node[] at (1.4,-7/2) {$-y^{(k)}_{4}$};

\end{tikzpicture}
\end{center}
\caption{\label{fig:ht2zeta_pat}An illustration of a possible pathological structure of poles of a meromorphic function. The sum of residues might turn out finite if the poles are grouped by two, but infinite if they are added one by one. It is unclear whether such a situation can actually produce itself for a spectral zeta function of geometrical origin.}
\end{figure}

On the other hand, we are not able to cook up a concrete example of a function $\UKH$, for which such a pathology occurs. So, it might well be that the geometric origin of the operators $K, H$ prevents such situations --- see Problem \ref{prob:heat vs zeta} \ref{prob:ht2zeta_pat}. However, we will shortly witness, in Section \ref{sec:conv_non}, a computation of the asymptotic expansion, in which the (vertical) contour integral does give a finite, non-zero, contribution.

We conclude this section with a friendly noncommutative-geometric example.

\begin{example}
\label{ex:Podles_hk_S}
Let $\DD_q^S$ be the simplified Dirac operator on the standard Podle\'s sphere (cf. Appendix \ref{sec:Podles}) and let us denote $u = \abs{w} q (1-q^2)^{-1}> 0$. Using the explicit formula for the eigenvalues of $\vert\DqS\vert$ we can easily compute its spectral zeta function. For $\Re(s) > 0$ we have:
\begin{align*}
\zqS(s) & = \sum_{\{+,-\}} \,\sum_{\ell \in \N + 1/2} \,\,\sum_{m = -\ell}^{\ell} 
\big(\vert w \vert \, \tfrac{q^{-\ell+1/2}}{1-q^2}\big)^{-s} = 2 (u q^{-1})^{-s} \, \sum_{\ell \in \N + 1/2} (2\ell+1) q^{\,s (\ell-1/2)} \\
& = 4 (u q^{-1})^{-s} \, \sum_{n=0}^{\infty} (n+1) \,q^{\,sn} = 4 (u q^{-1})^{-s} \, (1-q^{\,s})^{-2}.
\end{align*}
The latter equality, which is nothing but an elementary summation of a geometric series, provides the meromorphic extension of $\zqS$ to the whole complex plane. It has an infinite number of poles of second order regularly spaced on the imaginary axis. The following meromorphic structure of the function $\UDqS$ emerges:

\begin{itemize}
	\item a third order pole at $s=0$,
	\item second order poles at $s=2 \pi i j/ (\log q)$, with $j \in \Z^*$,
	\item first order poles at $s = - k$, with $k \in \N$.
\end{itemize}

The corresponding residues $a_{z,n}(\bbbone,\vert \DqS \vert)$, given by Formula \eqref{azn}, read
\begin{align*}
& a_{0,2} = \tfrac{2}{\log^2 q}, \qquad a_{0,1} = \tfrac{4}{\log^2 q} \left(\log u + \gamma \right), & \\
& a_{0,0} = \tfrac{1}{\log^2 q} \left( 2 \log^2 u + \tfrac{1}{3} \pi^2 - \tfrac{1}{3} \log^2 q + 4 \gamma \log u + 2\gamma^2 \right), &\\
& a_{2 \pi i j / \log q,1}  = -\tfrac{4}{\log^2 q} \, u^{-2 \pi i j / \log q} \, \Gamma(\tfrac{2 \pi i}{\log q} \,j), & \text { for } j \in \Z^*,\\
& a_{2 \pi i j / \log q,0}  = -\tfrac{4}{\log^2 q} \, u^{-2 \pi i j / \log q} \, \Gamma(\tfrac{2 \pi i}{\log q}\, j) \,[ \log u - \psi(\tfrac{2 \pi i}{\log q}\, j) ], & \text { for } j \in \Z^*,
\end{align*}
where $\gamma$ stands for the Euler's constant and $\psi \vc \Gamma'/\Gamma$ is the digamma function.

To apply Theorem \ref{thm:zeta2hk}, we can choose $r_k = -1/2 + k$. We have on the verticals
\begin{align}
\label{q_zeta_vert}
\vert \zqS(-r_k + iy) \vert = 4 (u q^{-1})^{r_k} \, \abs{1-q^{-r_k+iy}}^{-2} \leq 4 (u q^{-1})^{r_k} \, (1-q^{-r_k})^{-2}.
\end{align}
With the help of Lemma \ref{lm:Gamma_vert}, we see immediately that assumptions \ref{zeta2hk:assum2} and \ref{zeta2hk:assum3} of Theorem \ref{thm:zeta2hk} are fulfilled.

Finally, we choose a convenient sequence $y_m = \tfrac{2 \pi (m + 1/2)}{\log q}$, $m \in \N$, for which $q^{iy_m} = -1$. It yields,  for any $x \in \RR$,
\begin{align*}
\vert \zqS(x + iy_m) \vert = 4 (u q^{-1})^{x} \, \vert 1-q^{x + iy_m} \vert^{-2} = 4 (u q^{-1})^{x} (1+q^{x})^{-2}.
\end{align*}
Hence, no additional contribution from the contour integral \eqref{Gk1} arises.

Summa summarum, Theorem \ref{thm:zeta2hk} yields
\begin{align}
\Tr e^{-t\, \abs{\DqS}}  \tzero \tfrac{1}{\log^2 q} \,
\big[ 2 \log^2 (ut) &+  F_1 \big(\log (ut)\big)  \, \log (ut)  + F_0 \big(\log (ut)\big)\big] \nonumber \\
 &+ \sum_{k=1}^\infty \tfrac{(-1)^k \, q^{-k}}{(k)!(1-q^{-k})^2} \,\,(ut)^{k}, \label{Podles_heat_S_asym}
\end{align}
where $F_0$ and $F_1$  are periodic bounded smooth functions on $\RR$ defined as
\begin{align*}
F_1(x) & \vc  4 \gamma - 4 \sum_{j \in \Z^*} \Gamma (\tfrac{2\pi i}{\log q} \,j) \,e^{2\pi i j x}, \\
F_0(x) & \vc \tfrac{1}{3} ( \pi^2 + 6 \gamma^2 - \log^2 q) - 4
\sum_{j \in \Z^*} \,\Gamma (-\tfrac{2\pi i}{\log q} \,j)\,  \psi (\tfrac{2\pi i}{\log q}\, j)\, e^{2\pi i j x},
\end{align*}
see \cite[Theorem 4.1]{PodlesSA}.

Similarly as in Example \ref{ex:Podles_SA_asym}, it turns out that the expansion \eqref{Podles_heat_S_asym} is actually exact for all $t>0$ --- see Example \ref{ex:Podles_exact}. We remark that these results hold also for the full Dirac operator $\DD_q$ on the standard Podle\'s sphere, though the estimation of the contour integrals is much more tedious --- cf. \cite[Section 4]{PodlesSA}.
\hfill$\blacksquare$
\end{example}

\section{Convergent Expansions of Heat Traces}\label{sec:conv}

As illustrated by Example \ref{ex:even_sphere_2} the small-$t$ asymptotic expansion of a heat trace is typically divergent. We have, however, witnessed the convergence of this expansion in some particular geometrical context. We shall now connect these specific situations to the behaviour of the associated zeta functions.

In the context of Theorems \ref{thm:zeta2hk_finite} and \ref{thm:zeta2hk} the remainder of a convergent expansion (cf. Definition \ref{def:conv}) can be written explicitly as
\begin{align}\label{Rinf}
R_{\infty}(t) = \lim_{N \to \infty} \, \tfrac1{i2\pi}\int_{-r_{N}- i \infty}^{-r_{N}+ i \infty}\UKH(s)\,t^{-s}\,ds.
\end{align}
Indeed, eq. \eqref{heat_exp_G} yields 
\begin{align*}
\hKH -  \lim_{N \to \infty} \,\sum_{k=0}^{N} [ \rho_k(t) + g_k(t) ] = \lim_{N \to \infty} \,R_N(t).
\end{align*}
Hence, we can alternatively deduce the convergence of an asymptotic expansion of the form \eqref{heat_exp_G} by inspecting the expression \eqref{Rinf}.

Namely, if there exists $T \in (0,\infty]$ such that $R_{k}(t)$ converges absolutely/uniformly on $(0,T)$ as $k \to \infty$, then $\sum_{k = 0}^{\infty} ( \rho_k(t) + g_k(t) )$ converges absolutely/uniformly on $(0,T)$ and for $t \in (0,T)$ we can write $\hKH = \sum_{k = 0}^{\infty} [ \rho_k(t) + g_k(t) ] + R_{\infty}(t)$. Moreover, Equation \eqref{Rk_order} then shows that $R_{\infty}(t) = \Oz(t^{\infty})$.

\subsection{Convergent, Non-exact, Expansions}\label{sec:conv_non}

A particular instance of a convergent, but not exact, expansion of a heat trace produces itself when the set of poles of the associated zeta function is finite.

\begin{proposition}\label{prop:almost}
Let $H$ and $K$ meet the assumptions of Theorem \ref{thm:zeta2hk_finite}. \\If the set $\PP(\UKH)$ is finite, then there exists $N \in \N^*$ such that, for all $t>0$,
\begin{align}\label{heat_almost}
\hKH = \sum_{k=0}^{N} \rho_k(t) + R_{\infty}(t), \quad \text{ with } R_{\infty} \neq 0.
\end{align}
\end{proposition}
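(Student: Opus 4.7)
The plan is to exploit the fact that with only finitely many poles one can stop the asymptotic expansion after a fixed, finite number of terms, and then to check that what remains is genuinely non-zero.

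First I would fix the value of $N$. Since $\PP(\UKH)$ is finite and the partitioning sequence $(r_k)_{k\in\N}$ strictly increases to $+\infty$, there exists a smallest $N \in \N^*$ such that
\begin{equation*}
-r_N < \min\{\Re z \, \vert \, z \in \PP(\UKH)\}.
\end{equation*}
With this choice the sets $X_k = \{z \in \PP(\UKH) \, \vert \, -r_{k+1} < \Re(z) < -r_k\}$ from Theorem \ref{thm:zeta2hk_finite} satisfy $X_k = \emptyset$ for every $k > N$, hence $\rho_k \equiv 0$ for $k > N$. The asymptotic expansion \eqref{heat_exp_finite} thus truncates to $\hKH \tzero \sum_{k=0}^{N}\rho_k(t)$.

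Next I would upgrade this asymptotic identity to the exact Formula \eqref{heat_almost} by tracing the contour argument of Theorem \ref{thm:zeta2hk_finite}. There, the remainder after $M$ steps was identified with the vertical integral
\begin{equation*}
R_M(t) = \tfrac{1}{2\pi i}\int_{-r_{M+1}-i\infty}^{-r_{M+1}+i\infty}\UKH(s)\,t^{-s}\,ds,
\end{equation*}
which is well defined thanks to the vertical decay estimate \eqref{ukh_vert}. For any $M \geq N$, the integrand $\UKH(s)\,t^{-s}$ is holomorphic on the closed strip $-r_{M+1} \leq \Re(s) \leq -r_{N+1}$, and the horizontal boundary contributions vanish in the limit $\Im(s) \to \pm\infty$ again by \eqref{ukh_vert}. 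Cauchy's theorem therefore gives $R_M(t) = R_N(t)$ for all $M \geq N$, and this common value is precisely $R_\infty(t)$ in the sense of \eqref{Rinf}. Setting $R_\infty(t) \vc R_N(t)$ yields Formula \eqref{heat_almost} valid for all $t > 0$.

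The key step, and the one requiring the most care, is showing $R_\infty \not\equiv 0$. The idea is to compare the large-$t$ behaviour of the two sides of \eqref{heat_almost}. On one hand, $H \in \Tp$ forces $H > 0$ with $H^{-1}$ compact, so $\lambda_0(H) > 0$ and
\begin{equation*}
\abs{\hKH} \leq \norm{K}\,\hH = \Oinf(e^{-t\,\lambda_0(H)/2}),
\end{equation*}
i.e.\ the heat trace decays exponentially as $t \to \infty$. On the other hand, $\sum_{k=0}^{N}\rho_k(t)$ is a \emph{finite} linear combination of functions of the form $t^{-z}\log^{n} t$ with $z \in \PP(\UKH)$ and $0 \leq n \leq d$. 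Such a combination has at most polynomial-logarithmic growth or decay as $t\to\infty$, and the family $\{t^{-z}\log^n t\}_{z,n}$ is linearly independent on $(0,\infty)$. Hence, if $R_\infty \equiv 0$, then $\hKH$ would coincide with this finite combination, and comparing asymptotics at $t \to \infty$ would force all the coefficients $a_{z,n}(K,H)$ to vanish and simultaneously $\hKH \equiv 0$. The latter yields $\Tr(P_n(H)K) = 0$ for every $n$; since the spectral projections $P_n(H)$ span $\H$, this is only possible in the degenerate case $K = 0$, which we exclude. Thus $R_\infty \not\equiv 0$, completing the proof.
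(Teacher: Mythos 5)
Your truncation step and the identification of $R_\infty$ with the tail contour integral are fine and match the paper's route (the exact identity $\hKH = \sum_{k=0}^{N}\rho_k(t) + R_N(t)$ is already available from the telescoping in the proof of Theorem \ref{thm:zeta2hk_finite}, so the extra Cauchy argument is harmless but not needed). The genuine gap is in the non-vanishing step. Your chain ends with ``$\Tr(P_n(H)K)=0$ for every $n$ implies $K=0$'', which is false: $K$ can be nonzero yet purely off-diagonal with respect to the eigenspaces of $H$, or traceless on a degenerate eigenspace, and then $\Tr(P_n(H)K)=0$ for all $n$ while $K\neq 0$. Moreover ``which we exclude'' is not an exclusion made in the proposition; $K=0$ (or, more generally, $\hKH\equiv 0$) is not ruled out by your hypotheses as you use them, so your contradiction does not close.

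What is missing is the ingredient the paper actually uses: since $H\in\Tp$, the function $\zKH$ has a pole at $s=p\geq 0$ (recorded right after \eqref{zetaKH}), and this pole cannot be compensated in $\UKH=\Gamma\,\zKH$ because $\Gamma$ has no zeros. This fact does two jobs at once: it shows $N\in\N^*$ (which your choice of $N$ does not address), and it guarantees that at least one coefficient $a_{z,n}(K,H)$ is nonzero, so that $\sum_{k=0}^{N}\rho_k(t)$ is a genuinely nonzero finite combination of $t^{-z}\log^n t$, decaying at most polynomially as $t\to\infty$, while $\hKH=\Oinf(e^{-t\lambda_0(H)})$; hence $R_\infty\neq 0$ directly, with no contradiction argument needed. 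Your proof can be repaired by stopping at ``all $a_{z,n}(K,H)$ vanish'' and observing that this contradicts the uncancelled pole at $s=p$, rather than trying to pass through $\hKH\equiv 0\Rightarrow K=0$.
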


Note that since $\PP(\Gamma) = - \N$ the hypothesis of Proposition \ref{prop:almost} requires in particular that $\zKH(-k) = 0$ for almost all $k \in \N$.
\begin{proof}
This is a consequence of Theorem \ref{thm:zeta2hk_finite}. If $\PP(\UKH)$ is finite then, for any choice of the partitioning sequence $(r_k)_k$ there exists an $N \in \N^*$ such that $X_k = \emptyset$ for all $k \geq N$, hence $\rho_k = 0$ for $k \geq N$. Observe that $N\neq 0$, since $\zKH$ has at least one pole at $s=p \geq 0$ (cf. p. \pageref{zeta_one_pole}), which cannot be compensated by $\Gamma(s)$, since the function $\Gamma$ has no zeros. \\
Moreover, we cannot have $R_{\infty} = 0$, since $\hKH = \Oinf(e^{-t \lambda_0(H)})$ whereas the sum $\sum_{k=0}^{N} \rho_k(t)$ certainly grows faster than $t^{r_0}$ at infinity.
\hfill$\Box$
\end{proof}

As an illustration let us consider the following example:

\begin{example}\label{ex:odd_sphere_2}
Let $\Dslash$ be the standard Dirac operator on $S^3$ (see Appendix \ref{sec:spheres}). The spectral zeta function associated with $\Dslash^2$ reads
\begin{align*}
\zeta_{\Dslash^2}(s) & = 2 \sum_{n=0}^{\infty} (n+1)(n+2) (n+\tfrac{3}{2})^{-2s} = \sum_{n=0}^{\infty} 2(n+\tfrac{3}{2})^{-2s+2} - \tfrac{1}{2} \sum_{n=0}^{\infty} (n+\tfrac{3}{2})^{-2s} \\
& = 2\zeta(2s-2,3/2) - \tfrac{1}{2} \zeta(2s,3/2),
\end{align*}
where $\zeta(s,a)$\index{_a3zeta_2@$\zeta(s,a)$} for $a \notin - \N$ is the Hurwitz zeta function (see, for instance \cite[Chapter~12]{Apostol}). For any $a$ the latter admits a meromorphic extension to $\CC$ with a single simple pole at $s=1$. Moreover, it can be shown (cf. \cite[Theorem 6]{HeatEZ} and \cite{MatsumotoWeng}) that when $\arg(a) = 0$, $\zeta(s,a)$ is of polynomial growth on the verticals. Hence, Corollary \ref{cor:heat_q} can be applied. We have $$\Rez{s=3/2} \ZZ_{\Dslash^2}(s) = \tfrac{\sqrt{\pi}}{2}\,\text{ and }\,\Rez{s=1/2} \ZZ_{\Dslash^2}(s) = -\tfrac{\sqrt{\pi}}{4}.$$
With $\zeta(s,a+1) = \zeta(s,a) - a^{-s}$, we rewrite $$\zeta_{\Dslash^2}(s) = 2 \zeta(2s-2,1/2) - \tfrac{1}{2} \zeta(2s,1/2).$$ Then, for $k \in \N$, $$\Rez{s= -k} \ZZ_{\Dslash^2}(s) =  \tfrac{(-1)^k}{k!} \, [2 \zeta(-2k-2,1/2) - \tfrac{1}{2} \zeta(-2k,1/2)].
$$
But the properties of the Bernoulli polynomials (cf. page \pageref{eq:Euler Maclaurin}) yield, for any $k \in \N$,
\begin{align*}
\zeta(-2k,1/2) = -\tfrac{B_{2k+1}(1/2)}{2k+1} = \tfrac{1-2^{-2k}}{2k+1} \, B_{2k+1} = 0.
\end{align*}
Hence,
$$ \Tr e^{-t \Dslash^2} \!\!\!\tzero \hspace{-0.1cm}\tfrac{\sqrt{\pi}}{2} t^{-3/2} -\tfrac{\sqrt{\pi}}{4} t^{-1/2}\Longleftrightarrow\Tr e^{-t \Dslash^2} = \tfrac{\sqrt{\pi}}{2} t^{-3/2} -\tfrac{\sqrt{\pi}}{4} t^{-1/2} + \Oz(t^{\infty}),
$$
\vspace{-0.3cm}

\noindent what harmonises with Formula \eqref{SA:S3} for $f(x) = e^{-x^2}$ and $\Lambda = t^{-1/2}$.

As remarked in Example \ref{ex:conv_non}, the asymptotic expansion of the heat trace associated with the square of the standard Dirac operator on any \emph{odd} dimensional sphere is convergent for all $t>0$, but is not exact.
\hfill$\blacksquare$
\end{example}

\begin{remark}
Let us note that Proposition \ref{prop:almost} can be a source of interesting identities for some special functions defined by general Dirichlet series: One checks (cf. \cite{Elizalde93}) that in the context of Example \ref{ex:conv_non} $R_{\infty}(t) = \tfrac{1}{2} \, (\tfrac{\pi}{t})^{1/2} [ \vartheta_3(0; e^{-\pi^2/t}) - 1 ]$ and hence Formula \eqref{heat_almost} yields the Jacobi identity \eqref{Jacobi_id}.
\hfill$\blacksquare$
\end{remark}

\subsection{Exact Expansions}
\label{sec:exact}

Let us now give a sufficient condition for the exactness of the small-$t$ expansion of a heat trace, in terms of the corresponding zeta function.

\begin{theorem}
\label{thm:hk_exact}
Let $H$ and $K$ meet the assumptions of Theorem \ref{thm:zeta2hk} and let the estimate 
\begin{align}\label{exact}
\abs{\UKH(-r_k+iy)}\leq c_k e^{-\epsilon_k\abs{y}}
\end{align}
hold for every $y\in\RR$ and $k\in\N$ with some $c_k, \epsilon_k > 0$. If
\begin{equation}
\label{T}
T \vc \big[\limsup_{k\to\infty}\,(\tfrac{c_k}{\epsilon_k})^{1/r_k}\big]^{-1} > 0 \, ,
\end{equation}
then the series $\sum_{k = 0}^{\infty} ( \rho_k(t) + g_k(t) )$, with $\rho_k$ and $g_k$ given by \eqref{rho_k3} and \eqref{Gk1} respectively, converges to $\hKH$ locally uniformly on $(0,T)$.

If, moreover, $\log k = \oinf(r_k)$ (i.e. $r_k$ grows faster than $\log k$), then the convergence is absolute on $(0,T)$.
\end{theorem}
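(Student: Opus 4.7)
The strategy is to revisit the telescoping identity established in the proof of Theorem~\ref{thm:zeta2hk}, namely
\begin{align*}
\hKH = \sum_{k=0}^{N}\bigl[\rho_k(t) + g_k(t)\bigr] + R_N(t),
\qquad
R_N(t) = \tfrac{1}{2\pi}\int_{-\infty}^{\infty}\UKH(-r_{N+1}+iy)\,t^{\,r_{N+1}-iy}\,dy,
\end{align*}
and show that, under the exponential bound \eqref{exact}, the remainder $R_N(t)$ tends to $0$ locally uniformly on $(0,T)$, so that the announced expansion is in fact a convergent equality. First I would insert \eqref{exact} directly into the integral defining $R_N(t)$ to get
\begin{align*}
|R_N(t)| \;\leq\; \tfrac{t^{\,r_{N+1}}}{2\pi}\int_{-\infty}^{\infty} c_{N+1}\, e^{-\epsilon_{N+1}|y|}\,dy
\;=\; \tfrac{c_{N+1}}{\pi\,\epsilon_{N+1}}\,t^{\,r_{N+1}}
\;=\; \tfrac{1}{\pi}\Bigl[\bigl(\tfrac{c_{N+1}}{\epsilon_{N+1}}\bigr)^{1/r_{N+1}} t\Bigr]^{r_{N+1}}.
\end{align*}

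From the definition \eqref{T} of $T$, whenever $t<T$ one has $\limsup_{k\to\infty}\bigl(c_k/\epsilon_k\bigr)^{1/r_k}\,t<1$; picking $q\in(0,1)$ slightly larger than this limit superior ensures that $\bigl(c_{N+1}/\epsilon_{N+1}\bigr)^{1/r_{N+1}}\,t\leq q$ for all sufficiently large $N$, hence $|R_N(t)|\leq q^{\,r_{N+1}}/\pi$. Since $r_N\nearrow+\infty$, this yields $R_N(t)\to 0$ and therefore the convergence of the series $\sum_{k=0}^\infty[\rho_k(t)+g_k(t)]$ to $\hKH$. For local uniformity on $(0,T)$, I would fix an arbitrary compact interval $[a,b]\subset(0,T)$; because the supremum $\limsup_k(c_k/\epsilon_k)^{1/r_k}\,b<1$ as well, the same $q$ can be chosen uniformly in $t\in[a,b]$, giving $\sup_{t\in[a,b]}|R_N(t)|\leq q^{\,r_{N+1}}/\pi\to 0$.

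For the absolute convergence statement, I would exploit the telescoping relation used in the proof of Theorem~\ref{thm:zeta2hk},
\begin{align*}
\rho_k(t) + g_k(t) \;=\; R_{k-1}(t) - R_k(t),
\end{align*}
which gives $\sum_{k=0}^\infty\bigl(|\rho_k(t)|+|g_k(t)|\bigr)\leq 2\sum_{k=-1}^{\infty}|R_k(t)|$. It then suffices to show that $\sum_k t^{\,r_{k+1}}c_{k+1}/\epsilon_{k+1}$ converges for $t\in(0,T)$. With $q<1$ chosen as above, the bound reduces to showing $\sum_k q^{\,r_k}<\infty$. Under the extra hypothesis $\log k=\oinf(r_k)$, for every $\eta>0$ one has $r_k>\eta^{-1}\log k$ eventually, so $q^{\,r_k}<k^{-\eta\log(1/q)^{-1}\cdot(\log(1/q))}=k^{-\eta}$; choosing $\eta$ large enough makes the series summable by comparison with a convergent $p$-series.

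The main obstacle I anticipate is the absolute convergence argument: the coefficients $\rho_k(t)$ and $g_k(t)$ are themselves (potentially infinite) sums of oscillating terms whose cancellations are reflected in the bound on $R_k$, but not visible term by term. Writing them as the difference $R_{k-1}-R_k$ bypasses this, at the price of producing a slightly weaker estimate; the extra growth condition $\log k=\oinf(r_k)$ is precisely what is needed to upgrade the geometric-type decay $q^{\,r_k}$ into summability of the series indexed by $k$. The remaining book-keeping --- verifying that the integral representation of $R_N$ is indeed valid under \eqref{exact}, which is even stronger than the integrability demanded in hypothesis \textit{ii)} of Theorem~\ref{thm:zeta2hk} --- is routine.
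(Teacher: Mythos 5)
Your proposal is correct and follows essentially the same route as the paper: bound the vertical-line remainder by $\tfrac{c_{k}}{\pi\epsilon_{k}}t^{r_{k}}$ using \eqref{exact}, use the definition \eqref{T} of $T$ to dominate it by $a^{r_k}/\pi$ with a fixed $a<1$ uniformly on compact subsets of $(0,T)$, and obtain absolute convergence from the telescoping identity $\rho_k+g_k=R_{k-1}-R_k$ together with $\log k=\oinf(r_k)$ (the paper sums $\sum_k a^{r_k}$ via the abscissa of a general Dirichlet series, you via comparison with a $p$-series --- the same computation). One small inaccuracy: the inequality $\sum_k\bigl(|\rho_k(t)|+|g_k(t)|\bigr)\leq 2\sum_k|R_k(t)|$ does not follow from the telescoping relation, which only yields $|\rho_k(t)+g_k(t)|\leq|R_{k-1}(t)|+|R_k(t)|$; but since absolute convergence of the expansion means precisely $\sum_k|\rho_k(t)+g_k(t)|<\infty$, this weaker (and correct) bound is all that is needed and your argument goes through unchanged.
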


\begin{proof}
We estimate the remainder $R_k(t)$ as follows
\begin{align*}
\abs{R_k(t)}=\tfrac1{2\pi}\,\vert\int_{-\infty}^\infty \UKH(-r_k+iy) \,t^{r_k-iy}dy\vert
\leq\tfrac1{2\pi}\int_{-\infty}^\infty c_k e^{-\epsilon_k\abs{y}} t^{r_k}dy =\tfrac{1}{\pi} \, \tfrac{c_k }{\epsilon_k} \,t^{r_k}.
\end{align*}
Let $0<T'<T$. For any $t\in(0,T']$ we have $$ \limsup_{k\to\infty} \, t \, \sqrt[r_k]{c_k/\epsilon_k} = t/T \leq T'/T.$$ Hence, for sufficiently large $k$ we have $t\sqrt[r_k]{c_k/\epsilon_k}<a$, where $a\in(T'/T,1)$ is some constant independent of $t$. Then,
\begin{equation}\label{R_k_bound}
\abs{R_k(t)}\leq\tfrac{c_k \, t^{r_k}}{\epsilon_k \pi} < \tfrac{a^{r_k}}\pi \underset{k \uparrow \infty}{\rightarrow} 0,
\end{equation}
so $R_k(t)$ tends to 0 uniformly for $t\in(0,T']$. Since $T'$ can be any number in $(0,T)$, the local uniform convergence is proven.\\
To check the absolute convergence we need to show that $\sum_{k = 0}^{\infty} \abs{ \rho_k(t) + g_k(t) } < \infty$ for $t \in (0,T)$. From the recurrence relation \eqref{F_k relation_N} we obtain that, for any $k \in \N$, $$\abs{ \rho_k(t) + g_k(t) } = \abs{R_{k-1}(t) - R_k(t)} \leq \abs{R_{k-1}(t)} + \abs{R_{k}(t)}.$$ Now, \eqref{R_k_bound} implies
\begin{align*}
\sum_{k = 0}^{\infty} \abs{ \rho_k(t) + g_k(t) } \leq 2\sum_{k = 0}^{\infty} \abs{R_k(t)}  \leq  2\sum_{k = 0}^{\infty} \tfrac{c_k t^{r_k}}{\epsilon_k \pi} < \tfrac{2}{\pi}\sum_{k = 0}^{\infty} a^{r_k},
\end{align*}
with $a \in (T'/T,1)$ for any $T' \in (0,T)$. Therefore, it suffices to show that the last series is convergent for any $a<1$. The latter is a general Dirichlet series in variable $x=-\log a$ with coefficients $a_k=1, b_k=r_k$ for $k\in\N$. By Theorem \ref{thm:abscissa} its abscissa of convergence equals $\limsup_{k\to\infty} (\log k) r_k^{-1}$, which is $0$ by hypothesis. Thus, the series is convergent for $x>0$, i.e. $a<1$.
\hfill$\Box$
\end{proof}

The absolute convergence, on top of the exactness, of a heat trace expansion is necessary to establish an exact large energies expansion of the spectral action (cf. Theorem \ref{thm:f_conv}). Given an exact expansion, the absolute convergence is fairly easy  --- it suffices to verify whether the sequence $(r_k)$ of our choice grows faster than $\log k$. In fact, all of the examples of exact expansions we present in this book are actually absolutely convergent in the same domain. 

Let us also remark that if we have an exact expansion of the heat trace for an open interval $(0,T)$, then $\hKH$ actually provides an analytic continuation of the series $\sum_k \abs{ \rho_k(t) + g_k(t) }$ to the whole half line $(0,\infty)$.

In general, Theorem \ref{thm:hk_exact} provides only a sufficient condition, so Formula \eqref{T} does not necessarily give the maximal range of (absolute) convergence. Nevertheless, the following example shows that the bound \eqref{exact} is often good enough to deduce the actual upper limit.

\begin{example}
\label{ex:S1_exact}
Let $\Dslash$ be the standard Dirac operator on $S^1$ associated with the trivial spin structure (see Appendix \ref{sec:spheres}). As shown in Example \ref{ex:exact}, the associated heat trace can be computed explicitly for any $t>0$ (recall Formula \eqref{heat_S1}) and developed in a Laurent series around $t=0$. The latter converges to $\Tr e^{-t \abs{\Dslash}}$ absolutely for $t \in (0,2\pi)$. Let us now re-derive this result using Theorems \ref{thm:zeta2hk_finite} and \ref{thm:hk_exact}:

The operator $\Dslash$ has a kernel of dimension 1, so Formula \eqref{heat_ker} gives
\begin{align*}
\Tr e^{-t \abs{\Dslash}} = \Tr_{(\bbbone-P_0)\H}\, e^{-t\,\vert\bD\vert} + 1, \quad \text{ for all } t>0.
\end{align*}
We have, $\zeta_{\bD}(s) = 2\sum_{n=1}^{\infty} n^{-s} = 2 \zeta(s)$, which is meromorphic on $\CC$ with a single simple pole at $s=1$ and
\begin{align*}
\Rez{s=1} \,\ZZ_{\bD}(s) = 2, \quad \Rez{s= -k}\, \ZZ_{\bD}(s) = 2 \, \tfrac{(-1)^k}{k!} \, \zeta(-k) = - 2 \, \tfrac{(-1)^k}{k!} \, \tfrac{B_{k+1}}{k+1}, \quad \text{ for } \; k \in \N.
\end{align*}
Since $B_{2k+1} = 0$ for $k \in \N^*$, we actually have $\PP(\ZZ_{\bD}) = \{1, 0, -1, -3, -5, \ldots \}$. \\
As the partitioning sequence we can choose, $r_0 = -3/2, r_1 = -1/2, r_2 = 1/2$ and $r_k = 2(k-2)$, for $k \geq 3$. Then, Formula \eqref{zeta_vert} guarantees that the assumption \eqref{zkh_vert} is met and Corollary \ref{cor:heat_q} yields 
\begin{align*}
\Tr_{(\bbbone-P_0)\H} e^{-t \,\vert \bD\vert} \sim_{t\downarrow 0} 2t^{-1}-1 + 2 \sum_{k=0}^\infty \tfrac{B_{2k+2}}{(2k+2)!}\,t^{2k+1},
\end{align*}
in agreement with Formulae \eqref{heat_S1} and \eqref{coth}.

To check that this expansion is exact we need to find an explicit bound of the form \eqref{exact}. Recall first the Riemann functional equation \cite[Formula (23.2.6)]{Abram}:
\begin{equation*}
\zeta(s)=2^s\pi^{s-1}\sin\left(\tfrac{\pi s}2\right)\Gamma(1-s)\zeta(1-s),
\end{equation*}
which yields $$\ZZ_{\bD}(s)=2\Gamma(s)\zeta(s)=(2 \pi)^{s}\zeta(1-s)\,\big(\sin[\pi(1-s)/2]\big)^{-1}.$$ 
For $s=-r_k+iy$ with $k \geq 3$ the denominator of the above expression equals
\begin{equation*}
\sin[\pi(\tfrac12+k-i\tfrac y2)] = \cos[\pi(k-i\tfrac y2)] = (-1)^k\cosh\left(\tfrac{\pi y}2\right).
\end{equation*}
Thus, for $k \geq 3$ and $y\in\RR$ we have
\begin{align*}
\abs{\ZZ_{\bD}(-r_k+iy)} &= \tfrac{(2 \pi)^{-2k}\,\abs{\zeta(2k+1-iy)}}{\cosh\left(\pi y/2\right)}
\leq 2 (2\pi)^{-2k} \, \zeta(2k+1) \, e^{-\pi\abs{y}/2}.
\end{align*}
Hence, the assumptions of Theorem \ref{thm:hk_exact} are met with $c_k = 2 (2\pi)^{-2k} \zeta(2k+1)$ and $\epsilon_k = \pi/2$. Since $\zeta(x) \to 1$ as $x \to +\infty$, we obtain
\begin{equation*}
T^{-1} = \limsup_{k\to\infty}\, [ 2^{-2k+2} \pi^{-2k-1} \zeta(2k+1) ]^{1/(2k-4)}=\tfrac1{2\pi}.
\end{equation*}
We have recovered the radius of convergence of the Laurent expansion of $\coth (t/2)$. Thus, as $r_k = \Oinf(k)$, we conclude that  this expansion is absolutely convergent. 
\hfill$\blacksquare$
\end{example}

Let us now treat an example of an exact expansion valid for $\,t>0$, i.e. with $T = \infty$.

\begin{example}
\label{ex:Podles_exact}
Let $\DD_q^S$ be the simplified Dirac operator on the standard Podle\'s sphere (cf. Appendix \ref{sec:Podles}) and let us denote $u = \abs{w} q /(1-q^2) > 0$. In Example \ref{ex:Podles_hk_S} we have derived an asymptotic expansion of $\Tr e^{-t \abs{\DqS}}$ and alluded to its exactness. In fact, it is obvious that the series in \eqref{Podles_heat_S_asym} is absolutely and locally uniformly convergent for any $t>0$. But there might a priori be a non-trivial contribution from the contour integral at infinity namely $R_{\infty}$.

To show that this is not the case we will resort to Theorem \ref{thm:hk_exact}. In Example \ref{ex:Podles_hk_S} we chose $r_k = -1/2 + k$ and established an explicit estimate of $\zqS$ on the verticals (cf. \eqref{q_zeta_vert}). In order to proceed, we need a more precise estimate of the Gamma function on the verticals than the one derived in Lemma \ref{lm:Gamma_vert}. The Euler reflection formula \eqref{gamma_reflex} together with inequality \eqref{gamma_lower_bds} and $$\Gamma(x) > (2 \pi)^{1/2} \,x^{x-1/2} e^{-x}\,\text{ for }\,x>0$$ (\cite[p. 253]{WhittakerWatson}) gives
\begin{align}
\abs{\Gamma(-r_k + iy)} & = \tfrac{1}{\abs{\Gamma(k+1/2-iy)}} \, \tfrac{\pi}{\abs{\sin \left[ \pi (k-1/2+iy) \right]}} 
 \leq \tfrac{\pi}{\sqrt{\cosh(\pi y)}} \, \Gamma(k+1/2)^{-1}  \notag\\
 & <  \sqrt{\pi} e^{k+1/2} (k+1/2)^{-k} \, e^{-\pi\abs{y}/2}. \label{gamma_vert}
\end{align}
Thus, the assumptions of Theorem \ref{thm:hk_exact} are met with
\begin{align*}
\epsilon_k = \tfrac{\pi}{2}, \qquad c_k = 4 \sqrt{\pi} e (e u q^{-1})^{k-1/2} (1-q^{1/2-k})^{-2} (k+1/2)^{-k} 
\end{align*}
yielding $$\lim_{k\to\infty} \big[8 \pi^{-1} e \, (e u q^{-1})^{k-1/2} (1-q^{1/2-k})^{-2} (k+1/2)^{-k}  \big]^{1/(k-1/2)} = 0$$ for $0 < q <1$. 
Hence, $T = \infty$ and, with $r_k = \Oinf(k)$, the expansion \eqref{Podles_heat_S_asym} is absolutely convergent and exact for all $t>0$.
\hfill$\blacksquare$
\end{example}

\section{General Asymptotic Expansions}\label{sec:energy}

It turns out that the asymptotic expansion of the heat trace $\hKH$ associated with some operators $K,H$ actually guarantees the existence of an asymptotic expansion for a larger class of spectral functions of the form $\Tr K f(t H)$. This fact is especially pertinent in our quest to unravel a large-energies asymptotic expansion of the spectral action.

Recall that (cf. \eqref{moments_Lap}) a function $f$ on $\RR^+$ belongs to $\Cl_0^p$ iff $f$ is a Laplace transform of a signed measure $\phi$ on $\RR^+$ and $\int_{0}^{\infty} s^{m} d\vert\phi\vert(s) < \infty$ for all $m > -p$.

\begin{theorem}
\label{thm:f_expansion}
Let $H\in \Tp$, $K \in \B(\H)$ and assume that 
\begin{align}\label{hKH_gen_asymp}
\hKH \tzero \sum_{k=0}^{\infty} \rho_k(t), \quad \text{w.r.t. an asymptotic scale } (t^{r_k})_k,
\end{align}
with $\rho_k \in L^{\infty}_{\text{loc}}((0,\infty))$, $\rho_k(t) = \oz (t^{r_k})$ and $\rho_k(t) = \oinf (t^{r_{k+1}})$, for any $k \in \N$.

Then, for any $f=\Lc[\phi] \in \Cl_0^r$ with $r > p$, the operator $K f(t H)$ is trace-class for any $t>0$ and there exists an asymptotic expansion:
\begin{align}
\label{f_asymptotic}
\Tr K f(tH) \tzero \,\sum_{k=0}^{\infty} \psi_k(t), \quad \text{ with } \quad \psi_k(t) = \int_0^{\infty} \rho_k(s\,t)\, d\phi(s) = \oz(t^{r_{k}}).
\end{align}
\end{theorem}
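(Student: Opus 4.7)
The strategy is to exploit the Laplace representation $f = \Lc[\phi]$ in order to reduce the asymptotic analysis of $\Tr K f(tH)$ to that of the heat trace via the identity
\begin{align*}
\Tr K f(tH) = \int_0^\infty \Tr K e^{-st H}\, d\phi(s) = \int_0^\infty \hKH(s t)\, d\phi(s),
\end{align*}
which follows from the functional-calculus formula \eqref{f(H/Lambda)} and the normality of the trace. As a preliminary step, I will establish trace-classness: writing $f = f^+ - f^- \in \Cl^r$, both $f^\pm$ satisfy $f^\pm(x) = \Oinf(x^{-r})$ with $r>p$, so Lemma~\ref{traceclass} applied to $f^\pm(tH)$ yields $f(tH) \in \L^1(\H)$ for every $t>0$, and since $K \in \B(\H)$, $Kf(tH)$ is trace-class.

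Next I insert the asymptotic expansion $\hKH(st) = \sum_{k=0}^N \rho_k(st) + R_N(st)$ under the integral, obtaining
\begin{align*}
\Tr K f(tH) = \sum_{k=0}^N \psi_k(t) + \int_0^\infty R_N(s t)\, d\phi(s), \qquad \psi_k(t) \vc \int_0^\infty \rho_k(st)\, d\phi(s).
\end{align*}
The first nontrivial point is the well-definedness of the $\psi_k$'s. The two-sided hypotheses $\rho_k(x) = \oz(x^{r_k})$ and $\rho_k(x) = \oinf(x^{r_{k+1}})$, combined with $\rho_k \in L^\infty_{\mathrm{loc}}$, provide the uniform pointwise bound $\abs{\rho_k(x)} \leq C_k (x^{r_k} \chi_{(0,1]}(x) + x^{r_{k+1}} \chi_{[1,\infty)}(x))$ for all $x>0$. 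Integrability against $d\vert\phi\vert$ then follows from Proposition~\ref{prop:cl} (eq.~\eqref{moments_Lap}), which ensures $\int_0^\infty s^m \, d\vert\phi\vert(s) < \infty$ for every $m > -r$; here one takes $m = r_k$ and $m = r_{k+1}$, which lie in the admissible range after enlarging $r>p$ if necessary (the free parameter $r$ can always be increased without leaving $\Cl_0^r$ via Proposition~\ref{prop:cl}).

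The order estimate $\psi_k(t) = \oz(t^{r_k})$ then follows by rescaling and dominated convergence: writing
\begin{align*}
t^{-r_k}\psi_k(t) = \int_0^\infty s^{r_k}\,\big[(st)^{-r_k}\rho_k(st)\big]\, d\phi(s),
\end{align*}
the bracketed factor converges to $0$ pointwise in $s$ as $t \downarrow 0$, while the full integrand is dominated for $t \in (0,1]$ by $C(s^{r_k} \chi_{(0,1]}(s) + s^{r_{k+1}}\chi_{[1,\infty)}(s))$, which is $\vert\phi\vert$-integrable as above. The same rescaling applied to the remainder,
\begin{align*}
t^{-r_N} \int_0^\infty R_N(st)\,d\phi(s) = \int_0^\infty s^{r_N}\,\big[(st)^{-r_N}R_N(st)\big]\,d\phi(s),
\end{align*}
yields the sought estimate $\oz(t^{r_N})$ by the same mechanism, once a $\vert\phi\vert$-integrable majorant independent of $t \in (0,1]$ is found for the integrand.

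The main obstacle is precisely the construction of this majorant: near $s=0$ it is immediate because $R_N(st) = \oz((st)^{r_N})$ inherits a $C(st)^{r_N}$ bound; at infinity, however, one must bound $R_N$ on all of $(0,\infty)$ simultaneously. This is handled by splitting the $s$-integral at a fixed $s_0 > 0$ and noting that on $[s_0,\infty)$ one has the two-sided estimate $\abs{R_N(x)} \leq \abs{\hKH(x)} + \sum_{k=0}^N \abs{\rho_k(x)}$, where $\hKH(x) = \Oinf(e^{-\lambda_0(H) x/2})$ is exponentially decaying and each $\rho_k(x) = \oinf(x^{r_{k+1}})$ is polynomially bounded, so the integrand is dominated for $t \in (0,1]$ by $C(s^{r_N}\chi_{(0,s_0]}(s) + s^{r_{N+1}}\chi_{[s_0,\infty)}(s))$, again $\vert\phi\vert$-integrable by \eqref{moments_Lap}. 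Dominated convergence then closes the argument, completing the expansion.
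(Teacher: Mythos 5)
Your proposal follows the paper's own route step by step: the trace-class claim via Lemma~\ref{traceclass}, the identity $\Tr K f(tH)=\int_0^\infty \Tr K e^{-stH}\,d\phi(s)$ from the Laplace representation and normality of the trace, term-by-term insertion of the heat-trace expansion, and the two-sided bounds on $\rho_k$ (small-$t$ from $\oz(t^{r_k})$, large-$t$ from $\oinf(t^{r_{k+1}})$ plus local boundedness) to control $\psi_k$ and the remainder; your dominated-convergence phrasing is just a compact version of the paper's explicit $\epsilon$-$\delta$ split at $s=\delta''t^{-1}$. The one step that does not hold as written is the moment-finiteness argument. You need $\int_0^\infty s^{r_k}\,d\aphi(s)<\infty$, i.e. $r_k>-r$ in view of \eqref{moments_Lap}, and you justify this by ``enlarging $r$'', claiming the given $f$ stays in $\Cl_0^r$ via Proposition~\ref{prop:cl}. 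That is false: the classes are nested the other way ($\Cl^{r'}\subset\Cl^{r}$ for $r'>r$), and Proposition~\ref{prop:cl} only says that a \emph{product} of cut-off functions (for instance $f^n$, which is a different function) lands in a higher class --- it does not upgrade the decay exponent of the fixed $f$; for example $f(x)=(ax+b)^{-r}\in\Cl_0^r$ lies in no $\Cl_0^{r'}$ with $r'>r$. The correct repair, which the paper makes explicitly in the remark preceding its proof, leaves $f$ alone and re-normalizes the asymptotic scale instead: since Proposition~\ref{prop:KH} forces $\rho_0(t)=\Oz(t^{-p-\epsilon})$ for every $\epsilon>0$, one may shift $r_0$ so that $r>-r_0>p>-r_1$, after which every $r_k>-r$ and \eqref{moments_Lap} gives all the required moments.

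A second, smaller gap sits in your majorant for the remainder. On $[s_0,\infty)$ you invoke only the large-argument bounds (exponential decay of the heat trace and $\rho_k(x)=\oinf(x^{r_{k+1}})$), but for small $t$ the product $st$ is small even when $s\ge s_0$, so those bounds do not apply there; you must also use $R_N(x)=\Oz(x^{r_N})$ on bounded sets (from $R_N(x)=\oz(x^{r_N})$ plus local boundedness) together with the elementary inequality $s^{r_N}\le s_0^{\,r_N-r_{N+1}}s^{r_{N+1}}$ for $s\ge s_0$ to get a $t$-independent, $\aphi$-integrable dominating function. This is exactly why the paper's split of the $s$-integral is at the $t$-dependent threshold $\delta'' t^{-1}$ rather than at a fixed $s_0$. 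With these two repairs your argument is sound and coincides in substance with the paper's proof.
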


Let us stress that the choice of the asymptotic scale $(t^{r_k})_k$ is merely a matter of convention --- recall Remark \ref{rem:asymp_exp}. In particular, since Proposition \ref{prop:KH} implies that $\rho_0(t) = \Oz(t^{-p-\epsilon})$ for all $\epsilon >0$, we can shift $r_0$ to lie arbitrarily close to $-p$. Hence, given any $r > p$, we actually have $r > - r_0 > p > -r_1$.

\begin{proof}
Recall first that, on the strength of Lemma \ref{traceclass} the operator $f(t H)$ is trace-class for any $t >0$, and hence $K f(t H)$ is so for any $K \in \B(\H)$.

As pointed out in Section \ref{subsec:Laplace} we have $K f(t H) = \int_0^{\infty} K e^{-s t H} d\phi(s)$ in the strong operator sense. The trace being normal, $\Tr K f(t H) = \int_0^{\infty} \Tr K e^{-s t H} d\phi(s)$ for any $t>0$. By assumption we have, for any $N \in \N$, $\hKH = \sum_{k=0}^{N} \rho_k(t) + R_N(t)$ and hence, $\Tr K f(t H) = \sum_{k=0}^{N} \psi_k(t) + \int_0^{\infty} R_N(st)\, d\phi(s)$. 

Let us first show that $\psi_k(t) = \oz(t^{r_{k}})$ for any $k \in \N$. We have
\begin{align*}
\rho_k(t) = \oz (t^{r_{k}}) & \Leftrightarrow \;  \lim_{t\to 0} \, \,t^{-r_{k}} \abs{\rho_k(t)} = 0 \Leftrightarrow \; \forall \, \epsilon >0 \,\,\, \exists \, \delta > 0 \,\,\, \forall \, t \leq \delta, \,\,\,\,\,\abs{\rho_k(t)} \leq \epsilon t^{r_{k}} \\
& \Leftrightarrow \; \forall \, \epsilon,t >0 \,\,\, \exists \, \delta > 0 \,\,\, \forall \, s \leq \delta t^{-1}, \,\,\, \abs{\rho_k(st)} \leq \epsilon t^{r_{k}}s^{r_{k}}.
\end{align*}
On the other hand, $\rho_k(t) = \oinf(t^{r_{k+1}}) \,\, \Rightarrow \,\, \exists \, \delta' > 0, \,\, \forall \, t \geq \delta', \,\,\abs{\rho_k(t)} \leq t^{r_{k+1}}$.
\\
Since $\rho_k$ is locally bounded on $\RR^+$, we actually have a local uniform bound: For any $\delta''$ with $0< \delta'' \leq \delta'$,
\begin{align*}
\forall \, t \geq \delta'', \quad \abs{\rho_k(t)} \leq M_k(\delta'') \, t^{r_{k+1}}\,\, \text{ with } M_k(\delta'') \vc \text{$\sup$}_{t \in [\delta'',\delta']} \, \,\abs{\rho_k(t)}.
\end{align*}
Hence, for any $t>0$, 
$\forall \, s \geq \delta'' t^{-1}, \,\, \abs{\rho_k(st)} \leq M_k(\delta'') \, t^{r_{k+1}} s^{r_{k+1}}.$

We now pick any $\epsilon > 0$ and $\delta'' = \min \{\delta,\delta'\}$. For any $t >0$ we estimate
\begin{align*}
\abs{\psi_k(t)} & \leq \int_0^{\delta'' t^{-1}} \abs{\rho_k(st)} d\aphi(s) + \int_{\delta'' t^{-1}}^{\infty} \abs{\rho_k(st)} d\aphi(s) \\
& \leq \epsilon t^{r_{k}} \int_0^{\delta'' t^{-1}} s^{r_{k}}  d\aphi(s) + M_k(\delta'') \, t^{r_{k+1}} \int_{\delta'' t^{-1}}^{\infty} s^{r_{k+1}} d\aphi(s) \\
& \leq \epsilon t^{r_{k}} \int_0^{\infty} s^{r_{k}}  d\aphi(s) +  M_k(\delta'') \,t^{r_{k+1}} \int_0^{\infty} s^{r_{k+1}} d\aphi(s).
\end{align*}
Now, recall that $r > - r_0 > p > -r_1 > -r_2 > \dotsc$, as $(r_k)_k$ is strictly increasing. Hence, by Proposition \ref{prop:cl}, with $f \in \CM_0^r$ both integrals $\int_0^{\infty} s^{r_{k}} \, d\aphi(s) \cv c_1$ and $\int_0^{\infty} s^{r_{k+1}} \,d\aphi(s) \cv c_2$ are finite. \\
Hence, we have $t^{-r_k} \abs{\psi_k(t)} \leq \epsilon c_1 + t^{r_{k+1} - r_k} c_2$. Since $\epsilon$ can be taken arbitrarily small we conclude that $\psi_k = \oz(t^{r_k})$ for any $k \in \N$.

It remains to show that $\int_0^{\infty} R_N(st) d\phi(s) = \oz(t^{r_{N}})$ 
for any $N \in \N$. This follows by the same arguments since $R_N(t) = \oz(t^{r_{N}})$ and also $R_N(t) = \oinf(t^{r_{N+1}})$, as $\sum_{k=0}^N \rho_k(t) = \Oinf(\rho_N(t)) = \oinf(t^{r_{N+1}})$, whereas $\hKH = \Oinf(e^{-\lambda_0(H) t})$.
\hfill$\Box$
\end{proof}

\begin{remark}
\label{rem:finite_exp}
Note that if we do not insist on having a complete asymptotic expansion of $\Tr \, K f(tH)$, then we can relax the assumption about the heat trace to the following condition: $\hKH = \sum_{k=0}^{N} \rho_k(t) + \oz(t^{r_N})$, for some $N \in \N$, which implies, for a suitable $f$,
$\Tr \, K f(tH) = \sum_{k=0}^{N} \psi_k(t) + \oz(t^{r_N})$.
\hfill$\blacksquare$
\end{remark} 

\begin{example}
As an illustration of Theorem \ref{thm:f_expansion} we consider $H=\Dslash^2$ and $K = \bbbone$, with $\Dslash$ being the standard Dirac operator on $S^2$ (see Appendix \ref{sec:spheres}). We have (cf. Example \ref{ex:even_sphere_2}), 
$$\Tr e^{-t \,\Dslash^2} \tzero 2\,t^{-1} - 2 \sum_{k=0}^\infty \tfrac{(-1)^k}{k!}\tfrac{B_{2k+2}}{k+1}\,t^k.$$
Following Example \ref{ex:(ax+b)-r} let us take $f(x) = (ax+b)^{-r} \in \Cl_0^r$, with some $a,b,r > 0$. Since $\Dslash^2\in \Tp$ with $p=1$ we can apply Theorem \ref{thm:f_expansion} when $r>1$. From Formula \eqref{f_asymptotic} we obtain
\begin{align*}
\psi_0(t) = 2 \tfrac{b^{1-r}}{a(r-1)}\,t^{-1} , && \psi_{k+1}(t) = -2  \tfrac{(-1)^k}{k!}\tfrac{B_{2k+2}}{k+1} \, \tfrac{a^k b^{-k-r} \Gamma (k+r)}{\Gamma (r)}\,t^{k}, \; \text{ for } k \in \N.
\end{align*}
Hence, we have
\begin{align*}
\Tr f(t \Dslash^2) \tzero 2b^{-r} \big[ \tfrac{1}{r-1} (a/b)^{-1}\,t^{-1} - \sum_{k=0}^{\infty} (-1)^k \tfrac{B_{2k+2}}{k+1} \, \tbinom{r+k-1}{k} \, (a/b)^k\,t^k \big].
\tag*{$\blacksquare$}
\end{align*}
\end{example}

Having established the existence theorem for an asymptotic expansion of \linebreak$\Tr K f(tH)$ we can come back to the issue raised at the end of Section \ref{subsec:Laplace}: Why we cannot allow $f$ to be the Laplace transform of a distribution in $\Sclpp$ rather than of a signed measure?

\begin{remark}\label{rem:Laplace_dist}
Observe that one could derive an analogue of the operatorial Formula \eqref{f(H/Lambda)} by regarding $e^{-s P/\Lambda}$ as an element of the space of operator valued test functions $\K(\RR,\B(X,\H))$ (recall Section \ref{sec: On the asymptotics of distributions}). Along the same line, one could prove the normality of the trace functional (cf. \eqref{Tr f(H/Lambda)}), i.e. $$\Tr f(H/\Lambda) = \<\Lc^{-1}[f],\Tr e^{-s H/\Lambda} \>$$ for $\Lc^{-1}[f] \in \K' \cap \Sclpp$, using the convergence of the sequence of partial sums in $\K$ and invoking the weak-$^*$ topology of $\K' \cap \Sclpp$.\\
On the other hand, the topology of $\K'$ obliges us to control the derivatives of the test functions, on which $\Lc^{-1}[f]$ act. But we do not, in general, have such a control on the remainder of the asymptotic series \eqref{hKH_gen_asymp} even in the standard case of $\rho_k(t) = c_{k} \, t^{-r_k}$ (cf. Remark \ref{rem:asympt_diff}).
\hfill$\blacksquare$
\end{remark}

\medskip

Let us now turn to the case of convergent expansions.

\begin{theorem}
\label{thm:f_conv}
Let $H\in \Tp$ and let $K \in \B(\H)$. Assume that for $t \in (0,T)$ with some $T \in (0,+\infty]$,
\begin{align}
\label{assum:conv}
\hKH = \sum_{k=0}^{\infty} \rho_k(t) + R_{\infty}(t) \;\; \text{ and } \;\; \sum_{k=0}^{\infty} \abs{\rho_k(t)} < \infty
\end{align}
with $\rho_k \in L^{\infty}_{\text{loc}}((0,T))$ and $\rho_k(t) = \oz (t^{r_k})$ for any $k \in \N$, and $R_{\infty}(t) = \Oz(t^{\infty})$.

Then, for any $f \in \Cl_c^r$ with $r > -r_0 > p$, the series $\sum_{k=0}^{\infty} \psi_k(t)$, with the functions $$\psi_k(t) \vc \int_0^{\infty} \rho_k(s\,t) \,d\phi(s)$$ is absolutely convergent on the interval $(0,T/N_f)$ with $$N_f \vc \inf \{ N \, \vert \, \supp \, \Lc^{-1}[f] \subset (0,N) \}.$$
 Moreover, for $t \in (0,T/N_f)$, 
\begin{align*}
\Tr K f(tH) = \,\sum_{k=0}^{\infty} \psi_k(t) + R^f_{\infty}(t),\quad \text{where} \quad R^f_{\infty}(t) \vc \int_0^{\infty} R_{\infty}(s\,t) \,d\phi(s) = \Oz(t^{\infty}).
\end{align*}
\end{theorem}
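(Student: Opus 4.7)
The plan is to reduce the statement to a pointwise substitution of the exact expansion \eqref{assum:conv} inside the Laplace representation \eqref{Tr f(H/Lambda)}, followed by a Fubini-type interchange of sum and integral. First, Lemma \ref{traceclass} ensures that $Kf(tH)$ is trace-class for every $t>0$, because $f\in\Cl_c^r\subset \Cl_0^r$ with $r>p$. Since $f=\Lc[\phi]$ with $\supp\phi\subset[0,N_f]$, the normality of the trace then gives
\begin{align*}
\Tr Kf(tH)=\int_0^{N_f}\hKH(st)\,d\phi(s).
\end{align*}
For $t\in(0,T/N_f)$ and every $s\in[0,N_f]$ one has $st\in[0,T)$, so the exact expansion $\hKH(st)=\sum_{k\geq 0}\rho_k(st)+R_\infty(st)$ holds pointwise in $s$. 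Substituting it under the integral and (tentatively) swapping the sum with the integral produces the desired splitting $\Tr Kf(tH)=\sum_k\psi_k(t)+R^f_\infty(t)$.

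The remainder $R^f_\infty(t)$ is easy to control: since $R_\infty(u)=\Oz(u^\infty)$, for every $M\in\N$ there exist $C_M,\delta_M>0$ with $|R_\infty(u)|\leq C_M u^M$ on $(0,\delta_M)$, and for $t<\delta_M/N_f$ every $s\in[0,N_f]$ satisfies $st<\delta_M$, so
\begin{align*}
|R^f_\infty(t)|\leq C_M t^M\int_0^{N_f}s^M\,d\aphi(s)<\infty,
\end{align*}
the moment being finite because $\supp\phi$ is compact. Hence $R^f_\infty(t)=\Oz(t^M)$ for every $M$, i.e.\ $\Oz(t^\infty)$.

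The hard part is to justify the interchange
\begin{align*}
\int_0^{N_f}\sum_{k=0}^\infty\rho_k(st)\,d\phi(s)=\sum_{k=0}^\infty\psi_k(t),
\end{align*}
and to simultaneously establish the absolute convergence of the right-hand side. By Tonelli applied to the counting measure on $\N$ and $\aphi$, both statements reduce to the finiteness, for $t\in(0,T/N_f)$, of
\begin{align*}
\int_0^{N_f}g(st)\,d\aphi(s),\qquad g(u)\vc\sum_{k=0}^\infty|\rho_k(u)|,
\end{align*}
where $g$ is pointwise finite on $(0,T)$ by hypothesis. This is where the main obstacle lies, since $g(st)$ may blow up as $s\to 0$ (typically $r_0<0$). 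The ingredients needed to handle it are: each $\rho_k$ is locally bounded on $(0,T)$, which controls $g(st)$ away from $s=0$; and the pointwise bounds $\rho_k(u)=\oz(u^{r_k})$ combined with the finite moment $\int_0^{N_f}s^{r_0}\,d\aphi(s)<\infty$ granted by Proposition \ref{prop:cl} and the hypothesis $r>-r_0$ provide a power-type dominator integrable against $\aphi$ near $s=0$. Once $g(st)$ is shown to be $\aphi$-integrable on $[0,N_f]$, dominated convergence applied to the partial sums $\sum_{k=0}^N\rho_k(st)$ (which are dominated by $g(st)$ and converge pointwise) confirms the interchange, the absolute convergence of $\sum_k\psi_k(t)$ is immediate, and the conclusion $\Tr Kf(tH)=\sum_k\psi_k(t)+R^f_\infty(t)$ on $(0,T/N_f)$ follows.
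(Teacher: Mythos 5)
Your skeleton is the paper's own: trace-class via Lemma \ref{traceclass}, the Laplace representation \eqref{Tr f(H/Lambda)} cut down to $\int_0^{N_f}$ by the compact support of $\phi$, substitution of the exact expansion \eqref{assum:conv}, an interchange of $\sum_k$ with $\int d\phi$, and a remainder estimate; in particular your treatment of $R^f_\infty$ is precisely the paper's argument (the paper splits the $s$-integral at $\delta t^{-1}$ and notes the tail is empty once $t<\delta N_f^{-1}$, which is your observation that $st<\delta_M$ on all of $\supp\phi$), and your use of Proposition \ref{prop:cl} for the moments is correct.

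The gap sits in your justification of the interchange. You reduce it, correctly, to the $\aphi$-integrability of $s\mapsto g(st)$ with $g(u)\vc\sum_{k}\abs{\rho_k(u)}$, but the ingredients you list do not deliver it. The bounds $\rho_k(u)=\oz(u^{r_k})$ are little-o statements whose thresholds depend on $k$: for each $k$ and $\epsilon$ you only get $\abs{\rho_k(u)}\leq\epsilon\,u^{r_k}$ for $u$ below some $k$-dependent $\delta_k$, and since $\inf_k\delta_k$ may be $0$, these bounds cannot be summed into any power-type estimate $g(u)\leq C\,u^{r_0}$ on a fixed neighbourhood of $0$; the hypotheses contain no uniform-in-$k$ control whatsoever (only pointwise absolute convergence), so the announced "power-type dominator integrable against $\aphi$ near $s=0$" is not established — this is the step that would fail. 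Likewise, local boundedness of each individual $\rho_k$ does not give local boundedness, nor $\aphi$-integrability, of the infinite sum away from $s=0$. The paper does not attempt to manufacture such a bound: it takes the dominating function to be $\sum_k\abs{\rho_k(st)}$ itself, i.e. it reads hypothesis \eqref{assum:conv} as supplying the domination of the partial sums and applies dominated convergence directly, the absolute convergence of $\sum_k\psi_k(t)$ coming from the same quantity (admittedly the paper is terse about the $\aphi$-integrability of that dominator, so you have put your finger on a real subtlety — but your proposed resolution via the termwise $\oz(u^{r_k})$ bounds is not a valid one). To close your Tonelli reduction you must either lean on the assumed absolute convergence as the paper does, or supply a genuinely uniform (in $k$) bound on $g$; the route through the individual little-o estimates does not work.
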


\begin{proof}
On the strength of Formula \eqref{Tr f(H/Lambda)} and by assumption \eqref{assum:conv} we have
\begin{align*}
\Tr K f(t H) = \int_0^{\infty} \Tr K e^{-st H} \, d\phi(s) = \int_0^{N_f} \sum_{k=0}^{\infty} \rho_k(st) \,  d\phi(s)  + \int_0^{N_f} R_{\infty}(st)\, d \phi(s).
\end{align*}
Observe first that the sequence of partial sums is uniformly bounded, i.e. for any $N \in \N$ and any $t \in (0,T/N_f)$, $\big\vert \sum_{k=0}^{N} \rho_{k}(st) \big\vert \leq  \sum_{k=0}^{\infty} \abs{ \rho_{k}(st) } < \infty$. Via the Lebesgue dominated convergence theorem, we obtain $\Tr K f(tH) = \,\sum_{k=0}^{\infty} \psi_k(t) +  \displaystyle{R^f_{\infty}}(t)$ and it remains to show that $\displaystyle{R^f_{\infty}}(t) = \Oz(t^{\infty})$. To this end, let us note that
\begin{align*}
\forall \, k>0 \quad R_{\infty}(t) =\Oz (t^{k})   & \Leftrightarrow \;  \forall \, k >0 \quad\, \exists \, M, \delta > 0 \,\,\, \forall \, t \leq \delta, \,\!\!\!\qquad\abs{R_{\infty}(t)} \leq M t^{k} \\
& \Leftrightarrow \; \forall \, k,\, t >0 \,\,\, \exists \, M, \delta > 0 \,\,\, \forall \, s \leq \delta t^{-1}, \,\,\, \abs{R_{\infty}(st)} \leq M t^{k}s^{k}.
\end{align*}
Then, for any $t \in (0,N_f)$ and any $k >0$, we have
\begin{align*}
\abs{R^f_{\infty}(t)} &= \abs{ \int_0^{\infty} R_{\infty}(s\,t) \,d\phi(s) }  \leq \int_{0}^{\delta t^{-1}} \abs{R_{\infty}(st)} \, d \aphi(s) + \int_{\delta t^{-1}}^{\infty} \abs{R_{\infty}(st)} \, d \aphi(s) \\
&\leq M t^k \int_0^{\infty} s^k \, d\aphi(s) + \int_{\delta t^{-1}}^{\infty} \abs{R_{\infty}(st)} \, d \aphi(s).
\end{align*}
As $f \in \Cl_c$, we have $\int_0^{\infty} s^k \, d\aphi(s) < \infty$ for any $k>0$ and $\int_{\delta t^{-1}}^{\infty} \abs{R_{\infty}(st)} \, d \aphi(s) = 0$ for $t < \delta N_f^{-1}$. Since $k$ can be arbitrarily large we conclude that $\displaystyle{R^f_{\infty}}(t) = \Oz(t^{\infty})$.
\hfill$\Box$
\end{proof}

In general, the compactness of the support of $\Lc^{-1}[f]$ is necessary even if the expansion of the heat trace at hand is actually convergent for all $t>0$. This is because if one would like to allow for $f \in \Cl_0^p$, rather than $f \in \Cl_c^p$, one would need to control the behaviour of $R_{\infty}(t)$ as $t \to \infty$. On the other hand, such a control is (trivially) provided if the expansion of $\hKH$ is exact for all $t>0$.

\begin{corollary}
\label{cor:f_exact}
Let $K$ and $H$ meet the assumptions of Theorem \ref{thm:f_conv} with $T = +\infty$ and $R_{\infty} = 0$.
Then, for any $f \in \Cl_0^r$ with $r > -r_0 > p$, and any $t>0$ we have 
\begin{align*}
\Tr K f(tH) = \,\sum_{k=0}^{\infty} \psi_k(t),\quad \text{with} \quad \psi_k(t) \vc \int_0^{\infty} \rho_k(s\,t) \,d\phi(s).
\end{align*}
\end{corollary}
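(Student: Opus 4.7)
The natural plan is to rerun the argument of Theorem \ref{thm:f_conv}, freed from the compactness of $\supp\phi$. First, since $f \in \Cl_0^r$ with $r > p$, Lemma \ref{traceclass} gives $Kf(tH) \in \L^1(\H)$ for every $t>0$, and normality of the trace combined with Formula \eqref{Tr f(H/Lambda)} yields
$$\Tr Kf(tH) = \int_0^{\infty} \Tr K e^{-st H}\,d\phi(s).$$
The hypothesis ($T = +\infty$, $R_\infty = 0$) guarantees the exact, absolutely convergent expansion $\Tr K e^{-uH} = \sum_{k=0}^{\infty}\rho_k(u)$ for every $u > 0$. Substituting, the corollary reduces to the Fubini-type interchange
$$\int_0^{\infty} \sum_{k=0}^{\infty} \rho_k(st)\,d\phi(s) \;\stackrel{?}{=}\; \sum_{k=0}^{\infty} \int_0^{\infty} \rho_k(st)\,d\phi(s) = \sum_{k=0}^{\infty} \psi_k(t).$$

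To justify the interchange I would approximate $f$ from within $\Cl_c^r$. Set $\phi_N \vc \phi|_{[0,N]}$ and $f_N \vc \Lc[\phi_N]$. Writing $\phi = \phi^+ - \phi^-$ one has $f_N^{\pm}(x) = \int_0^N e^{-sx}\,d\phi^{\pm}(s) \leq f^{\pm}(x) = \Oinf(x^{-r})$, so $f_N \in \Cl_c^r$. Theorem \ref{thm:f_conv} then applies with $T/N_{f_N} = +\infty$ and delivers, for every $t>0$,
$$\Tr K f_N(tH) = \sum_{k=0}^{\infty} \psi_{k,N}(t), \qquad \psi_{k,N}(t) \vc \int_0^N \rho_k(st)\,d\phi(s),$$
with absolute convergence. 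On the operator side, the trace-norm bound
$$\Vert f_N(tH) - f(tH)\Vert_1 \;\leq\; \int_N^{\infty} \Tr e^{-stH}\,d\vert\phi\vert(s)\;\underset{N\to\infty}{\longrightarrow}\; 0,$$
justified because $\vert f\vert = \Lc[\vert\phi\vert] \in \Cl_0^r$ makes $\int_0^{\infty} \Tr e^{-stH}\,d\vert\phi\vert(s) = \Tr\vert f\vert(tH)$ finite via Lemma \ref{traceclass}, yields $\Tr Kf_N(tH) \to \Tr Kf(tH)$.

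The remaining move is to push the limit $N\to\infty$ inside the series on the right, i.e. $\sum_k \psi_{k,N}(t) \to \sum_k \psi_k(t)$. I would invoke Tannery's theorem: for each fixed $k$, dominated convergence against the signed measure $\phi$ gives $\psi_{k,N}(t) \to \psi_k(t)$; a uniform-in-$N$ bound $\vert\psi_{k,N}(t)\vert \leq g_k(t) \vc \int_0^{\infty} \vert\rho_k(st)\vert\,d\vert\phi\vert(s)$ then reduces matters to the summability $\sum_k g_k(t) < \infty$, which by Tonelli is equivalent to $\int_0^{\infty} \Sigma(st)\,d\vert\phi\vert(s) < \infty$, where $\Sigma(u) \vc \sum_k \vert\rho_k(u)\vert$.

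This last summability is where I expect the main obstacle to sit. The hypothesis only gives $\Sigma(u) < \infty$ pointwise, not any growth control, and although $\Tr K e^{-uH}$ decays exponentially at infinity, cancellations among the $\rho_k$'s may well leave $\Sigma(u)$ arbitrarily large. The behaviour as $s \downarrow 0$ is benign --- combining $\rho_k(u) = \oz(u^{r_k})$ with the moment bound $\int_0^{\infty} s^{-\beta}\,d\vert\phi\vert(s) < \infty$ for $\beta < r$ (guaranteed by $f\in\Cl_0^r$ via \eqref{moments_Lap}) controls $g_k$ near the origin --- but the tail $s \to \infty$ is delicate. In the applications of interest (the paradigmatic one being the Podle\'s sphere of Example \ref{ex:Podles_exact}) each $\rho_k$ is a finite combination of terms $\log^n u \cdot u^{-z}$ and is thus of at most polynomial growth at infinity; summability of $\sum_k g_k(t)$ then follows from the finiteness of all moments of $\vert\phi\vert$ together with the quantitative estimates on the residues $a_{z,n}$ that underpin the exactness of the heat-trace expansion in the first place. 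Once $\sum_k g_k(t)<\infty$ is secured, Tannery closes the argument and the two limits coincide with $\Tr K f(tH)$.
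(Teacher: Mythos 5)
Your reduction of the corollary to the sum--integral interchange is the right first move, but the proposal does not prove that interchange: the decisive summability $\sum_k g_k(t)=\sum_k\int_0^\infty\vert\rho_k(st)\vert\,d\vert\phi\vert(s)<\infty$, which your Tannery argument requires, is left open, and you close it only ``in the applications of interest'' by importing polynomial growth of the $\rho_k$'s and quantitative residue estimates. Neither of these is contained in the hypotheses: Corollary \ref{cor:f_exact} is asserted for \emph{every} $K,H$ satisfying \eqref{assum:conv} with $T=+\infty$, $R_\infty=0$ and every $f\in\Cl_0^r$, so an argument that works only for Podle\'s-type expansions is a genuine gap, not a finishing detail. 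There is also a smaller flaw along the way: $f_N=\Lc[\phi\vert_{[0,N]}]$ need not be nonnegative (truncating a signed measure can destroy positivity of the Laplace transform), so $f_N$ need not lie in $\Cl_c^r$ and Theorem \ref{thm:f_conv} cannot be invoked for it verbatim; likewise $\vert f\vert\neq\Lc[\vert\phi\vert]$ in general --- one has $\Lc[\vert\phi\vert]=f^++f^-\geq\vert f\vert$, which is what your trace-norm bound actually uses.

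The paper's own proof is a two-line reduction that avoids your detour entirely: since $R_\infty\equiv 0$, the remainder $R^f_\infty$ vanishes identically, and the compactness of $\supp\Lc^{-1}[f]$ entered the proof of Theorem \ref{thm:f_conv} only through the estimate of that remainder. The first part of that proof --- writing $\Tr Kf(tH)=\int_0^\infty\Tr Ke^{-stH}\,d\phi(s)$ and interchanging with the series by dominated convergence, the partial sums being dominated by $\sum_k\vert\rho_k(st)\vert$, finite for all $s,t>0$ by the exactness/absolute-convergence hypothesis --- is then declared to go through unchanged with $N_f=+\infty$. In other words, the paper takes the everywhere absolute convergence of the heat-trace expansion as the licence for the interchange, rather than any growth control on the individual $\rho_k$'s; your truncation-plus-Tannery scheme demands strictly more (a quantitative tail control in $s$ of $\sum_k\vert\rho_k\vert$ against $\vert\phi\vert$), and that extra demand is precisely what you could not supply from the stated assumptions.
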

\begin{proof}
Obviously, $\displaystyle{R_{\infty}^f} = 0$. Hence, it suffices to observe that in the first part of the proof of Theorem \ref{thm:f_conv} we can safely allow for $N_f = +\infty$, if the series $\sum_{k} \rho_k(st)$ is absolutely convergent for any $s,t>0$.
\hfill$\Box$
\end{proof}

On the other hand, we insist that the full force of condition \eqref{moments_Lap} is necessary for Theorem \ref{thm:f_conv} to hold even if the convergent expansion of $\hKH$ has a finite number of terms (recall Section \ref{sec:conv_non}). Let us illustrate this fact:

\begin{example}
\label{ex:SA_almost_non}
In Example \ref{ex:conv_non} we have for any $t>0$, $$\Tr e^{-t\, \Dslash^2} = (\tfrac{\pi}{t})^{1/2} + R_{\infty}(t).$$
Let us try to apply Theorem \ref{thm:f_conv} with $f(x) = e^{-\sqrt{x}}$, which is a completely monotone function, but $f \notin \Cl_0$ (recall Example \ref{ex:exp_sqrt}). Since,
\begin{align*}
\psi_0(t) = \tfrac{1}{2 \sqrt{\pi } } \, t^{-1/2} \, \int_0^{\infty} s^{-1/2} \, s^{-3/2}e^{-1/4 s} \, ds = 2 t^{-1/2},
\end{align*}
we would get $\Tr f(t^2 \Dslash^2) = \Tr e^{-t \abs{\Dslash}} = 2 t^{-1} + \displaystyle{R^f_{\infty}}(t)$. \\
A comparison with Example \ref{ex:exact} gives $\displaystyle{R^f_{\infty}}(t) = t/6 + \Oz(t^3)$. Hence, whereas $R_{\infty}(t) = \Oz(t^{\infty})$, but $\displaystyle{R^f_{\infty}}(t) \neq \Oz(t^{\infty})$.
\hfill$\blacksquare$
\end{example}

An important question is whether one can tailor a cut-off function such that the asymptotic expansion of $\hKH$ turns to a convergent one of $\Tr K f(tH)$. When $\hKH \sim_{t\downarrow 0} \sum_{k=0}^{\infty} a_k t^k$, as happens always in the context of classical differential operators (cf. Example \ref{ex:Gilkey_diff}), one could seek an $f$ with null Taylor expansion at 0 --- as recognised in \cite{ConnesUncanny}. Unfortunately, the natural candidate -- a bump function (cf. \cite[Fig. 1]{ConnesUncanny})  is not a Laplace transform as explained in Remark \ref{rem:Gauss_Laplace}. It is also clear from Proposition \ref{prop:moments} that $f$ cannot be completely monotone. Nevertheless, one can find functions with a null Taylor expansion at 0, which are in $\Cl_0^p$ for some $p$.

\begin{example}
\label{ex:SA_null}
Let $\phi(s) = e^{-s^{1/4}}  \sin (s^{1/4})$ for $s>0$. This non-positive function has all moments vanishing, i.e.
$\int_0^{\infty} s^n \phi(s) ds = 0,\,\forall n \in \N$ (cf. \cite[Example 3.15]{Romano1986}). Moreover, its absolute moments are finite: 
$$\int_0^{\infty} s^n \aphi(s) \,ds \leq \int_0^{\infty} s^n \,e^{-s^{1/4}}\,ds = 4 (4n+3)!$$
The Laplace transform of $f=\Lc[\phi]$ can be obtained with the help of Mathematica yielding an analytic, though rather uninviting, formula
\begin{align*}
f(x) = \tfrac{\Gamma( 5/4)}{x^{5/4}} \, _0F_2\left(\tfrac{1}{2},\tfrac{3}{4};-\tfrac{1}{64
   x}\right)
   -\tfrac{\sqrt{\pi }}{2 x^{3/2}} \, _0F_2\left(\tfrac{3}{4},\tfrac{5}{4};-\tfrac{1}{64 x}\right) - 
   \tfrac{\Gamma (-5/4)}{16 x^{7/4}} \, _0F_2\left(\tfrac{5}{4},\tfrac{3}{2};-\tfrac{1}{64 x}\right),
\end{align*}
with the hypergeometric function $_0F_2(a,b;z) \vc \sum_{k = 0}^{\infty} \tfrac{\Gamma(a) \Gamma(b)}{\Gamma(a+k) \Gamma(b+k)} \,\tfrac{z^k}{k!}$.\\
One can check that $f(x) > 0$ for $x>0$ and, since $\lim_{x \to 0^+} {}_0F_2(b_1,b_2;x) = 1$, we get $f(x) = \Oinf(x^{-5/4})$. With the help of Proposition \ref{prop:cl}, one obtains a cut-off function $f^n$ for some $n \in \N^*$, which has null Taylor expansion at 0.
\hfill$\blacksquare$
\end{example}

Let us emphasise that the cut-off function exemplified above would do the trick of turning an asymptotic expansion to a convergent one only if the expansion of $\hKH$ has a very specific form: $\sum_{k=0}^{\infty} a_k t^k$. In general, the expansion \eqref{f_asymptotic} would involve the full shape of the cut-off function, via the integrals $\int_0^{\infty} s^k \log^n(s) \,d \phi(s)$.

\section{Asymptotic Expansion of the Spectral Action}\label{sec:exp_SA}

Equipped with the general results presented in the preceding section we are at a position to formulate sufficient conditions for the existence of a large-energies asymptotic expansion of the spectral action $\SA$ associated with a given spectral triple.

Recall first that the cut-off function $f \in \Cl_0$ is smooth at 0 (Proposition \ref{prop:moments}). In particular, $f(0) < \infty$ and if the operator $T f(t \awD)$ is trace-class then so is $T f(t \abD)$. Moreover, from the spectral decomposition $\DD=0\,P_0 +\sum_{\lambda_n\neq 0} \lambda_n P_n$, we immediately get $$f(\abD)=f(0)P_0 +\sum_{\lambda_n\neq 0} f(\abs{\lambda_n}) P_n, \qquad f(\awD)=f(1)P_0 +\sum_{\lambda_n\neq 0} f(\abs{\lambda_n}) P_n,$$ and
\begin{align}
\label{SA_ker}
\Tr \, T f(\abD/\Lambda) & = \Tr \, T f(\awD/\Lambda) + \big(f(0) - f(\Lambda^{-1})\big) \, \Tr TP_0,
\\ & = \Tr \, T f(\awD/\Lambda) + \Oinf(\Lambda^{-1}). \notag
\end{align}

\begin{remark}
From this formula we see that the kernel of $\DD$ becomes irrelevant in the physical action at large energies. It also explains why we have not defined the spectral action \eqref{SA} as $\Tr \, T f(\awD/\Lambda)$ despite the simplification of working with an invertible operator. But the kernel pops up if we choose $\bar{D}$ instead of $D$:
\begin{align*}
\Tr \, T f(\abD/\Lambda) - \Tr \, T f(\bar{D}/\Lambda) = f(0)\, \Tr TP_0= \OO_\infty(\Lambda^0).
\tag*{$\blacksquare$}
\end{align*}
\end{remark}

\begin{theorem}\label{thm:ST_SA}
Let $\ahd$ be a $p$-dimensional spectral triple and $T \in \B(\H)$. Assume that there exists $d \in \N$, a sequence $(r_k)_{k \in \N} \subset \RR$ strictly increasing to $+ \infty$ and a discrete set $X \subset \CC$ without accumulation points, such that
\begin{align}
\label{ST_hk_exp}
\hTD \tzero \sum_{k=0}^{\infty} \rho_k(t),\quad \text{ with } \quad \rho_k(t) =  \sum_{z \in X_k} \,\big[\sum_{n=0}^{d} a_{z,n}(T,\abs{D}) \log^n t\, \big]\,t^{-z}
\end{align}
where $X_k \vc \{z \in X \, \vert \, - r_{k+1} < \Re(z) < - r_{k} \}$, $X = \sqcup_{k} X_k$ and the series defining $\rho_k(t)$ is absolutely convergent for any $t>0$ and any $k \in \N$. 

\medskip
\noindent Then,

i) \label{ST_SA:coef} The function $\zTD$ admits a meromorphic extension to the whole complex plane with the poles of order at most $d+1$ and $\PP(\ZZ_{T,D}) \subset X$. 

Moreover, for any $z \in X$ and $n \in \{0,1,\ldots,d\}$,
\begin{align}
\label{coef_a}
a_{z,n}(T,\abs{D}) = \tfrac{(-1)^n}{n!}\,\, \sum_{\ell = n}^{d+1} \,\Gamma_{\ell-n-1}(z)  \, \ncintd{\ell} T \awD^{-z},
\end{align}
with $\Gamma_j(z)$ denoting the $j$-th coefficient of the Laurent expansion around $z$ of $\Gamma$, i.e. $\Gamma_j(z) \vc \Rez{s=z} (s-z)^{-j-1} \Gamma(s)$, for $j \in \N-1$.

 ii) For any $f=\Lc[\phi] \in \Cl_0^r$ with $r > p$, 
\begin{align}
\label{ST_SA:SA} 
 \Tr \, T f(\awD/\Lambda) \Linf \sum_{k=0}^{\infty} \psi_k(\Lambda), \quad \text{ w.r.t. the scale } (\Lambda^{-r_k})_k,
 \end{align}
with
\begin{align}
\psi_k(\Lambda) &= \sum_{z \in X_k} \Lambda^z \, \sum_{n=0}^d (-1)^n \log^n \Lambda \, \sum_{m = n}^d \, \tbinom{m}{m-n}\, a_{z,m}(T,\abs{D}) \, f_{z,m-n} \, , \label{ST_SA:Y}\\ 
f_{z,n} &\vc \int_{0}^{\infty} s^{-z} \log^n(s) \,d\phi(s). \label{ST_SA:f}\index{_zF0z@$f_{z,n}$}
\end{align}
\end{theorem}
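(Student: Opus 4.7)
The plan is to reduce both parts of the theorem to the two cornerstone results of this chapter: Theorem \ref{thm:hk2zeta} for part (i), and Theorem \ref{thm:f_expansion} for part (ii). What remains is bookkeeping — unpacking the residues into noncommutative integrals, and integrating the explicit form of $\rho_k$ against $d\phi$.

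For part (i), I would apply Theorem \ref{thm:hk2zeta} verbatim with $H=\abD$, $K=T$. The hypothesis on $\hTD$ is precisely \eqref{hk_general_exp}, so $\zTD$ extends meromorphically to $\CC$ with $\PP(\ZZ_{T,D}) \subset X$, poles of order at most $d+1$, and
\[
\Rez{s=z}(s-z)^n \ZZ_{T,D}(s) = (-1)^n n!\, a_{z,n}(T,\abD).
\]
To connect the $a_{z,n}$ with the noncommutative integrals, the change of variable $w = s-z$ turns the left-hand side into $\Rez{w=0} w^n \Gamma(w+z)\, \zeta_{T,D}(w+z)$. Substituting the Laurent expansion $\Gamma(w+z) = \sum_{j \geq -1} \Gamma_j(z) w^j$ (where $\Gamma_{-1}(z)$ is $(-1)^{-z}/(-z)!$ if $z \in -\N$ and $0$ otherwise) and extracting the coefficient of $w^{-1}$ gives
\[
(-1)^n n!\, a_{z,n} = \sum_j \Gamma_j(z) \Rez{w=0} w^{n+j+1-1} \zeta_{T,D}(w+z) = \sum_j \Gamma_j(z)\, \ncintd{n+j+1} T\abD^{-z}.
\]
Setting $\ell = n+j+1$ yields \eqref{coef_a}; the sum truncates at $\ell = d+1$ because $\ncintd{\ell} T\abD^{-z}$ vanishes beyond the maximum pole order of $\zeta_{T,D}(\cdot+z)$ at $0$, and starts at $\ell = n$ because $j \geq -1$.

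For part (ii), the plan is to invoke Theorem \ref{thm:f_expansion} with $H = \abD$, $K = T$. Three properties of $\rho_k$ must be verified: local boundedness on $(0,\infty)$, $\rho_k(t) = \oz(t^{r_k})$, and $\rho_k(t) = \oinf(t^{r_{k+1}})$. The middle one was already established in the proof of Theorem \ref{thm:hk2zeta}. Local boundedness on compact subsets of $(0,\infty)$ follows from absolute convergence of the defining series at each $t>0$. For the behaviour at infinity, each summand $\log^n t \cdot t^{-z}$ with $\Re(z) > -r_{k+1}$ is trivially $\oinf(t^{r_{k+1}})$; the slightly delicate point will be to dominate the tail uniformly, which I plan to do by exploiting that $X_k$ is discrete without accumulation, so $\sup_{z\in X_k}(-\Re(z))$ is strictly less than $r_{k+1}$ whenever only finitely many poles cluster near the boundary — and otherwise, the imaginary unboundedness of $X_k$ must be tempered by absolute convergence to allow a dominated-convergence argument. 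This uniformity is the main technical hurdle.

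Once Theorem \ref{thm:f_expansion} is applied, it yields $\Tr T f(t\abD) \sim_{t \downarrow 0} \sum_k \psi_k(t)$ with $\psi_k(t) = \int_0^\infty \rho_k(st)\, d\phi(s)$. Substituting the explicit form of $\rho_k$, expanding $\log^n(st) = \sum_{m=0}^n \tbinom{n}{m}\log^{n-m}(t) \log^m(s)$ by the binomial theorem, and interchanging the sum with the integral (justified by Proposition \ref{prop:cl} together with $|\log s|^n \leq C_\epsilon(s^{-\epsilon}+s^\epsilon)$, which guarantees $\int_0^\infty s^{-\Re(z)}|\log s|^n\, d\aphi(s) < \infty$ since $\Re(z) \leq p < r$), one obtains
\[
\psi_k(t) = \sum_{z\in X_k} t^{-z} \sum_{n=0}^d a_{z,n}(T,\abD) \sum_{m=0}^n \tbinom{n}{m} \log^{n-m}(t)\, f_{z,m}^{\,\prime},
\]
with $f_{z,m}^{\,\prime} = \int_0^\infty s^{-z}\log^m s\, d\phi(s)$. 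The final step is the change of variable $t \mapsto 1/\Lambda$, so that $t^{-z} = \Lambda^z$ and $\log^{n-m} t = (-1)^{n-m}\log^{n-m}\Lambda$, combined with the reindexing $n' = n-m$, $m' = n$, which shows $f_{z,m}^{\,\prime} = f_{z,m'-n'}$ with the definition \eqref{ST_SA:f} and reproduces \eqref{ST_SA:Y}. The combinatorial reshuffling, together with the uniform $\oinf$-control of $\rho_k$, is the only non-mechanical part of the argument.
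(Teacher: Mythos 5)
Your proposal is correct and follows essentially the same route as the paper's proof: part (i) is an application of Theorem \ref{thm:hk2zeta} followed by the Laurent-coefficient combinatorics (your direct residue extraction of $\Rez{s=z}(s-z)^n\Gamma(s)\zeta_{T,D}(s)$ is just the paper's Cauchy-product computation of $c_{-n-1}(z)=\sum_\ell b_\ell(z)\Gamma_{-n-1-\ell}(z)$ in different clothing, with the same truncation at $\ell=d+1$ coming from the pole order), and part (ii) verifies the hypotheses of Theorem \ref{thm:f_expansion} and then performs the same binomial expansion of $\log^n(s/\Lambda)$ with the reindexing that yields \eqref{ST_SA:Y}. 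The two points you flag as delicate — the uniform control giving $\rho_k(t)=\oinf(t^{r_{k+1}})$ and the interchange of the sum over $X_k$ with $\int d\phi$ — are treated in the paper at the same level of detail you propose, namely by repeating the $\oz(t^{r_k})$ argument from the proof of Theorem \ref{thm:hk2zeta} and by dominated convergence using the assumed absolute convergence of the series defining $\rho_k$.
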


Let us emphasise that, as in Theorem \ref{thm:f_expansion}, the onliest role of the sequence $(r_k)_k$ is to fix an asymptotic scale.

\begin{proof}
$i)$ Since the demand \eqref{ST_hk_exp} exactly mimics the assumptions of Theorem \ref{thm:hk2zeta} the first of part of \textit{i)} follows. Then, Formula \eqref{coef_a} arises from Equation \eqref{zKH_res} followed by some easy combinatorics:

\noindent Since $\zeta_{T,D}$ is meromorphic on $\CC$ we have the following Laurent expansions in some punctured neighbourhood around any $z \in X$,
\begin{align*}
& \zeta_{T,D}(s)  = \sum_{\ell=-d-1}^\infty b_\ell(z) (s-z)^\ell, \qquad \Gamma(s) = \sum_{j=-1}^\infty \Gamma_j(z) (s-z)^j,\\
& \ZZ_{T,D}(s)  = \zeta_{T,D}(s)\, \Gamma(s) = \sum_{m=-d-2}^{\infty} c_m(z) (s-z)^m,
\end{align*}
with $c_m(z) = \sum_{\ell = -d-1}^{m+1} b_\ell(z) \,\Gamma_{m-\ell}(z)$. Furthermore,
\begin{align*}
b_\ell(z) = \Rez{s=z} (s-z)^{-\ell-1} \zeta_{T,D}(s) = \Rez{s=0} \, s^{-\ell-1} \Tr T \awD^{-s-z} = \ncintd{-\ell} T \awD^{-z}.
\end{align*}
Now, Formula \eqref{zKH_res} yields, for any $z \in X$, $n \in \{0,1,\ldots,d\}$,
\begin{align*}
a_{z,n}(T,\abs{D})  & = \tfrac{(-1)^n}{n!} \, \Rez{s=z}\, (s-z)^{n} \,\Gamma(s) \zeta_{T,D}(s) = \tfrac{(-1)^n}{n!} \, c_{-n-1}(z) \\
& = \tfrac{(-1)^n}{n!} \, \sum_{\ell = -d-1}^{-n} b_\ell(z) \,\Gamma_{-n-1-\ell}(z) = \tfrac{(-1)^n}{n!} \, \sum_{\ell = n}^{d+1} \Gamma_{\ell-n-1}(z) \ncintd{\ell} T \awD^{-z} \,.
\end{align*}

$ii)$ Let us now turn to assertion \eqref{ST_SA:SA}. We have demonstrated in the course of the proof of Theorem \ref{thm:hk2zeta} that $\rho_k$ defined as in \eqref{ST_hk_exp} satisfies $\rho_k(t) = \oz(t^{r_k})$. Using the same arguments, one shows that $\rho_k(t) = \oinf(t^{r_{k+1}})$. Moreover, $\rho_k \in L^{\infty}_{\text{loc}}((0,\infty))$ and since $\awD \in \Tp$, Theorem \ref{thm:f_expansion} yields, with $t = \Lambda^{-1}$,
\begin{align*}
\Tr \, T f(\awD/\Lambda) \Linf \sum_{k=0}^{\infty} \psi_k(\Lambda), \quad \text{with} \quad \psi_k(\Lambda) = \int_0^{\infty} \rho_k(s / \Lambda)\, d\phi(s).
\end{align*}
It remains to compute the explicit form of $\psi_k$. By assumption, the series defining every $\rho_k(t)$ are absolutely convergent for any $t>0$ so that we can invoke the Lebesgue dominated convergence theorem to commute the sum over $X_k$ with the integral over $d\phi$ (cf. the proof of Theorem \ref{thm:f_conv}). In effect, we obtain
\begin{align*}
\psi_k(\Lambda) & = \sum_{z \in X_k} \Lambda^z \, \sum_{m=0}^d a_{z,m}(T,\abs{D}) \int_0^{\infty} \log^m(s/\Lambda) \,s^{-z} \,d \phi(s) \\
& = \sum_{z \in X_k} \Lambda^z \, \sum_{m=0}^d a_{z,m}(T,\abs{D}) \sum_{j = 0}^{m} \tbinom{m}{j} \log^{m-j}(\Lambda^{-1})\, f_{z,j} \\
& = \sum_{z \in X_k} \Lambda^z \, \sum_{n=0}^d (-1)^n \log^n \Lambda \, \sum_{m = n}^d \, \tbinom{m}{m-n} \,a_{z,m}(T,\abs{D}) \, f_{z,m-n}.
\tag*{$\Box$}
\end{align*}
\end{proof}

Formula \eqref{coef_a} shows how to compute the noncommutative integrals $\ncintd{\ell}T\awD^{-z}$ 
given the coefficients $a_{z,n}(T,\abs{D})$: For instance, since $\Gamma_{-1}(-k) \neq 0$ for $k \in \N$, it implies that $\ncintd{d+1} T \awD^{k} = 0\text{ for }T \in \PDOk{0}\text{and }k \in \N.$ More generally:

\begin{corollary}
Under same hypothesis but with $T\in \Psi^0(\A)$. For $-z = k \in \N$, 
\begin{align*}
&\ncintd{d+1} T \awD^k  = 0,\\
&\ncintd{d} T \awD^k  = (-1)^{d+k} d! \, k! \, a_{-k,d}(T,\abs{D}), \quad\text{ for } d \geq 0,\\
&\ncintd{d-1} T \awD^k = (-1)^{d+k-1} (d-1)! \, k! \big[a_{-k,d-1}(T,\abs{D})  \\
& \hspace{+5cm} + (-1)^{k} d \, k! \, \Gamma_0(-k) \, a_{-k,d}(T,\abs{D}) \big], \,\,\text{ for } d \geq 1;
\intertext{whereas, for $z \in \CC \setminus (-\N)$, in which case $\Gamma_{-1}(z) = 0$ and $\Gamma_0(z) = \Gamma(z)$, we obtain}
&\ncintd{d+1} T \awD^{-z}  = \tfrac{(-1)^{d} d!}{\Gamma(z)} \, a_{z,d}(T,\abs{D}), \quad\text{ for } d \geq 0,\\
&\ncintd{d} T \awD^{-z}  = \tfrac{(-1)^d (d-1)!}{\Gamma(z)^2} \, \big[ \Gamma(z) \, a_{z,d-1}(T,\abs{D}) + \Gamma_1(z) \, d \, a_{z,d}(T,\abs{D}) \big], \quad\text{ for } d \geq 1.
\end{align*}

If $d=0$, then $\zeta_{T,D}$ is regular at $-k \in -\N$ with $$\zeta_{T,D}(-k) = (-1)^k k! a_{-k,0}(T,\abs{D})$$ and we have $\ncint \bbbone =0$ (cf. Example \ref{ex:ncint}).
\end{corollary}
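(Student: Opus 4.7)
The strategy is to invert the linear system \eqref{coef_a}. For each $z\in\CC$ one has $d+1$ relations (indexed by $n\in\{0,\ldots,d\}$) among the $d+1$ values $\ncintd{\ell}T\awD^{-z}$ for $\ell\in\{1,\ldots,d+1\}$; the system is triangular with diagonal coefficient $\Gamma_{-1}(z)$. Since $\Gamma$ has simple poles precisely on $-\N$ with $\Gamma_{-1}(-k)=(-1)^k/k!$ and is regular elsewhere (whence $\Gamma_{-1}(z)=0$ and $\Gamma_0(z)=\Gamma(z)$ for $z\notin-\N$), the dichotomy underlying the two cases of the corollary is immediate. In Case 1 one solves bottom-up starting from the $n=d$ equation; in Case 2 the pivot is $\Gamma_0(z)$ rather than $\Gamma_{-1}(z)$, so one solves for $\ncintd{d+1}T\awD^{-z}$ first.

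The crucial preliminary observation for Case 1 ($-z=k\in\N$) is the sharpening $\ncintd{d+1}T\awD^k=0$. Theorem \ref{thm:hk2zeta} only guarantees that $\ZZ_{T,D}=\Gamma\,\zeta_{T,D}$ has poles of order at most $d+1$, but the simple pole of $\Gamma$ at $s=-k$ forces $\zeta_{T,D}$ itself to have order at most $d$ there, so that $\ncintd{d+1}T\awD^k=\Res_{u=-k}(u+k)^d\zeta_{T,D}(u)=0$. With this vanishing, the $n=d$ instance of \eqref{coef_a} collapses to $(-1)^d d!\,a_{-k,d}=\Gamma_{-1}(-k)\,\ncintd{d}T\awD^k$, from which the second formula follows upon inserting $\Gamma_{-1}(-k)=(-1)^k/k!$. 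Plugging the known values of $\ncintd{d}T\awD^k$ and $\ncintd{d+1}T\awD^k$ back into the $n=d-1$ instance and solving yields the third formula (the sign and combinatorial factors come out by matching $\frac{k!}{(-1)^k}\cdot(-1)^{d-1}$ with $(-1)^{d+k-1}$). Case 2 is parallel: the $n=d$ equation now reads $(-1)^d d!\,a_{z,d}=\Gamma(z)\,\ncintd{d+1}T\awD^{-z}$, yielding the first formula; backsubstitution into the $n=d-1$ equation delivers the second one after elementary manipulation.

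The $d=0$ addendum is a direct by-product of the pole-reduction step: with $d=0$, the sharpening shows that $\zeta_{T,D}$ is holomorphic at every $-k\in-\N$, so $\zeta_{T,D}(-k)$ exists as an honest value. Taking residues in $a_{-k,0}=\Res_{s=-k}\Gamma(s)\zeta_{T,D}(s)=\bigl((-1)^k/k!\bigr)\zeta_{T,D}(-k)$ inverts at once to the stated formula $\zeta_{T,D}(-k)=(-1)^k k!\,a_{-k,0}$. Specialising $T=\bbbone$, $k=0$ gives $\ncint\bbbone=\Res_{s=0}\zeta_D(s)=0$, in harmony with Example \ref{ex:ncint}. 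The whole proof is thus essentially bookkeeping with Laurent coefficients of $\Gamma$; the only substantive input is the pole-order reduction at negative integers, which is the sole step where the assumption of bounded $\log$-degree $d$ in the heat-trace expansion enters decisively and which constitutes the main (mild) obstacle.
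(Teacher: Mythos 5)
Your proposal is correct and is essentially the paper's own argument: the paper gives no separate proof beyond remarking that the formulas follow recursively from \eqref{coef_a}, together with the observation (stated just before the corollary) that $\Gamma_{-1}(-k)\neq 0$ forces $\ncintd{d+1} T \awD^{k} = 0$, which is exactly your pole-order-reduction step, and your bottom-up solution of the (triangular) system with pivot $\Gamma_{-1}(z)$ or $\Gamma_0(z)$ is the intended recursion. One caveat: carrying the case $z\notin -\N$ backsubstitution through yields the prefactor $\tfrac{(-1)^{d-1}(d-1)!}{\Gamma(z)^2}$ in the last formula (as the $d=1$, simple-pole sanity check $\ncintd{1} T\awD^{-z} = a_{z,0}(T,\abs{D})/\Gamma(z)$ confirms), so the $(-1)^d$ in the printed statement appears to be a sign typo rather than a defect of your method — you should not gloss over this with ``elementary manipulation'' but note the discrepancy explicitly.
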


The explicit formulae for the noncommutative integrals become rather involved when $d$ is large, but otherwise can be computed recursively from Formula \eqref{coef_a}.

\begin{remark}
\label{rem:dimsp}
Let $\ahd$ be regular and let the assumptions of Theorem \ref{thm:ST_SA} be met for any $T \in \PDOk{0}$ with a \emph{common} $d$ and $X_k$'s (i.e. there exist $d, X_k$ such that for any $T \in \PDOk{0}$, $d(T) = d$ and $X_k(T) = X_k$). Then, $\ahd$ has a dimension spectrum $\Sd$ of order at most $d+1$ and $\Sd \cup (-\N) \subset X$. 

Let us emphasise that this is a one-way street, even if $\ahd$ is known to possess a simple dimension spectrum included in $\RR$. As we have seen in the previous sections, the existence of a small-$t$ asymptotic expansion of a heat trace is a stronger property than the existence of a meromorphic extension of the associated spectral zeta function. The assumptions about the growth rate of the $\zeta_D$-function on the verticals are indispensable, as they allow us to control the contour integrals involved in the Mellin inversion theorem and hence manage the remainder in the asymptotic series. Its relevance was recognised by Alain Connes and Henri Moscovici in \cite[p. 206]{ConnesMoscovici} in the context of index theorems. See also Problem \ref{prob:heat vs zeta} \ref{prob:zeta 2 heat} in Chapter \ref{chap:open}.
\hfill$\blacksquare$
\end{remark}

\begin{example}
For every commutative spectral triple associated with a Riemannian compact manifold $M$ the assumptions of Theorem \ref{thm:ST_SA} are met --- cf. Example \ref{ex:Gilkey_pseudodiff}. In this context we have $d = 1$, $X = \dim M - \N$ and  $X_k = \{\dim M-k\}$. This extends naturally to almost commutative geometries, as they are embedded in the framework of purely classical pseudodifferential operators \cite{BoeijinkDungen,CacicReconstruction}.
\hfill$\blacksquare$
\end{example}

As a straightforward corollary of Theorem \ref{thm:ST_SA} we obtain the following, slightly enhanced, known result \cite[Theorem 1.145]{ConnesMarcolli}:

\begin{corollary}\label{cor:SA_simple}
Let $\ahd$ be a regular $p$-dimensional spectral triple with a dimension spectrum $\Sd$, denote $\Sd^+ \vc \{z \in \Sd \, \vert \, \Re(z) > 0\}\index{_zSzp@$\Sd^+$}$ and let $T \in \PDOz$. Moreover, assume that
\begin{align}\label{heat_simplified}
\Tr \, T e^{-t \abD} = \sum_{\alpha \in \Sd^+} a_{\alpha}(T,\abs{D}) \, t^{-\alpha} + a_{0}(T,\abs{D}) + \oz(1),
\end{align}
where the, possibly infinite, series over $\alpha$ is absolutely convergent for all $t>0$.\\
Then, the function $\zTD$ is regular at $0$ and for any $f \in \Cl_0^r$ with $r>p$,
\begin{align}
\label{SA_simplified}
\Tr \, T f( \abD / \Lambda ) \!=\! \!\!\!\!\sum_{\alpha \in \Sd^+}\!\! \!\!\Lambda^{\alpha} \!\!\int_0^{\infty}\!\! \!\!x^{\alpha - 1} f(x) dx  \,\ncint T \awD^{-\alpha} + f(0) \zTD(0) + \oinf(1).
\end{align}
Alternatively, if in place of \eqref{heat_simplified} one assumes
\begin{align}\label{heat_simplified_D sq}
\Tr \, T e^{-t \DD^2} = \tfrac{1}{2} \sum_{\beta \in \Sd^+} a_{\beta/2}(T,D^2) \, t^{-\beta/2} + a_{0}(T,D^2) + \oz(1),
\end{align}
then, the expansion \eqref{SA_simplified} holds for $\big[ x \mapsto f(\sqrt{x}) \big] \in \Cl_0^r$ with $r> p/2$.
\end{corollary}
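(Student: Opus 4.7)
The corollary reduces to a specialisation of Theorem~\ref{thm:ST_SA} to $d=0$, combined with the truncated version referred to in Remark~\ref{rem:finite_exp}. I would first match the hypothesis \eqref{heat_simplified} with the template \eqref{ST_hk_exp}: take $X = \Sd^+ \cup \{0\}$, $d=0$, $a_{z,0}(T,|D|) = a_z(T,|D|)$ for $z \in \Sd^+ \cup \{0\}$, and choose the partitioning scale $(r_k)_k$ so that $r_0$ lies to the left of $\{-\Re\alpha : \alpha \in \Sd^+\}$ and $r_1 \in (r_0, 0)$. Because the heat expansion is only controlled up to $\oz(1)$, a full asymptotic expansion is unavailable and one relies on the finite-order version of Theorem~\ref{thm:f_expansion} stated in Remark~\ref{rem:finite_exp}.

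The regularity of $\zeta_{T,D}$ at $0$ is automatic in the regime $d=0$: by Theorem~\ref{thm:hk2zeta} the function $\ZZ_{T,D}(s) = \Gamma(s)\,\zeta_{T,D}(s)$ has poles of order at most $d+1 = 1$, but $\Gamma$ already has a simple pole at $s=0$, forcing $\zeta_{T,D}$ to be regular there. The value is recovered from Formula \eqref{coef_a} with $d=n=0$ at $z=0$, which collapses to $a_{0,0}(T,|D|) = \Gamma_{-1}(0)\,\ncintd{0} T = \zeta_{T,D}(0)$, yielding $\zeta_{T,D}(0) = a_0(T,|D|)$. For $\alpha \in \Sd^+ \subset \CC \setminus (-\N)$ the same formula gives $a_{\alpha,0}(T,|D|) = \Gamma(\alpha)\,\ncint T \awD^{-\alpha}$, since $\Gamma_{-1}(\alpha) = 0$ and $\Gamma_0(\alpha) = \Gamma(\alpha)$.

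The action expansion now follows from part (ii) of Theorem~\ref{thm:ST_SA}, which in the case $d=0$ reduces to $\psi_k(\Lambda) = \sum_{z \in X_k} \Lambda^z\, a_{z,0}(T,|D|)\, f_{z,0}$ with $f_{z,0} = \int_0^\infty s^{-z}\,d\phi(s)$. The matching with the stated form is accomplished by Fubini applied to $f = \Lc[\phi]$: for $\Re z > 0$,
\[
\int_0^\infty x^{z-1} f(x)\,dx = \int_0^\infty d\phi(s)\int_0^\infty x^{z-1} e^{-sx}\,dx = \Gamma(z)\, f_{z,0},
\]
so the factors $\Gamma(\alpha)$ cancel, leaving exactly the announced coefficient of $\Lambda^\alpha$. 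At $z = 0$, Proposition~\ref{prop:moments} gives $f_{0,0} = \int_0^\infty d\phi(s) = \lim_{x\downarrow 0} f(x) = f(0)$, whence the constant term is $f(0)\,\zeta_{T,D}(0)$. The switch from $\awD$ to $\abD$ in the statement is harmless: Formula \eqref{SA_ker} produces only a discrepancy of order $\Oinf(\Lambda^{-1}) = \oinf(1)$.

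The alternative claim under \eqref{heat_simplified_D sq} is obtained by running the same argument with $H = D^2 \in \Tpp{p/2}$ and the cut-off $g(x) \vc f(\sqrt{x}) \in \Cl_0^{r}$ for $r > p/2$, exploiting $f(|D|/\Lambda) = g(D^2/\Lambda^2)$. The substitution $x = y^2$ turns $\int_0^\infty x^{\beta/2 - 1} g(x)\,dx$ into $2\int_0^\infty y^{\beta - 1} f(y)\,dy$, and the identity $\ncint_{D^2} T (D^2)^{-\beta/2} = \tfrac{1}{2}\ncint T \awD^{-\beta}$ (which follows from $\zeta_{T,D^2}(s) = \zeta_{T,D}(2s)$) together with the deliberate factor $\tfrac{1}{2}$ inserted in \eqref{heat_simplified_D sq} ensures that all multiplicities match. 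The main subtlety throughout is the bookkeeping of the scale $(r_k)_k$ so that each series defining $\psi_k$ remains absolutely convergent: this property propagates from the absolute convergence assumed in \eqref{heat_simplified} through the Laplace integral by dominated convergence, exactly as in the proof of Theorem~\ref{thm:f_expansion}.
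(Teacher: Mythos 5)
Your proposal is correct and follows essentially the same route as the paper's proof: existence of the expansion via Theorem~\ref{thm:f_expansion} with Remark~\ref{rem:finite_exp}, identification of the coefficients through Formula~\eqref{coef_a} and the Fubini computation $f_{\alpha,0}=\Gamma(\alpha)^{-1}\int_0^\infty x^{\alpha-1}f(x)\,dx$, the constant term from Proposition~\ref{prop:moments} together with regularity of $\zTD$ at $0$, and the $\DD^2$ variant via $f(x)=g(x^2)$ with the change of variables and the factor $\tfrac12$. Your explicit treatment of the $\abD$ versus $\awD$ discrepancy through \eqref{SA_ker} is a point the paper leaves implicit, but it does not change the argument.
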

\begin{proof}
The existence of the expansion \eqref{SA_simplified} follows from Theorem \ref{thm:f_expansion} enhanced by Remark \ref{rem:finite_exp}, whereas the explicit form of the coefficients is provided by Formula \eqref{ST_SA:Y} from Theorem~\ref{thm:ST_SA}. The regularity of $\zTD$ at 0 follows from Formula \eqref{coef_a} and $a_{0,1}(T,\abs{D}) = 0$. Furthermore, $$a_{0,0}(T,\abs{D}) = \ncintd{0} T = \Rez{s=0}\, s^{-1}\, \zTD(s) = \zTD(0).$$ Finally, we observe that $f(0) = \int_0^{\infty} d \phi(s)<\infty$ (cf. Proposition \ref{prop:moments}) and moreover, for $\Re(\alpha) > 0$ (see the proof of Proposition \ref{prop:cl}),
\begin{align*}
f_{\alpha,0} = \int_0^{\infty} s^{-\alpha} \, d\phi(s) = \tfrac{1}{\Gamma(\alpha)} \, \int_0^{\infty} x^{\alpha-1} \int_0^{\infty} e^{-sx} d \phi(s) dx = \tfrac{1}{\Gamma(\alpha)} \, \int_0^{\infty} x^{\alpha-1} f(x) dx,
\end{align*}
whereas $$a_{\alpha,0}(T,\abs{D}) = a_{\alpha}(T,\awD) = \Gamma(\alpha) \, \ncint T \awD^{-\alpha}.$$

\smallskip

For the alternative statement with hypothesis \eqref{heat_simplified_D sq}, we have with $f(x) = g(x^2)$, 
\begin{align*}
\Tr \, T f (\abD / \Lambda ) & = \Tr \, T g( \DD^2 / \Lambda^2 ) = \int_0^{\infty} \Tr \, T e^{-s \DD^2/\Lambda^2} d\phi_g(s) \\
& = \tfrac{1}{2}  \sum_{\beta \in \Sd^+} \Lambda^{\beta} a_{\beta/2}(T,D^2) \int_0^{\infty} s^{-\beta/2} d\phi_g(s) + g(0) \zeta_{T,D^2}(0) + \oz(1) \\
& = \tfrac{1}{2}  \sum_{\beta \in \Sd^+} \Lambda^{\beta} \int_0^{\infty} x^{\beta/2-1} g(x)\,dx + f(0) \zeta_{T,D}(0) + \oz(1)  \\
& = \sum_{\beta \in \Sd^+} \Lambda^{\beta} \int_0^{\infty} y^{\beta-1} f(y) \,dy + f(0) \zeta_{T,D}(0) + \oz(1).
\tag*{$\Box$}
\end{align*}
\end{proof}

\begin{remark}\label{rem:f}
Let us emphasise that the remainder in \eqref{SA_simplified} will, in general, involve the full shape of the cut-off function $f$, even if $\ahd$ has a simple dimension spectrum. To conclude that $\oinf(1)$ depends only on the Taylor expansion of $f$ at 0 one would need a considerably stronger assumption
\begin{align*}
\Tr \, T e^{-t \abD} \tzero \sum_{\alpha \in \Sd^+} a_{\alpha}(T,\abs{D}) \, t^{-\alpha} + a_{0}(T,\abs{D}) + \sum_{k=0}^{\infty} a_{-k}(T,\abs{D}) \,t^{-k},
\end{align*} 
in place of \eqref{heat_simplified}. This brings to focus the role of the cut-off function $f$ in the dynamics of the physical system modelled via the triple $\ahd$. See Problem~\ref{The role of f} in Chapter \ref{chap:open}.
\hfill$\blacksquare$
\end{remark}

\begin{remark}
\label{rem:D_vs_Dsq}
As one can see, the swap between the alternative hypotheses \eqref{heat_simplified} and \eqref{heat_simplified_D sq} is not at all innocent. Although they both yield the same Formula \eqref{SA_simplified}, but for $f$ in very different classes $\Cl$. Moreover, the existence of asymptotics of $\Tr \, T e^{-t \DD^2}$ cannot be easily deduced from the asymptotics of $\Tr \, T e^{-t \abD}$ via the Laplace transform --- see Example \ref{ex:exp_sqrt}. The converse implication is also not guaranteed, as observed in Remark \ref{rem:Gauss_Laplace}. The existence of both asymptotic expansions is, however, guaranteed if we know that the meromorphic extension of the corresponding zeta function $\zTD$ is of polynomial growth on the verticals --- cf. Corollary \ref{cor:heat_q}. See Problem \ref{prob:heat vs zeta} \ref{Existence of fluctuated expansion} in Chapter \ref{chap:open}.
\hfill$\blacksquare$
\end{remark}

\medskip

Theorem \ref{thm:ST_SA} allows us to unveil the large-energies asymptotic expansion of the spectral action $\SA$ with a \emph{fixed} operator $\DD$. In order to understand the complete picture of dynamics, we need to fathom out how does the spectral action, and its asymptotics, behaves under the fluctuations of $\DD$ --- recall Section \ref{sec:fluc}. This will be our goal in the next chapter.

%
%
%

\chapter{Fluctuations of the Spectral Action}
\label{chap:perturbations}

\abstract{
As we have learned in Section \ref{sec:fluc} a given spectral triple $\ahd$ ought to be considered as a representative of the entire family of triples $\ahda$, which yield equivalent geometries. It is therefore of utmost importance to understand how the spectral action is affected by the fluctuations of geometry.\\
We explore the meromorphic structure of the fluctuated zeta function and, for regular spectral triples with simple dimension spectra, we provide a few formulae for the noncommutative integrals. Finally, we sketch the method of operator perturbations.
}

\section{\texorpdfstring{Fluctuations of the Spectral $\zeta$-Function}{Fluctuations of the spectral zeta-function}}
\label{sec: Fluctuations of spectral functions}

We have seen in Chapter \ref{chap:asymptotic} that the spectral zeta function plays a pivotal role in the asymptotic expansion of the spectral action. In this section we study the relationship of the spectral functions $\zeta_D$ and $\zeta_{D_\Ag}$.\index{_a3zeta_Da@$\zDA(s)$} To this end, we will work with a regular spectral triple $\ahd$ and provide an operator expansion of $\vert D_\Ag \vert^{-s}$ for $s \in \CC$.

Let $\Ag$ be an abstract pdo such that $\Ag=\Ag^* \in \Psi^0(\A)$. This hypothesis encompasses both the standard fluctuations $\Ag= A +\eps JAJ^{*}$, with $A=A^*\in \Omega_\DD^1(\A)$ (cf. p.~\pageref{A without first}) and the more general ones considered in \cite{ConnesFirst}.

Recall from \eqref{def:D_A} that $\DD_\Ag =\DD+\Ag$ and define 
$$D_\Ag \vc \DD_\Ag + P_\Ag,\index{_zD2a@$D_\Ag$}$$
where $P_\Ag$\index{_zP0a@$P_\Ag$} is the projection on $\Ker \DD_\Ag$. We implicitly exclude the case $\DD_\Ag = 0$, which can occur only if $\dim \H < \infty$ (recall p. \pageref{finite triple case}). \\
Let us remark that $\DD_\Ag\in \Psi^1(\A)$ and, as already noted after Definition \ref{def:PDO}, 
\begin{align}
\label{eq:PA smoothing}
V_\Ag:= P_\Ag - P_0\text{  is a smoothing pdo}.
\end{align}
This follows from $\Ker \DD_\Ag \subset \bigcap_{k\geq 1} \Dom \DD_\Ag^k \subset  \bigcap_{k\geq 1} \Dom \abs{D}^k$, the fact that $\abs{D}^r P_\Ag$ and $P_\Ag \abs{D}^s$ are bounded operators for any $r,s\in \RR$ since $\text{Im }P_\Ag$ is finite dimensional, and for any $k\in \N,\, t\in \RR$ the expression $\del^k(P_\Ag)\abs{D}^t$, being a linear combination of terms of the form $\abs{D}^rP_\Ag \abs{D}^s$, is also bounded. Thus, $P_\Ag$ is a smoothing pdo as $P_0$, see Example~\ref{example:pdo} (cf. \cite[Lemma 4.1]{TorusSA} for details).

Using \eqref{OP_z}, define for any $T \in \PDO$, $n\in \N^*$, $s\in \CC$, and $dt:=dt_1\cdots dt_n$,
\begin{align}\label{Kn}
K_n(T,s):=(-\tfrac{s}{2})^n \int_{0\leq t_1\leq\cdots\leq t_n\leq 1}
\sigma_{-s\,t_1}(T)\cdots\sigma_{-s\,t_n}(T)\, \,dt,
\end{align}

Recall that if $T\in \OP^\alpha$, then $\sigma_z(T)\in \OP^\alpha$ for any $z\in \CC$ (cf. Proposition \ref{prop:OP}) and the integrand in \eqref{Kn} is in $\OP^{\alpha n}$, continuous in the strong topology in integration variables varying over a compact set. Hence, $K_n(T,z) \in \OP^{\alpha n}$.

We shall need the following technical operators
\begin{align}
& X  \vc \DD_{\Ag}^2-\DD^2 =\Ag \DD + \DD \Ag + \Ag^2 ,\quad X_V  \vc X+V_\Ag, \label{X}\\
& Y \vc \log(D_\Ag^2) -\log (D^2)=\log(D^2+X_V)-\log(D^2),  \label{Y}
\end{align}
\index{_zxV@$X,\,X_V$}(which are well defined since $\Ag$ preserves $\Dom \DD$ and $D_\Ag^2$ is invertible), so that 
$
D^2+X_V=(\DD+\Ag)^2+P_\Ag=\DD^2_\Ag+P_\Ag= D^2_\Ag.
$\\
We store for the sequel that $X_V \sim X$. In the following we use the multi-index notation $\ell=(\ell_1,\cdots,\ell_n)\in \N^n$ with $\abs{\ell} \vc \ell_1 + \ldots + \ell_n$ and the following complex function $h$
\begin{align}
\label{h}
h_n(s;\ell) & \vc  \left(- \tfrac{s}{2} \right)^n \int_{0\leq t_1 \leq \cdots \leq t_n \leq 1} \tbinom{ -s\,t_1/2}{\ell_1} \cdots \tbinom{ -s\,t_n/2}{\ell_n}\, dt,\quad \ell\in \N^n,
\end{align}
which is a polynomial since $\binom{s}{\ell_i} = \tfrac{ s (s+1) \cdots (s-\ell_i+1)}{\ell_i!}$ for $ s \in \CC, \, \ell_i \in \N^*$ and $\binom{s}{0} = 1$:
\begin{align}
\label{eq: h(s,l)}
h_n(s;\ell) = \sum_{j=0}^{\abs{\ell}} h_j(\ell)\, s^{n+j},  \quad \text{for } \;\ell \in \N^n.
\end{align}

\begin{lemma}
\label{lem:Y is a pdo}
 The operator $Y$ has the following expansion for any $N\in\N^*$,
\begin{align}
\hspace*{-0.2cm}Y &= \sum_{n=1}^{N}\,\sum_{\ell_1, \dotsc,\ell_n=0}^{N-n}\hspace{-0.3cm} \tfrac{(-1)^{\vert \ell \vert+n+1}}{\vert \ell \vert+n} \nabla^{\ell_n}\big(X\nabla^{\ell_{n-1}}(X\cdots \nabla^{\ell_1}(X))\big) \,D^{-2(\vert\ell \vert+n)}\,\,\, \mathrm{mod}\; \OP^{-N-1}.\!\!\!
 \label{Y in OP}
\end{align}
So, $$Y \!= XD^{-2}-\tfrac{1}{2}( \nabla(X) D^{-4} +X^2 D^{-4}) + \dotsc,$$
hence $Y \in\Psi'(\A)\cap \OP^{-1} \subset \Psi^{-1}(\A)$.
\end{lemma}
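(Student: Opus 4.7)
The plan is to invoke the integral representation of the logarithm for positive invertible operators: $\log P = \int_0^\infty\bigl[(1+\lambda)^{-1}-(P+\lambda)^{-1}\bigr]\,d\lambda$, applied to $P=D_\Ag^2$ and $P=D^2$, which yields
\begin{align*}
Y = \int_0^\infty \bigl[(D^2+\lambda)^{-1}-(D_\Ag^2+\lambda)^{-1}\bigr]\,d\lambda.
\end{align*}
Iterating the second resolvent identity $(D_\Ag^2+\lambda)^{-1}=(D^2+\lambda)^{-1}-(D^2+\lambda)^{-1}X_V(D_\Ag^2+\lambda)^{-1}$ then produces, for any $N\in\N^*$,
\begin{align*}
(D^2+\lambda)^{-1}-(D_\Ag^2+\lambda)^{-1} = \sum_{n=1}^N (-1)^{n+1}\bigl[(D^2+\lambda)^{-1}X_V\bigr]^n(D^2+\lambda)^{-1} + \wt R_N(\lambda),
\end{align*}
the tail $\wt R_N(\lambda)$ containing $X_V^{N+1}$ and one extra factor $(D_\Ag^2+\lambda)^{-1}$.

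For each $n\le N$ I would apply Lemma \ref{lm:OP_exp} (with $\lambda\rightsquigarrow-\lambda$, valid since $-\lambda\notin\spec D^2\subset\RR^+$) successively to push every power of the resolvent to the right of all the $X_V$'s. A direct induction on the number of pushes yields, modulo remainders controlled below,
\begin{align*}
\bigl[(D^2+\lambda)^{-1}X_V\bigr]^n(D^2+\lambda)^{-1}
= \!\!\sum_{m_1,\dotsc,m_n\ge 0}\!\! (-1)^{|m|} C_{n,m}\, \nabla^{m_1}(X_V)\cdots\nabla^{m_n}(X_V)\,(D^2+\lambda)^{-(|m|+n+1)},
\end{align*}
with $C_{n,m}=\prod_{k=2}^n \tbinom{m_1+\cdots+m_k+k-1}{m_k}$ arising from the binomial factors in \eqref{OP_exp_int}. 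Integration in $\lambda$ via $\int_0^\infty(D^2+\lambda)^{-k}d\lambda=(k-1)^{-1}D^{-2(k-1)}$ for $k\ge 2$ then generates exactly the prefactor $(|m|+n)^{-1}$ and the power $D^{-2(|m|+n)}$ claimed in \eqref{Y in OP}.

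The remaining combinatorial step is to match the multi-index sum so obtained with the nested expression $\nabla^{\ell_n}(X\nabla^{\ell_{n-1}}(X\cdots\nabla^{\ell_1}(X)))$ summed over $(\ell_1,\dotsc,\ell_n)$. Expanding the nested form by the Leibniz rule for the derivation $\nabla=[D^2,\,\cdot\,]$ yields a sum of products $\nabla^{k_1}(X)\cdots\nabla^{k_n}(X)$ with binomial prefactors, and a change of summation variables together with a Vandermonde--Chu identity identifies the two sums term by term; for $n=2$ this reduces to the hockey-stick identity $\sum_{\ell_1=0}^{m_2}\tbinom{m_1+m_2-\ell_1}{m_1}=\tbinom{m_1+m_2+1}{m_1+1}$, and the general case is handled by induction on $n$. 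Since $V_\Ag=X_V-X$ is smoothing by \eqref{eq:PA smoothing}, $X_V$ and all its $\nabla$-iterates may be replaced by $X$ throughout without altering the expansion modulo $\OP^{-N-1}$.

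Analytic control of all remainders---both those from $\wt R_N$ and those left over by each application of Lemma \ref{lm:OP_exp}---would be carried out with Theorem \ref{thm:integral of OP}: the factors $(D^2+\lambda)^{-1}\in\OP^{-2}$ commute with $\abs{D}$, the integrands decay at least like $\lambda^{-2}$ at infinity, and the combined $\OP$-order of every discarded term is at most $-(N+1)$. The leading contribution $XD^{-2}$ lies in $\OP^1\cdot\OP^{-2}=\OP^{-1}$ by Proposition \ref{prop:OP}, every further summand is of order at most $-2$, and since each summand is polynomial in $X$, its iterated $\nabla$-images (which sit in $\Pc'(\A)$ modulo smoothing) and $D^{-2}$, we conclude $Y\in\Psi'(\A)\cap\OP^{-1}\subset\Psi^{-1}(\A)$. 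The hard part will be the combinatorial identification of the direct and nested expansions; once that is settled, the remaining estimates are routine within the $\OP$-calculus of Section \ref{sec:pdo}.
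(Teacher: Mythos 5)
Your proposal is correct in outline and follows the paper's overall strategy --- the integral representation of the logarithm, the Neumann expansion of $[\bbbone+X L(\la)^{-1}]^{-1}$, pushing the resolvents to the right with the $\nabla$-expansions, the elementary $\la$-integration $\int_0^\infty L(\la)^{-(M+1)}d\la = M^{-1}D^{-2M}$, and remainder control in the spirit of Theorem \ref{thm:integral of OP} --- but it deviates at one genuine point. You commute the resolvents past each factor $X$ \emph{separately}, which yields a flat expansion in products $\nabla^{m_1}(X)\cdots\nabla^{m_n}(X)$ with the coefficients $C_{n,m}$, and you must then prove a combinatorial identity (Leibniz expansion of the nested terms plus a hockey-stick/Chu--Vandermonde resummation, by induction on $n$) to recover the nested form $\nabla^{\ell_n}\big(X\nabla^{\ell_{n-1}}(X\cdots\nabla^{\ell_1}(X))\big)$ stated in the lemma. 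The paper avoids this extra layer entirely: it sets $B_n(T)=L^{-1}T\,B_{n-1}(T)$ and, at each step of the recurrence, applies the commutation formula to the \emph{whole} block $T\nabla^{\ell_{n-1}}(T\cdots\nabla^{\ell_1}(T))$ regarded as a single operator, so the nested $\nabla$'s and the unit coefficients per multi-index $(\ell_1,\dotsc,\ell_n)$ come out directly, with no resummation needed. Your detour is workable (your $n=2$ check $\sum_{\ell_1=0}^{m_2}\binom{m_1+m_2-\ell_1}{m_1}=\binom{m_1+m_2+1}{m_2}$ is the right consistency test, and the two expansions must agree since they arise from the same operator by legitimate manipulations), but it buys nothing and costs an induction. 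One further caveat: your remainder discussion is slightly optimistic --- in the paper the direct estimates only give the remainders in $\OP^{-N-1+\epsilon}$, because some decay in $\la$ must be sacrificed to absorb powers of $\abs{D}$ (the integrand involving $(D_\Ag^2+\la)^{-1}$ decays only like $\la^{-1-\epsilon/2}$, not $\la^{-2}$), and the loss is then repaired by Lemma \ref{lm:polynomial rest}; you should build that $\epsilon$-trick into your plan rather than claiming order $-(N+1)$ outright.
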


\begin{proof}
Remark that, although $Y$ depends on $X_V$ we could replace it with $X$ in \eqref{Y in OP} as $X \sim X_V$. Thus, to simplify, we will, by an abuse of notation, work directly with $X$. 

Since for $x>0$, $\log\, x= \int_0^\infty [(1+\la)^{-1}-(x+\la)^{-1}]\,d\la$, we obtain via the functional calculus
\begin{align}
\label{eq:Y}
Y=\int_0^\infty [(D^2+\la)^{-1}-(D^2+X+\la)^{-1}]\,d\la.
\end{align}
Remark that $D^2+X+\la=D_\Ag^2+\lambda$ is invertible for any $\la\in [0,\infty)$. 
Moreover, with $L=L(\la) \vc (D^2+\la)\in \OP^{2}$ for a given $\la$
\begin{align*}
(L+X)^{-1}&=[(\bbbone+XL^{-1})L]^{-1}=L^{-1}[\bbbone+XL^{-1}]^{-1},
\end{align*}
so that for a given $N\in \N^*$,
\begin{align*}
&[\bbbone+XL(\la)^{-1}]^{-1}=\sum_{n=0}^{N} (-1)^n\,[XL(\la)^{-1}]^n + R_{N}(\lambda),  \\
&R_{N}(\lambda) \vc (-1)^{N+1}[XL(\la)^{-1}]^{N+1} [\bbbone+XL(\la)^{-1}]^{-1} \in \OP^{-N-1}.
\end{align*}
We used the fact that $X\in \OP^1$ and Proposition \ref{prop:OP} $viii)$ to get first $L(\la)^{-1}\in \OP^{-2}$ and then $[\bbbone+XL(\la)^{-1}]^{-1} \in \OP^0$, since $\bbbone+XL(\la)^{-1}\in \OP^0$  has a discrete spectrum (because $XL^{-1}=(XD^{-1}) (D^2L^{-1}) \,D^{-1}$ is compact) and $\bbbone+XL^{-1}$ is invertible.
\\
With the definition $$B_n(\la) \vc L(\la)^{-1}[XL(\la)^{-1}]^{n}\in \OP^{-n-2} \text{ for }n\in \N^*,$$
 we get 
\begin{align}
\label{eq:remainder}
Y=\int_0^\infty I(\la)\, d\la, \quad\text{where } \; I(\la) \vc \sum_{n=1}^{N}(-1)^{n+1}\,B_n(\la)  + L(\la)^{-1} R_{N}(\lambda).
\end{align}
The idea is now to move all dependence in $\la$ in the sum to the right to facilitate an explicit integration over $\la$.

Let us define 
$$
B_n(T)\vc L^{-1}[TL^{-1}]^n \, \text{ for }n\in \N^*.
$$
Since $[D^2 + \la,T] = \nabla(T)$, a recurrence gives, for any $q_1\in \N$,
\begin{align*}
&B_1(T)=L^{-1}T L^{-1} = \sum_{\ell=0}^{q_1} (-1)^\ell\nabla^\ell(T) \,L(\la)^{-\ell-2} +r_{1,q_1}(T),\\
& r_{1,q_1}(T)\vc(-1)^{q_1+1} L^{-1} \nabla^{q_1+1}(T) L^{-q_1-2}.
\end{align*}
For $n\geq 2$ we have $B_n(T)=L^{-1}TB_{n-1}$ and another recurrence gives, for any $q_n\in \N$,
\begin{align}
&B_n(T) = \sum_{\ell_n=0}^{q_n}\cdots\sum_{\ell_1=0}^{q_1} (-1)^{|\ell|}\nabla^{\ell_n} \big( T \nabla^{\ell_{n-1}}(T \cdots \nabla^{\ell_1}(T)) \big) \,L^{-(|\ell|+n+1)} +r_{n,q_n}(T),\nonumber\\
& r_{n,q_n}(T)= \!\!\!\!\!\sum_{\ell_{n-1}=0}^{q_{n-1}}\!\!\!\!\!\cdots\sum_{\ell_1=0}^{q_1} (-1)^{q_n+\abs{\ell}+1}\, L^{-1}\nabla^{q_n+1}\big(T\nabla^{\ell_{n-1}}(T\cdots \nabla^{\ell_1}(T))\big)L^{-(q_n+\abs{\ell}+n+1)} \nonumber \\
&\hspace{7cm}+L^{-1}T\,r_{n-1,q_{n-1}}(T). \label{eq:r_n versus r_n-1}
\end{align}
The implementation of $B_n= B_n(X)$ in \eqref{eq:remainder} with $q_i=N-n$ for any $i=1,\cdots,n$, and $s_{N,n}(\la)\vc r_{n,N-n}(X)$  for $n=1,\cdots,N$ gives
\begin{align*}
&I(\la) =\sum_{n=1}^{N} (-1)^{n+1} \hspace{-0.4cm}\sum_{\ell \vert\, \ell_i=0,\cdots,\,N-n}\hspace{-0.5cm} (-1)^{|\ell |}\nabla^{\ell_n}\big(X\nabla^{\ell_{n-1}} (X\cdots \nabla^{\ell_1}(X))\big)\,\, L(\lambda)^{-(\vert \ell \vert+n+1)} \\
 &\hspace{3cm} + \R_{N,1}(\la)+\R_{N,2}(\la), \\ & \hspace{-0.7cm}\text{ with } \R_{N,1}\vc L^{-1} R_{N},\,\,\,\, \R_{N,2} \vc\sum_{n=1}^{N} (-1)^{n+1}s_{N,n}.
\end{align*}

The integration $$\int_{0}^\infty L(\la)^{-(|\ell |+n+1)}\,d\la = (\vert \ell \vert+n)^{-1} \,D^{-2|\ell |-2n}$$ is valid for $n\geq 1$. Remark that the first term of $I(\la)$ is $XL(\la)^{-2}$ (with $N=1$, so $n=1,\,\ell_1=0$) and we deduce that $Y$ is both in $\Psi '(\A)$ and $\OP^{-1}$ with the announced expansion \eqref{Y in OP}, if for all $N\geq 1$ the integrals of the remainders $\R_{N}(\la)$ are in $\OP^{-N-1}$.

We first claim that $\int_0^\infty \R_{N,1}(\la)\,d\la\in \OP^{-N-1+\epsilon}$ for some $\epsilon \in (0,1)$ and to prove it we apply the method of Theorem \ref{thm:integral of OP}.\\
With $$Z(\la)\!\vc\! [\bbbone+XL^{-1}(\la)]^{-1}\!\!\in \!\OP^0,$$
we rewrite the norm of $\abs{D}^{2-\epsilon}\!\R_{N,1}\! \abs{D}^{N-1}$ as
\begin{align*}
\Big\Vert (D^2L^{-1})^{1-\epsilon/2} L^{-\epsilon/2} \prod_{k=0}^{N-1}\Big([\,\abs{D}^{N-1-k}X\abs{D}^{-N+k}](D^2L^{-1}) \Big) \big[\,\abs{D}^{-1}X\big]\big(L^{-1}Z\big)\!\Big\Vert.
\end{align*}
Thus 
$$
\Vert \int_0^\infty \abs{D}^{2-\epsilon}\R_{N,1} \abs{D}^{N-1}\,d\la \Vert \leq c \int_0^\infty \Vert L(\la)^{-\epsilon/2}\Vert\,\Vert (D_\Ag^2+\la)^{-1}\Vert\, d\la,
$$
since the terms in square brackets are in $\OP^0$ and hence bounded, $\Vert D^2L(\la)^{-1}\Vert \leq 1$ and also we have $L^{-1}Z=(D_\Ag^2+\la)^{-1}$ (we use $X_V$ for the equality). \\
This integral is finite for any $\epsilon >0$, since for $\la\in [0,1]$, the integrand is smaller than $\Vert \abs{D}^{-1}\Vert \Vert D_\Ag^{-2}\Vert$, while for $\la\geq 1$ we have $L^{-1}(\la)\leq \lambda^{-1}$ and $ (D^2_\Ag+\la)^{-1} \leq \la^{-1}$.
\\
As explained in Theorem \ref{thm:integral of OP}, to prove the claim it is sufficient to check that the previous integral's estimates remain valid when $L^{-1}Z$ is swapped to $\delta^n(L^{-1}Z)$. Since $$\delta(L^{-1}Z)=L^{-1}\delta(Z)=-L^{-1}Z\,\delta(X)L^{-1} \,Z,$$ this improves previous estimates in $\la$.

We also claim that $\int_0^\infty \R_{N,2}(\la)\,d\la \in \OP^{-N-1+\epsilon}$ and proceed essentially in the same way to prove it. 
We only need to show that $\int_0^\infty \Vert s_{N,n}(\la)\,\abs{D}^{N+1-\epsilon}\Vert \,d\la <\infty$ for each $n$ in $\{1,\cdots,N\}$, and we use the induction to that end: For $n=1$ we have
\begin{align*}
 \Vert s_{N,1}(\la)\,\abs{D}^{N+1-\epsilon} \Vert &= \Vert (-1)^{N}L^{-1}\nabla^{N}(X)L^{-N-1}\abs{D}^{N+1-\epsilon} \Vert \\
 &=\Vert (L^{-1}D^2)[D^{-2} \nabla^{N}(X)\abs{D}^{-N+1}]L^{-N-1}\abs{D}^{2N-\epsilon}\Vert\\
 & \leq c\Vert L^{-1}\abs{D}^{2}\Vert^{N-\epsilon/2} \Vert L^{-1}\Vert^{1+\epsilon/2}=c' \Vert L^{-1}\Vert^{1+\epsilon/2},
\end{align*}
which is integrable on $\RR^+$ for any $\epsilon >0$. 
Now we show that 
$$
\int_0^\infty \Vert s_{N,n-1}(\la)\,\abs{D}^{N+1-\epsilon}\Vert \,d\la <\infty\text{ implies }\int_0^\infty \Vert s_{N,n}(\la)\,\abs{D}^{N+1-\epsilon}\Vert \,d\la <\infty.
$$

Using the relation \eqref{eq:r_n versus r_n-1}, we only need to prove that $L^{-1}X\,r_{n-1}\, \abs{D}^{N+1-\epsilon}$  and the operators $ L^{-1}\nabla^{q_n+1} \big(X\nabla^{\ell_{n-1}} (X\cdots \nabla^{\ell_1}(X)) \big) L^{-(q_n+\abs{\ell}+n+1)}\abs{D}^{N+1-\epsilon}$ are integrable. \\
The first one follows directly from $$\Vert L(\la)^{-1}X\Vert = \Vert L(\la)^{-1} D^2 (D^{-2}X)\Vert \leq \Vert D^{-2}X\Vert$$ and the recurrence hypothesis. For the second one we take the decomposition 
\begin{multline*}
L^{-1}\nabla^{q_n+1}\big(X\nabla^{\ell_{n-1}}(X\cdots \nabla^{\ell_1}(X))\big) L^{-(q_n+\abs{\ell}+n+1)}\abs{D}^{N+1-\epsilon}\\
= L^{-1}D^2\,\big[ D^{-2}\nabla^{q_n+1}\big(X\nabla^{\ell_{n-1}}(X\cdots \nabla^{\ell_1}(X))\big) \abs{D}^{-N-\abs{\ell}+1} \big]\,\times \\
\times \abs{D}^{N+\abs{\ell}-1}L^{-N-\abs{\ell}-1}\abs{D}^{N+1-\epsilon}.
\end{multline*}
The term in the square bracket is in $\OP^0$, so that it remains to show the integrability of $$\Vert \abs{D}^{2N+\abs{\ell}-\epsilon}L^{-N-\abs{\ell}-1}\Vert=\Vert D^2 L^{-1} \Vert^{N + \ell -\epsilon/2} \, \Vert \awD^{-\ell}\Vert  \Vert L^{-1} \Vert^{1+\epsilon/2},$$
which follows as above.

This completes the proof that $$\R_N\vc \int_0^\infty [\R_{N,1}(\la)+\R_{N,2}(\la)] \,d\la \in \OP^{-N-1+\epsilon}.$$
To dispose of the $\epsilon$ we invoke the handy Lemma \ref{lm:polynomial rest}.
\hfill $\Box$
\end{proof}

\begin{theorem}[see \cite{ConnesInner} and \cite{TorusSA}]
 \label{2dev}
 Let $\ahd$ be a regular spectral triple and let $\Ag=\Ag^* \in \Psi^0(\A)$. Then, for any $N\in\N^*$  and any $s\in \CC$,
\begin{align}
\label{eq:expansion for D_A}
|D_\Ag|^{-s} = |D|^{-s} + \sum_{n=1}^N K_n(Y,s) |D|^{-s} \mod \OP^{-(N+1)-\Re(s)},
\end{align}
where $K_n(Y,s) \in \PDOk{-n}$ and we have for each $n\in \{1,\cdots,N\}$
\begin{align}
\label{eq:Kn}
K_n(Y,s) = \sum_{\ell\in \{0,\cdots,\,N-n\}^n}\, h_n(s;\ell) \,\E^{\ell_1}(Y)\cdots \E^{\ell_n}(Y)   \mod \OP^{-N-1}.
\end{align}
\end{theorem}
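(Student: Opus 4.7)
The plan is to derive \eqref{eq:expansion for D_A} by applying a Duhamel/Dyson perturbative expansion to the exponential form $|D_\Ag|^{-s} = e^{-(s/2)\log D_\Ag^2} = e^{-(s/2)(\log D^2 + Y)}$, exploiting the key fact from Lemma \ref{lem:Y is a pdo} that $Y \in \OP^{-1}$. Setting $A = (s/2)\log D^2$ and $V = (s/2) Y$ (note that $V$ is bounded, and for suitable $\Re(s)$ the contraction $e^{-uA} = |D|^{-us}$ is well defined and bounded on $[0,1]$), I would invoke the standard iterated Duhamel identity
\[
e^{-(A+V)} = e^{-A} + \sum_{n=1}^{N} (-1)^n \!\!\int_{\Delta_n}\! e^{-(1-t_n)A} V e^{-(t_n-t_{n-1})A}V\cdots V e^{-t_1 A}\,dt + \mathcal{R}_N,
\]
where $\Delta_n = \{0 \le t_1 \le \dotsb \le t_n \le 1\}$ and the remainder $\mathcal{R}_N$ has the same form with $n=N+1$ and the first factor replaced by $e^{-(1-t_{N+1})(A+V)} = |D_\Ag|^{-(1-t_{N+1})s}$, which is again bounded since $D_\Ag$ is invertible.

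For the first main step, I would then collect all powers of $|D|$ to the right using the identity $|D|^{-a}\,T = \sigma_{-a}(T)\,|D|^{-a}$, iterated $n$ times, followed by the change of variables $u_i = 1-t_{n+1-i}$ (whose Jacobian is $1$ and which reverses the simplex). Since the exponents telescope to $-s$, the term of order $n$ becomes precisely $K_n(Y,s)\,|D|^{-s}$ as defined in \eqref{Kn}. To control $\mathcal{R}_N$, I would apply Theorem \ref{thm:integral of OP}: each $\sigma_{-su_i}(Y)$ lies in $\OP^{-1}$ by Proposition \ref{prop:OP} \textit{ii)} together with $Y \in \OP^{-1}$, so the product of $N+1$ such terms lies in $\OP^{-(N+1)}$, and multiplying by $|D|^{-s}$ yields the desired $\OP^{-(N+1)-\Re(s)}$ bound, uniformly over the compact simplex $\Delta_{N+1}$.

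For the second step, refining each $K_n(Y,s)$ to the form \eqref{eq:Kn}, I would apply the main operator expansion Theorem \ref{thm:expansion of sigma2z} to each factor $\sigma_{-su_i}(Y)$ at truncation order $N-n$:
\[
\sigma_{-su_i}(Y) = \sum_{\ell_i=0}^{N-n} \tbinom{-su_i/2}{\ell_i}\,\E^{\ell_i}(Y) + R^{(i)}_{N-n}(s,u_i),\quad R^{(i)}_{N-n}(s,u_i) \in \OP^{-(N-n+2)},
\]
using $\E^{\ell}(Y) = \nabla^{\ell}(Y)\,|D|^{-2\ell}$ (since $\nabla(|D|^{-2})=0$). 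Substituting into the definition of $K_n(Y,s)$, expanding the product and separating the contributions with no remainder factor from those containing at least one $R^{(i)}_{N-n}$, the former yields exactly $\sum_{\ell} h_n(s;\ell)\,\E^{\ell_1}(Y)\cdots\E^{\ell_n}(Y)$ after carrying out the $u$-integrals over $\Delta_n$ (which matches \eqref{h}). A term containing at least one remainder $R^{(i)}_{N-n} \in \OP^{-(N-n+2)}$ combined with $n-1$ factors each in $\OP^{-1}$ produces a contribution in $\OP^{-(N-n+2)-(n-1)} = \OP^{-N-1}$, as required. The membership $K_n(Y,s) \in \PDOk{-n}$ then follows from $Y \in \Psi'(\A)$ (Lemma \ref{lem:Y is a pdo}), the fact that $\Psi(\A)$ is stable under $\nabla$, multiplication by $|D|^{-2\ell}$, and products, and the observation that an operator admitting such an asymptotic pdo expansion with arbitrarily small remainder is by definition in $\Psi(\A)$.

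The main obstacle I expect is the careful bookkeeping of the $\OP$-orders through two successive layers of approximation, especially ensuring that the remainder estimates in Theorem \ref{thm:expansion of sigma2z} are uniform in $u_i \in [0,1]$ so as to survive integration over $\Delta_n$; this is exactly where Theorem \ref{thm:integral of OP} (and the handy Lemma \ref{lm:polynomial rest} to erase the epsilon losses) becomes essential. A secondary subtlety is handling $s$ as a complex parameter — the Dyson expansion is a priori meaningful for $\Re(s) \geq 0$ but the conclusion for general $s \in \CC$ follows by meromorphic continuation as in the proof of Theorem \ref{thm:expansion of sigma2z}, since both sides of \eqref{eq:expansion for D_A} are holomorphic in $s$ when composed with any sufficiently negative power of $|D|$.
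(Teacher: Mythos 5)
Your proposal is correct and follows essentially the same route as the paper: writing $|D_\Ag|^{-s}$ via the Duhamel/Volterra expansional formula with $P=-\tfrac{s}{2}\log D^2$, $Q=-\tfrac{s}{2}Y$ so that the $n$-th term is exactly $K_n(Y,s)|D|^{-s}$ with $K_n(Y,s)\in\OP^{-n}$ by $Y\in\OP^{-1}$, and then expanding each $\sigma_{-st_i}(Y)$ via Theorem \ref{thm:expansion of sigma2z} to truncation order $N-n$ to obtain \eqref{eq:Kn} with the remainder orders counted as you do. The only (harmless) difference is that you truncate the Dyson series with an explicit remainder containing $|D_\Ag|^{-(1-t_{N+1})s}$ and handle general $s\in\CC$ by continuation, whereas the paper keeps the full norm-convergent Volterra series \eqref{egalite-DAs} and reads off the finite truncation from the orders of the $K_n$'s.
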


\begin{proof}
To prove Formula \eqref{eq:expansion for D_A} we observe that 
$$|D_\Ag|^{-s}=e^{C-(s/2)Y}e^{-C}\, |D|^{-s} \,\text{ with }\,C:= (-s/2)\log(D^2).
$$
We are going to use the Duhamel formula \index{Duhamel formula}
\begin{align}
\label{Duhamel}
e^{P+Q}\,e^{-P}=\bbbone  +\int_0^1 e^{\,s(P+Q)}\,Q\,e^{-s\,P}\,ds
\end{align}
leading to the following Volterra series (expansional formula)
\begin{align*}
e^{P+Q}e^{-P}=\bbbone+\sum_{n=1}^\infty\, \int_{0\leq t_1\leq\cdots\leq t_n\leq 1} Q(t_1) \cdots Q(t_n) \,dt, \,\,\text{ where } \,\,Q(t)\vc e^{tP}Q\,e^{-tP}.
\end{align*}
The latter is valid for any (un)bounded selfadjoint operator $P$ and any selfadjoint bounded operator $Q$ in the operator norm topology (see \cite[Theorem 3.5]{Davies}, \cite[Lemma 3.32]{Zagrebnov} for a more general framework).\\
With $P= -\tfrac{s}{2} \log(D^2)$ and $Q=-\tfrac{s}{2} \,Y$, we obtain $Q(t)= -\tfrac{s}{2} \,\sigma_{-s\,t}(Y)$, so that
 \begin{equation}
\label{egalite-DAs}
|D_\Ag|^{-s} = |D|^{-s} + \sum_{n=1}^\infty K_n(Y,s)|D|^{-s},
\end{equation}
with each $K_n(Y,s)$ in $\OP^{-n}$, hence proving \eqref{eq:expansion for D_A}.

Finally, applying the expansion \eqref{OPexpansion} within Equation \eqref{Kn}, with $N-n$ and  $\E^{m}(Y)=\nabla^m(Y)\,\abs{D}^{-2m}$, we get that $K_n(Y,s)$ is equal, modulo $\OP^{-N-1}$, to
\begin{align*}
(-\tfrac{s}{2})^n \int_{0\leq t_1\leq\cdots\leq t_n\leq 1}\, \sum_{\ell\in \{0,\cdots,\,N-n\}^n} \, \tbinom{ -s\,t_1/2}{\ell_1} \cdots
\tbinom{-s\,t_n/2}{\ell_n }\,\E^{\ell_1}(Y)\cdots \E^{\ell_n}(Y)\,dt,
\end{align*}
which is nothing else than \eqref{eq:Kn}.
\hfill $\Box$
\end{proof}
Remark that $\E^\ell(Y)$, $\ell \in \N$, are in $\Psi '(\A)$, see \cite[Corollary 4.4]{TorusSA}.

It turns out that, thanks to Formula \eqref{eq:expansion for D_A}, the fluctuated triple $\ahda$ inherits many of the properties of the original $\ahd$.

\begin{corollary}
\label{cor:pdoA in pdo}
Let $\ahd$ be regular spectral triple and let $\Ag=\Ag^* \in \Psi^0(\A)$. Then, $\ahda$ is a regular spectral triple and $\Psi_{\Ag}(\A) \subset \PDO$, where 
$\Psi_{\Ag}(\A)$ refers to the pseudodifferential calculus defined by $\DD_\Ag$. \index{_apDOpA@$\Psi_{\Ag}(\A)$}
\end{corollary}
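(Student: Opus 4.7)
The plan is to reduce everything to Theorem \ref{2dev} combined with the algebra structure of $\Psi(\A)$. Let me denote by $\Psi_\Ag(\A)$ and $\OP^r_\Ag$ the pdo and order classes built from $D_\Ag$ in place of $D$, and by $\delta_\Ag \vc [\,\aD_\Ag,\cdot\,]$, $\sigma_{\Ag,z} \vc \aD_\Ag^z \cdot \aD_\Ag^{-z}$.

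\textbf{Step 1: The building blocks of $\Psi_\Ag(\A)$ lie in $\Psi(\A)$.} The inclusion $\A, J\A J^{-1} \subset \Pc(\A) \subset \Psi(\A)$ is immediate, and $\DD_\Ag = \DD + \Ag \in \Psi(\A)$ since $\DD \in \Pc(\A)$ and $\Ag \in \Psi^0(\A)$ by assumption. The non-trivial point is to show $\aD_\Ag^{\,z} \in \Psi^\CC(\A)$ for $z \in \{\pm 1\}$, from which $\aD_\Ag^{\,-p} \in \Psi(\A)$ for all $p \in \N$ follows, since $\Psi(\A)$ is an algebra. This is precisely what Theorem \ref{2dev} provides: applied with $s=1$ (respectively $s=-1$),
\begin{align*}
\aD_\Ag^{-1} = \aD^{-1} + \sum_{n=1}^{N} K_n(Y,1)\aD^{-1} \!\mod \OP^{-(N+2)}, \quad
\aD_\Ag = \aD + \sum_{n=1}^{N} K_n(Y,-1)\aD \!\mod \OP^{-N},
\end{align*}
and by Lemma \ref{lem:Y is a pdo} we have $Y \in \Psi'(\A) \cap \OP^{-1}$, hence the iterated applications of $\sigma_{-s\,t_i}$ and the products defining $K_n(Y,s)$ stay inside $\Psi(\A)$. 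Truncating at arbitrary $N$ with an $\OP^{-N-\ldots}$ remainder is exactly the defining form of $\Psi(\A)$.

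\textbf{Step 2: The calculus of orders is invariant under the fluctuation.} Since $\aD_\Ag - \aD \in \OP^0$ (leading term $K_1(Y,-1)\aD$ is in $\OP^{-1}\cdot\OP^1 = \OP^0$, the rest is smoother), one has $\delta_\Ag = \delta + [\aD_\Ag - \aD,\,\cdot\,]$, and an induction on $n$ shows that $\cap_{n}\Dom\delta^n = \cap_{n}\Dom\delta_\Ag^n$, i.e.\ $\OP^0_\Ag = \OP^0$. For general $r$, combining this with $\aD_\Ag^{\,r}\aD^{-r} \in \OP^0$ (which follows from the Theorem \ref{2dev} expansion at $s = -r$ after extending it by holomorphy in $s$ as in Theorem \ref{thm:expansion of sigma2z}) yields $\OP^r_\Ag = \OP^r$ for all $r \in \RR$.

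\textbf{Step 3: Conclusion.} Regularity of $\ahda$ is now immediate: for any $a \in \A$,
\begin{align*}
a \in \OP^0 = \OP^0_\Ag, \qquad [\DD_\Ag, a] = [\DD, a] + [\Ag, a] \in \OP^0 = \OP^0_\Ag,
\end{align*}
using the regularity of $\ahd$ and $\Ag \in \Psi^0(\A) \subset \OP^0$, which by \eqref{regularityOP} is exactly the regularity condition for $\DD_\Ag$. Finally, $\Psi_\Ag(\A) \subset \Psi(\A)$: any $T \in \Psi_\Ag(\A)$ is of the form $P \aD_\Ag^{-p} + R$ with $P \in \Pc_\Ag(\A) \subset \Psi(\A)$ by Step 1, $\aD_\Ag^{-p} \in \Psi(\A)$ by Step 1, and $R \in \OP^{-N}_\Ag = \OP^{-N}$ by Step 2. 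Since $\Psi(\A)$ is an algebra and smoothing operators sit inside it, the sum lies in $\Psi(\A)$.

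The main obstacle is \textbf{Step 2}: the only input provided by Theorem \ref{2dev} is an expansion modulo arbitrarily smoothing terms, so identifying $\OP^r_\Ag$ with $\OP^r$ requires carefully propagating this approximate control through iterated commutators with $\aD_\Ag$. The argument hinges on the fact that, at each inductive step, the error terms generated by $[\aD_\Ag - \aD,\,\cdot\,]$ have strictly lower order and can be reabsorbed using Lemma \ref{lm:polynomial rest}; keeping track of this order bookkeeping in the correct direction is the delicate point of the proof.
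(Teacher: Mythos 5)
Your proof is correct and follows essentially the same route as the paper: Theorem \ref{2dev} gives $\vert D_\Ag\vert - \vert D\vert \in \Psi^0(\A)\subset\OP^0$, from which $\OP^r_\Ag=\OP^r$ follows by the same domain/factorisation argument, and regularity plus $\Psi_\Ag(\A)\subset\Psi(\A)$ are concluded exactly as in the text (with $P_\Ag$ smoothing handling $\vert\DD_\Ag\vert$ versus $\vert D_\Ag\vert$). Only your closing remark overstates the difficulty of Step 2 — since the perturbing term is in $\OP^0$, the equality of the iterated domains is immediate and no reabsorption via Lemma \ref{lm:polynomial rest} is needed.
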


\begin{proof}
Recall that $\ahda$ is indeed a spectral triple --- cf. Remark \ref{rem:pert}. \\
Formula \eqref{eq:expansion for D_A} for $s = -1$ yields $\awDA = \awD + B(\Ag)$, with $B(\Ag) \in \PDOz$. In particular $\Dom \delta_\Ag=\Dom \delta$ since $\Psi^0(\A) \subset \B(\H)$. Thus $\OP^0_\Ag =\OP^0$; moreover, if $T \in \OP_\Ag^r$ then $\vert D_\Ag\vert^{-r}T\in \OP^0_\Ag=\OP^0$, so $$\vert D\vert^{-r}T =(\vert D \vert^{-r}\vert D_\Ag\vert^r) \,\vert D_\Ag\vert^{-r} T\in \OP^0$$ as $\vert D \vert^{-r}\vert D_\Ag\vert^r\in \OP^0$, and $T\in \OP^r$. By symmetry, $$\OP_\Ag^r=\OP^r$$ and by extension $\,\Psi_{\Ag}(\A) \subset \PDO$ since we already know that $\DD_\Ag\in \Psi(\A)$ and so is $ \vert D_\Ag \vert$ by Theorem \ref{2dev}, implying that $\vert \DD_\Ag\vert=\vert D_\Ag\vert -P_\Ag$ is also in $\Psi(\A)$ because $P_\Ag$ is smoothing (cf. p.~\pageref{eq:PA smoothing}). As in Lemma \ref{intersection of domains}, $\bigcap{ }_{k = 0}^{\infty} \,\,\Dom \delta_{\Ag}^{k}=\bigcap{ }_{k = 0}^{\infty} \,\,\Dom \delta_{\Ag}^{'\,k}$ with the definition $\delta'_{\Ag}\vc [\DD_\Ag,\,\cdot]$. \\
Since for $a \in \A$, we have $a,\,[\DA,a] \in \OP^0 =\OP^0_\Ag$, the regularity of the fluctuated triple is proved.
\hfill $\Box$
\end{proof}
Consequently, we define $$\Psi_\Ag^\CC(\A) \vc \{ T\abs{D_\Ag}^z\,\vert\, T \in \Psi(\A),\,z\in \CC\}.$$ \index{_apDOza@$\Psi_\Ag^\CC(\A)$}
\begin{theorem}
\label{thm:extension by fluctuation}
Let $\ahd$ be $p$-dimensional regular spectral triple with a dimension spectrum of order $d$ and let $\Ag=\Ag^* \in \Psi^0(\A)$. Then: 

i) The function $s \mapsto \zeta_{D_\Ag}(s) = \Tr \awDA^{-s}$ has a meromorphic continuation to $\CC$.

ii) The triple $\ahda$ is also  $p$-dimensional and has a dimension spectrum $\Sd\ahda$ included in $\Sd\ahd$ and of order at most $d$.

iii) For any $z\in \CC$, we have $\vert D_\Ag \vert^z \in \Psi^\CC(\A)$ and $ \Psi_\Ag^\CC(\A)\subset\Psi^\CC(\A)$. 

iv) For any $k\in \N^*$
\begin{align}
\ncintd{k}{|D_{\Ag}|}^{-p}=\ncintd{k} |D|^{-p}. \label{ncint DA-p}
\end{align}
\end{theorem}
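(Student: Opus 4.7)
The whole argument is governed by the perturbative formula of Theorem \ref{2dev}, namely
$$
\vert D_\Ag\vert^{-s} = \vert D\vert^{-s} + \sum_{n=1}^{N} K_n(Y,s)\,\vert D\vert^{-s} + R_N(s), \quad R_N(s) \in \OP^{-(N+1)-\Re(s)}, \quad K_n(Y,s) \in \PDOk{-n},
$$
which translates every spectral question about $D_\Ag$ into one about $D$, at the cost of a remainder of arbitrarily negative order. Throughout I will freely use Corollary \ref{cor:pdoA in pdo}, which gives $\OP^r_\Ag = \OP^r$ and $\Psi_\Ag(\A) \subset \Psi(\A)$, so any object built from $D_\Ag$ is already tame with respect to the calculus of $D$.

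For part \textit{iii)}, I apply the formula with $s = -z$ and multiply on the right by $\vert D\vert^{-z}$ to obtain $\vert D_\Ag\vert^{z}\vert D\vert^{-z} = \bbbone + \sum_{n=1}^N K_n(Y,-z) + R_N'$ with $R_N' \in \OP^{-(N+1)}$. Each $K_n(Y,-z) \in \Psi(\A)$, so the truncated sum is in $\Psi(\A)$ and, for any prescribed $M$, admits a decomposition $P\vert D\vert^{-p} + R$ with $P \in \Pc(\A)$ and $R \in \OP^{-M}$ by iterating the defining property of $\Psi(\A)$. Absorbing $R_N'$ with $N$ chosen large enough shows that $\vert D_\Ag\vert^z\vert D\vert^{-z} \in \Psi(\A)$, hence $\vert D_\Ag\vert^z \in \Psi^\CC(\A)$. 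The inclusion $\Psi_\Ag^\CC(\A) \subset \Psi^\CC(\A)$ then follows by writing any $T \in \Psi_\Ag^\CC(\A)$ as $T = T'\vert D_\Ag\vert^z$ with $T' \in \Psi_\Ag(\A) \subset \Psi(\A)$ and composing.

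For parts \textit{i)} and \textit{ii)}, I insert a generic $T \in \Psi^0_\Ag(\A) \subset \Psi^0(\A)$ in front of the expansion and take the trace. Each $\Tr T K_n(Y,s)\vert D\vert^{-s}$ becomes, via the explicit form \eqref{eq:Kn} of $K_n(Y,s)$, a finite linear combination $\sum_\ell h_n(s;\ell)\, \zeta_{T\E^{\ell_1}(Y)\cdots \E^{\ell_n}(Y),\, D}(s)$ with polynomial prefactors $h_n(s;\ell)$ coming from \eqref{eq: h(s,l)}. Since $T\E^{\ell_1}(Y)\cdots \E^{\ell_n}(Y) \in \Psi^0(\A)$, the dimension spectrum hypothesis for $\ahd$ makes each such zeta function meromorphic on $\CC$ with poles in $\Sd\ahd$ of order at most $d$, and polynomial coefficients do not increase pole orders. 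The residual trace $\Tr T R_N(s)$ is holomorphic for $\Re(s) > p - N - 1$, and letting $N \to \infty$ yields meromorphy of $\zeta_{T,D_\Ag}$ on all of $\CC$ with the claimed pole structure. The equality of summability exponents, $p_\Ag = p$, follows from $\OP^r = \OP^r_\Ag$ together with the symmetry $\OP^{-r} = \OP^0\vert D\vert^{-r}$ from \eqref{eq:symmetry}: for $r > p$, any $T \in \OP^{-r}$ factorises as $S\vert D\vert^{-r}$ with $S \in \B(\H)$ and is therefore trace class; applied to $\vert D_\Ag\vert^{-r} \in \OP^{-r}_\Ag = \OP^{-r}$ this gives $p_\Ag \leq p$, while swapping the roles of $D$ and $D_\Ag$ gives the reverse inequality.

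For part \textit{iv)}, I specialise the expansion to $s = p$ and multiply on the right by $\vert D\vert^{-s}$ for an auxiliary complex parameter $s$; taking traces yields
$$
\Tr \vert D_\Ag\vert^{-p}\vert D\vert^{-s} - \Tr \vert D\vert^{-p-s} = \sum_{n=1}^{N} \Tr K_n(Y,p)\vert D\vert^{-p-s} + \Tr R_N(p)\vert D\vert^{-s},
$$
the last summand being holomorphic in $s$ on $\Re(s) > -N-1$, so its contribution to $\Rez{s=0} s^{k-1}(\cdot)$ vanishes for $N \geq k$. The decisive point is that for each $n \geq 1$, $K_n(Y,p)\vert D\vert^{-p-s} \in \OP^{-n-p-\Re(s)}$ is trace-class on the half-plane $\Re(s) > -n$, which contains the origin; the resulting trace is therefore analytic at $s=0$, and so $\Rez{s=0} s^{k-1} \Tr K_n(Y,p)\vert D\vert^{-p-s} = 0$ for every $k \in \N^*$. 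Summing over $n$ establishes $\ncintd{k}\vert D_\Ag\vert^{-p} = \ncintd{k}\vert D\vert^{-p}$. The main technical obstacle, already felt in part \textit{iii)}, is the careful bookkeeping needed to upgrade the asymptotic expansion to a genuine membership in $\Psi(\A)$ or $\Psi^\CC(\A)$: one must iterate the defining property of $\Psi(\A)$ to push successive remainders below any prescribed order while still producing a legitimate decomposition of the form $P\vert D\vert^{-p} + R$ with $P \in \Pc(\A)$.
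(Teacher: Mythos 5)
Your proposal is correct and follows essentially the same route as the paper's proof: expand $\vert D_\Ag\vert^{-s}$ via Theorem \ref{2dev}, trace against $T\in\Psi^0_\Ag(\A)\subset\Psi^0(\A)$, invoke the dimension spectrum of $\ahd$ for the terms $\zeta_{TW_n(\ell),D}$ with entire prefactors $h_n(s;\ell)$, let the holomorphic remainder's half-plane grow, and for \textit{iii)} and \textit{iv)} use Corollary \ref{cor:pdoA in pdo} together with the fact that the correction terms $K_n(Y,p)\abs{D}^{-p-s}$ yield functions holomorphic at $s=0$. The only (harmless) deviations are cosmetic: you establish $p$-dimensionality through $\OP^r_\Ag=\OP^r$ and trace-class factorisation rather than through the singularity of $\zeta_D$ at $s=p$, and in \textit{iv)} you keep all $K_n$ terms explicitly where the paper retains only $K_1$ modulo $\OP^{-(p+2)}$.
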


\begin{proof}
\textit{i)} Firstly, let us rewrite Formula \eqref{eq:expansion for D_A}, for any $s \in \CC$ with $\Re(s) > p$, as
\begin{align}
\zeta_{D_\Ag}(s) & = \Tr \awDA^{-s} = \Tr \awD^{-s} + \sum_{n=1}^{N} \Tr \, \left( K_n(Y,s) \awD^{-s} \right) + \Tr \,R_N(s). \label{eq: zeta DA(s)}
\end{align}
Note that it is well-defined for any $N \geq 1$ because $K_n(Y,s)\awD^{-s} \in \OP^{-n-\Re{(s)}}$ and $R_N(s) \in \OP^{-N-1-\Re{(s)}}$, so all of the involved operators are trace-class for $\Re(s) > p$. Furthermore, we can rewrite $\zeta_{D_\Ag}$ using the explicit form of the $K_n$'s \eqref{eq:Kn}:  
\begin{align}
\label{eq:zDA}
\zeta_{D_\Ag}(s)  = \zeta_D(s) + \sum_{n=1}^{N} \,\sum_{\ell\in\{0,\cdots,\,N-n\}^n} \, h_n(s;\ell)\, \zeta_{W_n(\ell),D}(s) + f_N(s),
\end{align}
where $f_N(s) = \Tr \,R_N(s)$ and
\begin{align}\label{W}
W_n(\ell) \vc \E^{\ell_1}(Y)\cdots \E^{\ell_n}(Y) = \nabla^{\ell_1}(Y) \awD^{-2\ell_1} \cdots \nabla^{\ell_n}(Y) \awD^{-2\ell_n}.
\end{align}
Since the functions $\zeta_{W_n(\ell),D}$ admit  meromorphic extensions to $\CC$ by the dimension spectrum hypothesis on $\ahd$ and $f_N$ is actually a holomorphic function for $\Re(s) > p - N -1$, Formula \eqref{eq:zDA} provides a meromorphic continuation of $\zeta_{D_\Ag}$ to the half-plane $\Re(s) > p - N -1$. As $N$ can be taken arbitrarily large, we obtain a meromorphic extension of $\zeta_{D_\Ag}$ to the whole complex plane. Furthermore, since $\ahd$ is $p$-dimensional, $\zeta_D$ is singular at $s=p$ and regular for $\Re(s)>p$, and so is $\zeta_{D_\Ag}$, hence $\ahda$ is also $p$-dimensional. 

\textit{ii)} Secondly, let us multiply Formula \eqref{eq:expansion for D_A} from the left by $T \in \Psi_{\Ag}^0(\A)$ and take the trace as in \eqref{eq: zeta DA(s)}. Then, we have for $N\in \N^*$ and for $\Re (s)>p$ 
\begin{align}\label{eq:zTDA}
\zeta_{T,D_{\Ag}}(s) & = \zeta_{T,D}(s) + \sum_{n=1}^{N}\,\, \sum_{\ell\in\{0,\cdots,\,N-n\}^n}\, h_n(s;\ell) \,\zeta_{TW_n(\ell),D}(s) + \Tr\, T R_N(s).
\end{align}
For any multi-index $\ell$, $W_n(\ell) \in \OP^{-n-\abs{\ell}} \subset \OP^0$, and hence $T W_n(\ell) \in \OP^0$. Since $\Psi_{\Ag}^0(\A) \subset \PDOz$, the dimension spectrum hypothesis on $\ahd$ then assures that the functions $\zeta_{TW_n(\ell),D}$ are meromorphic on $\CC$ and thus we establish a meromorphic continuation of $\zeta_{T,D_\Ag}$ for any $T \in \Psi_{\Ag}^0(\A)$. \\
This fact implies that $\ahda$ has a dimension spectrum and, moreover, $\Sd\ahda \subset \Sd\ahd$, with the former being at most of the order of the latter, since the poles of any $\zeta_{T,D_\Ag}$ can only come from the ones of $\zeta_{T W_n(\ell),D}$\,.

$iii)$ Since $K_n(Y,s) \in \Psi^{-n}(\A)$, Formula \eqref{eq:expansion for D_A} shows that $\vert D_\Ag \vert^z \in \Psi^\CC(\A)$. Furthermore, Corollary \ref{cor:pdoA in pdo} implies the announced inclusion.

$iv)$ Using Formula \eqref{eq:expansion for D_A} we obtain $$(\vert D_\Ag\vert^{-p}-\abs{D}^{-p})\abs{D}^{-s}=K_1(Y,p) \abs{D}^{-(p+s)}\,\text{ mod }\OP^{-(p+2)}.$$
This operator is in $\OP^{-(p+1+\Re(s))}$, as $K_1(Y,p) \in \OP^{-1}$, so it is trace-class in a neighbourhood of $s=0$. Hence, $\ncintd{k} (\vert D_\Ag\vert^{-p}-\abs{D}^{-p})=0$.
\hfill $\Box$
\end{proof}

\section{Fluctuations of Noncommutative Integrals}

Given $\Ag=\Ag^* \in \Psi^0(\A)$ we define, for any $T\in \Psi_\Ag^\CC(\A)$ and $k \in \Z$, 
\begin{align*}
\ncintd{k}_{\hspace{-0.15cm}\Ag} T \vc \Rez{s=0}\,\,s^{k-1} \zeta_{T,D_\Ag}(s).\index{_Ancintz@$\displaystyle\ncintd{k}_{\hspace{-0.2cm}\Ag}$}
\end{align*}

We shall now invoke Proposition \ref{2dev} to compute $\ncintd{k}_{\hspace{-0.08cm}\Ag}T$ in terms of $\ncintd{k}$.
\begin{proposition}
\label{thm:fluctuations}
Let $\ahd$ be a regular spectral triple of finite dimension with a dimension spectrum of order $d$. \\
Then, for $\Ag=\Ag^* \in \Psi^0(\A)$, $T \in \Psi_{\Ag}^{\CC}(\A)$ and $k \in \Z, k < d$ we have: 
\begin{align}
\label{fluc}
\ncintd{k}_{\hspace{-0.15cm}\Ag} T= \ncint^{[k]} T +\, \sum_{n = 1}^{d-k}\,\, \sum_{\ell\in\{0,\cdots,\,d-k-n\}^n} \,\,\sum_{j=0}^{\abs{\ell}} \, h_j(\ell) \ncintd{\,k+n+ j\,} T\,W_n(\ell),
\end{align}
where $h_j(\ell)$ are defined in \eqref{eq: h(s,l)} and $W_n(\ell)$ in \eqref{W}.

Moreover, for any $T\in \Psi^\CC_\Ag(\A)$,
\begin{align}
& \ncintd{d}_{\hspace{-0.15cm}\Ag} T=\ncint^{[d]} T, \label{ncintd DA}
\end{align}
\end{proposition}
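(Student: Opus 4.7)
The plan is to obtain both formulae directly from the expansion \eqref{eq:zTDA}, which, thanks to Theorem \ref{thm:extension by fluctuation} $iii)$ and the fact that $T\abs{D_\Ag}^{-s}=S\abs{D_\Ag}^{z-s}$ for $T=S\abs{D_\Ag}^z\in\Psi^\CC_\Ag(\A)$, extends by analytic continuation from $T\in\Psi^0_\Ag(\A)$ to any $T\in\Psi^\CC_\Ag(\A)$. First I would fix $T\in\Psi^\CC_\Ag(\A)$ of some order $r\in\RR$, pick an integer $N$ large enough so that the remainder $\Tr\,T R_N(s)$ of \eqref{eq:zTDA} is holomorphic in an open neighbourhood of $s=0$ (this is possible since $T R_N(s)\in\OP^{r-N-1-\Re(s)}$ is trace-class, and hence the trace holomorphic, whenever $N+1+\Re(s)-r>p$), so that it will not contribute to $\Res_{s=0}s^{k-1}(\cdot)$.

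Next, multiplying \eqref{eq:zTDA} by $s^{k-1}$ and applying $\Res_{s=0}$, the LHS produces $\ncintd{k}_{\!\Ag} T$ by definition and the first term on the RHS yields $\ncintd{k} T$. For each remaining summand I would insert the polynomial expansion \eqref{eq: h(s,l)}, namely $h_n(s;\ell)=\sum_{j=0}^{\abs\ell} h_j(\ell)\,s^{n+j}$, and interchange the finite sum with the residue to obtain
\begin{align*}
\Rez{s=0}\, s^{k-1}\,h_n(s;\ell)\,\zeta_{TW_n(\ell),D}(s)
=\sum_{j=0}^{\abs\ell} h_j(\ell)\,\Rez{s=0}\, s^{k+n+j-1}\,\zeta_{TW_n(\ell),D}(s)
=\sum_{j=0}^{\abs\ell} h_j(\ell)\,\ncintd{k+n+j} T\,W_n(\ell),
\end{align*}
which already exhibits the shape of the RHS of \eqref{fluc}.

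The key observation for the truncation is that the dimension spectrum of $\ahd$ has order $d$, hence $\ncintd{m} S=0$ for every $S\in\Psi^\CC(\A)$ and every $m>d$. Since $TW_n(\ell)\in\Psi^\CC(\A)$ by Theorem \ref{thm:extension by fluctuation} $iii)$, each contribution with $k+n+j>d$ vanishes. This forces $n\leq d-k$ and, combined with $j\leq\abs\ell$, each $\ell_i\leq\abs\ell\leq d-k-n$, exactly the index range in \eqref{fluc}. Because only finitely many terms survive, the resulting formula is $N$-independent, confirming consistency with the initial choice of $N$.

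For the second identity \eqref{ncintd DA} I would run the same argument at $k=d$: every term with $n\geq 1$ produces $\ncintd{d+n+j}$ with $n+j\geq 1$, hence zero by the order-$d$ dimension spectrum assumption, so the entire correction sum collapses and $\ncintd{d}_{\!\Ag} T=\ncintd{d} T$ for arbitrary $T\in\Psi^\CC_\Ag(\A)$. The main technical care is in the first step, namely choosing $N$ large enough to kill the remainder while simultaneously large enough to retain all surviving terms of the expansion --- this is harmless since the formula's truncation is dictated by the dimension spectrum and not by $N$, but it requires cleanly separating the analytic (remainder) vanishing from the algebraic (dimension spectrum) vanishing.
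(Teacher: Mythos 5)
Your overall route is the same as the paper's: apply $\Rez{s=0}\,s^{k-1}$ to the expansion \eqref{eq:zTDA}, expand $h_n(s;\ell)$ via \eqref{eq: h(s,l)}, identify the terms $\ncintd{k+n+j}TW_n(\ell)$, and use the order-$d$ dimension spectrum to kill residues of order $>d$; your treatment of \eqref{ncintd DA} at $k=d\geq 1$ coincides with the paper's and is fine. The difference is the choice of $N$: the paper sets $N=d-k$, so that the summation range of \eqref{eq:zTDA} is literally that of \eqref{fluc} and only the remainder has to be discarded, whereas you take $N$ large and must then prune the sum — and both steps this forces on you are incorrectly justified.

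First, holomorphy of $\Tr\,TR_N(s)$ near $s=0$ does \emph{not} make $\Rez{s=0}\,s^{k-1}\Tr\,TR_N(s)$ vanish unless $k\geq1$: for $k\leq0$ this residue is the Taylor coefficient $\tfrac{1}{(-k)!}\,(\Tr TR_N)^{(-k)}(0)$, which has no reason to be zero. Since the proposition allows every integer $k<d$, and for a simple dimension spectrum ($d=1$) the \emph{only} instances of \eqref{fluc} are $k\leq0$ (the $k=0$ case is what yields $\zeta_{D_\Ag}(0)-\zeta_D(0)=-\tfrac12\ncint Y$ in Corollary \ref{cor:difference of fluctuated zeta}), this is the heart of the statement, not a fringe case. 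Second, your reduction of the $\ell$-sum from $\{0,\dots,N-n\}^n$ to $\{0,\dots,d-k-n\}^n$ is a non sequitur: the dimension-spectrum constraint $k+n+j\leq d$ bounds $j$, not the entries of $\ell$, and the inequality $j\leq\abs{\ell}$ points the wrong way, so terms with some $\ell_i>d-k-n$, $j\leq d-k-n$ and $h_j(\ell)\neq0$ survive your argument; for moderate $\abs{\ell}$ the operator $TW_n(\ell)$ need not be trace-class, and when $k+n+j\leq0$ even trace-class operators have nonvanishing $\ncintd{k+n+j}$. These two defects are linked: the left-hand side of \eqref{eq:zTDA} is $N$-independent, so the extra $\ell$-terms appearing at larger $N$ are exactly compensated by the change in the remainder — you cannot discard both of them separately, which is precisely why the paper pins $N=d-k$, making the index range match \eqref{fluc} on the nose and reducing everything to the harmlessness of the single remainder $\Tr\,TR_{d-k}(s)$ under $\Rez{s=0}\,s^{k-1}$.
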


\begin{proof}
By exploiting Formula \eqref{eq:zTDA}, we obtain, with $N = d-k$,
\begin{align*}
\ncintd{k}_{\hspace{-0.15cm}\Ag} T -\ncintd{k} T & = \sum_{n = 1}^{d-k}\,\sum_{\ell\in\{0,\cdots,\,d-k-n\}^n} \Rez{s=0}\, s^{k-1} h_n(s;\ell) \, \zeta_{TW_n(\ell),D}(s) \\
& =   \sum_{n = 1}^{d-k} \,\sum_{\ell\in\{0,\cdots,\,d-k-n\}^n} \,\sum_{j=0}^{\abs{\ell}} h_j(\ell) \,\Rez{s=0}\, s^{k+n+j-1} \zeta_{TW_n(\ell),D}(s)  \\
& =  \sum_{n = 1}^{d-k}\, \sum_{\ell\in\{0,\cdots,\,d-k-n\}^n} \,\sum_{j=0}^{\abs{\ell}} h_j(\ell) \,\ncintd{\,k+n+ j\,} T W_n(\ell).
\end{align*}

\noindent Equality \eqref{ncintd DA} results from $\underset{s=0}{\Res}\,\, s^{d+n} \Tr\,T \abs{D}^{-s}=0$, for $T\in \Psi^\CC(\A),\, n\in \N^*$.
\hfill $\Box$
\end{proof}

\begin{corollary}
\label{cor:dominant invariance}
Let $T \in \PDOCA \cap \OP^{-p+\delta}$ for some $\delta < 1$. \\
Then, for any $k \in \N^*$, we have $$\ncintd{k}_{\!\!\Ag} T = \ncintd{k} T.$$
\end{corollary}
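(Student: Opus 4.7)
The strategy is to apply the master formula \eqref{fluc} from Proposition \ref{thm:fluctuations} and show that every correction term on the right-hand side vanishes under the hypothesis $T \in \OP^{-p+\delta}$ with $\delta < 1$. The residues $\ncintd{\,k+n+j\,} T W_n(\ell)$ will vanish for order reasons, leaving only the leading piece $\ncintd{k} T$.

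Concretely, I would first recall from Lemma \ref{lem:Y is a pdo} that $Y \in \OP^{-1}$ and from Proposition \ref{prop:OP} (vi) that $\nabla^{\ell_i}(Y) \in \OP^{\ell_i - 1}$, so each factor $\nabla^{\ell_i}(Y)\,\abs{D}^{-2\ell_i}$ in \eqref{W} has order $-1-\ell_i$; thus $W_n(\ell) \in \OP^{-n-\abs{\ell}}$. Combined with $T \in \OP^{-p+\delta}$ and Proposition \ref{prop:OP} (iii), this gives $T W_n(\ell) \in \OP^{-p+\delta-n-\abs{\ell}}$. Moreover, $T \in \Psi_\Ag^\CC(\A) \subset \Psi^\CC(\A)$ by Theorem \ref{thm:extension by fluctuation} (iii), so $T W_n(\ell) \in \Psi^\CC(\A)$, and the spectral zeta function $\zeta_{T W_n(\ell),D}$ makes sense.

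The key observation is that for $n \geq 1$ and $\delta < 1$, the order satisfies $-p+\delta-n-\abs{\ell} \leq -p+\delta-1 < -p$. Consequently, the operator $T W_n(\ell)\,\abs{D}^{-s}$ is trace-class for all $\Re(s) > \delta - n - \abs{\ell}$, a region which already contains a neighbourhood of $s=0$ since $\delta - 1 < 0$. It follows that the function $\zeta_{T W_n(\ell),D}$ is holomorphic at $s=0$. Because $k \geq 1$, $n \geq 1$ and $j \geq 0$ give $k+n+j-1 \geq 1$, the product $s^{k+n+j-1}\,\zeta_{T W_n(\ell),D}(s)$ is holomorphic at $s=0$ and therefore $\ncintd{\,k+n+j\,} T W_n(\ell) = 0$ for every term appearing in the sum in \eqref{fluc}.

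Plugging this into \eqref{fluc} yields $\ncintd{k}_{\!\!\Ag} T = \ncintd{k} T$ for $1 \leq k < d$, while the remaining case $k \geq d$ follows either from \eqref{ncintd DA} (when $k=d$) or from the trivial observation that for $k > d$ both residues vanish because the order-$d$ bound on poles of $\zeta_{T,D}$ and $\zeta_{T,D_\Ag}$ forces $s^{k-1}\zeta_{\cdot,\cdot}(s)$ to be holomorphic at the origin. There is no substantial obstacle here — the entire argument is a bookkeeping of $\OP$-orders — but the one point requiring care is the strict inequality $\delta < 1$: without it, the case $n=1$, $\abs{\ell}=0$ could produce an operator of order exactly $-p$, whose zeta function generically has a pole at $s=0$, and the vanishing of the corresponding residue would fail.
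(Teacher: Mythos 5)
Your proof is correct and follows essentially the same route as the paper's: apply Formula \eqref{fluc} and kill every correction term by observing that $TW_n(\ell)\in\OP^{-p-\epsilon}$ with $\epsilon=n+\abs{\ell}-\delta>0$, hence trace-class with holomorphic zeta function at $s=0$ and vanishing residues. Your explicit handling of the range $k\geq d$ via \eqref{ncintd DA} and the order-$d$ bound on poles is a small extra care that the paper leaves implicit, not a different argument.
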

\begin{proof}
Recall (cf. \eqref{W}) that $W_n(\ell) \in \OP^{-n-\abs{\ell}}$, so that the operators $TW_n(\ell)$ appearing in Formula \eqref{fluc} are of the order $-p -\epsilon$ with $\epsilon = n+\abs{\ell} - \delta>0$ since $n \geq 1$, $\abs{\ell} \geq 0$ and $\delta < 1$, so $TW_n(\ell)$ are  trace-class and $\ncintd{k} TW_n(\ell) = 0$ for any $k \in \N^*$.
\hfill $\Box$
\end{proof}

\begin{corollary}
\label{cor:difference of fluctuated zeta}
Let $\ahd$ be a regular spectral triple of finite dimension with a simple dimension spectrum and let $\Ag=\Ag^* \in \Psi^0(\A)$. Then:

i) For any $T \in \PDOCA$,
\begin{align}
& \underset{s=0}{\Res}\,\, \zeta_{T,D_\Ag}(s)= \underset{s=0}{\Res}\,\, \zeta_{T,D}(s). \label{eq:zeta TDA} 
\end{align}

ii) $\zeta_{D_\Ag}$ is regular at 0 if and only if  $\zeta_D$ is so and $\zeta_{D_{\Ag}}(0) = \zeta_D(0) -\tfrac 12 \ncint Y$. 

iii) Moreover, with $p$ being the dimension of $\ahd$,
\begin{align}
& \ncint |D_\Ag|^{-(p-1)}= \ncint |D|^{-(p-1)} -\tfrac{p-1}{2}\ncint X|D|^{-p-1},\label{ncint DA-p+1}\\
& \ncint |D_\Ag|^{-(p-2)}= \ncint |D|^{-(p-2)}+\tfrac{p-2}{2}\big(-\ncint X|D|^{-p} + \tfrac{p}{4} \ncint X^2 |D|^{-2-p} \big).\label{ncint DA-p+2}
\end{align}
\end{corollary}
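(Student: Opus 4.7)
For part \emph{i)}, since the dimension spectrum is simple we have $d=1$, so Formula \eqref{ncintd DA} in Proposition \ref{thm:fluctuations} applied with $d=1$ reads $\ncint_\Ag T=\ncint T$ for every $T\in\Psi^{\CC}_\Ag(\A)$, which is exactly \eqref{eq:zeta TDA}. For part \emph{ii)}, I would apply \eqref{fluc} with $k=0$, $d=1$ and $T=\bbbone$: the only multi-index is $n=1$, $\ell=(0)$, giving $W_1((0))=Y$ and $h_0((0))=-\tfrac12$ from the definitions \eqref{h}, \eqref{eq: h(s,l)}; this yields $\ncintd{0}_\Ag\bbbone=\ncintd{0}\bbbone-\tfrac12\ncint Y$. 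Since part \emph{i)} with $T=\bbbone$ shows $\Res_{s=0}\zeta_{D_\Ag}(s)=\Res_{s=0}\zeta_D(s)$, the equivalence of regularity at $0$ follows, and when both zetas are regular at $0$ the expression $\ncintd{0}$ reduces to the value at $0$, giving $\zeta_{D_\Ag}(0)=\zeta_D(0)-\tfrac12\ncint Y$.

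For part \emph{iii)}, the key observation is that, by part \emph{i)} applied to $T=|D_\Ag|^{-r}\in\Psi^{\CC}_\Ag(\A)$, one has $\ncint|D_\Ag|^{-r}=\ncint_\Ag|D_\Ag|^{-r}=\Res_{s=r}\zeta_{D_\Ag}(s)$, and similarly $\ncint|D|^{-r}=\Res_{s=r}\zeta_D(s)$. Hence I would compute the difference by extracting, from the expansion \eqref{eq:zDA} with $N$ large enough, the residue at $s=p-1$ (resp.\ $s=p-2$). Only multi-indices $(n,\ell)$ with $n+|\ell|\leq r-(p-r)$... more precisely, with $W_n(\ell)\in\OP^{-(n+|\ell|)}$, only those satisfying $n+|\ell|\leq p-r+1$ can contribute a pole at $s=r$. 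For $r=p-1$ this singles out $n=1,\ell=(0)$ with $h_1(s;(0))=-s/2$; for $r=p-2$ one also picks up $n=1,\ell=(1)$ and $n=2,\ell=(0,0)$, for which a short computation of \eqref{h} gives $h_1(s;(1))=h_2(s;(0,0))=s^2/8$.

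Next I would convert all occurrences of $Y$ into $X$ via Lemma \ref{lem:Y is a pdo}, using $Y=X\wD^{-2}-\tfrac12\nabla(X)\wD^{-4}-\tfrac12 X^2\wD^{-4}\bmod\OP^{-3}$ and the elementary observation that any $T\in\OP^{-p-1}$ satisfies $\ncint T=0$ (since $\Tr T\wD^{-s}$ is holomorphic in a neighbourhood of $s=0$). This gives $\ncint Y\wD^{-(p-1)}=\ncint X\wD^{-p-1}$, which together with the $r=p-1$ computation yields \eqref{ncint DA-p+1}. For \eqref{ncint DA-p+2}, the same substitution produces $\ncint Y\wD^{-(p-2)}=\ncint X\wD^{-p}-\tfrac12\ncint\nabla(X)\wD^{-p-2}-\tfrac12\ncint X^2\wD^{-p-2}$, $\ncint\nabla(Y)\wD^{-p}=\ncint\nabla(X)\wD^{-p-2}$ and $\ncint Y^2\wD^{-(p-2)}=\ncint X^2\wD^{-p-2}$ modulo higher order, after cycling by the trace property.

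The main obstacle is the handling of the $\nabla$-terms: I would use the crucial identity $\nabla(X)\wD^{-p-2}=[D^2,X\wD^{-p-2}]$ (valid because $D^2$ commutes with $\wD^{-p-2}$) combined with the trace property of $\ncintd{d}=\ncint$ on $\PDOC$ for simple dimension spectrum (Theorem \ref{thm:ncint_trace} with $d=1$), giving $\ncint\nabla(X)\wD^{-p-2}=0$. Likewise, in the expansion of $\ncint Y^2\wD^{-(p-2)}$ the cross term $\ncint X[\wD^{-p},X]\wD^{-2}$ lies in $\OP^{-p-1}$ by Proposition \ref{prop:OP} \emph{vii)} and thus vanishes. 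Collecting the remaining contributions with the correct combinatorial factors $-(p-2)/2$ and $(p-2)^2/8$ yields \eqref{ncint DA-p+2} after elementary simplification.
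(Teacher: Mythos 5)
Your proposal is correct and follows essentially the paper's own route: part \emph{i)} is just \eqref{ncintd DA} with $d=1$, part \emph{ii)} is \eqref{fluc} with $k=0$, $d=1$, $T=\bbbone$ and $h_0((0))=-\tfrac12$, and part \emph{iii)} rests on the expansion of Theorem \ref{2dev} combined with $Y=XD^{-2}-\tfrac12(\nabla(X)D^{-4}+X^2D^{-4})\bmod\OP^{-3}$ and the vanishing of $\ncint\nabla(\cdot)\,\awD^{-r}$ by the trace property. The only cosmetic difference is that in \emph{iii)} you extract residues of $\zeta_{D_\Ag}$ at $s=p-1,\,p-2$ through part \emph{i)} and the zeta-level expansion \eqref{eq:zDA} (with the correct values $h_1(s;(0))=-s/2$, $h_1(s;(1))=h_2(s;(0,0))=s^2/8$, and the pole-counting bound being $n+\abs{\ell}\leq p-r$, which is what you in fact use), whereas the paper applies $\ncint$ directly to the operator identity \eqref{eq:expansion for D_A} evaluated at $s=p-1,\,p-2$ --- the same computation in a slightly different order.
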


\begin{proof}
$i)$ Equation \eqref{eq:zeta TDA} is just a rewriting of Formula \eqref{ncintd DA} for $d=1$.

$ii)$ The point $i)$ with $T = \bbbone$ implies that $\zDA$ is regular at 0 iff $\zDA$ is so. Moreover, taking $k=0$, $d=1, T=\bbbone$ in Formula \eqref{fluc} and recalling \eqref{h}, \eqref{eq: h(s,l)} we obtain
$$
\Rez{s=0}\, s^{-1} \zeta_{D_{\Ag}}(s) = \Rez{s=0}\, s^{-1} \zeta_D(s) +h_0(0) \ncint Y = \zeta_D(0) -\tfrac{1}{2} \ncint Y.
$$

$iii)$ Using respectively Formulae \eqref{eq:expansion for D_A}, \eqref{eq:Kn} and \eqref{Y in OP} with $N = 1$ we get
\begin{align}
\vert D_{\Ag} \vert^{-s} & = \awD^{-s} + K_1(Y,s) \awD^{-s} + R_1(s) = \awD^{-s} -\tfrac{s}{2} Y \awD^{-s} + R_2(s) \nonumber \\
& =  \awD^{-s} -\tfrac{s}{2} X \awD^{-s-2} + R_3(s), \label{eq:DA for N=1}
\end{align}
for any $s \in \CC$ with $R_i(s) \in \OP^{-\Re(s)-2}$. Taking $s= p-1$ and applying the noncommutative integral yields Equation \eqref{ncint DA-p+1}, since $R_3(p-1) \in \OP^{-p-1}$ is trace-class.\\
The same manoeuvre with $N=2$ gives
\begin{align}
& \vert D_{\Ag} \vert^{-s}  = [\bbbone -\tfrac{s}{2} Y  + \tfrac{s^2}{8} (\nabla(Y)D^{-2}+Y^2)] \awD^{-s} \mod \OP^{-3-\Re(s)},\label{eq:DA for N=2}  \\
& Y=XD^{-2}-\tfrac{1}{2}( \nabla(X) D^{-4} +X^2 D^{-4}) \mod \OP^{-3}. \label{eq:Y mod -3}
\end{align}
Remark that $Y=XD^{-2}\,\text{mod }\OP^{-2}$ and $Y^2=X^2D^{-4} \mod \OP^{-3}$. Since $\ncint$ is a trace, $$\ncint \nabla(Y) \abs{D}^{-r}=\ncint \nabla(X) \abs{D}^{-r}=0 \,\text{ for any }\,r\in \RR$$ and we obtain the Equation \eqref{ncint DA-p+2} for $s=p-2$ --- see \cite[Lemma 4.10]{TorusSA}. 
\hfill $\Box$
\end{proof}

Let us stress that the above corollary explicitly uses the assumption of simplicity of the dimension spectrum of $\ahd$. If this is not the case, then $\zeta_{D_\Ag}$ might fail to be regular at 0, even though $\zeta_D$ is so, because the higher order poles of functions $\zeta_{T\cdot W_n(\ell),D}$ at 0 can render $\ncint_{\hspace{-0.08cm}\Ag} \bbbone$ nonzero --- contrary to the commutative case, see \eqref{int 1=0}. 
To show the next important result on the fluctuation of zeta functions at zero, we need the following lemma which closely follows \cite[Lemma 2.3]{ConnesInner}.

\begin{lemma}
\label{int of log}
Let $(\A,\H,\DD)$ be a regular spectral triple and $\Ag=\Ag^* \in \Psi^0(\A)$. Then:

i) For any $N\in\N$, there exist $B(t)\in \Psi(\A)$ such that for $t>0$, we have $\mod \OP^{-N}$
\begin{align}
\label{eq:Z(t)}
&\tfrac{\partial }{\partial t} [\log(D^2+t X_V)-\log D^2-\log(\bbbone +tX_VD^{-2})]=[D^2+tX,\,B(t)] ,
\\
\label{eq:difference of log}
&\log (D^2+X_V)-\log D^2-\log(\bbbone+X_VD^{-2})=[D^2,\,B_1]+[X,\,B_2],
\end{align}
where $B_1\vc \int_0^1 B(t)\,dt$ and $B_2\vc \int_0^1 t\,B(t) \,dt$ are in $\Psi(\A)$. 

ii) Moreover,
\begin{align}
\label{eq: int of diiference of log}
\ncint Y=\ncint [\log (D^2+X_V)-\log D^2] = \ncint \log(\bbbone+X_VD^{-2}).
\end{align}
\end{lemma}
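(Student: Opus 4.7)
The plan is to prove (i) by a $t$-dependent deformation argument based on the integral representation of the logarithm, and then to deduce (ii) from (i) together with trace-theoretic properties of $\ncint$ applied to commutators.

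For (i), I would start from the representation $\log x=\int_0^\infty[(1+\la)^{-1}-(x+\la)^{-1}]\,d\la$, already exploited in the proof of Lemma \ref{lem:Y is a pdo}, applied simultaneously to $\log(D^2+tX_V)$ and $\log(\bbbone+tX_VD^{-2})$; the operators $D^2+tX_V=D_{t\Ag}^2+$smoothing and $\bbbone+tX_V D^{-2}=(D^2+tX_V)D^{-2}$ are both positive and invertible, so the representation is legitimate. Differentiating under the integral (justified by the resolvent bounds used in Lemma \ref{lem:Y is a pdo}) yields, with $L_t\vc D^2+tX_V+\la$ and $M_t\vc \bbbone+\la+tX_VD^{-2}$,
\begin{align*}
\tfrac{\partial}{\partial t}[\log(D^2+tX_V)-\log(\bbbone+tX_V D^{-2})]=\int_0^\infty\!\!\big[L_t^{-1}X_V L_t^{-1}-M_t^{-1}X_V D^{-2}M_t^{-1}\big]\,d\la.
\end{align*}
The crux is to show that this difference equals $[D^2+tX,B(t)]$ modulo $\OP^{-N}$ for some $B(t)\in\Psi(\A)$. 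Applying the resolvent expansion \eqref{OP_exp_int} to both integrands, moving all $\la$-dependence to the right as in the proof of Lemma \ref{lem:Y is a pdo}, and integrating via $\int_0^\infty(D^2+\la)^{-k-2}d\la=\tfrac{1}{k+1}D^{-2(k+1)}$ produces, modulo $\OP^{-N}$, a sum of terms of the form $\nabla^k(X_V)D^{-2(k+1)}$ with $k\geq 0$. The $k=0$ contributions from the two integrals are \emph{identical} and cancel. For $k\geq 1$ the identity $\nabla^k(X_V)D^{-2(k+1)}=[D^2,\nabla^{k-1}(X_V)D^{-2(k+1)}]$, valid because $D^2$ commutes with $D^{-2m}$, rewrites each surviving term as an exact commutator with $D^2$; replacing $D^2$ by $D^2+tX$ alters these terms only by elements of $\OP^{-N}$ (invoking Lemma \ref{lm:polynomial rest} to absorb $\epsilon$-losses), so the whole expression has the required form $[D^2+tX,B(t)]$. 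Integrating $\int_0^1(\,\cdot\,)dt$ and using $\int_0^1[D^2+tX,B(t)]\,dt=[D^2,B_1]+[X,B_2]$ with $B_1\vc\int_0^1 B(t)\,dt$, $B_2\vc\int_0^1 tB(t)\,dt$ yields \eqref{eq:difference of log}; the smoothness in $t$ of $B(t)$ (through the resolvents $L_t,M_t$) guarantees $B_1,B_2\in\Psi(\A)$.

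For (ii), the first equality is just the definition $Y=\log D_\Ag^2-\log D^2$ combined with $D_\Ag^2=D^2+X_V$. Applying $\ncint$ to \eqref{eq:difference of log} reduces the second equality to $\ncint[D^2,B_1]+\ncint[X,B_2]=0$. The vanishing of the first summand is automatic: for $\Re(s)$ large, cyclicity of the trace and $[\abs{D}^{-s},D^2]=0$ give $\Tr([D^2,B_1]\abs{D}^{-s})=\Tr(B_1[\abs{D}^{-s},D^2])=0$, so $\zeta_{[D^2,B_1],D}$ is identically zero on its half-plane of convergence, hence everywhere, and $\ncint[D^2,B_1]=0$. For the second summand, I would write $\Tr([X,B_2]\abs{D}^{-s})=\Tr(B_2(\sigma_{-s}(X)-X)\abs{D}^{-s})$ and expand $\sigma_{-s}(X)-X$ via \eqref{OPexpansion}, yielding a finite sum of $\binom{-s/2}{\ell}\zeta_{B_2\nabla^\ell(X),D}(s+2\ell)$ for $\ell\geq 1$ plus a holomorphic remainder; since $\binom{-s/2}{\ell}$ has a simple zero at $s=0$ while the relevant poles of $\zeta_{B_2\nabla^\ell(X),D}$ at $s=2\ell$ are at most simple in the setting at hand (this is exactly the trace-defect statement from Theorem \ref{thm:ncint_trace} for $k=1$), the residue at $s=0$ vanishes and $\ncint[X,B_2]=0$.

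The main obstacle is the commutator identification in (i): certifying that, after all resolvent expansions and $\la$-integrations, every non-commutator remainder either cancels between the two expansions or lies in $\OP^{-N}$. Book-keeping the $\nabla^k$-terms that arise from propagating $X_V$ past resolvents is combinatorially delicate, and one has to produce $B(t)$ explicitly enough to verify that $B_1,B_2\in\Psi(\A)$ after integration in $t$. The subordinate difficulty in (ii) is the tracial vanishing of $\ncint[X,B_2]$; should the dimension spectrum fail to be simple, an additional argument tailored to the specific form of $B_2=\int_0^1 tB(t)\,dt$ (which only involves operators of the type $\nabla^{k-1}(X_V)D^{-2(k+1)}$) would be needed to show that the higher residues from Theorem \ref{thm:ncint_trace} still vanish.
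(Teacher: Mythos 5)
Your overall skeleton (integral representation of $\log$, differentiate in $t$, exhibit the derivative as a commutator modulo $\OP^{-N}$, integrate in $t$, then apply $\ncint$) is the paper's, but the execution of the key step in (i) has a genuine gap. After you expand the resolvents $(D^2+tX_V+\la)^{-1}$ and $\big((1+\la)\bbbone+tX_VD^{-2}\big)^{-1}$ around $D^2$, the first integral does \emph{not} reduce, modulo $\OP^{-N}$, to single terms $\nabla^k(X_V)D^{-2(k+1)}$: at each order in $t$ it produces products with several factors of $X_V$ interleaved with $\nabla$'s and powers of $D^{-2}$, while the second integral produces the commuting products $t^j(X_VD^{-2})^{j+1}$. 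Showing that every non-commutator term cancels between these two families is precisely the content of the lemma, and your outline asserts it rather than proves it (you yourself flag the bookkeeping as the main obstacle). The paper avoids this altogether: since $X_VD^{-2}$ commutes with $\bbbone+tX_VD^{-2}$, one has $\tfrac{\partial}{\partial t}\log(\bbbone+tX_VD^{-2})=X_VC_t^{-1}=\int_0^\infty X_V(C_t+\la)^{-2}\,d\la$ with $C_t\vc D^2+tX_V$, so the difference of the two $t$-derivatives is \emph{literally} $\int_0^\infty\big[(C_t+\la)^{-1},\,X_V(C_t+\la)^{-1}\big]\,d\la$, a commutator under the integral sign; then Lemma \ref{lm:OP_exp} applied with $C_t$ in place of $D^2$ (i.e.\ with $\nabla_t=[C_t,\cdot\,]$) yields $B(t)=\sum_{k=1}^N\tfrac{(-1)^k}{k+1}\nabla_t^{k-1}(X_V)\,C_t^{-(k+1)}\in\Psi(\A)$ and a remainder whose membership in $\OP^{-N}$ is checked by the method of Theorem \ref{thm:integral of OP}, giving exactly $[C_t,B(t)]=[D^2+tX,B(t)]$ modulo smoothing because $X\sim X_V$.

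A second concrete error: the step ``replacing $D^2$ by $D^2+tX$ alters these terms only by elements of $\OP^{-N}$'' is false. The alteration is $-t[X,\cdot\,]$ applied to operators which for small $k$ are only in $\OP^{-k-2}$, so the discrepancy is of order about $-2$, independent of $N$; Lemma \ref{lm:polynomial rest} only removes an $\epsilon$ from a remainder already known to have order $r-N-1+\epsilon$ and cannot absorb this. (You could instead settle for a decomposition of the form $[D^2,\tilde B(t)]$ mod $\OP^{-N}$, which would still suffice for (ii), but it would not prove \eqref{eq:Z(t)}--\eqref{eq:difference of log} as stated.) Your part (ii) is essentially the paper's one-line argument --- apply $\ncint$ to \eqref{eq:difference of log} and use its tracial property (Theorem \ref{thm:ncint_trace}) --- spelled out in detail; it is fine, with the same implicit requirement of a simple dimension spectrum (the setting of Theorem \ref{thm:zeta(0) and fluctuation}, where the lemma is used) that you already acknowledge.
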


\begin{proof}
The operator $$C_t\vc D^2+tX_V=tD_\Ag^2+(1-t)D^2$$ is selfadjoint and positive for $t\in [0,1]$. The invertibility of $D^2$ and $D_\Ag^2$ implies that $C_t +\lambda$ is invertible for any $\lambda \geq 0$ and, consequently, that $\bbbone+tX_VD^{-2}=C_t\,D^{-2}$ is also invertible when $t\in [0,1]$. \\
Since $X_VD^{-2}\in \OP^{-1}$ is compact by Proposition \ref{prop:OP} $ix)$, it has a purely discrete spectrum so that the operator $\log (\bbbone +X_VD^{-2})=\log(D_\Ag^2D^{-2})$ exists.

i) As in \eqref{eq:Y}, $$\log (D^2+tX_V)-\log D^2=\int_0^\infty [(D^2+\la)^{-1}-(D^2+tX_V+\la)^{-1}]\,d\la$$ always exists, so we obtain 
$$
\tfrac{\partial }{\partial t} \big(\log(D^2+t X_V)\big)=\int_0^\infty 
\tfrac{1}{C_t+\lambda}\,X_V\, \tfrac{1}{C_t+\lambda}\,d\lambda.
$$
Moreover, 
\begin{align*}
\int_0^\infty X_V\,(C_t+\lambda)^{-2}\,d\lambda &= X_VC_t^{-1}=X_VD^{-2}(\bbbone+tX_VD^{-2})^{-1}=\tfrac{\partial }{\partial t} \log(\bbbone+tX_VD^{-2}).
\end{align*}
Thus, with $L(t)$ being the LHS of \eqref{eq:Z(t)}, $$L(t)=\int_0^\infty \big[(C_t+\lambda)^{-1},\,\,X_V\, (C_t+\lambda)^{-1}]\, d\lambda.$$
We now invoke Lemma \ref{lm:OP_exp} with $D \rightsquigarrow C_t$, $-\lambda \notin \spec \,C_t$ and $n=1$ to deduce, with $\nabla_t(T)\vc [C_t,T]$,
\begin{align}
& [(C_t+\lambda)^{-1},\,X_V(C_t+\lambda)^{-1}]=\sum_{k=1}^N (-1)^k\,\nabla_t^k\big[X_V(C_t+\lambda)^{-(k+2)}\big] +R_N(\la,t) \nonumber\\
&\hspace{1cm}\text{with}\quad R_N(\la,t) \vc  (-1)^{N+1}  \,(C_t+\la)^{-1}\,\nabla_t^{N+1}[ X_V]\, (C_t+\la)^{-(N+2)}. \label{RN(lambda,t)}
\end{align}
Because $(C_t+\lambda)^{-(k+1)}$, being in the centraliser of $\nabla_t$, can be put inside the parenthesis of $\nabla_t$ in the first equality, we get
\begin{align*}
& [(C_t+\lambda)^{-1},\,X_V(C_t+\lambda)^{-1}]=\nabla_t\big[\sum_{k=1}^N (-1)^k\,\nabla_t^{k-1}(X_V)(C_t+\lambda)^{-(k+2)}\big]+R_N(\la,t)
\end{align*}
with $R_N(\la,t)\in \OP^{-N-4}$. The term in the bracket gives, after an integration over $\lambda$, 
$$
B(t)\vc\sum_{k=1}^N \tfrac{(-1)^k}{k+1} \,\nabla_t^{k-1}(X_V) \,C_t^{-(k+1)}.
$$
Since $C_t^{-1}$ is in $\Psi(\A)$ (see Example \ref{example:inverse}), so is $B(t)$.\\
Thus, $$L(t)=\nabla_t[B(t)]+R_N(t) \text{ where } R_N(t)\vc\int_0^\infty R_N(\la,t)\,d\la.$$

We claim that $R_N(t)\in \OP^{-N}$. Using the method of Theorem \ref{thm:integral of OP}, we show that $R_N(t)\,C_t^{N/2} \in \OP^0$. 
With $$E=\nabla_t^{N+1}(X_V)\,C_t^{-1-N/2}\in \OP^0,$$ we get
\begin{align*}
\Vert R_N(t)\,C_t^{N/2}\Vert &\leq\int_0^\infty \Vert R_N(\la,t)\,C_t^{N/2}\Vert \,d\la\\
&\leq \int_0^\infty \Vert(\la+C_t)^{-1}\Vert\,\Vert E\Vert\, \Vert C_t^{1+N/2} (\la+C_t)^{-N-2}\,C_t^{N/2} \Vert \,d\la\\
&\leq \norm{E}\int_0^\infty \Vert C_t(\la+C_t)^{-1}\Vert^{N+1} \,\Vert (\la +C_t)^{-1}\Vert^2\,d\la.
\end{align*}
The latter is finite since $C_t$ is a positive operator and $\Vert C_t(\la+C_t)^{-1}\Vert \leq 1$.
\\
We now show that $$\del(R_N(\la,t)\,C_t^{N/2})=\del(R_N(\la,t))\,C_t^{N/2}+R_N(\la,t)\,\del(C_t^{N/2})$$ has a finite integral over $\lambda \in \Rp$.\\ 
As for the second term, we have $$
\Vert R_N(\la,t)\,\del(C_t^{N/2})\Vert \leq \Vert R_N(\la,t)\,C_t^{N/2}\Vert \,\Vert C_t^{-N/2} \del(C_t^{N/2})\Vert
$$
thus yielding a finite integral. \\
In the first term, we expand the derivation $\del$ on $R_N(\la,t)=A_\la^{-1}\,B\, (A^{-1}_\la)^{N+2}$ as a finite sum of expressions similar to $R_N(\la,t)$ with only one of the $A_\la^{-1}=(\la+C_t)^{-1}$ replaced by $-A_\la^{-1} \del(C_t)A_\la^{-1} $ or $B$ replaced by $\del(B)$. Since $$\Vert A_\la^{-1} \del(C_t)A_\la^{-1} \Vert \leq  \Vert C_t(\la+C_t)^{-1} \Vert\,\Vert C_t^{-1} \del(C_t) \Vert\, \Vert(\la+C_t)^{-1}\Vert,$$
the convergence of the integral is unspoilt. Thus 
$R_N(\la,t)\,C_t^{N/2}\in \Dom \del$ and, with the same arguments, one deduces that $R_N(\la,t)\,C_t^{N/2}\in \cap_n \Dom \del^n$.

The proof of \eqref{eq:Z(t)} is complete because $X_V\sim X$, so one can swap $X_V$ and $X$.\\
Finally, Equation \eqref{eq:difference of log} follows from the definition of $B_1$ and $B_2$.

ii) The tracial property of $\ncint$ (Theorem \ref{thm:ncint_trace}) applied to \eqref{eq:difference of log} gives \eqref{eq: int of diiference of log}.
\hfill $\Box$
\end{proof}

\begin{theorem} [\cite{ConnesInner}]
\label{thm:zeta(0) and fluctuation}
Let $\ahd$ be a $p$-dimensional regular spectral triple with a simple dimension spectrum, $\zeta_D$ regular at zero and let $\Ag=\Ag^* \in \Psi^0(\A)$. Then,
\begin{align}
\label{eq: fluctuation of zeta at zero}
\zeta_{D_\Ag}(0)-\zeta_D(0)= \sum_{k=1}^p \tfrac{(-1)^k}{k}\,\ncint (\Ag \,D^{-1})^k.
\end{align}
\end{theorem}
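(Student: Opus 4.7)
The plan is to reduce the identity to a computation of $\ncint$ of logarithms, perform an algebraic factorisation, and then exploit the tracial property of $\ncintd{1}$. Corollary~\ref{cor:difference of fluctuated zeta}(ii), applicable since $\zeta_D$ is regular at $0$, gives $\zeta_{D_\Ag}(0)-\zeta_D(0) = -\tfrac12\ncint Y$, and Lemma~\ref{int of log}(ii) yields $\ncint Y = \ncint \log(\bbbone + X_V D^{-2})$. Since $V_\Ag = P_\Ag - P_0$ is smoothing by \eqref{eq:PA smoothing}, differentiating the interpolation $t \mapsto \log(\bbbone + (X + tV_\Ag)D^{-2})$ along the lines of the proof of Lemma~\ref{int of log} will express its derivative as $V_\Ag D^{-2}$ sandwiched between resolvents, hence as a smoothing operator; integrating from $0$ to $1$ and recalling that smoothing operators have vanishing noncommutative integral, it suffices to prove
\begin{equation*}
\ncint\log(\bbbone + X D^{-2}) = 2\,\ncint\log(\bbbone + \Ag D^{-1}).
\end{equation*}

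The key algebraic observation is the factorisation, valid modulo $\OP^{-\infty}$,
\begin{equation*}
\bbbone + X D^{-2} = (\bbbone + \Ag D^{-1})(\bbbone + D\Ag D^{-2}),
\end{equation*}
which I would check by expanding the right-hand side and invoking $X = \Ag\DD + \DD\Ag + \Ag^2$, the identity $\Ag D^{-1}\!\cdot D\Ag D^{-2} = \Ag^2 D^{-2}$, and $\DD D^{-2} = D^{-1}\pmod{\OP^{-\infty}}$ (using that $P_0 D^{-2} = P_0$ is smoothing, together with $P_0\Ag$ and $\Ag P_0$ being of finite rank). Setting $A \vc \bbbone + \Ag D^{-1}$ and $B \vc \bbbone + D\Ag D^{-2}$, one is reduced to the \emph{multiplicative anomaly vanishing}
\begin{equation*}
\ncint\log(AB) = \ncint\log A + \ncint\log B,
\end{equation*}
which formally reflects the Baker--Campbell--Hausdorff identity expressing $\log(AB) - \log A - \log B$ as a series of iterated commutators of $\log A$ and $\log B$; by Theorem~\ref{thm:ncint_trace}, $\ncintd{1}$ is a trace on $\PDOC$ under the assumed simplicity of the dimension spectrum, so every commutator contributes zero.

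The second summand is then matched to the first by cyclicity: an elementary regrouping gives $(D\Ag D^{-2})^k = D(\Ag D^{-1})^k D^{-1}$ for every $k \geq 1$, hence $\ncint(D\Ag D^{-2})^k = \ncint(\Ag D^{-1})^k$ by the trace property, and the two logarithmic series agree under $\ncint$. Expanding $\log(\bbbone + \Ag D^{-1}) = \sum_{k\geq 1}\tfrac{(-1)^{k+1}}{k}(\Ag D^{-1})^k$, and noting that for $k > p$ the terms belong to $\OP^{-k}$ and are therefore trace-class with vanishing $\ncint$, one obtains $\ncint\log(\bbbone + \Ag D^{-1}) = \sum_{k=1}^p \tfrac{(-1)^{k+1}}{k}\ncint(\Ag D^{-1})^k$, and the theorem follows after the sign reshuffle $-(-1)^{k+1}=(-1)^k$. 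The main obstacle is the rigorous justification of the multiplicative anomaly step, since the logarithm of the unbounded, non-scalar (and only modulo smoothing invertible) operators $A$, $B$, and $AB$ must be controlled through a Cauchy integral representation analogous to \eqref{eq:Y} with careful $\OP$-class remainder estimates; I would implement it via the interpolation $A_t \vc \bbbone + t\,\Ag D^{-1}$, expressing $\tfrac{d}{dt}\bigl[\log(A_t B) - \log A_t - \log B\bigr]$ as a finite sum of commutators of abstract pdos plus a remainder in $\OP^{-N}$ for arbitrary $N$, and integrating from $t=0$ to $t=1$ so that Theorem~\ref{thm:ncint_trace} kills each commutator contribution.
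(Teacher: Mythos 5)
Your opening reduction is exactly the paper's: Corollary \ref{cor:difference of fluctuated zeta}\,\textit{ii)} plus Lemma \ref{int of log}\,\textit{ii)} give $\zeta_{D_\Ag}(0)-\zeta_D(0)=-\tfrac12\ncint\log(\bbbone+X_VD^{-2})$, and your factorisation $\bbbone+XD^{-2}=(\bbbone+\Ag D^{-1})(\bbbone+D\Ag D^{-2})$ mod smoothing is the same algebraic germ the paper exploits (with $a=D^{-1}\Ag$, $b=\Ag D^{-1}$ one has $\bbbone+XD^{-2}=D(\bbbone+a)(\bbbone+b)D^{-1}$). The gap is the pivotal ``multiplicative anomaly'' step $\ncint\log(AB)=\ncint\log A+\ncint\log B$, which you assert via BCH and propose to justify by a Cauchy-integral representation along the interpolation $A_t=\bbbone+t\,\Ag D^{-1}$. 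This would fail as sketched: the factors $A=(\DD_\Ag+P_0)D^{-1}$ and $B=DAD^{-1}$ are neither selfadjoint nor positive, they need not be invertible (e.g.\ $\DD_\Ag+P_0$ can have a kernel, since $P_0$ projects onto $\Ker\DD$, not $\Ker\DD_\Ag$), and their spectra need not avoid $(-\infty,0]$, so $\log A$, $\log A_t$ need not exist and the resolvent bounds used in the proof of Lemma \ref{int of log} (which rely crucially on the positivity of $C_t=tD_\Ag^2+(1-t)D^2$, via $\Vert C_t(\la+C_t)^{-1}\Vert\le 1$) are unavailable; the path $A_t$ can cross non-invertible operators, and the BCH expression is an infinite series of iterated commutators with no convergence or $\OP$-order control that would allow applying $\ncintd{1}$ termwise. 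Note also that additivity of regularised spectral quantities under products is exactly the kind of statement (multiplicative anomaly) that is delicate in general, so it genuinely needs an argument.

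The paper closes precisely this gap without ever taking logarithms of the individual factors. It first expands $\log(\bbbone+X_VD^{-2})=\sum_{k=1}^{p}\tfrac{(-1)^{k+1}}{k}(X_VD^{-2})^k+R_p$ with the integral-form remainder, legitimate because $\bbbone+tX_VD^{-2}=C_tD^{-2}$ is invertible for all $t\in[0,1]$, and $R_p\in\OP^{-p-1}$ is trace-class; then, writing $XD^{-2}=D(a+b+ab)D^{-1}$ and using that $\ncint$ is a trace (Theorem \ref{thm:ncint_trace}, simple dimension spectrum), it reduces everything to the purely formal identity, in the free algebra on $a,b$, that $\sum_k\tfrac{(-1)^{k+1}}{k}(xa+xb+x^2ab)^k$ and $\sum_k\tfrac{(-1)^{k+1}}{k}\big[(xa)^k+(xb)^k\big]$ agree modulo commutators, proved by differentiating in $x$ and factorising $\bbbone+xa+xb+x^2ab=(\bbbone+xa)(\bbbone+xb)$ --- at each finite order this is an identity between noncommutative polynomials, so applying the trace to the truncation at order $p$ is unproblematic. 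Your final bookkeeping ($(D\Ag D^{-2})^k=D(\Ag D^{-1})^kD^{-1}$, vanishing of $\ncint$ on $\OP^{-k}$ for $k>p$, the sign flip) is fine, and your interpolation for swapping $X_V$ with $X$ is unnecessary (the difference of each finite term is smoothing, hence killed by $\ncint$). To repair your proof, replace the operator-logarithm/BCH step by this formal power-series comparison, or otherwise supply a rigorous proof of the anomaly vanishing that does not presuppose existence of $\log A$, $\log B$.
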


\begin{proof}
Thanks to Corollary \ref{cor:difference of fluctuated zeta} and \eqref{eq: int of diiference of log}, one has
\begin{align}
\label{eq:difference of zeta}
\zeta_{D_\Ag}(0)-\zeta_D(0)=-\tfrac{1}{2}\ncint Y =-\tfrac{1}{2}\ncint \log(\bbbone+X_VD^{-2}).
\end{align}
We now write $$\log(\bbbone+X_VD^{-2})=\sum_{k=1}^N \tfrac{(-1)^{k+1}}{k}\,(X_VD^{-2})^k+ R_N,$$ 
where the remainder is $$R_N=(-1)^N\, (X_VD^{-2})^{N+1} \int_0^1(1-t)^N\,(\bbbone+tX_VD^{-2})^{-(N+1)}\,dt.$$
We have already seen in the proof of Lemma \ref{int of log} that $\bbbone+tX_VD^{-2}\in \OP^0$ is invertible. Thus, the integral makes sense and since $X_VD^{-2}\in \OP^{-1}$ we have $R_N\in \OP^{-N-1}$. In particular, $R_N$ is trace-class for $N=p$ and $$\ncint \log(\bbbone+X_VD^{-2})=\sum_{k=1}^p \tfrac{(-1)^{k+1}}{k}\,\ncint (X_VD^{-2})^k,$$ where, as usual, we can safely replace $X_V$ by $X$. \\
With $a=D^{-1} \Ag$ and $b=\Ag D^{-1}$, $$X=D(a+b+ab)D \,\text{ and }\,XD^{-2}= D(a+b+ab)D^{-1},
$$
so that, $\ncint$ being a trace, $$\ncint (XD^{-2})^k=\ncint(a+b+ab)^k$$ and we have to compare 
$$
\sum_{k=1}^N\tfrac{(-1)^{k+1}}{k} \ncint (a+b+ab)^k \,\text{ with }\, -\tfrac{1}{2}\sum_{k=1}^N \tfrac{(-1)^{k}}{k}\ncint a^k=\sum_{k=1}^N \tfrac{(-1)^{k+1}}{k} \ncint (a^k+b^k).
$$

Let us, more generally, introduce the following two formal series in $x\in\RR$ within the free algebra generated by $a$ and $b$: 
\begin{align*}
& S(x)\vc \sum_{k=1}^\infty \tfrac{(-1)^{k+1}}{k}\,[x a+x b+x ^2 ab]^k,\\
& T(x) \vc \sum_{k=1}^\infty \tfrac{(-1)^{k+1}}{k}\,[(x a)^k+ (x b)^k].
\end{align*}
We claim that they are equal modulo commutators (denoted by $\doteq$) so that, using the tracial property of $\ncint$, the proof of \eqref{eq: fluctuation of zeta at zero} would be complete.\\
Since $S(0)=T(0)$, it is sufficient to compare the derivatives of $S$ and $T$. Remark first that for the derivative of a term $M(x)^n$, we have $\tfrac{d\,}{dx}M(x)^n \doteq n M(x)^{n-1}M'(x)$ since one can commute $M'(x)$ to the right in $M(x)^kM'(x)M(x)^{n-k-1}$. Thus,
\begin{align*}
S'(x)&\doteq \sum_{k=1}^\infty (-1)^{k+1}(xa+xb+x^2ab)^{k-1}(a+b+2xab)\\
& = (1+xa+xb+x^2ab)^{-1}(a+b+2xab)=[(1+xa)(1+xb)]^{-1}(a+b+2xab)\\
&= (1+xb)^{-1}(1+xa)^{-1}[(1+xa)b+a(1+xb)]\\
&=(1+xb)^{-1}b+ (1+xb)^{-1}(1+xa)^{-1}a(1+xb)\\
&\doteq (1+xb)^{-1}b+(1+xa)^{-1}a =T'(x).
\tag*{$\Box$}
\end{align*}
\end{proof}

\section{Consequences for the Spectral Action}\label{sec:consequences}

Theorem \ref{thm:extension by fluctuation} assures that given a regular $p$-dimensional spectral triple $(\A,\H,\DD)$ with a dimension spectrum, the triple $(\A,\H,\DD_\Ag)$ will also be regular, $p$-dimensio-nal and possessing a dimension spectrum, for any $\Ag = \Ag^* \in \PDOk{0}$. However, to deduce an expansion of $\Tr \, f( \abs{\DD_\Ag} / \Lambda )$ we would need to control the behaviour of (the maximal meromorphic extension of) $\zDA$ on the verticals, as explained in detail in Section \ref{sec:zeta2hk}. This, unfortunately, does not come for free, even if we can control the behaviour of $\zeta_{T,D}$ for every $T \in \PDOk{0}$. Indeed, observe that Formula \eqref{eq:zDA} relating $\zDA$ to $\zeta_{T,D}$'s involves a holomorphic remainder $R_N(s)$, which is harmless when it comes to the poles and residues, but might contribute to the behaviour on the verticals. Also, there is no good reason to believe that a heat trace expansion of the form \eqref{ST_hk_exp} for a given $D$ will imply a similar one for $D_{\Ag}$, although this is indeed the case for commutative spectral triples and also for the noncommutative torus (cf. Lemma \ref{lem:existence of asymp for nct}). In full generality of noncommutative geometry this is a stumbling block and we list it as Problem \ref{prob:heat vs zeta} \ref{Existence of fluctuated expansion} in Chapter~\ref{chap:open}.

After revealing the blot on the landscape, let us enjoy the bright perspective. Given the heat trace asymptotic expansion of $\DD$ and $\DA$ we can:
\begin{enumerate}
\item Deduce the large-$\Lambda$ asymptotic expansion of $\SA$ and $S(\DA,f,\Lambda)$ with the help of Theorem \ref{thm:ST_SA}.

\item Express the coefficients of the expansion of $S(\DA,f,\Lambda)$ as noncommutative integrals of operators polynomial in $\Ag$, via Formulae \eqref{eq:zeta TDA}, \eqref{eq:Kn}, \eqref{Y in OP}, \eqref{X}.
\end{enumerate}

Let us illustrate it in the case of $\ahd$ with a simple dimension spectrum:

\begin{theorem}
\label{thm:SA fluctuated simplified}
Let $(\A,\H,\DD)$ be a regular $p$-dimensional spectral triple with a simple dimension spectrum and let $\Ag = \Ag^* \in \PDOk{0}$. Assume, moreover, that 
\begin{align}
\label{heat_simplified-fluctuated}
\Tr \, e^{-t \abs{D_\Ag}} = \sum_{\alpha \in \Sd^+} a_{\alpha}(\vert D_\Ag \vert) \, t^{-\alpha} + a_{0}(\vert D_{\Ag} \vert) + \oz(1),
\end{align}
where the, possibly infinite, series over $\alpha$ is absolutely convergent for all $t>0$.
\\
Then, the function $\zDA$ is regular at $0$ and, for any $f \in \Cl_0^r$ with $r>p$,
\begin{multline}
\label{eq:SA simple}
\Tr \, f( \abs{\DD_\Ag} / \Lambda ) = \sum_{\alpha \in \Sd^+} \Lambda^{\alpha} \!\!\int_0^{\infty}\!\! \!x^{\alpha - 1} f(x) dx  \sum_{n=0}^{\lfloor p -\Re(\alpha)\rfloor} \,\ncint P_{n}(\alpha,D, D^{-1},\Ag) \awD^{-\alpha} \\
+  f(0) \Big[ \zeta_{D}(0) +\sum_{k=1}^p \tfrac{(-1)^k}{k}\,\ncint (\Ag \,D^{-1})^k \Big]+ \oinf(1),
\end{multline}
where $P_{n} \in \PDOk{-n}$ are polynomials in all variables and of degree $n$ in $\Ag$: 
\begin{align}\label{P}
P_0  = \bbbone, && P_1 = -\alpha \Ag D^{-1}, && P_2 = \tfrac{\alpha}{4}(\alpha+2)(\Ag D^{-1})^2 +  \tfrac{\alpha^2}{4} \Ag^2 D^{-2}, \quad \dotsc
\end{align}
\end{theorem}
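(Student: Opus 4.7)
The strategy is to apply Corollary \ref{cor:SA_simple} to the fluctuated triple $\ahda$ and then rewrite the resulting coefficients in terms of the unperturbed $\ncint$ via Theorems \ref{2dev} and \ref{thm:zeta(0) and fluctuation}. By Theorem \ref{thm:extension by fluctuation}, $\ahda$ is regular, $p$-dimensional and has a dimension spectrum included in $\Sd$, hence simple. Corollary \ref{cor:SA_simple} applied with $T=\bbbone$, combined with the heat trace hypothesis \eqref{heat_simplified-fluctuated}, then yields both the regularity of $\zeta_{D_\Ag}$ at $0$ and
\begin{align*}
\Tr f(\awDA/\Lambda)=\sum_{\alpha\in\Sd\ahda^{+}}\Lambda^\alpha\int_0^\infty x^{\alpha-1}f(x)\,dx\,\,\ncintA\awDA^{-\alpha}+f(0)\,\zeta_{D_\Ag}(0)+\oinf(1).
\end{align*}
Passing from $\awDA$ to $\abs{\DA}$ via \eqref{SA_ker} costs only an $\OO_\infty(\Lambda^{-1})$ that is absorbed into $\oinf(1)$, and the summation range may be enlarged from $\Sd\ahda^{+}$ to $\Sd^{+}$ without altering the expansion, as the extra residues vanish.

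The constant term then reshuffles immediately: Corollary \ref{cor:difference of fluctuated zeta} part \textit{ii)} forces $\zeta_D$ to be regular at $0$ as well, so Theorem \ref{thm:zeta(0) and fluctuation} applies and delivers the bracketed coefficient of $f(0)$ in \eqref{eq:SA simple}. To translate each $\ncintA\awDA^{-\alpha}=\Rez{s=\alpha}\zeta_{D_\Ag}(s)$ (valid since $\alpha$ is a simple pole) into noncommutative integrals with respect to $\DD$, I will apply Theorem \ref{2dev} with $N=\lfloor p-\Re(\alpha)\rfloor$: the remainder belongs to $\OP^{-N-1-\Re(\alpha)}$, which has order strictly less than $-p$, hence is trace-class and holomorphic at $s=\alpha$, contributing nothing to the residue. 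Because every $K_n(Y,s)$ is, modulo operators of sufficiently negative order, a polynomial in $s$ with pdo coefficients (see \eqref{eq:Kn}), and because $\alpha$ is a simple pole of the $\zeta_D$-functions involved, the identity $\Rez{s=\alpha}\Tr K_n(Y,s)\awD^{-s}=\ncint K_n(Y,\alpha)\awD^{-\alpha}$ holds, yielding
\begin{align*}
\ncintA\awDA^{-\alpha}=\ncint\awD^{-\alpha}+\sum_{n=1}^{\lfloor p-\Re(\alpha)\rfloor}\ncint K_n(Y,\alpha)\awD^{-\alpha}.
\end{align*}

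The final step is to expand $Y$ via Lemma \ref{lem:Y is a pdo} in powers of $X=\Ag\DD+\DD\Ag+\Ag^2$ and $D^{-2}$, and to repeatedly invoke cyclicity of $\ncint$ on $\PDOC$ (Theorem \ref{thm:ncint_trace}, available here thanks to simplicity of $\Sd$) together with $[D,\awD]=0$, so as to rewrite each $K_n(Y,\alpha)$, modulo trace-class operators, as a polynomial $P_n(\alpha,D,D^{-1},\Ag)\in\PDOk{-n}$ of degree $n$ in $\Ag$. The truncation $n\le\lfloor p-\Re(\alpha)\rfloor$ is automatic, because beyond it $P_n\awD^{-\alpha}$ has order below $-p$, is therefore trace-class, and its $\ncint$ vanishes. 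The low-order formulas in \eqref{P} then arise by direct substitution: for $n=1$, $K_1(Y,s)=-\tfrac{s}{2}Y$ with $Y\equiv XD^{-2}$ modulo $\OP^{-2}$, and cyclicity reduces $-\tfrac{\alpha}{2}(\Ag D+D\Ag)D^{-2}$ to $-\alpha\Ag D^{-1}$ under $\ncint$, matching $P_1$; the expression for $P_2$ follows analogously from the subleading term of $Y$ in \eqref{Y in OP} combined with the leading term $\tfrac{s^2}{8}X^2D^{-4}$ of $K_2(Y,s)$. The principal difficulty throughout is not analytic but combinatorial: carefully tracking which lower-order corrections inside $K_n(Y,\alpha)$ feed into $P_m$ for $m>n$, and symmetrising multiple-$\Ag$ products via the tracial identities; no new analytic input is needed beyond the ingredients already assembled.
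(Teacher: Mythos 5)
Your proof is correct and follows essentially the same route as the paper: Corollary \ref{cor:SA_simple} applied to the fluctuated triple, the operator expansion of Theorem \ref{2dev} truncated at $N=\lfloor p-\Re(\alpha)\rfloor$ (whose remainder is trace-class and hence invisible to the residues), the tracial property of $\ncint$ together with $XD^{-2}\doteq 2\Ag D^{-1}+\Ag^2D^{-2}$ and $X^2D^{-4}\doteq 2(\Ag D^{-1})^2+2\Ag^2D^{-2}$ to produce the polynomials $P_n$, and Theorem \ref{thm:zeta(0) and fluctuation} for the coefficient of $f(0)$. The only cosmetic difference is that you extract the coefficients as residues of $\zeta_{D_\Ag}$ at $s=\alpha$ using the polynomial $s$-dependence of $K_n(Y,s)$, whereas the paper first invokes Corollary \ref{cor:difference of fluctuated zeta} \textit{i)} to write $\ncintA\awDA^{-\alpha}=\ncint\awDA^{-\alpha}$ and then expands the operator at the fixed exponent $\alpha$; the two manoeuvres are interchangeable.
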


\begin{proof}
Firstly, we apply Corollary \ref{cor:SA_simple} to obtain
\begin{align*}
\Tr \, f( \abs{\DD_\Ag} / \Lambda ) = \sum_{\alpha \in \Sd^+} \Lambda^{\alpha} \!\!\int_0^{\infty}\!\! \!x^{\alpha - 1} f(x) dx \,\ncintA \awDA^{-\alpha} +  f(0) \zDA(0) + \oinf(1).\vspace*{-0.1cm}
\end{align*}
Secondly, we invoke Corollary \ref{cor:difference of fluctuated zeta} \textit{i)} to deduce that $\ncint_{\!\!\Ag} \awDA^{-\alpha} = \ncint \awDA^{-\alpha}$. \\
Thirdly, we use Formulae \eqref{eq:expansion for D_A}, \eqref{eq:Kn} and \eqref{Y in OP} with $N = \lfloor p -\Re(\alpha)\rfloor$, as in the proof Equations \eqref{eq:DA for N=2} and \eqref{eq:Y mod -3} to expand $\awDA^{-\alpha} = \sum_n P_n \awD^{-\alpha}$. Moreover, observe that it is sufficient to stop at $N = \lfloor p -\Re(\alpha)\rfloor$, because we have $$P_{N+1} \awD^{-\alpha} \in \OP^{-N-\Re(\alpha)-1} \subset \OP^{-p-\epsilon}\,\text{  for some }\,\epsilon > 0.$$ Hence, $P_{N+1} \awD^{-\alpha}$ is trace-class and $\ncint P_{N+1} \awD^{-\alpha} =0$.

Now, we utilise Formulae \eqref{eq:DA for N=2}, \eqref{eq:Y mod -3} and recall that, since $\ncint$ is a trace, $\ncint \nabla(T) \abs{D}^{-r}=0$ for any $T \in \PDOC$ and any $r\in \RR$. Denoting \vspace*{-0.1cm}
$$T\doteq T'\,\text{ when }\,\ncint T \awD^{-\alpha}=\ncint (T' +R)\awD^{-\alpha}\,\text{ for }\,R\in \OP^{-3}, \vspace*{-0.2cm}$$
we have
\begin{align*}
& \abs{D_\Ag}^{-\alpha} \doteq [\bbbone-\tfrac{\alpha}{2}(XD^{-2}-\tfrac{1}{2}X^2D^{-4})+\tfrac{\alpha^2}{8}(XD^{-2})^2]\,\abs{D}^{-\alpha}, \\
& XD^{-2} =(\Ag D+D\Ag+\Ag^2)D^{-2} \doteq 2\Ag D^{-1}+\Ag^2D^{-2}, \quad (XD^{-2})^2 \doteq X^2D^{-4},\\
& X^2 D^{-4}\doteq 2 \Ag D\Ag D^{-3}+\Ag D^2 \Ag D^{-4} +\Ag^2 D^{-2} \doteq 2(\Ag D^{-1})^2+2 \Ag^2 D^{-2},
\end{align*}
since $$\Ag D\Ag D^{-3}=(\Ag D^{-1})^2+ \Ag D^{-1}[D^2,\Ag] D^{-3}=(\Ag D^{-1})^2 \mod \OP^{-3}$$ and similarly, $$\Ag D^2 \Ag D^{-4}=\Ag^2 D^{-2} + \Ag[D^2,\Ag]D^{-4}=\Ag^2 D^{-2} \mod \OP^{-3}.$$
This gives Formulae \eqref{P} and makes it clear that every $\Ag$ is accompanied by $D^{-1}$, so that $P_n \in \PDO^{-n}$ is a polynomial in $\Ag$ of order $n$. It is also straightforward how to compute the higher $P_n$'s. 

Finally, we apply Theorem \ref{thm:zeta(0) and fluctuation} to express $\zDA(0)$ using Equation \eqref{eq: fluctuation of zeta at zero}.
\hfill $\Box$
\end{proof}

Beyond the case of a triple $\ahd$ with a simple dimension spectrum we can still express the coefficients of the asymptotic expansion of $S(f,\DA,\Lambda)$ in terms of the polynomials $P_n$, provided we have at hand the heat trace expansion \eqref{ST_hk_exp} for \emph{both $\DD$ and $\DA$}. Obviously, the analogue of Formula \eqref{eq:SA simple} would be considerably more involved, but the coefficients can be computed algorithmically from Formulae \eqref{fluc}, \eqref{eq:Kn}, \eqref{Y in OP} and \eqref{X} --- along the same lines as in Theorem \ref{thm:SA fluctuated simplified}.

It is important to stress that if $\ahd$ does not have a simple dimension spectrum then the noncommutative integral is \emph{not} invariant under fluctuations. Indeed, Formula \eqref{fluc} implies that $\ncint_{\!\!\Ag} T - \ncint T$ would, in general, involve the higher order residues. On the other hand, the highest residue $\ncint^{[d]}$ remains insensible to fluctuations --- recall Formula \eqref{ncintd DA}. More generally, we have:

\begin{proposition}
Let $\ahd$ be a regular $p$-dimensional spectral triple with a dimension spectrum of order $d$ and let $\Ag = \Ag^* \in \PDOz$. Assume that
\begin{align}
\Tr \, e^{-t \abs{D}} \;\; & = \sum_{z \in X_0} \big[\sum_{n=0}^{d} a_{z,n}(\abs{D}) \log^n t\, \big]\,t^{-z} + \Oz(t^{-p+1}), & \label{eq:SAA}\\
\Tr \, e^{-t \abs{D_\Ag}} & = \sum_{z \in X_0} \big[\sum_{n=0}^{d} a_{z,n}(\abs{D_{\Ag}}) \log^n t\, \big]\,t^{-z} + \Oz(t^{-p+1}), \notag
\end{align}
where the, possibly infinite, series over the set $X_0 \subset \CC$ are absolutely convergent for all $t>0$ and $\min \{ \Re(z) \, \vert \, z \in X_0 \} > p-1$. \\
Then, for any $f \in \Cl_0^r$ with $r>p$, we have
\begin{align}\label{SAA_inv}
\Tr \, f (\abs{D_\Ag}/\Lambda) - \Tr \, f (\abs{D}/\Lambda) = \Oinf(\Lambda^{p-1}).
\end{align}
\end{proposition}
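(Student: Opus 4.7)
The overall strategy is to first establish the refined heat-trace estimate $R(t) := \Tr(e^{-t\abs{D_\Ag}} - e^{-t\abs{D}}) = \Oz(t^{-p+1})$, and then transfer it to the $\Lambda$-expansion via the Laplace representation $\Tr f(H/\Lambda) = \int_0^\infty \Tr e^{-sH/\Lambda}\,d\phi(s)$. For the heat-trace step, note that the two assumed expansions share the same asymptotic scale $X_0$, so it suffices to prove coincidence of the heat coefficients: $a_{z,n}(\abs{D}) = a_{z,n}(\abs{D_\Ag})$ for every $z \in X_0$ and $n \in \{0,\ldots,d\}$. The fluctuated triple $\ahda$ inherits regularity, $p$-dimensionality, and a dimension spectrum of order $\leq d$ from $\ahd$ by Theorem~\ref{thm:extension by fluctuation}, so Formula~\eqref{coef_a} applies on both sides and yields
\[
a_{z,n}(\abs{D}) = \tfrac{(-1)^n}{n!}\!\!\sum_{\ell=n}^{d+1}\!\!\Gamma_{\ell-n-1}(z)\,\ncintd{\ell}\abs{D}^{-z}, \quad a_{z,n}(\abs{D_\Ag}) = \tfrac{(-1)^n}{n!}\!\!\sum_{\ell=n}^{d+1}\!\!\Gamma_{\ell-n-1}(z)\,\ncintd{\ell}_{\!\!\Ag}\abs{D_\Ag}^{-z}.
\]
Since $\Re(z) > p-1 \geq -1$ for $z \in X_0$, $z \notin -\N$ and $\Gamma_{-1}(z) = 0$, so the potentially troublesome $\ell = n = 0$ term drops, and it suffices to verify the sums for $\ell \geq 1$.

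The identity $\ncintd{\ell}\abs{D}^{-z} = \ncintd{\ell}_{\!\!\Ag}\abs{D_\Ag}^{-z}$ for $\ell \geq 1$ and $z \in X_0$ is proved in two steps. First, Corollary~\ref{cor:dominant invariance} applied to $T = \abs{D_\Ag}^{-z} \in \PDOCA$ gives $\ncintd{\ell}_{\!\!\Ag}\abs{D_\Ag}^{-z} = \ncintd{\ell}\abs{D_\Ag}^{-z}$; its hypothesis holds since $\abs{D_\Ag}^{-z} \in \OP^{-\Re(z)} \subseteq \OP^{-p+\delta}$ with $\delta = p - \Re(z) < 1$. Second, the expansion~\eqref{eq:expansion for D_A} with $N = 1$ delivers $\abs{D_\Ag}^{-z} - \abs{D}^{-z} \in \OP^{-1-\Re(z)} \subseteq \OP^{-p-\epsilon}$ for some $\epsilon > 0$; this operator is therefore trace class, whence $\ncintd{\ell}$ of it vanishes for every $\ell \geq 1$ (the associated zeta function being holomorphic at $s = 0$). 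Combining both reductions yields the desired coefficient matching and hence $R(t) = \Oz(t^{-p+1})$.

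The Laplace transfer is then executed by splitting the integral $\Tr f(\abs{D_\Ag}/\Lambda) - \Tr f(\abs{D}/\Lambda) = \int_0^\infty R(s/\Lambda)\,d\phi(s)$ at $s = \delta\Lambda$ for a small $\delta > 0$. On $(0,\delta\Lambda)$ one has $|R(s/\Lambda)| \leq C\Lambda^{p-1}s^{-p+1}$, which integrates against $d|\phi|$ to give an $\Oinf(\Lambda^{p-1})$ contribution thanks to the finite moment $\int_0^\infty s^{-p+1}\,d|\phi|(s) < \infty$ guaranteed by Proposition~\ref{prop:cl} (using $f \in \Cl_0^r$ with $r > p$); on $[\delta\Lambda,\infty)$, $R$ is uniformly bounded, while $\int_{\delta\Lambda}^\infty d|\phi|(s) = \Oinf(\Lambda^{-\infty})$ because $\phi$ has all positive moments finite. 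Summation yields the announced estimate~\eqref{SAA_inv}. The main obstacle is precisely this Laplace transfer: the $\Oz(t^{-p+1})$ control is strictly weaker than the $\oz$-control of Remark~\ref{rem:finite_exp}, so a direct invocation of Theorem~\ref{thm:f_expansion} would yield only $\oinf(\Lambda^{p-1+\eta})$ for every $\eta > 0$—not $\Oinf(\Lambda^{p-1})$—and the manual split above is needed to recover the sharp bound. A minor edge case, namely $p \leq 1$ combined with $0 \in X_0$ (where $\Gamma$ is singular at $z = 0$ and $\Gamma_{-1}(0) \neq 0$), requires separate treatment relying on Corollary~\ref{cor:difference of fluctuated zeta}; but in the typical geometric regime $p > 1$, the condition $\Re(z) > p-1 > 0$ precludes this pathology and the argument proceeds unchanged.
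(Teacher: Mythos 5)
Your proof is correct, and its core coincides with the paper's: the paper likewise reduces everything to the identities $\ncintd{k}_{\!\!\Ag}\vert D_\Ag\vert^{-z}=\ncintd{k}\awD^{-z}$ for $z\in X_0$ and $k\geq 1$, and proves them exactly as you do — Corollary \ref{cor:dominant invariance} (legitimate since $\vert D_\Ag\vert^{-z}\in\OP^{-p+\delta}$ with $\delta=p-\Re(z)<1$) followed by the expansion \eqref{eq:expansion for D_A}, whose remainder lies in $\OP^{-\Re(z)-1}$, is trace-class, and so has vanishing higher residues. Where you genuinely diverge is the transfer to the spectral action: the paper simply cites Theorem \ref{thm:ST_SA} with Formulae \eqref{ST_SA:Y} and \eqref{coef_a}, glossing over the fact that the remainder control in Theorem \ref{thm:f_expansion}/Remark \ref{rem:finite_exp} is of $\oo$-type while both the hypothesis and the conclusion \eqref{SAA_inv} are $\OO$-statements; your route — first matching the heat coefficients to get $\Tr(e^{-t\vert D_\Ag\vert}-e^{-t\awD})=\Oz(t^{-p+1})$, then splitting the Laplace integral at $s=\delta\Lambda$ and using the finite moment $\int_0^\infty s^{1-p}\,d\vert\phi\vert(s)<\infty$ from Proposition \ref{prop:cl} together with the rapid decay of the tail of $\vert\phi\vert$ — actually delivers the sharp $\Oinf(\Lambda^{p-1})$ bound, so on this point your argument is more complete than the paper's one-line citation. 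Your flag of the corner case $p\leq 1$ with $0\in X_0$ is also pertinent: both proofs tacitly use $\Gamma_{-1}(z)=0$, i.e. $z\notin-\N$ for $z\in X_0$, which is automatic only when the elements of $X_0$ have positive real part, and the paper does not address this either.
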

Note that $X_0$ here coincides with the one in Theorem \ref{thm:ST_SA}, if we choose $r_1 = -p-1$.
\begin{proof}
On the strength of Theorem \ref{thm:ST_SA}, with Formulae \eqref{ST_SA:Y} and \eqref{coef_a}, it is sufficient to check that $\ncintd{k}_{\!\!\Ag} \vert D_\Ag \vert^{-z} = \ncintd{k} \awD^{-z}$ for all $z \in X_0$ and $k\in \{0,1,\dotsc,d\}$. \\
But with $\Re(z) > p -1$, we have, by Corollary \ref{cor:dominant invariance} and Formula \eqref{eq:expansion for D_A} respectively, $$\ncintd{k}_{\!\!\Ag} \vert D_\Ag \vert^{-z} = \ncintd{k} \vert D_\Ag \vert^{-z} = \ncintd{k} \vert D \vert^{-z} + \ncintd{k} R(z) = \ncintd{k} \awD^{-z},$$ since $R(z) \in \OP^{-\Re(z) - 1}$ is trace-class.
\hfill $\Box$
\end{proof}

In the commutative case, as well as on the noncommutative torus, Formulae \eqref{eq:SAA} are simply $\Tr e^{-t \, T} = a_{p,0}(T) \,t^{-p} + \Oz(t^{-p+1})$, with $T \in \{\vert D \vert, \vert D_{\Ag} \vert\}$. In this case, Equation \eqref{SAA_inv} is interpreted as the invariance of the dominant term under the fluctuations. However, we can see that if the dimension spectrum of a given spectral triple is non-simple and/or has poles in the vertical strip $p > \Re(z) > p-1$, then not only the dominant term is immune to the fluctuations.

\begin{definition}
\label{def:tadpole}
Let $(\A,\H,\DD)$ be a regular$p$-dimensional spectral triple with a simple real dimension spectrum.\\ The {\it tadpole} \index{tadpole@tadpole} Tad$_{D_\Ag}(\alpha)$ of order $\alpha$ is the linear term in $\Ag$ in the asymptotics \eqref{eq:SA simple}. 
\end{definition}
We have just shown that 
\begin{align*}
& \text{Tad}_{D_\Ag}(\alpha)=(-\alpha \int_0^\infty x^{\alpha-1}f(x)\,dx)\,\ncint \Ag D^{-1} \,\text{ if }\,\alpha\in \Sd^+,\\
& \text{Tad}_{D_\Ag}(0)=-f(0)\ncint \Ag D^{-1}.
\end{align*}
See also \cite[Proposition 3.5]{Tadpole}.

This notion of a tadpole is borrowed from quantum field theory where $\Ag D^{-1}$ is a one-loop graph with a fermionic internal line and only one external bosonic line $\Ag$, thus looking like a tadpole. There are no tadpoles in commutative geometries on manifolds, also the ones with boundaries and torsion, or on noncommutative tori \cite{Tadpole1,Tadpole,ILVTorsion}. The vanishing of tadpoles means that a given geometry $(\A,\H,\DD)$ is a critical point for the spectral action \cite[p. 210]{ConnesMarcolli}. On the other hand, the existence of tadpoles means a priori the instability of the quantum vacuum --- see \cite{Gere,ILS}.

\section{Operator Perturbations}
\label{sec:pert}

The spectral action \eqref{SA} is only a part of the large programme on the differentiation of operators and on perturbation theory. There exist several strongly related approaches and we only briefly sketch a few of them.

The first approach is the Lifshits formula given, for two selfadjoint, possibly unbounded, operators $H_0$ and $V$ acting on $\H$, by
\begin{align}
\label{trace formula}
\Tr\,[f(H_0+V)-f(H_0)]=\int_{-\infty}^{\infty}  \xi(\lambda) \,f'(\lambda) \,d\lambda,
\end{align}
where $\xi$ is the so-called Krein shift function (see for instance \cite{Chattopadhyay}). If the perturbation $V$ is a trace-class operator and $R(A)$ is the resolvent of $A$, the holomorphic function $$G(z)\vc \det\big(\bbbone +V \,R(H_0)(z)\big)$$ satisfies $$G^{-1}(z)G'(z) =\Tr[R(H_0)(z)-R(H_0+V)(z)]$$ and we define $$\xi(\lambda)\vc \pi^{-1}\lim_{\epsilon \to 0}\, \arg G(\lambda+i \epsilon) \text{ for almost all }\lambda.
$$
Then, we get 
\begin{align*}
&\log G(z)=\int_{-\infty}^\infty \xi(\lambda)\,(\lambda-z)^{-1} \,d\lambda \,\,\text{ when }\Im(z)\neq 0,\\
&\int_{-\infty}^\infty \vert \xi(\lambda) \vert \, d\lambda \leq \Tr V \qquad \text{and} \qquad \Tr(V)=\int_{-\infty}^\infty  \xi(\lambda)\,d\lambda.
\end{align*}
Moreover, \eqref{trace formula} holds true at least for functions $f$ which are smooth and compactly supported. Remark that the function $\xi$ is a spectral shift, because, when $\lambda$ is an isolated eigenvalue of both $H_0+V$ and $H_0$ with respective multiplicities $m$ and $m_0$, then $\xi(\lambda+0)-\xi(\lambda-0)=m_0-m$ and $\xi$ has a constant integer value in any interval located within the resolvents sets of both $H_0+V$ and $H_0$.\\
Formula \eqref{trace formula} is commonly employed in the scattering theory (see \cite{simon2005trace}) and has also been adapted to the noncommutative framework for the computation of the spectral action beyond the weak-field approximation, as in \cite{ILVWeak}.

Another approach is the following: Assume that the unbounded operator $H_0$ is selfadjoint and $V$ is also selfadjoint, but bounded. Let $\{\lambda_k\}_{k=1}^\infty$ be the eigenvalues of $H_0$ counted here with their multiplicities and let $\{\psi_k\}_{k=1}^\infty$ be the corresponding normalised eigenvectors. 
If $$V_{r,s} \vc \langle V\psi_{r},\,\psi_s \rangle,$$ we have the following Taylor asymptotic expansion for $f \in C_c^{N+1}(\RR)$:\vspace*{-0.2cm}
\begin{align}
&\hspace*{-0.6cm} \Tr\,f(H_0+V)=\label{Taylor expansion by divided differences}\\
& \Tr\,f(H_0) + \sum_{n=1}^{N-1} \tfrac{1}{n} \sum_{i_1,\cdots,i_n} V_{i_1,i_2}\cdots V_{i_{n-1},i_{n}}V_{i_n,i_1}\,(f')^{[n-1]}(\lambda_{i_1},\ldots,\lambda_{i_n}) +R_{H_0,f,N},\nonumber \vspace*{-0.2cm}
\end{align}
where $(f')^{[m]}$ is the divided difference of order $m$ of $f'$ and, of course, the difficulty is to control the remainder $R_{H_0,f,N}$. Defining \vspace*{-0.15cm}
$$\R_{H_0,f,N} \vc f(H_0+V)-\sum_{n=0}^{N-1} \tfrac{1}{n!}\, \tfrac{d^n\,}{dt^n}\vert_{t=0}\,f(H_0+tV)
,
$$
where the G\^ateaux derivatives $ \tfrac{d\,}{dt}\vert_{t=0}\,f(H_0+tV)$ are taken in some uniform topology, the strategy is to prove that
\begin{align}
\label{reste}
R_{H_0,f,N}=\Tr\, \R_{H_0,f,N}\quad \text{and}\quad \Vert \R_{H_0,f,N}\Vert = \OO_{\Vert V \Vert \to 0}(\Vert V \Vert^N).
\end{align}
Recall first that the G\^ateaux derivative of a function $f: X \to Y$ between two locally convex topological vector spaces $X,\,Y$, is  $$f'(h)(v) \vc \lim_{t\to 0}\, t^{-1}\,[f(h+tv)-f(h)] \text{ for }h,\,v\in X.$$
It is linear in $v$ if $X,\,Y$ are Fr\'echet spaces.\\
This approach is used in \cite{van2011perturbations} or \cite[Section 7.2.2]{WalterBook} to compute the spectral action $$S_\DD(V)\vc\Tr\,f(\DD+V)$$ via \eqref{Taylor expansion by divided differences} for a finitely summable spectral triple $\ahd$, setting $H_0=\DD$ and under the following assumptions:
\begin{enumerate}
\item $f(x)=\int_0^\infty e^{-tx^2}\,d\phi(t)$ with a positive measure $d\phi$.
\item $\int_0^\infty t^\alpha\,\Tr \,  [\vert\DD \vert^\beta \,e^{-t(\epsilon \DD^2 -\gamma)}] \, d\phi (t) < \infty$ for any $\alpha,\beta,\gamma>0$ and $0\leq \epsilon <1$.
\item Moreover, $V\in \B^2(\H)$ -- as defined below.
\end{enumerate}
The derivation $\delta'$ given in \eqref{regularity} defines a family of seminorms $\{\norm{\delta'^n(T)}\}_{n\in \N}$ and the vector spaces $$\B^n(\H)\vc \set{T\in \B(\H):\, \Vert \delta'^k(T) \Vert <\infty \text{ for all } k\leq n}$$ become Fr\' echet spaces, implying the Fr\' echet differentiability of the spectral action in terms of the perturbation $V$. \\
Remark that if the triple is regular, any selfadjoint one-form in $\Omega_\DD^1(\A)$  is in $\B^2(\H)$. Without entering into the details of \eqref{Taylor expansion by divided differences},  the main steps are an iterated use of the Duhamel formula (see \eqref{Duhamel})
\begin{align*}
e^{-t(\DD+V)^2}=e^{-t(\DD^2 +X)}=e^{-t\DD^2}-t\int_0^1 e^{-st(\DD^2+X)}\,X \,e^{-(1-s)t\DD^2}\,ds\,
\end{align*}
for the perturbation of the heat operator (the second line of the assumption guarantees the convergence of the trace of this integral), and the fact that the Taylor expansion $$S_\DD(V)=\sum_{n=0}^\infty \tfrac{1}{n!}\, S_\DD^{[n]}(0)(V,\cdots,V)$$ has coefficients which are given by $$S_\DD^{[n]}(0)(V,\cdots,V)=(n-1)!\sum_{i_1,\cdots,i_n} V_{i_{n-1},i_{n}}V_{i_n,i_1}\,f'(\lambda_{i_1},\ldots,\lambda_{i_n}).$$

Formula \eqref{Taylor expansion by divided differences} has been generalised in \cite{skripka2014asymptotic} under the only assumption that the selfadjoint operator $H_0$ has a compact resolvent (but no positivity of $H_0$ nor summability condition on its spectrum is required), $V=V^*$ is bounded and $f\in C^{N+1}(\RR)$ with compact support ($f$ is not necessarily positive or even). Even if the proof is more subtle it essentially goes through the quoted steps. For instance, using the inclusion $$C^{N+1}_c(\RR) \subset \set{f:\,f^{(n)} \text{ and }\F(f^{(n)}) \in L^1(\RR), \,n=0,\cdots,N},$$
the exponentials appearing in the iterated Duhamel formula are now unitaries so the second line of the assumption is not necessary.

Further generalisation into the theory of multiple operator integrals, compatible with the formalism of differentiation of operator functions, is possible. It relates to the important notion of spectral flow in the setting of type II von Neumann algebras --- see, for instance, \cite{azamov2007operator}.  

\pagebreak

%
%
%

\chapter{Open Problems}
\label{chap:open}

\abstract{As a desert, we serve a number of open problems connected with the subject matter of the book. Some of them consider the general framework of spectral triples and its possible extensions, while the other are more specific and relate to the properties of the spectral action. The problems are essentially of mathematical nature, though, at least in some cases, the conceptual skeleton strongly depends upon the input from physics. To our mind, the solution to each of these stumbling blocks would advance our understanding of the foundations and implications of the Spectral Action Principle. We therefore cordially invite the Reader to contemplate the list below, both from mathematical and physical perspectives.}

\begin{enumerate}[label=\arabic*),wide=0pt,itemsep=4pt]

\item  \textit{Existence of spectral triples.}\label{The space of Dirac operators} Despite quite a few illustrative examples, the territory of spectral triples remains vastly uncharted. Beyond the almost-commutative enclave, the constructive procedures are available in some specific contexts: isospectral deformations \cite{ConnesIsospectral}, AF $C^*$-algebras \cite{Christensen3} or crossed products \cite{ST_crossed_products,Iochum-Masson}.

The first road towards spectral triples is to fix a $C^*$-algebra $\bar{\A}$ \footnote{Such a fixed $C^*$-algebra can come from physics --- as the natural algebra of observables of a given system (see e.g. \cite{Haag,Keyl,Strocchi}).} and then to find (and classify!) all possible smooth dense $^*$-subalgebras $\A$, along with the compatible operators $\DD$ acting on a chosen Hilbert space $\H$. The ultimate aim would be to understand the `space of the operators $\DD$', which is the domain of the spectral action functional \eqref{SA}. This would certainly require the theory of Fredholm modules and $K$-homology \cite{HigsonKhomology}, but surely much more than that.

Alternatively, one can fix a sequence of real numbers seen as the spectrum of $\DD$ and seek a compatible algebra $\A$. Such a situation appears, for instance, in the context of manifolds with boundary \cite{ILVTorsion}.

\item \textit{Non-unital spectral triples.} We have focused exclusively on unital spectral triples, which correspond to compact manifolds in the classical case. Non-unital triples have been studied to some extent \cite{CGVRInt,CGRSIndex,RennieSmooth,RennieSummable}. In such a case the operator $\DD$ does not have a discrete spectrum and one needs to adjust the definition of the spectral action. The easy way is to introduce a `spatial' (or infrared) cut-off: $\Tr\, \Phi f(\abD/\Lambda)$, with a suitable $\Phi \in \A$, but its choice yearns for a deeper physical motivation. One could also promote the energy scale $\Lambda$ to a dilaton field \cite{CCscale,IochumMoyal,MoyalSA,Wulkenhaar}.

\item \textit{Twisted spectral triples.} \label{twisted} 
Motivated by the type III noncommutative geometry, where there are no finitely summable spectral triples \cite{Connes-Mosco}, a few notions of twisted spectral triples emerged. The original one relaxes the condition $[\DD,a] \in \B(\H)$ to the demand of the boundedness of ``twisted commutators'': $\DD a - \sigma(a) \DD \in \B(\H)$, with a given automorphism $\sigma$ of the algebra $\A$ \cite{CNNR,Fath-Khalk,MoscoviciTwisted,Ponge-Wang}.
The notions of reality, pdos, regularity etc. get modified accordingly \cite{TwistedReality,Landitwisted,Mat-Yunck}. A related notion of \textit{modular spectral triples} \cite{CPRmodular} is motivated by deformation of some physical models or quantum groups \cite{BCL,Greenfield,Iochum-Masson,SeniorKaad,Matassa}. There is also an increasing interest for the `twist' in particle physics \cite{Devas-Marti}. 
However, the spectral action has not yet been systematically approached in this context, to our best knowledge.

\item \label{sec:Lorentz} \textit{The Lorentzian signature.} An insistent problem, which has been swept under the carpet is that of the signature: The notion of a spectral triple generalises Riemannian manifolds with the Euclidean signature. Alas, the spacetime we are living in has a Lorentzian signature, instead. At the almost-commutative level, one can bypass the problem using the old (and somewhat murky) trick of the Wick rotation \cite{LizziWick}. However, a rigorous approach requires a deep conceptual change:

A distinctive feature of spaces with the signature $(-,+,+,\dotsc)$ is the existence of a causal structure. Needless to remind that the micro-causality is one of the key axioms in quantum field theory. A rigorous notion of causality suitable for a noncommutative geometry has been proposed \cite{CQG2013} and studied \cite{UNIV2017,PROC2015,PRD2017,SIGMA2014,JGP2015,causMoyal}.

The Dirac operators $\Dslash$ on pseudo-Riemannian manifolds, formally similar to their Euclidean colleagues, are drastically different \cite{Baum}. First of all, $\Dslash$ is \emph{not} selfadjoint in $L^2(M,\SS)$. Secondly, $\Dslash$ has infinite dimensional eigenspaces (in particular, $\dim \Ker \Dslash = \infty$), hence no function $f$ can render $f(\abs{\Dslash})$ trace-class.\\
Let us also point out that in the Lorentzian context a nonunital algebra is mandatory, as compact spacetimes always contain causal loops, which is undesirable. The problem of compactification of spacetimes, or ``attaching a causal boundary to the spacetime'' is an old-standing and a hard one --- both on the conceptual and on the technical side (cf. \cite{MinguzziCompactification} for a nice abstract mathematical formulation).

Yet another serious obstacle is the presence of the notorious spacetime singularities, which seem to be an inherent element of our Universe, as attested by the famous Hawking--Penrose theorems (see \cite{Wald} or any other mathematically oriented textbook on general relativity). On the mathematical side, it creates problems with the incompleteness of spacetime manifolds (see, for instance, \cite{Beem}).

As one can see, the algebraic situation is rather dramatic already at the commutative level. Nevertheless, the programme of ``pseudo-Riemannian spectral triples'' is being systematically developed 
\cite{Besnard16a,TwistedLorentz,Rennie12,KoenIndefinite,UNIV2017,PROC2015,F4,F5,Pas,Stro,WalterCylinder}. 
Its central idea is to work with a Krein space \cite{Bog}.\\
The ``Lorentzian spectral action'' has not yet even emerged from the depth. A result, which might shed some light on it is the Lorentzian version of the index theorem~\cite{BarIndex}.

\item \label{prob:dim_sp} \textit{The dimension spectrum.} The computation of the dimension spectrum of a given spectral triple is a formidable task. Beyond the almost-commutative realm it has been accomplished only for a few examples listed on p. \pageref{dim_sp examples}. Firstly, the existence of the dimension spectrum is by no means automatic --- there exist spectral zeta functions admitting no meromorphic continuation (see \cite{Schrohe} and \cite[Section 5]{HeatEZ}). Secondly, even if we do have a meromorphic extension of $\zD$, the one of $\zTD$ does not come for free, even for $T \in \A$. In the worked out examples, one firstly unravels the meromorphic extension of the basic zeta function $\zD$ and then constructs the ones for $\zTD$ by expressing the operators $T \in \PDOz$ in the eigenbasis of $\DD$. Whereas the poles of $\zeta_D$ and $\zeta_{T,D}$ do not coincide in general (see Example \ref{ex:dim_sp}), in all known cases we actually have $\Sd \subset \mathlarger{\mathlarger{\cup}}_{k \in \N} \PP(\zeta_{\vert D\vert^{-k},D})$. So it seems as if the whole dimension spectrum is actually encoded in the operator $\DD$.\\
Is this a general fact or a specific property of the worked out examples?

\item \label{prob:heat vs zeta} \textit{Heat traces and zeta functions.} In Chapter \ref{chap:asymptotic} we have spied into the intimate interplay between the small-$t$ asymptotic expansion of $\hKH$ and the meromorphic extension of $\zKH$. But on the route we only employed the properties of general Dirichlet series and the geometric origin of the operators $H$ and $K$ remained concealed. It would be highly desirable to understand what impact might the geometrical dwelling of $H$ and $K$ have on the problems we encountered. Concretely:

\begin{enumerate}[label=\alph*)]

\item \label{prob:zeta 2 heat} When is $\zKH$ of polynomial growth on the verticals (in which case $\hKH$ admits an asymptotic expansion with the vanishing contribution \eqref{Gk})?

\item \label{prob:ht2zeta_pat} Are the pathologies illustrated on Figure \ref{fig:ht2zeta_pat} always avoided?

\item When is the asymptotic expansion of $\hKH$ actually convergent?

\item \label{D vs Dsq} When is the existence of the asymptotic expansion of $\Tr \,e^{-t \vert\DD\vert}$ equivalent to the existence of an expansion of $\Tr\, e^{-t \DD^2}$?

\item \label{Existence of fluctuated expansion} When does the existence of an expansion of $\Tr e^{-t \abs{\DD}}$ imply the existence of an expansion of $\Tr e^{-t \abs{\DA}}$ for a suitable fluctuation $\DA = \DD + \Ag$? 

\end{enumerate}

\item \label{The role of f} \textit{The role of the cut-off function.} Arguably, the smooth cut-off function $f$ involved in the definition of the spectral action $\SA$ is of non-geometric origin. It might encode some physical input (such as the parameters of the Standard Model), however, one has to keep in mind that for general noncommutative geometries --- and for the almost-commutative ones, but beyond the asymptotic expansion --- the full shape of $f$ enters into the game. Adopting a puritanical point of view, one should set $f = \chi_{[0,1]}$ and create the tools to study the asymptotic expansion of the raw spectral action $N_{\abD}(\Lambda)$.

\item \textit{The coefficients of the asymptotic expansion.} In the almost-commutative framework the coefficients of the large-energy asymptotic expansion of the spectral action have a pellucid geometric and physical interpretation --- as pictured in formula \eqref{SA_SM}. However, beyond the homely classical ground, the situation is more obscure. The ``curvature'' has been defined \cite[Defintion 1.147]{ConnesMarcolli} and computed on a conformally rescaled noncommutative 2-torus \cite{ConnesModular,ConnesTretkoff,FatKhal}. However, such an interpretation remains controversial, as we have imperceptibly entered into the domain of nonminimal operators, which might also involve the torsion \cite{IMHeat,SitarzNonminimal}.\\
The comprehension of geometry and physics behind the coefficients of the general asymptotic expansion of the form \eqref{ST_SA:Y} is a serious challenge.

\item \label{expansion and distribution} \textit{Distributional approach to the asymptotics.} The framework presented in Section~\ref{sec: On the asymptotics of distributions} is very appealing. It would be desirable to employ it beyond the realm of classical pdos. To that end, one would need to understand how the assumption \eqref{hyp: asymptotic ex} should be reformulated for general elements in $\K'(\RR,\L(\H))$ and when it is met.
      
\item \label{Beyond the asymptotic expansion} \textit{Beyond the asymptotic expansion.} As forewarned on p. \pageref{asymptotics vs exact}, the asymptotic expansion of the spectral action (even in its full glory) might fail to capture the `exponentially small physical phenomena' \cite{Boyd} encoded in the nonperturabtive expression $\Tr f(\abD/\Lambda)$. The pressing question, both for mathematicians and physicists, is: What is actually lost? More concretely (see also Remark \ref{rem:different operators with same asymp}), assume that $\Tr e^{-t\,\abD} - \Tr e^{-t\,\vert \DD'\vert} = \Oz(t^{\infty})$. What can be said about the operators $\DD$ versus $\DD'$?\\
If $\Dslash$ is the standard Dirac operator on the flat 3-torus, then the information about the chosen spin structure $\SS$ is concealed in the $\Oinf(\Lambda^{-\infty})$ term of the spectral action expansion --- see Example \ref{ex:Poisson_tori}. Is this a general feature of commutative geometries? \\
A related problem is the impact of different possible selfadjoint extensions of $\DD$ on the spectral action --- see \cite[Theorem 7.2]{EstradaFulling1999} for the casus of a differential operator.

\item \textit{Spectral action for the noncommutative torus.} 
In the spectral action for the noncommutative torus (Theorem \ref{thm:SA for nc torus}), the constant term in $\Lambda$ coincides with its classical value for the commutative torus after the swap $\tau \leftrightarrow \int_{\TT^d}$. This has been demonstrated in dimensions 2 and 4. We conjecture that this holds true in arbitrary dimension.

The asymptotic expansion of the spectral action for noncommutative tori relied heavily on the Diophantine hypothesis. It would be very instructive to see how does it look like (if it exists at all!) for $\Theta$, which does not meet the Diophantine condition. 

\end{enumerate}

%
%
%

\appendix

\chapter{Classical Tools from Geometry and Analysis}
\label{classical tools}

\section{About ``Heat Operators''}
\label{About the exponential of an unbounded operator}

Recall first that if $P$ is an unbounded operator, then $e^{\,t\,P}$ cannot be defined by the series $\norm{\cdot}$-$\lim_{N\to \infty}\sum_{n=0}^N \tfrac{t^n}{n!}\,P^n$. What is needed is that $G:\,t \geq 0 \to e^{\,t\,P}\in \B(\H)$ is a strongly-continuous contraction semigroup (i.e. $G(0)=\bbbone, G(s)\,G(t)=G(s+t)$, $\norm{G(t)}\leq 1$ and the function $t \geq 0 \to G(t)\,\psi$ is norm-continuous for each $\psi\in H$). Then, a closed densely defined operator $P$ is the generator of this semigroup, i.e. by definition $G(t)=e^{\,t\,P}$, if and only if $\RR^+$ is contained in the resolvent set of $P$ and $\Vert(P-\lambda)^{-1}\Vert\leq \lambda^{-1}$ for all $\lambda>0$ (see \cite[Chapter 14]{Grubbbook1}).\\
Moreover, $e^{\,t\,P}=\text{strong-lim}_{n\to \infty} (\bbbone-(t/n)P)^{-n}$ and
\begin{align}
(P-\lambda)^{-1}=-\int_0^\infty e^{-\lambda\, t}\,e^{\,t\,P}\,dt,\quad \text{for } \quad \Re(\lambda)>0
\end{align}
holds true in this generality since actually the right half-plane $\{ \lambda \in \CC \,\,\vert\, \,\Re(\lambda)>0\}$ is in the resolvent set of $P$ and $\Vert(P-\lambda)^{-1}\Vert\leq \Re(\lambda)^{-1}$. The generator $P$ is upper semibounded: $\Re(\langle P\psi,\psi \rangle)\leq 0$, for all $\psi \in \Dom P$.\\
Sometimes, the generator of $G(t)$ is denoted by $-P$ like in \cite[Section X.8]{SimonReed2}.

We can rephrase previous results as a constructive way to get the exponential. Let $P$ be an unbounded operator on the Hilbert space $\H$ such that $P-\lambda$ is invertible in the sector $$
\Lambda_\theta \vc \{r\,e^{i\phi} \,\,\vert\,\, r\geq 0, \,\abs{\phi}\geq \theta\},\, 0<\theta < \tfrac{\pi}{2}$$\index{_a6lambda_n(T)1@$\Lambda_\theta$} and assume there exists $c$ with
\begin{align}
\label{hyp:resolvent decay}
\norm{(P-\lambda)^{-1}} \leq c\,(1+\abs{\lambda}^2)^{-1/2},\quad \forall \lambda \in \Lambda_\theta.
\end{align}
This allows to define 
\begin{align}
\label{heat as integral}
e^{-t\,P} \vc \tfrac{i}{2\pi} \int_\C e^{-t\,\lambda}\,(P-\lambda)^{-1}\,d\lambda,
\end{align}
where $\C=\C_{r_0,\theta}$\index{_zC@$\C_{r_0,\theta}$} (with $\theta<\tfrac{\pi}{2}$) is the path from $\infty$ along the ray $r\,e^{i\theta}$ for $r\geq r_0$ followed by a clockwise circle around zero of radius $r_0$ and ending at infinity along the ray $r\,e^{-i\theta}$. Since the two rays lie in the right-half plane, the exponential decay of $e^{-t\,\lambda}$ guarantees the convergence of the integral. We will need that fact for instance in \eqref{f(H/Lambda)} of Section \ref{subsec:Laplace} on Laplace transform. 

To close these remarks on heat operators, we recall that many functions can be defined through an integral along a curve in $\CC$. For instance, given a selfadjoint operator $P$, one defines (cf. \cite[\S 10]{Shubin}), for any $z \in \CC$ with $\Re(z)>0$, $$P^{-2z} \vc\tfrac{1}{i2\pi} \int_{\lambda \in \C} \lambda^{-z}\,(\lambda-P^2)^{-1}\, d\lambda$$ along the curve $\C$ as in \eqref{heat as integral}. Typically, to control the norm-convergence of the integral, one uses
\begin{align}
\label{eq:estimate resolvent}
&\Vert (\lambda-P)^{-1}\Vert= \sup_{\mu\in \spec P} \vert \mu-\lambda \vert^{-1}= \text{dist}(\lambda, \spec P)^{-1} \leq \abs{\Im (\la)}^{-1},\\
&\Vert P(\la-P)^{-1}\Vert =\sup_{\mu\in \spec P} \vert \mu\vert\,\vert\mu-\lambda \vert^{-1}.\label{eq:estimate resolvent1}
\end{align}
Moreover, if $P=P^*$ is positive,
\begin{align}
\label{eq:estimate resolvent for positive}
\Vert P(\la-P)^{-1}\Vert \leq \left\{ \begin{array}{lr}\abs{\la}\abs{\Im (\la)}^{-1}& \text{ if }\, \Re(\la)\geq 0,\\1 &\text{ if }\,\Re(\la)<0, \end{array} \right.
\end{align}
which follows from $\Vert P(\la-P)^{-1}\Vert =\sup_{\mu\in \spec P} f(\mu)$ with $f(\mu)=\mu\,\abs{\mu-\la}^{-1}$ and the computation of the maximum of $f$.

\section{Definition of pdos, Sobolev Spaces and a Few Spectral Properties}
\label{sec:Def of pdo}

There are several good textbooks on pdos: \cite{Gilkey1,Grubbbook,Hormander,Shubin,Taylorbook}. For the heat trace asymptotics of a pdo we closely follow  \cite[Section 4.2]{Grubbbook} and the nice notes \cite{Schrohe2014}. 
See also \cite{Avramidi,Gilkey2,Kirsten} for the computation of heat kernel coefficients.

To study pdos on $\RR^d$ we need a few basic definitions:

\smallskip

-- $\langle x\rangle\vc (1+\norm{x}^2)^{1/2}$ and $\langle x,\eta\rangle\vc (1+\norm{x}^2+\abs{\eta})^{1/2}$\index{_Ax@$\protect\langle x,\eta\protect\rangle$} for $x\in \RR^n, \,n\in \N^*$, $\eta\in \CC$.

\smallskip

-- $S^m \vc \{p:\,(x,\xi)\in \RR^d\times \RR^d \to\CC\,\text{ such that }\, \vert \partial_\xi^\alpha \partial_x^\beta p(x,\xi) \vert \leq c_{\alpha\beta} \langle \xi\rangle^{m-\abs{\alpha}}\,\}$ is the set of {\it symbols of order $m\in \RR$}. Here $\alpha,\,\beta$ are in $\N^d$ with $\abs{\alpha}=\sum \alpha_i$. \\
This yields a family of seminorms on $S^m$ defined by $$\abs{p}_{m,\alpha,\beta} \vc \sup_{x}  \vert \partial_\xi^\alpha \partial_x^\beta p(x,\xi) \vert  \langle \xi\rangle^{-m+\abs{\alpha}}.$$ 
The set of smoothing symbols \index{symbol!smoothing} is $S^{-\infty}\vc \cap_m \,S^m$.

\smallskip

-- The symbol $p\in S^m$ has the expansion $p \sim \sum_{j=0}^\infty p_{m-j}$, when $p_{m-j} \in S^{m-j}$ and for each $n$, $p-\sum_{j=0}^n p_{m-j} \in S^{m-n}$.\\
It is named {\it classical} \index{symbol!classical} if, moreover, for all $j$, 
$$
p_{m-j}(x,t\xi)=t^{m-j}p(x,\xi),\, \forall (x,\xi)\in \RR^d\times \RR^d,\, \norm{\xi} \geq 1,\,t\geq 1.
$$
It is said to be {\it elliptic} \index{symbol!elliptic}when $p(x,\xi)$ is invertible and $\vert p(x,\xi)^{-1}\vert \leq c\,\langle \xi\rangle^{-m}$ for all $x$ and $\norm{\xi}>r$ for some $r\geq 0$. \\
A classical symbol $p\in S^m$ is elliptic when $p_m(x, \xi)$ is invertible for all $x\in \RR^d$  and $\norm{\xi}=1$. \\
From now on, {\it we assume in this appendix that all symbols are classical.}

\smallskip

--  Every symbol $p$ gives rise to a pdo acting on $u$ in the Schwartz space $\Scl(\RR^d)$ via the inverse Fourier transform by 
$$
(\mathcal{O}(p)\, u)(x)\vc \F^{-1}[(p(x,\cdot)\,\F[u](\cdot)](x) = \int_{\RR^d} e^{i\,2\pi \,x.\xi}\,p(x,\xi)\,\F[u](\xi) \,d\xi.\index{_zOozzzz@$\mathcal{O}(p)$}  \index{symbol!pdo associated to}
$$
This definition is compatible with the product of operators as for $p\in S^{m_1}, \,q\in S^{m_2}$ there exists a symbol called the Leibniz product of $p$ and $q$, denoted $p\circ q\in S^{m_1+m_2}$\index{_Aocirc@$p\circ q$}, such that $\mathcal{O}(p)\,\mathcal{O}(q)=\mathcal{O}(p\circ q)$ with the expansion  
$$(p\circ q)(x,\xi)\sim \sum_\alpha \tfrac{(-i)^{\abs{\alpha}}}{\alpha!} (\partial^\alpha_\xi p)(x,\xi)\,(\partial^\alpha_x q)(x,\xi).$$
In particular, when $p\in S^m$ is elliptic, there exists a symbol $q$ called {\it the parametrix}\index{par@parametrix} such that $p\circ q -1$ and $q\circ p -1$ are both in $S^{-\infty}$. 

\smallskip

-- The Sobolev spaces \index{Sobolev spaces} read $$H^s(\RR^d)\vc \{u\in\Sclp(\RR^d)\,\,\vert\,\, \langle \xi\rangle^s\,\,\F(u)\in L^2(\RR^d)\}$$\index{_zHzzz@$H^s$} for $s\in \RR$, with scalar product $$(u,u') \vc \int \F(u)(\xi)\, \overline{\F(v)(\xi)} \,\langle \xi\rangle^{2s}\,d\xi$$ and complete for the norm $\norm{u}^2_s \vc (u,u)$. For instance $\delta_y\in H^s(\RR^d)$ if $s<-d/2$. \\
We have, $H^0(\RR^d)=L^2(\RR^d)$ and if $s>d/2$ then  $H^s(\RR^d)\subset C(\RR^d)$ (Sobolev embedding theorem). \\
When $s>d/2$, any bounded operator $A:\,H^{-s}\to H^s$ is an integral operator with a Schwartz kernel given by $k_A(x,y)=(A \delta_x,\delta_y)$. 

\medskip

We now adapt previous definitions to 
\begin{center}
\it a compact boundaryless Riemannian manifold $M$ of dimension $d$,
\end{center}
so we need coordinate charts $(U,h)$ where $U$ are open sets in $M$ and $h$ are diffeomorphisms from $U$ to open sets in $\RR^d$. \\
Let $P:\,C^\infty(M)\to C^\infty(M)$; when $\phi,\,\psi\in C^\infty_c(U)$ \index{_zCmin@$C^\infty_c(M)$} (smooth functions on $U$ with compact support), the localised operator $\phi P\psi$ on $C^\infty_c(U)$ is pushed-forward as $h_*(\phi P\psi)$ on $C^\infty_c(h(U))$. The operator $P$ is a pseudodifferential operator of order $m$ when each such localisation is a pdo of order $m$ on $h(U)$. \\
Then, one extends $S^m$ to symbols on $M$ as follows:
$$
S^m(M)\vc \{ p(x,\xi)\in C^\infty(T^*M)\,\,\vert\,\, h(\phi p) \in S^m(h(U)),\,\forall (U,h,\phi)\}.
$$
Of course, $P$ is said to be elliptic or smoothing if all its local symbols have such a property and the set of classical pdos of order $m$ is denoted by $\Psi^m(M)$ \index{_apdoz@$\Psi(M)$} which defines $\Psi(M) \vc \cup_m \Psi^m(M)$.\\
For $x\in (U,h)$ one defines the principal (or leading) symbol of $P\in \Psi^m(M)$ as $$p_m(P)\vc h^*p_m(h_*(\phi P \phi)) \in S^m(M)/ S^{m-1}(M),$$ where one chooses a $\phi \in C^\infty_c(M)$ equal to 1 in vicinity of $x$.\\ One checks that this principal symbol makes sense and is invariantly defined on $T^*M$ while the total symbol is quite sensitive to a change of coordinates. Moreover, for each $p\in S^m(M)$, one constructs $P\in \Psi^m(M)$ with $p_m(P)=p$ via the partition of unity.

A new extension is possible when $P$ acts on sections of a smooth vector bundle $E$ of finite rank over $M$ equipped with a smooth inner product. So, typically, a fiber is acted upon by a matrix. 
By local triviality, on can define $H^s(M,E)$ using a partition of unity. Hence $P:\,C^\infty(M,E)\to C^\infty(M,E)$ is called a pseudodifferential operator of order $m$ if every localisation is a matrix of pdos of order $m$ for all charts $U$ over which $E$ is trivial. 
Such operators, the symbols of which are now matrices, define $\Psi^m(M,E)$\index{_apdozE(M,E)@$\Psi^m(M,E)$}. \\ The properties of classicality and ellipticity are generalised in a straightforward way.\\
In particular, $P$ has a matrix-valued kernel $K_P$, which in local coordinates reads
\begin{align}
\label{kernel of P}
k_P(x,y)=\int_{\RR^d} e^{i\,2\pi(x-y).\xi}\,p(x,\xi)\,d\xi.
\end{align}

Similarly, the Sobolev space $H^s(M)$ is defined as the set of distributions $u$ on $M$ which, in a given local patch $U$, satisfy $u\in \DD'(U)$ with $\psi u\in H^s(\RR^d)$ for all $\psi\in C^\infty_c(U)$. By the Rellich theorem, the inclusion $H^s(M) \hookrightarrow H^{t}$ is compact for any $t<s$ and even trace-class when $t+d<s$.

From the beginning, classical symbols can be seen as objects defined up to $S^{-\infty}$. It has the following consequence: $P\in\Psi(M,E)$ is smoothing (i.e. all of its local symbols are smoothing) if and only if $P$ has a Schwartz kernel $k_P$ which is smooth on $M\times M$. For instance, $\psi P\phi$ is smoothing if $\psi,\phi\in C_c^\infty(M)$ with disjoint supports.

We now recall a few classical results on pdos --- see loc. cit. at the beginning of this section. They provide links between a pdo and the same object, but viewed as an operator which has eventually several closed extensions on a Hilbert space. Recall that a bounded operator between Banach spaces is Fredholm\index{Fredholm operator} if it has a finite dimensional kernel and cokernel and a closed range. \pagebreak

\begin{theorem}
\label{thm:general on pdo}
Let $P\in \Psi^m(M,E)$. Then:

i) The extension of $P:\, H^s(M,E)\to H^{s-m}(M,E)$ is bounded for all $s\in \RR$.

ii) If $P$ is elliptic, all previous extensions are Fredholm operators, which means that there exists a Fredholm inverse which is a pdo of order $-m$. \\
In particular, when $m>0$, $P:C^\infty(M,E)\to C^\infty(M,E)$ acting on the Hilbert space $H^0(M,E)=L^2(M,E)$ has only one closed extension with the domains $H^m(M,E)$ and a spectrum either equal to $\,\CC\,$ or discrete without accumulation points except 0.  

iii) When $P:\, H^s(M,E)\to H^{s-m}(M,E)$ is invertible for some $s$, then we have $P^{-1}\in \Psi^{-m}(M,E)$.

iv) The space $\Psi^0(M,E)$ is an algebra.

v) If $P \in \Psi^m(M,E)$ with $m<-d$, then $P$ has a continuous kernel and its extension $P$ on $L^2(M,E)$ is trace-class with $$\Tr_{L^2(M,E)} P=\int_M \tr_E k_P(x,x)\,dx.$$
\end{theorem}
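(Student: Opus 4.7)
The five parts are essentially textbook results on pseudodifferential operators, so my plan is to reduce each assertion to a local computation on $\RR^d$ via a partition of unity and local trivialisations of $E$, and then invoke the symbolic calculus set up in Appendix \ref{sec:Def of pdo}. Throughout I would fix a finite atlas $(U_j,h_j)$ trivialising $E$, a subordinate partition of unity $\{\phi_j\}$, and auxiliary cut-offs $\psi_j \in C^\infty_c(U_j)$ with $\psi_j \equiv 1$ on $\supp \phi_j$, so that $P = \sum_j \phi_j P \psi_j + R$ where $R = \sum_j \phi_j P(1-\psi_j)$ is smoothing (its local Schwartz kernels are smooth since $\phi_j$ and $1-\psi_j$ have disjoint supports). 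This reduction is the workhorse of the whole proof.

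For \textit{i)}, the plan is to establish on $\RR^d$ the estimate $\|\mathcal{O}(p)u\|_{s-m} \leq C\,|p|_{m,\cdot,\cdot}\,\|u\|_s$ for $p\in S^m$, by writing $\langle D\rangle^{s-m}\mathcal{O}(p)\langle D\rangle^{-s}$ as a pdo of order $0$ (using Leibniz composition), then reducing to showing that order $0$ symbols act boundedly on $L^2(\RR^d)$ — the classical Calder\'on–Vaillancourt bound. Globalising via the partition of unity gives \textit{i)}. For \textit{ii)}, the ellipticity of $P$ yields by the symbolic calculus a parametrix $Q\in\Psi^{-m}(M,E)$ with $PQ-\bbbone$ and $QP-\bbbone$ smoothing. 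By part \textit{i)} smoothing operators map into $C^\infty(M,E)\subset H^t$ for every $t$, hence by Rellich they are compact on every Sobolev space; this makes $P$ Fredholm on $H^s\to H^{s-m}$. For the spectral dichotomy in the case $m>0$, I would apply analytic Fredholm theory to $\lambda\mapsto(P-\lambda)^{-1}$: either the resolvent set is empty, or it is non-empty and then $(P-\lambda_0)^{-1}$ is compact on $L^2$ (by part \textit{v)} applied to an iterate, or directly by Rellich since $H^m\hookrightarrow L^2$ is compact), so the spectrum is discrete with $0$ as only possible accumulation point; the selfadjointness/closedness issue reduces to showing the graph norm on $H^m$ coincides with the Sobolev norm, which follows from \textit{i)} and ellipticity.

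For \textit{iii)}, if $P$ is invertible as a map $H^s\to H^{s-m}$, take the parametrix $Q\in\Psi^{-m}(M,E)$ from \textit{ii)} and write $P^{-1}=Q+P^{-1}(\bbbone-PQ)=Q-P^{-1}R$ with $R=PQ-\bbbone$ smoothing; since $P^{-1}$ is bounded $H^{s-m}\to H^s$ and $R$ is smoothing, $P^{-1}R$ maps into $C^\infty$ with smooth kernel, hence is itself smoothing, so $P^{-1}\in\Psi^{-m}(M,E)$. Part \textit{iv)} is then the direct statement that the Leibniz product $\circ$ preserves $S^0(M,E)$, which is built into the symbolic calculus of Appendix \ref{sec:Def of pdo}.

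For \textit{v)}, the crux is the local kernel formula \eqref{kernel of P}: when $p\in S^m$ with $m<-d$, $\int_{\RR^d}|p(x,\xi)|\,d\xi\leq C\int\langle\xi\rangle^m d\xi<\infty$, so differentiation under the integral sign shows $k_P\in C(h(U)\times h(U))$; globalising gives $k_P\in C(M\times M,\End E)$. To compute the trace, I would combine the continuity of $k_P$ with the fact that $P:L^2\to H^{-m}\hookrightarrow L^2$ factors through a compact embedding with sufficient regularity to be trace-class (using $-m>d$ and the Rellich trace-class embedding recalled after the definition of $H^s$), and then verify $\Tr P=\sum_n\langle Pe_n,e_n\rangle=\int_M\tr_E k_P(x,x)\,dx$ for an orthonormal basis $(e_n)$ of $L^2(M,E)$ by approximating $k_P$ uniformly by finite rank kernels and passing to the limit. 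The main obstacle I anticipate is part \textit{i)} — specifically a clean proof of the Calder\'on–Vaillancourt $L^2$-boundedness, which requires either a careful integration-by-parts argument in the oscillatory integral defining $\mathcal{O}(p)$ or the Cotlar–Stein lemma; all other steps are essentially formal consequences of the symbolic calculus and standard functional analysis.
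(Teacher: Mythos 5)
You should first be aware that the paper does not actually prove this theorem: it is recalled as a classical result with references to the textbooks of Gilkey, Grubb, H\"ormander, Shubin and Taylor, and the only argument sketched right after the statement is precisely the backbone of your plan for \textit{ii)} and \textit{iii)} --- a parametrix constructed in local charts and patched, producing $Q,Q'\in\Psi^{-m}(M,E)$ with $QP-\bbbone$ and $PQ'-\bbbone$ smoothing. So where the paper says anything at all, your route agrees with it, and the rest of your outline (reduction of \textit{i)} to $L^2$-boundedness of order-zero symbols, Fredholmness from compactness of smoothing remainders, elliptic regularity for the uniqueness of the closed extension and the spectral dichotomy, the trace-class Sobolev embedding $H^{s}\hookrightarrow H^{t}$ for $t+d<s$ quoted in the appendix for \textit{v)}) is the standard textbook proof; invoking Calder\'on--Vaillancourt is more than is needed for classical $S^0$ symbols, but not wrong.

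Two steps of your plan need tightening. In \textit{iii)}, the justification ``since $P^{-1}$ is bounded $H^{s-m}\to H^s$ and $R$ is smoothing, $P^{-1}R$ maps into $C^\infty$ with smooth kernel'' does not follow as stated: boundedness into $H^s$ only places the range in $H^s$. You need elliptic regularity, i.e. the \emph{left} parametrix as well: from $Q'P=\bbbone+R'$ write $P^{-1}=Q'-R'P^{-1}$, so that $P^{-1}R=Q'R-R'P^{-1}R$; each term is sandwiched by a smoothing factor and is therefore bounded $H^{-t}\to H^{t'}$ for all $t,t'$, which is exactly the characterisation of a smoothing operator on a compact manifold. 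In \textit{v)}, uniform approximation of $k_P$ by finite-rank kernels controls $\int_M\tr_E k(x,x)\,dx$ and gives Hilbert--Schmidt (hence norm) convergence, but \emph{not} trace-norm convergence, so ``passing to the limit'' in the trace is unjustified as written. The standard repair is to factor $P=AB$ with $A=P\Lambda^{t}$, $B=\Lambda^{-t}$, where $\Lambda\in\Psi^{1}(M,E)$ is invertible, elliptic and positive and $t=-m/2>d/2$; both factors are pdos of order $m/2<-d/2$, hence Hilbert--Schmidt, and then $\Tr P=\int_M\int_M \tr_E\big(k_A(x,y)\,k_B(y,x)\big)\,dy\,dx$ is identified with $\int_M\tr_E k_P(x,x)\,dx$ via the composition formula for kernels and the continuity of $k_P$. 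With these two repairs your plan is a complete and correct proof of the statement the paper merely cites.
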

The inverse in $(ii)$ is obtained by the construction of a local parametrix over a local chart, which, after being patched, gives rise to two  pdos $Q$ and $Q'$ of order $-m$ such that $R=QP-\bbbone$ and $R'=PQ'-\bbbone$ are smoothing pdos. Moreover, $Q-Q'$ is a smoothing pdo. So, modulo smoothing pdos, $Q$ is the left and right inverse of $P$. \\
We now present some details from the constructive proof of this theorem for parameter-dependent symbols.

\section{Complex Parameter-Dependent Symbols and Parametrix}
\label{sec: Complex parameter-dependent symbols}

Let us be given an elliptic pdo $P\in \Psi^m(M,E)$ of order $m>0$ with the matrix symbol $p\sim \sum_{j=0}^\infty p_{m-j}$. Despite the nice unique $L^2$-extension of $P$ provided by Theorem~\ref{thm:general on pdo}, it is still interesting to look at $e^{-t \,P}$  not only as an operator on $L^2(M,E)$ but as a smoothing pdo (or, similarly, to regard the complex power $P^s$ as a pdo of order $m\,\Re(s)$).

The main idea, which we expound in some detail below, is to replace the resolvent $(P-\lambda)^{-1}$ of $P$ by a parameter-dependent parametrix, the symbol of which is under control. Since we want to control the integrand of \eqref{heat as integral}, we assume the following (uniform) parameter-ellipticity of the principal symbol of $P$: 
\begin{hypothesis}
\label{hyp: principal symbol with parameter}
The operator $(P-\lambda)^{-1}$ exists in the left keyhole region $V_{r_0,\theta}$ defined by $\C_{r_0,\theta}$ for $\theta<\tfrac{\pi}{2}$. \\
Moreover, we have the resolvent growth condition: The matrices $p_m(x,\xi)-\lambda$ are invertible for all $x,\xi$ when $\lambda \in V_{r_0,\theta}$ and 
\begin{align*}
\norm{(p_m(x,\xi)-\lambda)^{-1}} \leq (1+\norm{\xi}^2+\abs{\lambda}^{2/m})^{-m/2} =\langle\xi,\lambda^{1/m}\rangle^{-m}.
\end{align*}
\end{hypothesis}
For the principal symbol, let us introduce the strictly homogeneous symbol $p_m^h$: 
$$
p_m^h(x,\xi)\vc \norm{\xi}^m\,p_m(x,\xi/\norm{\xi})
$$
(which coincides with $p_m$ for $\norm{\xi}\geq 1$, but is now homogeneous of degree $m$ for all $\xi\neq 0$) and we can rephrase the hypothesis as: $p_m^h(x,\xi)$ has no eigenvalues in $V_{r_0,\theta}$ for all $\xi\neq 0$ (see \cite[Lemma 1.5.4]{Grubbbook}). Recall that $p_m(x,\xi)$ is homogeneous of degree $m$ only for $\norm{\xi}\geq 1$ and we have to control the integral in $\xi$, as in \eqref{kernel of P}.

Let $(U,h)$ be a fixed coordinate chart. For $x\in U, \,\xi\in \RR^d,\,\lambda= \eta^m\in V_{r_0,\theta}$ and $j\in \N$, we want to generate a parametrix by an inductive sequence (see \cite{Kumano})
\begin{align*}
& q_{-m}(x,\xi,\eta) \vc (p_m(x,\xi)-\eta^m)^{-1},\\
& q_{-m-j}(x,\xi,\eta)\vc -\sum_{k=1}^{j-1}\sum_{\alpha,\ell} \tfrac{(-i)^{\abs{\alpha}}}{\alpha !} \partial^\alpha_\xi q_{-m-k}(x,\xi,\eta)\,\partial^\alpha_x p_{m-\ell}(x,\xi)(p_m(x,\xi)-\eta^m)^{-1}
\end{align*}
where the second sum is over $\alpha \in \N^d,\,\ell \in \N$ such that $k+\ell +\abs{\alpha}=j$.\\
In the scalar case (i.e. the fibers of $E$ are one-dimensional),
\begin{align}
\label{eq: parametrix scalar case}
q_{-m-j}(x,\xi,\eta) = \sum_{k=1}^{2j} p_{j,k}(x,\xi)(p_{m}(x,\xi)-\eta^m)^{-k-1},
\end{align}
where the $p_{j,k}$ are symbols of order $mk-j$ obtained from $p_m,\cdots,p_{m-j}$.

Using $$\partial ((p_m-\eta^m)^{-1})=-(p_m-\eta^m)^{-1}(\partial p_m)(p_m-\eta^m)^{-1},$$ with $\partial =\partial_x$ or $\partial_\xi$, one checks that 
$\partial_\xi^\alpha\partial_x^\beta q_{-m-j}$ is a sum of terms of the form
\begin{align}
\label{eq: derivative of a matrix symbol}
(p_m-\eta^m)^{-1}\partial_\xi^{\alpha_1}\partial_x^{\beta_1} p_{m-k_1}(p_m-\eta^m)^{-1}\cdots\partial_\xi^{\alpha_r}\partial_x^{\beta_r} p_{m-k_r}(p_m-\eta^m)^{-1},
\end{align}
where $\sum_{ \ell=1}^r \,k_\ell+\abs{\alpha_\ell}=j+\abs{\alpha}$. \\
Moreover, $$\partial_\xi^\alpha\partial_x^\beta q_{-m-j}(x,t\xi,t\eta)=t^{-m-j} \partial_\xi^\alpha\partial_x^\beta q_{-m-j}(x,\xi,\eta)  \quad\text{ for } \norm{\xi}\geq 1, \,t\geq 1
$$
and there are at least two factors $(p_m(x,\xi)-\eta^m)^{-1}$ if either $j>0$ or $\abs{\alpha}+\abs{\beta}>0$.\\
This implies the following estimates:
\begin{align}
\!\! \!\Vert\partial_\xi^\alpha\partial_x^\beta q_{-m-j}(x,\xi,\eta)\Vert=
 \left\{\begin{array}{ll}
\!\!\OO_{\Vert \xi\Vert\to \infty}(\langle\xi,\eta\rangle^{-m}\,\langle \xi\rangle^{-j-\abs{\alpha}}) & \text{ for }j\in \N,\label{eq: estimates 1 for q m-j} \\
\!\!\OO_{\Vert \xi\Vert\to \infty}(\langle\xi,\eta\rangle^{-2m}\,\langle \xi\rangle^{m-j-\abs{\alpha}}) & \text{ if } j+\abs{\alpha}+\abs{\beta}>0.
\end{array}\right.
\end{align}
Now, defining $q$ such that $q\sim\sum_{j\in \N}\, q_{-m-j}$, we get 
\begin{align}
\label{eq: finite parametrix}
\partial_\xi^\alpha\partial_x^\beta\big[q(x,\xi,\eta)-\sum_{j<J} q_{-m-j}(x,\xi,\eta)\big]=\OO_{\Vert \xi\Vert\to \infty}(\langle\xi,\eta\rangle^{-2m}\,\langle \xi\rangle^{m-J-\abs{\alpha}}).
\end{align}
We claim that: 
{\it 
If $r(x,\xi,\lambda) \vc q(x,\xi,\lambda^{1/m}) \circ (p(x,\xi)-\lambda)-\id$, then for any $N$, all seminorms in $S^{-N}$ of the symbol $r$ are $\OO_\infty(\langle\lambda\rangle^{-1})$.
}\\
The proof is based on the decomposition of the series defining $p,\,q$ in finite sums and remainders and their Leibniz products via the above estimates.\\
Then, one gets rid of the local chart $U$ by patching previous parametrices to get a parameter-dependent pdo 
$Q(\lambda)$ (associated to the symbol $q(x,\xi,\lambda^{1/m})$), such that $R(\lambda)=Q(\lambda)\,(P-\lambda)-\bbbone$ (with the symbol $r(x,\xi,\lambda) $) is a smoothing pdo, where the seminorms $\norm{\cdot}_{-N,\alpha,\beta}$ of its symbol are $\OO_\infty(\langle\lambda\rangle^{-1})$ for each $N$.\\
Since formally $[Q(\lambda)\,(P-\lambda)]^{-1}=[\bbbone+R(\lambda)]^{-1}=\sum_{j=0}^\infty (-R(\lambda))^j$, we deduce that
\begin{align*}
(P-\lambda)^{-1}-Q(\lambda)&=\big([Q(\lambda)(P-\lambda)]^{-1}-\bbbone\big)\,Q(\lambda)=\sum_{j=1}^\infty(-R(\lambda))^j \,Q(\lambda)
\end{align*}
has a norm which is $\OO_0(\langle\lambda\rangle^{-2})$ since $\norm{Q(\lambda)}=\OO_\infty(\langle\lambda\rangle^{-1})$ so that:

\begin{proposition}
\label{prop: Olambda-2}
We have $\norm{(P-\lambda)^{-1}-Q(\lambda)}=\OO_\infty(\langle\lambda\rangle^{-2})$ for all $\lambda\in \Lambda_\theta$.
\end{proposition}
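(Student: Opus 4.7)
The identity displayed just above the proposition already organises the whole argument; the plan is to render each step rigorous and to track the orders in $\langle\lambda\rangle$. The starting point is the factorisation
\begin{align*}
Q(\lambda)(P-\lambda) = \bbbone + R(\lambda),
\end{align*}
which combined with the invertibility of $P-\lambda$ on $\Lambda_\theta$ (Hypothesis \ref{hyp: principal symbol with parameter}) gives, for $\abs{\lambda}$ large enough that $\bbbone + R(\lambda)$ is invertible,
\begin{align*}
(P-\lambda)^{-1} - Q(\lambda) = \bigl[(\bbbone + R(\lambda))^{-1} - \bbbone\bigr]\, Q(\lambda).
\end{align*}
So it suffices to show two things: (a) $\norm{R(\lambda)} = \OO_\infty(\langle\lambda\rangle^{-1})$, and (b) $\norm{Q(\lambda)} = \OO_\infty(\langle\lambda\rangle^{-1})$. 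Both bounds have already been announced in the preceding discussion; the task is to derive them cleanly from the symbol estimates.

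First I would establish (a). The smoothing pdo $R(\lambda)$ has a symbol $r(\cdot,\cdot,\lambda)$ all of whose $S^{-N}$--seminorms are $\OO_\infty(\langle\lambda\rangle^{-1})$ by the claim proved in the passage on parametrices. The standard continuity theorem for pdos (Theorem \ref{thm:general on pdo} \textit{i)}, applied to each local patch, together with compactness of $M$ and a partition of unity) bounds the $L^2$--operator norm of a pdo by finitely many symbol seminorms. Hence $\norm{R(\lambda)} = \OO_\infty(\langle\lambda\rangle^{-1})$. For $\abs{\lambda}$ sufficiently large we therefore have $\norm{R(\lambda)} \leq \tfrac12$, the Neumann series $\sum_{j\geq 0}(-R(\lambda))^j$ converges in operator norm, and
\begin{align*}
\bigl\|(\bbbone + R(\lambda))^{-1} - \bbbone\bigr\| \leq \sum_{j=1}^\infty \norm{R(\lambda)}^j \leq 2\,\norm{R(\lambda)} = \OO_\infty\bigl(\langle\lambda\rangle^{-1}\bigr).
\end{align*}

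Next I would establish (b) in the same spirit. The leading term $q_{-m}(x,\xi,\lambda^{1/m}) = (p_m(x,\xi)-\lambda)^{-1}$ satisfies, by Hypothesis \ref{hyp: principal symbol with parameter}, $\norm{q_{-m}(x,\xi,\lambda^{1/m})} \leq \langle\xi,\lambda^{1/m}\rangle^{-m} \leq \langle\lambda\rangle^{-1}$, and from \eqref{eq: estimates 1 for q m-j} the same $\langle\lambda\rangle^{-1}$ gain propagates to all its $x$, $\xi$--derivatives and to every homogeneous component $q_{-m-j}$ (indeed with an additional decay $\langle\xi\rangle^{-j}$). By the truncation bound \eqref{eq: finite parametrix} a suitable finite truncation of $q$ differs from $q$ by a symbol in $S^{-N}$ for $N$ as large as we please, uniformly in $\lambda$, so assembling pieces via a partition of unity and invoking once more the continuity theorem yields $\norm{Q(\lambda)} = \OO_\infty(\langle\lambda\rangle^{-1})$. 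Combining (a) and (b) through the factorisation above gives
\begin{align*}
\bigl\|(P-\lambda)^{-1} - Q(\lambda)\bigr\| \leq \bigl\|(\bbbone + R(\lambda))^{-1} - \bbbone\bigr\|\,\norm{Q(\lambda)} = \OO_\infty\bigl(\langle\lambda\rangle^{-2}\bigr),
\end{align*}
which is the claim; on the bounded part of $\Lambda_\theta$ where $\abs{\lambda}$ stays away from infinity there is nothing to check, since both operators are bounded there.

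The main obstacle I expect is a technical one, shared by both (a) and (b): asserting a continuity bound $\norm{\mathcal{O}(a)}_{L^2 \to L^2} \leq C\, \sum |a|_{m,\alpha,\beta}$ which is \emph{uniform in the parameter} $\lambda$. One must use a version of the Calder\'on--Vaillancourt / continuity theorem whose constants depend only on finitely many seminorms, and then check that the bounds \eqref{eq: estimates 1 for q m-j}--\eqref{eq: finite parametrix}, together with the smoothing estimate for $r$, allow those specific seminorms to be controlled uniformly by $\langle\lambda\rangle^{-1}$. Everything else (Neumann inversion, patching charts, and the final product of orders) is essentially book-keeping.
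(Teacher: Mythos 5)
Your proposal is correct and follows essentially the same route as the paper: the paper derives the proposition from the identity $(P-\lambda)^{-1}-Q(\lambda)=\big([\bbbone+R(\lambda)]^{-1}-\bbbone\big)Q(\lambda)=\sum_{j\geq1}(-R(\lambda))^jQ(\lambda)$ together with the $\OO_\infty(\langle\lambda\rangle^{-1})$ bounds on the seminorms of $R(\lambda)$ and on $\norm{Q(\lambda)}$, exactly as you do. Your added care about uniform-in-$\lambda$ continuity constants and the Neumann-series convergence only fills in details the paper leaves implicit.
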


Remark that we can define similarly $Q'(\lambda)$ such that $(P-\lambda)\,Q'(\lambda) -\bbbone=R'(\lambda)$ for another smoothing pdo $R'(\lambda)$.\\
Thus $Q(\lambda)-Q'(\lambda)=R(\lambda)Q'(\lambda)-Q(\lambda)R'(\lambda)$ is a smoothing pdo. Moreover, the operator $(P-\lambda)^{-1}$ can be seen as an elliptic pdo of order $-m$ since, by \eqref{eq: estimates 1 for q m-j}--\eqref{eq: finite parametrix}, the operator $(P-\lambda)^{-1}-\sum_{j<J} q_{-m-j}(x,\xi,\lambda^{1/m})$ is a pdo of order $m-J$, the seminorms of which  are $\OO_\infty(\langle\lambda\rangle^{-2})$.\\
This explains why the operator $(P-\lambda)^{-1}$ seen as a pdo is nothing else than $Q(\lambda)$. Consequently, $G(t)\vc e^{-t\,P}$ defined by \eqref{heat as integral} is also equal to $\tfrac{i}{2\pi}\int_\C e^{-t\lambda} Q(\lambda)\,d\lambda$.

Another consequence is that $(P-\lambda)^{-1}$ is compact by Theorem \ref{thm:general on pdo} $(v)$ so $P$ has a discrete spectrum without accumulation points (compare with Theorem \ref{thm:general on pdo} $(ii)$).

\section{\texorpdfstring{About $e^{-t\,P}$ as a pdo and About its Kernel}{About exp(-tP) as a pdo and about its kernel}}
\label{sec: The operator exp(-iP) as a pdo}

Let $P\in \Psi^m(M,E)$ be elliptic with $m>0$ and let its principal symbol satisfy \eqref{hyp: principal symbol with parameter}. 

\begin{theorem}
\label{thm: G(t) as pdo}
For $t> 0$, $G(t)\vc e^{-t\,P}$ is a pdo of order zero, the symbol $g(x,\xi,t)$ of which has the expansion $g(x,\xi,t) \sim \sum_{j=0}^\infty g_{-j}(x,\xi,t)$ with
\begin{align}
\label{def: symbols of exp(-tP)}
g_{-j}(x,\xi,t) \vc \tfrac{i}{2\pi}\int_\C e^{-t\lambda} \, q_{-m-j}(x,\xi,\lambda^{1/m}) \,d\lambda, \quad \text{ for } j\in \N.
\end{align}
Moreover, $g_0(x,\xi,0)=1$, while $g_{-j}(x,\xi,0)=0\,$ for $j \in \N^*$.
\end{theorem}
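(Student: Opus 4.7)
The strategy is to replace the resolvent in \eqref{heat as integral} by the parameter-dependent parametrix $Q(\lambda)$ constructed in Section~\ref{sec: Complex parameter-dependent symbols}, and show that the error term is negligible. Concretely, I would decompose
\[ G(t) = \tfrac{i}{2\pi}\int_\C e^{-t\lambda}Q(\lambda)\,d\lambda + \tfrac{i}{2\pi}\int_\C e^{-t\lambda}\bigl[(P-\lambda)^{-1}-Q(\lambda)\bigr]\,d\lambda =: \widetilde{G}(t) + R_\infty(t). \]
For the remainder, the refined version of Proposition~\ref{prop: Olambda-2} (combined with the seminorm estimates on the symbol of $R(\lambda)=Q(\lambda)(P-\lambda)-\bbbone$ indicated on p.~\pageref{prop: Olambda-2}) shows that each symbol seminorm of $(P-\lambda)^{-1}-Q(\lambda)$ decays at least like $\langle\lambda\rangle^{-2}$; coupled with the uniform decay of $e^{-t\lambda}$ along the rays of $\C$ for $t>0$, this yields that $R_\infty(t)$ is itself a smoothing pdo, which can therefore be absorbed into the asymptotic tail of the expansion.

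The main task is to analyse $\widetilde{G}(t)$. By construction, $Q(\lambda)$ is a pdo with symbol $q(x,\xi,\lambda^{1/m})\sim\sum_{j\geq 0}q_{-m-j}(x,\xi,\lambda^{1/m})$, so the natural candidate for the symbol of $\widetilde{G}(t)$ is $\widetilde{g}(x,\xi,t):=\tfrac{i}{2\pi}\int_\C e^{-t\lambda}q(x,\xi,\lambda^{1/m})\,d\lambda$, and termwise integration gives precisely the $g_{-j}$ of \eqref{def: symbols of exp(-tP)}. To promote this to a genuine symbol-valued identity I would establish symbol estimates on each $g_{-j}$ using \eqref{eq: estimates 1 for q m-j}. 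The cleanest route exploits homogeneity: for $\|\xi\|\geq 1$ one changes variables $\lambda=\|\xi\|^m\mu$ and uses $q_{-m-j}(x,\xi,\|\xi\|\mu^{1/m})=\|\xi\|^{-m-j}q_{-m-j}(x,\xi/\|\xi\|,\mu^{1/m})$, factoring out $\|\xi\|^{-j}$ explicitly and leaving a $\mu$-integral that converges absolutely thanks to the resolvent growth condition. Differentiation under the integral then yields $|\partial_\xi^\alpha\partial_x^\beta g_{-j}(x,\xi,t)|\leq c_{j,\alpha,\beta}(t)\,\langle\xi\rangle^{-j-|\alpha|}$, so that $g_{-j}\in S^{-j}$. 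The asymptotic expansion $\widetilde{g}\sim\sum_j g_{-j}$ then follows by applying the same reasoning to the parametrix remainder estimate \eqref{eq: finite parametrix}: truncation at $J$ produces an error integrand bounded by $\langle\xi,\eta\rangle^{-2m}\langle\xi\rangle^{m-J-|\alpha|}$, whose integral along $\C$ is $O(\langle\xi\rangle^{-J-|\alpha|})$.

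Finally, for the initial values I would compute the integrals at $t=0$ directly. For $j=0$, the integral $\tfrac{i}{2\pi}\int_\C e^{-t\lambda}(p_m(x,\xi)-\lambda)^{-1}\,d\lambda$ is the Dunford integral representing the matrix exponential $e^{-tp_m(x,\xi)}$, whose value at $t=0$ is the identity, giving $g_0(x,\xi,0)=1$. For $j\geq 1$, the scalar structure \eqref{eq: parametrix scalar case} and its matrix-valued analogue deducible from \eqref{eq: derivative of a matrix symbol} express $q_{-m-j}(x,\xi,\lambda^{1/m})$ as a finite sum of terms of the form $c(x,\xi)(p_m(x,\xi)-\lambda)^{-k-1}$ with $k\geq 1$; since $(p_m-\lambda)^{-k-1}=\tfrac{1}{k}\partial_\lambda(p_m-\lambda)^{-k}$ is an exact $\lambda$-derivative and the primitive $(p_m-\lambda)^{-k}$ vanishes at both infinities of the keyhole contour (decaying like $|\lambda|^{-k}$), each such integral is zero, giving $g_{-j}(x,\xi,0)=0$. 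The principal technical hurdle throughout is not any single step but the careful coordination of symbol seminorm bounds — in particular, upgrading Proposition~\ref{prop: Olambda-2} from an operator-norm estimate to smoothing-symbol estimates, and justifying the interchange of the contour integral with differentiation and asymptotic summation — together with the subtlety that at $t=0$ the integrand for $g_0$ loses absolute integrability and must be interpreted via holomorphic functional calculus rather than as a Lebesgue integral.
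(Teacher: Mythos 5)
Your overall route is the same as the paper's: substitute the parametrix $Q(\lambda)$ for the resolvent (this is exactly what the discussion around Proposition~\ref{prop: Olambda-2} is for), integrate the symbols $q_{-m-j}$ termwise along $\C$ to define the $g_{-j}$ of \eqref{def: symbols of exp(-tP)}, control the truncation error through \eqref{eq: finite parametrix}, and evaluate at $t=0$; your homogeneity/rescaling derivation of the bounds $\lvert\partial_\xi^\alpha\partial_x^\beta g_{-j}\rvert\leq c_{j,\alpha,\beta}(t)\langle\xi\rangle^{-j-\abs{\alpha}}$ is a legitimate variant of the paper's computation and suffices for the statement at fixed $t>0$.

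The one step that fails as written is your argument that $g_{-j}(x,\xi,0)=0$ for $j\geq1$. You claim that \eqref{eq: derivative of a matrix symbol} lets you write $q_{-m-j}(x,\xi,\lambda^{1/m})$ as a finite sum of terms $c(x,\xi)\,(p_m(x,\xi)-\lambda)^{-k-1}$ with $k\geq1$; that reduction is available only in the scalar case \eqref{eq: parametrix scalar case}. For genuinely matrix-valued symbols, \eqref{eq: derivative of a matrix symbol} gives noncommutative products of the form $(p_m-\lambda)^{-1}A_1(p_m-\lambda)^{-1}\cdots A_r(p_m-\lambda)^{-1}$, and since the $A_i$ need not commute with $p_m$ these are not exact $\lambda$-derivatives, so the "primitive vanishes at both ends" argument does not apply verbatim. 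The conclusion is still true and the repair is short: for $j\geq1$ each such term contains at least two resolvent factors, hence is $\OO_\infty(\abs{\lambda}^{-2})$ along $\C$ and holomorphic in $\lambda$ on the left keyhole region, so at $t=0$ the integral vanishes by pushing the contour to infinity through that region. The paper avoids the issue altogether by proving the $t$-uniform estimates \eqref{eq: derivatives of e-j}, whose factor $(t^{1/m}\langle\xi\rangle)^a e^{-c\langle\xi\rangle^m t}$ with $a\geq1$ forces $g_{-j}(x,\xi,0)=0$ (and gives $g_0=e^{-t\,p_m}$, hence $g_0(x,\xi,0)=1$, by residue calculus, with the explicit scalar formula $g_{-j}=\sum_k p_{j,k}\tfrac{t^k}{k!}e^{-t\,p_m}$ as a guide); note also that these uniform-in-$t$ bounds, not your $t$-dependent ones, are what the subsequent kernel estimates of Lemma~\ref{lem: kernel of G} and the heat-trace asymptotics actually consume.
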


\begin{proof} We follow \cite[Theorem 4.2.2]{Grubbbook}. \\
In a coordinate chart, we have
\begin{align*}
e^{-t\,P}&=\tfrac{i}{2\pi}\int_\C e^{-t\lambda} \mathcal{O}[q(x,\xi,\lambda^{1/m})] \,d\lambda =\mathcal{O} [\tfrac{i}{2\pi}\int_\C e^{-t\lambda} (q(x,\xi,\lambda^{1/m}) \,d\lambda ]  \\
&\sim\sum_{j=0}^\infty \,\,\mathcal{O} [g_{-j}(x,\xi,t)] .
\end{align*}
First, one checks that $g_0=e^{-t\,p_m}$ (by residue calculus) and the homogeneity property
\begin{align*}
g_{-j}(x,r\xi,r^{-m}t)=r^{-j}g_{-j}(x,\xi,t), \quad\text{ for } \norm{\xi}\geq 1,\,r\geq 1. 
\end{align*}
We now want to prove the following estimates: There exists $c>0$ such that
\begin{align}
\label{eq: derivatives of e-j}
\!\!\!\Vert \partial_\xi^\alpha \partial_x^\beta g_{-j}(x,\xi,t) \Vert \leq \langle \xi \rangle^{-j-\abs{\alpha}}\,(t^{1/m} \langle \xi \rangle)^a\,e^{-c\,\langle \xi \rangle^mt}, \,\, \forall a \leq \min(m,j+\abs{\alpha}).
\end{align}
These hold true for $j=0$ and for $j\geq 1$ we begin with the scalar case, cf. \eqref{eq: parametrix scalar case}:
\begin{align}
\label{eq: e-j scalar}
g_{-j}=\sum_{k=1}^{2j} p_{j,k}\, \tfrac{i}{2\pi}\int_\C e^{-t\lambda}(p_m-\lambda)^{-k-1}\,d\lambda=\sum_{k=1}^{2j}p_{j,k}\,\tfrac{1}{k!}\, t^k \,e^{-tp_m}.
\end{align}
We get 
\begin{align*}
\Vert P_{j,k}\,t^k\Vert \leq \langle \xi\rangle^{mk-j}\,t^k=\langle \xi\rangle^{m-j}\,t\,(\langle \xi\rangle^{d}\,t)^{k-1}\text{ with }(\langle \xi\rangle^{m}\,t)^{k-1}\leq 1\text{ when }\langle \xi\rangle^{d}\,t\leq 1.
\end{align*}
Moreover, $\Vert (\langle \xi\rangle^{d}\,t)^{k-1}e^{-tp_m/2}\Vert \leq 1$ if $\langle \xi\rangle^{m}\,t\geq 1$. Thus, the estimates are proved when $\alpha=\beta=0$ with $a=m$. \\
As a consequence, we cover the situation where $a<m$ and for $\alpha,\,\beta$ non-zero one differentiates under the integral of \eqref{def: symbols of exp(-tP)} until the estimate for $\alpha=\beta=0$ applies. For the non-scalar case, one proceeds as in the proof of \eqref{eq: estimates 1 for q m-j} thanks to the expansion \eqref{eq: derivative of a matrix symbol} for the derivatives. \\
The equality $g_{-j}(x,\xi,0)=0$ is a consequence of \eqref{eq: derivatives of e-j}.

We can now conclude the proof thanks to the following argument: The symbol $g(x,\xi,t) \sim \sum_{j=0}^\infty g_{-j}(x,\xi,t)$ can be chosen in such a way that 
$$
\Vert \partial_\xi^\alpha \partial_x^\beta (g-\sum_{j<J} g_{-j})\Vert \leq \langle \xi \rangle^{-J-\abs{\alpha}}\, (t^{1/\mu} \langle \xi \rangle)^a \,e^{-c\,\langle \xi \rangle^m \,t},\,\,\,\forall a \leq \min(m,j+\abs{\alpha}).
$$
So, for any integer $J$, $$\tfrac{i}{2\pi}\int_\C e^{-t\lambda} \,\mathcal{O}[q(x,\xi,\lambda^{1/m})-\sum_{j<J} q_{-m-j}(x,\xi,\lambda^{1/m})] \,d\lambda$$ is a pdo of order zero and the asymptotics of the symbol for $G(t)=e^{-t\,P}$ is fully justified.
\hfill $\Box$
\end{proof}

But since we are interested in $\Tr e^{-t\,P}$ it is worthwhile to control the kernel of $G(t)$ as a function of $t$ and to give an alternative proof of the previous theorem. On the way, it is shown that $e^{-t\,P}$ is a smoothing pdo.

Let $G_{-j}(t)$ be the pdo defined locally by $g_{-j}(x,\xi,t)$ after patching local charts. 

\begin{lemma}
\label{lem: kernel of G}
For any $t>0$:

i) The kernels of $G_{-j}(t)$ satisfy the estimates
\begin{align*}
&\Vert K_{G_0}(x,y,t) \Vert \leq t^{-d/m}\,e^{-c'\,t},\quad\!\!\Vert K_{G_{-j}}(x,y,t)\Vert \leq t^{(j-d)/m}\, e^{-c'\,t}, & \!\!\!\!\!\!\!\! \text{for }\, 0<j<m+d;\\
&\Vert K_{G_{-m-d}}(x,y,t)\Vert \leq t(1+\abs{\log t})\,e^{-c'\,t},\quad\!\!\Vert K_{G_{-j}}(x,y,t) \Vert \leq t\, e^{-c'\,t}, & \text{for } j>m+d.
\end{align*}
Moreover, on the diagonal we have, with $c_{j}(G,x)\vc \int_{\RR^d} g^h_{-j}(x,\xi,1) \,d\xi$,
\begin{align*}
 K_{G_{-j}}(x,x,t) =c_{j}(G,x)\,t^{(j-d)/m}+\OO_0(t), \quad \text{for }\, 0\leq j<m+d.
\end{align*}

ii) The remainder $G^r_J(t) \vc G(t)-\sum_{j<J} G_{-j}(t)$ satisfies
\begin{align*}
& \Vert K_{G^r_J}(x,y,t)\Vert \leq t(1+\abs{\log t}) \,e^{-c'\,t}, &&\!\!\!\!\!\!\!\! \text{for } J>m+d,\\
& K_{G^r_J}(x,y,0)=0, &&\!\!\!\!\!\!\!\! \text{for } J>d.
\end{align*}
\end{lemma}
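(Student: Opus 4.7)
The strategy is to start from the explicit integral representation of each kernel in a local chart,
$$K_{G_{-j}}(x,y,t)=\int_{\RR^d}e^{i2\pi(x-y).\xi}\,g_{-j}(x,\xi,t)\,d\xi,$$
and then bound it by exploiting the symbol estimate \eqref{eq: derivatives of e-j} established in Theorem \ref{thm: G(t) as pdo}. Applying this estimate with $\alpha=\beta=0$ and $a=\min(m,j)$ gives
$$\|g_{-j}(x,\xi,t)\|\le t^{a/m}\,\langle\xi\rangle^{a-j}\,e^{-c\langle\xi\rangle^m t},$$
and since $\langle\xi\rangle\ge 1$ I can peel off a factor $e^{-c't}$ (with $c'=c/2$) and work with $e^{-c\langle\xi\rangle^m t/2}$ for the $\xi$-integration. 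After patching local charts with a partition of unity, which only affects the constants, the $\xi$-integral is estimated by the substitution $\xi=t^{-1/m}\eta$, turning $\int\langle\xi\rangle^{a-j}e^{-c\langle\xi\rangle^m t/2}d\xi$ essentially into $t^{(j-a-d)/m}$ times a finite integral, modulo a region near the origin.

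\emph{Bounds for part i).} Three regimes have to be distinguished, according to whether the inequality $J>a+d$ forces $a$ to saturate at $m$ or at $j$. For $0\le j\le m$ I take $a=j$, obtaining $\|K_{G_{-j}}(x,y,t)\|\le t^{j/m}\cdot t^{-d/m}e^{-c't}=t^{(j-d)/m}e^{-c't}$ (and $t^{-d/m}e^{-c't}$ for $j=0$). For $m<j<m+d$ I take $a=m$, leaving the integral $\int\langle\xi\rangle^{m-j}e^{-c\langle\xi\rangle^m t/2}d\xi$; the substitution $u=t^{1/m}|\xi|$ makes it scale as $t^{(j-m-d)/m}$, and combined with the prefactor $t$ this reproduces $t^{(j-d)/m}e^{-c't}$. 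The \emph{main technical point} is the critical case $j=m+d$: after rescaling, the integral has the form $\int_{u\ge t^{1/m}}u^{-1}e^{-cu^m/2}\,u^{d-1}\,du$ which yields the logarithmic factor $1+|\log t|$. For $j>m+d$ the same integral converges uniformly, leaving only the prefactor $t$.

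\emph{Leading diagonal term.} To extract the coefficient $c_j(G,x)=\int_{\RR^d}g^h_{-j}(x,\xi,1)\,d\xi$ I introduce the strictly homogeneous companion $g^h_{-j}$, which agrees with $g_{-j}$ for $|\xi|\ge 1$ and satisfies $g^h_{-j}(x,r\xi,r^{-m}t)=r^{-j}g^h_{-j}(x,\xi,t)$ for \emph{all} $\xi\ne 0$, $r>0$. Taking $r=t^{-1/m}$ and integrating in $\xi$ after the substitution $\xi=t^{-1/m}\eta$ produces the exact identity $\int g^h_{-j}(x,\xi,t)\,d\xi=t^{(j-d)/m}\,c_j(G,x)$. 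The difference $g_{-j}-g^h_{-j}$ is supported in $|\xi|\le 1$ and smooth there; for $j\ge 1$ one has $g_{-j}(x,\xi,0)=0$, so a first-order Taylor expansion in $t$ gives an $O(t)$ contribution on this compact set, and for $j=0$ the difference $e^{-tp_m}-e^{-tp_m^h}$ integrated over $|\xi|\le 1$ is again $O(t)$ because the two exponentials coincide at $t=0$.

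\emph{Bounds for part ii).} The same scheme applies to the remainder symbol $g(x,\xi,t)-\sum_{j<J}g_{-j}(x,\xi,t)$, for which Theorem~\ref{thm: G(t) as pdo} (last paragraph of the proof) yields the analogous bound with $\langle\xi\rangle^{-J-|\alpha|}$ in place of $\langle\xi\rangle^{-j-|\alpha|}$, and the admissible exponents $a\le\min(m,J+|\alpha|)$. For $J>m+d$ the bound on $\|K_{G^r_J}\|$ is obtained exactly as in the critical/supercritical case above, giving $t(1+|\log t|)e^{-c't}$. For the vanishing at $t=0$ under the weaker hypothesis $J>d$, I select \emph{any} $a\in(0,m]$ with $J>d+a$ (possible precisely when $J>d$), which yields $\|K_{G^r_J}(x,y,t)\|\lesssim t^{a/m}e^{-c't}\to 0$ as $t\downarrow 0$. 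The main obstacle here is the careful bookkeeping of the critical exponents to reach the sharp logarithmic bound at $J=m+d$; everything else reduces to the substitution $\xi=t^{-1/m}\eta$ plus standard splitting of the $\xi$-integration region.
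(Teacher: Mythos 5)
Your part i) follows essentially the same route as the paper: write the kernel as the oscillatory integral of $g_{-j}$, bound it via the symbol estimate \eqref{eq: derivatives of e-j} after peeling off a factor $e^{-c't}$, extract the diagonal coefficient from the strictly homogeneous symbol by the scaling $\xi=t^{-1/m}\eta$, and absorb the difference $g_{-j}-g^h_{-j}$ on $\Vert\xi\Vert\leq1$ into an $\OO_0(t)$ term; the only blemish is a harmless slip in the critical case $j=m+d$, where the rescaled radial integral should read $\int_{t^{1/m}}^{\infty}u^{-1}e^{-cu^m/2}\,du$ (your extra factor $u^{d-1}$ double-counts the measure), but the logarithmic conclusion you draw from it is the correct one.

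Part ii) is correct but genuinely different from the paper. You integrate in $\xi$ the $t$-dependent remainder-symbol bound from the end of the proof of Theorem \ref{thm: G(t) as pdo}, $\Vert g-\sum_{j<J}g_{-j}\Vert\leq\langle\xi\rangle^{-J}(t^{1/m}\langle\xi\rangle)^a e^{-c\langle\xi\rangle^m t}$ (reading the admissible range there as $a\leq\min(m,J+\abs{\alpha})$); with $a=m$ and $J>m+d$ this even gives the sharper bound $c\,t\,e^{-c't}$, and with $0<a<J-d$ it gives the vanishing at $t=0$ for $J>d$. The paper instead goes back to the contour representation $G^r_J(t)=\tfrac{i}{2\pi}\int_\C e^{-t\lambda}\,\mathcal{O}(q^r_J)(\lambda)\,d\lambda$, uses the decay $\Vert K_{\mathcal{O}(q^r_J)}(x,y,\lambda)\Vert=\OO_\infty(\langle\lambda\rangle^{-2})$ (as in Proposition \ref{prop: Olambda-2}) to get uniform convergence and $K_{G^r_J}(x,y,0)=0$ by deforming $\C$, and then obtains the $t(1+\abs{\log t})$ bound by estimating $\partial_t K_{G^r_J}$ as in \eqref{dt of kernel} and Taylor-expanding in $t$. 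Both routes rest on the same tacit identification, in each chart, of $e^{-tP}$ with the operator built from the parametrix symbol, so your argument is on the same footing of rigour; it is shorter and yields a slightly stronger sup bound, whereas the paper's $\lambda$-side mechanism does not lean on the ``can be chosen'' remainder-symbol estimate and, more importantly, is exactly what gets recycled in Lemma \ref{lem: improved estimates} to control the higher derivatives $\partial_t^k K_{G^r_J}$, which your $\xi$-side bound does not provide by itself.
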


\begin{proof}
i) We begin with the kernel of $G_0$:  
\begin{align*}
K_{G_0}(x,y,t)=\int_{\RR^d} e^{i2\pi (x-y).\xi}\,e^{-tp_m(x,\xi)}\,d\xi.
\end{align*}
Since $\Vert e^{-t p_m}-e^{-t p_m^h}\Vert=\OO_0(t)$ when $\norm{\xi}\leq 1$, we get
\begin{align*}
\Vert K_{G_0}(x,y,t) \Vert &\leq c_1 \int_{\RR^d} \Vert e^{-t\,p_m^h(x,\xi)} \Vert \,d\xi+ c_1\int_{\norm{\xi}\leq 1} \Vert e^{-tp_m(x,\xi)}-e^{-tp_m^h(x,\xi)}\Vert \,d\xi\\
&\leq c_1\, t^{-d/m}\int_{\RR^d} e^{-c \norm{\eta}^m}\,d\eta + c_2\, t\leq c_3\, t^{-d/m} +c_2\,t,
\end{align*}
and for the diagonal
\begin{align*}
K_{G_0}(x,x,t)&=t^{-d/m} \int_{\RR^d} e^{-p_m^h(x,\eta)}\,d\eta + \int_{\norm{\xi}\leq 1} e^{-tp_m(x,\xi)}-e^{-tp_m^h(x,\xi)} \,d\xi \\
& \cv c_0(G,x) \,t^{-d/m} + \OO_0(t).
\end{align*}
When $j\neq 0$, the estimate $\Vert g_{-j}^h(x,\xi,t)\Vert \leq \norm{\xi}^{m-j} \,t\,e^{-c\,t\norm{\xi}^m}$ follows from \eqref{eq: derivatives of e-j}. Since the last function is $\xi$-integrable when $m-j>-d$, we deduce that the kernel $$K_{G_{-j}}(x,y,t)=\int_{\RR^d} e^{i2\pi (x-y).\xi}\,g_{-j}(x,\xi,t)\,d\xi$$ is $\Oinf(e^{-c\,t/2})$, while for $t>0$,
\begin{align*}
\Vert K_{G_{-j}}(x,y,t) \Vert & \leq c_1 \int_{\RR^d} \Vert g_{-j}^h \Vert \,d\xi +c_1 \int_{\norm{\xi}\leq 1} (\Vert g_{-j} \Vert + \Vert g_{-j}^h \Vert)\,d\xi\\
& \leq c_2 \,t\int_{\RR^d} \norm{\xi}^{m-j}\,e^{-c\,t\,\norm{\xi}^m}\,d\xi + c_3\,t=c_4\, t^{(j-d)/m} + c_3\,t.
\end{align*}
As above, still with $0<j<d+m$, one gets 
\begin{align}
K_{G_{-j}}(x,x,t) & =\int_{\RR^d} g_{-j}^h(x,\xi,t)\,d\xi +c_1 \int_{\norm{\xi}\leq 1} (g_{-j}-g_{-j}^h)(x,\xi,t)\,d\xi \notag\\
& =c_{j}(G,x)\,t^{(j-d)/m}+\OO_0(t). \label{eq: integral kernel on diagonal}
\end{align}
Moreover, using \eqref{eq: derivatives of e-j}, we obtain
\begin{align*}
\Vert K_{G_{-j}}(x,y,t) \Vert &\leq t \Big( \int_{\norm{\xi}\leq1}+\int_{\norm{\xi}\geq1} \Big) \langle \xi \rangle^{m-j}\,e^{-c\,t \langle \xi \rangle^m}\,d\xi \\
& \leq t \big(1+\int_1^\infty r^{m-j-d-1}\,e^{-c\,t\,r^m}\,dr \big)  \\
& \leq \left\{\begin{array}{lr} 
t(c_1+c_2 \abs{\log t}), & \quad \text{for }j=m+d, \\
t(c_1+c_2\,t^{-1+(j-d)/m)}), &\quad \text{for }j>m+d.
\end{array}\right.
\end{align*}
This completes the proof of $i)$.

$ii)$ The remaining symbol $$q^r_J(x,\xi,\lambda) \vc q(x,\xi,\lambda^{1/m})-\sum_{j<J} q_{-m-j}(x,\xi,\lambda^{1/m})$$ gives $$G^r_J(t)=\tfrac{i}{2\pi}\int_\C e^{-t\lambda} \mathcal{O}(q^r_J)(\lambda) \,d\lambda,$$ which is a pdo of order $-m-J<-2m-d$, so has a continuous kernel (and is trace-class) by Theorem \ref{thm:general on pdo} $v)$. Moreover, 
\begin{align*}
K_{G^r_J}(x,y,t)=\tfrac{i}{2\pi}\int_\C e^{-t\lambda} K_{\mathcal{O}(q^r_J)}(x,y,\lambda)\,d\lambda,
\end{align*}
with $\Vert K_{\mathcal{O}(q^r_J)}(x,y,\lambda) \Vert = \OO_\infty(\langle \lambda\rangle^{-2})$ as in the proof of Proposition \ref{prop: Olambda-2}. Thus, the integral over $\C$ converges uniformly for all $t> 0$ since $\vert e^{-t\lambda} \vert \leq e^{-c\,t}$ for some $c>0$. Consequently, $K_{G^r_J}(x,y,t)$ is $\Oinf(e^{-c\,t})$ and as such, it has a continuous extension at zero since $K_{G^r_J}(x,y,0)=0$, because in $\tfrac{i}{2\pi}\int_\C K_{\mathcal{O}(q^r_J)}(x,y,\lambda)\,d\lambda$, 
$\C$ can be deformed into a closed contour around zero. 
\\
We already know from Section \ref{About the exponential of an unbounded operator} that $G^r_J(t)\in C^\infty((0,\infty),\B(\H))$ and, for $t\leq1$,
\begin{align}
\Vert \partial_t \,K_{G^r_J}(x,y,t) \Vert & =\tfrac{1}{2\pi} \big{\Vert} \int_{\C} \lambda e^{-t\lambda} \,K_{G^r_J}(x,y,\lambda)d\lambda \big{\Vert}  \leq c_1 e^{-c\,t} +c_3 \int_{r_0}^\infty e^{-c_2\abs{\lambda}t }\langle\lambda\rangle^{-1}d\lambda \notag\\
& \leq c_4+c_5 \int_{r_0t}^\infty e^{-c_2\,a}\,a^{-1} \,da \leq c_6+c_7 \abs{\log t}. \label{dt of kernel}
\end{align}
(See the definition of $r_0$ after \eqref{heat as integral}.) \\
The Taylor series in $t$ gives $\vert K_{G^r_J}(x,y,t) \vert \leq c\,t(1+\abs{\log t})$ when $t\in (0,1]$ and hence the announced estimate.
\hfill{$\Box$}
\end{proof}

Technically, it is useful to employ $$e^{-t\,P}=\tfrac{i}{2\pi}\int_\C e^{-t\lambda}\,\lambda^{-k} \,P^k\,Q(\lambda)\,d\lambda,\quad \forall k\in \N,$$ which follows from $Q(\lambda)=\lambda^{-1}(\lambda-P+P)Q(\lambda)=-\lambda^{-1}+\lambda^{-1}P\,Q(\lambda)$ which after iteration gives $Q(\lambda)=-\sum_{j=1}^{k} \lambda^{-j}P^{j-1} + \lambda^{-k}P^kQ(\lambda)$ and $\int_\C e^{-t\lambda} \,\lambda^{-j}\,d\lambda=0$.\\
We denote: $$Q^{(k)}(\lambda) \vc P^k\,Q(\lambda),$$ seen as a pdo of order $(k-1)m$ with the symbol $q^{(k)} \sim \sum_{j \in \N} q^{(k)}_{(k-1)m-j}$ and, preserving the notation, $$Q^{(k)}_{(k-1)m-j}\vc \mathcal{O}[q^{(k)}_{(k-1)m-j}]\,\text{and }\,Q^{(k)}_J\vc Q^{(k)}-\sum_{j<J} Q^{(k)}_{(k-1)m-j}.$$
As an example,
\begin{align*}
q&=-\lambda^{-1}-\lambda^{-2} p+\lambda^{-2} q^{(2)}\\
&= \lambda^{-1}-\lambda^{-2}\big(p_m+\dotsb +p_{-m} +\OO_{\Vert \xi \Vert \to \infty}(\langle \xi\rangle^{-m-1})\big)\\
&\hspace{1cm}+ \lambda^{-2}\big(q_{m}^{(2)}+\dotsb+ q_{-m}^{(2)}+\OO_{\Vert \xi \Vert \to \infty}(\langle \xi\rangle^{-1}\langle \xi,\lambda^{1/m}\rangle^{-m})\big).
\end{align*}
By iteration, $q_{-m-j}=-\lambda^{-2}p_{m-j}-\cdots -\lambda^{-k}p^{(k-1)}_{(k-1)m-j}+\lambda^{-k}q^{(k)}_{(k-1)m-j}$, where $p^{(\ell)}$ is the symbol of $P^\ell$. Since these symbols are independent of $\lambda$, we get
\begin{align}
g_{-j}(x,\xi,t)&=\tfrac{i}{2\pi}\int_\C e^{-t\lambda}\,q_{-m-j}(x,\xi,\lambda^{1/m})\,d\lambda \nonumber\\
&=\tfrac{i}{2\pi}\int_\C e^{-t\lambda}\,\lambda^{-k}\,q^{(k)}_{(k-1)m-j}(x,\xi,\lambda^{1/m})\,d\lambda. \label{int of q(k)}
\end{align}
In particular, for $k\in \N$ we rewrite the $Q^{(k)}$'s as
\begin{align}
&G(t)=\tfrac{i}{2\pi}\int_\C e^{-\la t}\,\la^{-k} \,Q^{(k)}(\la)\, d\la, \notag \\
&G_{-j}(t)=\tfrac{i}{2\pi}\int_\C e^{-\la t}\,\la^{-k} \,Q^{(k)}_{(k-1)m-j}(x,\xi,\la)\, d\la, \notag\\
&G^r_J(t)=\tfrac{i}{2\pi}\int_\C e^{-\la t}\,\la^{-k} \,Q^{(k)}_J(\la)\, d\la. \label{eq: GrJ(t) }
\end{align}
With the help of this representation, we can improve the key estimate \eqref{eq: derivatives of e-j} along with the estimates from Lemma \ref{lem: kernel of G}:

\begin{lemma}
\label{lem: improved estimates}
For any $k\in \N$,
\begin{align}
\label{eq: time derivative of g-j}
\Vert \partial_\xi^\alpha \partial_x^\beta \partial_t^k \, g_{-j} (x,\xi,t)\Vert \leq c'(x) \,\langle \xi\rangle^{km-j-\abs{\alpha}}\,e^{-c\langle \xi \rangle^m\,t}.
\end{align}
For the kernel of \,$G_{-j}=\mathcal{O}(g_{-j})$ we have
\begin{align}
\Vert\partial^k_t K_{G_{-j}}(x,y,t) \Vert \leq 
\left\{\begin{array}{ll} 
(1+\abs{\log t}) \, e^{-ct}, &\text{ for } j-km=d, \\
(1+t^{(j-km-d)/m}) \, e^{-ct}, &  \text{ for } j-km\neq d. \label{eq:dt of KG-j}
\end{array}\right.
\end{align}
On the diagonal we get,
\begin{align}
\label{eq: time derivative on diagonal}
\partial^k_tK_{G_{- j }}(x,x,t) =c_{j,k}(x)\,t^{(j-km-d)/m}+\OO_0(t^0),\quad j<(k-1)m+d,
\end{align}
with 
$$
c_{j,k}(x)\vc \int_{\RR^d} \partial_t^k g_{-j}^h(x,\xi,1)\,d\xi.
$$
For the kernel of $G_{-j}$ we have, with $j>km+d$ for some $k\in \N$, 
\begin{align}
\label{eq:Kg-j Taylor exp}
K_{G_{-j}}(x,y,t)=\sum_{\ell=1}^{k-1} \tfrac{t^\ell}{\ell !} \,\partial_t^\ell K_{G_{-j}}(x,y,0)+t^k R(x,y,t)
\end{align}
and $\partial_t^\ell K_{G_{-j}}(x,y,0)$ and $R(x,y,t)$ are continuous in $x,\,y$ and in $t\geq 0$.\\
When $J>(k+1)m+d$ for some $n\in \N$, the kernel of the remainder $G_{J}^r$ satisfies 
\begin{align}
& \Vert \partial_t^kK_{G_{J}^r}(x,y,t) \Vert \leq e^{-c\,t}, \notag \\
& K_{G_{J}^r}(x,y,t)=\sum_{\ell=1}^{k-1} \tfrac{t^\ell}{\ell !} \, \partial_t^\ell K_{G_{J}^r}(x,y,0)+t^k R(x,y,t), \label{eq: Taylor of remainder}
\end{align}
where $\partial_t^\ell K_{G^r_{J}}(x,y,0)$ and $R(x,y,t)$ are continuous in $x,\,y$ and in $t\geq 0$.
\end{lemma}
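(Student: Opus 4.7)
The plan is to promote each statement of Lemma \ref{lem: kernel of G} by absorbing $k$ time derivatives via the identity \eqref{int of q(k)}. Differentiating
\[
g_{-j}(x,\xi,t)=\tfrac{i}{2\pi}\int_{\C}e^{-t\lambda}\lambda^{-k}q^{(k)}_{(k-1)m-j}(x,\xi,\lambda^{1/m})\,d\lambda
\]
exactly $k$ times in $t$ produces $(-1)^k$ times the same integral without the factor $\lambda^{-k}$, which is just the symbolic expression for $(-P)^k G(t)$ truncated at level $j$. Hence estimate \eqref{eq: time derivative of g-j} will follow from the proof of \eqref{eq: derivatives of e-j} applied to the operator $P^kG(t)\in\Psi^{km}$ in place of $G(t)\in\Psi^0$: the homogeneity of $q^{(k)}_{(k-1)m-j}$ raises the effective order by $km$ but the exponential factor $e^{-t\lambda}$ still produces $e^{-c\langle\xi\rangle^m t}$ after the residue/contour calculation. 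I would carry out the details first in the scalar case via an analogue of \eqref{eq: e-j scalar}, then pass to the matrix case via the structural expansion \eqref{eq: derivative of a matrix symbol}.

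Given \eqref{eq: time derivative of g-j}, the kernel bound \eqref{eq:dt of KG-j} is obtained exactly as in part $i)$ of Lemma \ref{lem: kernel of G}: split the $\xi$-integral into the strictly homogeneous piece $g^h_{-j}$ and the low-frequency correction, use the bound $\|\partial_t^k g^h_{-j}\|\lesssim\langle\xi\rangle^{km-j}e^{-ct\langle\xi\rangle^m}$ from \eqref{eq: time derivative of g-j}, and perform the substitution $\eta=t^{1/m}\xi$ which gives a factor $t^{(j-km-d)/m}$ plus an $\Oz(t)$ correction from the patch $\|\xi\|\le 1$. A logarithm appears in the borderline case $j-km=d$, for exactly the same integral reason as for $j=m+d$ in Lemma \ref{lem: kernel of G}. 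The diagonal expansion \eqref{eq: time derivative on diagonal} follows verbatim from the substitution argument in \eqref{eq: integral kernel on diagonal}, giving the leading coefficient $c_{j,k}(x)=\int_{\RR^d}\partial_t^k g^h_{-j}(x,\xi,1)\,d\xi$.

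For the Taylor formulas \eqref{eq:Kg-j Taylor exp} and \eqref{eq: Taylor of remainder} I would proceed as follows. When $j>km+d$, estimate \eqref{eq:dt of KG-j} shows that $\partial_t^k K_{G_{-j}}(x,y,t)$ is bounded (uniformly in $x,y$) by $e^{-ct}$, and in particular is continuous on $[0,\infty)$; so the standard Taylor formula with integral remainder at order $k-1$ applies and yields \eqref{eq:Kg-j Taylor exp} with continuous coefficients $\partial_t^\ell K_{G_{-j}}(x,y,0)$ and a continuous remainder $R(x,y,t)$. For the remainder term $G^r_J$ I would use representation \eqref{eq: GrJ(t) }, which exhibits $\partial_t^k K_{G^r_J}$ as a contour integral whose integrand is dominated by $e^{-c|\lambda|t}\|K_{\mathcal{O}(q^{(k)}_J)}\|$; the hypothesis $J>(k+1)m+d$ ensures, by Theorem \ref{thm:general on pdo} $v)$ and the estimates on the parametrix tail, that $\|K_{\mathcal{O}(q^{(k)}_J)}(x,y,\lambda)\|=\Oinf(\langle\lambda\rangle^{-2})$, so the $\lambda$-integral converges uniformly for $t\ge 0$ and bounds $\|\partial_t^k K_{G^r_J}(x,y,t)\|$ by $e^{-ct}$. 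Continuity at $t=0$ together with Taylor's theorem then yields \eqref{eq: Taylor of remainder}.

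The main obstacle I expect is the careful bookkeeping of contour integrals that mix the $\lambda^{-k}$ factor, the rescaled $\lambda^{1/m}$ in the symbol, and the exponential $e^{-t\lambda}$; in particular, one must verify uniformity of all bounds in $(x,y)$ and integrability of derivatives of the integrand in $\lambda$ (to justify differentiation under the integral sign, analogous to \eqref{dt of kernel} in Lemma \ref{lem: kernel of G}). Once this is firmly set up, the borderline logarithmic case and the diagonal asymptotics fall out as in the $k=0$ proofs already given.
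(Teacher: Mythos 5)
Your overall route is the paper's: differentiate the representation \eqref{int of q(k)} in $t$ so that $\partial_t^\ell g_{-j}$ becomes the contour integral with symbol $q^{(\ell)}_{(\ell-1)m-j}$, prove \eqref{eq: time derivative of g-j} exactly as \eqref{eq: derivatives of e-j}, rerun Lemma \ref{lem: kernel of G} $i)$ for $\mathcal{O}(\partial_t^k g_{-j})$ to get \eqref{eq:dt of KG-j} and the diagonal expansion, and deduce the Taylor formulas from boundedness and continuity of the time derivatives. One small omission: the Taylor sums in \eqref{eq:Kg-j Taylor exp} and \eqref{eq: Taylor of remainder} start at $\ell=1$, so you must also invoke $K_{G_{-j}}(x,y,0)=0$ (for $j>d$) and $K_{G^r_J}(x,y,0)=0$ from Lemma \ref{lem: kernel of G} to kill the constant term; the ``standard Taylor formula'' alone produces a sum starting at $\ell=0$.

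More substantively, your treatment of the remainder rests on the claim $\Vert K_{\mathcal{O}(q^{(k)}_J)}(x,y,\la)\Vert=\Oinf(\langle\la\rangle^{-2})$, attributed to Theorem \ref{thm:general on pdo} $v)$ and ``the parametrix tail''. Theorem \ref{thm:general on pdo} $v)$ says nothing about decay in $\la$, and the estimate actually recorded for the tail symbol of $q^{(k)}$, namely $\Vert q_J^r(x,\xi,\la)\Vert\leq \langle\xi\rangle^{km-J}\,\langle\xi,\la^{1/m}\rangle^{-m}$, only yields $\Vert K_{Q^{(k)}_J}(x,y,\la)\Vert\leq\langle\la\rangle^{-1}$ after the $\xi$-integration: taking $J>(k+1)m+d$ buys extra decay in $\xi$, not in $\la$, because near $\xi=0$ the single resolvent factor still gives only $\langle\la\rangle^{-1}$. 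With a $\langle\la\rangle^{-1}$ bound the integral $\int_\C e^{-\la t}\,K_{Q^{(k)}_J}(\la)\,d\la$ is not absolutely convergent uniformly down to $t=0$ (one picks up a $\abs{\log t}$), so your uniform bound on $\partial_t^k K_{G^r_J}$ and its continuity at $t=0$ do not follow as written. You can repair this either by proving a two-resolvent-factor tail estimate in the spirit of \eqref{eq: finite parametrix} (plausible, but it is an extra argument), or, as the paper does, by using the representation \eqref{eq: GrJ(t) } with $k+1$ in place of $k$: then $\partial_t^k$ leaves a factor $\la^{-1}$ in the integrand, which together with $\Vert K_{Q^{(k+1)}_J}(\la)\Vert\leq\langle\la\rangle^{-1}$ (valid precisely because $J>(k+1)m+d$) gives absolute convergence and the bound $e^{-ct}$ — this is exactly why the hypothesis reads $J>(k+1)m+d$.
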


\begin{proof}
For $\ell\in \N^*$, 
$$
\partial^\ell_t g_{-j}(x,\xi,t)=\tfrac{i(-1)^\ell}{2\pi} \int_\C e^{-t\lambda}\, q^{(\ell)}_{(\ell-1)m-j}(x,\xi,\lambda^{1/m})\,d\lambda
$$
is derived from \eqref{int of q(k)} and the estimate \eqref{eq: time derivative of g-j} is proved in the same way as \eqref{eq: derivatives of e-j}.

Since $\partial^{\ell+1}_t g_{-j}$ is bounded when $t \to 0$ by the above formula, the function $\partial^\ell_t g_{-j}$ is continuous at $t=0$. Since $g_{-j}(x,\xi,0)=0$ by Theorem \ref{thm: G(t) as pdo}, we have the Taylor expansion $$g_{-j}(x,\xi,t)=\sum_{\ell=1}^k \tfrac{1}{\ell !}\partial_t^\ell g_{-j} (x,\xi,0)\, t^\ell + t^k r_{-j,k}(x,\xi,t)$$ where $\partial_t^\ell g_{-j} (x,\xi,0)$ are pdos of order $\ell m-j$.

The operators $\partial_t^k G_{-j}(t)=\mathcal{O}(\partial_t^k g_{-j})$ have kernels, for which we can repeat the same arguments used for the proof of Lemma \ref{lem: kernel of G} $i)$ in order to get \eqref{eq:dt of KG-j}.
\\
If $j>d$, $K_{G_{-j}}(x,y,0)=0$ by Lemma \ref{lem: kernel of G} and by a Taylor expansion, we get \eqref{eq:Kg-j Taylor exp}.

We need to control the remainder. For $k\geq 1$, choose $J>km+d$ and the presentation \eqref{eq: GrJ(t) } for the kernel of $K_{G^r_J}$. Since $G^r_J$ is a pdo of order $(k-1)m -M$, $\Vert q_J^r(x,\xi,\la)\Vert \leq \langle \xi\rangle^{km-J}\,\langle\xi,\la^{1/m}\rangle^{-m}$, thus, after a $\xi$-integration, we get the estimate of the kernel  $$\Vert K_{Q^{(k)}_J}(x,y,\la)\Vert \leq \langle \la \rangle^{-1}$$ and $$K_{G^r_J}(x,y,t)=\tfrac{i}{2\pi}\int_\C e^{-\la t}\,\la^{-k} \,Q^{(k)}_J(\la)\, d\la$$ because all $Q^{(k)}$  are holomorphic in $\la$. \\Thus, for $\ell \leq k-1$,
\begin{align*}
\Vert \partial_t^\ell K_{G^r_J}(x,y,t) \Vert &=\tfrac{1}{2\pi} \Vert \int_\C e^{-\la t}\la^{\ell-k}K_{Q^{(k)}_J}(\la) d\la \Vert \leq \tfrac{1}{2\pi} \vert \int_\C e^{-\la t} \langle \la\rangle^{\ell -k-1} d\la\vert \leq e^{-ct}.
\end{align*}
As a consequence we get: $$K_{G_{J}^r}(x,y,t)=\sum_{\ell=1}^{k-2} \tfrac{t^\ell}{\ell !} \, \partial_t^\ell K_{G_{J}^r}(x,y,0)+t^{k-1} R(x,y,t).$$
This Taylor expansion begins at $\ell=1$ since $K_{G^r_J}(x,y,0)=0$ by Lemma \ref{lem: kernel of G}. Swapping $m-1$ to $m$ completes the proof of the lemma.
\hfill{$\Box$}
\end{proof}

\section{\texorpdfstring{The Small-$t$ Asymptotics of $e^{-t\,P}$}{The small-t asymptotics of exp(-tP)}}
\label{sec: the heat asymptotics of exp(-tP)}

The above estimates can be used to prove that $e^{-t\,P}$ has a smooth Schwartz kernel for any $t>0$. Therefore, $e^{-t\,P}$ is a smoothing pdo, and hence is trace-class.

\begin{theorem}\label{thm:Gilkey}
Let $P\in \Psi^m(M,E)$ be elliptic with $m>0$ and let its principal symbol satisfy \eqref{hyp: principal symbol with parameter}. Then, $G(t)=e^{-t\,P}$ is a smoothing pdo and its kernel has the following asymptotics on the diagonal:
\begin{align*}
K_G(x,x,t) \!\underset{t\to 0}{\sim}\hspace{-0.3cm}\sum_{\substack{n\in \N \\ n-d\notin m\N}} \!\!\!\!c_{n-d}(x,G)\, t^{(n-d)/m} +\hspace{-0.3cm}\sum_{\substack{n\in \N \\ n-d\in m\N}} \!\!\! \!c_{n-d}(x,G) \,t^{(n-d)/m} \log t +\!\sum_{\ell \in \N} r_\ell(x,G) \, t^\ell \!\!\!,
\end{align*}
where the coefficient $c_{n-d}(x,G)\in C^\infty(M)$ depends only on $p_m,\cdots,p_{m-n}$, while $r_\ell(x,G)\in C^\infty(M)$ depends globally on the operator $P$.
\end{theorem}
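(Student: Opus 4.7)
My plan is to build the expansion from the decomposition $K_G(x,y,t) = \sum_{j<J} K_{G_{-j}}(x,y,t) + K_{G^r_J}(x,y,t)$, where each summand has already been analysed in Lemma \ref{lem: improved estimates}. First, I would settle the smoothing claim: for any $J \in \N$ and any $t>0$, each $K_{G_{-j}}(\cdot,\cdot,t)$ is smooth on $M\times M$ (its local symbol $g_{-j}(x,\xi,t)$ has rapid $\xi$-decay for $t>0$ by \eqref{eq: derivatives of e-j}), and $K_{G^r_J}(\cdot,\cdot,t)$ is arbitrarily smooth once $J$ is taken large enough, thanks to the estimates \eqref{eq: Taylor of remainder}. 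Since $J$ can be pushed to infinity while $t>0$ is fixed, $K_G(\cdot,\cdot,t) \in C^\infty(M\times M)$ and $G(t)$ is smoothing, hence trace-class.

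Next, I would establish the diagonal asymptotic expansion to an arbitrary order $N$. Fix $N \in \N$; choose $k > N$ and $J > (k+1)m + d$. The remainder contributes only integer-power terms up to order $t^N$: by \eqref{eq: Taylor of remainder},
\[
K_{G^r_J}(x,x,t) = \sum_{\ell=1}^{k-1} \tfrac{t^\ell}{\ell!}\,\partial_t^\ell K_{G^r_J}(x,x,0) + \OO_0(t^k).
\]
These coefficients are manifestly global in $P$, since $G^r_J = G - \sum_{j<J}G_{-j}$ depends on the full exponential.

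The heart of the argument is the treatment of each $K_{G_{-j}}(x,x,t)$ for $j < J$. For each such $j$, I would pick $k_j = \lfloor (j-d)/m \rfloor + 2$ so that the condition $j < (k_j-1)m + d$ of \eqref{eq: time derivative on diagonal} is satisfied, giving
\[
\partial_t^{k_j} K_{G_{-j}}(x,x,t) = c_{j,k_j}(x)\,t^{\alpha_{k_j}} + \OO_0(t^0), \qquad \alpha_{k_j} = (j-d)/m - k_j.
\]
Integrating $k_j$ times in $t$, the iterated antiderivative of $t^{\alpha_{k_j}}$ lands in the power $t^{(j-d)/m}$, producing $\tilde c_{j-d}(x,G)\, t^{(j-d)/m}$ whenever none of $\alpha_{k_j}+1,\ldots,\alpha_{k_j}+k_j$ vanishes, which is precisely the case $(j-d)/m \notin \N$. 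When instead $(j-d)/m = \ell_0 \in \N$, one of the successive antidifferentiations crosses the resonant pole $t^{-1}$ and yields a $\log t$ factor, producing a term of the form $c_{j-d}(x,G)\, t^{(j-d)/m}\log t$ together with lower integer-power terms. The $\OO_0(t^0)$ remainder in \eqref{eq: time derivative on diagonal} integrates to a polynomial of degree $k_j$ in $t$ modulo $\OO_0(t^{k_j})$, and the integration constants at each of the $k_j$ steps (chosen to match the values of lower-order derivatives of $K_{G_{-j}}$) contribute further integer-power terms.

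Finally, I would collect contributions: setting $n=j$, the fractional-power leading terms give the first sum of the theorem, the resonant integrations give the second ($\log t$) sum, and the integer-power contributions from all $j<J$ together with the Taylor expansion of $K_{G^r_J}$ aggregate into $\sum_\ell r_\ell(x,G) t^\ell$ modulo $\OO_0(t^N)$. Since $N$ is arbitrary, this yields the claimed asymptotic expansion. The locality of $c_{n-d}(x,G)$ follows because $c_{j,k_j}(x)= \int_{\RR^d}\partial_t^{k_j} g^h_{-j}(x,\xi,1)\,d\xi$ is built inductively, via \eqref{def: symbols of exp(-tP)} and the construction of $q_{-m-j}$, out of $p_m,\ldots,p_{m-j}$ only; conversely, the coefficients $r_\ell(x,G)$ depend on $K_{G^r_J}$ and on the integration constants of each $K_{G_{-j}}$, both of which encode global information about $P$. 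The principal obstacle is exactly this bookkeeping step: verifying that the tower of integer-power pieces produced by $k_j$-fold integration (with integration constants, $\OO_0(t^0)$ remainders and the Taylor series of $K_{G^r_J}$) reassembles into a single well-defined sequence $r_\ell(x,G)$ independent of the auxiliary choices of $N$, $k$, $J$ and the $k_j$'s, and that only the local data $p_m,\ldots,p_{m-n}$ escapes into the coefficients $c_{n-d}(x,G)$.
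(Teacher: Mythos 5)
Your proposal follows essentially the same route as the paper: decompose $G(t)=\sum_{j<J}G_{-j}(t)+G^r_J(t)$, feed in Lemma \ref{lem: improved estimates}, integrate the diagonal formula \eqref{eq: time derivative on diagonal} in $t$ (with the resonant antidifferentiation of $t^{-1}$ producing the $\log t$ terms), Taylor-expand the remainder kernel via \eqref{eq: Taylor of remainder}, and then collect fractional powers, logs and integer powers, with locality of the $c_{n-d}$ read off from the construction of the $q_{-m-j}$ and globality of the $r_\ell$ coming from $G^r_J$ and the integration constants.

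Two points, however, need repair as written. First, your choice $k_j=\lfloor(j-d)/m\rfloor+2$ is too small to reach an arbitrary order $N$: after integrating \eqref{eq: time derivative on diagonal} only $k_j$ times, the $\OO_0(t^0)$ remainder contributes an error $\OO_0(t^{k_j})$, and for small $j$ (e.g.\ $j=0$ with $d<m$, where $k_0\le 2$) this is nowhere near $\oo(t^N)$; so the claimed expansion \emph{modulo} $\OO_0(t^N)$ does not follow from your bookkeeping. The fix is the paper's: take one large $k$ (at least $N$) uniformly for all $j<J=(k-1)m+d$, so each $K_{G_{-j}}(x,x,t)$ is controlled up to $\OO_0(t^k)$, and then let $k\to\infty$. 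Second, for the smoothing statement, \eqref{eq: Taylor of remainder} only gives continuity of $K_{G^r_J}$ and control of its $t$-derivatives; to conclude that $K_G(\cdot,\cdot,t)$ is $C^\infty$ in $(x,y)$ you also need estimates on the $x,y$-derivatives of the remainder kernel, namely bounds of the type $\Vert (x-y)^\gamma\partial_x^\alpha\partial_y^\beta K_{G^r_J}(x,y,t)\Vert\le e^{-ct}$ for $J$ large relative to $\abs{\alpha}+\abs{\beta}-\abs{\gamma}$, which the paper obtains from the corresponding bound on the kernel of $Q^{(k+1)}_J(\lambda)$ before performing the contour integral. With these two adjustments your argument coincides with the one given in the text.
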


\begin{proof}
The smoothness of the kernel $K_{-j}(x,y,t)$ in $x,y$ follows from \eqref{eq: time derivative of g-j} and it remains to control the remainder. In fact, $$\Vert (x-y)^\gamma \partial_x^\alpha\partial_y^\beta K_{G^r_J}(x,y,t) \Vert \leq e^{-c\,t}\,\text{ when }J>(k+1)m-\abs{\gamma}+\abs{\alpha}+\abs{\beta}+d$$ which follows, as in the previous lemma, from the estimate 
$$
\Vert (x-y)^\gamma \partial_x^\alpha\partial_y^\beta K_{Q_J^{(k+1)}}(x,y,\la) \Vert \leq \langle \la \rangle^{-1}.
$$
Thus, $G(t)$ is a smoothing pdo.

Now, let us choose a large $k\in \N$. By a successive integration of \eqref{eq: time derivative on diagonal} and using \eqref{eq: integral kernel on diagonal} with $j=d$, we get, for $j < (k-1)m+d$,
\begin{align*}
K_{G_{-j}}(x,x,t)= \left\{\begin{array}{ll} 
c'_{j,k}(x)\,t^{(j-d)/m}+p_{j,k}(x,t)+\OO_0(t^k), &\text{ for } j-d \notin m\Z,\\
c'_{j,k}(x)\,t^{(j-d)/m} \,\log t +p'_{j,k}(x,t)+\OO_0(t^k), &  \text{ for } j-d \in m\Z,
\end{array}\right.
\end{align*}
where $c'_{j,k}(x)$ depends only on $c_{j,k}(x)$ of Lemma \ref{lem: improved estimates}, $p_{j,k}(x,t)$ and $p'_{j,k}(x,t)$ are polynomials of degree $k$ in $t$ and are continuous in $x$ with $p_{j,k}(x,0)=p'_{j,k}(x,0)=0$. \\
Moreover, the remainder in \eqref{eq: Taylor of remainder} for $J=(k-1)m+d>((k-3)+1)m+d$ is $$ K_{G_{J}^r}(x,y,t)=\sum_{\ell=1}^{k-4} \tfrac{t^\ell}{\ell !} \, \partial_t^\ell K_{G_{-J}^r}(x,y,0)+\OO_0(t^{k-3}).$$
Thus, for the full integral kernel,
\begin{align*}
K_G(x,x,t)=\!\!\!\sum_{\substack{0\leq j<J=(k-1)m+d \\ j-d \notin m\N}} \hspace{-0.6cm}c'_{j,k}(x) \,t^{(j-d)/m} +\!\!\!\! \sum_{\substack{j=m\ell+d \\ 1\leq \ell<k-1}} c'_{j,k}(x) \,t^\ell\,\log t +p_k(x,t) +\OO_0(t^{k-3}),
\end{align*}
where the $p_k(x,t)$ are polynomials in $t$ such that  $p_k(x,0)=0$. Sending $k$ to infinity, we get, after a relabeling, the announced asymptotics, because $K_G(x,x,t)$ minus the sum of terms up to $(n-d)/m=N$ and $\ell=N$ is $\OO_0(t^{N+1/d})$. The coefficients $c'_{j,k}$ depend only on $G_{-j}$, thus locally on the symbols of $P$ of orders from $m$ to $m-j$. The $p_k$'s are not easy to characterise, but they are smooth in $x$: The smoothness of $c_{j,k}$ (and so of $c'_{j,k}$) is clear from its definition in Lemma \ref{lem: improved estimates}, while the smoothness of $p_{j,k}$ or $p'_{j,k}$ can be checked at each step of the above integrations in $t$ with $t=1$.~\hfill{$\Box$}
\end{proof}

By taking the trace and relabeling, we immediately get the celebrated expansion:
\begin{corollary}
\label{cor:heat for elliptic pdo}
Let $(M,g)$ be a compact Riemannian manifold of dimension $d$ and let $P\in \Psi^m(M,E)$ be an elliptic pdo with $m>0$, the principal symbol of which satisfies \eqref{hyp: principal symbol with parameter}. Then,
\begin{align*}
\hspace{1cm} \Tr e^{-t\,P} \tzero & \, \sum_{k=0}^\infty a_k(P) \,t^{(k-d)/m} + \sum_{\ell=0}^{\infty} b_\ell(P)  \,t^{\ell} \, \log t, \\
 \text{ with } &
\begin{cases} 
a_k(P) = \int_M \tr c_{k-d}(x,P) \sqrt{g} \, d^d x, & \text{for } k-d \notin m\N, \\
a_k(P) = \int_M \tr r_{(k-d)/m}(x,P) \sqrt{g} \, d^d x, & \text{for } k-d \in m\N, \\
b_\ell(P) = \int_M \tr c_{m\ell}(x,P) \sqrt{g} \, d^d x, & \text{for } \ell \in \N.  \quad \hspace{1.4cm} \Box
\end{cases}
\end{align*}
\end{corollary}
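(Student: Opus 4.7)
The plan is to derive this corollary as an essentially immediate integrated version of the preceding theorem. First I would note that the theorem establishes $G(t)=e^{-tP}$ as a smoothing pdo, hence by Theorem on general pdos part $(v)$ it is trace-class and
\begin{align*}
\Tr e^{-t\,P} = \int_M \tr_E K_G(x,x,t)\,\sqrt{g}\,d^d x.
\end{align*}
The corollary is then obtained by integrating the diagonal asymptotic expansion of $K_G(x,x,t)$ supplied by the preceding theorem against the Riemannian volume form on $M$.

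Concretely, I would take the expansion
\begin{align*}
K_G(x,x,t) \underset{t\to 0}{\sim} \!\!\!\sum_{\substack{n\in\N\\ n-d\notin m\N}}\!\!\! c_{n-d}(x,G)\, t^{(n-d)/m} + \!\!\!\sum_{\substack{n\in\N\\ n-d\in m\N}}\!\!\! c_{n-d}(x,G)\, t^{(n-d)/m}\log t + \sum_{\ell\in\N} r_\ell(x,G)\, t^\ell,
\end{align*}
and justify the termwise integration $\int_M \tr_E(\,\cdot\,)\sqrt{g}\,d^d x$. The main technical point is uniformity of the remainder on the compact manifold $M$: every remainder estimate in the proof of the preceding theorem (ultimately feeding into Lemma on improved estimates) is manifestly locally uniform in $x$, and the coefficients $c_{n-d}(\cdot,G)$ and $r_\ell(\cdot,G)$ are smooth on $M$; compactness of $M$ together with partitions of unity matching the finitely many charts used to build the parametrix promotes local uniformity to global uniformity. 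Consequently, for every $N$ the trace of the truncated remainder is $\OO_0(t^{N})$, which delivers the asymptotic expansion for $\Tr e^{-tP}$.

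Finally, I would rewrite the result in the advertised indexing. Put $k=n$ and observe that $t^{(k-d)/m}$ appears both from the non-log sum (when $k-d\notin m\N$) and, after the identification $t^{\ell}=t^{(k-d)/m}$ with $\ell=(k-d)/m$, from the polynomial-in-$t$ sum (when $k-d\in m\N$). This reproduces the two cases of $a_k(P)$: the non-log one with integrand $\tr c_{k-d}(x,P)$, and the log-free polynomial one with integrand $\tr r_{(k-d)/m}(x,P)$. For the logarithmic contributions, the power $t^\ell\log t$ arises from $(n-d)/m=\ell$, i.e.\ $n-d=m\ell$, so $c_{n-d}=c_{m\ell}$ and $b_\ell(P)=\int_M \tr c_{m\ell}(x,P)\sqrt{g}\,d^d x$, as claimed.

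I do not anticipate serious obstacles: the entire mechanism (complex parameter-dependent parametrix, kernel estimates, diagonal expansion, smoothness of coefficients) has already been set up in the previous sections. The only part deserving care is the commutation of $\int_M$ with the asymptotic series, and this reduces to the uniform-in-$x$ remainder bounds already contained in Lemma on improved estimates.
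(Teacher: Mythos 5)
Your proposal is correct and follows essentially the same route as the paper, whose own proof is simply ``by taking the trace and relabeling'': one integrates the diagonal kernel asymptotics of Theorem \ref{thm:Gilkey} over the compact manifold and reindexes exactly as you do. Your added remarks on the uniformity of the remainder in $x$ and the smoothness of the coefficients are the implicit justification the paper relies on, so there is no gap.
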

\bigskip

It can be of interest to recall here a few links between the resolvent, complex powers and heat operators (with $s\in \CC,\,k\in \N$, $c>0$)
\begin{align}
& e^{-t\,P}=t^{-k}\tfrac{i}{2\pi} \int_\C e^{-t\lambda}\,\partial_\lambda^k(P-\lambda)^{-1}\,d\lambda=\tfrac{1}{i2\pi} \int_{\Re (s)=c}t^{-s}\,\Gamma(s)\,P^{-s}\,ds, \label{e-tP with derivative}\\
& P^{-s}=\tfrac{1}{(s-1)\cdots (s-k)} \,\tfrac{i}{2\pi} \int_\C \lambda^{k-s}\, \partial^k_\lambda (P-\lambda)^{-1}\,d\lambda =\tfrac{1}{\Gamma(s)} \int_0^\infty t^{s-1}\,e^{-t\,P}\, dt. \nonumber
\end{align}

\section{Meromorphic Extensions of Certain Series and their Residues}
\label{app: residues of series}

We gather below some results on meromorphic extensions of certain series. These will allow us for an extension of Proposition \ref{sa average} and provide tools for the computation of the dimension spectrum of the noncommutative torus in Section \ref{subsec:NC torus dim sp}. On the way, a `Diophantine condition' will pop up guaranteeing a control on the commutation between the residues and the series. For complete proofs, see \cite{TorusSA}.

In the following, $\sum'$ \index{_a9tsigmaprim@$\sum'$}means that we omit the division by zero in the summand.

\begin{theorem}
\label{res-int} 
Let $P$ be a polynomial $P(x)=\sum_{j=0}^{d} P_j(x) \in \CC[x_1,\dotsc,x_n]$, where $P_j$ is the homogeneous part of $P$ of degree $j$ and $d$ is fixed.\\
Then, the function \begin{align*}
f(s)\vc{\sum}'_{k\in\Z^n} P(k) \norm{k}^{-s}
\end{align*}
has a meromorphic extension to $\CC$.\\
Moreover, $f$ is not entire if and only if 
\begin{align*}
\mathcal{P}_P:= \{j \,\,\vert\,\,\int_{u\in S^{n-1}} P_j(u)\, ds(u)\neq 0 \}\neq \varnothing
\end{align*}
and  $f$ has only simple poles at $j+n$, $j\in \mathcal{P}_{P}$, with
\begin{align*}
\underset{s=j+n}{\Res} \, f(s) = \int_{u\in S^{n-1}} P_j(u)\, ds(u).
\end{align*}
\end{theorem}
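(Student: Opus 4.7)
The plan is to reduce to the homogeneous case by linearity, then combine a Mellin transform with the Poisson summation formula already available in the paper (Proposition \ref{prop:seriesintegral}). I would write $f(s) = \sum_{j=0}^d f_j(s)$, with
\begin{align*}
f_j(s) \vc {\sum}'_{k \in \Z^n} P_j(k)\, \norm{k}^{-s}.
\end{align*}
A straightforward comparison with the integral $\int_{\norm{x}\geq 1} \norm{x}^{j-\Re(s)}\,dx$ shows that $f_j$ converges absolutely on the half-plane $\Re(s) > n+j$. There, inserting the Mellin identity $\norm{k}^{-s} = \Gamma(s/2)^{-1} \int_0^\infty t^{s/2-1}\, e^{-t\norm{k}^2}\, dt$ and using Fubini on the absolutely convergent double sum/integral, I obtain
\begin{align*}
f_j(s) = \Gamma(s/2)^{-1} \int_0^\infty t^{s/2-1}\,\theta_j(t)\, dt, \qquad \theta_j(t) \vc {\sum}'_{k\in\Z^n} P_j(k)\, e^{-t\norm{k}^2}.
\end{align*}

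Next, I would unravel the short-time behaviour of $\theta_j$. Applying Proposition \ref{prop:seriesintegral} to the Schwartz function $g(x) = P_j(x)\, e^{-\norm{x}^2}$, with a rescaling $x \mapsto t^{1/2} x$ and exploiting the homogeneity $P_j(t^{1/2} x) = t^{j/2} P_j(x)$, yields
\begin{align*}
\sum_{k\in\Z^n} P_j(k)\, e^{-t\norm{k}^2} = \int_{\RR^n} P_j(x)\, e^{-t \norm{x}^2}\, dx + \OO_0(t^\infty),
\end{align*}
and a polar-coordinates calculation evaluates the Gaussian integral as $\tfrac12 \Gamma\!\big(\tfrac{n+j}{2}\big)\,I_j\,t^{-(n+j)/2}$ with $I_j \vc \int_{S^{n-1}} P_j(u)\,ds(u)$. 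Since $P_j(0) = 0$ unless $j = 0$, this gives
\begin{align*}
\theta_j(t) = \tfrac{1}{2}\,\Gamma\!\big(\tfrac{n+j}{2}\big)\, I_j\, t^{-(n+j)/2} - \delta_{j,0}\, P_0(0) + \OO_0(t^\infty),
\end{align*}
and $\theta_j$ decays exponentially at infinity.

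Finally, I would split the Mellin integral as $\int_0^1 + \int_1^\infty$. The tail is entire thanks to the exponential decay of $\theta_j$. Substituting the above expansion into $\int_0^1 t^{s/2-1}\theta_j(t)\,dt$, the $\OO_0(t^\infty)$ remainder contributes an entire function, while the explicit leading term produces $\tfrac{2}{s-(n+j)} \cdot \tfrac{1}{2}\Gamma\!\big(\tfrac{n+j}{2}\big) I_j$, giving, after multiplying by $\Gamma(s/2)^{-1}$, a meromorphic function on $\CC$ with a single simple pole at $s = n+j$ of residue $I_j$ (the two Gamma factors cancel at $s=n+j$). The potential pole at $s = 0$ coming from the constant $-\delta_{j,0} P_0(0)$ is cancelled by the zero of $\Gamma(s/2)^{-1}$ at the origin, and similarly no additional poles arise at negative even integers. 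Since the poles of the $f_j$'s sit at distinct locations $n+j$, no cross-cancellations can occur, so $f = \sum_j f_j$ is entire precisely when every $I_j$ vanishes, i.e.\ $\mathcal{P}_P = \varnothing$, and the residue formula follows.

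The main obstacles I foresee are purely technical: justifying rigorously the Fubini interchange uniformly in $\Re(s) > n+j$ (handled by passing to $\vert P_j(k)\vert$), and controlling the remainder in the Poisson step after the rescaling $t \mapsto t^{1/2}$ so that $\OO_0(t^\infty)$ survives division by $t^{j/2}$ (which it does, since $\OO_0(t^\infty)$ means $\OO_0(t^N)$ for every $N$). Everything else reduces to well-known Gamma-function bookkeeping.
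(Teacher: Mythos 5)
Your proof is correct, but it takes a genuinely different route from the paper's. The paper argues by a direct sum-versus-integral comparison: it shows that $\sum'_{k\in\Z^n} P(k)\norm{k}^{-s}-\int_{\RR^n\setminus B^n} P(x)\norm{x}^{-s}\,dx$, initially defined for $\Re(s)>d+n$, extends \emph{holomorphically} to all of $\CC$, so that the poles and residues are read off from the elementary radial integral $\int_{\RR^n\setminus B^n}P_j(x)\norm{x}^{-s}dx=\big(\int_{S^{n-1}}P_j(u)\,ds(u)\big)\tfrac{1}{s-(j+n)}$. You instead pass through the heat/theta function: the Mellin identity $\norm{k}^{-s}=\Gamma(s/2)^{-1}\int_0^\infty t^{s/2-1}e^{-t\norm{k}^2}dt$, the small-$t$ expansion of $\theta_j(t)$ obtained from the paper's own Poisson-summation lemma (Proposition \ref{prop:seriesintegral}) applied to the Schwartz function $P_j(x)e^{-\norm{x}^2}$ after the $t^{1/2}$-rescaling, and a split of the Mellin integral at $t=1$. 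Your bookkeeping is sound: the Fubini step is justified by absolute convergence for $\Re(s)>n+j$, the $\OO_0(t^\infty)$ remainder indeed survives division by $t^{j/2}$ and yields an entire contribution, the factor $\Gamma(s/2)^{-1}$ cancels the spurious pole at $s=0$ coming from the subtracted $k=0$ term (and any potential poles at negative even integers), the Gamma factors cancel at $s=n+j$ to give residue $I_j$, and since the $f_j$ have poles at distinct points $n+j$ no cross-cancellation can occur, which settles the ``entire iff $\mathcal{P}_P=\varnothing$'' equivalence. What each approach buys: yours is self-contained given Proposition \ref{prop:seriesintegral} and gives exponentially small control of the theta remainder, in the spirit of the classical Epstein zeta treatment; the paper's comparison with $\int_{\RR^n\setminus B^n}$ avoids the Mellin/Gamma machinery and, more importantly, is the template that generalises to the refined statements with oscillating phases $e^{ik.a}$ (Theorems \ref{analytic} and \ref{zetageneral}), where the Gaussian/Poisson trick no longer applies directly and Diophantine conditions enter.
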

Here $ds$ is the Lebesgue measure on $S^{n-1}$. The proof is based on the fact that the function 
$$
{\sum}'_{k\in\Z^n} P(k) \norm{k}^{-s}-\int_{\RR^n\setminus B^{n}} P(x) \Vert x \Vert^{-s} \, dx,
$$
 with $B^n$ -- the unit ball in $\RR^n$, originally defined for $\Re(s)>d+n$, extends holomorphically to $\CC$.

This result can be seen as an extension of Proposition \ref{sa average}: Let $D$ be a selfadjoint invertible operator with only discrete spectrum equal to $\Z^n$ such that each eigenvalue $\pm k\in \Z^n$ has multiplicity $p(k)$ for a given polynomial $P\in\N[x_1,\dotsc,x_n]$. Then, $\underset{s=j+n}{\Res} \, \zeta_D(s) = \int_{u\in S^{n-1}} P_j(u)\, ds(u)$.

\begin{example}
Let us consider the functions 
$$
\zeta_{q_{1},\dotsc,q_{n}}(s)\vc{{\sum}'_{k\in\Z^{n}}} \,\, k_1^{q_{1}}\dotsb k_n^{q_{n}} \norm{k}^{-s}, \text{ for } q_{i}\in \N^*.
$$
By the symmetry $k\mapsto -k$, the functions $\zeta_{q_{1},\dots,q_{n}}$ vanish if any $q_{i}$ is odd.
\\
Assume that all of $q_{i}$'s are even, then $\zeta_{q_{1},\dotsc,q_{n}}(s)$ is a nonzero sum of terms $P(k)\Vert k\Vert ^{-s}$, where $P$ is a homogeneous polynomial of degree $q_{1}+\dotsb +q_{n}$. Theorem \ref{res-int} yields the following:  $\zeta_{p_{1},\dotsc,p_{n}}$ has a meromorphic extension to $\CC$ with a unique pole at $n+q_{1} +\dotsb +q_{n}$. This pole is simple and the residue at this pole is
\begin{align}
\label{eq:zetaq1qn}
\hspace*{1.2cm} \underset{s=n+q_{1} +\dotsb +q_{n}}{\Res} \,\zeta_{q_{1},\dotsc,q_{n}}(s)= 2 \, \tfrac{\Gamma[(q_{1}+1)/2] \dotsb \Gamma[(q_{n}+1)/2]} {\Gamma [(n+q_{1}+ \dotsb + q_{n})/2]}. \hspace{1.7cm} \blacksquare
\end{align}
\end{example}

We now recall few notions from the Diophantine approximation theory.

\begin{definition}
\label{ba}
i) Let $\delta >0$. A vector $a \in \RR^n$ is said to be {\it $\delta$-badly approximable} if the Diophantine condition\index{Diophantine condition} holds true:
\\
\centerline{\it There exists $c >0$ such that $|q . a -m| \geq c \,|q|^{-\delta}$, $\forall q \in \Z^n \setminus \set{0}$ and $\forall m \in \Z$.}
Let ${\cal BV}_{\!\!\delta}$ be the set of $\delta$-badly approximable vectors and ${\cal BV} :=\cup_{\delta >0} \,{\cal BV}_{\!\!\delta}$ the set of badly approximable vectors.\par
ii) A matrix $\Th \in {\cal M}_{n}(\RR)$ (real $n \times n$ matrices)\index{_zMn1@${\cal M}_{n}(\RR)$} will be called {\it badly approximable}\index{badly approximable} if there exists $u \in \Z^n$ such that  ${}^t\Th (u)$ is a badly approximable vector of $\RR^n$.
\end{definition}

It is known that for $\delta >n$ the Lebesgue measure of $\RR^n \setminus {\cal BV}_{\!\!\delta}$ is zero (i.e almost any element of $\RR^n$ is $\delta$-badly approximable) and, consequently, almost any matrix in ${\cal M}_n(\RR)$ is badly approximable.

\smallskip

We store below a rather technical result \cite[Theorem 2.6]{TorusSA}, omitting the proof. As compared with the previous theorem, it takes care of the possible oscillations $e^{i\,2\pi k.a}$, where $a$ is a vector in $\RR^n$, which will be allowed to vary later on.
\begin{theorem}
\label{analytic}
Let $P\in \CC[x_1,\cdots,x_n]$ be a homogeneous polynomial of degree $d$ and let $b\in \Scl((\Z^{n})^q)$. Then,

i)
For $a \in \RR^n$, define $$f_a(s):={\sum}'_{k\in \Z^n} P(k) \norm{k}^{-s}\,e^{i\,2\pi  k.a}.$$

\quad 1.
If $a\in \Z^n$, then $f_a$ has a meromorphic extension to the whole space $\CC$.\\
Moreover, 
$$
\text{$f_a$ is not entire if and only if $\int_{u\in S^{n-1}} P(u)\, ds(u)\neq 0.$}
$$
In that case, $f_a$ has  a single simple pole at the point $d+n$, with $$\underset{s=d+n}{\Res} \, f_a(s) = \int_{u\in  S^{n-1}} P(u)\, ds(u).$$

\quad 2.
If $a\in \RR^n\setminus \Z^n$, then $f_a(s)$ extends holomorphically to $\CC$.

ii) Suppose that  $\Th \in {\cal M}_{n}(\RR)$ is badly approximable. For any $(\eps_i)_i\in \{-1,0,1\}^{q}$, the function
$$
g(s):={\sum}_{l\in (\Z^n)^{q}} \, b(l) \,f_{\Th\,
\sum_i \eps_i l_i}(s)
$$
extends meromorphically to $\CC$ with only one possible pole at $s= d+n$.\par Moreover, if we set $${\cal Z}:=\{l\in(\Z^n)^{q} \,\, \vert \,\, \sum_{i=1}^q \eps_i l_i= 0\}\,\text{ and }\,V:=\sum_{l\in {\cal Z}} \, b(l),
$$
then

\quad 1. If\, $V\int_{S^{n-1}} P(u)\, ds(u)\neq 0$, then $s=d+n$ is a simple pole of $g(s)$ and
$$
\underset{s=d+n}{\Res} \, g(s) = V\,\int_{u\in S^{n-1}} P(u)\, ds(u).
$$

\quad 2. If\, $V\int_{S^{n-1}} P(u)\, ds(u)=0$, then $g(s)$ extends holomorphically to $\CC$.

iii) Suppose that $\Th \in \mathcal{M}_n(\RR)$ is badly approximable. For any $(\eps_i)_i\in \{-1,0,1\}^{q}$, the function
$$
g_0(s):={\sum}_{l\in (\Z^n)^{q}\setminus
  {\cal Z}} \,\,
b(l)\,f_{\Th\, \sum_{i=1}^q \eps_i\, l_i}(s),
$$
with $${\cal Z}:=\{l\in(\Z^n)^{q} \,\,\vert\,\, \sum_{i=1}^q \eps_i l_i= 0\},$$ extends holomorphically to $\CC$.
\end{theorem}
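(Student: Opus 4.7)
My plan is to handle part (i) first and then reduce parts (ii) and (iii) to a single uniform estimate on $f_{\Th w}(s)$, $w \in \Z^n \setminus \{0\}$, that leverages the Diophantine hypothesis.

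\emph{Part (i).} When $a \in \Z^n$ the phases $e^{i 2\pi k.a}$ are identically $1$, so $f_a(s) = {\sum}'_k P(k)\|k\|^{-s}$; Theorem~\ref{res-int} applied to the homogeneous polynomial $P$ of degree $d$ immediately yields (i)(1), since the only candidate index in $\mathcal{P}_P$ is $j=d$. For (i)(2) I will argue by discrete integration by parts. Since $a \notin \Z^n$, some coordinate $a_j$ is non-integral and $\alpha \vc e^{i 2\pi a_j} - 1 \neq 0$. The forward shift $\Delta_j \phi(k) \vc \phi(k+e_j) - \phi(k)$ satisfies $\Delta_j e^{i 2\pi k.a} = \alpha\, e^{i 2\pi k.a}$, so summation by parts gives
\begin{equation*}
f_a(s) \;=\; -\alpha^{-1}\,{\sum}'_k \Delta_{-j}\!\bigl(P(k)\|k\|^{-s}\bigr)\, e^{i 2\pi k.a} \;+\; B(s),
\end{equation*}
where $B$ collects the finitely many boundary terms produced when the shift crosses the excluded point $k=0$; each is of the form $P(k_0)\|k_0\|^{-s}$ with $k_0 \neq 0$, and hence entire in $s$. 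The discrete derivative $\Delta_{-j}\bigl(P(k)\|k\|^{-s}\bigr)$ is $O(\|k\|^{d-\Re(s)-1})$ for $\|k\| \geq 2$, so iterating $N$ times produces summands of size $O(\|k\|^{d-\Re(s)-N})$. Choosing $N$ large on each vertical strip yields absolute convergence and thus holomorphy of $f_a$ on all of $\CC$.

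\emph{Reduction of (ii) to (iii), and the Diophantine input.} I split $g(s) = V f_0(s) + g_0(s)$, where $V f_0$ collects the contributions of $l \in \mathcal{Z}$ (for which $\sum_i \eps_i l_i = 0$, hence $a = \Th \cdot 0 = 0 \in \Z^n$). By (i)(1), $V f_0$ has a simple pole at $d+n$ with residue $V \int_{S^{n-1}} P(u)\,ds(u)$ and is otherwise holomorphic, so (ii) reduces entirely to proving (iii). For (iii) I need a quantitative version of (i)(2), uniform in $w \vc \sum_i \eps_i l_i \in \Z^n \setminus \{0\}$. The badly approximable hypothesis furnishes $u \in \Z^n$ and $\delta > 0$ such that $|w \cdot {}^t\Th u - m| \geq c|w|^{-\delta}$ for all $w \in \Z^n \setminus \{0\}$ and $m \in \Z$. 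Combined with the elementary inequality $|e^{i 2\pi x} - 1| \geq 4\, \mathrm{dist}(x,\Z)$, this delivers
\begin{equation*}
\bigl|e^{i 2\pi u \cdot \Th w} - 1\bigr| \;\geq\; c'\, |w|^{-\delta}, \qquad w \in \Z^n \setminus \{0\}.
\end{equation*}

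\emph{Uniform estimate and summation.} I then rerun the integration-by-parts argument of (i)(2), but shifting by the integer vector $u$ rather than by a basis vector $e_j$, so that after $N$ iterations
\begin{equation*}
f_{\Th w}(s) \;=\; \bigl(e^{i 2\pi u \cdot \Th w} - 1\bigr)^{-N}\, {\sum}'_k \Delta_{-u}^{N}\!\bigl(P(k)\|k\|^{-s}\bigr)\, e^{i 2\pi k \cdot \Th w} \;+\; B_N(s,w),
\end{equation*}
where $B_N$ collects the $O(N)$ boundary contributions generated as the successive $u$-shifts encounter $k=0$; these are finite sums of $\|k_0\|^{-s}$-type terms, hence entire in $s$, with norm bounded polynomially in $|w|$. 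The iterated difference $\Delta_{-u}^N\bigl(P(k)\|k\|^{-s}\bigr)$ is $O_N(\|k\|^{d-\Re(s)-N})$ uniformly on compact sets of $s$, so for $N$ sufficiently large the sum converges absolutely and uniformly on any compact $K \subset \CC$, yielding the uniform bound $|f_{\Th w}(s)| \leq C_{N,K}\, |w|^{N\delta}$ for $s \in K$. Since $b \in \Scl((\Z^n)^q)$ decays faster than any polynomial in $|l|$, and $|w| \leq \sum_i |l_i|$, the series $\sum_{l \notin \mathcal{Z}} b(l)\, f_{\Th w}(s)$ is absolutely and locally uniformly convergent, hence defines an entire function $g_0$; this proves (iii). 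Combined with the decomposition $g = V f_0 + g_0$, this proves (ii) in both cases: the residue of $g$ at $d+n$ equals $V \int_{S^{n-1}} P(u)\,ds(u)$, and the would-be pole disappears whenever that product vanishes (either because $V = 0$, or because $\int_{S^{n-1}} P = 0$ already makes $f_0$ entire via (i)(1)).

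\emph{Main obstacle.} The delicate point is the \emph{uniform} integration-by-parts along the integer vector $u$ dictated by the Diophantine inequality, rather than a standard basis vector. I will need to verify that $\Delta_{\pm u}^N$ applied to $P(k)\|k\|^{-s}$ still yields the claimed decay $\|k\|^{d-\Re(s)-N}$ with constants independent of $w$, and that the boundary terms from the finitely many $u$-translates of the origin assemble into entire functions with only polynomial growth in $|w|$. Balancing the $|w|^{N\delta}$ blow-up against the rapid decay of $b$ is precisely where the Diophantine hypothesis is indispensable; without it, $|e^{i 2\pi u \cdot \Th w} - 1|$ could be arbitrarily small with no polynomial lower bound, and the summation over $l$ outside $\mathcal{Z}$ would fail.
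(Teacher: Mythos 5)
The book does not actually prove this statement: it is quoted verbatim from \cite[Theorem 2.6]{TorusSA} with the proof explicitly omitted, so there is no in-paper argument to compare against, and your proposal has to be judged on its own. On that basis it is sound, and its architecture is exactly the mechanism that makes the theorem true: (i)(1) is indeed an immediate specialisation of Theorem \ref{res-int} to a homogeneous $P$ (so $\mathcal{P}_P\subset\{d\}$); the reduction of (ii) to (iii) via $g=Vf_0+g_0$ is the right decomposition, with the absolute convergence of $V$ and of the defining series for $\Re(s)>d+n$ justifying the split; and (iii) hinges, as you say, on a \emph{quantitative} holomorphic continuation of $f_{\Th w}$ with a bound polynomial in $\abs{w}$, which the Diophantine condition supplies through $\abs{e^{i2\pi u\cdot\Th w}-1}\geq c'\abs{w}^{-\delta}$ (note $u\cdot\Th w=w\cdot{}^t\Th u$, matching Definition \ref{ba}, and this inequality also shows $\Th w\notin\Z^n$ for $w\neq 0$, so each $f_{\Th w}$ is entire by (i)(2)) and which the Schwartz decay of $b$ then beats. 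Your route to that quantitative bound --- iterated Abel summation along the fixed integer vector $u$, writing $\Delta_{u}^N$ of $P(k)\norm{k}^{-s}$ as an integral of the $N$-th directional derivative so that it is $\OO(\norm{k}^{d-\Re(s)-N})$ with constants depending only on $u$, $N$ and the compact set of $s$, and collecting the finitely many entire boundary terms produced by the excluded origin, each carrying at most a factor $\abs{w}^{N\delta}$ --- is a legitimate, elementary alternative to the Mellin-transform/heat-kernel estimates used in the cited source; what it buys is that parts (i)(2) and (iii) are handled by one and the same device, at the price of the shift bookkeeping near $k=0$. The items you flag as the ``main obstacle'' (uniformity of the iterated differences along $u$ and the polynomial control of the boundary terms) are indeed the only things left to check, and they are routine: the boundary terms are finite combinations of values $P(k_0)\norm{k_0}^{-s}$ at nonzero lattice points of the form $ju$, hence entire and bounded on compacts uniformly in $w$ up to the $(e^{i2\pi u\cdot\Th w}-1)^{-j}$ factors you already account for. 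So I see no genuine gap, only verifications to be written out.
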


Is is unknown whether the Diophantine condition, which is sufficient to get the results of $iii)$, is also necessary --- see nevertheless \cite[Remark 2.9]{TorusSA}.

\bigskip
In the study of the dimension spectrum of the noncommutative torus we will need the following result \cite[Theorem 2.18 (i)]{TorusSA}.\\
It requires some notations: Fix $q\in \N$, $q\geq 2$ and $r=(r_1,\cdots,r_{q-1})\in (\N^*)^{q-1}$. \\
When $(x_1,\dots,x_{2q})$, we set $$\wt x_j:= x_1+\cdots+ x_j+  x_{q+1}+\cdots + x_{q+j}  \,\,\text{ for any }1\leq j\leq q
$$
and we let $P\in \RR[x_1,\cdots,x_n]$ and $d= \deg P$. 

\begin{theorem}
\label{zetageneral}
Let $\tfrac{1}{2\pi}\Th$ be a badly approximable matrix, and $a \in \Scl ((\Z^{n})^{2q})$. Then,
$$
s\mapsto f(s):= \sum_{l\in [(\Z^n)^{q}]^2} a_{l}\ {{\sum_{k\in \Z^n}}\!\!'} \,\, \prod_{i=1}^{q-1}|k+\wt l_i|^{r_i} \,\norm{k}^{-s}\, P(k)\, e^{ik.\Th \sum_1^{q} l_j}
$$
has a meromorphic extension to $\CC$ with at most simple possible poles at the points $s=n+d+\abs{r_1}+\cdots+\abs{r_{q-1}}-m$ where $m\in \N$.
\end{theorem}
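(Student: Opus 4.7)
The strategy is to expand the polynomial factor $\prod_{i=1}^{q-1}|k+\wt l_i|^{r_i} P(k)$ as a sum of monomials in $k$, then appeal to Theorem \ref{analytic}. Since each $r_i$ is a multi-index of length $|r_i|$, the factor $\prod_{i=1}^{q-1}|k+\wt l_i|^{r_i}$ is polynomial of total degree $|r_1|+\cdots+|r_{q-1}|$ in the components of $k$, with coefficients polynomial in the $l_j$'s. Multiplying by $P$ and collecting, we obtain
\[
\prod_{i=1}^{q-1}|k+\wt l_i|^{r_i} P(k) = \sum_{|\beta|\leq D} Q_\beta(l)\, k^\beta,
\]
with $D := d + |r_1| + \cdots + |r_{q-1}|$ and $Q_\beta \in \CC[l_1,\dotsc,l_{2q}]$. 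Because $a \in \Scl((\Z^n)^{2q})$ dominates each $Q_\beta$, the multiple sum defining $f(s)$ converges absolutely for $\Re(s)$ large enough, so I can interchange summations:
\[
f(s) = \sum_{|\beta|\leq D} \sum_{l\in(\Z^n)^{2q}} b_\beta(l)\, {\sum_{k\in\Z^n}}' k^\beta \|k\|^{-s} \, e^{ik\cdot \Theta \sum_{j=1}^q l_j},
\]
where $b_\beta(l) := a_l\, Q_\beta(l) \in \Scl((\Z^n)^{2q})$.

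Rewriting the phase as $e^{i 2\pi k\cdot \Theta' \sum_{i=1}^{2q} \eps_i l_i}$ with $\Theta' := \tfrac{1}{2\pi}\Theta$ and $(\eps_1,\dotsc,\eps_{2q}) = (\underbrace{1,\dotsc,1}_{q},\underbrace{0,\dotsc,0}_{q})$, the inner $k$-sum becomes exactly the function $f_{\Theta' \sum_i \eps_i l_i}(s)$ of Theorem \ref{analytic} i), associated to the homogeneous polynomial $k^\beta$ of degree $|\beta|$. I would then split the $l$-sum along the diagonal set $\mathcal{Z} := \{l\in(\Z^n)^{2q} : \sum_{j=1}^q l_j = 0\}$. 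On $\mathcal{Z}$ the phase is trivial, and Theorem \ref{analytic} i)(1) (applied with $a=0\in\Z^n$) provides a meromorphic extension of ${\sum}'_k k^\beta \|k\|^{-s}$ to $\CC$ with at most a simple pole at $s = |\beta| + n$; the outer sum over the Schwartz-weighted $l\in\mathcal{Z}$ inherits this property, the set of $\beta$ being finite. On $\mathcal{Z}^c$, the Diophantine hypothesis on $\Theta'$ allows me to invoke Theorem \ref{analytic} iii) for each $\beta$ separately, producing a contribution that is holomorphic on all of $\CC$.

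Putting the pieces together, the only possible poles of $f$ are at $s = |\beta| + n$ for $0 \leq |\beta| \leq D$, that is, at $s = n + d + |r_1| + \cdots + |r_{q-1}| - m$ with $m \in \{0,1,\dotsc,D\}$, and all of them are at most simple; letting $m$ range over all of $\N$ is harmless, since $f$ is regular at the remaining candidate points. The main delicacy is the $l\notin\mathcal{Z}$ regime: in general, summing a family of $l$-indexed meromorphic functions can easily spoil analyticity or introduce spurious singularities. Here, however, the heavy lifting is entirely packaged in Theorem \ref{analytic} iii), where the Diophantine condition on $\Theta'$ is exactly what enables uniform $l$-control of the oscillatory sums $f_{\Theta'\sum\eps_i l_i}(s)$. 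The only residual verification, namely that $b_\beta \in \Scl((\Z^n)^{2q})$, is immediate from $a\in\Scl$ and the polynomial nature of $Q_\beta$.
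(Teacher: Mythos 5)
Your reduction hinges on the claim that $\prod_{i=1}^{q-1}\vert k+\wt l_i\vert^{r_i}P(k)$ is a polynomial in $k$ of degree $d+\abs{r_1}+\cdots+\abs{r_{q-1}}$ with coefficients polynomial in $l$, and this is exactly where the argument breaks. In the setting of the theorem the $r_i$ are positive integers (not multi-indices) and $\vert k+\wt l_i\vert$ is the Euclidean norm: these factors arise from powers of $\abs{\DD}$ in the proof of Theorem \ref{zeta(0)}, where the eigenvalue on $U_k$ is $\norm{k}$. For odd $r_i$ (e.g. $r_i=1$) the factor $\norm{k+\wt l_i}^{r_i}$ is not a polynomial in $k$; even under your multi-index reading, the absolute values $\prod_j\vert k_j+(\wt l_i)_j\vert^{r_{i,j}}$ obstruct polynomiality unless every exponent is even. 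So the finite monomial expansion, and with it the interchange of sums and the direct appeal to Theorem \ref{analytic}, is not available in general. A tell-tale sign is your own conclusion: you obtain only finitely many candidate poles ($m\leq D$, all with $\Re(s)\geq n$), whereas the statement allows poles at $n+d+\abs{r_1}+\cdots+\abs{r_{q-1}}-m$ for \emph{every} $m\in\N$; that infinite descending sequence is precisely the footprint of the non-polynomial norm factors.

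The missing idea (this is how the cited source \cite{TorusSA}, Theorem 2.18, whose proof the book omits, proceeds) is to expand each norm as $\norm{k+\wt l_i}^{r_i}=\norm{k}^{r_i}\bigl(1+\tfrac{2\,k.\wt l_i+\norm{\wt l_i}^2}{\norm{k}^2}\bigr)^{r_i/2}$ and use the binomial series in the region where $\norm{k}$ dominates $\norm{\wt l_i}$, with an explicit remainder. Each truncation order then produces finitely many terms of the form (polynomial in $k$ and $l$) times $\norm{k}^{\sum_i r_i-2j-s}$, to which the mechanism you describe --- splitting over $\mathcal{Z}=\{\sum_{j=1}^{q}l_j=0\}$ and its complement and invoking Theorem \ref{analytic} i)--iii) with $\tfrac{1}{2\pi}\Th$ badly approximable --- does apply, plus a remainder that is holomorphic in an ever larger half-plane; one must also treat the finitely many $k$ with $\norm{k}\lesssim\norm{\wt l_i}$ and keep all estimates uniform enough in $l$ to sum against the Schwartz weight $a_l$. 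Your $\mathcal{Z}$/$\mathcal{Z}^c$ splitting and the use of the Diophantine hypothesis are in the right spirit, but without the norm expansion and the attendant uniform control of remainders the proof does not go through.
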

An explicit formula for the residues of $f$ is given in \cite[Theorem 2.18 (ii)]{TorusSA}.

\chapter{Examples of Spectral Triples}
\label{chap:appendix}

\section{Spheres}
\label{sec:spheres}

A particularly illustrative example of a commutative spectral triple (recall Example~\ref{ex:commutative}) is provided by the $d$-dimensional unit spheres\index{sphere} $S^d$\index{_zSdim@$S^d$}.

On $S^1$ there are two possible spin structures, where the nontrivial one is associated to functions with antiperiodic boundary conditions. When $d \geq 2$ there is only one spin structure available since $S^d$ is simply connected.\\  Let us equip $S^d$ with the standard round metric and cook up the standard Dirac operator $\Dslash$ acting on the chosen spinor bundle $\SS$. Then, 
$(C^\infty(S^d), L^2(S^d,\SS), \Dslash)$ is a $d$-dimensional regular spectral triple with a simple dimension spectrum $d - \N$ (cf. Example \ref{ex:dimsp_manifold}), for any $d \geq 1$ and any spinor bundle $\SS$.

The spectrum of $\Dslash$ turns out to be very simple \cite{BarSpin,Ginoux}: For the trivial spin structure on $S^1$ we have $\lambda_n(\Dslash) = n$ for $n \in \Z$ and all of the eigenspaces are one-dimensional. In particular, we have $\dim \ker \Dslash = 1$. In the non-trivial case, the spectrum of the Dirac operator agrees with the general pattern for $S^d$ and for $d\geq1$:
\begin{align}
\label{eigenvalues_Sd}
\lambda_n(\Dslash)=\sign(n) (n+\tfrac d2), \qquad M_n(\Dslash)=2^{\floor{\tfrac d2}} \tbinom{\abs{n}+d-1}{d-1},\,\, \text{ with }n\in \Z.
\end{align}
Hence, 
$$\mu_n(\Dslash)= n + \tfrac d2 \,\text{ with }\,M_n(\abs{\Dslash})=2^{\floor{\tfrac d2}+1} \tbinom{n+d-1}{d-1},\, \text{ with }n\in \N.$$

\section{Tori}
\label{sec:tori}

Another commutative spectral triple is given by the flat tori \index{torus} $\TT^d=\RR^d/\Z^d$\index{_zTd@$\TT^d$} and, as above, $(C^\infty(\TT^d), L^2(\TT^d,\SS), \Dslash)$ is a $d$-dimensional regular spectral triple with a simple dimension spectrum $d - \N$. There are $2^d$ different spin structures on $\TT^d$ classified by the twisting of each coordinates of the lattice $\Z^d$: Given a basis $e_1,\cdots,e_d$ of $\Z^d$, this is realised by choosing $s_1,\cdots,s_d\in \{0,1\}$, so that we have the group homomorphism: $e_k\in\Z^d \to(-1)^{s_k}\in \{+1,-1\}$ and all spin structures are given this way by $(s_1,\cdots,s_d)$. \\
The eigenvalues of the Dirac operator $\Dslash$ (endowed with the induced flat metric) depend on the chosen spin structures and are given by (see \cite{BarSpin, Ginoux})
\begin{align}
\label{eq:eigenvalues tori}
\{ \pm2\pi \,\Vert k+ \tfrac12\sum_{j=1}^d s_j\,e_j^* \Vert\,\,\,\vert\,\,\, k\in {\Z^d}^* \}.
\end{align}
where ${\Z^d}^*$ is the dual lattice and $(e_1^*,\cdots,e_d^*)$ is the dual basis of $(e_1,\cdots,e_d)$.\\
The multiplicity of the eigenvalue $0$  (given by $s_j=0,\,\forall \,j$) is $2^{\floor{d/2}}$ while the non-zero eigenvalues have multiplicity $2^{\floor{d/2}-1}$.

\section{Noncommutative Tori}
\label{sec:torus}

The noncommutative $d$-tori \index{noncommutative torus} were introduced by Rieffel \cite{Rieffel} and Connes  \cite{ConnesNCT} as deformations of $\TT^d$ characterised a by non-zero skew-symmetric matrix $\Theta \in M_d(\RR)$. \index{_a4theta@$\Theta$}

Denote by $C^\infty(\TT^d_\Theta)$ \index{_zCminf@$C^\infty(\TT^d_\Theta)$} the algebra generated by $d$ unitaries $u_i$, $i=1,\dots,d$ satisfying
\begin{equation}
\label{rel}
u_\ell\,u_j=e^{i\,\Theta_{\ell j}}\,u_j\,u_\ell,
\end{equation}
and with Schwartz coefficients. So,  $a\in C^\infty(\TT^d_\Theta)$ can be written as
$a=\sum_{k\in\Z^d}a_k\,U_k$, where $\{a_k\}\in \Scl(\Z^d)$ (i.e. $\sup_{k\in \Z^d}\vert k_1\vert^{n_1}\cdots \vert k_d\vert^{n_d}\abs{a_k}<\infty,\,\forall n_i\in \N$) and
$$
U_k\vc e^{-\frac i2 k.\Theta' k}\,\,u_1^{k_1}\cdots u_n^{k_d}, \quad k\in\Z^d,
$$ 
where $\Theta'$ is the restriction of $\Theta$ to its upper triangular part. \\
Relation \eqref{rel} reads
\begin{equation}
\label{rel1}
U_{k}U_{q}=e^{-\frac i2 k.\Theta q} \,U_{k+q}\,\, \text{ or }\,\, U_{k}U_{q}=e^{-i k.\Theta q} \,U_{q}U_{k}\,.
\end{equation}
Thus, the unitary operators $U_{k}$ satisfy 
$$U_{k}^*=U_{-k} \text{ and }[U_{k},U_{l}]=-2i\,\sin(\tfrac 12 k.\Theta l)\,U_{k+l}.$$

Let $\tau$ be the trace on $C^\infty(\TT^d_\Theta)$ defined by 
$$
\tau\big( \,{\sum}_{k\in\Z^d}\, \,a_k\,U_k \big)\vc a_0
$$
and $\H_{\tau}$ \index{_zHHH@$\H_{\tau}$} be the GNS Hilbert space obtained by completion of $C^\infty(\TT^d_\Theta)$
with respect to the norm induced by the scalar product $\langle a,b\rangle\vc \tau(a^*b)$.

On $\H_{\tau}=\set{\sum_{k\in\Z^d}a_k\,U_k \,\, \vert \, \, \{a_{k}\}_{k} \in \ell^2(\Z^d) }$, let $\delta_\mu$, for $\mu\in \set{1,\dots,d}$, be the pairwise commuting canonical derivations, given by
\begin{equation}
\label{dUk}
\delta_\mu(U_k)\vc ik_\mu U_k\,.
\end{equation}
Define now
$$
\A_{\Theta}\vc C^\infty(\TT^d_\Theta) \text{ acting on } \H \vc \H_{\tau} \otimes \CC^{2^m},
\text{ with }m\vc\lfloor d/2\rfloor. \index{_zAtheta@$\A_{\Theta}$}
$$

Each element of $\A_{\Theta}$ is represented on $\H$ as $\pi(a) \vc L(a)\otimes1_{2^m}$ where $L(\cdot)$ (and $R(\cdot)$) is the left (right) regular representation of $C^\infty(\TT^d_\Theta)$. The Tomita conjugation $$
J_{0}(a)\vc a^*$$ satisfies $J_0L(a)=R(a^*)J_0$ and $[J_{0},\delta_{\mu}]=0$ and we define $$J\vc J_{0}\otimes C_{0},$$ where $C_{0}$ is an operator on $\CC^{2^m}$ such that $C_{0}^2=\pm 1_{2^m}$, depending on the parity of $m$. The (flat) Dirac-like operator is given by
\begin{align}
\label{defDirac}
\DD\vc -i\,\delta_{\mu}\otimes \gamma^{\mu},
\end{align}
with hermitian Dirac matrices $\gamma$ satisfying $C_{0}\ga^{\alpha}=\pm \ga^\alpha C_{0}$ (see \cite{ConnesMarcolli,Elements} for details about the signs). The operator $\DD$ is defined and symmetric on the dense subset of $\H$ 
given by $C^\infty(\TT^d_\Theta) \otimes \CC^{2^{m}}$ and we still denote by $\DD$ its selfadjoint extension. Thus
$$
\DD\ U_k \otimes e_i = k_\mu U_k \otimes \gamma^\mu e_i ,
$$
where $(e_i)$ is the canonical basis of $\CC^{2^m}$. \\
Finally, in the even case, the chirality operator reads: $\gamma \vc \id \otimes (-i)^{m} \gamma^1 \cdots \gamma^{d}$.\\
The operator $\DD$ is not invertible: $$\ker \DD=U_0\otimes \CC^{2^m}$$ and this kernel has dimension $2^m$ because if $\psi=\sum_{k,j} c_{k,j}U_k\otimes e_j$, then we have $0=\DD^2\psi=\sum_{k,j}c_{k,j} \norm{k}^2 U_k\otimes e_j$. Thus, 
\begin{align}
\label{eq:Pzero}
P_0=\vert U_0\rangle \langle U_0 \vert \otimes \bbbone_{\CC^{2^m}}.
\end{align}

This yields a spectral triple:
\begin{theorem}
\label{thm:nc torus}
The tuple $(\A_{\Theta},\H,\DD,J,\gamma)$ is a real regular spectral triple of dimension $d$. Its $KO$-dimension is also $d$.
\end{theorem}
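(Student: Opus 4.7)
The plan is to verify the axioms one by one, building on the explicit eigenbasis $\{U_k\otimes e_i\}_{k\in\Z^d,\,1\le i\le 2^m}$ of $\H$, which diagonalises all derivations $\delta_\mu$ and hence $\DD^2$. First I would show that $\ahd$ is a spectral triple in the sense of Definition~\ref{def:triple}. Self-adjointness of $\DD$ is built in by construction. For the boundedness of $[\DD,\pi(a)]$ with $a=\sum_k a_k U_k\in\A_\Theta$, I would use that $[\delta_\mu,L(U_k)]=L(\delta_\mu(U_k))=ik_\mu L(U_k)$, so $[\DD,\pi(a)]=-i\sum_k a_k\,k_\mu L(U_k)\otimes\gamma^\mu$, which is bounded because $\{a_k\}\in\Scl(\Z^d)$ makes the operator-valued series norm-convergent. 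To prove that the resolvent is compact, I would note that on each finite-dimensional subspace $U_k\otimes\CC^{2^m}$ the operator $\DD^2$ acts as $\|k\|^2\,\bbbone$ (since $(k_\mu\gamma^\mu)^2=\|k\|^2\bbbone$), so $\DD$ has pure point spectrum $\{\pm\|k\|:k\in\Z^d\}$ with finite multiplicities growing only with $k$, hence $(\DD-\lambda)^{-1}\in\KKK(\H)$ for $\lambda\notin\spec\DD$.

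Next, for the dimension assertion, I would compute $\Tr |D|^{-s}$ directly. With $D=\DD+P_0$ and $P_0$ as in \eqref{eq:Pzero}, the spectrum of $|D|$ consists of $1$ on the $2^m$-dimensional kernel of $\DD$ and of $\|k\|$ for each $k\neq 0$, with total multiplicity $2^m$. Hence $\Tr |D|^{-s}=2^m+2^m\sum_{k\in\Z^d\setminus\{0\}}\|k\|^{-s}$, a series that is the basic Epstein-type sum, convergent precisely for $\Re(s)>d$; thus $p=d$ in the sense of Definition~\ref{def:p_summability}. For regularity, by Lemma~\ref{intersection of domains} it suffices to show $a,[\DD,a]\in\bigcap_n\Dom\delta^n$ with $\delta=[|D|,\cdot]$. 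Using the same eigenbasis one checks that on $U_k\otimes e_i$ the operator $L(U_j)\otimes 1$ acts essentially as a shift $k\mapsto k+j$, so $\delta^n(L(U_j)\otimes 1)$ acts by multiplication by $(\|k+j\|-\|k\|)^n$, which is bounded by $\|j\|^n$; combined with the Schwartz decay of the coefficients of $a\in\A_\Theta$, this shows $L(a)\otimes 1\in\Dom\delta^n$ for every $n$, and the same argument applied to $[\DD,a]$ finishes regularity.

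For the real structure I would verify the three relations of Definition~\ref{def:real}. The fact that $J_0$ commutes with $\delta_\mu$, together with the anti-linearity of $J_0$ and the relation $C_0\gamma^\alpha=\pm\gamma^\alpha C_0$, gives
\[
J\DD(a\otimes v)=i\,\delta_\mu(a^*)\otimes C_0\gamma^\mu v,\qquad \DD J(a\otimes v)=-i\,\delta_\mu(a^*)\otimes\gamma^\mu C_0 v,
\]
from which $J\DD=\eps\DD J$ with the appropriate sign $\eps$ governed by whether $C_0$ commutes or anticommutes with the $\gamma^\mu$. Likewise $J^2=\eps'$ and (in the even case) $J\gamma=\eps''\gamma J$ are dictated by the chosen $C_0$ on $\CC^{2^m}$. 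The order-zero condition \eqref{oppcommut} $[a,JbJ^{-1}]=0$ follows because $J_0 L(b^*)J_0^{-1}=R(b)$ and left and right regular representations commute. The first-order condition, though not requested here, also holds by the same mechanism.

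The main (essentially bookkeeping) obstacle is the $KO$-dimension claim: one has to verify that the triple $(\eps,\eps',\eps'')$ realised by our choice of $C_0$ on $\CC^{2^m}$ coincides with the $d$-th row of the table \eqref{commu}. This is a classical computation in the Clifford algebra $\mathrm{C}\ell(\RR^d)$, periodic of period~$8$ in $d$, and $C_0$ is selected precisely so as to realise the charge-conjugation operator of the standard flat spin structure on $\TT^d$; hence the $KO$-dimension of $\ahd$ equals the $KO$-dimension of the commutative torus $\TT^d$, which is $d\pmod 8$. I would conclude by collecting these verifications.
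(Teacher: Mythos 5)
Your proposal is correct, and it is essentially the standard direct verification: note that the paper itself does not prove this theorem but defers to the references \cite{ConnesMarcolli,Elements}, recording only the eigenvalue data \eqref{eq:Pzero} and, later (in the proof of Theorem \ref{zeta(0)}), the identity $\zeta_D(s)=2^{\lfloor d/2\rfloor}(Z_d(s)+1)$ — which is exactly your Epstein-series computation of the dimension. Your remaining steps (bounded commutators and compact resolvent from the explicit eigenbasis, regularity via the shift estimate $|\,\|k+j\|-\|k\|\,|\le\|j\|$ combined with Schwartz decay, the order-zero condition from $J_0L(b^*)J_0^{-1}=R(b)$, and the sign bookkeeping for $(\epsilon,\epsilon',\epsilon'')$ reducing to the classical Clifford/charge-conjugation data, hence $KO$-dimension $d$ mod $8$) coincide with the argument in those references, the only cosmetic caveat being that $C_0$ must be read as antilinear (a matrix composed with complex conjugation) for $J=J_0\otimes C_0$ and the commutation signs with $\gamma^\mu$ to be meaningful.
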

Most of the arguments will be revisited in the computation of the dimension spectrum ---  see Theorem \ref{zeta(0)}. For a complete proof see \cite{ConnesMarcolli,Elements}.

We remark that the torus actions on $C^*$-algebras lead to interesting nonunital spectral triples , see \cite[Chapter 5]{CGRSIndex}.

\subsection{Dimension Spectrum}\label{subsec:NC torus dim sp}

\begin{theorem}
\label{zeta(0)}

i) If $\tfrac{1}{2\pi} \Th$ is badly approximable, the dimension spectrum of the triple $\big(\Coo(\TT^d_\Th),\H,\DD\big)$ is equal to the set $\set{d-k \, :\,  k\in \N}$ and all of the poles are simple.

ii) $\zeta_D(0)=0.$
\end{theorem}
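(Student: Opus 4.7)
\emph{Part (i).} My plan is to reduce the computation of $\zeta_{T,D}(s)$ for $T\in\Psi^0(\A_\Theta)$ to the class of zeta series handled by Theorem~\ref{zetageneral}. First I would give a normal form for pseudodifferential operators on the noncommutative torus: by Definition~\ref{def:PDO} together with the first-order condition and \eqref{dUk}, every $T\in\Psi^0(\A_\Theta)$ is equivalent, modulo smoothing operators, to a finite sum of terms of the shape
\[
L(a)\,R(b)\,\prod_{i}[\DD,L(a_i)]\cdot\prod_{j}[\DD,R(b_j)]\cdot|D|^{-p}\otimes M,
\]
with $a,b,a_i,b_j\in \Coo(\TT^d_\Theta)$, $M\in\mathcal{M}_{2^m}(\CC)$ and $p\in\N$. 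Expanding each element in its Fourier series $a=\sum a_k U_k$, using \eqref{rel1} and $\DD(U_k\otimes e_i)=k_\mu U_k\otimes \gamma^\mu e_i$, the action of such a term on a basis vector $U_k\otimes e_i$ produces a phase of the form $e^{ik.\Theta(\sum_\nu \varepsilon_\nu l_\nu)}$ multiplied by a polynomial in $k$ and in $k+\widetilde l_i$ of bounded degree, and by polynomials in $\|k+\widetilde l_i\|$.

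Second, pairing with $U_k\otimes e_i$ and summing over $k$ and $i$, the trace $\Tr T|D|^{-s}$ becomes a finite linear combination of series of exactly the form covered by Theorem~\ref{zetageneral}, i.e.
\[
\sum_{l\in[(\Z^d)^q]^2} a_l\,{\sum_{k\in\Z^d}}'\ \prod_{i=1}^{q-1}\|k+\widetilde l_i\|^{r_i}\,\|k\|^{-s}\,P(k)\,e^{ik.\Theta\sum_\nu l_\nu},
\]
plus a finite contribution coming from the kernel $\ker\DD=U_0\otimes\CC^{2^m}$, which is holomorphic in $s$. Since $\tfrac{1}{2\pi}\Theta$ is badly approximable, Theorem~\ref{zetageneral} provides a meromorphic extension to $\CC$ with at most simple poles located at $s=d+\deg P+|r_1|+\ldots+|r_{q-1}|-m$, $m\in\N$. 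Because $P$ and the $r_i$'s arise from symbols of integer order, these poles all lie in $d-\N$, yielding $\Sd\subset d-\N$, with simple poles. The inclusion $d-\N\subset \Sd$ follows by inspecting the leading $T=\bbbone$ case, which already produces a pole at $s=d$ (cf.\ Proposition~\ref{prop:NC_torus}), together with the stability of the dimension spectrum under multiplication by $|D|^{-k}\in\Psi^0(\A_\Theta)$. The main obstacle here is the book-keeping that turns pseudodifferential expansions into the precise form \eqref{zetageneral}; that is where the regularity of the triple (Theorem~\ref{thm:nc torus}) and Formula~\eqref{OPexpansion} enter, and where the Diophantine hypothesis is essential to commute residues and infinite sums.

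\emph{Part (ii).} For this, I would exploit the fact that $\DD$ is diagonal in the orthonormal basis $(U_k\otimes e_i)$, so
\[
\zeta_D(s) \;=\; \Tr|D|^{-s}\;=\;2^m\dim\ker\DD \cdot 1 + 2^m \sum_{k\in\Z^d\setminus\{0\}}\|k\|^{-s} \;=\; 2^m\bigl(1+\zeta_E(s)\bigr),
\]
where I have used $\dim\ker\DD=1$ on the scalar side (the kernel contributes $2^m$ to the spinor trace once combined with $P_0$ from \eqref{absD}, hence the overall factor $2^m$), and $\zeta_E(s):={\sum}'_{k\in\Z^d}\|k\|^{-s}$ is the Epstein zeta function. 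The latter admits a meromorphic extension to $\CC$ with a unique simple pole at $s=d$, by Theorem~\ref{res-int} with $P=1$, and satisfies $\zeta_E(0)=-1$. Hence $\zeta_D(0)=2^m(1-1)=0$.

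\emph{Where the difficulty sits.} The routine work is encoded in Appendix~\ref{app: residues of series}; the serious step is the first one in part (i), namely extracting from any $T\in\Psi^0(\A_\Theta)$ an asymptotic expansion whose matrix elements on the basis $(U_k\otimes e_i)$ give rise exactly to the series treated in Theorem~\ref{zetageneral}. Once that reduction is carried out, the meromorphy, the location of poles in $d-\N$, and their simplicity follow directly from the theorem. The use of the Diophantine hypothesis on $\tfrac{1}{2\pi}\Theta$ is unavoidable at this step, since without it the exchange of residue and infinite sum over $l$ is not licit, and the dimension spectrum could a priori be larger than $d-\N$.
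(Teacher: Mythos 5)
Your proposal follows essentially the same route as the paper: expand elements of $\Pc(\A_\Theta)$ in the basis $U_k\otimes e_j$, reduce $\Tr\, T\,|D|^{-s}$ to the series handled by Theorem~\ref{zetageneral} (where the badly approximable hypothesis licenses the exchange of residues and sums), and compute $\zeta_D(s)=2^{\lfloor d/2\rfloor}\bigl(Z_d(s)+1\bigr)$ with the kernel term to conclude $\zeta_D(0)=0$. The only points where the paper supplies an explicit justification that you merely assert are the inclusion $\Sd\subset d-\N$ (the paper notes that for $B\in\Psi^0(\A)$ the operator $B|D|^{-s}$ is trace-class, hence $\zeta_{B,D}$ is regular, for $\Re(s)>d$, which rules out poles above $d$) and the value $Z_d(0)=-1$ (obtained there from the Epstein functional equation \eqref{eq:functional eq} rather than from Theorem~\ref{res-int}, which only controls the poles and residues).
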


\begin{proof}
$i)$ Let $B\in \Pc(\A)$ and $p\in \N$. Suppose that $B$ is of the form
$$
B= a_r b_r \DD^{q_{r-1}}|\DD|^{p_{r-1}} a_{r-1}b_{r-1}\cdots \DD^{q_1}|\DD|^{p_1} a_1 b_1,
$$
where $r\in \N$, $a_i \in \A$, $b_i\in J\A J^{-1}$, $q_i, p_i \in \N$.\\
We decompose $a_i=:\sum_{\ell\in \Z^d} a_{i,\ell}\,U_\ell$ and $b_i=:\sum_{\ell'\in \Z^d} b_{i,\ell'} \,U_{\ell'}$. \\With the shorthands $k_{\mu_1,\mu_{q_i}}\vc k_{\mu_1}\cdots k_{\mu_{q_i}}$ and $\ga^{\mu_1,\mu_{q_i}}\vc \ga^{\mu_1}\cdots \ga^{\mu_{q_i}}$, we get
\begin{align*}
\DD^{q_1}|\DD|^{p_1}  a_1 b_1 \, U_k \otimes e_j \\
=  \sum_{\ell_1,\,\ell'_1} a_{1,\ell_1} &b_{1,\ell'_1} U_{\ell_1}U_k U_{\ell'_1} \,|k+\ell_1+\ell'_1|^{p_1}\,(k+\ell_1+\ell'_1)_{\mu_1,\mu_{q_1}}\otimes \ga^{\mu_1,\mu_{q_1}} e_j,
\end{align*}
which gives, after $r$ iterations,
\begin{align*}
B (U_k \otimes e_j) = \sum_{\ell,\ell'\in \Z^d} &\wt a_{\ell} \wt b_{\ell}\, U_{\ell_r}\cdots U_{\ell_1} U_k U_{\ell'_1}\cdots U_{\ell'_r} \,\prod_{i=1}^{r-1} |k+\wh \ell_i+\wh \ell'_i|^{p_i}(k+\wh \ell_{i} +\wh \ell'_{i})_{\mu^{i}_1,\mu^i_{q_i}}\\
& \hspace{1cm}\otimes \ga^{\mu^{r-1}_1,\mu^{r-1}_{q_{r-1}}}\cdots \ga^{\mu^1_1,\mu^1_{q_1}} e_j,\\
\text{where} \quad&\wt a_\ell : = a_{1,\ell_1}\cdots a_{r,\ell_r},\qquad \wt b_{\ell'} : = b_{1,\ell'_1}\cdots b_{r,\ell'_r},\\
&\wh \ell_i \vc \ell_1+\cdots+\ell_i, \qquad \ga^\mu \vc \ga^{\mu^{r-1}_1,\mu^{r-1}_{q_{r-1}}}\cdots \ga^{\mu^1_1,\mu^1_{q_1}}.
\end{align*}
Let us denote $F_\mu(k,\ell,\ell')\vc \prod_{i=1}^{r-1}|k+\wh \ell_i+\wh \ell'_i|^{p_i}\, (k+\wh \ell_{i} +\wh \ell'_{i})_{\mu^{i}_1,\mu^i_{q_i}}$ . \\With the shortcut $\sim_c$ meaning equality modulo a constant function in the variable $s$, we have
$$
\Tr \big(B|D|^{-p-s}\big) \sim_c \sum _{k\in\Z^d}{\!\!\!'} \, \sum_{\ell,\ell'\in \Z^d}\, \wt a_\ell \wt b_{\ell'} \, \tau\big(U_{-k}U_{\ell_r}\cdots U_{\ell_1} U_k U_{\ell'_1}\cdots U_{\ell'_r}\big) \tfrac{F_\mu(k,\ell,\ell')}{\norm{k}^{s+p}} \tr (\ga^\mu)\, .
$$
Since $U_{\ell_r}\cdots U_{\ell_1} U_k = U_k U_{\ell_r}\cdots U_{\ell_1}\, e^{-i\sum_{i=1}^r \ell_i .\Th k}$, we get
$$
\tau\big(U_{-k}U_{\ell_r}\cdots U_{\ell_1} U_k U_{\ell'_1}\cdots U_{\ell'_r}\big)=\delta_{\sum_{i=1}^r \ell_i+\ell'_i,0} \, e^{i\,\phi(\ell,\ell')} \, e^{-i\,\sum_{i=1}^r \ell_i.\Th k},
$$
where $\phi$ is a real valued function. Thus,
\begin{align*}
\Tr \big(B |D|^{-p-s} \big)&\sim_c  \sum _{k\in\Z^d}{\!\!\!'} \, \sum_{\ell,\ell'\in \Z^d}\, e^{i\phi(\ell,\ell')}\,\delta_{\sum_{i=1}^r \ell_i+\ell'_i,0}\,\, \wt a_\ell \wt b_{\ell'}\, \tfrac{F_\mu(k,\ell,\ell')\,e^{-i\sum_{i=1}^r \ell_i.\Th k}}{\norm{k}^{s+p}} \,\tr(\ga^\mu) \\
&\sim_c f_\mu(s)\tr (\ga^\mu).
\end{align*}
The function $f_\mu(s)$ can be decomposed as a linear combination of zeta functions of the type described in Theorem \ref{zetageneral} (or, if $r=1$ or if all  $p_i$'s are zero, in Theorem \ref{analytic}). Thus, by linearity, $s\mapsto \Tr \big(B |D|^{-p-s}\big)$ has a meromorphic extension to $\CC$ with simple poles located exclusively in $\Z \subset \CC$. \\
Moreover, if $B\in \Psi^0(\A)$ and $q\in \N$ is such that $q>d$, then $B\abs{D}^{-s} \in \OP^{-\Re (s)}$, so it is trace-class around $q$ and hence $\zeta_{B,D}(s)=\Tr \,B\abs{D}^{-s}$ is regular around $q$.

$ii)$ Let 
$$
Z_d(s)\vc\sum _{k\in\Z^d}{\!\!\!'} \,\,\norm{k}^{-s}$$
\index{_zzd@$Z_d$}be the Epstein zeta function associated to the quadratic form $q(x):=x_1^2+...+x_d^2$. Then $Z_d$ enjoys the functional equation: 
\begin{align}
\label{eq:functional eq}
Z_d(s)= \pi^{s-d/2} \Gamma (d/2 -s/2)\Gamma (s/2)^{-1}\,Z_d(d-s).
\end{align}
Since $\pi^{s-d/2} \Gamma (d/2 -s/2) \,\Gamma (s/2)^{-1}=0$ for any negative even integer $d$ and $Z_d(s)$ is meromorphic on $\CC$ with only one pole at $s=d$ (with residue $2 \pi^{d/2} \Gamma (d/2)^{-1}$ according to \eqref{eq:zetaq1qn}), we get $Z_d(0)=-1$.\\ By definition, $$\zeta_D(s)=\sum_{k\in \Z^d}\, \sum_{j=1}^{2^m} \langle U_k\otimes e_j, |D|^{-s}U_k\otimes e_{j}\rangle,$$ so that 
\begin{align}
\label{eq:zeta for nct}
\zeta_D(s)= 2^{\lfloor d/2 \rfloor}(\,Z_d(s)+1)
\end{align}
and the conclusion $\zeta_D(0)=0$ follows.
\hfill{$\Box$}
\end{proof}

\subsection{Heat Kernel Expansion}

\begin{proposition}
\label{eq:heat trace asymptotics for torus}
The heat trace asymptotics is $$\Tr e^{-t\,\DD^2} \underset{t\downarrow 0}{\sim} 2^{\lfloor d/2 \rfloor}\pi^{d/2}\,t^{-d/2}.$$
\end{proposition}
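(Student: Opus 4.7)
\bigskip

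\noindent\textbf{Proof plan.} The strategy is to reduce the claim to a Gaussian lattice sum and then apply the Poisson-type estimate already established in Proposition \ref{prop:seriesintegral}. First, I would compute the spectrum of $\DD^2$ explicitly. Recall that $\DD(U_k\otimes e_j)=k_\mu U_k\otimes\gamma^\mu e_j$, and since $\gamma^\mu\gamma^\nu+\gamma^\nu\gamma^\mu=2\delta^{\mu\nu}\bbbone_{2^m}$, we get $\DD^2(U_k\otimes e_j)=\|k\|^2\,U_k\otimes e_j$. Hence $\{U_k\otimes e_j\}_{k\in\Z^d,\,1\leq j\leq 2^m}$ is an orthonormal eigenbasis with eigenvalues $\|k\|^2$ of multiplicity $2^{\lfloor d/2\rfloor}$ each.

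The second step is to assemble the heat trace. Since $e^{-t\DD^2}$ is trace-class for all $t>0$ (e.g.\ by Proposition \ref{prop:KH} applied to $H=\DD^2$, which lies in $\Tpp{d/2}$), the trace can be computed termwise in the eigenbasis:
\begin{align*}
\Tr e^{-t\DD^2} \;=\; \sum_{k\in\Z^d}\sum_{j=1}^{2^m} e^{-t\|k\|^2} \;=\; 2^{\lfloor d/2\rfloor}\sum_{k\in\Z^d} e^{-t\|k\|^2}.
\end{align*}

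The third step is to apply Proposition \ref{prop:seriesintegral} with $g(x)=e^{-\|x\|^2}\in\Scl(\RR^d)$ and the rescaling $g_{\sqrt t}(x)=g(\sqrt t\,x)$. Setting $S(g_{\sqrt t})=\sum_{k\in\Z^d} e^{-t\|k\|^2}$ and $I(g_{\sqrt t})=\int_{\RR^d}e^{-t\|x\|^2}dx=\pi^{d/2}t^{-d/2}$, the proposition yields $S(g_{\sqrt t})=I(g_{\sqrt t})+\OO_0(t^\infty)$, cf.\ Formula \eqref{eq: gaussian series}. Multiplying by $2^{\lfloor d/2\rfloor}$ gives
\begin{align*}
\Tr e^{-t\DD^2} \;=\; 2^{\lfloor d/2\rfloor}\pi^{d/2}\,t^{-d/2}+\OO_0(t^\infty),
\end{align*}
which entails the announced asymptotic equivalence (indeed, a much stronger statement: the asymptotic expansion has a single non-vanishing term).

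There is essentially no obstacle here: the computation of the spectrum is algebraic and the remainder control is handed to us by Proposition \ref{prop:seriesintegral}, whose proof relies on the rapid decay of $\widehat g$. The only point requiring a modicum of care is the legitimacy of expanding the trace as a convergent sum over the basis $\{U_k\otimes e_j\}$, which is justified by the fact that this basis diagonalises the trace-class positive operator $e^{-t\DD^2}$. In fact, this computation is already implicitly contained in the proof of Proposition \ref{prop:NC_torus} specialised to $a=\bbbone$ (so that $\tau(a)=1$), and one may alternatively deduce the claim as that special case.
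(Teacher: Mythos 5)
Your proposal is correct and follows essentially the same route as the paper: the paper's proof also diagonalises $\DD^2$ on the basis $U_k\otimes e_j$ with eigenvalues $\norm{k}^2$ of multiplicity $2^{\lfloor d/2\rfloor}$ and then invokes Formula \eqref{eq: gaussian series} (i.e.\ Proposition \ref{prop:seriesintegral}) to conclude. Your extra remarks on trace-class justification and the link to Proposition \ref{prop:NC_torus} with $a=\bbbone$ are consistent with the paper's treatment.
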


\begin{proof}
Since $\DD^2\, U_k \otimes e_i = \norm{k}^2 U_k \otimes e_i$, we know by Formula \eqref{eq: gaussian series} that 
\begin{align*}
\Tr e^{-t\,\DD^2} = 2^m\sum_{k\in \Z^d} e^{-t\,\norm{k}^2}\underset{t\downarrow 0}{\sim} 2^{\lfloor d/2 \rfloor}\pi^{d/2}\,t^{-d/2}.
\tag*{$\Box$}
\end{align*}
\end{proof}

This result can also be obtained (in an admittedly circuitous way) from Theorem~\ref{thm:zeta2hk_finite}: We have $\ZZ_{D^2} (s)=2^m\Gamma(s) (Z_d(2s)+1)$ and the Epstein zeta function $Z_d$ is meromorphic on $\CC$ with a single simple pole at $s=d$. Moreover, the Epstein zeta function enjoys a polynomial growth on the verticals --- see the non trivial estimates demonstrated in \cite{Essouabri,Mahler}.

Remark that we recover the classical result: For the torus $\TT^d$ with the usual scalar Laplacian  $\Delta=-g^{\mu\nu} \partial_\mu \partial_\nu$, $$\Tr e^{-t\,\Delta}=\Vol (\TT^d)(4\pi)^{-d/2}\,t^{-d/2} +\OO_0(t^{-d/2} e^{-1/4t})$$ 
and $\Vol(\TT^d)=(2\pi)^d$ so that $\Vol (\TT^d)(4\pi)^{-d/2}=\pi^{d/2}= a_{d,0}$ (mod $2^m$).

\subsection{Spectral Action for Noncommutative Tori}

We consider the $d$-dimensional noncommutative torus $(\A_{\Theta},\H,\DD)$ of Theorem \ref{thm:nc torus} with $\A_{\Theta}=C^\infty(\TT^d_\Theta)$, a one-form $A=A^*\in \Omega^1_\DD(\A_{\Theta})$ and 
\begin{align}
\label{def: A symmetric}
\Ag\vc A+\epsilon JAJ^*.
\end{align}
Thus, the constraint $\Ag=\Ag^* \in \Psi^0(\A)$ is satisfied.  \\
Remark that $A$ can be written as $$A\cv L(-iA_\alpha)\otimes \gamma^\alpha\,\text{ where }\,A_\alpha=-A_\alpha^*\in \A_\Theta,$$ so that
\begin{align}
\label{def:DA for nct}
\DD_\Ag=-i[\delta_\alpha +L(A_\alpha)-R(A_\alpha)]\otimes \gamma^\alpha.
\end{align}
As for the commutative torus $\TT^d$, we get
$$
\DD_{\Ag}^2=-\delta^{{\alpha}_{1} {\alpha}_{2}}(\delta_{{\alpha}_{1}}+\Ag_{{\alpha}_{1}})(\delta_{{\alpha}_{2}}
+\Ag_{{\alpha}_{2}})\otimes 1_{2^m} - \tfrac 12 \Omega_{{\alpha}_{1} {\alpha}_{2}} \otimes \gamma^{{\alpha}_{1}{\alpha}_{2}},
$$
where $\delta^{\alpha\beta}$ is the Kronecker symbol and
\begin{align}
&\Ag_\alpha \vc L(A_\alpha)-R(A_\alpha), \quad \gamma^{{\alpha}_{1} {\alpha}_{2}}\vc \tfrac 12(\gamma^{{\alpha}_{1}}\gamma^{{\alpha}_{2}} -\gamma^{{\alpha}_{2}}\gamma^{{\alpha}_{1}}) , \notag\\
&\Omega_{{\alpha}_{1} {\alpha}_{2}}\vc [\delta_{{\alpha}_{1}}+ \Ag_{{\alpha}_{1}},\delta_{{\alpha}_{2}} +\Ag_{{\alpha}_{2}}]\,
=L(F_{{\alpha}_{1} {\alpha}_{2}}) - R(F_{{\alpha}_{1} {\alpha}_{2}}),\notag\\
&F_{{\alpha}_{1} {\alpha}_{2}}\vc \delta_{{\alpha}_{1}}(A_{{\alpha}_{2}})
-\delta_{{\alpha}_{2}}(A_{{\alpha}_{1}})+[A_{{\alpha}_{1}},A_{{\alpha}_{2}}] \in \A_\Theta. \label{Fmunu}
\end{align}
Thus,
\begin{align}
\label{D2}
\DD_{\Ag}^2=-\delta^{{\alpha}_{1} {\alpha}_{2}} [ \delta_{{\alpha}_{1}}+L(A_{{\alpha}_{1}})-R(A_{{\alpha}_{1}})] \, [\delta_{{\alpha}_{2}}+L(A_{{\alpha}_{2}})-R(A_{{\alpha}_{2}})]
\otimes 1_{2^m} \nonumber\\
-\tfrac 12\,\big(L(F_{{\alpha}_{1} {\alpha}_{2}}) - R(F_{{\alpha}_{1} {\alpha}_{2}})\big)
\otimes \gamma^{{\alpha}_{1} {\alpha}_{2}}.
\end{align}

We now prove the existence of the asymptotics of the fluctuated heat trace on the noncommutative torus using the one for the `bare' one given in Proposition \ref{eq:heat trace asymptotics for torus} (cf. also Problem \ref{prob:heat vs zeta} \ref{Existence of fluctuated expansion} in Chapter \ref{chap:open}). To this end, we employ the pseudodifferential calculus introduced by Connes for $C^*$-dynamical system $(A,\RR^d,\alpha)$ (see \cite{ConnesNCT, ConnesTretkoff, ConnesModular}) and follow the arguments given in the proof of \cite[Theorem 4.2]{LeschMoscovici}. The idea is essentially to mimic the classical pdo calculus on a manifold --- cf. Appendix \ref{sec: Complex parameter-dependent symbols} --- improving the Proposition \ref{prop:seriesintegral} to gain control on the series defined by $\Tr e^{-t\DD_\Ag^2}$.

We first quickly summarise this symbolic calculus --- see \cite{ConnesNCT, ConnesModular,ConnesTretkoff} (and especially the complete approach by Lesch and Moscovici \cite{LeschMoscovici}) for details. 
\\
Let $A_\theta$ be the universal $C^*$-algebra generated by the $U_k$ and let $\A_\theta$ consist of those elements in $A_\theta$ for which $a\to \alpha_s(a)$ is $C^\infty$ for each $s\in \RR^d$, with the definition $$\alpha_s(U_k)\vc e^{-i2\pi\, s.k} U_k.$$ 
A smooth map $\rho:\,C^\infty(\RR^d)\to \A_\theta$ is named a symbol of order $m\in\Z$ 
\begin{align*}
&\text{if for any }k,\ell \in \N^d,\,\Vert \delta^k\partial_\xi^\ell\,\rho(\xi) \Vert \leq c_{k,\ell}(1+\Vert \xi\Vert)^{m-\abs{\ell}} \text{ for some constants } c_{k,\ell}\\
&\text{and if there exists } \sigma\in C^\infty(\RR^d ,A_\theta^\infty) \text{ such that }\lim_{\la\to \infty} \la^{-m}\rho(\lambda \xi)=\sigma(\xi),
\end{align*}
where 
\begin{align*}
& \delta^k \vc \delta^{k_1}_1\cdots \delta^{k_d}_d\,\,\text{ for } k\in \N^d, \qquad \partial^\ell_\xi \vc \partial^{\ell_1}_{\xi_1}\cdots\partial^{\ell_d}_{\xi_d},
\end{align*}
Such a symbol is elliptic when $\rho(\xi)^{-1}$ exists and the estimate $$\Vert \rho(\xi)^{-1} \Vert \leq c(1+\norm{\xi})^{-m}$$ holds for $\norm{\xi}$ large enough.

Given a symbol $\rho$, let us define the pdo $P_\rho:\,\Scl(\RR^d,\A_\theta) \to \Scl(\RR^d,\A_\theta)$ by
\begin{align*}
P_\rho (u) &\vc \int\alpha_{-x}(\F^{-1}[\rho](x-y))\,u(y)\,dy =\int \alpha_{-x}(\F^{-1}[\rho](y)) \,u(x-y)\,dy.
\end{align*}
An action of the pre-$C^*$-algebra $\A_\theta \rtimes _\alpha \RR^d$ on the pre-$C^*$-module $\Scl(\RR^d,\A_\theta)$ is given, for $a\in \A_\theta$, by $a\,u(x)=\alpha_{-x}(a) \,u(x)$ and $U_y(u)(x)\vc u(x-y)$. 

Then, $$P_\rho =\int \F^{-1}[\rho](y) \,U_y \,dy$$ can be seen as an element of the multiplier space of $A_\theta \rtimes _\alpha \RR^d$. The GNS representation of $A_\theta$ given by the trace $\tau$ can be extended to $A_\theta \rtimes _\alpha \RR^d$, with the maps $a\in A_\theta\to a$ and $U_x \to \alpha_x$, and $P_\rho=\int \F^{-1}[\rho](y) \,\alpha_y \,dy$ can be seen also as an element of the multiplier of $A_\theta$. Since $$P_\rho\,U_k=\int \F^{-1}[\rho](y) e^{-i2\pi\, y.k}  \,dy \,U_k= \rho(k)\,U_k,$$
we get, for any element $a=\sum_{k\in \Z^d} a_k\,U_k$, $$P_\rho(a)=\sum_{k\in \Z^d} a_k \,\rho(k)\,U_k.$$ 
Thus, $$\Tr P_\rho=\sum_{k\in \Z^d} \langle U_k, P_\rho U_k \rangle = \sum_{k\in \Z^d} \tau (U_k^* \rho(k)U_k)=\sum_{k\in \Z^d} \tau [\rho(k)]$$ is finite if $m<-d$, because $\Vert \rho(k)\Vert \leq c_{0,0} (1+(\sum_{i=1}^d k_i^2)^{1/2})^m$.

A parametric symbol $\rho(\xi,\la)$ with $\la \in V$ -- a region in $\CC$ is defined similarly: 
\begin{align}
\label{eq:def parameter symbol}
\Vert \delta^k \partial^\ell_\xi \partial_\lambda^r \rho(\xi,\la)\Vert \leq c_{k,\ell}(1+\norm{\xi}+\abs{\la})^{m-\abs{\ell}-\abs{r}}.
\end{align}
Cf. \cite[Section 1.7.1]{Gilkey1} and the (slightly different) Hypothesis \ref{hyp: principal symbol with parameter} of Appendix \ref{sec: Complex parameter-dependent symbols}.\\
We now adapt the Proposition \ref{prop:seriesintegral}:

\begin{proposition}
\label{prop:discrete parameter symbol}
If $\rho\in S^m(\RR^d\times V,A_\theta^\infty)$ is a parametric symbol of order $m< -d$, then $$\sum_{k\in \Z^d} \rho(k,\la)=\int_{\RR^d} \rho(\xi,\la)\,d\xi +\OO_\infty(\abs{\la}^{-\infty}).$$
\end{proposition}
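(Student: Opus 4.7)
The strategy is a parameter-dependent refinement of Proposition \ref{prop:seriesintegral}: I will apply the Poisson summation formula to the $A_\theta^\infty$-valued Schwartz function $\xi\mapsto\rho(\xi,\la)$ and then use the mixed estimates \eqref{eq:def parameter symbol} to show that every nonzero Fourier mode is $\OO_{\infty}(\abs{\la}^{-\infty})$, leaving only the $k=0$ term $\int_{\RR^d}\rho(\xi,\la)\,d\xi$.

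\emph{Step 1: Setting up Poisson summation.} For fixed $\la\in V$, the hypothesis $m<-d$ together with $\|\partial_\xi^\ell\rho(\xi,\la)\|\leq c_\ell(1+\norm{\xi}+\abs{\la})^{m-\abs{\ell}}$ shows that $\rho(\cdot,\la)\in\Scl(\RR^d,A_\theta^\infty)$. Since $A_\theta^\infty$ is a Fr\'echet space, the Poisson summation formula extends verbatim (apply the scalar version after pairing with the continuous seminorms, or work directly with Bochner integrals) to give
\begin{align*}
\sum_{k\in\Z^d}\rho(k,\la)=\sum_{k\in\Z^d}\F[\rho(\cdot,\la)](k).
\end{align*}
The $k=0$ term is exactly $\F[\rho(\cdot,\la)](0)=\int_{\RR^d}\rho(\xi,\la)\,d\xi$, so it suffices to prove that the remaining sum over $k\neq 0$ is $\OO_{\infty}(\abs{\la}^{-\infty})$.

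\emph{Step 2: Integration by parts with parameter.} For any multi-index $\ell\in\N^d$ and $k\in\Z^d\setminus\{0\}$, integrating by parts $|\ell|$ times yields the standard identity
\begin{align*}
(2\pi i k)^{\ell}\,\F[\rho(\cdot,\la)](k)=\F[\partial_\xi^{\ell}\rho(\cdot,\la)](k),
\end{align*}
so \eqref{eq:def parameter symbol} gives $\abs{k^{\ell}}\cdot\|\F[\rho(\cdot,\la)](k)\|\leq c_\ell\int_{\RR^d}(1+\norm{\xi}+\abs{\la})^{m-\abs{\ell}}\,d\xi$. A routine splitting of the domain into $\norm{\xi}\leq 1+\abs{\la}$ and $\norm{\xi}>1+\abs{\la}$ bounds the integral by $c'_\ell(1+\abs{\la})^{m+d-\abs{\ell}}$, so altogether
\begin{align*}
\abs{k^{\ell}}\cdot\|\F[\rho(\cdot,\la)](k)\|\leq c''_\ell\,(1+\abs{\la})^{m+d-\abs{\ell}}.
\end{align*}

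\emph{Step 3: Choosing $|\ell|$ large to absorb $k$ and $\la$ simultaneously.} Fix any $N>0$. Pick $\ell=(n,\dots,n)$ with $n\in\N$ so large that $n>1$ (ensuring $\sum_{k\neq 0}\prod_j\langle k_j\rangle^{-n}$ is finite) and $nd\geq m+d+N$. Using the elementary bound $\|\F[\rho(\cdot,\la)](k)\|\leq c''_\ell\,(1+\abs{\la})^{m+d-nd}\prod_j\langle k_j\rangle^{-n}$ (obtained by splitting $\prod_j(1+\abs{k_j})^n$ out of the integration by parts on each coordinate separately) and summing,
\begin{align*}
\sum_{k\in\Z^d\setminus\{0\}}\|\F[\rho(\cdot,\la)](k)\|\leq C_N\,(1+\abs{\la})^{-N}.
\end{align*}
Since $N$ was arbitrary, $\sum_{k\neq 0}\F[\rho(\cdot,\la)](k)=\OO_{\infty}(\abs{\la}^{-\infty})$, which combined with Step 1 yields the claim.

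\emph{Main obstacle.} The conceptually delicate point is not the analysis of a single Fourier mode (this is just the parameter-dependent version of the decay used in Proposition \ref{prop:seriesintegral}), but ensuring that the bookkeeping in Steps 2--3 gives \emph{joint} control of the $k$-sum and the $\abs{\la}$-decay from the \emph{same} symbolic estimate \eqref{eq:def parameter symbol}. The key is that $(1+\norm{\xi}+\abs{\la})^{m-\abs{\ell}}$ mixes the $\xi$ and $\la$ variables, so each order of integration by parts simultaneously gains a factor $\abs{k}^{-1}$ \emph{and} a factor $(1+\abs{\la})^{-1}$ after integrating in $\xi$. Once this observation is correctly exploited, every other ingredient (Poisson summation, vector-valued Schwartz calculus in $A_\theta^\infty$) is routine.
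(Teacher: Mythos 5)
Your proof is correct and follows essentially the same route as the paper: Poisson summation in the Fr\'echet space $A_\theta^\infty$, then integration by parts combined with the mixed estimate \eqref{eq:def parameter symbol} so that each derivative gains a factor of $\abs{k}^{-1}$ and, after the $\xi$-integration, a factor of $(1+\abs{\la})^{-1}$ simultaneously, yielding $\norm{\F[\rho(\cdot,\la)](k)}\lesssim \norm{k}^{-N}\abs{\la}^{m+d-N}$ and hence a summable tail of order $\OO_\infty(\abs{\la}^{-\infty})$. The only cosmetic point is in Step 3: integrating by parts ``$n$ times in each coordinate separately'' fails for coordinates with $k_j=0$, so either place all $nd$ derivatives on the largest component of $k$ or simply sum the bound $\norm{k}^{-N}(1+\abs{\la})^{m+d-N}$ from Step 2 (as the paper does); this does not affect the validity of the argument.
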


\begin{proof}
Using \eqref{Poisson}, $$\sum_{k\in \Z^d} \rho(k,\la)= \int_{\RR^d} \rho(\xi,\la)\, d\xi+\sum_{k\in \Z^d \backslash \{0\}} \F[\rho(\cdot,\la)](k).
$$
Via the Fourier transform of a derivative and \eqref{eq:def parameter symbol}, we get, for each $N\in \N^*$, $\Vert \F[\rho(\cdot,\la)]\Vert \leq \norm{\xi}^{-N} \abs{\la}^{d+m-N}$ and the conclusion follows from: For any $q \in \N^d$,
\begin{align*}
\big \Vert \sum_{k\in \Z^d \backslash \{0\}} \F[\rho(\cdot,\la)](k) \big\Vert_q \leq c_{q,N}\, \big(\sum_{k\in \Z^d \backslash \{0\}} \!\!\norm{k}^{-N}\big)\, \abs{\la}^{d+m-N}.
\tag*{$\Box$}
\end{align*}
\end{proof}

\begin{lemma}
\label{lem:existence of asymp for nct}
For $\DD_\Ag$ as in  \eqref{def:DA for nct}, we have $$\Tr e^{-t\DD_\Ag^2} \,\tzero\,
\sum_{k=0}^\infty a_k(\DD_\Ag)\, t^{(k-d)/2}.$$
Moreover, for $f (\sqrt{\cdot}\,) \in \Cl_0^{r}$  with $r>d/2$, the spectral action has the asymptotics
\begin{align}
\label{eq:sa for torus}
\hspace{-0.2cm}S(\DD_\Ag,f,\Lambda)=\sum_{k=1}^d\!\Lambda^k (\!\int_0^\infty \!\!\!f(t)\,t^{k-1}\,dt)\,\ncint \vert D_\Ag \vert^{-k}+f(0)\,\zeta_{D_\Ag}(0) +\OO_\infty(\Lambda^{-1}).
\end{align}
\end{lemma}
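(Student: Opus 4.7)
The plan is to use the Connes symbolic calculus on $\A_\Theta$ together with the parameter-dependent parametrix machinery from Appendix~\ref{sec: Complex parameter-dependent symbols}, adapted from the commutative case via Proposition~\ref{prop:discrete parameter symbol}. From Formula~\eqref{D2}, the operator $\DD_\Ag^2$ is a pdo of order $2$ with matrix-valued symbol whose principal part equals $\norm{\xi}^2 \otimes \bbbone_{2^m}$; in particular it is parameter-elliptic in any sector $\Lambda_\theta$ avoiding a small right half-disc. Build a parametric parametrix $Q(\lambda)$ recursively with symbol $q(\xi,\lambda) \sim \sum_{j=0}^\infty q_{-2-j}(\xi,\lambda)$, where $q_{-2}(\xi,\lambda) = (\norm{\xi}^2 - \lambda)^{-1} \otimes \bbbone_{2^m}$ and each $q_{-2-j}$ lies in the parametric symbol class of order $-2-j$. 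As in Appendix~\ref{sec: Complex parameter-dependent symbols}, $(\DD_\Ag^2 - \lambda)^{-1} - Q(\lambda)$ is smoothing with norm $\OO_\infty(\langle\lambda\rangle^{-\infty})$.

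Next, use the Cauchy representation $e^{-t \DD_\Ag^2} = \tfrac{i}{2\pi} \int_\C e^{-t\lambda}(\DD_\Ag^2 - \lambda)^{-1} d\lambda$ and take the trace. For a pdo $P_\rho$ in the Connes calculus one has $\Tr P_\rho = \sum_{k\in\Z^d} (\tau \otimes \tr)[\rho(k)]$. Proposition~\ref{prop:discrete parameter symbol} applied to each $q_{-2-j}(\cdot,\lambda)$ (of parametric order $-2-j < -d$ for $j$ large enough; for small $j$ one integrates by parts in $\xi$ first) yields
\begin{align*}
\sum_{k\in\Z^d}(\tau\otimes\tr)[q_{-2-j}(k,\lambda)] = \int_{\RR^d}(\tau\otimes\tr)[q_{-2-j}(\xi,\lambda)]\, d\xi + \OO_\infty(\abs{\lambda}^{-\infty}).
\end{align*}
Plugging the asymptotic expansion of $q$ into the contour integral, the homogeneity $q_{-2-j}(r\xi, r^2\lambda) = r^{-2-j} q_{-2-j}(\xi,\lambda)$ (valid for $\norm{\xi}\geq 1$) combined with the substitution $\xi \to t^{-1/2}\xi$, $\lambda \to t^{-1}\lambda$ produces a factor $t^{(j-d)/2}$ from each term. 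The smoothing remainder, as well as all the $\OO_\infty(\abs{\lambda}^{-\infty})$ pieces coming out of Proposition~\ref{prop:discrete parameter symbol}, yield after contour integration contributions that are $\OO_0(t^\infty)$. One obtains the full small-$t$ expansion
\begin{align*}
\Tr e^{-t\DD_\Ag^2} \tzero \sum_{k=0}^\infty a_k(\DD_\Ag)\, t^{(k-d)/2}.
\end{align*}

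For the second statement, combine the heat expansion above with Corollary~\ref{cor:SA_simple}, its alternative form for $[x\mapsto f(\sqrt{x})]\in\Cl_0^r$ with $r>d/2$. Regularity and the dimension spectrum property of $(\A_\Theta,\H,\DD_\Ag)$ are guaranteed by Corollary~\ref{cor:pdoA in pdo} and Theorem~\ref{thm:extension by fluctuation} applied to the base triple $(\A_\Theta,\H,\DD)$ of Theorem~\ref{zeta(0)}; in particular $\Sd^+ \subset \{1,\dotsc,d\}$ is simple and $\zeta_{D_\Ag}$ is regular at $0$. Rewriting the coefficients $a_k(\DD_\Ag)$ (for $k<d$) as residues of $\zeta_{D_\Ag^2}$, i.e. as $\ncint \abs{D_\Ag}^{-(d-k)}$ up to the Gamma-function normalisation absorbed by the identity $\int_0^\infty x^{\alpha-1} f(x)\,dx = \Gamma(\alpha)\, f_{\alpha,0}$ already used in the proof of Corollary~\ref{cor:SA_simple}, delivers the $\Lambda^k$-terms of~\eqref{eq:sa for torus}, while the $k=0$ term equals $f(0)\zeta_{D_\Ag}(0)$. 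To sharpen the remainder from $\oinf(1)$ to $\OO_\infty(\Lambda^{-1})$, apply Theorem~\ref{thm:f_expansion} to one further step of the complete expansion: the next pole of $\zeta_{D_\Ag}$ beyond the origin lies at $s=-1$, producing a subleading term bounded by $\Lambda^{-1}$.

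The main obstacle is the rigorous bookkeeping that turns the $\OO_\infty(\abs{\lambda}^{-\infty})$ error of Proposition~\ref{prop:discrete parameter symbol} (which holds in the parametric seminorms~\eqref{eq:def parameter symbol}) into an $\OO_0(t^\infty)$ error for the heat trace after the $\lambda$-integration along $\C$. This requires uniform parameter estimates on all $q_{-2-j}$ along $\C$, controlling derivatives of $\rho$ with respect to $\lambda$ in order to apply the Poisson-type estimate; this is exactly the parameter-elliptic pdo calculus for $C^*$-dynamical systems developed in \cite{LeschMoscovici}, which is what ultimately makes the noncommutative Seeley--DeWitt expansion available for $\DD_\Ag^2$.
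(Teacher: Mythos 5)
Your proposal follows essentially the same route as the paper: view $\DD_\Ag^2$ as an elliptic differential multiplier of order two in the Connes--Lesch--Moscovici calculus \cite{LeschMoscovici}, build the parameter-dependent parametrix for the resolvent, convert lattice sums into integrals via Proposition \ref{prop:discrete parameter symbol}, extract the $t^{(k-d)/2}$ powers Gilkey-style, and conclude with Corollary \ref{cor:SA_simple}. The only deviations are technical: the paper works with $\Tr[(\DD_\Ag^2-\lambda)^{-k}]$ for $k\geq\lfloor d/2\rfloor+1$ via \eqref{e-tP with derivative}, which cleanly sidesteps the divergence (for $2+j\leq d$ at fixed $\lambda$) of the lattice sums of the low-order parametrix terms that your ``integrate by parts in $\xi$'' aside does not really cure, and it gets the regularity of $\zeta_{D_\Ag}$ at $0$ straight from the heat expansion just established rather than through Theorem \ref{zeta(0)}, whose badly-approximable hypothesis the lemma deliberately avoids.
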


\begin{proof}
The operator $\DD_\Ag$ given in \eqref{def:DA for nct} can be seen as a differential multiplier of order 1. Namely, $\DD_\Ag \in \text{Diff}^1_\sigma(\RR^d, \B_\theta\otimes \CC^{2^m})$, where $\B_\theta$ is the algebra generated by $\A_\theta$ and $J \A_\theta J^{-1}$, following a slight extension of \cite[Definition 3.5]{LeschMoscovici}. \\
The symbol of $\DD_A$ is $\sigma_{\DD_A}(\xi)=  (\xi_\alpha -iA_\alpha -iJA_\alpha J^{-1})\otimes \gamma^\alpha$. Thus, $\DD_\Ag^2$ given in \eqref{D2} has the form $$\sum_{\beta \in \N^d,\,\vert\beta \vert \leq 2} \,b_\beta \xi^\beta \,\text{ where }\, b_\beta \in \B_\theta \otimes \CC^{2^m}\,\text{ and }\,\xi^\beta \vc \xi_1^{\beta_1}\cdots \xi_d^{\beta_d}.$$ 
Thus  $\DD_\Ag^2$ is an elliptic differential multiplier of order 2 and as such, its symbol can be decomposed into a sum of monomials $a_j(\xi)$ of order $j=2,\,1,\,0$: $$\sigma_{\DD_\Ag^2}(\xi)=a_2(\xi)+a_1(\xi)+a_0.$$
If $\lambda$ is the resolvent parameter for $(\DD_A^2-\lambda)^{-1}$, in the search for the resolvent parametrix $B_\lambda$, we need to solve $1=\sigma_\lambda \star \sigma_{B_\lambda}$ for the parameter dependent symbol $$\sigma_\lambda(\xi)\vc \sigma_{\DD_\Ag^2}(\xi)-\lambda= a'_2(\xi)+a_1(\xi)+a_0$$ with $a'_2(\xi)=a_2(\xi)-\lambda$. \\
Since $\sigma_{B_\lambda}=b_{0}+b_{1}+b_{2}+\dotsb$, where $b_{j}(\xi;\lambda)$ is a symbol of order $-2-j$, one can compute recursively the symbols $b_{0},\,b_{1},\,b_{2},\dots$ of the parameter-dependent pseudodifferential multiplier $B_\lambda$ (see \cite{ConnesModular,ConnesTretkoff, FatConnes,FatKhal,LeschMoscovici} for examples of such computations).\\
Then, thanks to  \eqref{e-tP with derivative}, $$\Tr e^{-t\DD_\Ag^2}=t^{-k+1}\tfrac{i}{2\pi}\int_\C e^{-t\lambda}\,\Tr[ (\DD_\Ag^2-\lambda)^{-k}]\,d\lambda$$ (recall that $\Tr[ (\DD_\Ag^2-\lambda)^{-k}]<\infty$ for $k\geq \floor{d/2}+1$), one implements the resolvent expansion into the integral to get the asymptotics $$\Tr e^{-t\DD_\Ag^2} \,\underset{t \,\downarrow \,0}{\sim} \, \sum_{k=0}^\infty a_k(\DD_\Ag^2)\, t^{(k-d)/2}$$ exactly as in \cite[Sections 1.7 and 1.8]{Gilkey1} using the Proposition \ref{prop:discrete parameter symbol}. \\
Thus, $$a_k(\DD_\Ag^2)= \tfrac{i}{2\pi}\int_{\RR^d} d\xi \int_\C e^{-\lambda} \,\tau[ b_k(\xi,\lambda) ] \,d\lambda.$$
\\
Finally, we use Formula \eqref{SA_simplified}, which becomes \eqref{eq:sa for torus}.
\hfill{$\Box$}
\end{proof}

While there is no Diophantine condition to get the asymptotics of the fluctuated heat trace, the computation of the coefficients seems to need one --- this is due to a commutation between a series and a residue.

\begin{theorem}
\label{thm:SA for nc torus}
Assume that $\tfrac{1}{2\pi} \Th$ is badly approximable. Then, the spectral action \eqref{eq:sa for torus} fluctuated  by $\Ag$ as in \eqref{def: A symmetric} reads, for $f (\sqrt{\cdot}\,) \in \Cl_0^{r}$ with $r>d/2$
\begin{align*}
S(\DD_\Ag,f,\Lambda)=\left\{\begin{array}{ll}4\pi f_2\,\Lambda^2+\OO_\infty(\Lambda^{-2}),& \text{ for }d=2,\\
8\pi^2 f_4\,\Lambda^4 -\tfrac{4\pi^2}{3}\,f(0)\,\tau(F_{\mu\nu}F^{\mu\nu})+\OO_\infty(\Lambda^{-2}),&\text{ for }d=4.
\end{array}\right.
\end{align*}
For arbitrary $d$, $$S(\DD_\Ag,f,\Lambda)=\sum_{k=0}^{d-1} f_{d-k}\,c_{d-k}(A) \,\Lambda^{d-k} +\OO_\infty(\Lambda^{-1}), $$
where we have $c_{d-2}(A)=0$ and $c_{d-k}(A)=0$ for $k$ odd. In particular, $c_0(A)=0$ when $d$ is odd.
\end{theorem}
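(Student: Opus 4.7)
The plan is to apply Lemma \ref{lem:existence of asymp for nct}, which already provides the shape of the asymptotic expansion and identifies the coefficients as
\[
c_{d-k}(A) = \ncint \vert D_\Ag \vert^{-(d-k)} \quad \text{for } 0<k\leq d-1, \qquad c_0(A) = \zeta_{D_\Ag}(0),
\]
together with the explicit value $\int_0^\infty f(t) t^{d-k-1}\,dt = f_{d-k}$. The whole problem is then to compute these noncommutative integrals and $\zeta_{D_\Ag}(0)$ for the concrete $\DA$ of Formula \eqref{def:DA for nct}.

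First, I would establish the vanishing of $c_{d-k}(A)$ for odd $k$ (in particular of $c_0(A)$ for $d$ odd). Expanding $\vert D_\Ag\vert^{-(d-k)}$ through Theorem \ref{2dev} writes it, modulo trace-class remainders, as a finite sum of operators of the form $P\,\vert D\vert^{-(d-k)-2n}$ with $P$ a polynomial in $\Ag$, $\DD$, $J\A J^{-1}$. Inserting the Fourier expansion $A_\alpha = \sum_\ell a_{\alpha,\ell} U_\ell$ and using \eqref{rel1}, exactly as in the dimension-spectrum computation of Theorem \ref{zeta(0)}, expresses $\Tr P\vert D\vert^{-(d-k)-2n-s}$ as a linear combination of series of the type treated in Theorem \ref{zetageneral}. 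The Diophantine assumption on $\tfrac{1}{2\pi}\Theta$ is what allows one to legitimately interchange those series with the residue at $s=0$, and then Theorem \ref{res-int} gives the residue as a surface integral over $S^{d-1}$ of a homogeneous polynomial in $k\in \Z^d$ of parity matching $d-k$; when $k$ is odd the polynomial is odd and the integral vanishes. The leading coefficient $c_d = \ncint \vert D\vert^{-d}$ is fluctuation-independent by Formula \eqref{ncint DA-p} and is computed directly from \eqref{eq:zeta for nct} and \eqref{eq:zetaq1qn}, yielding $c_d = 2^{\lfloor d/2\rfloor+1}\pi^{d/2}/\Gamma(d/2)$ and hence the stated $4\pi f_2\Lambda^2$ for $d=2$ and $8\pi^2 f_4\Lambda^4$ for $d=4$.

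Next I would treat $c_{d-2}(A)$ via Formula \eqref{ncint DA-p+2} with $p=d$:
\[
\ncint \vert D_\Ag\vert^{-(d-2)} = \ncint \vert D\vert^{-(d-2)} + \tfrac{d-2}{2}\Big(-\ncint X\vert D\vert^{-d} + \tfrac{d}{4}\ncint X^2\vert D\vert^{-d-2}\Big).
\]
For $d=2$ the prefactor kills the correction, and $\ncint \vert D\vert^0 = 0$ by \eqref{int 1=0} (which also follows from \eqref{eq:zeta for nct} and $Z_d(0)=-1$); this explains the absence of the $\Lambda^0$ term for $d=2$ and, combined with $c_1(A)=0$ (odd parity), gives the full $d=2$ statement. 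For $d\geq 3$ the tadpole $\ncint X\vert D\vert^{-d}$ vanishes by the same parity/Diophantine argument (one gamma matrix, hence odd trace, plus the odd momentum), and $X^2$ only contributes through its parity-even part; the resulting residue is computed by reducing $X^2\vert D\vert^{-d-2}$ modulo $\OP^{-d-1}$ to $(A+\epsilon JAJ^{-1})^2\DD^2\vert D\vert^{-d-2}$ and extracting the gamma-matrix trace. This gives the vanishing $c_{d-2}(A)=0$ announced (the would-be Yang--Mills term at this order is killed by the overall factor, which for $d=4$ appears only at the constant-in-$\Lambda$ order through $\zeta_{D_\Ag}(0)$).

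Finally, I would compute $\zeta_{D_\Ag}(0)$ from Theorem \ref{thm:zeta(0) and fluctuation}, using $\zeta_D(0)=0$ from Theorem \ref{zeta(0)}:
\[
\zeta_{D_\Ag}(0) = \sum_{k=1}^{d}\tfrac{(-1)^k}{k}\,\ncint (\Ag D^{-1})^k.
\]
Since $\Ag D^{-1}$ carries one gamma matrix, the trace over $\CC^{2^m}$ forces odd $k$ to give zero, so only $k=2,4,\ldots$ survive. For $d=4$ only $k=2$ and $k=4$ contribute. The $k=2$ piece yields, after using $[\delta_\alpha,\delta_\beta]=0$ and the tracial, Diophantine-controlled residue calculation as above, a total derivative that integrates to zero under $\tau$; the $k=4$ piece, by tracking the antisymmetrisation coming from the $\gamma^{\mu\nu}$'s and recognising the field strength \eqref{Fmunu}, collapses to $-\tfrac{1}{3}\cdot 4\pi^2\,\tau(F_{\mu\nu}F^{\mu\nu})$, which is the announced constant. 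For general $d\geq 3$ the same analysis shows that all odd-$k$ coefficients vanish; the explicit values of the remaining $c_{d-k}(A)$ for $k$ even and $k\geq 4$ are obtained by iterating the expansions \eqref{eq:expansion for D_A}--\eqref{Y in OP} to the required order.

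The main obstacle is step three together with the computation of the residue of $\ncint (\Ag D^{-1})^4$: both require a careful bookkeeping of the Fourier--$U_k$ combinatorics produced by the noncommutative product \eqref{rel1}, combined with the non-trivial interchange of infinite sums indexed by $(\Z^d)^q$ with the residue at $s=0$. This is precisely the point where the Diophantine condition is indispensable, via Theorem \ref{zetageneral}, to ensure that the pertinent zeta-type series extend meromorphically with controllable poles and to justify the commutation; without it one cannot even assert that the surviving coefficients take the clean tracial form $\tau(F_{\mu\nu}F^{\mu\nu})$.
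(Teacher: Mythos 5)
Your overall architecture matches the paper's: start from Lemma \ref{lem:existence of asymp for nct}, identify the coefficients as $\ncint \vert D_\Ag\vert^{-(d-k)}$ and $\zeta_{D_\Ag}(0)$, kill the odd-order and order-$(d-2)$ integrals via the residue machinery of Appendix \ref{app: residues of series} under the Diophantine hypothesis, and obtain the constant term from Theorem \ref{thm:zeta(0) and fluctuation} with $\zeta_D(0)=0$. Up to and including the computation of $c_d$, $c_{d-2}(A)=0$ and $c_{d-k}(A)=0$ for $k$ odd, your argument is essentially the paper's.

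However, your evaluation of $\zeta_{D_\Ag}(0)$ in $d=4$ contains a genuine error. You claim that, since $\Ag D^{-1}$ ``carries one gamma matrix'', the trace over $\CC^{2^m}$ kills all odd $k$ in $\sum_{k=1}^{d}\tfrac{(-1)^k}{k}\ncint(\Ag D^{-1})^k$, and that the $k=2$ piece is a total derivative which vanishes, so that the whole constant comes from $k=4$. This is false on both counts. Writing $D^{-1}=D\,\vert D\vert^{-2}$ with $D=-i\delta_\mu\otimes\gamma^\mu$, each factor $\Ag D^{-1}$ carries \emph{two} gamma matrices, so $(\Ag D^{-1})^k$ involves $2k$ gammas and no parity argument in the Clifford trace eliminates odd $k$; indeed the paper computes $\ncint(AD^{-1})^3$ explicitly and it is nonzero (it is proportional to $\sum_{\ell_i} a_{\alpha_3,-\ell_1-\ell_2}a_{\alpha_1,\ell_2}a_{\alpha_1,\ell_1}\sin\tfrac{\ell_1.\Th\ell_2}{2}\,\ell_1^{\alpha_3}$, which vanishes only in the commutative limit). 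Likewise the $k=2$ piece is not a total derivative: it equals, in Fourier space, $c\sum_\ell a_{\alpha_1,\ell}a_{\alpha_2,-\ell}\bigl(\ell^{\alpha_1}\ell^{\alpha_2}-\delta^{\alpha_1\alpha_2}\Vert\ell\Vert^2\bigr)$, i.e.\ precisely the quadratic (derivative) part of $-c\,\tau(F_{\mu\nu}F^{\mu\nu})$. The field strength \eqref{Fmunu} contains the commutator $[A_{\alpha_1},A_{\alpha_2}]$, so $\tau(F_{\mu\nu}F^{\mu\nu})$ has quadratic, cubic and quartic pieces in $A$, and these are reproduced respectively by the $k=2$, $k=3$ and $k=4$ integrals; only their \emph{sum} reassembles into $-\tfrac{4\pi^2}{3}\,\tau(F_{\mu\nu}F^{\mu\nu})$. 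With your parity shortcut the cubic cross-terms are lost and the $k=2$ term is dropped, so the claimed constant cannot be recovered from the $k=4$ integral alone; this step of the proof must instead go through the explicit (and genuinely laborious) evaluation of all of $\ncint(\Ag D^{-1})^k$, $k=2,3,4$, as the paper does. A secondary, smaller point: in $d=2$ the absence of the $\Lambda^0$ term is governed by the \emph{value} $\zeta_{D_\Ag}(0)$, not by the residue $\ncint\vert D_\Ag\vert^{0}$ (which indeed vanishes but is a different quantity), so the $d=2$ case also requires the $k=1,2$ integrals from the fluctuation formula rather than \eqref{int 1=0}.
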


We do not know if, without the Diophantine condition, the above spectral action would stay the same, see \cite[Appendix B]{HeatNCT}.\\The proof goes through several steps and the first one is to identify the noncommutative integrals in \eqref{eq:sa for torus}:

\begin{proposition}
Assume that $\tfrac{1}{2\pi} \Th$ is a badly approximable matrix. \\
Then, we have 
\begin{align*}
& \ncint \vert D_\Ag \vert^{-d}= 2^{\lfloor \tfrac{d+2}{2}\rfloor} \pi^{d/2}\, \Gamma(\tfrac d2)^{-1}, \\
& \ncint \vert D_\Ag \vert^{-d+k}=0 \,\text{ for $k$ odd integer},\\
& \ncint  D_\Ag \vert^{-d+2} =0.
\end{align*}
\end{proposition}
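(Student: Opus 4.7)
This is most directly a consequence of Theorem~\ref{thm:extension by fluctuation} part~$iv)$: applied with $k=1$, it gives $\ncint \vert D_\Ag \vert^{-d} = \ncint \vert D \vert^{-d} = \Rez{s=d} \zeta_D(s)$. Formula~\eqref{eq:zeta for nct} combined with the residue $\Rez{s=d} Z_d(s) = 2\pi^{d/2}\Gamma(d/2)^{-1}$ (cf.\ Formula~\eqref{eq:zetaq1qn}) then yields $2^{\lfloor d/2\rfloor +1}\pi^{d/2}\Gamma(d/2)^{-1} = 2^{\lfloor (d+2)/2\rfloor}\pi^{d/2}\Gamma(d/2)^{-1}$.

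\textbf{The cases $k=1$ and $k=2$.} For these I would invoke Corollary~\ref{cor:difference of fluctuated zeta} $iii)$ with $p=d$. Both formulae reduce to the vanishing of the correction terms involving $X$, since $\ncint \vert D\vert^{-(d-k)} = \Rez{s=d-k}\zeta_D(s) = 0$ for $k\geq 1$ ($\zeta_D$ has its only pole at $s=d$). Concretely, one has to prove $\ncint X\vert D\vert^{-d-1}=0$ for $k=1$ and the identity $\ncint X\vert D\vert^{-d} = \tfrac{d}{4}\ncint X^2\vert D\vert^{-d-2}$ for $k=2$. I would decompose $X = \Ag D + D\Ag + \Ag^2$ and treat the three pieces separately. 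The contributions containing an $\Ag D$ or $D\Ag$ factor vanish on the noncommutative torus: after pairing Clifford indices via $\tr(\gamma^\alpha\gamma^\beta)=2^{\lfloor d/2\rfloor}\delta^{\alpha\beta}$, one reduces $\ncint \Ag D\vert D\vert^{-r}$ to a sum of expressions of the form $\Tr\Ag_\alpha\delta_\alpha\vert D\vert^{-r-s}$, which are identically zero in $s$, because $\langle U_k,\Ag_\alpha U_k\rangle = \tau(U_k^*L(A_\alpha)U_k)-\tau(U_k^*R(A_\alpha)U_k) = \tau(A_\alpha)-\tau(A_\alpha)=0$ for every $k\in\Z^d$ by cyclicity of $\tau$.

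\textbf{Handling the $\Ag^2$ contributions.} The remaining $\Ag^2$ pieces are analysed via the Fourier expansion $A_\alpha=\sum_\ell A_{\alpha,\ell}U_\ell$: the matrix elements $\langle U_k,\Ag_\alpha^2 U_k\rangle$ produce linear combinations of the form $\sum_\ell A_{\alpha,\ell}A_{\alpha,-\ell}\bigl[Z_d(\,\cdot\,)-f_{\Th\ell}(\,\cdot\,)\bigr]$ with $f_a$ as in Theorem~\ref{analytic}. Since $\tfrac{1}{2\pi}\Th$ is badly approximable, $f_{\Th\ell}$ is entire for $\ell\neq 0$ by Theorem~\ref{analytic} $i)$, so only the pole of $Z_d$ at $s=d$ can contribute a residue. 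For $k=1$ the relevant evaluation point is $s=d+1$, where $Z_d$ is regular, and the residue at $s=0$ vanishes. For $k=2$ the residue of $Z_d(d+s)$ at $s=0$ appears both in $\ncint X\vert D\vert^{-d}$ and in $\ncint X^2\vert D\vert^{-d-2}$, and matching the two numerical factors — using the cyclicity of $\ncint$ granted by Theorem~\ref{thm:ncint_trace} to bring the terms of $X^2$ into a form directly comparable with those of $X\vert D\vert^{-d}$ — produces the required identity.

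\textbf{Higher odd $k$.} Here Corollary~\ref{cor:difference of fluctuated zeta} no longer applies, so I would use the full expansion of Theorem~\ref{2dev} to order $N\geq k$. Applying $\ncint$ term-by-term produces a finite sum of residues of spectral zeta functions on the noncommutative torus; by Theorem~\ref{zetageneral}, each such residue reduces (under the badly approximable hypothesis) to a linear combination of integrals $\int_{S^{d-1}}P(u)\,du$ for polynomials $P$ arising from the leading $\xi$-polynomial parts of the relevant subsymbols of $K_n(Y,s)\vert D\vert^{-(d-k)}$. The main technical obstacle, and the hardest part of the whole proposition, is the careful bookkeeping of these $\xi$-polynomial parities throughout the many contributions generated by the expansion — the upshot being that, at step $k$, the polynomials $P$ which actually contribute to the residue all have degree of the same parity as $k$, so that for odd $k$ these sphere integrals vanish by the symmetry $u\mapsto -u$.
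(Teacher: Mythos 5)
Your route is the paper's route: the first equality via Theorem~\ref{thm:extension by fluctuation}~$iv)$ together with $\zeta_D=2^{\lfloor d/2\rfloor}(Z_d+1)$ and the residue of the Epstein zeta function, the cases $k=1,2$ via Corollary~\ref{cor:difference of fluctuated zeta}~$iii)$ with $p=d$, and the higher odd $k$ via the fluctuation expansion of Theorem~\ref{2dev} plus the parity of the homogeneous degree selected by the residue (the paper phrases the same fact as ``$\ncint P\,\abs{D}^{-d+q}=0$ for $P\in\Psi'(\A)$ and $q$ odd'', proved through Appendix~\ref{app: residues of series}); your observation that the single-$\Ag$ (tadpole-type) terms die because $\langle U_k,(L(A_\alpha)-R(A_\alpha))U_k\rangle=0$ for every $k$ is also sound, and your parity bookkeeping for odd $k$ is correct precisely because the expansion only produces $\abs{D}$ to even powers.

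There is, however, one genuine gap, and it sits exactly where the paper says the crux is. In the paragraph on the $\Ag^2$ contributions you argue that each $f_{\Th\ell}$, $\ell\neq0$, is entire by Theorem~\ref{analytic}~$i)$, ``so only the pole of $Z_d$ at $s=d$ can contribute a residue''. Termwise entirety does not justify this: the object at hand is the infinite series $\sum_{\ell\neq0}c_\ell\,f_{\Th\ell}(s)$ with Schwartz coefficients, and outside the common half-plane of convergence a series of entire functions need not be holomorphic, nor need its residue vanish --- the values $f_{\Th\ell}(s)$ can blow up as $\tfrac{1}{2\pi}\Th\ell$ creeps close to $\Z^d$ along the sum. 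Commuting the $\ell$-series with the meromorphic continuation and the residue is precisely the content of Theorem~\ref{analytic}~$ii)$--$iii)$ (and of Theorem~\ref{zetageneral}), and it is the only place where the full badly-approximable hypothesis is used: mere irrationality of $\tfrac{1}{2\pi}\Th\ell$ already gives the termwise entirety you invoke, so as written your argument does not actually exploit the Diophantine condition where it is needed. For $k=1$ this is harmless (your $\Ag^2\abs{D}^{-d-1}$ term is even trace-class), but for $k=2$, where the relevant operators have order exactly $-d$, and more generally whenever you sum residues over the Fourier modes of $A$, the step must go through parts $ii)$--$iii)$ rather than $i)$. With that substitution --- which you implicitly make in the higher-$k$ paragraph by citing Theorem~\ref{zetageneral} --- your proof coincides with the paper's, whose own treatment of the $k=2$ cancellation is likewise a sketch deferring the explicit residue computations to the quoted lemmas.
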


These equalities follow from the explicit computation of \eqref{ncint DA-p}--\eqref{ncint DA-p+2} and the fact that $\ncint P \abs{D}^{-d+q}=0$ for any $P \in \Psi'(\A)$ and any odd integer $q$. 
\\
In a similar way, $\ncint A^pD^{-q}=\ncint (\epsilon JAJ^*)^p D^{-q} =0$ and $\ncint P \,D^{-d+q}=0$ for $p\geq 0$, $1\leq q<d$ and $P$ in the algebra generated by $\A,\, [\DD,\A],\,J\A J^*, \,J[\DD,\A ]J^*$. All these equalities can be proved using the deep results stored in Appendix \ref{app: residues of series}. The case $p=q=1$ corresponds to the fact that there is no tadpole for the noncommutative torus --- see Definition \ref{def:tadpole}. The main point is again to be able to commute an infinite series, like those defining an element of $\A_\Theta$, and a residue given by $\ncint$ and this is where the hypothesis on $\tfrac{1}{2\pi} \Th$ is used.

In the second step, we face explicit computations like, inter alia, the following:

\begin{lemma} 
Under same hypothesis (recall that $A= L(-iA_\alpha)\otimes \gamma^\alpha$),
\begin{align*}
\ncint A^q D^{-q}=\left\{\begin{array}{ll}
-\delta_{q,2} 4 \pi \,\tau(A_\alpha A^\alpha) &\,\,\,\text{for }d=2,\\ 
\delta_{q,4} \tfrac{\pi^2}{12} \,\tau(A_{\alpha_1}
\cdots A_{\alpha_4})\,\tr (\gamma^{\alpha_1}\cdots \gamma^{\alpha_4} \gamma^{\mu_1}\cdots \gamma^{\mu_4}) \delta_{\mu_1,\mu_2,\mu_3,\mu_4} &\,\,\,\text{for }d=4,
\end{array}\right.
\end{align*}
where $\delta_{\mu_1,\mu_2,\mu_3,\mu_4}=\delta_{\mu_1\mu_2}\delta_{\mu_3\mu_4}+\delta_{\mu_1\mu_3}\delta_{\mu_2\mu_4}+\delta_{\mu_1\mu_4}\delta_{\mu_2\mu_3}$.

With $A=L(-iA_\alpha)\otimes \gamma^\alpha=-i\sum_{\ell\in\Z^d} a_{\alpha,\ell}U_\ell \otimes \gamma^\alpha$ and $c\vc \tfrac{4\pi^2}{3}$, we have
\begin{align*}
& +\tfrac 12 \ncint (AD^{-1})^2=  \tfrac 12 \ncint (\epsilon JAJ^*D^{-1})^2=c \,\sum_{\ell\in\Z^d} \, a_{\alpha_{1},\ell} \, a_{\alpha_{2},-\ell} \, \big(l^{\alpha_{1}}\ell^{\alpha_{2}} - \delta^{\alpha_{1}\alpha_{2}} \Vert \ell \Vert^2\big).\\
& -\tfrac 13 \ncint (AD^{-1})^3\!=\!-\tfrac 13 \ncint (\epsilon JAJ^*D^{-1})^3=4c\!\!\!\!
\,\sum_{\ell_i \in \Z^d} a_{\alpha_3,-\ell_1-\ell_2}\,a_{\alpha_1,\ell_2}\,a_{\alpha_1,\ell_1}\,\sin \tfrac{\ell_1.\Th \ell_2}{2}\,\ell_1^{\alpha_3}.\\
& +\tfrac 14 \ncint (AD^{-1})^4= \tfrac 14 \ncint (\epsilon JAJ^*D^{-1})^4 \\
& \hspace{2.15cm} =2c\, \sum_{\ell_i \in \Z^d} a_{\alpha_{1},-\ell_1-\ell_2-\ell_3}\, a_{{\alpha_{2}},\ell_3} \, a_{\alpha_{1},\ell_2} \, a_{\alpha_{2},\ell_1} \sin \tfrac{\ell_1 .\Th (\ell_2+\ell_3)}{2}\, \sin \tfrac{ \ell_2 .\Th \ell_3}{2}.
\end{align*}
\end{lemma}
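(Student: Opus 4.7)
The proof proceeds by explicit computation in the GNS eigenbasis $\{U_k\otimes e_j\}$, leveraging the machinery assembled in the proof of Theorem \ref{zeta(0)} and the Diophantine results of Appendix \ref{app: residues of series}.

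\textbf{Step 1 --- reduction to oscillating Dirichlet sums.} Each operator on the list is a finite product of factors of the form $L(A_\alpha)$, $R(A_\alpha)$ (the latter appearing only via $JAJ^*$), $D$, $|D|$ and their powers. Inserting the Fourier decomposition $A_\alpha=\sum_\ell a_{\alpha,\ell}U_\ell$ and acting on $U_k\otimes e_j$, one obtains a matrix element which, after applying $\tau$, collapses via the formula
\begin{align*}
\tau\big(U_{-k}U_{\ell_r}\cdots U_{\ell_1}U_k U_{\ell_1'}\cdots U_{\ell_r'}\big)=\delta_{\sum_i(\ell_i+\ell_i'),0}\,e^{i\phi(\ell,\ell')}\,e^{-i\sum_i\ell_i.\Th k},
\end{align*}
already exploited on p.~\pageref{zeta(0)}. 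The outcome is that $\Tr(\cdot)\,|D|^{-s}$ is a linear combination of expressions of the type
\begin{align*}
\tr(\Gamma^{\mu})\;\sum_{\ell,\ell'}\tilde a_\ell\tilde b_{\ell'}\,e^{i\phi(\ell,\ell')}\,\delta_{\Sigma(\ell,\ell'),0}\,{\sum_k}\!\!\!{}'\;\frac{P_\mu(k,\ell,\ell')}{\|k\|^{s+p}}\,e^{-ik.\Th\,\Sigma'(\ell,\ell')},
\end{align*}
where $P_\mu$ is a polynomial in $k$ of a controlled (homogeneous) degree, $\Gamma^\mu$ is a product of $\gamma$-matrices determined by the ordering of $D$'s, and $\Sigma,\Sigma'$ are signed sums of the internal indices dictated by the positions of $L$ versus $R$.

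\textbf{Step 2 --- extraction of the residue.} Theorem \ref{zetageneral} (whose hypothesis is met because $\tfrac1{2\pi}\Th$ is badly approximable) meromorphically continues the inner $k$-sum and supplies an explicit residue. Only terms with $\Sigma'(\ell,\ell')=0$ can contribute a nontrivial pole at the value of $s$ relevant for $\ncint$, and among those only the terms for which the power of $\|k\|$ matches the polynomial degree of $P_\mu$ survive. The surviving residues are then isotropic sphere integrals evaluated with Formula \eqref{eq:zetaq1qn}. In parallel, the gamma-matrix trace $\tr(\Gamma^\mu)$ produces the symmetric $\delta$-structures such as $\delta_{\mu_1,\mu_2,\mu_3,\mu_4}$ in the $d=4$ identity.

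\textbf{Step 3 --- specialisation.} For $\ncint A^qD^{-q}$ one uses $D^2=\Delta\otimes 1_{2^m}$, so that for even $q$ the powers of $D$ collapse to a single scalar factor $\|k\|^{-q}$ times $\gamma^{\alpha_1}\cdots\gamma^{\alpha_q}$; the sphere integral then forces full symmetrisation of the free indices, which gives the stated formulas in $d=2$ and $d=4$ and kills $q$ odd. For $\ncint(AD^{-1})^q$ the $D^{-1}$ factors are interleaved with the $A$'s; carrying out Step 1 in that case produces, at each site where $D^{-1}$ is commuted past an $A$, exactly the phase $e^{-i\ell_i.\Th k}$, and upon pairing $\ell\leftrightarrow-\ell$ via the symmetry of the sum these phases recombine into the $\sin(\tfrac12\ell_i.\Th\ell_j)$ factors displayed in the lemma. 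The identities $\ncint(AD^{-1})^q=\ncint(\epsilon JAJ^*D^{-1})^q$ then follow by noting that $J\cdot J^{-1}$ swaps $L$ and $R$ and flips the sign of the internal momenta: the substitution $\ell\to-\ell$, combined with $a_{\alpha,-\ell}=-\overline{a_{\alpha,\ell}}$ (from $A_\alpha^*=-A_\alpha$) and the antisymmetry of the sines, leaves the sums invariant.

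\textbf{Main obstacle.} The technical core is the bookkeeping in Steps 1--2: tracking the composite phase $e^{i\phi(\ell,\ell')}$ produced by all the commutations $U_\ell U_k=e^{-i\ell.\Th k}U_kU_\ell$ and recombining these phases into the clean sine factors of the statement requires patient diagrammatic organisation of the indices $(\ell_1,\ldots,\ell_q)$. The second delicate point is the interchange of the residue at $s=0$ with the infinite sums over $\ell,\ell'$ hidden in the coefficients of $A_\alpha$; this is precisely where the Diophantine hypothesis on $\tfrac1{2\pi}\Th$ enters, via the uniform meromorphic extension and residue formula of Theorem \ref{zetageneral}. Without it, neither the absolute convergence of the residue series nor the very existence of the pole would be guaranteed, as emphasised in Appendix \ref{app: residues of series}.
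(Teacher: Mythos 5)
The paper gives no proof of this lemma: it states only that ``these equalities follow from tedious computations'' and refers to \cite{TorusSA}, noting that the crux is the interchange of the residue with the infinite series defining the coefficients of the $A_\alpha$'s, which is where the Diophantine hypothesis enters. Your sketch follows precisely this indicated route --- expansion in the basis $U_k\otimes e_j$ as in the proof of Theorem~\ref{zeta(0)}, reduction to oscillating Dirichlet sums, and the residue machinery of Appendix~\ref{app: residues of series} with the sphere-integral formula \eqref{eq:zetaq1qn} --- so it is essentially the approach the paper points to, and your identification of which terms survive (only $q=d$ for $\ncint A^qD^{-q}$, vanishing for odd $q$) is consistent with the stated coefficients in $d=2,4$. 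One step you pass over silently: in the terms $(AD^{-1})^q$ the factors $\Vert k+\hat\ell\Vert^{-2}$ coming from the interleaved $D^{-1}$'s are \emph{not} polynomial in $k$, so Theorems~\ref{analytic} and \ref{zetageneral} do not apply directly; one must first expand them in powers of $\Vert k\Vert^{-1}$ with a controlled remainder, and it is precisely these subleading terms that generate the $\ell$-dependent polynomials and (at orders $3$ and $4$) the sine factors --- this reduction is where most of the ``tedious'' work of \cite{TorusSA} actually lies.
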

These equalities follows from tedious computations (cf. \cite{TorusSA} for the details). They are necessary to get $\zeta_{D_\Ag}(0)$ from \eqref{eq: fluctuation of zeta at zero}, using $\zeta_D(0)=0$ obtained in Theorem \ref{zeta(0)}. This gives $\zeta_{D_\Ag}(0)=-c \,\tau (F_{\mu\nu}F^{\mu\nu})$ in dimension $d=4$, while  $\zeta_{D_\Ag}(0)=0$ for $d=2$ or when $d$ is odd. Gathering all these computations brings us to  Theorem \ref{thm:SA for nc torus}.

\section{Podle\'s Spheres}
\label{sec:Podles}
\index{Podle\'s sphere}

Podle\' s has introduced the eponym standard quantum spheres in \cite{Podles} as homogeneous spaces under the action of the quantum deformation of the $SU(2)$ group. They fit into the picture of noncommutative geometry \`a la Connes and concrete spectral triples were constructed in \cite{AllPodles,dab_sit}, see also \cite{DabrowskiQSpheres}.

The algebra $\bar{\A}_q$ \index{_zAq@$\bar{\A}_q$} introduced in \cite{Podles} is a complex $*$-algebra generated, for a parameter $0 < q < 1$, by $A=A^*, \,B,\, B^*$ subject to the relations
\begin{align*}
\!\!AB = q^2 BA,\, \,\,\,\,\,\,AB^* = q^{-2} B^*A, \,\,\,\,\,\,\,BB^* = q^{-2}A(1-A),\,\,\,\,\,\,\, B^*B = A(1- q^2 A).
\end{align*}
As a $C^*$-algebra, it is isomorphic to the minimal unitisation of the algebra of compact operators on a separable Hilbert space and is an invariant subalgebra of the quantum group $SU_q(2)$ under a circle action.  
We will use only the polynomial algebra $\A_q$ \index{_zAq1@$\A_q$} in the above generators, which is a dense subalgebra of $\bar{\A}_q$.

In the following we will employ the \emph{$q$-numbers} \index{qnumber@$q$-number} defined as
\begin{align}
\label{qN}
[n] \vc \tfrac{q^{-n}-q^n}{q^{-1}-q}, \quad \text{for } \; q \in (0,1) \text{ and } n \in \CC.\index{_Azn@$[n] $}\skipindex{_Azn}
\end{align}
Observe that $\lim_{q \to 1} [n] = n$ for any $n \in \CC$.

Let $\H_{1/2}$ \index{_zHHH1@$\H_{1/2}$} be the separable Hilbert space with an orthonormal basis $\ket{l,m}$ for $m \in \{-l,-l+1, \ldots, l\}$ and $l \in \oh + \N$. It is suitable for a representation of $\A_q$, equivariant under the action of the $^*$-Hopf algebra $\UqSU$. There exist two non-equivalent  $\UqSU$-equivariant representations of $\A_q$ on $\H_{1/2}$  \cite{dab_sit}:
\begin{align}
&\pi_{\pm}(A) \ketQS \vc  A^{+}_{l,m,\pm} \, \ket{\lpo,m}_{\pm} + A^{0}_{l,m,\pm} \, \ketQS 
+ A^{-}_{l,m,\pm} \, \ket{\lmo,m}_{\pm}\,, \label{rep}\\
&\pi_{\pm}(B) \ketQS \vc  B^{+}_{l,m,\pm} \, \ket{\lpo,\mpo}_{\pm} + B^{0}_{l,m,\pm} \, \ket{l,\mpo}_{\pm} 
+ B^{-}_{l,m,\pm} \, \ket{\lmo,\mpo}_{\pm}, \notag\\
&\pi_{\pm}(B^*) \ketQS \vc  \wt{B}^{+}_{l,m,\pm}  \ket{\lpo,\mmo}_{\pm} \hspace{-0.1cm}
+ \wt{B}^{0}_{l,m+1,\pm}  \ket{l,\mmo}_{\pm} \hspace{-0.1cm}+ \wt{B}^{-}_{l,m,\pm} \ket{\lmo,\mmo}_{\pm}, \notag
\end{align}
\begin{align}
&A^+_{l,m}  \vc -q^{m+l+\tfrac{1}{2}} \sqrt{ [l-m+1][l+m+1]} \;\alpha^+_l ,\nonumber\\
&A^0_{l,m}  \vc q^{-\oh} \tfrac{1}{1+q^2} \left( [l-m+1][l+m]  - q^2 [l-m][l+m+1]
\right) \alpha^0_l + \tfrac{1}{1+q^2}\,, \label{Arep}\\
& A^-_{l,m}  \vc  q^{m-l-\tfrac{1}{2}} \sqrt{ [l-m][l+m]} \;\alpha^-_l \,,\nonumber
\end{align}
\begin{align}
&B^+_{l,m}  \vc q^{m} \sqrt{ [l+m+1][l+m+2]} \;\alpha^+_l\,, \nonumber\\
&B^0_{l,m}  \vc q^{m} \sqrt{ [l+m+1][l-m]} \;\alpha^0_l\,, \label{Brep}\\
&B^-_{l,m}  \vc q^{m} \sqrt{ [l-m][l-m-1]} \;\alpha^-_l\,,\nonumber\\
\nonumber\\
&\widetilde{B}^+_{l,m}  \vc q^{m-1} \sqrt{ [l-m+2][l-m+1]} \; \alpha^-_{l+1}\,, \nonumber \\
&\widetilde{B}^0_{l,m}  \vc q^{m-1} \sqrt{ [l+m][l-m+1]} \; \alpha^0_l\,, \label{Bsrep}\\
&\widetilde{B}^-_{l,m}  \vc q^{m-1} \sqrt{ [l+m][l+m-1]}\; \alpha^+_{l-1}\,.\nonumber
\end{align}
The coefficients $\alpha_l$ read: $\alpha_l^{-} \vc - q^{2l+2} \,\alpha_l^{+}$ and
\begin{align}
\label{alphaP}
\hspace{-4cm}\text{for }\pi_+: \hspace{1cm}&\alpha^0_l \vc \tfrac{1}{\sqrt{q}} \tfrac{ (q - \oq)
[l - \oh][l+ \frac{3}{2}] + q}{[2l][2l+2]}\ , \\
&\alpha^+_l \vc q^{-l-2}
\tfrac{1}{\sqrt{[2l+2]([4l+4]+[2][2l+2])}}\ ;\\
\hspace{-4cm}\text{for }  \pi_- : \hspace{1cm}&\alpha^0_l \vc \tfrac{1}{\sqrt{q}} \tfrac{ (q - \oq)
[l - \oh][l+ \tfrac{3}{2}] - q^{-1}}{[2l][2l+2]}\ , \\
&\alpha^+_l \vc q^{-l-1} \tfrac{1}{\sqrt{[2l+2]([4l+4]+[2][2l+2])}}\ .
\label{alphaM}
\end{align}
Since $\pi^{\pm}$ are faithful, the algebra $\A_q$ is dense in $\bar{\A}_q$ in the operator norm. 

Let now $\H_q \vc \H_+ \oplus \H_-$ \index{_zHtq@$\H_q$}, where $\H_\pm \vc \H_{1/2}$, with the representation of $\A_q$:
\begin{align}
&\pi(a) \vc \left(\begin{smallmatrix} \pi_+(a)&0\\0&\pi_-(a)\end{smallmatrix} \right). \label{Defpi}
\end{align}

On $\H$ we define an unbounded selfadjoint operator given by 
\begin{align}
\label{Dirac}
&\DD_q \vc \left (\begin{smallmatrix} 0 & \bar{w} \,\,T_q \\  w \,\,T_q & 0 \end{smallmatrix}\right ), \qquad T_q:\, \ket{l,m}\in \H_{1/2} \mapsto 
 [l+\tfrac 12] \, \ket{l,m}\in \H_{1/2}, \index{_zD0q@$\DD_q$}
\end{align}
for an  arbitrary constant $w \in \CC \setminus \{0\}$.

We note that $\ker \DD_q=\set{0}$ and the polar decomposition of $\DD_q$ reads
\begin{align}
\label{AbsDirac}
\DD_q = F |\DD_q|, && \text{with} &&
\vert \DD_q\vert =\vert w \vert\,\left( \begin{smallmatrix} T_q &0 \\ 0 & T_q\end{smallmatrix} \right) && \text{ and } && F \vc \tfrac{1}{|w|} \bigl( \begin{smallmatrix} 0&\bar{w}\\ w&0 \end{smallmatrix} \bigr).
\end{align}
The phase operator satisfies $[F,\,\pi(a)]=0$ for $a \in \A_q$. We have for $l \in \N + \oh$,
\begin{align*}
\vert\Dq\vert \left( \ket{l,m}_+\oplus \ket{l,m}_- \right) = \vert w \vert [l+\half] \left( \ket{l,m}_+\oplus \ket{l,m}_- \right), 
\end{align*}
which gives
\begin{align}
\mu_n(\Dq) = \lambda_n(\vert\Dq\vert) = \vert w \vert [n+1], \quad\text{and} \quad M_n(\vert\Dq\vert) = 4(n+1), \quad n \in \N. \label{Podles_eigenvalues}
\end{align}
Let us recall \eqref{qN} and observe that the singular values of $\Dq$ grow exponentially.

In \cite{dab_sit} it is proven that $(\A_q,\H_q,\DD_q)$ is a spectral triple, which moreover is even for $\gamma \vc \left( \begin{smallmatrix}1&0\\0&-1 \end{smallmatrix} \right)$ and real for the antiunitary operator $J$ on $\H_q$ defined by 
\begin{align*}
J \, \ket{l,m}_{\pm} \vc i^{2m} p^m\,\ket{l,-m}_{\mp}\,\, \text{ with }p\in \RR^+.
\end{align*}
In particular, $J^2=-1$, $J\gamma=-\gamma J$, $JaJ^{-1}$ commutes with $\A_q$ and $[\Dq,J]=0$, so the spectral triple is of $KO$-dimension 2 \cite{Projective}. The operator $\DD$ is the unique $\UqSU$-equivariant operator fulfilling the first-order condition, which makes the spectral triple real \cite{dab_sit}.

In the limit $q \to 1$ one recovers the commutative spectral triple on the $2$-dimensional sphere --- cf. \cite[Remark 2]{Podles} and \cite[p. 8]{dab_sit}.

\smallskip

In \cite{Podles} yet another Dirac operator for the standard Podle\'s sphere was introduced
\begin{equation*} 
\DqS \vc \left (\begin{smallmatrix} 0 & \bar{w} \,T_q^S \\  w \,T_q^S & 0 \end{smallmatrix}\right),
\qquad T^S_q:\, \ket{l,m}\in \H_{1/2} \mapsto  \tfrac{q^{-(l+1/2)}}{q^{-1} - q}  \, \ket{l,m}\in \H_{1/2}.
\end{equation*}
\index{_zD0qs@$\DqS $}
which shares the property of the exponential growth of singular values with $\DD_q$.

The operators $\Dq$ and $\DqS$ commute and are related by $\Dq = \DqS - ( \tfrac{|w| \,q}{1-q^2} )^2 (\DqS)^{-1}$.
Remark that $(\DqS)^{-1}$ is actually trace-class. It turns out that by taking $\DqS$ instead of $\Dq$ we preserve most of the properties of the spectral triple with the exception of the first order condition. Moreover, we have 
\begin{align*}
|\Dq| = |\DqS| - ( \tfrac{|w| \,q}{1-q^2})^2 |\DqS|^{-1}.
\end{align*}

\begin{proposition}
\label{thm:podles as triples}
The triples $(\A_q, \H_q, \Dq)$ and $(\A_q,\H_q,\DqS)$ are $0$-dimensional. 
\end{proposition}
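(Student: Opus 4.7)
My plan is to exploit the explicit spectral data given in \eqref{Podles_eigenvalues} (and its analog for $\DqS$), reducing the claim to the convergence of an elementary series. The key point is that the singular values of both Dirac operators grow \emph{exponentially} in $n$, which should force $\Tr|D|^{-s}$ to converge for every $s>0$.

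First I would read off the eigenvalues and multiplicities. For $\Dq$ this is already \eqref{Podles_eigenvalues}: $\mu_n(\Dq) = |w|[n+1]$ with $M_n(|\Dq|) = 4(n+1)$ for $n \in \N$. For $\DqS$, the same analysis applied to $T_q^S$ gives
\begin{align*}
\mu_n(\DqS) = \tfrac{|w|\,q^{-(n+1)}}{q^{-1}-q}, \qquad M_n(|\DqS|) = 4(n+1), \qquad n \in \N.
\end{align*}
Consequently, setting $c \vc |w|/(q^{-1}-q) > 0$, I have
\begin{align*}
\Tr|\DqS|^{-s} = 4c^{-s} \sum_{n=0}^{\infty}(n+1)\, q^{s(n+1)},
\end{align*}
and this series converges (absolutely) precisely when $q^{\Re(s)}<1$, i.e.\ when $\Re(s)>0$.

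Next I would handle $\Dq$ by a direct comparison. Since $0<q<1$, for every $n\in\N$,
\begin{align*}
[n+1] = \tfrac{q^{-(n+1)}-q^{n+1}}{q^{-1}-q} \;\geq\; \tfrac{1-q^{2}}{q^{-1}-q}\, q^{-(n+1)},
\end{align*}
so $\mu_n(\Dq) \geq c'\,q^{-(n+1)}$ for some constant $c'>0$. Then for any $s$ with $\Re(s)>0$,
\begin{align*}
\Tr|\Dq|^{-s} \leq 4(c')^{-\Re(s)}\sum_{n=0}^{\infty}(n+1)\, q^{\Re(s)(n+1)} < \infty.
\end{align*}
Hence for both operators the set $\{q\geq 0\,|\,\Tr|D|^{-q}<\infty\}$ contains $(0,\infty)$, so its infimum is at most $0$, and of course $\geq 0$ by the constraint in Definition \ref{def:p_summability}. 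Finally, since $\H_q = \H_+\oplus\H_-$ is infinite-dimensional, $\Tr|D|^{0} = \dim\H_q = \infty$, which rules out any ``negative dimension'' and confirms that the dimension equals $0$.

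There is no genuine obstacle here: the only thing to be careful about is the combinatorial identification of the eigenvalues of $T_q^S$, which follows at once from the definition, and the elementary bound $[n+1]\geq (1-q^2)(q^{-1}-q)^{-1}q^{-(n+1)}$ used to compare $\Dq$ with $\DqS$. The statement is essentially the observation that exponential growth of singular values forces $0$-summability.
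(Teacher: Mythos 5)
Your proof is correct and follows essentially the same route as the paper: both arguments rest on the fact that the singular values of $\Dq$ and $\DqS$ grow exponentially like $q^{-n}$ while the multiplicities grow only linearly, so $\Tr\vert D\vert^{-\epsilon}<\infty$ for every $\epsilon>0$ and the dimension is $0$. You merely spell out the comparison $[n+1]\geq(1-q^2)(q^{-1}-q)^{-1}q^{-(n+1)}$ explicitly where the paper simply quotes the $\OO_\infty(q^{-n})$ asymptotics.
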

\begin{proof} This is a direct consequence of the exponential growth of singular values of $\Dq$ and $\DqS$. Indeed, $\mu_n(\Dq) = \Oinf(q^{-n})$ and $\mu_n(\DqS) = \Oinf(q^{-n})$, whereas $M_n(\vert\Dq\vert) = M_n(\vert\DqS\vert) = \Oinf(n)$, which means that for any $\epsilon > 0$ both $\Tr \vert\Dq\vert^{-\epsilon} $ and $\Tr \vert\DqS\vert^{-\epsilon}$ are finite.
\hfill{$\Box$}
\end{proof}

Observe that, whereas the limit $q \to 1$ of the spectral triple $(\A_q,\H_q,\DD_q)$ is well defined and yields the standard round geometry on $S^2$, the spectral dimension jumps abruptly from 0 to 2. This phenomenon, known as the `dimension drop', has also its impact on (co)homologies of $\bar{\A}_q$ \cite{MasudaPodles} and provided inspiration for `twisted' noncommutative geometries (see Problem \ref{twisted} in Chapter \ref{chap:open}). The potential physical implications of the dimension drop in the context of quantum gravity are the subject of an intensive study --- see, for instance, \cite{Benedetti}. 

Another drastic consequence of the exponential growth of singular values is that neither of the spectral triples $(\A_q, \H_q, \Dq)$, $(\A_q,\H_q,\DqS)$ is regular. 
In fact, already $\big[|\DD|,[|\DD|,a] \big]$ is an unbounded operator and more generally $\delta^n(a)$ is an operator of order $n-1$ for a generic element $a \in \A$. This could be seen by computing a fixed matrix element of the operator $\delta^n(A)$, for instance: 
\begin{align*}
{}_{\pm} \bra{l+1,\,m} \, \delta^n(A) \ket{l,m}_{\pm} & = \left( [l+\tfrac{3}{2}] - [l+\tfrac{1}{2}] \right)^n A^+_{l,m}\,.
\end{align*}
The behaviour of $A^+_{l,m}$ for large $l$ is $\Oinf(q^l)$, which can be read from the explicit Formula \eqref{Arep}, while $\left( [l+\tfrac{3}{2}] - [l+\tfrac{1}{2}] \right)^n =\Oinf(q^{-nl})$, and the above expression is unbounded for $n>0$; moreover, generally, $\delta^n(A)$ is in $\OP^{n-1}$. \\
Note that, in general, $\delta^0(a)$ and $\delta^1(a)$ are bounded for any $a \in \A_q$, but not $\delta^2(a)$.

Nevertheless, the spectral triple $(\A_q,\H_q,\DD_q)$ is quasi-regular (recall Section~\ref{sec:quasi_reg}). Within this extended framework one discovers that \cite[Corollary 3.10]{PodlesSA}:
\begin{theorem}\label{thm:Podles_dimsp}
For any $0<q<1$, the dimension spectrum of $(\A_q,\H_q,\DD_q)$ is of the order 2 and equals to $-\N + i \tfrac{2\pi}{\log q} \Z$.
\end{theorem}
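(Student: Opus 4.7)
The strategy is to compute $\zeta_{T,\DD_q}(s) \vc \Tr(T|\DD_q|^{-s})$ explicitly for $T$ in the quasi-regular pdo algebra $\wt\Psi^0(\A_q)$, exploiting the exact form of the spectrum of $|\DD_q|$ and the $\UqSU$-covariance of the representation. By Formula \eqref{Podles_eigenvalues}, the eigenvalues of $|\DD_q|$ are $|w|[n+1]$ with multiplicity $4(n+1)$ for $n\in\N$, so for $\Re s$ large enough
\[
\zeta_{T,\DD_q}(s) = |w|^{-s}\sum_{n=0}^{\infty}\, c_T(n)\,[n+1]^{-s},\qquad c_T(n)\vc\!\!\!\sum_{m=-l}^{l}\!\!\!\bigl({}_+\!\langle l,m|T|l,m\rangle_+ + {}_-\!\langle l,m|T|l,m\rangle_-\bigr),
\]
where $l=n+\tfrac12$. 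Because $\wt\Psi^0(\A_q)$ is generated by $\A_q,\,J\A_q J^{-1},\,\DD_q$ and $|\DD_q|^{\pm 1}$, and because the matrix elements of the generators $A,B,B^*$ are polynomial in $q^{\pm l}$ times rational bounded-in-$l$ coefficients (cf.\ \eqref{Arep}--\eqref{alphaM}), the diagonal sum $c_T(n)$ is a \emph{finite} $\CC$-linear combination of terms $p_{T,k}(n)\,q^{kn}$ with $k\in\Z$ and $p_{T,k}\in\CC[n]$ of degree at most $1$ (the linear factor coming from the $(2l+1)$-fold sum over $m$).

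The next step is to insert the key identity
\[
[n+1]^{-s} = (q^{-1}-q)^{s}\sum_{j=0}^{\infty}\tbinom{s+j-1}{j}\,q^{(n+1)(s+2j)},
\]
valid for $\Re s>0$, and to exchange the resulting absolutely convergent double sum. The inner sum over $n$ becomes a linear combination of derivatives $\partial_z \bigl[z/(1-z)\bigr]$ at $z=q^{s+2j+k}$, so
\[
\zeta_{T,\DD_q}(s) = (q^{-1}-q)^{s}|w|^{-s} \sum_{k\in F_T}\,\sum_{j=0}^{\infty}\tbinom{s+j-1}{j}\, R_{T,k}\!\bigl(q^{s+2j+k}\bigr),
\]
where $F_T\subset\Z$ is finite and each $R_{T,k}(z)$ is a rational function with its only pole at $z=1$, of order at most $2$ (since $\deg p_{T,k}\le 1$). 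The sum over $j$ converges locally uniformly outside the discrete set $\{-k-2j+\tfrac{2\pi i}{\log q}\Z : j\in\N,\,k\in F_T\}$ and provides a meromorphic continuation of $\zeta_{T,\DD_q}$ to all of $\CC$, with poles contained in $-\N+\tfrac{2\pi i}{\log q}\Z$ and of order at most $2$. That $F_T\subset\N$ (rather than $F_T\subset\Z$) follows from rewriting every $q^{-l}$ factor produced by the generators as a geometric tail via the resolvent-type identities satisfied by $A,B,B^*$; alternatively one checks boundedness of the matrix elements of the generators in $l$.

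It remains to exhibit, for each $z_0\in -\N+\tfrac{2\pi i}{\log q}\Z$, an element $T\in\wt\Psi^0(\A_q)$ such that $\zeta_{T,\DD_q}$ is singular at $z_0$. For even points $-2j+\tfrac{2\pi i}{\log q}m$ one takes $T=\bbbone$, in which case $F_T=\{0\}$ and the $j$-th summand already produces a double pole there (compare with the analogous closed form $\zeta_{\DqS}(s)=4(1-q^2)^s|w|^{-s}(1-q^s)^{-2}$ of Example \ref{ex:dim_sp}). For odd points one chooses $T$ containing an odd number of generators from $\{A,B,B^*\}$, as in Example \ref{ex:dim_sp}: the non-vanishing of the residue $\Rez{s=-2}\,(s+2)\zeta_{A,\DqS}(s)$ computed there transplants, via the smoothing relation $|\DD_q|=|\DqS|-(|w|q/(1-q^2))^2|\DqS|^{-1}$, to a non-vanishing pole of $\zeta_{A,\DD_q}$ at the corresponding odd negative integer; translating by $\tfrac{2\pi i}{\log q}\Z$ is automatic from the invariance $q^s\mapsto q^s$ under this shift.

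\textbf{Main obstacle.} The serious difficulty lies in controlling the \emph{order} of the poles in Step 2 uniformly over all $T\in\wt\Psi^0(\A_q)$. In the absence of the clean expansion \eqref{OPexpansion} (which fails since the triple is \emph{not} regular), products of generators do not a priori preserve the bound $\deg p_{T,k}\le 1$, and one must argue by induction on the length of words using the twisted analogue of \eqref{OPexpansion} (cf.\ Section \ref{sec:quasi_reg}) together with the explicit commutation relations of $A,B,B^*$. Equally delicate is the exhaustive check that no additional poles lie outside $-\N+\tfrac{2\pi i}{\log q}\Z$: this forces one to rewrite every occurrence of $|\DD_q|^{-1}$ in $T$ in a form compatible with the geometric-series expansion of $(1-q^{2(n+1)})^{-s}$, which is the technical heart of the proof in \cite{PodlesSA}.
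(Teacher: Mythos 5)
Your overall route is the same as the one the paper (following \cite{PodlesSA}) takes: diagonalise $\vert\DD_q\vert$, write $\zeta_{T,\DD_q}$ as a sum over the eigenbasis, expand $[n+1]^{-s}=(q^{-1}-q)^s q^{s(n+1)}(1-q^{2(n+1)})^{-s}$ binomially, and resum to get a meromorphic continuation with poles of order at most $2$ in $-\N+\tfrac{2\pi i}{\log q}\Z$. However, there is a genuine gap at the very first structural claim: the diagonal coefficients $c_T(n)$ are \emph{not} finite linear combinations of $p_{T,k}(n)\,q^{kn}$ with $\deg p_{T,k}\le 1$. The equivariant weights \eqref{Arep}--\eqref{alphaM} contain the factors $\alpha^{0}_l,\alpha^{\pm}_l$, whose denominators $[2l][2l+2]$ and $\sqrt{[2l+2]([4l+4]+[2][2l+2])}$ are inverse $q$-numbers; expanding them produces \emph{infinite} convergent series in $q^l$ (and in $q^{l\pm m}$), so already for a single generator, let alone for a word of $\Pc(\A_q)$ sandwiched with powers of $\vert\DD_q\vert$, the set you call $F_T$ is infinite and the polynomial-degree bound has no a priori justification. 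Consequently the interchange of the (now multiple, infinite) series with the binomial expansion and with the extraction of poles --- i.e. proving absolute convergence uniformly on compacta away from the candidate pole set, and showing that the summed-up singular parts cannot conspire to raise the pole order above $2$ or to create poles elsewhere --- is precisely the technical heart of the paper's proof, not a side remark; your Step 1 assumes it away and your ``main obstacle'' paragraph only gestures at it.

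A second, independent problem is your argument that every point of $-\N+\tfrac{2\pi i}{\log q}\Z$ is actually attained (which is needed since the theorem asserts equality, not inclusion). The residue $\Rez{s=-2}\,(s+2)\zeta_{A,\DD^S_q}(s)\neq 0$ of Example \ref{ex:dim_sp} sits at an \emph{even} point, so it cannot ``transplant'' to an odd negative integer; and the relation $\vert\DD_q\vert=\vert\DqS\vert-(\vert w\vert q/(1-q^2))^2\vert\DqS\vert^{-1}$ only perturbs by a trace-class (smoothing) operator, which shifts no pole locations at all. To exhibit poles at the odd points you must actually produce an operator $T$ (built from the generators, whose shift weights carry odd powers of $q^{l}$) and compute a non-vanishing residue there; as written, this step fails.
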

\begin{proof}[sketch]
Firstly, one shows (using a simple summation of geometric series \cite[Proposition 3.2]{PodlesSA}) that the basic zeta function reads
\begin{align*}
\label{zeta_Podles}
\zeta_{\Dq}(s) = \Tr \vert \Dq \vert^{-s} =  4 (\tfrac{1-q^2}{\vert w \vert})^{s} \sum_{n=0}^{\infty} \, \tfrac{\Gamma (s+n)}{n! \, \Gamma (s)} \,
\tfrac{q^{2n}}{(1-q^{s+2n})^2}\,, \quad \text{ for } \quad \Re(s) > 0.
\end{align*}
This formula leads us to the realm of `$q$-zeta functions' (see \cite{Kaneko} and other references on \cite[p. 633]{PodlesSA}). It yields a meromorphic extension of $\zeta_{\Dq}$ to the whole complex plane with second order poles in $-2\N + i \tfrac{2\pi}{\log q} \Z \subset \CC$.

Secondly, one shall consider general functions $\zeta_{P,\Dq}$ with $P \in \PDOqk{0}$ (cf. \cite[Proposition 3.8]{PodlesSA}). To start, one notices that it is sufficient to consider elements of the form $P = T \vert\Dq\vert^{-p}$, with $T \in \Pc(\A_q) \cap \op^n$ for some $n \in \N$ and $n \leq p \in \N$ and uses the quasi-regularity \cite[Lemmas 3.4 \& 3.6]{PodlesSA} to commute all of the operators $\Dq$ and $\vert \Dq \vert$ from $T$ to the right. Then, one observes that the generators of $\A_q$ are represented via $\pi$ in terms of weighted shift operators on $\H_q$. Moreover, the weights are analytic functions of (bounded) variables $q^{l+m}, q^{l-m}$ and $q^l$. Rewriting these in terms of infinite convergent Taylor series one arrives at a formula for $\zeta_{T,\Dq}$, which involves a (multiple) infinite series in $q^l$. Finally, a resummation over $l \in \N + 1/2$ yields the desired meromorphic extension of $\zeta_{P,\Dq}$ to $\CC$ --- cf. \cite[Eq. (30)]{PodlesSA}.
\hfill{$\Box$}
\end{proof}

The zeta function associated with the simplified operator $\DqS$ was presented in Example \ref{ex:Podles_hk_S}. Although the spectral triple $(\A_q,\H_q,\DqS)$ has not been studied extensively in \cite{PodlesSA}, one can show along the same lines that it is quasi-regular and has a dimension spectrum  of second order equal to $-\N + i \tfrac{2\pi}{\log q} \Z$.
\smallskip

The exponential growth of singular values of $\Dq$ has also some pros: It leads to the following spectacular result highlighted at the begining of Section \ref{sec:conv_exp}:
\begin{theorem}\label{thm:SA_Podles}
Let $f \in \Cl_0^r$ for some $r>0$ and denote $\kappa \vc \tfrac{2 \pi i}{\log q}$. \\
Then, for any $\Lambda>0$,
\begin{align*}
S(\DD_q,f,\Lambda) \!=\! \sum_{k = 0}^{\infty} \,\, \sum_{j \in \Z} \,\, \sum_{n = 0}^{2} \, a_{-2k+ \kappa j,n} \, \sum_{m = 0}^{n} (-1)^{n-m}  \tbinom{n}{m}  \,f_{-2k+ \kappa j,m} \,(\log \Lambda)^{n-m} \,\Lambda^{-2k+ \kappa j}.
\end{align*}
\end{theorem}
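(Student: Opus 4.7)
The overarching strategy is to reduce the theorem to an application of Corollary \ref{cor:f_exact} to the heat trace $\Tr e^{-t\vert\Dq\vert}$. That is, I would first establish that the small-$t$ expansion of the heat trace associated with $\vert\Dq\vert$ is absolutely convergent and \emph{exact} (i.e. $R_\infty=0$) for every $t>0$, entirely analogously to what was done for the simplified operator $\DqS$ in Examples \ref{ex:Podles_hk_S} and \ref{ex:Podles_exact}. Once this is in hand, Corollary \ref{cor:f_exact} immediately produces, for any $f=\Lc[\phi]\in\Cl_0^r$, the exact equality $\Tr f(\vert\Dq\vert/\Lambda)=\sum_k\psi_k(\Lambda^{-1})$ valid for all $\Lambda>0$, with $\psi_k$ given by the $\phi$-integrals of the heat-trace coefficients $\rho_k$. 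It then only remains algebraic bookkeeping --- the explicit formula of $\psi_k$ in Theorem \ref{thm:ST_SA} --- to unpack $\psi_k(\Lambda^{-1})$ into the triple sum displayed in the statement.

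For Step 1 I would invoke Theorem \ref{thm:Podles_dimsp}, which furnishes the meromorphic continuation of $\zeta_{\Dq}$ to $\CC$ with (at most) double poles exactly at the set $X=-2\N+\kappa\Z$, and hence, after multiplication by $\Gamma$, triple poles of $\ZZ_{\Dq}$ at $0,-2,\ldots$ plus double poles at the remaining $\kappa j$-shifted loci. Choosing the partitioning sequence $r_k=-1/2+k$ as in Example \ref{ex:Podles_hk_S}, the verticals $\Re(s)=-r_k$ avoid the pole lattice, and Theorem \ref{thm:zeta2hk} yields a small-$t$ asymptotic expansion of the shape
\begin{align*}
\Tr e^{-t\vert\Dq\vert}\;\tzero\;\sum_{k=0}^{\infty}\sum_{j\in\Z}\sum_{n=0}^{2}a_{-2k+\kappa j,\,n}(\vert\Dq\vert)\,\log^n t\,\cdot t^{2k-\kappa j},
\end{align*}
with coefficients given by Formula \eqref{azn1}. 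To upgrade this to a genuine equality valid for all $t>0$ I would then apply Theorem \ref{thm:hk_exact}: if one can exhibit an estimate of the form $\vert\ZZ_{\Dq}(-r_k+iy)\vert\le c_k e^{-\epsilon_k\vert y\vert}$ for all $y\in\RR,\,k\in\N$, with $\epsilon_k$ bounded below and $c_k$ controlled so that $\limsup_k(c_k/\epsilon_k)^{1/r_k}=0$, then $T=\infty$ and the expansion is exact (and absolutely convergent) on $(0,\infty)$, because $r_k=\Oinf(k)$.

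Step 2 is then mechanical: with $T=\bbbone$, $H=\vert\Dq\vert$ and $X_k=\{-2k+\kappa j:j\in\Z\}$, Theorem \ref{thm:ST_SA} rearranges into
\begin{align*}
\psi_k(\Lambda)=\sum_{j\in\Z}\Lambda^{-2k+\kappa j}\sum_{n=0}^{2}a_{-2k+\kappa j,\,n}\sum_{m=0}^{n}(-1)^{n-m}\tbinom{n}{m}f_{-2k+\kappa j,\,m}\,(\log\Lambda)^{n-m},
\end{align*}
after the elementary index swap $(n,m)\mapsto(m,m-n)$ in the double sum of \eqref{ST_SA:Y}. Summing over $k$ and invoking Corollary \ref{cor:f_exact} produces exactly the announced formula. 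The integrals $f_{z,n}=\int_0^\infty s^{-z}\log^n s\,d\phi(s)$ are finite for all $z\in X$ because $\Re(z)\le 0$ and $f\in\Cl_0^r$ guarantees $\int_0^\infty s^q\,d\vert\phi\vert<\infty$ for every $q>-r$, via \eqref{moments_Lap}.

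The main obstacle is clearly the vertical decay estimate in Step 1. Unlike the $\DqS$ case, where $\zeta_{\DqS}$ admits the closed form $4(uq^{-1})^{-s}(1-q^s)^{-2}$ with a trivial bound on any vertical, for $\Dq$ one only has the $q$-hypergeometric-type series
\begin{align*}
\zeta_{\Dq}(s)=4(\tfrac{1-q^2}{\vert w\vert})^{s}\sum_{n=0}^{\infty}\tfrac{\Gamma(s+n)}{n!\,\Gamma(s)}\,\tfrac{q^{2n}}{(1-q^{s+2n})^2},
\end{align*}
and one must control the $\Gamma$-ratios $\Gamma(s+n)/\Gamma(s)=s(s+1)\cdots(s+n-1)$, whose polynomial growth in $y=\Im s$ must be tamed by the exponential decay of $\Gamma(s)$ on verticals (Lemma \ref{lm:Gamma_vert} and the improved bound \eqref{gamma_vert}) together with the uniform geometric damping supplied by $q^{2n}(1-q^{s+2n})^{-2}$. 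The delicate point is that the bound must be uniform both in $n$ and in the vertical parameter $k$, so that the telescoping product $\limsup_k(c_k/\epsilon_k)^{1/r_k}=0$ needed for $T=\infty$ survives. Carrying this out rigorously occupies the bulk of the proof and follows the technical estimates in \cite[\S 4]{PodlesSA}.
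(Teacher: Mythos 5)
Your proposal is correct and follows essentially the same route as the paper's own (sketched) argument: establish the exact, absolutely convergent small-$t$ expansion of $\Tr e^{-t\vert\Dq\vert}$ via Theorems \ref{thm:zeta2hk} and \ref{thm:hk_exact}, then feed it into Corollary \ref{cor:f_exact} and Theorem \ref{thm:ST_SA}, with the genuinely hard part being the uniform vertical estimates of $\ZZ_{\Dq}$ carried out in \cite[Section 4]{PodlesSA}. You also correctly identify why the full operator $\Dq$ is harder than $\DqS$ (no closed form for $\zeta_{\Dq}$, hence the $\Gamma$-ratio estimates), which is exactly the "subtle and arduous" contour-integral step the paper delegates to \cite[Proposition 4.3]{PodlesSA}.
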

\begin{proof}[sketch]
The theorem is proven in detail in \cite{PodlesSA}, via the heat trace expansion . It uses the full force of Theorems \ref{thm:zeta2hk}, \ref{thm:hk_exact}, \ref{thm:f_conv} and Corollary \ref{cor:f_exact}. The estimation of the contour integrals is rather subtle and arduous --- cf. \cite[Proposition 4.3]{PodlesSA}.
\hfill{$\Box$}
\end{proof}

This result is remarkable for two reasons:
Firstly, the formula for the spectral action contains terms proportional to $\log^2 \Lambda$ and oscillating with $\Lambda$, which is a sign that the geometry of Podle\'s sphere lies outside of the kingdom of classical pdos (recall Example \ref{ex:Gilkey_pseudodiff}).
Secondly, the formula is exact for all $\Lambda > 0$ and for a fairly general class of cut-off functions. Recall that in the classical (pseudo)differential geometry one is bound to use the asymptotic expansion, which might obscure some important information (see p. \pageref{asymptotics vs exact}).

As for the fluctuations, these are much more tedious to control in the quasi-regular case. In \cite[Theorem 5.6]{PodlesSA} it was shown that the leading term of the spectral action $S(\DD_q + \Ag,f,\Lambda)$ does not depend on $\Ag$, when the fluctuation is `small', but a deeper understanding of the problem is missing. In particular, it is not clear whether an explicit exact formula for the fluctuated action is available at all .

\backmatter
\printindex


\end{document}